\newtheorem{definition}{Definition}[chapter]
\newtheorem{lemma}[definition]{Lemma}
\newtheorem{proposition}[definition]{Proposition}
\newtheorem{theorem}[definition]{Theorem}
\newtheorem{corollary}[definition]{Corollary}
\newtheorem{conjecture}[definition]{Conjecture}
\definecolor{detailsgray}{gray}{0.3}
\def \cbb{\mathbb{C}}
\def \nbb{\mathbb{N}}
\def \rbb{\mathbb{R}}
\def \zbb{\mathbb{Z}}
\def \acal {\mathcal{A}}
\def \ccal {\mathcal{C}}
\def \dcal {\mathcal{D}}
\def \fcal {\mathcal{F}}
\def \gcal {\mathcal{G}}
\def \hcal {\mathcal{H}}
\def \ical {\mathcal{I}}
\def \kcal {\mathcal{K}}
\def \mcal {\mathcal{M}}
\def \ncal {\mathcal{N}}
\def \ocal {\mathcal{O}}
\def \pcal {\mathcal{P}}
\def \qcal {\mathcal{Q}}
\def \rcal {\mathcal{R}}
\def \scal {\mathcal{S}}
\def \ucal {\mathcal{U}}
\def \wcal {\mathcal{W}}
\def \bfk  {\mathfrak{B}}
\newcommand{\betav}{\boldsymbol{\beta}}
\newcommand{\thetav}{\boldsymbol{\theta}}
\newcommand{\etav}{\boldsymbol{\eta}}
\newcommand{\lambdav}{\boldsymbol{\lambda}}
\newcommand{\nuv}{\boldsymbol{\nu}}
\newcommand{\zetav}{\boldsymbol{\zeta}}
\newcommand{\piv}{\boldsymbol{\pi}}
\newcommand{\xiv}{\boldsymbol{\xi}}
\newcommand{\av}{\boldsymbol{a}}
\newcommand{\bv}{\boldsymbol{b}}
\newcommand{\cv}{\boldsymbol{c}}
\newcommand{\ev}{\boldsymbol{e}}
\newcommand{\nv}{\boldsymbol{n}}
\newcommand{\pv}{\boldsymbol{p}}
\newcommand{\qv}{\boldsymbol{q}}
\newcommand{\xv}{\boldsymbol{x}}
\newcommand{\zv}{\boldsymbol{z}}
\newcommand{\zerov}{\boldsymbol{0}}
\def \. { \,\! }
\def\clap#1{\hbox to 0pt{\hss#1\hss}}
\def\mathclap{\mathpalette\mathclapinternal}
\def\mathclapinternal#1#2{\clap{$\mathsurround=0pt#1{#2}$}}
\def \cdotarg { \, \cdot \, }
\DeclareMathOperator{\im}{Im}
\DeclareMathOperator{\re}{Re}
\newcommand{\id}{\operatorname{id}}
\def \st {^\ast}
\DeclareMathOperator{\supp}{supp}
\def \expltext#1 {\\ \text{\footnotesize{ (#1) }}\\}
\def \intercomm#1 {\\ \text{\footnotesize{ (#1) }}\\}
\def \undercomm#1 {\underset{\text{\scriptsize{ (#1) }}}}
\def \overcomm#1 {\overset{\text{\scriptsize{ (#1) }}}}
\newcommand{\lhs}[1]{\mathrm{l.h.s.}(\ref{#1})}
\newcommand{\rhs}[1]{\mathrm{r.h.s.}\eqref{#1}}
\newcommand{\Hil}{\mathcal{H}}
\newcommand{\fpn}{\Hil^{\mathrm{f}}}
\newcommand{\fpno}{\Hil^{\omega,\mathrm{f}}}
\newcommand{\fpnp}{Q}
\newcommand{\boundedops}{\bfk(\Hil)}
\newcommand{\qf}{\mathcal{Q}}
\newcommand{\A}{\mathcal{A}}
\newcommand{\M}{\mathcal{M}}
\newcommand{\hscalar}[2]{\langle #1 , #2 \rangle }
\newcommand{\bighscalar}[2]{\big\langle  #1 , #2 \big\rangle }
\newcommand{\gnorm}[2]{\lVert #1 \rVert_{#2}}
\newcommand{\Biggnorm}[2]{\Big\lVert #1 \Big\rVert_{#2}}
\newcommand{\onorm}[2]{\lVert #1 \rVert_{#2}^\omega}
\newcommand{\bigonorm}[2]{\big\lVert #1 \big\rVert_{#2}^\omega}
\newcommand{\leftwedge}{\mathcal{W}'}
\newcommand{\rightwedge}{\mathcal{W}}
\newcommand{\ad}{a^\dagger}
\newcommand{\zd}{z^\dagger}
\newcommand{\lvector}[2]{\boldsymbol{\ell}_{#1}(#2)}
\newcommand{\rvector}[2]{\boldsymbol{r}_{#1}(#2)}
\newcommand{\cmeA}[2]{f_{#1}^{\lbrack #2 \rbrack}}
\newcommand{\cme}[2]{\mathchoice{\cmeA{#1}{#2}}{\cmeA{#1}{#2}}{\cmeA{#1}{#2}}{\cmeA{#1}{#2}}}
\newcommand{\cmelong}[2]{f_{#1} \left\lbrack #2 \right\rbrack}
\newcommand{\oa}{\varpi}
\DeclareMathOperator{\strip}{S}
\newcommand{\affiliated}{\mathrel{\eta}}
\newcommand{\tube}{\mathcal{T}}
\newcommand{\perms}[1]{\mathfrak{S}_{#1}}
\DeclareMathOperator{\sign}{sign}
\DeclareMathOperator*{\res}{res}
\DeclareMathOperator*{\domain}{dom}
\DeclareMathOperator{\cch}{cch}
\DeclareMathOperator{\ich}{ich}
\DeclareMathOperator{\ach}{ach}
\DeclareMathOperator{\nodes}{nodes}
\numberwithin{equation}{chapter}
\title{Characterization Theorem for Local Operators in Factorizing Scattering Models}
\author{Daniela Cadamuro}
\begin{document}

\thispagestyle{empty}

\begin{center}
\vspace*{4cm}

{\Huge\bfseries A~Characterization~Theorem for~Local~Operators
in~Factorizing~Scattering~Models

}

\vspace{2cm}

{\Large
Dissertation \\
zur Erlangung des mathematisch-naturwissenschaftlichen Doktorgrades\\
``Doctor rerum naturalium''\\
der Georg-August-Universit\"at G\"ottingen

}

\vspace{2cm}
{\Large
vorgelegt von}

\vspace{1cm}

{\Large
Daniela Cadamuro\\
aus Torino}

\vspace{1cm}
{\Large
G\"ottingen, 2012}
\end{center}

\clearpage

\thispagestyle{empty}

\vspace*{0.9\textheight}

{\large
\noindent
Referent: Prof.~Karl-Henning Rehren

\noindent
Koreferentin: Prof.~Laura Covi

\noindent
Tag der m\"undlichen Pr\"ufung: 26.~Oktober 2012

\cleardoublepage



\tableofcontents

\listoffigures


\chapter{Introduction}

Relativistic quantum field theories are described by the set of local observables, which are linear bounded or unbounded operators associated with regions of Minkowski space. These observables have the physical meaning of ``measurements'' which take place in a finite space and in a finite period of time. In the case of bounded operators, the set of observables forms von Neumann algebras associated with spacetime regions, which in order to gain any physical interpretation, need to fulfil some properties. We list here these properties by paying attention especially to the physical motivation behind them.

The first property states that an algebra $\acal(\ocal_1)$ associated with the region $\ocal_1$ includes all the operators of another algebra $\acal(\ocal_2)$ if $\ocal_2 \subset \ocal_1$. This reflects the fact that measurements performed in a certain region of the spacetime include also all the measurements performed on a smaller region, which is included in the previous one. The second property is called Einstein's causality, which says that no signals can travel faster than the velocity of light. This means that measurements performed in space-like separated regions cannot interfere with each other, and therefore, by Heisenberg's uncertainty relations, the corresponding operators must commute.
The third property concerns with the principle of covariance of the theory. This implies that the algebra of observables must transform covariantly under spacetime symmetry transformation of the region. Mathematically, it means that there must exist a strongly continuous representation of the spacetime symmetry group acting on the algebra.
The stability of the matter requires a positive energy spectrum in all Lorentz frames, and therefore that the joint spectrum of the generators of the spacetime translations is contained in the forward light cone.
Finally, we require the existence of a unique vector in the Hilbert space of the theory which has energy and momentum zero, and represents the vacuum state.

The problem is to construct models of quantum field theories in this setting, by exhibiting algebras of local observables fulfilling all these properties. With the exception of the free field theory, this is in general a difficult task due to the complicate structure that local observables have in the presence of non trivial interaction. There are the results of Glimm and Jaffe \cite{GliJaf:quantum_physics} on the construction of simpler and lower dimensional models with interaction. But in the case of $3+1$ spacetime dimensions this is still nowadays an open problem.

In particular we focus in models in $1+1$ spacetime dimensions with factorizing scattering matrices, and we are interested to study the content of local observables in these theories. Note that for models with one particle species and without inner degrees of freedom, a factorizing scattering matrix is in fact just given by a function in one variable (the \emph{rapidity} $\theta$).
We would look for the existence of these local observables in any mathematical framework: As algebras of bounded operators \cite{Haa:LQP}, or as Wightman fields \cite{StrWig:PCT}, or as closed operators affiliated with the algebras of bounded operators.

One approach to this problem is the so called \emph{form factor programme} \cite{Smirnov:1992,BabujianFoersterKarowski:2006}. Here, one starts from the scattering matrix $S$ as an input, and construct the Wightman $n$-point functions of the theory with the $S$-matrix that we started with. For this, one expands expectation values of local observables in a series of form factors. Here as local observables, we intend pointlike localized quantum fields, and a form factor is the expectation value of this field operator between asymptotic scattering states. However, as expectation values of local observables, the form factors must fulfil a number of properties, and by solving these conditions, they can be computed explicitly. There are explicit examples of form factors in various models, such as the Sinh-Gordon \cite{FringMussardoSimonetti:1993}, the Sine-Gordon models \cite{BabujianFringKarowskiZapletal:1993}, the Ising model \cite{BergKarowskiWeisz:1979}, and many more.

Then, one computes the Wightman $n$-point functions from the form factors by introducing in the vacuum expectation values of the local fields a complete basis in terms of asymptotic scattering states. As a result, the Wightman $n$-point functions are expressed by an infinite series expansion in terms of form factors. We write down here the example of the two-point function:
\begin{equation}\label{FFtwopointfunct}
\langle \Omega, A(x)A(0) \Omega \rangle = \sum_{n=0}^{\infty}\frac{1}{n!}\int d\theta_1 \ldots \int d\theta_n\, e^{-ix\cdot \sum_{k=1}^{n}p(\theta_k)}|\langle \Omega|A(0)|\theta_1,\ldots,\theta_n\rangle_{\text{in}}|^{2},
\end{equation}
where $|\theta_1,\ldots,\theta_n\rangle_{\text{in}}$ are the incoming particle states depending on the rapidities $\theta_j$. By computing all the $n$-point functions using this method, one would be able to construct the local observable as operator-valued distribution (Wightman reconstruction theorem, \cite{StrWig:PCT}).

However, this approach hides a subtle difficulty, that is controlling the convergence of infinite series expansion of the type \eqref{FFtwopointfunct}: In despite of some progress in \cite{BabujianKarowski:2004}, this problem remains still open.

A different approach was due to Schroer \cite{Schroer:1999} who proposed to construct algebras of local observables indirectly in terms of algebras of observables with a weaker notion of localization. He started with a Hilbert space representation of the Zamolodchikov-Faddeev algebra in terms of creation and annihilation operators $z,\zd$ which satisfy a deformed version of the canonical commutation relations, which already involves
the scattering function. Then he constructed field operators, similarly to the free field theory, by taking linear combination of $z,\zd$. These operators can be consistently be interpreted as being localized in unbounded regions, called \emph{wedges}. In particular, we have that fields localized in space-like separated wedges commute. Then one can pass to algebras of bounded operators associated with wedges by taking certain bounded functions of the fields and considering the von Neumann algebra generated by them.

By viewing bounded regions in spacetime, for example double cone regions, as the intersection of left and right wedges, one can correspondingly  obtain the set of local observables associated with the double cone as the intersection of the respective sets of observables associated with the right and left wedges. One can see this on the level of von Neumann algebras, but we will consider it on a more general level, see Sec.~\ref{sec:weaklocality}.

The remaining problem in this approach is to show that this intersection is non-trivial, namely that it does not contain only multiples of the identity operator. Lechner proved this in his Ph.D. thesis for a large class of two dimensional models with factorizing scattering matrices \cite{Lechner:2006,Lechner:2008} using a very abstract argument from the Tomita-Takesaki modular theory, the so called \emph{modular nuclearity condition}. In this way, instead of directly constructing the local operators, one can guarantee the non-triviality of the double cone algebras by giving an abstract condition on the underlying wedge algebras. From a technical level, Lechner proved this condition by analysing the analyticity and boundedness properties of the matrix elements of the wedge local operators.

While Lechner proved that the double cone algebras are non-trivial, we do not know much about the explicit form of these local observables.  This because, while we know explicitly the generators of the wedge algebras, the passage to the von Neumann algebras adds many observables as weak-limit points, about which much less is known. It is these limit points which are contained in the intersection.

Our task is to give more information on the structure of these local observables. For this, we expand the local observables in a series expansion and we analyse the analyticity and boundedness properties of the single terms in the expansion, corresponding to the localization of the observable in a bounded region of spacetime.

To clarify the idea at the basis of our approach, we first consider the situation of the free field theory. Araki proved \cite{Ara:lattice} in the theory of a free scalar real massive field, in a slightly different notation, that for every bounded operator on Fock space there exists a unique expansion in terms of a string of normal-ordered creation and annihilation operators $a,\ad$ of the free field theory, depending on the rapidities $\theta_j,\eta_j$:
\begin{equation}\label{eq:expansionfree}
  A = \sum_{m,n=0}^\infty \int \frac{d\thetav \, d\etav}{m!n!} f_{m,n}(\thetav,\etav) a^{\dagger}(\theta_1)\cdots a^\dagger(\theta_m) a(\eta_1)\cdots a(\eta_n),
\end{equation}
where the coefficients $f_{m,n}$ (generalized functions) are given as vacuum expectation value of a string of nested commutators:
\begin{equation}\label{eq:coefffree}
 f_{m,n}(\thetav,\etav) = \bighscalar{\Omega}{ [a(\theta_m), \ldots [a(\theta_1),[\ldots[A,\ad(\eta_n)]\ldots ,\ad(\eta_1)] \ldots ] \;\Omega  }.
\end{equation}
Note that this expansion holds for any $A$, independently from its localization properties: Whether $A$ is localized in a space-time point, or in a bounded region, or in an unbounded region such as in a wedge, or completely delocalized.

As next step, one looks for analyticity and boundedness properties of the coefficients $f_{m,n}$ corresponding to the localization of $A$ in a bounded region of spacetime. To obtain this, we can express $a,\ad$ in \eqref{eq:coefffree} in terms of the Fourier transforms of time-zero fields and use the fact that the localization of the field in a bounded region of spacetime represents a certain support restriction in position space which corresponds, by Fourier transformation, to certain analyticity and boundedness properties of the coefficients $f_{m,n}$ in momentum space; this ideas is at the basis of the well-known Paley-Wiener theorem \cite[Thm.IX.16]{ReedSimon:1975-2}. So, one finds that if $A$ is localized in a bounded region, then the expansion coefficients are entire analytic and fulfil Paley-Wiener type of bounds.
For more technical details of this proof, this can be seen as a special case of the construction we will work out in Chapter~\ref{sec:localitythm}, in particular see Theorem~\ref{thm:localequiv}.

Schroer and Wiesbrock \cite{SchroerWiesbrock:2000-1} proposed to generalize the expansion \eqref{eq:expansionfree} to $1+1$ dimensional theories of one type of scalar massive particle with factorizing scattering matrices, by replacing $a,\ad$ with the annihilation and creation operators $z,\zd$ satisfying the algebraic relations of the
Zamolodchikov-Faddeev algebra depending on a given scattering function.
\begin{equation}\label{eq:expansionz}
  A = \sum_{m,n=0}^\infty \int \frac{d\thetav \, d\etav}{m!n!} f_{m,n}(\thetav,\etav) z^{\dagger}(\theta_1)\cdots z^\dagger(\theta_m) z(\eta_1)\cdots z(\eta_n).
\end{equation}
Now, one would look again for analyticity properties of the expansion coefficients $f_{m,n}$ and bounds of its analytic continuation, corresponding to the localization of $A$ in bounded region of spacetime.

In the case of observables localized in bounded regions of spacetime, Schroer and Wiesbrock \cite{SchroerWiesbrock:2000-1} expected the following scenario: The coefficients $f_{m,n}$ of an observable $A$ localized in a double cone are boundary values of meromorphic functions on the entire rapidity multi-variables complex plane with specific growth behaviour in certain real direction in the complex plane and following a certain pole structure with residue given by an infinite system of recursion relation for the expansion coefficients.

Our programme aims to make these expectations more precise in the class of $1+1$ dimensional models with factorizing scattering matrices studied by Lechner \cite{Lechner:2008}, where however our class of scattering function do not need to fulfil certain regularity conditions imposed by Lechner and our observables are not restricted only to the class of bounded operators. This programme is developed in several steps, that we are going to explain in the following.

First, we will prove that for every quadratic form (and therefore for bounded and unbounded operators, as well) $A$ there exists a unique expansion \eqref{eq:expansionz}. We will provide an explicit expression (see Eq.~\eqref{eq:fmndef}) for the expansion coefficients $f_{m,n}$  in terms of matrix elements of $A$, involving the scattering function $S$. It is not obvious how to relate this expression to a formula similar to \eqref{eq:coefffree}. For this purpose, we will introduce the notion of \emph{warped convolution} used in deformation methods for the construction of quantum field theories by several authors \cite{GrosseLechner:2007,Lechner:2011},\cite{BuchholzSummers:2008,BuchholzSummersLechner:2011}.
Here, Buchholz, Summers and Lechner made use of the warped convolution integral to deform wedge-local observables of any theory in order to construct interacting models in arbitrary spacetime dimensions; in $1+1$ dimensions, this yields models with a factorizing scattering matrix. We will use this notion to define a ``deformed commutator'' that depends on the scattering function and fulfils a certain ``deformed'' version of the standard properties of a commutator. Then, by replacing in \eqref{eq:coefffree} $a,\ad$ with the ``deformed'' annihilators and creators $z,\zd$ and the commutator with the deformed commutator, one obtains a generalization of \eqref{eq:coefffree} to the class of factorizing scattering models described by \cite{GrosseLechner:2007}.

Note that the expansion \eqref{eq:expansionz} is similar to the form factor expansion, but it is not identical to it. In particular, the basis of our expansion is in the operators $z,\zd$, rather than in the asymptotic free creators and annihilators $a_{\text{in}}, \ad_{\text{in}}$. Heuristically, in the basis of $z,\zd$ one may expect that it is easier to control the convergence of the infinite series in \eqref{eq:expansionz} for local operators, since these $z,\zd$ are related to the notion of wedge locality. In fact, we will discuss this convergence in an example in Chapter~\ref{sec:localexamples}. This would not be possible in the basis in terms of $a_{\text{in}}, \ad_{\text{in}}$, since these operators are completely unrelated with local objects. Another advantage of our construction is that it applies to the model of Lechner, which is fully constructed, while, as far we know, there are no completely constructed models in the form factor programme.

We will discuss the properties of the expansion coefficients that are independent of the localization of $A$. In particular, we will study how the coefficients $f_{m,n}$ behaves under spacetime symmetry transformations of $A$ (see Chapter~\ref{sec:araki}), such as the spacetime translations. Of particular interest to us is the behaviour of $f_{m,n}$ under spacetime reflections, since it encodes the interaction of the model and it will play an important role in the analysis of observables localized in bounded regions, as we will see in Chapter~\ref{sec:ftoa}.

As next step, we will deal with the problem of convergence of the infinite series expansion in \eqref{eq:expansionz}. Note that since this expansion is expressed in terms of the unbounded operators $z,\zd$, it is more natural that it describes unbounded objects, rather than bounded operators. As a consequence, we have established this expansion on the level of quadratic forms. As a quadratic form, $A$ can be unbounded at high energies and high particle numbers, however, for our characterization of the local observables, we are considering quadratic forms of a specific ``regularity'' class, where this singular behaviour is in a certain way ``controlled''. We are thinking here to some kind of generalized $H$ bounds of the type introduced by Jaffe \cite{Jaffe:1967}, see Sec.~\ref{sec:qforms}. Extra conditions on the summability of certain $\omega$-norms of $f_{m,n}$ (see Sec.~\ref{sec:zgen} for definitions) would imply an extension of the quadratic form to a closed, possibly unbounded, operator (see Sec.~\ref{sec:Amnbounds}, in particular Prop.~\ref{proposition:summable1}).

For our characterization of the local observables, we will need a notion of locality that is therefore adapted to the level of quadratic forms, called \emph{$\omega$-locality}, see Sec.~\ref{sec:weaklocality}. This kind of locality is ``weaker'' than the usual notion of locality, however, we will show that a quadratic form that is $\omega$-local and moreover can be extended to a closed operator, is \emph{affiliated} with the local algebras of bounded operators, see Prop.~\ref{proposition:locality}.

In the third step of our programme, we want to identify the necessary and sufficient conditions on the expansion coefficients $f_{m,n}$ in \eqref{eq:expansionz} that make $A$ local in a bounded region.

In the case of operators localized in wedges, we can refer partially to the results of Lechner \cite{Lechner:2008}. However, we recall that our context is less restrictive than Lechner's setting, since we do not assume that our observable $A$ is necessarily a bounded operator and we do not need certain regularity conditions on the scattering function, used by Lechner. On the level of quadratic forms, we find that due to the localization of $A$ in a wedge, the coefficients $\cme{m,n}{A}$ are boundary values of a common analytic function, i.e. $\cme{m,n}{A}(\thetav,\etav)=F_{m+n}(\thetav,\etav+i\piv)$, where $F_k$ are analytic in the area $0 < \im\zeta_1 < \ldots < \im \zeta_k < \pi$.

In the case of quadratic forms localized in double cones, we will find that the localization of $A$ in the shifted right and left wedges (which identify the double cone) implies, via a rather geometrical construction, involving \emph{graphs} and tube domains on the rapidity multi-variables complex plane, the meromorphic continuation of the functions $F_k$ to the entire rapidity multi-variables complex plane. We will show that these functions $F_k$ fulfil an infinite system of recursion relations, and have a rich pole structure due to these recursion relations and the poles of the $S$-matrix. We compute explicitly the expression of the residua at the poles, given by the recursion relations; we note that these residua vanish in the case $S=1$, corresponding to the free field theory, and the functions $F_k$ become entire analytic. In the case $S=-1$, the same situation holds for $k$ even, namely when the operator creates even number of particles from the vacuum. Further, we will find that these functions $F_k$ fulfil certain properties of symmetry and periodicity, which depend on the scattering function $S$. We will compute certain pointwise bounds of these functions along specific lines on certain graphs in the rapidity multi-variables complex plane, and also specific $L^2$-like bounds on certain nodes of these graphs.

For clarity, we will consider conditions on three levels: We will establish conditions on the quadratic forms $A$; conditions on analytic functions $f_{m,n}$ when the imaginary part of the argument is restricted to a certain graph $\gcal$ (formal definition will be given in Sec.~\ref{sec:fkppositive}); conditions on meromorphic/analytic functions $F_k$. Then we will show that these conditions are equivalent, yielding a \emph{theorem of characterization for $\omega$-local quadratic forms} in bounded spacetime regions (see Chapter~\ref{sec:localitythm}).

As already mentioned, we can show that a set of functions $F_k$ which fulfil the conditions Def.~\ref{def:conditionF} for the characterization of a $\omega$-local quadratic form, and the condition in Prop.~\ref{proposition:summable1} for the extension of the quadratic form to a closed operator, defines, using the expansion \eqref{eq:expansionz}, a closed, possibly unbounded, operator affiliated with the local algebras of bounded operators (see Prop.~\ref{proposition:locality}).

As next step in our programme, we will use the sufficient conditions established before to construct explicit examples of local operators. We will present two examples in the case $S=-1$. In one example we admit only a finite number of coefficient functions $F_k$ for even $k$; the other example contains an infinite family of coefficient functions for odd $k$. In particular, we will show\footnote{up to the rigorous verification of a certain numerical estimate, see Conjecture~\ref{conj:Tmbounds}, which is however very plausible} in the second example, where the infinite sum can possibly diverge, that our condition for the extension of a $\omega$-local quadratic form to a closed operator affiliated with the local algebras is fulfilled.

Finally, we will propose in Chapter~\ref{sec:localexamples} an approach for finding examples of local operators in the case of a general scattering function, without having verified all the conditions discussed in Chapters~\ref{sec:localitythm} and \ref{sec:operators}.
However, our approach is a natural generalization of the construction of examples for $S=-1$ studied in Chapter~\ref{sec:localexamples}. Completing the general construction would be an important achievement of our programme, since the explicit form of local observables in the presence of highly non-trivial interaction has been for long time an open problem.

This thesis is organized as follows: We introduce in Chapter~\ref{sec:preliminaries} the general mathematical framework, partially similar to \cite{Lechner:2008}. In Chapter~\ref{sec:araki} we will prove existence and uniqueness of the expansion \eqref{eq:expansionz} for any quadratic form $A$; moreover, we analyse the properties (independent of locality) of this expansion, and in particular its behavior under spacetime symmetries. In Chapter~\ref{sec:operators} we identify the conditions on $f_{m,n}$, so that $A$ is a closable operator affiliated with the local algebra. In Chapters~\ref{sec:localitythm}, \ref{sec:atofp}, \ref{sec:fptof} and \ref{sec:ftoa}, we formulate and prove a \emph{theorem of characterization for $\omega$-local quadratic forms}, which gives the necessary and sufficient conditions on the coefficients $f_{m,n}$ that make $A$ $\omega$-local in a bounded region. Using the conditions of Chapter~\ref{sec:operators} and Chapter~\ref{sec:localitythm}, we will construct explicit examples of local observables in the case $S=-1$ in Chapter~\ref{sec:localexamples}. In the final appendix, we will discuss in particular a generalization of the formula of the string of nested commutators \eqref{eq:coefffree} to a certain class of factorizing scattering models described by \cite{GrosseLechner:2007}, and its relation with the notion of warped convolution integral introduced by \cite{BuchholzSummers:2008}, see Appendix~\ref{sec:warped}. Finally, we will discuss conclusions and outlook in Chapter~\ref{sec:conclusion}. Chapters~\ref{sec:preliminaries}, \ref{sec:araki} and Appendix~\ref{sec:warped} are material of one of the joint papers with H. Bostelmann \cite{BostelmannCadamuro:expansion-wip}. We will deal with the characterization of locality in another paper \cite{BostelmannCadamuro:characterization-wip} and with the concrete examples and Chapter~\ref{sec:operators} in \cite{BostelmannCadamuro:examples-wip}.

\chapter{General definitions}\label{sec:preliminaries}

\section{Minkowski space}

In the present thesis, the spacetime is given by the 1+1 dimensional Minkowski space $\rbb^2$ with vectors $x=(x_0,x_1)$ and scalar product $x \cdot y = x_0 y_0-x_1 y_1$. The symmetry group of Minkowski space is the Poincar\'e group $\mathcal{P}$ which includes the translations in time and space $x\mapsto x+c,\; c\in \rbb^2$, the space reflection $x\mapsto (x_0,-x_1)$, the time reflection $x\mapsto (-x_0,x_1)$ and the Lorentz boosts:
\begin{equation}
x\mapsto \left( \begin{array}{ccc} \cosh\lambda & \sinh\lambda \\ \sinh\lambda & \cosh \lambda  \end{array} \right)x, \quad \lambda \in \rbb.
\end{equation}
We denote with $\mathcal{P}_{+}$ the subgroup of the Poincar\'e group consisting of the translations, boosts and the total space-time reflection $x \rightarrow -x$.

We are in particular interested in wedge-shaped regions of spacetime called \emph{wedges}. We have the standard \emph{right wedge} $\rightwedge$ with edge at the origin, which is the set
\begin{equation}
\rightwedge:=\{ x\in \mathbb{R}^{2}: x_{1}>|x_{0}| \};
\end{equation}
cf.~Fig.~\ref{fig:rightwedge}, and the standard \emph{left wedge} $\leftwedge$, which is defined as the causal complement of $\rightwedge$. We also consider the translates of the standard right and left wedges, $\wcal_x := \rightwedge + x$  and $\wcal_y' = \leftwedge + y = (\wcal_y)'$ with $x,y \in \rbb^2$.

\begin{figure}
\begin{center}
\input{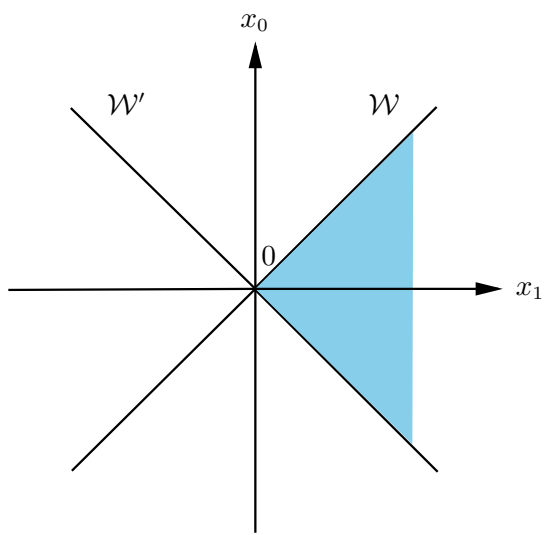}
\caption{The standard right and left wedges} \label{fig:rightwedge}
\end{center}
\end{figure}

We will consider the intersection of the translated right and left wedges $\mathcal{O}_{x,y}=\mathcal{W}_{x}\cap \mathcal{W}_{y}'$, with $x,y \in \mathbb{R}^{2}$, $y-x\in \mathcal{W}$, which is called the \emph{double cone}.

Of particular interest to us is the double cone of radius $r$ and centre the origin cf.~Fig.~\ref{fig:doublecone}, which is defined as:
$\mathcal{O}_{r}=\mathcal{W}_{-r}\cap \mathcal{W}_{r}'$, where $\mathcal{W}_{r}:=\mathcal{W}_{r\pmb{e}^{(1)}}=\mathcal{W} + r \pmb{e}^{(1)}$ and $\mathcal{W}_{r}'=(\mathcal{W}_{r})'$.

\begin{figure}
\begin{center}
\input{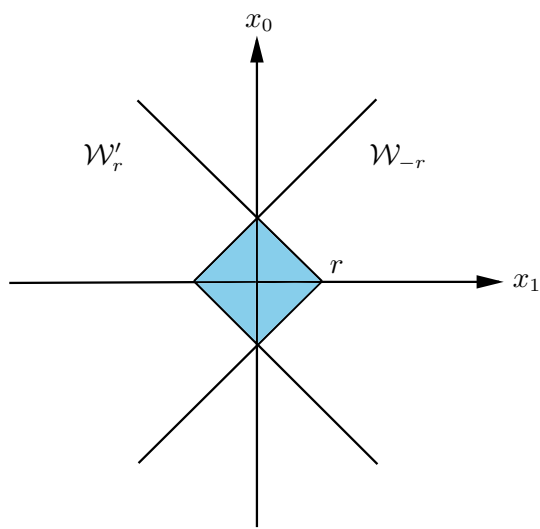}
\caption{The double cone} \label{fig:doublecone}
\end{center}
\end{figure}

\section{Scattering function and its properties}\label{sec:spropr}

We are focusing on theories with factorizing scattering matrices, namely theories where the scattering amplitudes between the outgoing particle and the incoming particle factorize in the product of the S-matrix of the free theory $S^{\text{free}}$ and a scattering function $S$:
\begin{equation}
S_{n,n}(\thetav; \thetav')=S^{\text{free}}(\thetav;\thetav')\prod_{1\leq \ell \leq n}S(|\theta_{k}-\theta_{\ell}|).
\end{equation}
where we set $\thetav:=(\theta_1,\ldots,\theta_n)$, and where the variables $\thetav, \thetav'$ are related to the momenta of the incoming and outgoing particles.

We can find examples of such theories within the completely integrable models, see for example \cite{ZamolodchikovZamolodchikov:1979}. The scattering function $S$ is a function defined by the following properties.

\begin{definition}\label{def:Smatrix}
Let $\strip(0,\pi)$ denote the strip $\rbb+i(0,\pi)$ in the complex plane. We denote with $\mathcal{S}$ the class of scattering functions $S$ satisfying the following properties:
\begin{enumerate}
\item Analytic on $\mathrm{S}(0,\pi)$ and smooth on the boundary,
\item Symmetry relation $S(\theta+i\pi)=S(\theta)^{-1}=S(-\theta)=\overline{S(\theta)}$, $(\theta\in \rbb)$,
\item Bounded on $\mathrm{S}(0,\pi)$,

\end{enumerate}
\end{definition}

Remark: in the present work we do not need the regularity condition used in \cite[Def.~3.3]{Lechner:2008}.

\section{S-symmetry} \label{sec:ssymm}

Following \cite[p.~53]{Lechner:2006}, we introduce an action $D_{n}$ of the permutation group  $\perms{n}$ on $L^{2}(\mathbb{R}^{n})$, acting as
\begin{equation}
(D_{n}(\sigma)f)(\thetav)= S^{\sigma}(\thetav)f(\thetav^{\sigma}),\quad \sigma \in \perms{n}.
\end{equation}
where $\thetav^{\sigma}=(\theta_{\sigma(1)},\ldots,\theta_{\sigma(n)})$ and the factors $S^\sigma$ ($\sigma \in \mathfrak{S}_n$) are given by:
\begin{equation}\label{eq:Sperm}
S^\sigma (\thetav) := \prod_{\substack{i<j \\ \sigma(i)>\sigma(j)}} S(\theta_{\sigma(i)}-\theta_{\sigma(j)}).
\end{equation}
\begin{lemma}
The factors $S^\sigma$ fulfil a composition law, that can be found in \cite[p.~54]{Lechner:2006}:
\begin{equation}\label{eq:Scompose}
 S^{\sigma\circ\rho} (\thetav) =  S^\sigma(\thetav)S^{\rho}(\thetav^\sigma).
\end{equation}
\end{lemma}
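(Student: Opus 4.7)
The plan is to match factors on the two sides pair by pair, indexed by subsets $\{i,j\} \subset \{1,\ldots,n\}$ with $i<j$. First I would unfold the right-hand side. Since $(\thetav^\sigma)_k = \theta_{\sigma(k)}$, the factor in $S^\rho(\thetav^\sigma)$ coming from an inversion $(i,j)$ of $\rho$ (so $i<j$ and $\rho(i)>\rho(j)$) reads $S(\theta_{\sigma(\rho(i))} - \theta_{\sigma(\rho(j))})$, while each inversion $(a,b)$ of $\sigma$ contributes $S(\theta_{\sigma(a)} - \theta_{\sigma(b)})$ to $S^\sigma(\thetav)$. Setting $\{a,b\} = \{\rho(i),\rho(j)\}$ groups both products according to the underlying unordered pair $\{i,j\}$.

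Next I would distinguish the four cases determined by the signs of $\rho(j)-\rho(i)$ and $\sigma(\rho(j))-\sigma(\rho(i))$ for each $i<j$. In the two cases where $\sigma\circ\rho$ inverts $(i,j)$, exactly one of $S^\sigma(\thetav)$ and $S^\rho(\thetav^\sigma)$ produces the factor $S(\theta_{\sigma(\rho(i))} - \theta_{\sigma(\rho(j))})$: the former when $\rho(i)<\rho(j)$, the latter when $\rho(i)>\rho(j)$. This is precisely the factor that $S^{\sigma\circ\rho}(\thetav)$ contributes on the left-hand side for that pair. In the case $\rho(i)<\rho(j)$ with $\sigma(\rho(i))<\sigma(\rho(j))$, none of the three products contributes a factor. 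In the remaining case $\rho(i)>\rho(j)$ with $\sigma(\rho(i))<\sigma(\rho(j))$, both $S^\sigma(\thetav)$ and $S^\rho(\thetav^\sigma)$ do contribute, and their product
\begin{equation*}
S(\theta_{\sigma(\rho(j))} - \theta_{\sigma(\rho(i))}) \cdot S(\theta_{\sigma(\rho(i))} - \theta_{\sigma(\rho(j))}) = 1
\end{equation*}
collapses to unity by the symmetry relation $S(-\theta) = S(\theta)^{-1}$ from Definition~\ref{def:Smatrix}. Assembling the four cases yields the identity.

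The only nontrivial analytic ingredient here is this symmetry relation; the main obstacle is purely combinatorial bookkeeping, namely checking that each pair $\{i,j\}$ is accounted for in exactly the right way on both sides and that the $\sigma$-twist in the argument of $S$ is preserved. An alternative route would be induction on the length of $\rho$ written in adjacent transpositions, reducing everything to a single verification for $\rho = s_k$; I would prefer the pair-by-pair approach because it makes the role of the unitarity relation $S(\theta)S(-\theta)=1$ appear as a single transparent cancellation step rather than being hidden inside an inductive argument.
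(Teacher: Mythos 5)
Your proposal is correct but takes a genuinely different route from the paper. You give a direct, non-inductive argument: you pair off the inversion factors on both sides via the bijection $\{i,j\}\mapsto\{\rho(i),\rho(j)\}$ between unordered index pairs, split into four sign cases, and note that only the case "$\rho$ inverts but $\sigma\circ\rho$ does not" produces two factors, which cancel by $S(-\theta)=S(\theta)^{-1}$. I checked all four cases and they close up as you claim. The paper instead first verifies the identity when $\rho$ is a single adjacent transposition $(k,k{+}1)$ — splitting into the two sub-cases $\sigma(k)\lessgtr\sigma(k+1)$, where the same cancellation $S(\theta_{\sigma(k+1)}-\theta_{\sigma(k)})S(\theta_{\sigma(k)}-\theta_{\sigma(k+1)})=1$ appears — and then bootstraps to general $\rho$ by induction on a decomposition into adjacent transpositions, using $S^{\sigma\circ(\rho\circ\tau)}=S^{(\sigma\circ\rho)\circ\tau}=S^{\sigma\circ\rho}\,S^{\tau}((\thetav^\sigma)^\rho)$. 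The trade-off is what you already guessed: the paper's route has a minimal base verification but hides the cancellation inside the inductive step, whereas your pair-by-pair accounting makes the single use of unitarity fully explicit at the cost of tracking four sign configurations at once. Both are sound; yours is arguably cleaner to present as it avoids invoking a decomposition of $\rho$ into generators and the associated well-definedness bookkeeping.
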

\begin{proof}
First we consider the case where $\rho$ is the transposition which exchanges the indices $k$ and $k+1$. Following \cite[Formula (4.1.16)]{Lechner:2006} and using the definition \eqref{eq:Sperm}, we have:
\begin{multline}
S^{\sigma \circ \rho}(\thetav)=\prod_{\substack{i<j \\ \sigma \circ \rho(i)>\sigma \circ \rho(j)}} S(\theta_{\sigma \circ \rho(i)}-\theta_{\sigma \circ \rho(j)})\\
=\prod_{\substack{i<j \\ \sigma \circ \rho(i)>\sigma \circ \rho(j) \\ (i,j)\neq (k,k+1)}} S(\theta_{\sigma(i)}-\theta_{\sigma(j)})\prod_{\substack{i=k, j=k+1 \\ \sigma \circ \rho(k)>\sigma \circ \rho(k+1)}} S(\theta_{\sigma (k+1)}-\theta_{\sigma(k)})\\
=\prod_{\substack{i<j \\ \sigma(i)>\sigma(j) \\ (i,j)\neq (k,k+1)}} S(\theta_{\sigma(i)}-\theta_{\sigma(j)})\prod_{\substack{i=k, j=k+1 \\ \sigma(k+1)>\sigma(k)}} S(\theta_{\sigma (k+1)}-\theta_{\sigma(k)}).\label{complawproof}
\end{multline}
If $\sigma(k+1)>\sigma(k)$, then
\begin{eqnarray}
\text{r.h.s.}\eqref{complawproof} &=&\prod_{\substack{i<j \\ \sigma(i)>\sigma(j) }} S(\theta_{\sigma(i)}-\theta_{\sigma(j)})S(\theta_{\sigma(k+1)}-\theta_{\sigma(k)})\nonumber\\
&=&\prod_{\substack{i<j \\ \sigma(i)>\sigma(j) }} S(\theta_{\sigma(i)}-\theta_{\sigma(j)})S(\theta_{k+1}^{\sigma}-\theta_{k}^{\sigma})\nonumber\\
&=&\prod_{\substack{i<j \\ \sigma(i)>\sigma(j) }} S(\theta_{\sigma(i)}-\theta_{\sigma(j)})S(\theta^{\sigma}_{\rho(k)}-\theta^{\sigma}_{\rho(k+1)})\nonumber\\
&=&S^{\sigma}(\thetav)S^{\rho}(\thetav^{\sigma}).
\end{eqnarray}
Notice that in the equation above the product $\prod_{i<j,\; \sigma(i)>\sigma(j) } S(\theta_{\sigma(i)}-\theta_{\sigma(j)})$ includes in principle the case $(i,j)=(k,k+1)$, but this case does not contribute with an $S$-factor because of the condition $\sigma(k+1)>\sigma(k)$; notice also that in the last equality we made use of the following: $S^{\rho}(\thetav)=\prod_{i<j,\; \rho(i)>\rho(j) }S(\theta_{\rho(i)}-\theta_{\rho(j)})=S(\theta_{\rho(k)}-\theta_{\rho(k+1)})=S(\theta_{k+1}-\theta_{k})$.

If $\sigma(k+1)<\sigma(k)$:
\begin{eqnarray}
\text{r.h.s.}\eqref{complawproof} &=&\Big( \prod_{\substack{i<j \\ \sigma(i)>\sigma(j) }}S(\theta_{\sigma(i)}-\theta_{\sigma(j)})\Big)S(\theta_{\sigma(k+1)}-\theta_{\sigma(k)})\nonumber\\
&=& S^{\sigma}(\thetav)S(\theta^{\sigma}_{k+1}-\theta_{k}^{\sigma})\nonumber\\
&=& S^{\sigma}(\thetav)S^{\rho}(\thetav^{\sigma}).
\end{eqnarray}
Notice that in the equation above the product $\prod_{i<j,\;  \sigma(i)>\sigma(j) }S(\theta_{\sigma(i)}-\theta_{\sigma(j)})$ includes the factor $S(\theta_{\sigma(k)}-\theta_{\sigma(k+1)})$; this means that we had to multiply this product with the inverse $S(\theta_{\sigma(k+1)}-\theta_{\sigma(k)})$.

Now, we apply induction hypothesis three times as follows. Let $\tau$ be a transposition, $\sigma, \rho$ be general permutations, we have:
\begin{multline}
S^{\sigma \circ (\rho \circ \tau)}(\thetav)=S^{(\sigma \circ \rho)\circ \tau}(\thetav)=S^{\sigma \circ \rho}(\thetav)S^{\tau}(\thetav^{\sigma \circ \rho})\\
=S^{\sigma}(\thetav)S^{\rho}(\thetav^{\sigma})S^{\tau}((\thetav^{\sigma})^{\rho})=S^{\sigma}(\thetav)S^{\rho\circ \tau}(\thetav^{\sigma}).
\end{multline}
\end{proof}
Using this composition law, it follows \cite[page 830]{Lechner:2008} that $D_{n}$ defines a unitary representation of $\mathfrak{S}_n$ on $L^{2}(\mathbb{R}^{n})$ and that $P^{S}_{n}:=\frac{1}{n!}\sum_{\sigma \in \mathfrak{S}_n}D_{n}(\sigma)$ is the orthogonal projection onto the space of \emph{$S$-symmetric functions} in $L^{2}(\mathbb{R}^{n})$, namely functions such that:
\begin{equation}
f(\thetav)=S^{\sigma}(\thetav)f(\thetav^{\sigma}).
\end{equation}
We denote the $S$-symmetrization of a function with $\operatorname{Sym}_S f$ and it is given by:
\begin{equation}\label{Symf}
\operatorname{Sym}_S f(\thetav)=\frac{1}{n!}\sum_{\sigma \in \perms{n}}S^{\sigma}(\thetav)f(\thetav^{\sigma}).
\end{equation}
We will use $\operatorname{Sym}_S $ etc. also for more general functions and for distributions. If the function depends on several variables and we want to symmetrize only with respect to some of them, we will write $\operatorname{Sym}_{S,\thetav}$. The choice of variables for the symmetrization can be of importance, as the formula
\begin{equation}\label{symstheta}
\operatorname{Sym}_{S,\thetav}\delta^{(n)}(\thetav-\thetav')=
\operatorname{Sym}_{S^{-1},\thetav'}\delta^{(n)}(\thetav-\thetav').
\end{equation}
shows.

\section[Single-particle space, \ldots]{Single-particle space, $S$-symmetric Fock space, space-time symmetries}\label{sec:hilbertspace}

We will focus our attention on models with only one sort of scalar particle with mass $\mu >0$. As in the free scalar field, our single particle space is then $\mathcal{H}_{1}=L^{2}(\mathbb{R},d\theta)$, where $\theta$ (``rapidity'') is related to the particle momentum by
\begin{equation}
p(\theta):=\mu \begin{pmatrix} \cosh\theta \\ \sinh\theta \end{pmatrix}, \quad \theta\in \mathbb{R}.
\end{equation}
Using the subspace of ``$S$-symmetrized'' wave functions introduced in Sec.~\ref{sec:ssymm}, we define our Hilbert space $\Hil$ of the theory as the $S$-symmetrized Fock space over $\mathcal{H}_{1}$:
\begin{equation}
\mathcal{H}:=\bigoplus_{n=0}^{\infty}\mathcal{H}_{n},
\end{equation}
where $\mathcal{H}_{n}$ is the $n$-particle space: $\mathcal{H}_{n}:= P^{S}_{n}\mathcal{H}_{1}^{\otimes n}$, with $\Hil_0 = \cbb\Omega$. We denote the projection onto $\Hil_k$ with $P_k$, we define $Q_k:=\sum_{j=0}^{k}P_k$, and we denote the space of finite particle number states with $\fpn = \bigcup_{k}Q_{k}\mathcal{H}$, $\fpn \subset \Hil$ dense.

We denote  with $U(x,\lambda)$ the unitary, strongly continuous representation of the boosts, $U(0,\lambda)$, and of the translations, $U(x,0)$, on $\mathcal{H}$; we have $U(x,\lambda)=U(x,0)U(0,\lambda)$. This representation acts as, $\Psi \in \mathcal{H}$,
\begin{equation}
(U(x,\lambda)\Psi)_{n}(\thetav):=\exp\Big( i\sum_{k=1}^{n}p(\theta_{k})\cdot x\Big) \Psi_{n}(\thetav-\lambdav), \quad \lambdav=(\lambda,\ldots,\lambda).
\end{equation}
and we denote with $U(j)=J$ the anti-unitary representation of the reflection $j(x):=-x$ on $\mathcal{H}$ , which acts as, $\Psi \in \mathcal{H}$,
\begin{equation}
(U(j)\Psi)_{n}(\thetav):= \overline{\Psi_{n}(\theta_{n},\ldots,\theta_{1})}.
\end{equation}
It is important for later to fix the conventions for the Fourier transform: Let $g\in \mathcal{S}(\mathbb{R}^{n})$, we set
\begin{align}
\tilde g(p) &:=\frac{1}{2\pi}\int d^{2}x g(x)e^{ip\cdot x}
= \frac{1}{2\pi}\int d^2 x g(x)e^{i p_0 x_0}e^{-i p_1 x_1},
\\
 g^{\pm}(\theta) &:=\frac{1}{2\pi}\int d^{2}x g( x)e^{\pm ip(\theta)\cdot x} = \tilde g (\pm p(\theta)) .\label{fouriertransformg}
\end{align}

\section{Jaffe class functions} \label{sec:jaffee}

We know that in Wightman quantum field theory, quantum fields (and associated objects) localized at a point in space-time must be unbounded operators.
Their singular behaviour can be explained by thinking of the uncertainty relation in quantum mechanics. Indeed, measurements which take place in a finite region of space-time need that a big quantity of energy and momentum is transferred. Therefore, expectation values of quantum fields between states with good behaviour at high energies should be non-singular. For this reason, one usually considers operators which fulfil polynomial bounds at high energy, namely Wightman fields $\phi(x)$ such that $(1+H)^{-\ell} \phi(x) (1+H)^{-\ell}$ is bounded for some $\ell>0$ \cite{FreHer:pointlike_fields}.
We can absorb the above condition on the high-energy behaviour of quantum fields into the choice of the class of test functions space with which one smears these quantum fields. Usually we take this class to be the Schwartz space, just recall the known book \cite{StrWig:PCT}. But actually according to Jaffe \cite{Jaffe:1967} this choice is too restrictive and we can extend the class of smearing functions to a more general family. This includes  energy bounds which instead of being only of polynomial type growth in energy, can be ``almost exponential'' growth like $\exp \omega(E)$, where the function $\omega$ can almost grow linearly in $E$. To read more about this see also \cite{ConstantinescuThaleimer:1974}.
The generalized class of distributions associated to this more general space of test functions was studied by \cite{Bjoerck:1965}, but according to the paper Beurling already presented a certain generalized distribution theory before (see the related citations in the paper of Bjoerck).

In this thesis, we are going to adopt Jaffe's point of view with some little variations, since with the aim of constructing examples of local operators, we would like to consider a more general class of operators as possible.
In the following we list the properties that we require the function $\omega$ (the so called \emph{indicatrix}) to fulfil.

\begin{definition}\label{def:indicatrix}
An \emph{indicatrix} is a smooth function $\omega:[0,\infty) \to [0,\infty)$ with the following properties.
\begin{enumerate}
\renewcommand{\theenumi}{($\omega$\arabic{enumi})}
\renewcommand{\labelenumi}{\theenumi}
\item \label{it:omegamonoton}
$\omega$ is monotonously increasing;
\item \label{it:sublinear}
$\omega(p+q)\leq \omega(p)+\omega(q)$ for all $p,q \geq 0$ \emph{(sublinearity)};
\item \label{it:omegagrowth}
$\displaystyle{\int_{0}^{\infty}\frac{\omega(p)}{1+p^{2}}\;dp < \infty}$.
\end{enumerate}
We call $\omega$ an \emph{analytic indicatrix} if, in addition, there exists a function $\oa$ on the upper half plane $\rbb + i [0,\infty)$, analytic in the interior and continuous at the boundary, such that
\begin{enumerate}
\setcounter{enumi}{3}
\renewcommand{\theenumi}{($\omega$\arabic{enumi})}
\renewcommand{\labelenumi}{\theenumi}

\item \label{it:omegaeven}
$\re\oa(p)=\re\oa(-p)$ for all $p \geq 0$;

\item \label{it:omegaestimate}
There exist $a_\omega,b_\omega>0$ such that $\omega(|z|) \leq \re\oa(z) \leq a_\omega \omega(|z|)+b_\omega$ for all $z \in \rbb+i[0,\infty)$.

\end{enumerate}
\end{definition}

We have chosen these properties as general as possible so that one can find a large range of examples. One  example, which in terms of $\exp \omega(E)$ reads as the usual polynomial growth energy behaviour, is the following for some $\beta>0$:
\begin{equation}\label{eq:omegalog}
\omega(p) = \frac{\beta}{2}\log(1+p), \quad
\oa(z)=\beta[\operatorname{Log}(i+z)+1].
\end{equation}

\begin{lemma}
The example \eqref{eq:omegalog} matches the definition \ref{def:indicatrix}.
\end{lemma}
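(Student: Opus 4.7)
The plan is to verify all five properties of Definition~\ref{def:indicatrix} directly for the pair $(\omega,\varpi)$ in \eqref{eq:omegalog}. The first three are routine properties of $\log$; the last two hinge on comparing $|i+z|$ to $1+|z|$ on the upper half plane.

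First I would dispatch (\ref{it:omegamonoton})--(\ref{it:omegagrowth}). Monotonicity is immediate from the derivative $\omega'(p) = \beta/(2(1+p)) > 0$. Sublinearity reduces to $(1+p)(1+q) = 1+p+q+pq \ge 1+p+q$ for $p,q\ge 0$, and taking logarithms. The integrability in (\ref{it:omegagrowth}) is the standard fact that $\int_1^\infty (\log p)/(1+p^2)\,dp$ converges (by comparison with $1/p^{3/2}$ for $p$ large).

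For (\ref{it:omegaeven}), note that $\varpi(z) = \beta[\operatorname{Log}(i+z)+1]$ is analytic on $\mathbb{R}+i[0,\infty)$ (the branch cut of $\operatorname{Log}$ can be taken along the negative real axis, which is avoided since $\operatorname{Im}(i+z)\ge 1$). For real $p$ we compute $\operatorname{Re}\varpi(\pm p) = \beta[\log|i\pm p| + 1] = \beta[\tfrac12\log(1+p^2)+1]$, which is manifestly even in $p$.

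The main step is (\ref{it:omegaestimate}). For $z\in\mathbb{R}+i[0,\infty)$ I would establish the two-sided comparison
\begin{equation*}
\tfrac{1}{\sqrt{2}}(1+|z|) \;\le\; |i+z| \;\le\; 1+|z|.
\end{equation*}
The upper bound is just the triangle inequality. For the lower bound, write $z=x+iy$ with $y\ge 0$; then $|i+z|^2 = x^2+(1+y)^2 = |z|^2 + 1 + 2y \ge |z|^2 + 1 \ge \tfrac12(1+|z|)^2$, the last inequality being $(|z|-1)^2\ge 0$. Taking logs and multiplying by $\beta$ yields
\begin{equation*}
2\omega(|z|) - \tfrac{\beta}{2}\log 2 + \beta \;\le\; \operatorname{Re}\varpi(z) \;\le\; 2\omega(|z|) + \beta,
\end{equation*}
so $\omega(|z|)\le \operatorname{Re}\varpi(z) \le a_\omega\omega(|z|) + b_\omega$ with $a_\omega = 2$, $b_\omega = \beta$ (and the lower bound uses only $\omega(|z|)\ge 0$ together with $\beta - \tfrac{\beta}{2}\log 2 > 0$). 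I do not expect any genuine obstacle; the only thing requiring a moment of thought is the factor of $\tfrac{1}{\sqrt 2}$ in the lower comparison, which is exactly what forces the constant $a_\omega=2$ rather than $a_\omega=1$.
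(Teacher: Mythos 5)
Your proof is correct and verifies every clause of Definition~\ref{def:indicatrix}, but it is more elementary and self-contained than the paper's argument in two places. For sublinearity the paper simply cites Bj\"orck's Proposition~1.3.6, whereas you prove it directly from $(1+p)(1+q)\ge 1+p+q$; this is a harmless shortcut that keeps the proof within the text. For~(\ref{it:omegaestimate}) the paper establishes the lower bound $\operatorname{Re}\varpi(z)\ge\omega(|z|)$ via the inequality $\log(x^2+1)+2\ge\log(x+1)$ and then treats the upper bound separately using $|p+i(q+1)|\le|p+iq|+1$. Your two-sided comparison $\tfrac{1}{\sqrt2}(1+|z|)\le|i+z|\le 1+|z|$ (the lower estimate following from $|i+z|^2=|z|^2+1+2\operatorname{Im}z\ge|z|^2+1\ge\tfrac12(1+|z|)^2$) is cleaner, gets both sides of~(\ref{it:omegaestimate}) in one shot, and makes the constants $a_\omega=2$, $b_\omega=\beta$ entirely transparent. (These are indeed the constants the paper's own computation yields; the factor $2$ comes from $\beta\log(1+|z|)=2\omega(|z|)$, as you observe.) The remaining items --- monotonicity via the derivative, the integral test, analyticity on $\mathbb{R}+i[0,\infty)$, and evenness of $\operatorname{Re}\varpi$ on the real axis --- match the paper's treatment.
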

\begin{proof}
In this example $\omega$  is clearly a continuous function $[0,\infty)\rightarrow [0,\infty)$. It fulfils the \emph{subadditivity} property due to \cite[Proposition 1.3.6]{Bjoerck:1965}. Moreover, there holds:
$\int_{0}^{\infty}\frac{\log(1+p)}{1+p^{2}}\;dp < \infty$.

The function $\oa$ is analytic on $\mathbb{R}+i[0,\infty)$.

We have $\operatorname{Re}\oa(p+iq)=\frac{\beta}{2}[\log|p+i(q+1)|^{2}+2]\geq \frac{\beta}{2}[\log(p^{2}+q^{2}+1)+2]\geq \frac{\beta}{2}\log(|p+iq|+1)$ since  $\frac{\log(x^{2}+1)+2}{\log(x+1)}\geq 1$ for $x\geq 0$. This proves the property \ref{it:omegaestimate}(part 1).

The property \ref{it:omegaestimate}(part 2) follows from a short computation: $\operatorname{Re}\oa(p+iq)=\beta \log |p+i(q+1)|+\beta\leq \omega(|p+iq|)+\beta$ since $|p+i(q+1)|\leq |p+iq|+1$.

The property \ref{it:omegaeven} is also fulfilled because
 $\operatorname{Re}\oa(p)=\beta +\beta \log|i+p|=\beta+\beta\log\sqrt{1+p^{2}}$.
\end{proof}

A second class of examples with stronger growth in $p$ is, with $0 < \alpha < 1$,
\begin{equation}\label{eq:omegapower}
\omega(p)= p^{\alpha}\cos\Big( \frac{\alpha \pi}{2}\Big),\quad
\oa(z)= i^{-\alpha}(z+i)^{\alpha}.
\end{equation}

\begin{lemma}
The example \eqref{eq:omegapower} matches the definition \ref{def:indicatrix}.
\end{lemma}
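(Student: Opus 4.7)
The plan is to verify each of the five numbered properties of Definition~\ref{def:indicatrix} in turn, treating the real conditions directly and handling the analytic conditions through the polar form of $\varpi$. For (\ref{it:omegamonoton}), I observe that $p \mapsto p^\alpha$ is strictly increasing on $[0,\infty)$ and $\cos(\alpha\pi/2)>0$ since $0<\alpha<1$. For (\ref{it:sublinear}), the condition reduces to the classical inequality $(p+q)^\alpha \leq p^\alpha + q^\alpha$ for $0<\alpha<1$, which follows from concavity of $x\mapsto x^\alpha$ with $0^\alpha=0$ (equivalently, set $t = p/(p+q)$ and use $t^\alpha+(1-t)^\alpha \geq t+(1-t)=1$). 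For (\ref{it:omegagrowth}), the integral $\int_0^\infty p^\alpha/(1+p^2)\,dp$ converges at $0$ (since $\alpha>-1$) and at infinity (since $\alpha-2 <-1$).

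For the analytic properties, I would fix the principal branch of the logarithm and define $(z+i)^\alpha = \exp(\alpha \operatorname{Log}(z+i))$. Since $\operatorname{Im}(z+i)\geq 1$ for all $z\in\rbb+i[0,\infty)$, the point $z+i$ stays strictly inside the open upper half plane, so $\varpi$ is analytic on the interior and continuous on the boundary. Writing $z+i = re^{i\psi}$ with $r=\sqrt{p^2+(q+1)^2}$ and $\psi = \arg(z+i)\in(0,\pi)$, and using $i^{-\alpha}=e^{-i\alpha\pi/2}$, one obtains
\[
\re\varpi(z) = r^\alpha \cos\!\bigl(\alpha(\psi-\tfrac{\pi}{2})\bigr).
\]
Property (\ref{it:omegaeven}) then follows immediately: replacing $p$ by $-p$ on the real axis sends $\psi$ to $\pi-\psi$ and hence $\psi-\pi/2$ to its negative, and cosine is even.

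For the two-sided estimate (\ref{it:omegaestimate}), I would use $\psi\in(0,\pi)$ to bound $\cos(\alpha(\psi-\pi/2))\in[\cos(\alpha\pi/2),\,1]$, together with the elementary $|z|\leq |z+i|\leq |z|+1$ and the subadditivity $(|z|+1)^\alpha\leq |z|^\alpha+1$, to conclude
\[
|z|^\alpha\cos(\alpha\pi/2) \;\leq\; \re\varpi(z) \;\leq\; |z|^\alpha + 1,
\]
which gives the bound with $a_\omega = 1/\cos(\alpha\pi/2)$ and $b_\omega = 1$. The only mildly delicate point, which is the main obstacle, is the lower bound on $\cos(\alpha(\psi-\pi/2))$: one needs $|\psi-\pi/2|$ to stay strictly below $\pi/2$. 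This is guaranteed by $\operatorname{Im}(z+i)\geq 1>0$, so no branch-cut issues arise and the constant $\cos(\alpha\pi/2)$ is strictly positive, yielding precisely the required linear-comparison inequality with $\omega(|z|)$.
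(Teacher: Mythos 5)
Your verification is correct and follows the same overall strategy as the paper: check (\ref{it:omegamonoton})--(\ref{it:omegagrowth}) directly, write $\oa$ in polar form so that $\re\oa(z)=|z+i|^\alpha\cos(\alpha(\psi-\pi/2))$ with $\psi=\arg(z+i)\in(0,\pi)$, and then bound the cosine factor between $\cos(\alpha\pi/2)$ and $1$. The one step where you diverge is the upper half of (\ref{it:omegaestimate}). The paper goes $(|z|+1)^\alpha\leq c\,|z|^\alpha+d$ via a homogeneity-and-compactness argument on $(|a|^\alpha+|b|^\alpha)/(|a|+|b|)^\alpha$ restricted to the unit circle, whereas you simply reuse the subadditivity $(p+q)^\alpha\leq p^\alpha+q^\alpha$ (already established for (\ref{it:sublinear})) with $q=1$ to get $(|z|+1)^\alpha\leq|z|^\alpha+1$ directly. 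This is shorter, reuses a fact already in hand, and produces the explicit constants $a_\omega=1/\cos(\alpha\pi/2)$, $b_\omega=1$ without needing the two-sided homogeneity estimate, which is overkill since only a one-sided bound is required. Your treatment of (\ref{it:omegaeven}) via $\psi\mapsto\pi-\psi$ and evenness of cosine is also a cleaner formulation than the paper's explicit computation through $\arctan(1/p)=\pi/2-\arctan p$, though they are equivalent. One small plus: you make explicit the step $|z+i|\geq|z|$ needed to pass from $|z+i|^\alpha\cos(\alpha\pi/2)$ to $\omega(|z|)$, which the paper leaves implicit.
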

\begin{proof}
In this example $\omega$ is again a continuous function $[0,\infty)\rightarrow [0,\infty)$; it is increasing and concave, since $\alpha < 1$. Subadditivity then follows by \cite[Proposition 1.2.1]{Bjoerck:1965}. Moreover, there holds for $0<\alpha<1$:
$\int_{0}^{\infty}\frac{p^{\alpha}}{1+p^{2}}\;dp < \infty$.

The function $\oa$ is analytic on $\mathbb{R}+i[0,\infty)$.  To prove the property \ref{it:omegaestimate} we compute:
\begin{multline}\label{reombar}
\operatorname{Re}\oa(p+iq)=\operatorname{Re}i^{-\alpha}(p+(q+1)i)^{\alpha}
=\operatorname{Re}i^{-\alpha}\exp(\alpha\operatorname{Log}(p+(q+1)i))\\
=\operatorname{Re}\exp[i\frac{\pi}{2}(-\alpha)+\alpha \log |p+(q+1)i|+i\alpha\operatorname{arg}(p+(q+1)i)]\\
=|p+(q+1)i|^{\alpha}\cos((-\alpha)\frac{\pi}{2}+\alpha\operatorname{arctan}\frac{q+1}{p})\\
\geq |p+(q+1)i|^{\alpha}\cos(\frac{\alpha\pi}{2}).
\end{multline}
where in the last inequality we made use of the fact that $\alpha\arctan\frac{q+1}{p} \in [\alpha \epsilon,\alpha(\pi-\epsilon)]$ and therefore $(-\alpha)\frac{\pi}{2}+\alpha\operatorname{arctan}\frac{q+1}{p}\in [-\alpha\frac{\pi}{2}+\epsilon,\alpha\frac{\pi}{2}-\epsilon]$.

This proves the property \ref{it:omegaestimate}(part 1).

From \eqref{reombar} we have also
\begin{equation}
\begin{aligned}
\operatorname{Re}\oa(p+iq)&\leq&|p+(q+1)i|^{\alpha}\\
&=&[p^{2}+(q+1)^{2}]^{\alpha/2}\\
&\leq& (|p+iq|+1)^{\alpha}\\
&\leq& c |p+iq|^{\alpha}+d\cdot 1^{\alpha}\\
&=&c' \omega(|p+iq|)+d.
\end{aligned}
\end{equation}
The fourth inequality follows from the following fact. The function $f(a,b)=(|a|^{\alpha}+|b|^{\alpha})/(|a|+|b|)^{\alpha}$ has the property to be homogeneous of order $0$, that is: $f(\lambda a, \lambda b)=f(a,b)$ for all $\lambda>0$; hence, for arbitrary $(a,b) \neq (0,0)$ we can rescale the argument of the function $f$, $(a,b)=\lambda (c,d)$ with $\lambda>0$, such that $(c,d)\in S^{1}\subset \rbb^{2}$. Then, we notice that $f$ is clearly continuous, positive and non-zero on the unit circle $S^{1}$. Hence, $f$ is bounded there and we can find two positive real constants $m,M >0$ such that $m \leq f(a,b)\leq M$ for all $(a,b) \in S^{1}$. This implies $m(|a|+|b|)^{\alpha}\leq |a|^{\alpha} + |b|^{\alpha} \leq M(|a|+|b|)^{\alpha}$ (*), where we can choose $m=M^{-1}$. Notice that the point $(a,b)=(0,0)$ fulfils the inequality (*) trivially.

This proves the property \ref{it:omegaestimate}(part 2). To prove the property \ref{it:omegaeven} we compute:
\begin{equation}
\operatorname{Re}\oa(p)=\operatorname{Re}(e^{i(-\alpha)\frac{\pi}{2}}e^{\alpha\operatorname{Log}(p+i)})
=\operatorname{Re}\exp(\alpha\log\sqrt{1+p^{2}}+i\alpha\operatorname{arctan}\frac{1}{p}-i\alpha\frac{\pi}{2}).
\end{equation}
Using the relation $\operatorname{arctan}\frac{1}{p}=-\operatorname{arctan}p +\frac{\pi}{2}$, we find
\begin{multline}
\operatorname{Re}\oa(p)=\exp\alpha\log\sqrt{1+p^{2}}\operatorname{Re}\exp i(\alpha\operatorname{arctan}(-p))\\
=|i+p|^{\alpha}\cos(\alpha \operatorname{arctan}(-p))=|i+p|^{\alpha}\cos(\alpha \operatorname{arctan}(p)).
\end{multline}
\end{proof}
Let $\omega$ be an indicatrix and let $\ocal$ be an open set in Minkowski space. We consider the following space of functions with compact support in $\ocal$,
\begin{equation}
 \dcal^\omega(\ocal) := \{ f \in \dcal(\ocal) : \theta \mapsto e^{\omega(\cosh \theta)} f^\pm(\theta) \text{ is bounded and square integrable} \}.\label{domega}
\end{equation}
We are not interested in equipping $\dcal^\omega(\ocal)$ with a topology, even if one can find in \cite{Bjoerck:1965,ConstantinescuThaleimer:1974} methods on how to topologize these kind of spaces. We are rather interested to know ``how many'' elements $f$ the space $\dcal^\omega(\ocal)$ contains. If for example $\omega$ is of the form \eqref{eq:omegalog}, or bounded by this, then $e^{\omega(p)}$ is clearly bounded by a power of $p$; due to Paley-Wiener theorem the product $e^{\omega(p)} \tilde{f}(p)$ is bounded for any $f\in \dcal(\ocal) := C^{\infty}_{0}(\ocal)$: this because the Fourier transform of a function $f\in C^{\infty}_{0}(\ocal)$ is entire analytic and bounded by a polynomial in $p$ at infinity. Hence, in this case $\dcal^\omega(\ocal) =  \dcal(\ocal) := C^{\infty}_{0}(\ocal)$. See also \cite[Proposition~1.3.6]{Bjoerck:1965}.
If instead we consider a faster growing $\omega$, it is not obvious a priori that $\dcal^\omega(\ocal)$ contains any non-zero element. It is condition \ref{it:omegagrowth} on how fast $\omega$ needs to grow to be decisive for nontriviality. Indeed, it was shown in \cite[Theorem~1.3.7]{Bjoerck:1965} that condition \ref{it:omegagrowth} is equivalent to the fact that one can find functions $f$ (``local units'') in $\dcal^\omega(\ocal)$ with $0 \leq f \leq 1$, with $f=1$ on any given compact set $\kcal\subset \ocal$, and $f=0$ outside any given neighbourhood of $\kcal$, such that a certain norm of $f$ (see \cite[Definition~1.3.1]{Bjoerck:1965}) is finite. However, this bound is related to our bound in \eqref{domega} as a consequence of \cite[Definition~1.3.25]{Bjoerck:1965} and \cite[Corollary~1.4.3]{Bjoerck:1965}. The square integrability in \eqref{domega} is a consequence of \cite[Definition~1.3.25]{Bjoerck:1965} for $\lambda =2$; indeed if $|e^{2\omega(\cosh\theta)}f^{\pm}(\theta)|$ is bounded by a constant, then $e^{\omega(\cosh\theta)}f^{\pm}(\theta)$ is bounded by $e^{-\omega(\cosh\theta)}$, which can be integrated due to \ref{it:omegagrowth}.
Notice that it suffices to show this for $\omega(p)>p^{\alpha}$, $0<\alpha<1$, or for $\omega(p)> \log p$. Indeed, in the case where $\omega$ is not greater than $p^{\alpha}$, we can define $\omega'(p):= \omega(p) +p^{\alpha}\geq p^{\alpha}$; if we can find local units in $\dcal^{\omega'}(\ocal)$, then we can also find such local units in $\dcal^{\omega}(\ocal)$ as well, since $\dcal^{\omega'}(\ocal)\subset \dcal^{\omega}(\ocal)$ (as the condition in $\dcal^{\omega'}(\ocal)$ is stricter).
All this is equivalent to say that the space $\dcal^\omega(\ocal)$ is non-trivial.

We can approximate any functions in $\dcal(\ocal)$ with functions in $\dcal^\omega(\ocal)$ by considering the convolutions of the smooth functions with compact support in $\ocal$ with these local units. Since the convolution in Fourier space is just a multiplication, their product still decays rapidly in momentum space and is again in $\dcal^\omega(\ocal)$. By performing the limit of the convoluted function in the $\dcal(\ocal)$ topology as the local units in $\dcal^\omega(\ocal)$ approaches the delta distribution, we obtain that the convoluted function converges to the function in $\dcal(\ocal)$ (see \cite[Theorem 1.3.16]{Bjoerck:1965} for more details on this argument). Hence, one finds that $\dcal^\omega(\ocal)$ is actually dense in $\dcal(\ocal)$, in the $\dcal(\ocal)$ topology.

The importance of the condition \ref{it:omegagrowth} becomes evident also if one considers the example of a function $f$ whose Fourier transform fulfils the bound: $|\tilde{f}(p)|\leq e^{-|p|}$, namely $\omega(p)=|p|$. In this case, we have that the integral $\int \tilde{f}(p)e^{ip(x+iy)} dp= f(x+iy)$ converges if $|y|<1$ since the integrand is bounded by $|\tilde{f}(p)e^{ip(x+iy)}|\leq e^{-|p|}e^{-yp}$ (and therefore by $e^{-p(1+y)}$ for $p>0$ and by $e^{+p(1-y)}$ for $p<0$). Hence the function $f(x+iy)$ is defined on the strip $|y|<1$ and consequently cannot have compact support. Thus,
$\dcal^\omega(\ocal)$ is trivial (see also \cite{Bjoerck:1965} before Theorem~1.3.7).

For functions in $\dcal^\omega(\ocal)$, one can derive Paley-Wiener type estimates on their Fourier transform \cite[Sec~1.4]{Bjoerck:1965}. We use the following variant in our context.

Most of the material in the rest of this section is due to H. Bostelmann.

\begin{proposition}\label{prop:omegapw}
Let $\omega$ be an analytic indicatrix, $r\in \rbb$, and $f \in \dcal^\omega(\wcal_r)$. Then $ f^-$ extends to an analytic function on the strip $\strip(0,\pi)$, continuous on its closure, and one has $f^-(\theta+i\pi)=f^+(\theta)$. For fixed $\ell\in\nbb_0$, there exists $c>0$ such that
\begin{equation}\label{eq:fmstrip}
  \Big\lvert \frac{d^\ell f^-}{d\zeta^\ell} (\theta+ i \lambda) \Big\rvert \leq c (\cosh \theta)^\ell e^{-\mu r \cosh \theta \sin \lambda} e^{-\omega(\cosh \theta)/a_\omega}
\quad \text{for all }\theta \in \rbb, \; \lambda \in [0,\pi].
\end{equation}
\end{proposition}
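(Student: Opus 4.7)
The plan is to realize $f^-$ as the Fourier integral of a compactly supported function, which gives analyticity, then combine a wedge-support decay with the hypothesized boundary decay $|f^\pm(\theta)| \le Me^{-\omega(\cosh\theta)}$ by a Phragmén--Lindelöf argument. Since $f \in \dcal^\omega(\wcal_r) \subset \dcal(\wcal_r)$ has compact support, the formula $f^-(\zeta) = \frac{1}{2\pi}\int f(x)\,e^{-ip(\zeta)\cdot x}\,d^{2}x$ converges absolutely for every $\zeta \in \cbb$ and defines an entire function, and the boundary relation $f^-(\theta+i\pi) = f^+(\theta)$ is immediate from $p(\theta+i\pi) = -p(\theta)$. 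Writing $b = x - r\pmb{e}^{(1)}$ and using $b_1 \geq |b_0|$ together with $\cosh\theta \geq |\sinh\theta|$, one obtains $x_0\sinh\theta - x_1\cosh\theta \leq -r\cosh\theta$ for $x \in \wcal_r$, hence $|e^{-ip(\zeta)\cdot x}| \leq e^{-\mu r\cosh\theta\sin\lambda}$ on $\overline{\strip(0,\pi)}$. Setting $\tilde f(y) := f(y + r\pmb{e}^{(1)})$ gives $\tilde f \in \dcal^\omega(\wcal_0)$ with $\tilde f^-(\zeta) = e^{-i\mu r\sinh\zeta}f^-(\zeta)$, so the factor $e^{-\mu r \cosh\theta \sin\lambda}$ is absorbed multiplicatively and it suffices to prove the bound for $\tilde f$ (i.e.\ $r = 0$).

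The core step is the choice of an auxiliary multiplier. I take $P(\zeta) := i + \sinh\zeta$; a short calculation gives $|P(\zeta)|^2 = (\cosh\theta + \sin\lambda)^2$ and $\im P(\zeta) = 1 + \cosh\theta\sin\lambda \geq 1$ on $\overline{\strip(0,\pi)}$, so $P$ maps the closed strip into $\rbb + i[1,\infty)$ and $\cosh\theta \leq |P(\zeta)| \leq \cosh\theta + 1$. Define
\begin{equation*}
H(\zeta) := \tilde f^-(\zeta)\,\exp\!\big(\oa(P(\zeta))/a_\omega\big),
\end{equation*}
analytic on $\strip(0,\pi)$ and continuous on its closure. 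On the boundary lines $\lambda \in \{0,\pi\}$, combining $|\tilde f^\pm(\theta)| \leq Me^{-\omega(\cosh\theta)}$ with the upper estimate $\re\,\oa(P) \leq a_\omega\omega(|P|)+b_\omega \leq a_\omega\omega(\cosh\theta)+a_\omega\omega(1)+b_\omega$ (via \ref{it:omegaestimate} and the sublinearity \ref{it:sublinear}) gives $|H| \leq M'$ uniformly. Inside the strip $|\tilde f^-| \leq \|\tilde f\|_1/(2\pi)$ by the wedge-support estimate, and \ref{it:omegagrowth} forces $\omega(p) = o(p)$, so $|H(\zeta)| = O\big(\exp(\varepsilon e^{|\theta|})\big)$ for every $\varepsilon > 0$, precisely the admissible growth for the Phragmén--Lindelöf principle on a strip of width $\pi$. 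One concludes $|H| \leq M'$ on the whole strip, and the lower estimate $\re\,\oa(P) \geq \omega(|P|) \geq \omega(\cosh\theta)$ then produces $|\tilde f^-(\zeta)| \leq M' e^{-\omega(\cosh\theta)/a_\omega}$, which is \eqref{eq:fmstrip} for $\ell = 0$.

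For $\ell \geq 1$, I first extend the previous bound to the enlarged strip $\strip(-\pi, 2\pi)$: the same Phragmén--Lindelöf argument on $\strip(-\pi, 0)$ with multiplier $i - \sinh\zeta$ works because its outer boundary bound is inherited from $\tilde f^-(\theta - i\pi) = \tilde f^+(\theta)$, and the remaining strip $\strip(\pi,2\pi)$ is obtained from $\strip(-\pi,0)$ by the $2\pi i$-periodicity $\tilde f^-(\zeta+2\pi i) = \tilde f^-(\zeta)$. Then for $\zeta_0 = \theta_0 + i\lambda_0$ with $\lambda_0 \in [0, \pi]$, Cauchy's integral formula on the disc of radius $r_0 := \min(1, 1/\cosh\theta_0)$, which sits inside the enlarged strip, yields $|d^\ell \tilde f^-/d\zeta^\ell(\zeta_0)| \leq \ell!\,r_0^{-\ell}\sup_{|\zeta - \zeta_0| = r_0}|\tilde f^-(\zeta)|$; a first-order Taylor estimate shows $\cosh(\re\zeta)$ varies by at most an additive $O(1)$ on this disc, which by \ref{it:sublinear} causes only an $O(1)$ change in $\omega(\cosh\cdot)$, so the exponent $1/a_\omega$ survives and the bound becomes $c'\ell!(\cosh\theta_0)^\ell e^{-\omega(\cosh\theta_0)/a_\omega}$. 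A final application of Leibniz to $f^-(\zeta) = e^{i\mu r\sinh\zeta}\tilde f^-(\zeta)$ distributes at most $\ell$ extra factors of $\cosh\theta$ and reproduces $e^{-\mu r\cosh\theta\sin\lambda}$, giving \eqref{eq:fmstrip}. The main technical subtlety is really the choice of $P$: natural candidates like $i\cosh\zeta$ leave the upper half-plane for $\lambda > \pi/2$, and pure $\sinh\zeta$ touches the real axis in the interior where $\oa$ is only defined as a boundary value; the shift $i + \sinh\zeta$ resolves both issues because $\sinh$ already maps $\overline{\strip(0,\pi)}$ into the closed upper half-plane and the summand $+i$ moves the image strictly off $\rbb$.
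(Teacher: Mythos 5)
Your $\ell=0$ argument is essentially the paper's: both reduce to $r=0$ by the shift $\tilde f^-(\zeta)=e^{-i\mu r\sinh\zeta}f^-(\zeta)$, both multiply by $\exp(\oa(\cdot)/a_\omega)$ of a map from $\overline{\strip(0,\pi)}$ to the closed upper half-plane, both get the boundary bound from \ref{it:omegaestimate} (upper inequality), the interior bound from wedge locality, and close with Phragm\'en--Lindel\"of followed by the lower inequality in \ref{it:omegaestimate}. Your multiplier $P(\zeta)=i+\sinh\zeta$ (in place of the paper's $\sinh\zeta$) is a small genuine improvement: since $|P(\zeta)|=\cosh\theta+\sin\lambda\geq\cosh\theta$ you get $\re\oa(P)\geq\omega(\cosh\theta)$ directly, whereas the paper must pass through $|\sinh\zeta|\geq\cosh\theta-1$ and invoke sublinearity once more. (Your worry that "pure $\sinh\zeta$ touches the real axis in the interior" is unfounded---it hits $\rbb$ only at $\lambda\in\{0,\pi\}$, where $\oa$ is continuous---but the shift is harmless and mildly simplifies the bookkeeping.)

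The $\ell\geq 1$ step, however, has a real gap. You claim that the Phragm\'en--Lindel\"of argument with multiplier $i-\sinh\zeta$ yields the \emph{same} uniform bound $|\tilde f^-(\zeta)|\leq c\,e^{-\omega(\cosh\theta)/a_\omega}$ on $\strip(-\pi,0)$. This is false, and the Phragm\'en--Lindel\"of argument does not apply there as stated, because the required interior growth hypothesis fails: for $\lambda\in(-\pi,0)$ and $\supp\tilde f$ a compact subset of $\wcal_0$ one has $|e^{-ip(\zeta)\cdot x}|=e^{\mu|\sin\lambda|(x_1\cosh\theta-x_0\sinh\theta)}$, which on the support gives $|\tilde f^-(\theta+i\lambda)|\sim e^{c\cosh\theta}$; this is \emph{not} $o(\exp(\varepsilon e^{|\theta|}))$, so the Hardy--Rogosinski theorem you invoke does not apply on $\strip(-\pi,0)$. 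A concrete check: at $\lambda=-\pi/2$, $p(\theta-i\pi/2)\cdot x=-i\mu(x_0\sinh\theta-x_1\cosh\theta)$, whence $|\tilde f^-(\theta-i\pi/2)|$ grows like $e^{\mu x_{1,\max}\cosh\theta}$, contradicting your claimed uniform bound. The correct statement (this is what the paper does) is that, since $\tilde f$ has compact support and $\wcal_0$ is open, one can pick $s>0$ with $\tilde f\in\dcal^\omega(\wcal_s')$, and then on $\strip(-\pi,0)\cup\strip(\pi,2\pi)$ the bound carries an additional factor $e^{\mu s\cosh\theta|\sin\lambda|}$. This factor is what the choice of Cauchy radius $t\sim 1/\cosh\theta$ is actually there to kill: $\mu s\,t\cosh(\theta+t)=O(1)$. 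Your final paragraph already chooses $r_0=\min(1,1/\cosh\theta_0)$ and controls the $O(1)$ variation of $\omega(\cosh\cdot)$, so the machinery is in place---but you must first derive the growth factor rather than assert a uniform bound that the example above contradicts, and then verify that the same choice of radius absorbs it.
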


\begin{proof}
 Since $f$ has compact support, its Fourier transform $\tilde f$ and $f^\pm$ are actually entire, and the relation $f^-(\zeta \pm i\pi)=f^+(\zeta)$ follows by direct computation from definition \eqref{fouriertransformg} and the fact that $p(\theta+i\pi)=-p(\theta)$. We first prove the bound \eqref{eq:fmstrip} in the case $\ell=0$, $r=0$. We consider the function $g$ on $\strip(0,\pi)$ (note that $\sinh$ maps the strip into the upper half plane and $\oa$ is defined there), defined by
\begin{equation}
   g(\zeta) := f^-(\zeta) e^{\oa(\sinh \zeta)/a_\omega}.\label{gdeffminus}
\end{equation}
For $\zeta=\theta+i\lambda$ in the closed strip, one has
\begin{equation}\label{eq:omegacd}
 \re \oa(\sinh \zeta)/a_\omega \leq \omega( \lvert\sinh \zeta\rvert)+b_\omega/a_\omega \leq \omega(\cosh \theta)+b_\omega/a_\omega,
\end{equation}
where in the first inequality we used \ref{it:omegaestimate} (right inequality) and in the second inequality we used \ref{it:omegamonoton} and the fact that $|\sinh (\theta +i\lambda)|\leq \cosh\theta$.
Since $f \in \dcal^\omega(\rightwedge)$, we have by definition $\sup_{\theta\in\rbb} |\exp(\omega(\cosh\theta))f^{-}(\theta)|  < \infty$, hence it follows that
\begin{equation}
   \sup_{\theta\in\rbb} |g(\theta)| \leq e^{b_\omega/a_\omega} \sup_{\theta\in\rbb} |e^{\omega(\cosh\theta)}f^{-}(\theta)|  < \infty.
\end{equation}
This means that $g$ is bounded on $\rbb$, and by a similar computation involving $f^+$, we have that it is bounded also on the line $\rbb+i\pi$ (since $f^-(\theta +i\pi)=f^+(\theta)$).

In the interior of the strip, we know that $f^-(\zeta)$ is bounded since $\supp f \subset \rightwedge$: see \cite[Proposition 4.2.6]{Lechner:2006}; therefore,
\begin{equation}
  \lvert g(\theta+i\lambda) \rvert  \leq e^{\omega(\cosh \theta)} \sup_{\zeta'\in\strip(0,\pi)} |f^- (\zeta') |,
\end{equation}
where we have used \eqref{eq:omegacd}.

Hence, we have shown that $g$ is uniformly bounded in $\lambda$ in the interior of the strip and grows for large $\theta$ like $e^{\omega(\cosh \theta)}$. However, $g$ is bounded in  real direction at the boundary of the strip. By application of the maximum modulus principle we would like to show that $g$ is actually bounded on the entire strip by the maximum which is attained at the boundary. For this, $g$ must grow not
too ``fast'' for $\theta \rightarrow \infty$ in the interior of the strip. According to \cite[Theorem~3]{HardyRogosinski:1946} it suffices if $g$ behaves like $e^{\omega(p)}$ with $(\omega(p)/ p) \rightarrow 0$ for $p\rightarrow \infty$. Since $\omega(p)=o(p)$ due to \ref{it:omegagrowth} and the function $\log |g|$ is subharmonic and bounded by $\log|g|\leq \omega(\cosh\theta)\approx e^{\theta}$, we can apply a Phragm\'en-Lindel\"of argument \cite[Theorem~3]{HardyRogosinski:1946} to $\log |g|$ and show that this function is actually bounded on the strip, and takes its maximum at the boundary. Therefore, the function $g$ is bounded on the strip for $\theta\rightarrow \infty$ uniformly in $\lambda$.

In other words, from \eqref{gdeffminus} we have,
\begin{equation}\label{eq:fmbound}
  \lvert f^-(\zeta) \rvert \leq c \, \lvert e^{-\oa(\sinh \zeta)/a_\omega} \rvert\quad \text{for all } \zeta \in \strip(0,\pi)
\end{equation}
with some $c>0$. We estimate
\begin{equation}
  \re \oa(\sinh \zeta) \geq \omega(|\sinh \zeta|) \geq \omega (\cosh \theta - 1) \geq \omega(\cosh\theta)-\omega(1),
\end{equation}
where in the first inequality we used \ref{it:omegaestimate}; in the second inequality we used \ref{it:omegamonoton} together with the relations $|\sinh\zeta|\geq |\sinh\theta|$ and $\sinh\theta = \cosh\theta -e^{-\theta}$ with $e^{-\theta}\leq 1$ for $\theta >0$; in the third inequality we made use of \ref{it:sublinear}: $\omega(\cosh\theta)=\omega(\cosh\theta -1 +1) \leq \omega(\cosh\theta -1) + \omega(1)$.

Inserted into \eqref{eq:fmbound}, this gives \eqref{eq:fmstrip} for $r=0$, $\ell=0$.

For the case $r\neq 0$, $\ell=0$, we note that $f^-(\zeta)=\exp(-i \mu r \sinh \zeta) h^-(\zeta)$ with $h \in \dcal^\omega(\rightwedge)$ and by applying the result before to $h^-(\zeta)$, we find \eqref{eq:fmstrip} for $r\neq 0$, $\ell=0$. By analogous arguments, the same estimate \eqref{eq:fmstrip} holds for $f^+(\zeta)$, $\zeta \in \strip(0,\pi)$, if $f \in \dcal^\omega(\wcal_{-r}')$ (see \cite[Proposition 4.2.6]{Lechner:2006}).

For $r=0$, $\ell > 0$, we proceed as follows. Since $f$ has compact support and $\rightwedge$ is open, we can choose $s>0$ such that $f \in \dcal^\omega( \rightwedge )\cap \dcal^\omega( \wcal_s')$.

Because of the relation $f^-(\zeta \pm i\pi)=f^+(\zeta)$, we have that $f^-$ in the strip $\strip(-\pi,0)$ or $\strip(\pi,2\pi)$ corresponds to $f^+$ in the strip $\strip(0,\pi)$ (which is bounded if $f$ is localized in $\wcal_s'$ due to \cite[Proposition 4.2.6]{Lechner:2006}). Hence, using the above result for $f^-$ and $f^+$, we have the estimate
\begin{equation}\label{eq:fmestoutside}
  \lvert f^- (\theta+ i \lambda) \rvert \leq c  e^{-\omega(\cosh \theta)/a_\omega} \cdot
\begin{cases}
  1 & \text{if } \lambda \in (0,\pi),  \\
 e^{\mu s \cosh \theta |\sin \lambda|} & \text{if } \lambda \in (-\pi,0) \cup (\pi,2\pi).
\end{cases}
\end{equation}
We use Cauchy's formula to estimate the derivatives of $f^-$: For any $t>0$, we have
\begin{equation}
   \Big\lvert \frac{d^\ell f^-}{d\zeta^\ell} (\zeta) \Big\rvert = \frac{\ell!}{2\pi} \int_{|\zeta-\zeta'|=t} \frac{|f^-(\zeta')|}{|\zeta-\zeta'|^{\ell +1}}d\zeta'
   \leq \ell! \,t^{-\ell} \sup_{|\zeta-\zeta'|=t} |f^-(\zeta')|.
\end{equation}
where we took into account that the length of the integral path is $2\pi t$.

Here $\zeta\in\strip(0,\pi)$, but parts of the circle $|\zeta-\zeta'|=t$ might be outside this strip. With $t < \pi/2$, this circle is within the strips $\strip(-\pi,0)\cup \strip(0,\pi)$ or $\strip(0,\pi)\cup \strip(\pi,2\pi)$, so we can use the estimates \eqref{eq:fmestoutside} and we obtain for large $\theta$, taking into account that $|\sin\lambda|\leq |\lambda|\leq t$,
\begin{equation}
   \Big\lvert \frac{d^\ell f^-}{d\zeta^\ell} (\theta+i \lambda) \Big\rvert
  \leq \ell! c t^{-\ell} e^{\mu s t \cosh (\theta+t) } e^{-\omega(\cosh(\theta-t))/a_\omega}.
\end{equation}
(Notice that assuming ``large $\theta$'' ensures for example that $\cosh(\theta-t)<\cosh\theta$ and $\cosh(\theta+t)>\cosh\theta$ and therefore that the two estimates on the exponentials above hold.)

We choose $t = 1/\cosh\theta$. Using $\cosh(\theta-t) \geq \cosh \theta - c'$, $\cosh (\theta+t) \leq \cosh \theta + c'$ with some $c'>0$ (notice that the first inequality can be proved by showing that the function $y(x)=\cosh\Big( x- \frac{1}{\cosh x}\Big)-\cosh x$ is bounded below by some negative constant; with analogous argument we prove also the second inequality), and using \ref{it:sublinear}, we obtain a constant $c''>0$ such that
\begin{equation}
   \Big\lvert \frac{d^\ell f^-}{d\zeta^\ell} (\theta+i \lambda) \Big\rvert
  \leq c'' (\cosh\theta)^{\ell} e^{-\omega(\cosh\theta)/a_\omega}.
\end{equation}
For large $-\theta$, the computation is analogous. This gives \eqref{eq:fmstrip}.

Finally, for the case $\ell>0$, $r \neq 0$ we compute the derivatives of\\ $f^-(\zeta)=\exp(-i \mu r \sinh \zeta) h^-(\zeta)$.  The result before applies to the factor $\frac{d^k h^-}{d\zeta^k} (\zeta)$; noting that
$d^k \exp(i \mu r \sinh \zeta) / d\zeta^k$, $0 \leq k \leq \ell$, is bounded by $c_k (\cosh\theta)^k \exp(\mu r \cosh \theta \sin \lambda)$ with constants $c_k>0$, we obtain \eqref{eq:fmstrip}.
\end{proof}

\section{Quadratic forms} \label{sec:qforms}

We introduce a dense subspaces of our Hilbert space $\Hil$, which is related to a fixed indicatrix $\omega$. We call it $\Hil^\omega$ and it is defined as $\Hil^\omega := \{\psi \in \Hil : \|e^{\omega(H/\mu)} \psi\| < \infty\}$. We denote, for fixed $k$, $\Hil^\omega_k = \Hil^\omega \cap \Hil_k$, and $\fpno = \Hil^{\omega} \cap \fpn$.
We consider for test functions $g \in \dcal(\rbb^m)$ the following norm,
\begin{equation}
  \onorm{g}{2} := \gnorm{\thetav \mapsto e^{\omega(E(\thetav))}g(\thetav)}{2},
\end{equation}
where $E$ is the dimensionless energy function,
\begin{equation}
   E(\thetav) = \sum_{j=1}^{m}p_0(\theta_{j})/\mu = \sum_{j=1}^m \cosh \theta_j .
\end{equation}
Now we denote by $\qf^\omega$ the space of quadratic forms (or more precisely, sesquilinear forms) $A$ on $\fpno\times\fpno$, namely,
\begin{equation}
   A: \fpno \times \fpno \to \cbb, \quad  (\psi,\chi) \mapsto \hscalar{\psi}{ A \chi},
\end{equation}
such that the following norms are finite for any $k \in \nbb_0$:
\begin{equation}\label{eq:aomeganorm}
 \gnorm{A}{k}^{\omega} := \frac{1}{2} \gnorm{Q_k A e^{-\omega(H/\mu)}Q_k}{} + \frac{1}{2} \gnorm{Q_k e^{-\omega(H/\mu)} A Q_k}{}.
\end{equation}
As we can see from \eqref{eq:aomeganorm}, quadratic forms $A \in \qf^\omega$ can be unbounded because of their behaviour at high energies (notice the energy damping factor $\exp(-\omega(H/\mu))$) and at high particle numbers (notice the projector on $\Hil_k$, $Q_{k}$).

We note that space-time translations and reflections act on $\qf^\omega$ by adjoint action of $U(\cdotarg)$, and leave this space invariant since they commute with $H$ and $Q_{k}$:
\begin{eqnarray}
\gnorm{ U(x)AU(x)^{*}}{k}^{\omega} &=& \frac{1}{2} \gnorm{Q_k U(x)A U(x)^{*} e^{-\omega(H/\mu)}Q_k}{} + \frac{1}{2} \gnorm{Q_k e^{-\omega(H/\mu)} U(x) A U(x)^{*} Q_k}{}\nonumber\\
&=& \frac{1}{2} \gnorm{U(x) Q_k A  e^{-\omega(H/\mu)} Q_k U(x)^{*}}{} + \frac{1}{2} \gnorm{U(x) Q_k e^{-\omega(H/\mu)}  A  Q_k U(x)^{*}}{}\nonumber\\
 &=& \gnorm{A}{k}^{\omega}.
\end{eqnarray}
and similarly for $U(j)$.

The adjoint action of Lorentz boosts $U(0,\lambda)$ maps $\qf^\omega$ into $\qf^{\omega'}$:
\begin{eqnarray}
\gnorm{ U(\lambda)AU(\lambda)^{*}}{k}^{\omega'} &=&  \frac{1}{2} \gnorm{Q_k U(j)A U(j)^{*} e^{-\omega'(H/\mu)}Q_k}{} + \frac{1}{2} \gnorm{Q_k e^{-\omega'(H/\mu)} U(j) A U(j)^{*} Q_k}{}\nonumber\\
&=& \frac{1}{2} \gnorm{U(j) Q_k A  e^{-\omega'(H'/\mu)} Q_k U(j)^{*}}{} + \frac{1}{2} \gnorm{U(j) Q_k e^{-\omega'(H'/\mu)}  A  Q_k U(j)^{*}}{}\nonumber\\
&=& \frac{1}{2} \gnorm{ Q_k A  e^{-\omega'(H'/\mu)} Q_k }{} + \frac{1}{2} \gnorm{ Q_k e^{-\omega'(H'/\mu)}  A  Q_k }{}.\label{UlambdaA}
\end{eqnarray}
where $H':= U(0,\lambda) H U(0,\lambda)^\ast$. By recalling that the boost and the Hamiltonian act on functions $f \in L^{2}(\rbb^{2})$ as: $(U(\lambda)f)(\theta)=f(\theta +\lambda)$ and $(H f)(\theta)=\mu\cosh\theta f(\theta) = p_0(\theta) f(\theta)$, we have:
\begin{eqnarray}
(U(\lambda) (H f))(\theta)=(H  f)(\theta +\lambda) &=& \mu \cosh(\theta +\lambda) f(\theta + \lambda)\nonumber\\
&=& (\mu \cosh\theta \cosh \lambda +\mu \sinh\theta \sinh \lambda)f(\theta + \lambda) \nonumber\\
& =& (E(\theta)\cosh \lambda + p_1 \sinh \lambda)f(\theta + \lambda).
\end{eqnarray}
Now, we want to compare $\exp(-\omega(H/\mu))$ with $\exp(-\omega'(H'/\mu))$. Let $c>0$ be such that $H' \geq c H$. Since $[H',cH]=0$, we can proceed from $H' \geq c H$ to $\exp(-\omega'(H'/\mu))\leq \exp(-\omega'(c H))$. Defining $\omega'(p)=\omega(p/c)$, we have that $\exp(-\omega'(c H))=\exp(-\omega(H/\mu))$. Hence, we find that $\exp(-\omega'(H'/\mu))\leq \exp(-\omega(H/\mu))$. By \eqref{UlambdaA},\\ we have $\gnorm{U(\lambda)AU(\lambda)^{*}}{k}^{\omega'} \leq \gnorm{A}{k}^{\omega}$ with $\omega'(p) = \omega(p/c)$. This implies that the adjoint action of the Lorentz boosts maps $\qf^\omega$ into $\qf^{\omega'}$ with $\omega'(p) = \omega(cp)$. So, we could in principle modify the definition of $\qf^\omega$ by requiring that the norm $\gnorm{A}{k}^{\omega(\beta \cdotarg)} < \infty$ for \emph{some} $\beta$ (depending on $A$), so that $\qf^\omega$ is fully Poincar\'e invariant; but we remain here with the definition \eqref{eq:aomeganorm}, which is simpler.

\section{Generalized annihilation and creation operators} \label{sec:zgen}

Similar to the Fock representation of the CCR algebra, Lechner \cite{Lechner:2006} introduced a representation of the Zamolodchikov algebra using modified creation and annihilation operators $z,\zd$ on $\mathcal{H}$.
These operators are defined on $\fpn$ by
\begin{eqnarray}
(\zd(f)\Phi)_{n} &:=& \sqrt{n}P^{S}_{n}(f\otimes \Phi_{n-1}),\\
z(f)&:=& \zd(\overline{f})^{*}.
\end{eqnarray}
where $\Phi \in \fpn$, $f \in \mathcal{H}_{1}$.
It was shown in \cite{Lechner:2006} that these satisfy the relations of the Zamolodchikov algebra:
\begin{eqnarray}
\zd(\theta)\zd(\eta) &=& S(\theta-\eta)\zd(\eta)\zd(\theta),\nonumber\\
z(\theta)z(\eta) &=& S(\theta-\eta)z(\eta)z(\theta),\nonumber\\
z(\theta)\zd(\eta) &=& S(\eta-\theta)\zd(\eta)z(\theta) + \delta(\theta-\eta)\cdot 1_{\mathcal{H}}.\label{zamoloalgebra}
\end{eqnarray}
The $\zd(\theta),z(\eta)$ are distributions, or can also be seen as quadratic forms on the domain $(\fpn \cap \dcal(\rbb^{k})) \times (\fpn \cap \dcal(\rbb^{k}))$; when smeared with test functions $f\in \scal(\rbb)$, $\zd(f)$, $z(f)$ are unbounded operators on $\fpn$, but their unboundedness is  related to the particle number, as we can see from the following computation of their norms (setting $\omega=0$, the following Lemma holds still true):
\begin{lemma}
In generalization of \cite[Eq.~(3.14)]{Lechner:2008}, we have for $\ell \in \nbb_0$ and $f \in \scal(\rbb)$,
\begin{equation}\label{omegaz}
 \gnorm{ e^{\omega(H/\mu)} \zd(f) e^{-\omega(H/\mu)} Q_\ell  }{}
  \leq \sqrt{\ell+1} \onorm{f}{2},
\quad
 \gnorm{ e^{\omega(H/\mu)} z(f) e^{-\omega(H/\mu)} Q_\ell  }{}
  \leq \sqrt{\ell} \gnorm{f}{2},
\end{equation}
if the right-hand side is finite.
\end{lemma}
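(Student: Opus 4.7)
The plan is to work sector by sector: for an $n$-particle input, write down the explicit kernel of $\zd(f)\Phi_n$ and $z(f)\Phi_n$, conjugate with $e^{\mp\omega(H/\mu)}$, and estimate directly, using the one crucial observation that multiplication by $e^{\omega(H/\mu)}$ acts as multiplication by the symmetric function $\exp\omega\bigl(\sum_j\cosh\theta_j\bigr)$, which therefore commutes with the $S$-symmetrizer $P^S_{n+1}$ (since $\sum_j \cosh\theta_j$ is a genuinely symmetric function and hence $M_{\exp\omega(\cdot)}$ commutes with every $D_{n+1}(\sigma)$).

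For the creator, starting from $(\zd(f)\Phi)_{n+1} = \sqrt{n+1}\,P^S_{n+1}(f\otimes \Phi_n)$, I would compute
\begin{equation*}
\bigl(e^{\omega(H/\mu)}\zd(f)e^{-\omega(H/\mu)}\Phi_n\bigr)_{n+1}
 = \sqrt{n+1}\,P^S_{n+1}\!\bigl(h_n\bigr),
\end{equation*}
where $h_n(\theta_0,\ldots,\theta_n) = e^{\omega(\cosh\theta_0+E(\theta_1,\ldots,\theta_n))-\omega(E(\theta_1,\ldots,\theta_n))} f(\theta_0)\Phi_n(\theta_1,\ldots,\theta_n)$. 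Subadditivity \ref{it:sublinear} gives $|h_n(\thetav)|\le e^{\omega(\cosh\theta_0)}|f(\theta_0)||\Phi_n(\theta_1,\ldots,\theta_n)|$, so Fubini yields $\|h_n\|_{L^2}\le \onorm{f}{2}\|\Phi_n\|$; since $P^S_{n+1}$ is an orthogonal projection, the same bound survives. Summing orthogonal particle-number components of $\Phi\in Q_\ell\hcal$ and using $n+1\le \ell+1$ gives the first estimate in \eqref{omegaz}.

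For the annihilator, I would use that $(z(f)\Phi)_{n-1}(\theta_1,\ldots,\theta_{n-1}) = \sqrt{n}\int d\eta\,f(\eta)\,\Phi_n(\eta,\theta_1,\ldots,\theta_{n-1})$ (derived from $z(f)=\zd(\bar f)^*$ and the fact that the $n$-particle vectors are $S$-symmetric, so $P^S$ may be dropped inside the pairing). After conjugation with the exponentials, the relevant kernel acquires a factor
\begin{equation*}
  e^{\omega(E(\theta_1,\ldots,\theta_{n-1}))-\omega(\cosh\eta+E(\theta_1,\ldots,\theta_{n-1}))}\le 1
\end{equation*}
by monotonicity \ref{it:omegamonoton}. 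Cauchy--Schwarz in $\eta$ yields $|(\cdot)_{n-1}(\thetav')|^2\le n\,\|f\|_2^2\int d\eta |\Phi_n(\eta,\thetav')|^2$, and integration in $\thetav'$ gives $\sqrt{n}\,\|f\|_2\|\Phi_n\|$. Summing over $n\le\ell$ produces the second bound, with only the ordinary $L^2$-norm of $f$.

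The conceptually delicate step — and the only point where one might slip — is justifying that $e^{\omega(H/\mu)}$ commutes with $P^S_{n+1}$, because the $D_{n+1}(\sigma)$ involve the non-symmetric factors $S^\sigma(\thetav)$; what saves us is that the multiplication operator is by a permutation-invariant function of the rapidities, so it commutes with each $D_{n+1}(\sigma)$ (the $S^\sigma$ factor passes through unchanged) and hence with $P^S_{n+1}$. Beyond this, the asymmetry between the two bounds — $\onorm{f}{2}$ for $\zd(f)$ versus $\gnorm{f}{2}$ for $z(f)$ — reflects precisely the physics: creation raises the energy by $\cosh\theta_0$, which must be absorbed by the $\omega$-weight in $f$, whereas annihilation lowers the energy and monotonicity of $\omega$ makes the conjugation factor harmless.
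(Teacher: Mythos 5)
Your proof is correct and follows essentially the same route as the paper's: reduce to a fixed particle-number sector by Pythagoras, write the explicit kernel, use sublinearity of $\omega$ for the creator and monotonicity for the annihilator, apply Cauchy--Schwarz, and use that $\operatorname{Sym}_S$ (resp.\ $P^S_{n+1}$) is an $L^2$-contraction. Your additional remark justifying that $e^{\omega(H/\mu)}$ commutes with $P^S_{n+1}$ because the energy weight is a permutation-invariant multiplication operator is a point the paper uses implicitly, and it is good that you made it explicit.
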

\begin{proof}
Formula \eqref{omegaz} (left equation) is equivalent to say that for every $\psi \in Q_{\ell}\mathcal{H}^{\omega}$, we have
\begin{equation}
|| e^{\omega(H/\mu)}z^{\dagger}(f)\psi ||\leq ||f||_{2}^{\omega}\sqrt{\ell +1} ||e^{\omega(H/\mu)}\psi||.
\end{equation}
Due to Pythagoras it suffices to prove this for $\psi \in \mathcal{H}^{\omega}_{\ell}$.

We have
\begin{equation}
\Big( e^{\omega(H/\mu)}z^{\dagger}(f)\psi\Big)(\thetav)= \sqrt{\ell +1}\operatorname{Sym}_{S,\thetav}\Big(e^{\omega(E(\thetav))}f(\theta_{1})\psi(\theta_{2},\ldots,\theta_{\ell +1})  \Big).
\end{equation}

By application of Cauchy-Schwarz, we have
\begin{multline}
||e^{\omega(H/\mu)}z^{\dagger}(f)\psi||^{2}\leq (\ell +1) \int e^{2\omega(E(\thetav))} |f(\theta_{1})|^{2} |\psi(\hat{\thetav})|^{2}\; d\theta_{1} d\hat{\thetav}\\
\leq (\ell +1) (||f||_{2}^{\omega})^{2} ||e^{\omega(E(\hat{\thetav}))}\psi(\hat{\thetav})||^{2}_{2}\\
 = (\ell +1) (||f||^{\omega}_{2})^{2} ||e^{\omega(H/\mu)}\psi||^{2}.
\end{multline}
where in the second inequality we made use of the sublinearity of $\omega$  \ref{it:sublinear}: $e^{2\omega(E(\thetav))}=e^{2\omega(E(\theta_{1}) +E(\hat{\thetav}))}\leq e^{2\omega(E(\theta_{1}))}e^{2\omega(E(\hat{\thetav}))}$.

Analogously, for $z$ we have
\begin{equation}
\Big( e^{\omega(H/\mu)}z(f)\psi\Big)(\thetav)= \sqrt{\ell }\int d\theta_{1}\;\operatorname{Sym}_{S,\hat{\thetav}}\Big(e^{\omega(E(\hat\thetav))}f(\theta_{1})\psi(\theta_{1},\ldots,\theta_{\ell })  \Big).
\end{equation}
Using Cauchy-Schwarz and the monotonicity of $\omega$ \ref{it:omegamonoton}, we have
\begin{equation}
||e^{\omega(H/\mu)}z(f)\psi||^{2}\leq \ell  \int e^{2\omega(E(\thetav))} |f(\theta_{1})|^{2} |\psi(\thetav)|^{2}\; d\thetav\\
\leq \ell  \int  e^{2\omega(E(\thetav))}|f(\theta_{1})|^{2} |\psi(\thetav)|^{2}\; d\thetav.
\end{equation}
Applying again Cauchy-Schwarz in the variable $\theta_{1}$, we find
\begin{eqnarray}
||e^{\omega(H/\mu)}z(f)\psi||^{2}&\leq& \ell (\gnorm{f}{2})^2 ||e^{\omega(E(\thetav))}\psi(\thetav)||^{2}_{2}\nonumber\\
&=&\ell  (\gnorm{f}{2})^2 ||e^{\omega(H/\mu)}\psi||^{2}.
\end{eqnarray}
\end{proof}

We want to define an extension of $z,\zd$ to normal-ordered products of these annihilators and creators, which are multilinear operators in a suitable class of ``smearing functions''. Formally this is given by
\begin{equation}
  z^{\dagger m} z^n(f) = \int d^m \theta\, d^n \eta f(\thetav,\etav) \underbrace{\zd(\theta_1)\ldots \zd(\theta_m)z(\eta_1)\ldots z(\eta_n)}_{ =: z^{\dagger m} (\thetav) z^n (\etav)}.
\end{equation}
This is given by the definitions of $z,\zd$ above if $f$ is ``factorizable'', namely if it is of the form $f(\thetav,\etav)=f_1(\theta_1)\ldots f_{m+n}(\eta_n)$, or is a linear combination of such functions. Lechner in \cite[Lemma 4.1.2]{Lechner:2006} extended the definition to arbitrary $f \in L^2(\rbb^{m+n})$.
Actually, the class of ``smearing functions'' that we will need is even more general than this (see Prop.~\ref{pro:zzdcrossnorm}). To define such class we first introduce for a distribution $f\in \dcal(\rbb^{m+n})'$, the (possibly infinite) norms
\begin{align}\label{eq:crossnorm}
\gnorm{f}{m \times n} &:= \sup \Big\{ \big\lvert \! \int f(\thetav,\etav) g(\thetav) h(\etav)  d^m\theta d^n\eta \,\big\rvert : \\
\notag
&\qquad\qquad g \in \dcal(\rbb^m), \, h \in \dcal(\rbb^n), \,  \gnorm{g}{2} \leq 1, \, \gnorm{h}{2} \leq 1\Big\},
\\
\onorm{f}{m \times n} &:= \frac{1}{2} \gnorm{e^{-\omega(E(\thetav))}f(\thetav,\etav)}{m \times n}
+ \frac{1}{2} \gnorm{f(\thetav,\etav)e^{-\omega(E(\etav))}}{m \times n}.\label{eq:crossnorm2}
\end{align}
We also consider
\begin{equation}\label{eq:fullcrossnorm}
  \gnorm{f}{\times} := \sup\big\{ \Big\lvert  \int d^k\thetav f(\thetav) g_1(\theta_1)\cdots g_k(\theta_k) \Big\rvert : g_1,\ldots,g_k \in \dcal(\rbb), \, \gnorm{g_j}{2} \leq 1 \big\}.
\end{equation}
We note that these norms fulfil some properties. First, we have the following Lemma.
\begin{lemma}
If $f_L\in \ccal^\infty(\rbb^m)$, $f_R\in \ccal^\infty(\rbb^n)$, $f_1,\ldots,f_k\in\ccal^\infty(\rbb)$ are bounded, then
\begin{align} \label{eq:mnnormfactor}
  \onorm{f_L(\thetav)f(\thetav,\etav)f_R(\etav)}{m \times n} &\leq \gnorm{f_L}{\infty}  \onorm{f}{m \times n} \gnorm{f_R}{\infty},
\\ \label{eq:crossnormfactor}
  \gnorm{f(\thetav) \prod_j f_j(\theta_j)}{\times} &\leq \gnorm{f}{\times} \prod_j \gnorm{f_j}{\infty}.
\end{align}
\end{lemma}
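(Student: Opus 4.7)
The plan is to reduce both estimates directly to the defining supremum by a simple change of test functions. Both bounds express that tensoring with bounded factors cannot increase the crossnorm more than by a product of sup-norms.

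For the first inequality, I begin with the unweighted crossnorm $\gnorm{f_L f f_R}{m\times n}$. Given arbitrary test functions $g\in\dcal(\rbb^m)$, $h\in\dcal(\rbb^n)$ with $\gnorm{g}{2},\gnorm{h}{2}\leq 1$, I set
\begin{equation*}
   g'(\thetav) := \frac{f_L(\thetav)g(\thetav)}{\gnorm{f_L}{\infty}}, \quad h'(\etav) := \frac{f_R(\etav)h(\etav)}{\gnorm{f_R}{\infty}}.
\end{equation*}
These are again in $\dcal(\rbb^m)$, respectively $\dcal(\rbb^n)$ (compact support is preserved), with $\gnorm{g'}{2}\leq 1$ and $\gnorm{h'}{2}\leq 1$. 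Substituting into the pairing gives
\begin{equation*}
   \Big|\int f_L(\thetav)f(\thetav,\etav)f_R(\etav) g(\thetav)h(\etav)\,d^m\theta\,d^n\eta\Big| = \gnorm{f_L}{\infty}\gnorm{f_R}{\infty}\Big|\int f(\thetav,\etav)g'(\thetav)h'(\etav)\,d^m\theta\,d^n\eta\Big|,
\end{equation*}
which is bounded by $\gnorm{f_L}{\infty}\gnorm{f_R}{\infty}\gnorm{f}{m\times n}$. Taking the supremum over admissible $g,h$ yields the plain crossnorm estimate $\gnorm{f_L f f_R}{m\times n}\leq \gnorm{f_L}{\infty}\gnorm{f}{m\times n}\gnorm{f_R}{\infty}$.

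To deduce \eqref{eq:mnnormfactor}, I apply the plain estimate to each of the two terms in the definition \eqref{eq:crossnorm2}. The key observation is that the weight $e^{-\omega(E(\thetav))}$ depends only on $\thetav$ and therefore commutes (as a multiplication operator) with $f_R(\etav)$, while $e^{-\omega(E(\etav))}$ commutes with $f_L(\thetav)$. So in the first term I regroup as $f_L(\thetav)\,[e^{-\omega(E(\thetav))}f(\thetav,\etav)]\,f_R(\etav)$ and apply the plain estimate with $f$ replaced by $e^{-\omega(E(\thetav))}f(\thetav,\etav)$, and analogously in the second term; adding the two halves gives the claim.

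Inequality \eqref{eq:crossnormfactor} is obtained by exactly the same rescaling trick, now applied in each of the $k$ variables separately: given test functions $g_1,\ldots,g_k$ with $\gnorm{g_j}{2}\leq 1$, set $g_j' := f_j g_j/\gnorm{f_j}{\infty}$, which lies in $\dcal(\rbb)$ with $\gnorm{g_j'}{2}\leq 1$, and observe that
\begin{equation*}
   \Big|\int f(\thetav)\prod_j f_j(\theta_j)g_j(\theta_j)\,d^k\theta\Big| = \Big(\prod_j \gnorm{f_j}{\infty}\Big)\Big|\int f(\thetav)\prod_j g_j'(\theta_j)\,d^k\theta\Big| \leq \Big(\prod_j \gnorm{f_j}{\infty}\Big)\gnorm{f}{\times}.
\end{equation*}
Taking the supremum over $g_1,\ldots,g_k$ gives \eqref{eq:crossnormfactor}. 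There is no real obstacle here; the only point worth emphasizing is that the bounded factors $f_L,f_R,f_j$ being smooth ensures that the redefined test functions still lie in the spaces $\dcal(\rbb^m)$, $\dcal(\rbb^n)$, $\dcal(\rbb)$ over which the suprema are taken.
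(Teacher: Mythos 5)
Your proof is correct and follows essentially the same route as the paper: absorb the bounded factors $f_L,f_R,f_j$ into the test functions, observe the rescaled test functions still lie in the same spaces with $L^2$-norm controlled by $\gnorm{f_L}{\infty}$, etc., and note that each weight $e^{-\omega(E(\thetav))}$ or $e^{-\omega(E(\etav))}$ depends on the variables it is grouped with. The only cosmetic difference is that you normalize $g'$ by dividing by $\gnorm{f_L}{\infty}$ up front, whereas the paper multiplies and divides by $\gnorm{g'}{2}$; both hinge on the same absorption idea.
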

\begin{proof}
This can be proved by absorbing $f_L, f_R, f_j$ into the test functions $g,h,g_j$, respectively. Let us prove \eqref{eq:mnnormfactor} first.
Applying definition \eqref{eq:crossnorm2}, we find
\begin{equation}
\onorm{f_{L}f f_{R}}{m \times n} = \frac{1}{2} \gnorm{e^{-\omega(E(\thetav))}f_{L}(\thetav)f(\thetav,\etav)f_{R}(\etav)}{m \times n}
+ \frac{1}{2} \gnorm{f_{L}(\thetav)f(\thetav,\etav)f_{R}(\etav)e^{-\omega(E(\etav))}}{m \times n}.\label{proofeq:mnnormfactor}
\end{equation}
We consider the first norm on the right hand side of the above equation. By \eqref{eq:crossnorm}, we have:
\begin{multline}
\gnorm{e^{-\omega(E(\thetav))}f_{L}(\thetav)f(\thetav,\etav)f_{R}(\etav)}{m \times n}\\
 =\sup_{\substack{ \gnorm{g}{2} \leq 1\\ \gnorm{h}{2} \leq 1}} \frac{\big\lvert \! \int e^{-\omega(E(\thetav))}f_{L}(\thetav)f(\thetav,\etav)f_{R}(\etav) g(\thetav) h(\etav)  d^m\theta d^n\eta \,\big\rvert}{\gnorm{g}{2}\gnorm{h}{2}}.
\end{multline}
We call $g'(\thetav):=f_L(\thetav) g(\thetav)$ and $h'(\etav)= f_R(\etav) h(\etav)$.  By multiplying the above equation with $\gnorm{g'}{2}/\gnorm{g'}{2}$ and taking $\gnorm{g'}{2}\leq \gnorm{f_L}{\infty}\gnorm{g}{2}$ into account, we find:
\begin{multline}
\gnorm{e^{-\omega(E(\thetav))}f_{L}(\thetav)f(\thetav,\etav)f_{R}(\etav)}{m \times n} \\
\leq \gnorm{f_L}{\infty}\gnorm{f_R}{\infty}\cdot \sup_{\substack{ \gnorm{g}{2} \leq 1\\ \gnorm{h}{2} \leq 1}} \frac{\big\lvert \! \int f(\thetav,\etav) g'(\thetav) h'(\etav)  d^m\theta d^n\eta \,\big\rvert}{\gnorm{g'}{2}\gnorm{h'}{2}}\\
= \gnorm{f_L}{\infty}\gnorm{f_R}{\infty}\cdot \gnorm{e^{-\omega(E(\thetav))}f(\thetav,\etav)}{m \times n}.
\end{multline}
We can apply the same argument to the second norm on the right hand side of \eqref{proofeq:mnnormfactor}; hence, we find \eqref{eq:mnnormfactor}.

By a similar method one can prove \eqref{eq:crossnormfactor}.
\end{proof}
Another property is that if $g \in \dcal(\rbb^m)$, $g' \in \dcal(\rbb^{m'})$, and if $g \cdot g' \in \dcal(\rbb^{m+m'})$ is the product of $g,g'$ in \emph{independent} variables, then $\gnorm{g\cdot g'}{2} = \gnorm{g}{2} \gnorm{g'}{2}$, and also $\onorm{g\cdot g'}{2} \geq \onorm{g}{2} \gnorm{g'}{2}$ due to monotonicity of $\omega$:
\begin{eqnarray}
\onorm{g\cdot g'}{2} &=& \gnorm{(\thetav ,\thetav')\mapsto e^{\omega(E(\thetav,\thetav'))}g(\thetav)g'(\thetav')}{2}\nonumber\\
&\geq & \gnorm{(\thetav ,\thetav')\mapsto e^{\omega(E(\thetav))}g(\thetav)g'(\thetav')}{2}\nonumber\\
&=& \onorm{g}{2} \gnorm{g'}{2}.\label{proofggponorm}
\end{eqnarray}
This gives the following Lemma:
\begin{lemma}
\begin{equation} \label{eq:mnnormproduct}
   \onorm{f \cdot f'}{(m+m')\times (n+n')} \leq \onorm{f}{m \times n} \gnorm{f'}{m' \times n'}.
\end{equation}
\end{lemma}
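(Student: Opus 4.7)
The plan is to reduce the estimate to a purely unweighted tensor-product identity. The key observation is that, by definition \eqref{eq:crossnorm}, $\gnorm{F}{m\times n}$ is exactly the operator norm of the kernel operator $T_F \colon L^2(\rbb^n)\to L^2(\rbb^m)$ given by $(T_F h)(\thetav) = \int F(\thetav,\etav) h(\etav)\,d\etav$. For a kernel of product form $F(\thetav,\etav)\cdot F'(\thetav',\etav')$ in independent variables, the associated operator on $L^2(\rbb^{n+n'}) \cong L^2(\rbb^n)\otimes L^2(\rbb^{n'})$ is the Hilbert-space tensor product $T_F\otimes T_{F'}$, so the standard identity $\|T\otimes T'\|=\|T\|\,\|T'\|$ gives
\[
  \gnorm{F\cdot F'}{(m+m')\times(n+n')} \leq \gnorm{F}{m\times n}\,\gnorm{F'}{m'\times n'}.
\]

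The main obstacle is that the exponential weights in \eqref{eq:crossnorm2} do not split across the two groups of variables. To handle the first summand of $\onorm{f\cdot f'}{(m+m')\times(n+n')}$, I would write
\[
  e^{-\omega(E(\thetav,\thetav'))} = \varphi(\thetav,\thetav')\,e^{-\omega(E(\thetav))}, \qquad \varphi(\thetav,\thetav') := e^{-\omega(E(\thetav)+E(\thetav'))+\omega(E(\thetav))}.
\]
Monotonicity of $\omega$ (\ref{it:omegamonoton}) together with $E(\thetav')\geq 0$ implies $0\leq \varphi \leq 1$ pointwise, and smoothness of $\omega$ gives $\varphi \in \ccal^\infty$. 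Hence for any $g \in \dcal(\rbb^{m+m'})$ with $\gnorm{g}{2}\leq 1$, the product $\varphi\, g$ again lies in $\dcal(\rbb^{m+m'})$ with $L^2$-norm at most $1$. Absorbing $\varphi$ into the test function $g$ in the supremum defining the $(m+m')\times(n+n')$-norm therefore yields
\[
  \gnorm{e^{-\omega(E(\thetav,\thetav'))} f\cdot f'}{(m+m')\times(n+n')} \leq \gnorm{e^{-\omega(E(\thetav))} f \cdot f'}{(m+m')\times(n+n')},
\]
and the tensor-product inequality above then bounds the right-hand side by $\gnorm{e^{-\omega(E(\thetav))}f}{m\times n}\cdot\gnorm{f'}{m'\times n'}$.

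I would treat the second summand of \eqref{eq:crossnorm2} by the mirror-image argument, absorbing $\varphi'(\etav,\etav') := e^{-\omega(E(\etav)+E(\etav'))+\omega(E(\etav))}$ into the test function $h$ instead, to obtain the companion bound in terms of $\gnorm{f(\thetav,\etav)e^{-\omega(E(\etav))}}{m\times n}\cdot\gnorm{f'}{m'\times n'}$. Summing the two estimates, dividing by $2$, and factoring out $\gnorm{f'}{m'\times n'}$ delivers exactly \eqref{eq:mnnormproduct}. The argument uses only monotonicity and smoothness of $\omega$, not sublinearity, and the inequality holds trivially when either side is infinite.
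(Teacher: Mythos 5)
Your proof is correct and follows essentially the same route as the paper: interpret $\gnorm{\cdotarg}{m\times n}$ as an operator norm, dominate the joint weight by the split weight via monotonicity of $\omega$, then invoke the Hilbert-space tensor-product factorization $\|T\otimes T'\|=\|T\|\,\|T'\|$ (this is exactly the role of \cite[Prop.~2.6.12]{KadRin:algebras1} in the paper). The one genuine improvement in your write-up is the order of the two steps: by first replacing $e^{-\omega(E(\thetav,\thetav'))}$ with $e^{-\omega(E(\thetav))}$ (absorbing the bounded smooth multiplier $\varphi$ into $g$), you only apply the tensor-product identity once the kernel is actually of product form, whereas the paper's exposition appears to restrict to factorizable test functions before the weight has been split, which is only justified for tensor-product kernels; your sequencing removes this ambiguity.
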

\begin{proof}
Applying definition \eqref{eq:crossnorm2}, we have
\begin{multline}
\onorm{f \cdot f'}{(m+m')\times (n+n')}
=\frac{1}{2} \gnorm{e^{-\omega(E(\thetav,\thetav'))}(f\cdot f')(\thetav,\thetav',\etav,\etav')}{(m+m')\times (n+n')} \\
+ \frac{1}{2} \gnorm{(f\cdot f')(\thetav,\thetav',\etav,\etav')e^{-\omega(E(\etav,\etav'))}}{(m+m')\times (n+n')}.\label{proofonormffp}
\end{multline}
We consider the first norm on the right hand side of the above equation. By \eqref{eq:crossnorm}, we have:
\begin{multline}
\gnorm{e^{-\omega(E(\thetav,\thetav'))}(f\cdot f')(\thetav,\thetav',\etav,\etav')}{(m+m')\times (n+n')}\\
 =\sup_{\substack{ \gnorm{g}{2} \leq 1\\ \gnorm{h}{2} \leq 1}} \frac{\big\lvert \! \int e^{-\omega(E(\thetav,\thetav'))}f(\thetav,\etav)f'(\thetav',\etav')g(\thetav,\thetav') h(\etav,\etav')  d^{m}\thetav d^{m'}\thetav' d^{n}\etav d^{n'}\etav' \,\big\rvert}{\gnorm{g}{2}\gnorm{h}{2}}.\label{proofgnormffp}
\end{multline}
Referring to \cite[Prop.~2.6.12]{KadRin:algebras1}, we can consider the special case $g(\thetav,\thetav')=g'(\thetav)\cdot g''(\thetav')$ (the same for $h$); Indeed, \cite[Prop.~2.6.12]{KadRin:algebras1} tells us that the supremum over the special functions which are factorizable equals the supremum over more general functions in $L^2$. Hence, using \eqref{proofggponorm}, we find
\begin{equation}
\begin{aligned}
\text{ r.h.s.}&\eqref{proofgnormffp} \\
& = \sup_{\substack{ \gnorm{g'\cdot g''}{2} \leq 1\\ \gnorm{h'\cdot h''}{2} \leq 1}} \frac{\big\lvert \! \int e^{-\omega(E(\thetav))}f(\thetav,\etav)f'(\thetav',\etav')g'(\thetav) g''(\thetav') h'(\etav) h''(\etav')  d^{m}\thetav d^{m'}\thetav' d^{n}\etav d^{n'}\etav' \,\big\rvert}{\gnorm{g'}{2}\gnorm{g''}{2}\gnorm{h'}{2}\gnorm{h''}{2}}\\
&=\sup_{\substack{ \gnorm{g'}{2} \neq 0\\ \gnorm{h'}{2} \neq 0}} \frac{\big\lvert \! \int e^{-\omega(E(\thetav))}f(\thetav,\etav)g'(\thetav) h'(\etav) d^{m}\thetav d^{n}\etav \,\big\rvert}{\gnorm{g'}{2}\gnorm{h'}{2}} \times \\
 &\quad\times \sup_{\substack{ \gnorm{g''}{2} \neq 0\\ \gnorm{h''}{2} \neq 0}} \frac{\big\lvert \! \int f'(\thetav',\etav')g''(\thetav') h''(\etav')  d^{m'}\thetav' d^{n'}\etav' \,\big\rvert}{\gnorm{g''}{2}\gnorm{h''}{2}}\\
&= \onorm{f}{m \times n} \gnorm{f'}{m' \times n'}.
\end{aligned}
\end{equation}
We can apply the same argument to the second norm on the right hand side of \eqref{proofonormffp}; hence, we find \eqref{eq:mnnormproduct}.
\end{proof}
Finally, we have the following Lemma:
\begin{lemma}
\begin{equation}\label{eq:crossnormcomparison}
   \gnorm{f(\thetav) \prod_{j=1}^{m+n}e^{-\omega(\cosh \theta_{j})} }{\times} \leq \gnorm{f}{m \times n}^{\omega} \leq \frac{1}{2}\Big(\gnorm{e^{-\omega(E(\thetav))}f(\thetav,\etav)}{2}+\gnorm{e^{-\omega(E(\etav))}f(\thetav,\etav)}{2}\Big),
\end{equation}
\end{lemma}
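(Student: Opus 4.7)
The statement bundles two separate inequalities, and the natural strategy is to treat them independently. The \textbf{left inequality} will exploit sublinearity (\ref{it:sublinear}) of $\omega$ to trade a product of single-variable weights $\prod_j e^{-\omega(\cosh\theta_j)}$ for a joint weight $e^{-\omega(E(\thetav))}$, and the \textbf{right inequality} is a direct Cauchy--Schwarz estimate comparing the $m\times n$ norm with the $L^{2}$ norm.

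For the left inequality, the plan is to fix single-variable test functions $g_1,\ldots,g_m,h_1,\ldots,h_n$ with $\gnorm{g_j}{2},\gnorm{h_k}{2}\leq 1$, set $G(\thetav):=\prod_j g_j(\theta_j)$ and $H(\etav):=\prod_k h_k(\eta_k)$ (whose $L^2$-norms are automatically $\leq 1$), and rewrite the integral defining $\gnorm{f\cdot\prod_j e^{-\omega(\cosh\theta_j)}}{\times}$ in the form
\begin{equation*}
 \int \bigl(e^{-\omega(E(\thetav))}f(\thetav,\etav)\bigr)\,\tilde G(\thetav)\,\tilde H(\etav)\,d\thetav\,d\etav,
\end{equation*}
with $\tilde G(\thetav):=\exp\bigl(\omega(E(\thetav))-\sum_j\omega(\cosh\theta_j)\bigr)G(\thetav)$ and $\tilde H(\etav):=\prod_k e^{-\omega(\cosh\eta_k)}h_k(\eta_k)$. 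Iterated sublinearity forces the exponential prefactor in $\tilde G$ to lie in $[0,1]$, whence $\gnorm{\tilde G}{2}\leq 1$, and $\gnorm{\tilde H}{2}\leq 1$ is immediate since $e^{-\omega(\cdot)}\leq 1$. Both $\tilde G$ and $\tilde H$ are smooth with compact support, so they are admissible in the $m\times n$ sup; taking the supremum therefore produces the bound by $\gnorm{e^{-\omega(E(\thetav))}f}{m\times n}$. The symmetric construction, absorbing the weights into the $\etav$-factor instead, yields the analogous bound by $\gnorm{f\,e^{-\omega(E(\etav))}}{m\times n}$, and averaging the two recovers $\onorm{f}{m\times n}$.

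For the right inequality I will simply invoke the elementary fact that $\gnorm{F}{m\times n}\leq\gnorm{F}{2}$ for any $F\in L^{2}(\rbb^{m+n})$: given $g\in\dcal(\rbb^m),h\in\dcal(\rbb^n)$ with $\gnorm{g}{2},\gnorm{h}{2}\leq 1$, Cauchy--Schwarz yields $\bigl|\int F(\thetav,\etav)g(\thetav)h(\etav)\,d\thetav\,d\etav\bigr|\leq \gnorm{F}{2}\gnorm{g\otimes h}{2}\leq\gnorm{F}{2}$. Applying this with $F=e^{-\omega(E(\thetav))}f$ and with $F=f\,e^{-\omega(E(\etav))}$ and averaging gives the stated upper bound on $\onorm{f}{m\times n}$. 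Neither step looks like a real obstacle; the only point worth checking is that the prefactor $\exp\bigl(\omega(E(\thetav))-\sum_j\omega(\cosh\theta_j)\bigr)$ indeed lies in $[0,1]$, which is just iterated (\ref{it:sublinear}).
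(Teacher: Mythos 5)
Your proof is correct and follows the same overall strategy as the paper's: sublinearity of $\omega$ for the left inequality, Cauchy--Schwarz for the right. The right inequality is handled identically in both. For the left inequality, however, your route is subtly different and actually \emph{tighter} than what the paper writes. The paper first replaces the weight inside the $\times$-norm via the pointwise comparison $\prod_j e^{-\omega(\cosh\theta_j)} \leq e^{-\omega(\sum_{j\leq m}\cosh\theta_j)}$ to get $\gnorm{f\prod_j e^{-\omega(\cosh\theta_j)}}{\times}\leq\tfrac12\gnorm{fe^{-\omega(\sum_{j\leq m})}}{\times}+\tfrac12\gnorm{fe^{-\omega(\sum_{j>m})}}{\times}$, and only afterwards enlarges the test-function class to pass from $\gnorm{\cdotarg}{\times}$ to $\gnorm{\cdotarg}{m\times n}$. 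That first step is delicate: the $\times$-norm is a supremum over \emph{tensor-product} test functions, and a pointwise bound $0\leq w_0\leq w_1$ on the weights does not by itself give $\gnorm{fw_0}{\times}\leq\gnorm{fw_1}{\times}$, because the ratio $w_0/w_1 = \exp\bigl(\omega(\sum_{j\leq m}\cosh\theta_j)-\sum_j\omega(\cosh\theta_j)\bigr)$ does \emph{not} factorize over the first $m$ variables and hence cannot be absorbed while staying inside the tensor-product class. Your argument avoids this entirely: by constructing $\tilde G\in\dcal(\rbb^m)$ (arbitrary, not a tensor product) and $\tilde H\in\dcal(\rbb^n)$ with unit $L^2$-norm, you absorb the non-factorizing part of the weight ratio exactly where the $m\times n$-norm allows it, and you land directly on $\gnorm{e^{-\omega(E(\thetav))}f}{m\times n}$ without the questionable intermediate $\times$-norm comparison. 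So your proposal is not only correct but gives a cleaner justification of this step than the source; both approaches reach the same conclusion, and the right inequality and the averaging argument are the same.
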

\begin{proof}
As for the left inequality in \eqref{eq:crossnormcomparison}, using the monotonicity and the sublinearity of $\omega$:
\begin{equation}
\prod_{j=1}^{m+n}e^{-\omega(\cosh\theta_{j})}\leq e^{-\omega(E(\thetav))}\leq e^{-\omega(\sum_{j=1}^{m}\cosh\theta_{j})}, \quad \prod_{j=1}^{m+n}e^{-\omega(\cosh\theta_{j})}\leq e^{-\omega(\sum_{j=m+1}^{m+n}\cosh\theta_{j})},
\end{equation}
we find:
\begin{equation}
 \gnorm{f(\thetav) \prod_{j=1}^{m+n}e^{-\omega(\cosh \theta_{j})} }{\times} \leq \frac{1}{2}\gnorm{f(\thetav)e^{-\omega(\sum_{j=1}^{m}\cosh \theta_{j})} }{\times}+\frac{1}{2}\gnorm{f(\thetav) e^{-\omega(\sum_{j=m+1}^{m+n}\cosh \theta_{j})} }{\times}.\label{proofleftineq}
\end{equation}
Since the factorizable functions are a special case of the larger set of $L^2$ functions, and the supremum over a larger set of functions is larger than the supremum over a smaller set of functions, we have from definition \eqref{eq:crossnorm}:
\begin{equation}
\text{ r.h.s. }\eqref{proofleftineq}\leq \frac{1}{2}\gnorm{f(\thetav)e^{-\omega(\sum_{j=1}^{m}\cosh \theta_{j})} }{m \times n}+ \frac{1}{2}\gnorm{f(\thetav)e^{-\omega(\sum_{j=m+1}^{m+n}\cosh \theta_{j})} }{m \times n}.
\end{equation}
The right hand side of the equation above is \eqref{eq:crossnorm2}. This implies the left inequality in \eqref{eq:crossnormcomparison}.

The right inequality in \eqref{eq:crossnormcomparison} is a consequence of the application of the Cauchy-Schwarz inequality to definition \eqref{eq:crossnorm} in the case $\omega=0$, and to definition \eqref{eq:crossnorm2} in the case $\omega \neq 0$:
\begin{multline}
\gnorm{f}{m \times n}^{\omega}
\leq \frac{1}{2}\sup_{\substack{ \gnorm{g}{2} \leq 1\\ \gnorm{h}{2} \leq 1}} \frac{\Big(\int d^m\theta d^n\eta\;| e^{-\omega(E(\thetav))}f(\thetav,\etav)|^{2}\Big)^{1/2} \gnorm{g}{2} \gnorm{h}{2}}{\gnorm{g}{2}\gnorm{h}{2}}\\
+\frac{1}{2}\sup_{\substack{ \gnorm{g}{2} \leq 1\\ \gnorm{h}{2} \leq 1}} \frac{\Big(\int d^m\theta d^n\eta\;| e^{-\omega(E(\etav))}f(\thetav,\etav)|^{2}\Big)^{1/2} \gnorm{g}{2} \gnorm{h}{2}}{\gnorm{g}{2}\gnorm{h}{2}}\\
=\frac{1}{2}\gnorm{e^{-\omega(E(\thetav))}f(\thetav,\etav)}{2} + \frac{1}{2}\gnorm{e^{-\omega(E(\etav))}f(\thetav,\etav)}{2}.
\end{multline}
\end{proof}
However, equality in \eqref{eq:crossnormcomparison} does in general not hold: As a counterexample in the case $\omega=0$, consider $f(\theta_1,\theta_2,\theta_3) = \delta(\theta_1-\theta_2)/(1+\theta_3^2) + \delta(\theta_1-\theta_3)/(1+\theta_2^2)$;
then $\gnorm{f}{\times}$ and $\onorm{f}{1 \times 2}$ are finite but $\onorm{f}{2 \times 1}$ and $\gnorm{f}{2}$ are infinite, as we can see from the following direct computation using the definitions \eqref{eq:fullcrossnorm}, \eqref{eq:crossnorm2}:

By \eqref{eq:fullcrossnorm}, we compute
\begin{equation}
\gnorm{f}{\times} = \sup\big\{ \Big\lvert  \int d^3\thetav f(\thetav) g_1(\theta_1)g_2(\theta_2)g_3(\theta_3) \Big\rvert : g_j \in \dcal(\rbb), \, \gnorm{g_j}{2} \leq 1 \big\}.
\end{equation}
The absolute value of $f$ integrated with the functions $g_j$ can be estimated using the Cauchy-Schwarz inequality:
\begin{multline}
\Big\lvert  \int d^3\thetav f(\thetav) g_1(\theta_1)g_2(\theta_2)g_3(\theta_3) \Big\rvert\\
\leq \prod_{j=1}^{3}\gnorm{g_j}{2}\Big(\gnorm{\theta_3 \rightarrow 1/(1+\theta_3^2)}{2} + \gnorm{\theta_2 \rightarrow 1/(1+\theta_2^2)}{2} \Big)<\infty.
\end{multline}
By definition \eqref{eq:crossnorm2} with $\omega=0$, we have:
\begin{equation}
\onorm{f}{1 \times 2} = \sup \Big\{ \big\lvert \! \int f(\thetav) g(\theta_1) h(\theta_2,\theta_3)  d^3\thetav \,\big\rvert : g \in \dcal(\rbb^1), \, h \in \dcal(\rbb^2), \,  \gnorm{g}{2} \leq 1, \, \gnorm{h}{2} \leq 1\Big\}.
\end{equation}
The absolute value of $f$ integrated with the functions $g,h$ can be estimated using the Cauchy-Schwarz inequality:
\begin{equation}
\big\lvert \! \int f(\thetav) g(\theta_1) h(\theta_2,\theta_3)  d^3\thetav \,\big\rvert
\leq \gnorm{g}{2}\gnorm{h}{2}\Big(\gnorm{\theta_3 \rightarrow 1/(1+\theta_3^2)}{2} + \gnorm{\theta_2 \rightarrow 1/(1+\theta_2^2)}{2} \Big)<\infty.
\end{equation}
Clearly $\gnorm{f}{2}$ is not finite due to the delta distributions. As for $\onorm{f}{2 \times 1}$ with $\omega=0$, we compute:
\begin{equation}
\onorm{f}{2 \times 1} = \sup \Big\{ \big\lvert \! \int f(\thetav) g(\theta_1,\theta_2) h(\theta_3)  d^3\thetav \,\big\rvert : g \in \dcal(\rbb^1), \, h \in \dcal(\rbb^2), \,  \gnorm{g}{2} \leq 1, \, \gnorm{h}{2} \leq 1\Big\}.
\end{equation}
The absolute value of $f$ integrated with the functions $g,h$ can be estimated using the Cauchy-Schwarz inequality:
\begin{multline}
\big\lvert \! \int f(\thetav) g(\theta_1,\theta_2) h(\theta_3)  d^3\thetav \,\big\rvert\\
\leq  \big\lvert \! \int d\theta_1 g(\theta_1,\theta_1) \,\big\rvert \cdot \gnorm{h}{2} \gnorm{\theta_3 \rightarrow 1/(1+\theta_3^2)}{2} + \gnorm{h}{2} \gnorm{g}{2} \gnorm{\theta_2 \rightarrow 1/(1+\theta_2^2)}{2}.
\end{multline}
which is infinite because $\sup \Big\{ \big\lvert \! \int g(\theta_1,\theta_1) d\theta_1 \,\big\rvert : g \in \dcal(\rbb^1), \,  \gnorm{g}{2} \leq 1\Big\}$ is not finite.

We now define the multilinear annihilation and creation operators $z^{\dagger m} z^n(f)$ as follows.
For an arbitrary distribution $f \in \dcal(\rbb^{m+n})'$ and with vectors $\psi \in \Hil_k \cap \dcal(\rbb^k)$, $\chi \in \Hil_l \cap \dcal(\rbb^l)$, we set:
\begin{equation}\label{eq:zzqf}
   \hscalar{\chi}{ z^{\dagger m} z^n(f) \psi} =
    \frac{\sqrt{k!(k-n+m)!}}{(k-n)!}
   \int d^{k-n}\lambda\, d^m \theta\, d^n \eta \, \bar \chi(\thetav,\lambdav) f(\thetav,\etav) \psi(\eta_n\ldots\eta_1,\lambdav)
\end{equation}
if $\ell=k-n+m$, and $=0$ otherwise. Because of the relation $(z(\eta)\Psi_{k})(\thetav)=\sqrt{k}\cdot \Psi_{k}(\eta,\thetav)$, this extends the previous definition of the annihilators and creators. Now the question is whether the quadratic form \eqref{eq:zzqf} can be extended to $\fpno \times \fpno$, or even to an (unbounded) operator on $\fpno$. A sufficient condition for that is $\gnorm{f}{m \times n}^\omega < \infty$, as the following proposition shows.

\begin{proposition} \label{pro:zzdcrossnorm}

If $f \in \dcal(\rbb^{m+n})'$ with $\|f\|^{\omega}_{m \times n} < \infty$, then $z^{\dagger m} z^n(f)$ extends to an operator on $\fpno$, and
\begin{equation}
\big\|  z^{\dagger m} z^n(f)e^{- \omega(H/\mu)} Q_k \big\| \leq 2 \frac{\sqrt{k!(k-n+m)!}}{(k-n)!}\|f\|^{\omega}_{m\times n}.\label{eq:cross}
\end{equation}
Moreover,
\begin{equation}
\gnorm{ z^{\dagger m} z^n(f) }{k}^\omega \leq 2 \frac{k!}{(k-\max(m,n))!}\|f\|^{\omega}_{m\times n}.\label{eq:zzdanorm}
\end{equation}
\end{proposition}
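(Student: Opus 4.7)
The strategy is to start from the explicit matrix-element formula \eqref{eq:zzqf} and, at fixed $\lambdav$, view the $(\thetav,\etav)$-integral as a pairing of $f$ against the slice functions $g(\thetav):=\bar\chi(\thetav,\lambdav)$ and $h(\etav):=\psi(\eta_n\ldots\eta_1,\lambdav)$. This is precisely what the $m\times n$ cross-norm \eqref{eq:crossnorm} is designed to control, and the outer integration over $\lambdav$ will be absorbed by a single Cauchy--Schwarz step that recovers $\|\chi\|\,\|\psi\|$.

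For \eqref{eq:cross} I would fix $\psi\in\Hil_j$ with $n\leq j\leq k$ and $\chi\in\Hil_{j-n+m}$. The factor $e^{-\omega(H/\mu)}$ acting on $\psi$ inserts $e^{-\omega(E(\etav,\lambdav))}$ into the integrand, and by monotonicity \ref{it:omegamonoton} this is dominated by the $\lambdav$-independent factor $e^{-\omega(E(\etav))}$, which I absorb into $f$. Applying \eqref{eq:crossnorm} at fixed $\lambdav$ to $F(\thetav,\etav):=f(\thetav,\etav)e^{-\omega(E(\etav))}$ gives the pointwise-in-$\lambdav$ bound $\gnorm{F}{m\times n}\,\gnorm{\chi(\cdot,\lambdav)}{2}\,\gnorm{\psi(\cdot,\lambdav)}{2}$, and Cauchy--Schwarz in $\lambdav$ promotes this to the global estimate $\gnorm{F}{m\times n}\,\|\chi\|\,\|\psi\|$. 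Since both summands in \eqref{eq:crossnorm2} are non-negative, $\gnorm{F}{m\times n}\leq 2\onorm{f}{m\times n}$, and the prefactor $\sqrt{j!(j-n+m)!}/(j-n)!$ is increasing in $j\geq n$, so its maximum on $\{j\leq k\}$ is attained at $j=k$, producing \eqref{eq:cross}.

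Bound \eqref{eq:zzdanorm} is then obtained by estimating both summands of \eqref{eq:aomeganorm}: the first, $\gnorm{Q_k z^{\dagger m}z^n(f) e^{-\omega(H/\mu)}Q_k}{}$, is handled exactly as above, while the second, with $e^{-\omega(H/\mu)}$ on the left, is symmetric --- moving $e^{-\omega(H/\mu)}$ onto $\chi$ and invoking monotonicity now extracts $e^{-\omega(E(\thetav))}$, and the cross-norm of $e^{-\omega(E(\thetav))}f$ is again $\leq 2\onorm{f}{m\times n}$ by \eqref{eq:crossnorm2}. The genuinely new input is the simplification of the combinatorial prefactor: the additional $Q_k$ on the outgoing side imposes $j+m-n\leq k$ in addition to $j\leq k$, and applying the AM--GM-type estimate $\sqrt{ab}\leq\max(a,b)$ I would split into cases. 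If $m\geq n$ then $\sqrt{j!(j-n+m)!}\leq(j+m-n)!$ and the admissible maximum is attained at $j=k-m+n$, giving $k!/(k-m)!$; if $m\leq n$ then $\sqrt{j!(j-n+m)!}\leq j!$ and the admissible maximum is attained at $j=k$, giving $k!/(k-n)!$. In both cases the denominator is $(k-\max(m,n))!$, yielding \eqref{eq:zzdanorm}. I expect the main obstacle to be precisely this combinatorial bookkeeping --- keeping track of the admissible particle-number range imposed by the $Q_k$ projections on both sides simultaneously, and verifying that the AM--GM step plus the case analysis really produces the compact coefficient $k!/(k-\max(m,n))!$ rather than a looser bound.
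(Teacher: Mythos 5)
Your proposal is essentially correct and follows the same route as the paper: read the matrix element \eqref{eq:zzqf} as a pairing against the slice functions $g=\bar\chi(\cdot,\lambdav)$ and $h=\psi(\cdot,\lambdav)$, use monotonicity \ref{it:omegamonoton} to extract the $\lambdav$-independent damping factor $e^{-\omega(E(\etav))}$ (resp.\ $e^{-\omega(E(\thetav))}$), apply the $m\times n$ cross-norm at fixed $\lambdav$, and finish with Cauchy--Schwarz in $\lambdav$; the passage from $P_k$ to $Q_k$ then rests on Pythagoras together with the monotonicity in $j$ of the prefactor, just as you note. Two cosmetic divergences: the paper handles the second summand of \eqref{eq:aomeganorm} via the adjoint relation $(z^{\dagger m}z^n(f))^\ast=z^{\dagger n}z^m(f^\ast)$ together with $\gnorm{f^\ast}{n\times m}=\gnorm{f}{m\times n}$, whereas you push $e^{-\omega(H/\mu)}$ onto $\chi$ and estimate it directly (both work); and your unified $\sqrt{j!(j-n+m)!}\leq\max\big(j!,(j-n+m)!\big)$ step (not really AM--GM, just $\sqrt{ab}\leq\max(a,b)$) is a tidier packaging of the paper's two-case computation, where for $n\ge m$ one bounds $\sqrt{(k-n+m)!}\leq\sqrt{k!}$ and for $m>n$ one inserts $Q_{k-m+n}$ to shrink the admissible sector range before reusing \eqref{eq:cross} — but both deliver the same $k!/(k-\max(m,n))!$.
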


\begin{proof}

For $\psi \in\mathcal{H}_{k} \cap \dcal(\rbb^k)$ and $\chi\in \mathcal{H}_{l}\cap \dcal(\rbb^l)$, with $l=k-m+n$, one has by the definition in \eqref{eq:zzqf},
\begin{equation}\label{eq:zzmatrixelm}
\begin{aligned}
\left| \langle  \chi| z^{\dagger m} z^n(f)|\psi \rangle\right|
& =\frac{\sqrt{k!(k-n+m)!}}{(k-n)!}\left| \int d\thetav d\etav d\lambdav\; \overline{\chi(\thetav,\lambdav)}\psi(\eta_n\ldots\eta_1,\lambdav)f(\thetav,\etav)\right|\\
& \leq \frac{\sqrt{k!(k-n+m)!}}{(k-n)!}\int d\lambdav \left| \int d\thetav d\etav\; \overline{\chi(\thetav,\lambdav)}\psi(\eta_n\ldots\eta_1,\lambdav)f(\thetav,\etav)\right|\\
& \leq  2 \frac{\sqrt{k!(k-n+m)!}}{(k-n)!} \gnorm{f}{m \times n}^{\omega} \gnorm{\chi}{2}\Big(\int d\lambdav\int d\etav\; |\psi(\etav,\lambdav)|^{2}e^{2\omega(E(\etav))}\Big)^{1/2}\\
& \leq 2 \frac{\sqrt{k!(k-n+m)!}}{(k-n)!}\onorm{f}{m \times n}  \gnorm{\chi}{2} \gnorm{e^{\omega(H/\mu)}\psi}{2}
\end{aligned}
\end{equation}
where in the third inequality we made use of the estimate
\begin{equation}
\left| \int d\thetav d\etav\; \overline{\chi(\thetav,\lambdav)}\psi(\eta_n\ldots\eta_1,\lambdav)f(\thetav,\etav)\right|\leq 2 \gnorm{f}{m \times n}^{\omega} \gnorm{\chi(\cdot,\lambdav)}{2} \gnorm{e^{\omega(E(\cdotarg))}\psi(\cdot,\lambdav)}{2}.\label{proofmulticreann}
\end{equation}
and of the Cauchy-Schwarz inequality, and in the fourth inequality we made use of the relation $e^{2\omega(E(\etav))}\leq e^{2\omega(E(\etav,\lambdav))}$.

Remark: The estimate \eqref{proofmulticreann} follows from \eqref{eq:crossnorm2}:
\begin{multline}
2\onorm{f}{m\times n}\geq \gnorm{e^{-\omega(E(\etav))}f(\thetav,\etav)}{m\times n}\\
=\sup_{\substack{ \gnorm{\chi(\cdot,\lambdav)}{2}\leq 1 \\ \gnorm{\psi(\cdot,\lambdav)}{2}\leq 1}} \frac{\big\lvert \! \int f(\thetav,\etav)e^{-\omega(E(\etav))}\overline{\chi(\thetav,\lambdav)}\psi(\eta_n,\ldots,\eta_1,\lambdav)   d^m\theta d^n\eta \,\big\rvert }{\gnorm{\chi(\cdot,\lambdav)}{2}\gnorm{\psi(\cdot,\lambdav)}{2}}\\
=\sup_{\substack{ \gnorm{\chi(\cdot,\lambdav)}{2}\leq 1 \\ \gnorm{e^{\omega(E(\cdot))}\psi'(\cdot,\lambdav)}{2} \leq 1}} \frac{\big\lvert \! \int f(\thetav,\etav)\overline{\chi(\thetav,\lambdav)}\psi'(\eta_n,\ldots,\eta_1,\lambdav)  d^m\theta d^n\eta \,\big\rvert }{\gnorm{\chi(\cdot,\lambdav)}{2}\gnorm{e^{\omega(E(\cdot))}\psi'(\cdot,\lambdav)}{2}}.\label{proofmulticreann2}
\end{multline}
by shifting the denominator in the last line of \eqref{proofmulticreann2} to the left hand side of the equation (here we called $\psi'(\eta_n,\ldots,\eta_1,\lambdav):=e^{-\omega(E(\etav))}\psi(\eta_n,\ldots,\eta_1,\lambdav)$).

Since $\psi$ and $\chi$ were chosen from dense sets in the corresponding  spaces, and since the matrix elements \eqref{eq:zzmatrixelm} vanish if $\ell \neq k-n+m$, we can extend $z^{\dagger m} z^n(f)$ to a bounded operator on $\hcal_k^\omega$ with norm
\begin{equation}\label{eq:pkbound}
\left\|z^{\dagger m} z^n(f)P_{k}e^{-\omega(H/\mu)}  \right\|\leq 2\frac{\sqrt{k!(k-n+m)!}}{(k-n)!}\gnorm{f}{m\times n}^{\omega}.
\end{equation}
This works for any $k$. For $k\neq k'$, the images of $z^{\dagger m} z^n(f)P_{k}$ and $z^{\dagger m} z^n(f)P_{k'}$ are orthogonal; hence \eqref{eq:cross} follows from \eqref{eq:pkbound} using Pythagoras' theorem. Explicitly: For $\Psi:=\sum_{j=1}^{k}\Psi_{j}$,
\begin{multline}
\left\|   \int d^{m}\theta d^{n}\eta f(\theta,\eta)z^{\dagger m}(\theta)z^{n}(\eta)e^{-\omega(H/\mu)}\Psi\right\|^{2}\\
=\sum_{j=1}^{k}\left\| \int d\theta d\eta f(\theta,\eta)z^{\dagger m}(\theta)z^{n}(\eta)e^{-\omega(H/\mu)}\Psi_{j}\right\|^{2}.\label{eq:pythagorasQP}
\end{multline}
Using \eqref{eq:pkbound}, we find:
\begin{equation}
\text{l.h.s.}\eqref{eq:pythagorasQP}\leq \sum_{j=1}^{k}\frac{j!(j-n+m)!}{((j-n)!)^{2}}(2\gnorm{f}{m \times n}^{\omega})^{2}||\Psi_{j}||_{2}^{2}.
\end{equation}
Hence, we have:
\begin{eqnarray}
\text{l.h.s.} \eqref{eq:pythagorasQP} &\leq& \frac{k!(k-n+m)!}{((k-n)!)^{2}}(2\gnorm{f}{m \times n}^{\omega})^{2}\sum_{j=1}^{k}||\Psi_{j}||_{2}^{2}\\
&=&\frac{k!(k-n+m)!}{((k-n)!)^{2}}(2\gnorm{f}{m \times n}^{\omega})^{2} ||\Psi||_{2}^{2}.\nonumber
\end{eqnarray}
by application of Pythagoras.
This implies:
\begin{equation}
\left\|  \int d^{m}\theta d^{n}\eta f(\theta,\eta)z^{\dagger m}(\theta)z^{n}(\eta)e^{-\omega(H/\mu)}Q_{k}\right\|^{2}\leq \frac{k!(k-n+m)!}{((k-n)!)^{2}}(2||f||_{m\times n}^{\omega})^{2}.\label{crossq}
\end{equation}
This concludes the proof of Eq.\eqref{eq:cross}.

To prove \eqref{eq:zzdanorm}, we compute for $n\geq m$, using \eqref{eq:cross},
\begin{eqnarray}\label{eq:zzdenergynorm}
 \gnorm{Q_k z^{\dagger m} z^n(f) e^{-\omega(H/\mu)} Q_k}{} &\leq&  2 \frac{\sqrt{k!}\sqrt{k!}}{(k-n)!}  \gnorm{f}{m\times n}^{\omega}\nonumber\\
 &=& 2\frac{k!}{(k-\max(m,n))!}\gnorm{f}{m\times n}^{\omega}.
\end{eqnarray}
Similarly, using \eqref{eq:cross}, for $m>n$ we have
\begin{eqnarray}
\text{l.h.s.\eqref{eq:zzdenergynorm}} &\leq&  \gnorm{Q_k z^{\dagger m} z^n(f) e^{-\omega(H/\mu)} Q_{k-m+n}}{}\nonumber\\
&\leq& 2\frac{\sqrt{(k-m+n)!}\sqrt{(k-m+n-n+m)!}}{(k-m+n-n)!}\gnorm{f}{m\times n}^{\omega}\nonumber\\
&\leq& 2\frac{k!}{(k-m)!}\gnorm{f}{m\times n}^{\omega}\nonumber\\
&=& 2\frac{k!}{(k-\max(m,n))!}\gnorm{f}{m\times n}^{\omega}.
\end{eqnarray}
Moreover, we note that, in the sense of quadratic forms, $(z^{\dagger m} z^n(f))^\ast = z^{\dagger n} z^m(f^\ast)$, where $f^\ast(\thetav,\etav) = \overline{f(\eta_n,\ldots,\eta_1,\theta_m,\ldots,\theta_1)}$, and where one finds $\gnorm{f^\ast}{n \times m} = \gnorm{f}{m \times n}$.
Another application of \eqref{eq:zzdenergynorm} then gives
\begin{equation}
 \gnorm{Q_k e^{-\omega(H/\mu)} z^{\dagger m} z^n(f)  Q_k}{}
 = \gnorm{Q_k z^{\dagger n} z^m(f^\ast) e^{-\omega(H/\mu)} Q_k}{}
\leq 2 \frac{k!}{(k-m)!}  \gnorm{f}{m\times n}^{\omega},
\end{equation}
and thus \eqref{eq:zzdanorm} is proven.
\end{proof}
Remark: The equality  $\gnorm{f^\ast}{n \times m} = \gnorm{f}{m \times n}$ follows from a short computation:
\begin{multline}
\gnorm{f^{*}}{n\times m}=\sup_{\substack{ \gnorm{g}{2}\leq 1 \\ \gnorm{h}{2}\leq 1}} \frac{\big\lvert \! \int f^{*}(\thetav,\etav) g(\etav) h(\thetav)  d^m\eta d^n\theta \,\big\rvert }{\gnorm{g}{2}\gnorm{h}{2}}\\
=\sup_{\substack{ \gnorm{g}{2}\leq 1 \\ \gnorm{h}{2}\leq 1}}  \frac{\big\lvert \! \int \overline{f(\eta_m,\ldots,\eta_1,\theta_n, \ldots ,\theta_1)} g(\etav) h(\thetav)  d^m\eta d^n\theta \,\big\rvert }{\gnorm{g}{2}\gnorm{h}{2}}\\
=\sup_{\substack{ \gnorm{g}{2}\leq 1 \\ \gnorm{h}{2}\leq 1}}  \frac{\big\lvert \! \overline{\int f(\thetav,\etav) \overline{g(\theta_m, \ldots ,\theta_1)} \overline{ h(\eta_n, \ldots ,\eta_1)}  d^m\theta d^n\eta }\,\big\rvert }{\gnorm{g}{2}\gnorm{h}{2}}\\
=\sup_{\substack{ \gnorm{g'}{2}\leq 1 \\ \gnorm{h'}{2}\leq 1}}  \frac{\big\lvert \! \int f(\thetav,\etav) g'(\thetav)  h'(\etav)  d^n\eta d^m\theta \,\big\rvert }{\gnorm{g'}{2}\gnorm{h'}{2}}\\
=\gnorm{f}{m \times n}.
\end{multline}
where in the fourth equality we called $g'(\thetav):=\overline{g(\theta_m, \ldots ,\theta_1)}$ (similar definition for $h'$) and we used that $\gnorm{\overline{g(\theta_m, \ldots ,\theta_1)}}{2}=\gnorm{g}{2}$ (same for $h$).

\section{Fields and local operators} \label{sec:fieldloc}

Following \cite{Lechner:2006}, and analogous to free field theory, we can define a quantum field $\phi$ as, $f\in \mathcal{S}(\mathbb{R}^{2})$,
\begin{equation}
\phi(f):= \zd(f^{+}) + z(f^{-}).
\end{equation}
As shown in \cite[Proposition 4.2.2]{Lechner:2006}, this field has similar mathematical properties to the free scalar field: It is defined on $\fpn$, and essentially selfadjoint for real-valued $f$. Moreover, $\phi$ has the Reeh-Schlieder property, transforms covariantly under the representation $U(x,\lambda)$ of $\mathcal{P}_{+}$, and it solves the Klein-Gordon equation.

However, $\phi$ is strictly local only if $S=1$. For generic $S$, the field is only localized in an infinitely extended wedge -- rather than at a space-time point -- in the following sense. Let us introduce the ``reflected'' Zamolodchikov operators, $\psi \in \mathcal{H}_{1}$,
\begin{equation}
z(\psi)':=Jz(\psi)J, \quad \zd(\psi)':= J \zd(\psi)J,
\end{equation}
and define another field $\phi'$ as, $f\in \mathcal{S}(\mathbb{R}^{2})$,
\begin{equation}
\phi'(f):=J\phi(f^{j})J,\quad f^{j}(x):=\overline{f(-x)}.
\end{equation}
It has been shown in \cite[Proposition 4.2.6]{Lechner:2006} that the two fields $\phi, \phi'$ are relatively \emph{wedge-local:} For real-valued test functions $f,g$ with $\supp f \subset \mathcal{W}'$ and $\supp g \subset \mathcal{W}$, one finds that $[e^{i\phi(f)^-},e^{i\phi'(g)^-}]=0$. Hence, we can understand $\phi(x)$ and $\phi'(y)$ as being localized in the shifted left wedge $\mathcal{W}'_{x}$ and in the shifted right wedge $\mathcal{W}_{y}$, respectively.

This result is obtained by computing the commutation relations of $z,z^\dagger$ with $z', z^{\dagger \prime}$ \cite[Lemma 4.2.5]{Lechner:2006}: Let $g\in \mathcal{H}_{1}$. The following holds in the sense of operator-valued distributions on $\fpn$:
\begin{align}
 [z(\overline{g})',z^{\dagger}(\theta)]&=B^{g,\theta},
 &
[z^{\dagger}(\overline{g})',z(\theta)]&=-(B^{\bar g,\theta})^{*} \label{comzpz}
\\
[z(g)',z(\theta)]&=0, & [z^{\dagger}(g)',z^{\dagger}(\theta)]&=0,\label{zzp}
\end{align}
where $B^{g,\theta}=\oplus_{n=0}^{\infty}B_{n}^{g,\theta}$ and $B_{n}^{g,\theta}$ acts on the $n$-particle Hilbert space as a multiplication operator:
\begin{equation}\label{multop}
B_{n}^{g,\theta}(\theta_{1},\ldots,\theta_{n})=g(\theta)\prod_{j=1}^{n}S(\theta-\theta_{j}).
\end{equation}
Instead of working with unbounded (closed) operators, we can also work with associated von Neumann algebras: We define the ``\emph{wedge algebra}'' as:
\begin{equation}
\M = \{e^{i\phi(f)^-} \,|\, f \in \scal_\rbb(\rbb^2), \, \supp f \subset \wcal' \}'.
\end{equation}
Remark: we can restrict this definition to smaller sets $f\in \mathcal{D}_{\mathbb{R}}(\mathcal{W}')$, or even to $f\in \mathcal{D}_{\mathbb{R}}^{\omega}(\mathcal{W}')$. This does not change $\M$ since $\mathcal{D}_{\mathbb{R}}^{\omega}(\mathcal{W}')$ is dense in $\mathcal{D}_{\mathbb{R}}(\mathcal{W}')$ in the $\mathcal{D}$-topology (see \cite{Bjoerck:1965}) and $\mathcal{D}_{\mathbb{R}}(\mathcal{W}')$ is dense in $ \scal_\rbb(\rbb^2)$ with respect to test functions with support in $\wcal'$. Moreover, the set of operators $e^{i\phi(f)}$ with $f$ in these restricted domains is dense in $\M$ because the map $f\mapsto e^{i\phi(f)}$ is continuous in the strong operator topology (see for example \cite{ReedSimon:1975-2}).

We can extend this definition to define algebras associated with any wedge in $\rbb^2$: As shown in \cite[Proposition 4.4.1]{Lechner:2006}, the triple $(\M,U(x),\mathcal{H})$ satisfies the defining properties of a standard right wedge algebra in the sense of \cite[Definition 2.1.1]{Lechner:2006} and the associated map $\mathcal{W} \mapsto \mathcal{A}(\mathcal{W})$ (where here $\mathcal{W}$ is a generic wedge) is a local net of von Neumann algebras with the properties in \cite[Proposition 4.4.1]{Lechner:2006}.

We can extend this definition to bounded regions by taking intersections of wedge algebras. Namely, the local algebra of a \emph{double cone} $\mathcal{O}_{x,y}=\mathcal{W}_{x}\cap \mathcal{W}_{y}'$, $x,y \in \mathbb{R}^{2}$, $y-x\in \mathcal{W}$, is defined as
\begin{equation}
\mathcal{A}(\mathcal{O}_{x,y}):=\mathcal{A}(\mathcal{W}_{x})\cap \mathcal{A}(\mathcal{W}_{y})'.
\end{equation}
It has been shown in \cite{Lechner:2006} that $\mathcal{O}\mapsto \mathcal{A}(\mathcal{O})$, where
\begin{equation}
\mathcal{A}(\mathcal{O}):= \Big( \bigcup_{\mathcal{O}_{x,y}\subset \mathcal{O}}\mathcal{A}(\mathcal{O}_{x,y}) \Big)'',
\end{equation}
is a covariant, local net of von Neumann algebras fulfilling the standard axioms of a local quantum field theory. Here it is not \emph{a priori} clear that the algebras $\A(\ocal)$ are nontrivial, i.e., that they contain any operator except for multiples of the identity. However, Lechner proved \cite{Lechner:2008} that at least for regions $\ocal$ of a certain minimum size,  the vacuum vector $\Omega$ is indeed cyclic for $\A(\ocal)$, of which it follows that the algebras are type $\mathrm{III}_1$ factors \cite{BuchholzLechner:2004}.

\chapter{The Araki expansion} \label{sec:araki}

We consider quantum field theory on 1+1 dimensional Minkowski space. We know that in the case of a real scalar free field, any operator $A$ on Fock space can be decomposed as
\begin{equation}\label{eq:expansionfree2}
  A = \sum_{m,n=0}^\infty \int \frac{d^m \theta \, d^n \eta}{m!n!} f_{m,n}(\thetav,\etav) a^{\dagger}(\theta_1)\cdots a^\dagger(\theta_m) a(\eta_1)\cdots a(\eta_n),
\end{equation}
where $\theta_j$, $\eta_j$ are rapidities and where the (generalized) functions $f_{m,n}$ can be written down explicitly in terms of a string of nested commutators:
\begin{equation}
 f_{m,n}(\thetav,\etav) = \bighscalar{\Omega}{ [a(\theta_1),[ \ldots a(\theta_m), [\ldots [A, \, a^\dagger(\eta_n)] \ldots a^\dagger(\eta_1)] \ldots ] \Omega}.
\end{equation}
Araki has shown in \cite{Ara:lattice} (in a different notation) that every bounded operators $A$ has such decomposition.

In the following section we aim to establish an analogue of the series expansion \eqref{eq:expansionfree2} in terms of the deformed creators and annihilators $z,\zd$ in our models with factorizing scattering matrix. Moreover, we aim to establish this expansion for arbitrary bounded operators, and more generally for unbounded operators and quadratic forms. This is an important ingredient for a characterization theorem for local operators which we will formulate in Sec.~\ref{sec:localitythm}.

\section{Contractions}

In this section we will introduce some of our notation, similar to \cite{Lechner:2008} but with conventions slightly more convenient for our purposes.

We consider for $A \in \qf^\omega$ the matrix element
\begin{equation}
 \hscalar{\zd(\theta_1) \cdots \zd(\theta_m) \Omega}{A \zd(\eta_n) \cdots \zd(\eta_1) \Omega}
  =: \hscalar{\lvector{}{\thetav} }{ A \rvector{}{\etav}},
\end{equation}
A \emph{contraction} $C$ is a triple $C=(m,n,\{(l_1,r_1),\ldots,(l_{k},r_k)\})$, where $m,n\in\nbb_0$, $1 \leq l_j \leq m$ and $m+1 \leq r_j \leq m+n$, and both the $l_j$ and the $r_j$ are pairwise different among each other. We denote $\ccal_{m,n}$ the set of all contractions for fixed $m$ and $n$, and write $|C|:=k$ for the length of a contraction (in other words, the number of elements of the set in the third entry in the definition of $C$).

Using this notion of a contraction, we can consider (``contracted'') matrix elements $\hscalar{\lvector{C}{\thetav}}{ A \rvector{C}{\etav}}$,
where
\begin{align}
   \lvector{C}{\thetav} &:= \zd(\theta_1) \cdots \widehat{\zd(\theta_{l_1})} \cdots \widehat{\zd(\theta_{l_{|C|}})} \cdots \zd(\theta_m) \Omega,
\\
   \rvector{C}{\etav} &:= \zd(\eta_n) \cdots \widehat{\zd(\eta_{r_1-m})} \cdots \widehat{\zd(\eta_{r_{|C|}-m})} \cdots \zd(\eta_1) \Omega,
\end{align}
and where the hats indicate that the marked elements have been omitted in the sequence.

We note that $\lvector{C}{\cdotarg}$ is an $\Hil$-valued distribution on $\dcal(\rbb^{m-|C|})$, namely when smearing each $\zd$ with test functions in $\dcal(\rbb)$, $\lvector{C}{f}$ is a vector in $\Hil$. Actually, its values are in $\fpno$; indeed, given any function $f$ smooth and of compact support, $\lvector{C}{f}:=\int d^m\theta \, f(\thetav) \lvector{C}{\thetav}$ is a vector of finite particle number, and has the norm
\begin{equation}
   \gnorm{e^{\omega(H/\mu)} \lvector{C}{f}}{} \leq  \sqrt{(m-|C|)!}\onorm{f}{2}.
\end{equation}
This inequality is a generalization of \eqref{omegaz} first part, in the case $\ell =0$. Namely, we can follow the same computation as in the proof of \eqref{omegaz}, but setting $\ell =0$ and considering ${\zd}^{m}(f)$ instead of $\zd(f)$. By explicit computation:
\begin{multline}
 \gnorm{e^{\omega(H/\mu)} \lvector{C}{f}}{}^{2}
 =\int d^{m-|C|} \hat{\theta} d^{m'-|C|} \hat{\theta'} \overline{f(\hat{\thetav}')} f(\hat{\thetav})
 \langle e^{\omega(H/\mu)}\zd(\theta'_{1}) \cdots \\
    \cdots \widehat{\zd(\theta'_{l_1})} \cdots \widehat{\zd(\theta'_{l_{|C|}})} \cdots \zd(\theta'_{m}) \Omega, e^{\omega(H/\mu)}\zd(\theta_1) \cdots \widehat{\zd(\theta_{l_1})} \cdots \widehat{\zd(\theta_{l_{|C|}})} \cdots \zd(\theta_m) \Omega  \rangle\\
 \leq (m-|C|)!\int d^{m-|C|}\hat{\theta} |f(\hat{\thetav})|^2 e^{2\omega(\sum_{j=1}^{m}\cosh\widehat{\theta_j})}\\
=(m-|C|)!(\onorm{f}{2})^2.\label{prooflC}
\end{multline}
where in the second inequality we used the $S$-symmetry of $f$.

This holds similarly for $\rvector{C}{\cdotarg}$. Therefore, for fixed $A \in \qf^\omega$, the matrix element $\hscalar{\lvector{C}{\thetav}}{ A \rvector{C}{\etav}}$ is a well-defined distribution on $\dcal(\rbb^{m+n-2|C|})'$.

We associate with a contraction $C\in \ccal_{m,n}$ the following quantities:
\begin{align}
\delta_{C}(\thetav,\etav) &:= \prod_{j=1}^{|C|}\delta(\theta_{l_j}-\eta_{r_j-m}), \label{eq:deltac}\\
 S_C(\thetav,\etav) &:= \prod_{j=1}^{|C|}\prod_{m_{j}=l_{j}+1}^{r_{j}-1}S^{(m)}_{m_{j},l_{j}}\cdot \prod_{\substack{r_{i}<r_{j} \\ l_{i}<l_{j}}} S^{(m)}_{l_{j},r_{i}},\label{eq:sc}
\end{align}
where we used the notation
\begin{equation}
S_{a,b}(\thetav):=S(\theta_{a}-\theta_{b}),\quad S_{a,b}^{(m)} := \left\{\begin{array}{l} S_{b,a} \quad a \leq m <b, b\leq m <a \\S_{a,b} \quad \text{otherwise}\end{array} \right.
\end{equation}
We will often not write down explicitly the arguments $\thetav,\etav$ where they are clear from the context. We will see the use of the above expressions later in the present thesis.

It will become also very useful the fact that we can express the factors $S_C$ in terms of the expressions $S^\sigma$ associated with permutations $\sigma$, as the following Lemma shows.

\begin{lemma}\label{lemma:scpermute}
 There holds
\begin{equation}\label{Scontrperm}
 \delta_{C} S_C(\thetav,\etav) = \delta_{C}S^{\sigma}(\thetav)S^{\rho}(\etav),
\end{equation}
where
\begin{equation}\label{eq:scpermutations}
 \begin{aligned}
\sigma &=\begin{pmatrix} 1 & & \ldots & &  & m \\ 1 & \ldots \;\hat{l}\; \ldots & m &  l_{1} & \ldots & l_{|C|} \end{pmatrix}, \\
\rho &=\begin{pmatrix} m+1 & & & \ldots & & m+n \\ r_{|C|} & \ldots & r_{1} & m+1 & \ldots \; \hat{r}\; \ldots & m+n \end{pmatrix}.
 \end{aligned}
\end{equation}
\end{lemma}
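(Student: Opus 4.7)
The plan is to verify $\delta_C S_C = \delta_C S^\sigma S^\rho$ by decomposing both sides into elementary $S$-factors and matching them after using the delta-function identifications $\theta_{l_k} = \eta_{r_k-m}$. Since both sides are invariant under relabeling of the contraction pairs (on the support of $\delta_C$, using the composition law of the previous lemma to absorb any reordering), I may assume without loss of generality that $l_1 < l_2 < \ldots < l_{|C|}$. With this convention, the only inversions of $\sigma$ are those between a non-contracted index $\hat l \in \{1,\ldots,m\} \setminus \{l_1,\ldots,l_{|C|}\}$ and a contracted index $l_k$ with $\hat l > l_k$, so
\begin{equation*}
S^\sigma(\thetav) = \prod_{k=1}^{|C|} \prod_{\substack{\hat l > l_k \\ \hat l \leq m,\; \hat l \text{ non-contr.}}} S(\theta_{\hat l} - \theta_{l_k}).
\end{equation*}
For $\rho$, which places the contracted $r$'s first in reverse order, the inversions split into contracted-contracted pairs $\{K_1, K_2\}$ with $K_1<K_2$ and $r_{K_1}<r_{K_2}$, contributing $S(\eta_{r_{K_2}-m} - \eta_{r_{K_1}-m})$, and contracted-noncontracted pairs with $r_K > \hat r_k$, contributing $S(\eta_{r_K-m} - \eta_{\hat r_k-m})$.

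Next I would decompose $S_C$ into three pieces according to whether each index lies on the $\theta$ or $\eta$ side of $m$: $(1a) := \prod_j \prod_{l_j < m_j \leq m} S(\theta_{m_j} - \theta_{l_j})$, $(1b) := \prod_j \prod_{m < m_j < r_j} S(\eta_{r_j - m} - \eta_{m_j - m})$ (using $\delta_C$ to convert the mixed factor $S(\theta_{l_j}-\eta_{m_j-m})$ into pure $\eta$), and $(2) := \prod_{r_i < r_j,\, l_i < l_j} S(\eta_{r_i - m} - \eta_{r_j - m})$. The non-contracted summands of $(1a)$, those with $m_j = \hat l$, reproduce $S^\sigma(\thetav)$ verbatim, and the non-contracted summands of $(1b)$, those with $m_j - m$ equal to some $\hat r_k - m$, reproduce the contracted-noncontracted piece of $S^\rho(\etav)$. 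What remains is to match the contracted summands of $(1a)$ and $(1b)$ together with $(2)$ against the contracted-contracted piece of $S^\rho$.

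This last matching is the main obstacle and is handled pair-by-pair. Fix an unordered pair $\{K_1, K_2\}$ with $K_1 < K_2$. The contracted summand of $(1a)$ (arising from $j = K_1$, $m_j = l_{K_2}$) always contributes $S(\theta_{l_{K_2}} - \theta_{l_{K_1}}) = S(\eta_{r_{K_2}-m} - \eta_{r_{K_1}-m})$ on $\delta_C$. The contracted summand of $(1b)$ contributes $S(\eta_{r_{K_2}-m} - \eta_{r_{K_1}-m})$ if $r_{K_1} < r_{K_2}$ (from $j = K_2$) and $S(\eta_{r_{K_1}-m} - \eta_{r_{K_2}-m})$ if $r_{K_1} > r_{K_2}$ (from $j = K_1$). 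Finally $(2)$ contributes $S(\eta_{r_{K_1}-m} - \eta_{r_{K_2}-m})$ exactly when $r_{K_1} < r_{K_2}$ and is absent otherwise. Multiplying these three contributions and using $S(-\theta) = S(\theta)^{-1}$ from Def.~\ref{def:Smatrix}, one obtains exactly $S(\eta_{r_{K_2}-m} - \eta_{r_{K_1}-m})$ in the $r$-agreement case and $1$ in the disagreement case, which agrees term-by-term with Case~(i) of $S^\rho$. The role of the second product in $S_C$ is precisely to furnish the cancellation that removes the ``wrong-direction'' factors introduced by the first product; without it the identity would fail. Multiplying the verified pair-by-pair identity over all pairs, and combining with the non-contracted matchings above, completes the proof.
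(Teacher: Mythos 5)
Your proof is correct and follows essentially the same route as the paper's: read off $S^\sigma$ and $S^\rho$ directly from the inversion structure of the explicit permutations, split $S_C$ according to which side of $m$ the running index $m_j$ lies, and verify the identity term-by-term on the support of $\delta_C$ using $S(-\theta)=S(\theta)^{-1}$ to effect the cancellations. The only difference is cosmetic: you normalize the contraction so that $l_1<\dots<l_{|C|}$ (which moves the within-block cancellation into the $\rho$-side), whereas the paper normalizes $r_1<\dots<r_{|C|}$, the mirror choice.
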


\emph{Remarks}: $\hat l$ indicates that we leave out the $l_j$ from the sequence; $\hat r$ analogously. The permutations $\sigma,\rho$ are not unique since one can permute the pairs of the contraction. However,
the right hand side of \eqref{Scontrperm} is independent of this choice since the extra $S$-factors associated with different permutations $\sigma,\rho$ of the same pairs would cancel each other due to the delta distributions.

\begin{proof}

Considering the above remark, we can assume that $r_{1} < \ldots < r_{|C|} $. From the definition of $S^\sigma$, $S^\rho$ in Eq.~\eqref{eq:Sperm} with $\sigma, \rho$ given by \eqref{eq:scpermutations}, we can read off that
\begin{align}\label{proofCsigma}
 S^{\sigma} &= \prod_{j=1}^{|C|}\prod_{p_{j}=l_{j}+1}^{m}S_{p_{j},l_{j}}\cdot \prod_{\substack{i<j \\ l_{i}<l_{j}}} S_{l_{i},l_{j}}, \\
 S^{\rho}   &= \prod_{j=1}^{|C|}\prod_{q_{j}=m+1}^{r_{j}-1}S_{r_{j},q_{j}}.\label{proofCrho}
\end{align}
Computing the product $S^{\sigma} S^{\rho}$ from \eqref{proofCsigma} and \eqref{proofCrho}, and taking the factor $\delta_C$ into account, we find that
$\delta_{C} S^{\sigma} S^{\rho} = \delta_{C} S_C$ with $S_C$ defined as in \eqref{eq:sc}.
\end{proof}

We will also need to consider \emph{compositions} of contractions. Given the contractions $C \in \ccal_{m,n}$ and $C' \in \ccal_{m-|C|,n-|C|}$, the \emph{composed contraction}, where the indices are contracted first with $C$, then with $C'$, is defined as $C \dot \cup C' \in \ccal_{m,n}$, $C \dot \cup C'=(m,n,\{(l_1,r_1),\ldots,(l_k,r_k),(l'_1,r'_1),\ldots,(l'_{k'},r'_{k'})\})$. This definition should be intuitively clear; on the other hand, note that it involves a \emph{renumbering} of the indices in $C'$ before taking the set union $C \dot \cup C'$; we will often avoid to indicate  this renumbering explicitly.
With respect to this composition of contractions, also the factors $\delta_C$ and $S_C$ compose in a certain way, as the following lemma shows. Here $\hat \thetav \in \rbb^{m-|C|}$ indicates that $\thetav$ has the components $\theta_{l_1},\ldots,\theta_{l_{|C|}}$ left out; analogously for $\hat \etav$.

\begin{lemma}\label{lemma:contractcompose}

Let $C \in \ccal_{m,n}$ and $C'\in \ccal_{m-|C|,n-|C|}$. There holds
\begin{equation}
\delta_{C}(\thetav,\etav) \delta_{C'} (\hat\thetav,\hat\etav) S_C(\thetav,\etav) S_{C'}(\hat\thetav,\hat\etav) = \delta_{C\dot{\cup}C'}(\thetav,\etav) S_{C\dot{\cup}C'}(\thetav,\etav) \label{scc}.
\end{equation}
\end{lemma}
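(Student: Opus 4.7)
The plan is to reduce the claim to an identity about the permutation $S$-factors $S^\sigma$ of \eqref{eq:Sperm} and then apply the composition law \eqref{eq:Scompose} proved above.

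First, I would dispose of the delta functions: by the definition \eqref{eq:deltac}, the product $\delta_C(\thetav,\etav)\delta_{C'}(\hat\thetav,\hat\etav)$ is exactly the product of one factor $\delta(\theta_a-\eta_b)$ for each pair of $C$ and one for each pair of $C'$, the latter being rewritten in terms of the original indices via the renumbering that defines $C\dot\cup C'$. This is manifestly $\delta_{C\dot\cup C'}(\thetav,\etav)$. In particular it suffices to prove the identity for $S$-factors \emph{in the presence of the common delta factor}, which gives us the slack of the remark after Lemma~\ref{lemma:scpermute}.

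Next, I would invoke Lemma~\ref{lemma:scpermute} three times to rewrite each $\delta\cdot S$ as $\delta\cdot S^{\sigma}(\thetav)S^{\rho}(\etav)$ with $\sigma,\rho$ as in \eqref{eq:scpermutations}. Since the dependence on $\thetav$ and $\etav$ now decouples, it is enough to verify, modulo $\delta_{C\dot\cup C'}$, the two identities
\begin{equation*}
 S^{\sigma_C}(\thetav)\,S^{\sigma_{C'}}(\hat\thetav) = S^{\sigma_{C\dot\cup C'}}(\thetav),
 \qquad
 S^{\rho_C}(\etav)\,S^{\rho_{C'}}(\hat\etav) = S^{\rho_{C\dot\cup C'}}(\etav).
\end{equation*}
I will do the $\thetav$-identity; the $\etav$-identity is completely analogous (noting that in \eqref{eq:scpermutations} the $\rho$'s place the contracted indices at the \emph{beginning}, so the roles of $\rho_C$ and $\rho_{C'}$ will be exchanged in the composition).

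The key combinatorial step is to extend $\sigma_{C'}\in\perms{m-|C|}$ to $\tilde\sigma_{C'}\in\perms{m}$ by letting it act as the identity on the last $|C|$ positions. From the explicit form \eqref{eq:scpermutations} of $\sigma_C$, the first $m-|C|$ entries of $\thetav^{\sigma_C}$ are precisely $\hat\thetav$, while the last $|C|$ entries are $\theta_{l_1},\ldots,\theta_{l_{|C|}}$, on which $\tilde\sigma_{C'}$ is trivial. Hence $S^{\tilde\sigma_{C'}}(\thetav^{\sigma_C}) = S^{\sigma_{C'}}(\hat\thetav)$, and the composition law \eqref{eq:Scompose} gives
\begin{equation*}
 S^{\sigma_C\circ\tilde\sigma_{C'}}(\thetav)
 = S^{\sigma_C}(\thetav)\,S^{\tilde\sigma_{C'}}(\thetav^{\sigma_C})
 = S^{\sigma_C}(\thetav)\,S^{\sigma_{C'}}(\hat\thetav).
\end{equation*}
By inspection of \eqref{eq:scpermutations}, the permutation $\sigma_C\circ\tilde\sigma_{C'}$ moves every index in $\{l_j\}\cup\{\hat l_{l'_j}\}$ to the last $|C|+|C'|$ positions; this is an admissible representative of $\sigma_{C\dot\cup C'}$ up to reordering of contracted pairs, which by the remark after Lemma~\ref{lemma:scpermute} is immaterial once multiplied by $\delta_{C\dot\cup C'}$.

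The main obstacle is purely bookkeeping: tracking the relabelling of the indices of $C'$ and the internal ordering of the contracted pairs in the three permutations $\sigma_C$, $\sigma_{C'}$ and $\sigma_{C\dot\cup C'}$. No further analytic input is needed beyond the composition law \eqref{eq:Scompose} and the freedom provided by the remark after Lemma~\ref{lemma:scpermute}.
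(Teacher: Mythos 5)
Your proof is correct and follows essentially the same route as the paper's: note $\delta_C\delta_{C'}=\delta_{C\dot\cup C'}$, use Lemma~\ref{lemma:scpermute} to reduce to the $S^\sigma$-identity, extend $\sigma_{C'}$ to $\perms{m}$ by the identity on the contracted slots, apply the composition law~\eqref{eq:Scompose}, and identify $\sigma_C\circ\tilde\sigma_{C'}$ with $\sigma_{C\dot\cup C'}$ up to the pair-reordering ambiguity absorbed by $\delta_{C\dot\cup C'}$. Your additional remark about the asymmetric placement of contracted indices in the $\rho$'s is a correct observation the paper leaves implicit.
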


\begin{proof}
From the definition \eqref{eq:deltac} it is clear that $\delta_{C}\delta_{C'}=\delta_{C \dot\cup C'}$.
Using this and Lemma~\ref{lemma:scpermute}, it remains to show that
\begin{equation}\label{Scomposperm}
  S^{\sigma}(\thetav)S^{\sigma'}(\hat \thetav ) = S^{\sigma''}(\thetav),
\quad
  S^{\rho}(\etav) S^{\rho'}(\hat \etav ) = S^{\rho''}(\etav),
\end{equation}
where $\sigma,\rho$, $\sigma',\rho'$, $\sigma'',\rho''$ are the permutations associated with $C$, $C'$, and $C\dot\cup C'$, respectively, by Eq.~\eqref{Scontrperm}.
We note that $\sigma'$ is given explicitly by
\begin{equation}
\sigma'=\begin{pmatrix} 1 & & \ldots \hat{l} \ldots &  & & m \\ 1 & \ldots \hat{l}\; \hat{l'} \ldots & m &  l'_{1} & \ldots & l'_{|C'|} \end{pmatrix} \in \perms{m-|C|}.
\end{equation}
We can consider $\sigma'$ as an element of $\perms{m}$ by extending the permutation matrix in the following way:
\begin{equation}
\sigma'=\begin{pmatrix} 1 & & \ldots \hat{l} \ldots & &  & m  & l_{1} & \ldots & l_{|C|} \\ 1 & \ldots \hat{l'}\; \hat{l} \ldots & m  &  l'_{1} & \ldots & l'_{|C'|} & l_{1} & \ldots & l_{|C|}  \end{pmatrix}.
\end{equation}
With this, $S^{\sigma'}(\hat\thetav)=S^{\sigma'}(\thetav^{\sigma})$. Using the composition law in Eq.~\eqref{eq:Scompose}, one has
\begin{equation}
 S^{\sigma''}(\thetav) = S^{\sigma'}(\thetav^\sigma) S^{\sigma}(\thetav) = S^{\sigma'}(\hat\thetav)S^{\sigma}(\thetav).
\end{equation}
where $\sigma''$ is given by:
\begin{equation}
\sigma'' := \sigma \circ \sigma'=\begin{pmatrix} 1 & & \ldots & &  & m \\ 1 & \ldots \;\hat{l'}\;\hat{l}\; \ldots & m &  l'_{1} & \ldots & l'_{|C'|}\;l_{1} & \ldots & l_{|C|} \end{pmatrix}
\end{equation}
One notices that this permutation is indeed associated with $C \dot\cup C'$ by Eq.~\eqref{Scontrperm}.

We obtain in a similar way the second part of Eq.~\eqref{Scomposperm}, and hence we find the result of this lemma.
\end{proof}

\section{Contracted matrix elements} \label{sec:fmndef}

Given any quadratic form $A \in \mathcal{Q}^{\omega}$, we define its \emph{fully contracted matrix elements} $\cme{m,n}{A}$ by
\begin{equation}\label{eq:fmndef}
\cme{m,n}{A}(  \thetav,  \etav) :=
\sum_{C\in \ccal_{m,n}} (-1)^{|C|} \delta_C \, S_C (\thetav,\etav) \,
\hscalar{ \lvector{C}{\thetav} }{ A \,\rvector{C}{\etav} }.
\end{equation}
These are very similar to Lechner's contracted matrix elements $\langle \cdotarg \rangle_{m+n,m}^\text{con}$ introduced in \cite{Lechner:2008}; the relation between our $\cme{m,n}{A}$ and Lechner's contracted matrix elements, in notation used there, is $\cme{m,n}{A} = \langle J A^\ast J \rangle_{m+n,m}^\text{con}$.

We can show that $\cme{m,n}{A}$ are well-defined distributions in $\dcal(\rbb^{m+n})'$. Indeed, due to our remark after \eqref{prooflC}, the contracted matrix elements $\hscalar{ \lvector{C}{\thetav} }{ A \,\rvector{C}{\etav} }$ are well-defined distributions in $\dcal(\rbb^{m+n})'$. The product of the delta distributions $\delta_C$ with $\hscalar{ \lvector{C}{\thetav} }{ A \,\rvector{C}{\etav} }$ is well-defined because they depend on mutually different variables.
Since $S$ is smooth (see Definition \ref{def:Smatrix}), the product of $S_C$ with $\delta_C \cdot \hscalar{ \lvector{C}{\thetav} }{ A \,\rvector{C}{\etav} }$ is also well-defined. Therefore, the quantity $\delta_C \cdot S_C \cdot \hscalar{ \lvector{C}{\thetav} }{ A \,\rvector{C}{\etav} }$ is a well-defined distribution in $\dcal(\rbb^{m+n})'$.

We can even show more, namely that the norms $\gnorm{\cme{m,n}{A}}{m\times n}^{\omega}$ are finite; we prove this in the following Proposition.
\begin{proposition}\label{proposition:fmnbound}
 For $m,n \in \nbb_0$, there is a constant $c_{mn}$ such that for all $A \in \qf^{\omega}$,
\begin{equation}
\onorm{\cme{m,n}{A} }{m\times n} \leq c_{mn} \onorm{A}{m+n}.\label{boundcme}
\end{equation}
\end{proposition}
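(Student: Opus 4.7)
The plan is to unfold the definition \eqref{eq:fmndef}, smear against arbitrary test functions, and reduce everything to the bound $\gnorm{Q_k A e^{-\omega(H/\mu)} Q_k}{} \leq 2 \onorm{A}{k}$ (and its adjoint analogue) coming from \eqref{eq:aomeganorm}. Concretely, by definition of $\onorm{\cdot}{m\times n}$ it suffices to estimate the two terms
\[
 \Big\lvert \int d\thetav\, d\etav \, e^{-\omega(E(\thetav))} \cme{m,n}{A}(\thetav,\etav)\, g(\thetav)\,h(\etav) \Big\rvert
\]
and the corresponding one with $e^{-\omega(E(\etav))}$, uniformly for $\gnorm{g}{2}, \gnorm{h}{2} \leq 1$. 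The two estimates are symmetric, so I focus on the first.

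First I would insert \eqref{eq:fmndef} and use Lemma~\ref{lemma:scpermute} to rewrite, for each contraction $C$, the factor $\delta_C S_C$ as $\delta_C S^\sigma(\thetav) S^\rho(\etav)$. Since $|S(\theta)|=1$ for real $\theta$ by Definition~\ref{def:Smatrix}, the factors $S^\sigma$ and $S^\rho$ have modulus one on $\rbb^m$ and $\rbb^n$ respectively, so I can absorb them into the test functions: set
\[
  G_C(\thetav) := S^\sigma(\thetav)\, e^{-\omega(E(\thetav))} g(\thetav), \qquad H_C(\etav) := S^\rho(\etav)\, h(\etav),
\]
which satisfy $\gnorm{e^{\omega(E(\thetav))} G_C}{2} \leq \gnorm{g}{2}$ and $\gnorm{H_C}{2}=\gnorm{h}{2}$.

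Next I would exploit the delta function in $\delta_C$ to contract the $|C|$ variables $\theta_{l_j}$ with $\eta_{r_j-m}$. Denoting by $\hat\thetav \in \rbb^{m-|C|}$, $\hat\etav \in \rbb^{n-|C|}$ the uncontracted variables and by $\zeta \in \rbb^{|C|}$ the identified ones, the integrand becomes a smearing of the matrix element $\langle \lvector{C}{(\hat\thetav)} , A\, \rvector{C}{(\hat\etav)}\rangle$ against the functions $G_C, H_C$ evaluated at the reconstructed arguments. For fixed $\zeta$ I would introduce the smeared vectors
\[
 \Psi_\zeta := \int d\hat\thetav\, G_C(\thetav(\hat\thetav,\zeta))\, \lvector{C}{(\hat\thetav)}, \qquad
 \Phi_\zeta := \int d\hat\etav\, H_C(\etav(\hat\etav,\zeta))\, \rvector{C}{(\hat\etav)},
\]
so that the contribution of the contraction $C$ equals $\int d\zeta\, \langle \Psi_\zeta , A \Phi_\zeta\rangle$.

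Now I would apply the matrix-element bound coming from \eqref{eq:aomeganorm}, namely
\[
 \lvert \langle \Psi_\zeta, A\Phi_\zeta\rangle\rvert \leq 2\,\onorm{A}{m+n}\, \gnorm{e^{\omega(H/\mu)} \Psi_\zeta}{}\, \gnorm{\Phi_\zeta}{},
\]
valid because $\Psi_\zeta, \Phi_\zeta$ lie in $Q_{m+n}\Hil$. The bound \eqref{prooflC} (applied to $G_C(\cdot,\zeta)$ and to $H_C(\cdot,\zeta)$), combined with monotonicity of $\omega$ and $|S^\sigma|=|S^\rho|=1$, gives
\[
 \gnorm{e^{\omega(H/\mu)}\Psi_\zeta}{} \leq \sqrt{(m-|C|)!}\, \gnorm{\hat\thetav \mapsto g(\thetav(\hat\thetav,\zeta))}{2},
\quad
 \gnorm{\Phi_\zeta}{} \leq \sqrt{(n-|C|)!}\, \gnorm{\hat\etav \mapsto h(\etav(\hat\etav,\zeta))}{2}.
\]
Integrating over $\zeta$ via Cauchy--Schwarz and recognizing that $(\hat\thetav,\zeta) \mapsto \thetav$ is a measure-preserving reordering, the two $L^2$-integrals reconstruct $\gnorm{g}{2}^2$ and $\gnorm{h}{2}^2$. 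Thus each contraction contributes at most $2\sqrt{(m-|C|)!(n-|C|)!}\,\onorm{A}{m+n}\gnorm{g}{2}\gnorm{h}{2}$, and summing over the $\binom{m}{k}\binom{n}{k}k!$ contractions of length $k$ yields a purely combinatorial constant $c_{mn}$ as asserted.

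The main technical obstacle is the bookkeeping around the collapsed delta functions: after the collapse, $\Psi_\zeta$ and $\Phi_\zeta$ share the common variable $\zeta$, so the matrix element cannot be factorized and one is forced to use Cauchy--Schwarz in $\zeta$; doing this cleanly requires the preliminary step of isolating the $S^\sigma, S^\rho$ factors via Lemma~\ref{lemma:scpermute} so that the modulus-one phases can be absorbed symmetrically on the two sides.
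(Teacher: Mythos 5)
Your argument is correct and is essentially the paper's second (``alternative, close to Lechner'') proof of Proposition~\ref{proposition:fmnbound}: factor $\delta_C S_C$ via Lemma~\ref{lemma:scpermute}, absorb the unimodular $S$-phases into the test functions, integrate out the deltas, bound the resulting matrix element by $\gnorm{Q_{m+n}e^{-\omega(H/\mu)}AQ_{m+n}}{}$ together with the vector-norm estimate \eqref{prooflC}, and finish with Cauchy--Schwarz in the contracted variables. The paper's first proof reaches the same constant by packaging the delta-contraction step abstractly through the cross-norm lemmas \eqref{eq:mnnormfactor} and \eqref{eq:mnnormproduct} rather than doing the Cauchy--Schwarz in $\zeta$ by hand, but the two are computationally equivalent.
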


\begin{proof}
Applying the triangle inequality we find
\begin{equation}
\onorm{\cme{m,n}{A} }{m\times n} \leq \sum_{C\in \ccal_{m,n}} \onorm{\delta_C \, S_C \,
\hscalar{ \lvector{C}{\thetav} }{ A \,\rvector{C}{\etav} } }{m \times n} .
\end{equation}
By Lemma~\ref{lemma:scpermute} the factor $S_C (\thetav,\etav)$ factorizes to $S^\sigma(\thetav)S^\rho(\etav)$; applying Eq.~\eqref{eq:mnnormfactor}, we have
\begin{eqnarray}
\onorm{\delta_C \, S_C \,
\hscalar{ \lvector{C}{\thetav} }{ A \,\rvector{C}{\etav} } }{m \times n} &\leq& \gnorm{S^{\sigma}}{\infty}\onorm{\delta_C \,\hscalar{ \lvector{C}{\thetav} }{ A \,\rvector{C}{\etav} } }{m\times n}\gnorm{S^{\rho}}{\infty} \nonumber\\
&=& \onorm{\delta_C \,\hscalar{ \lvector{C}{\thetav} }{ A \,\rvector{C}{\etav} } }{m\times n}.
\end{eqnarray}
where in the third equality we estimated the uniform norms of $S^\sigma, S^\rho$ by 1 (since on the real axis, $|S(\theta)|=1$ for all $\theta$). The individual factors of $\delta_C$ and the matrix element $\hscalar{ \lvector{C}{\thetav} }{ A \,\rvector{C}{\etav} } $ depend on mutually different variables; so, we can apply repeatedly \eqref{eq:mnnormproduct} to $\onorm{\delta_C \, \hscalar{ \lvector{C}{\thetav} }{ A \,\rvector{C}{\etav} } }{m \times n}$, we find
\begin{equation} \label{eq:mntriangle}
\onorm{\cme{m,n}{A} }{m\times n} \leq \sum_{C\in \ccal_{m,n}} \Big(\prod_{j=1}^{|C|} \gnorm{\delta(\theta_{l_j}-\eta_{r_j})}{1 \times 1} \Big) \onorm{
\hscalar{ \lvector{C}{\thetav} }{ A \,\rvector{C}{\etav} } }{(m-|C|) \times (n-|C|)} .
\end{equation}
By application of Cauchy-Schwarz one easily sees that $\gnorm{\delta(\theta-\eta)}{1 \times 1}=1$; we can show that
\begin{equation}
 \onorm{\hscalar{ \lvector{C}{\thetav} }{ A \,\rvector{C}{\etav} } }{(m-|C|) \times (n-|C|)}
 \leq \sqrt{(m-|C|)!}\,\sqrt{(n-|C|)!} \;\onorm{A}{m+n}.\label{onormmatrix}
\end{equation}
The inequality \eqref{onormmatrix} can be proved as follows. From \eqref{eq:crossnorm2}, we have that
\begin{multline}
\onorm{\hscalar{ \lvector{C}{\thetav} }{ A \,\rvector{C}{\etav} } }{(m-|C|) \times (n-|C|)}
=\frac{1}{2}\gnorm{e^{-\omega(E(\hat{\thetav}))}\hscalar{ \lvector{C}{\thetav} }{ A \,\rvector{C}{\etav} }}{(m-|C|) \times (n-|C|)}\\
+\frac{1}{2}\gnorm{\hscalar{ \lvector{C}{\thetav} }{ A \,\rvector{C}{\etav} }e^{-\omega(E(\hat{\etav}))}}{(m-|C|) \times (n-|C|)}.\label{proofonormmatrix}
\end{multline}
We consider the first norm on the right hand side of the above equation. By \eqref{eq:crossnorm}, we have that
\begin{multline}
\gnorm{e^{-\omega(E(\hat{\thetav}))}\hscalar{ \lvector{C}{\thetav} }{ A \,\rvector{C}{\etav} }}{(m-|C|) \times (n-|C|)}\\
=\sup_{\substack{ \gnorm{g}{2} \leq 1\\ \gnorm{h}{2} \leq 1}} \frac{\big\lvert \! \int e^{-\omega(E(\hat{\thetav}))}\hscalar{ \lvector{C}{\thetav} }{ A \,\rvector{C}{\etav} }g(\hat{\thetav}) h(\hat{\etav})  d^{m-|C|}\hat{\theta} d^{n-|C|}\hat{\eta}  \,\big\rvert}{\gnorm{g}{2}\gnorm{h}{2}}.
\end{multline}
We can estimate the absolute value of $\hscalar{ \lvector{C}{\thetav} }{ A \,\rvector{C}{\etav} }$ integrated with the functions $g,h$ using the Cauchy-Schwarz inequality:
\begin{multline}
|\hscalar{ e^{-\omega(H/\mu)}\lvector{C}{g} }{ A \,\rvector{C}{h} }|\\
=|\hscalar{ \lvector{C}{g} }{ Q_{m-|C|}e^{-\omega(H/\mu)} A Q_{n-|C|} \,\rvector{C}{h} }|\\
\leq \sqrt{(m-|C|)!} \sqrt{(n-|C|)!} \gnorm{g}{2} \gnorm{h}{2} \gnorm{Q_{m-|C|}e^{-\omega(H/\mu)} A Q_{n-|C|}}{}\\
\leq \sqrt{(m-|C|)!} \sqrt{(n-|C|)!} \gnorm{g}{2} \gnorm{h}{2} \gnorm{Q_{m+n}e^{-\omega(H/\mu)} A Q_{m+n}}{}.
\end{multline}
Hence, we have:
\begin{multline}
\gnorm{e^{-\omega(E(\hat{\thetav}))}\hscalar{ \lvector{C}{\thetav} }{ A \,\rvector{C}{\etav} }}{(m-|C|) \times (n-|C|)}\\
\leq \sqrt{(m-|C|)!} \sqrt{(n-|C|)!} \gnorm{Q_{m+n}e^{-\omega(H/\mu)} A Q_{m+n}}{}.
\end{multline}
By a similar method, we find for the second norm on the right hand side of \eqref{proofonormmatrix}:
\begin{multline}
\gnorm{\hscalar{ \lvector{C}{\thetav} }{ A \,e^{-\omega(E(\hat{\etav}))} \rvector{C}{\etav} }}{(m-|C|) \times (n-|C|)}\\
\leq \sqrt{(m-|C|)!} \sqrt{(n-|C|)!} \gnorm{Q_{m+n} A e^{-\omega(H/\mu)} Q_{m+n}}{}.
\end{multline}
By definition \eqref{eq:aomeganorm} and \eqref{proofonormmatrix} the two estimates above imply \eqref{onormmatrix}. Inserting \eqref{onormmatrix} into \eqref{eq:mntriangle}, we find the result \eqref{boundcme}.
\end{proof}

We present here an alternative proof of Proposition \ref{proposition:fmnbound}, close to Lechner \cite[Lemma~4.3]{Lechner:2008}.

\begin{proof}
We write $\boldsymbol{\theta_{l}}:=(\theta_{l_{1}},\ldots,\theta_{l_{|C|}})$ and $\delta_{C}S_C=\delta_{C}S_{C}^{L}S_{C}^{R}$, where $S_{C}^{L}$ and $S_{C}^{R}$ are functions of the variables $\{ \theta_{1},\ldots,\theta_{m}\}$ and $(\{ \eta_{1},\ldots,\eta_{n}\}\backslash \{ \eta_{r_{1}-m},\ldots,\eta_{r_{|C|}-m}\})\cup \{\boldsymbol{\theta_{l}}\}$, respectively.
For $g\in \mathcal{D}(\mathbb{R}^{m-|C|}),\, h\in \mathcal{D}(\mathbb{R}^{n-|C|})$, we consider
\begin{eqnarray}
G_{\boldsymbol{\theta_{l}}}^{L} &:=& S_{C}^{L}\cdot g_{\boldsymbol{\theta_{l}}},\\
H_{\boldsymbol{\theta_{l}}}^{R} &:=& S_{C}^{R}\cdot h_{\boldsymbol{\theta_{l}}}.
\end{eqnarray}
where $g_{\boldsymbol{\theta_{l}}}:=g(\theta_{1},\ldots,\boldsymbol{\theta_{l}},\ldots,\theta_{m})$, $h_{\boldsymbol{\theta_{l}}}:=h(\eta_{n},\ldots,\boldsymbol{\theta_{l}},\ldots,\eta_{1})$ and where the functions $G^{L}_{\boldsymbol{\theta_{l}}}\in \mathcal{D}(\mathbb{R}^{m-|C|})$ and $H^{R}_{\boldsymbol{\theta_{l}}}\in \mathcal{D}(\mathbb{R}^{n-|C|})$ depend parametrically on $\boldsymbol{\theta_{l}}\in \mathbb{R}^{|C|}$.

We consider:
\begin{multline}
\left|\int d^{m}\thetav d^{n}\etav\; \cme{m,n}{A}(\thetav,\etav)g(\thetav)h(\etav)\right|=
\Big| \int d^{m}\thetav d^{n}\etav \sum_{C\in \ccal_{m,n}}(-1)^{|C|}\delta_{C}S_C
 \times \\
 \times \langle \ell_{C}(\thetav),e^{-\omega(H/\mu)}Q_{m-|C|}A Q_{n-|C|}r_{C}(\etav)\rangle
   e^{\omega(E(\widehat{\thetav}))}g(\thetav)h(\etav)\Big|.\label{crossbounodeproof}
\end{multline}
After integrating over the delta distributions, we find:
\begin{equation}
\text{ r.h.s. }\eqref{crossbounodeproof} = \left|\sum_{C\in \ccal_{m,n}}(-1)^{|C|}\int d^{|C|}\pmb{\theta_{l}}\;\langle e^{\omega(E(\cdot))}G_{\pmb{\theta_{l}}}^{R},Q_{m-|C|}e^{- \omega(H/\mu)}A Q_{n-|C|}H_{\pmb{\theta_{l}}}^{L}\rangle\right|.
\end{equation}
Using the Cauchy-Schwarz inequality, we find:
\begin{equation}
\begin{aligned}
&\text{r.h.s. }\eqref{crossbounodeproof}\\
&\leq \sum_{C\in \ccal_{m,n}}\sqrt{(n-|C|)!}\sqrt{(m-|C|)!}\int d^{|C|}\pmb{\theta_{l}}\;||e^{\omega(E(\cdot))}G_{\pmb{\theta_{l}}}^{R}||_{2}\cdot ||Q_{m}e^{-\omega(H/\mu)}AQ_{n}||\cdot ||H_{\pmb{\theta_{l}}}^{L}||_{2}\\
&=\sum_{C\in \ccal_{m,n}}\sqrt{(n-|C|)!}\sqrt{(m-|C|)!}\int d^{|C|}\pmb{\theta_{l}}\; ||e^{\omega(E(\cdot))}g_{\pmb{\theta_{l}}}||_{2}\cdot ||Q_{m}e^{-\omega(H/\mu)}AQ_{n}||\cdot ||h_{\pmb{\theta_{l}}}||_{2}\\
&\leq \Big(\sum_{C\in \ccal_{m,n}}\sqrt{(n-|C|)!}\sqrt{(m-|C|)!}\Big) ||e^{\omega(E(\cdot))}g||_{2}\cdot ||Q_{m}e^{-\omega(H/\mu)}AQ_{n}||\cdot ||h||_{2}\\
&=\Big(\sum_{C\in \ccal_{m,n}}\sqrt{(n-|C|)!}\sqrt{(m-|C|)!}\Big) ||g||^{\omega}_{2}\cdot ||Q_{m}e^{-\omega(H/\mu)}AQ_{n}||\cdot ||h||_{2}.
\end{aligned}
\end{equation}
where in the third inequality we made use of the monotonicity of $\omega$: $e^{\omega(E(\widehat{\thetav}))}\leq e^{\omega(E(\widehat{\thetav})+E(\pmb{\theta_{l}}))}$, and of the Cauchy-Schwarz inequality.

Using the inequality $a!b!\leq (a+b)!$, we find
\begin{equation}
\text{ r.h.s. }\eqref{crossbounodeproof}\leq \Big( \sum_{C\in \ccal_{m,n}}1\Big)\sqrt{(m+n)!}||g||^{\omega}_{2}\cdot ||Q_{m}e^{-\omega(H/\mu)}AQ_{n}||\cdot ||h||_{2}.
\end{equation}
The number of contractions in $\ccal_{m,n}$ is given by
\begin{equation}
|\ccal_{m,n}|:=\sum_{k}\left|\left\{C\in \ccal_{m,n}\; |\; |C|=k\right\}\right|=\sum_{k}\left( \begin{array}{c}
m  \\
k \end{array} \right) \left( \begin{array}{c}
n  \\
k \end{array} \right)k!
\end{equation}
We denote $c'_{mn}:=|\ccal_{m,n}|\sqrt{(m+n)!}$, we have
\begin{equation}
\left|\int d^{m}\thetav d^{n}\etav\; \cme{m,n}{A}(\thetav,\etav)g(\thetav)h(\etav)\right|\leq c'_{mn}||g||^{\omega}_{2}\cdot ||Q_{m}e^{-\omega(H/\mu)}AQ_{n}||\cdot ||h||_{2}.
\end{equation}
Calling $g'(\thetav):=e^{\omega(E(\thetav))}g(\thetav)$, the equation above can be rewritten as follows:
\begin{equation}
\left|\int d^{m}\thetav d^{n}\etav\; e^{-\omega(E(\thetav))}\cme{m,n}{A}(\thetav,\etav)g'(\thetav)h(\etav)\right|\leq c'_{mn}||g'||_{2}\cdot ||Q_{m}e^{-\omega(H/\mu)}AQ_{n}||\cdot ||h||_{2}.\label{gpstima}
\end{equation}
Analogously, we compute the same norm as before but with $e^{\omega(E(\widehat{\thetav}))}$ replaced by $e^{\omega(E(\widehat{\etav}))}$:
\begin{multline}
\left|\int d^{m}\thetav d^{n}\etav\; \cme{m,n}{A}(\thetav,\etav)g(\thetav)h(\etav)\right|=\Big| \int d^{m}\thetav d^{n}\etav \sum_{C\in \ccal_{m,n}}(-1)^{|C|}\delta_{C}S_C\times\\
 \times \langle \ell_{C}(\thetav),Q_{m-|C|}A Q_{n-|C|}e^{-\omega(H/\mu)}r_{C}(\etav)\rangle
  g(\thetav)e^{\omega(E(\widehat{\etav}))} h(\etav)\Big|.
\end{multline}
We find in this case:
\begin{equation}
\left|\int d^{m}\thetav d^{n}\etav\; \cme{m,n}{A}(\thetav,\etav)g(\thetav)h(\etav)\right|\leq c''_{mn}||g||_{2}\cdot ||Q_{m}Ae^{-\omega(H/\mu)}Q_{n}||\cdot ||h||_{2}^{\omega}.
\end{equation}
Calling $h'(\etav):=e^{\omega(E(\etav))}h(\etav)$, the equation above can be rewritten as follows:
\begin{equation}
\left|\int d^{m}\thetav d^{n}\etav\; \cme{m,n}{A}(\thetav,\etav)e^{-\omega(E(\etav))}g(\thetav)h'(\etav)\right|\leq c''_{mn}||g||_{2}\cdot ||Q_{m}Ae^{-\omega(H/\mu)}Q_{n}||\cdot ||h'||_{2}.\label{hpstima}
\end{equation}
By summing the left and right hand sides of \eqref{gpstima} and \eqref{hpstima}, we find:
\begin{multline}
\frac{1}{2}\left|\int d^{m}\thetav d^{n}\etav\; e^{-\omega(E(\thetav))}\cme{m,n}{A}(\thetav,\etav)g(\thetav)h(\etav)\right|\\
+\frac{1}{2}\left|\int d^{m}\thetav d^{n}\etav\; \cme{m,n}{A}(\thetav,\etav)e^{-\omega(E(\etav))}g(\thetav)h(\etav)\right|\\
\leq \frac{1}{2}c'_{mn}||g||_{2}\cdot ||Q_{m}e^{-\omega(H/\mu)}AQ_{n}||\cdot ||h||_{2} + \frac{1}{2}c''_{mn}||g||_{2}\cdot ||Q_{m}Ae^{-\omega(H/\mu)}Q_{n}||\cdot ||h||_{2}\\
\leq \frac{1}{2}c_{mn}||g||_{2}\cdot ||Q_{m+n}e^{-\omega(H/\mu)}AQ_{m+n}||\cdot ||h||_{2} + \frac{1}{2}c_{mn}||g||_{2}\cdot ||Q_{m+n}Ae^{-\omega(H/\mu)}Q_{m+n}||\cdot ||h||_{2}.
\end{multline}
where we denoted $c_{mn}:=c'_{mn}+c''_{mn}$.

Taking the supremum of the left and right hand sides of the above equation over $\gnorm{g}{2}\leq 1$ and $\gnorm{h}{2}\leq 1$, we find \eqref{boundcme}.
\end{proof}

\section{S-symmetry of the coefficients} \label{sec:fmnsymm}

\begin{proposition}\label{proposition:fmnsymm}
 The distributions $\cme{m,n}{A}$ are $S$-symmetric in the first $m$ and last $n$ variables separately; that is, for permutations
  $\pi \in \perms{m}$ and $\tau \in \perms{n}$,
\begin{equation}
   \cme{m,n}{A}(\thetav,\etav) = S^\pi(\thetav) S^\tau(\etav)\cme{m,n}{A}(\thetav^\pi,\etav^\tau).
\end{equation}
\end{proposition}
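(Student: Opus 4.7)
The proof strategy: reduce to adjacent transpositions, then verify an equality term-by-term via an involutive bijection on the contraction set $\ccal_{m,n}$. By the composition law \eqref{eq:Scompose}, it suffices to prove the $S$-symmetry for an adjacent transposition $\pi = (i, i+1) \in \perms{m}$ (and, analogously, for adjacent $\tau \in \perms{n}$; the latter case follows by the symmetric argument exchanging the roles of $(\thetav, \sigma)$ and $(\etav, \rho)$ in the defining sum). For adjacent $\pi$, $S^\pi(\thetav) = S(\theta_{i+1} - \theta_i)$.

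Define an involutive bijection $C \mapsto C^\pi$ on $\ccal_{m,n}$ by replacing each pair $(l_j, r_j)$ of $C$ with $(\pi(l_j), r_j)$; this preserves $|C|$, the sign $(-1)^{|C|}$, and the right-vectors $\rvector{C^\pi}{\etav} = \rvector{C}{\etav}$. Reindexing \eqref{eq:fmndef} accordingly reduces the $S$-symmetry to the term-wise identity
\begin{equation*}
\delta_C S_C(\thetav,\etav)\, \hscalar{\lvector{C}{\thetav}}{A \rvector{C}{\etav}}
= S^\pi(\thetav)\, \delta_{C^\pi} S_{C^\pi}(\thetav^\pi,\etav)\, \hscalar{\lvector{C^\pi}{\thetav^\pi}}{A \rvector{C}{\etav}},
\end{equation*}
for each $C \in \ccal_{m,n}$. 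The identity $\delta_{C^\pi}(\thetav^\pi, \etav) = \delta_C(\thetav, \etav)$ is immediate from $\theta^\pi_{\pi(l_j)} = \theta_{l_j}$.

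The remainder I would verify by a case split on $k := |\{i, i+1\} \cap \{l_j\}_j|$. For $k = 0$: $C^\pi = C$, the Zamolodchikov relations give $\lvector{C}{\thetav^\pi} = S^\pi(\thetav)\,\lvector{C}{\thetav}$, contributing $S^\pi(\thetav)^{-1}$ to the matrix element, and $S_C(\thetav^\pi, \etav) = S_C(\thetav, \etav)$ by direct inspection of \eqref{eq:sc} --- the only $\theta_i, \theta_{i+1}$-dependent factors $S(\theta_{m_j} - \theta_{l_j})$ appear symmetrically for both $m_j = i$ and $m_j = i+1$, since both lie in $[l_j+1, r_j-1]$ (using $r_j > m \geq i+1$ and $l_j \notin \{i, i+1\}$), so they merely swap under the permutation. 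For $k \in \{1, 2\}$: a direct comparison of the surviving creator positions shows $\lvector{C^\pi}{\thetav^\pi} = \lvector{C}{\thetav}$, and the required factor $S^\pi(\thetav)$ must come entirely from the change $S_C \leadsto S_{C^\pi}$. Using Lemma \ref{lemma:scpermute} to factorize $\delta_C S_C = \delta_C\, S^{\sigma_C}(\thetav)\, S^{\rho_C}(\etav)$ and similarly for $C^\pi$, one checks directly from the definitions \eqref{eq:scpermutations} that $\sigma_C = \pi \circ \sigma_{C^\pi}$ as permutations of $\{1, \ldots, m\}$, while $\rho_{C^\pi} = \rho_C$ (up to the allowed reordering of pairs); the composition law \eqref{eq:Scompose} then yields $S^{\sigma_C}(\thetav) = S^\pi(\thetav)\, S^{\sigma_{C^\pi}}(\thetav^\pi)$, exactly matching the desired factor.

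The main technical obstacle is the $k = 2$ case, where the pair reassignment $(l_{j_1}, r_{j_1}), (l_{j_2}, r_{j_2}) \mapsto (l_{j_2}, r_{j_1}), (l_{j_1}, r_{j_2})$ alters the $l$-to-$r$ correspondence between $C$ and $C^\pi$; restoring a canonical $r$-ordering of the pairs of $C^\pi$ introduces compensating $S$-factors that cancel under the delta functions (identifying $\theta_{l_{j_k}}$ with $\eta_{r_{j_k} - m}$ where needed), with the independence of $S^{\sigma_{C^\pi}}$ and $S^{\rho_{C^\pi}}$ from such reorderings ensured by the remark after Lemma \ref{lemma:scpermute}.
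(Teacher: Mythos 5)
Your proof is correct and follows essentially the same route as the paper: after reducing to an adjacent transposition, you reindex the contraction sum by the map $C \mapsto C^\pi$ (which is exactly the paper's passage $C \to C'$ in its cases (b)–(d)), and then verify the resulting term-wise identity by splitting on how many of $\{i, i+1\}$ are contracted; your $k=1$ case simply merges the paper's (c) and (d). The only cosmetic difference is that you argue $S_C(\thetav^\pi,\etav) = S_C(\thetav,\etav)$ directly from the defining product \eqref{eq:sc} in the $k=0$ case, whereas the paper routes this through the $S^\sigma S^\rho$ factorization there as well.
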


\begin{proof}
We consider only the case $\tau = \operatorname{id}$; the arguments that we make for $\pi$ then apply to $\tau$ analogously. It also suffices to consider the case where $\pi$ is a transposition due to the representation property of $S^\pi$, see Sec.~\ref{sec:ssymm}.

In \eqref{eq:fmndef} we denote:
\begin{equation}\label{eq:cmesum}
\cme{m,n}{A}(\thetav,\etav )=\sum_{C\in \ccal_{m,n}} T_{C}(\thetav ,\etav ), \quad \text{where}\;
 T_{C}(\thetav ,\etav ):= (-1)^{|C|} \delta_C \, S_C \hscalar{ \lvector{C}{\thetav} }{ A \,\rvector{C}{\etav} }.
\end{equation}
We want to compute
\begin{equation}\label{Tpi}
 T_{C}(\thetav^{\pi},\etav)= (-1)^{|C|} \delta_C(\thetav^{\pi},\etav) \, S^{\sigma}(\thetav^{\pi}) \,
 S^{\rho}(\etav) \, \hscalar{ \lvector{C}{\thetav^\pi} }{ A \,\rvector{C}{\etav} },
\end{equation}
where $\thetav ^{\pi}=(\theta_{1},\ldots,\theta_{k+1},\theta_{k},\ldots,\theta_{m})$, and where we made use of Eq.~\eqref{Scontrperm}, namely $\sigma,\rho$ are the permutations corresponding to $C$ by Lemma~\ref{lemma:scpermute}.
We distinguish the four cases where each of the indices $k$ and $k+1$ can be either contracted or non-contracted in $C$.

\begin{enumerate}[(a)]

 \item Consider the case where both $k$ and $k+1$ are not contracted. Since $\delta_{C}$ depends only on the contracted variables, we have  $\delta_{C}(\thetav^{\pi},\etav)=\delta_{C}(\thetav,\etav)$. Moreover, $\hscalar{ \lvector{C}{\thetav^\pi} }{ A \,\rvector{C}{\etav} } = S(\theta_{k}-\theta_{k+1}) \hscalar{ \lvector{C}{\thetav} }{ A \,\rvector{C}{\etav} }$ due to the exchange relations of the Zamolodchikov algebra.
 As for the factor $S^\sigma(\thetav^\pi)$:
 Since $\pi$ is a transposition, then for all $1\leq j\leq |C|$ we have that either $k,k+1> l_{j}$ or $k,k+1< l_{j}$. In both cases it follows from \eqref{eq:scpermutations} that $S^{\sigma}(\thetav^{\pi})=S^{\sigma}(\thetav)$.
 In total, we obtain
\begin{equation}\label{eq:tcpi-a}
S^\pi(\thetav) T_{C}(\thetav^{\pi},\etav)= T_{C}(\thetav,\etav).
\end{equation}

\item Consider the case where both $k$ and $k+1$ are contracted. Given $C=(m,n,$\\$\{(l_{1},r_{1}),\ldots,(k,r),(k+1,r'),\ldots,(l_{|C|},r_{|C|})\})$, let $C':=(m,n,\{(l_{1},r_{1}),\ldots,(k,r'),(k+1,r),\ldots,(l_{|C|},r_{|C|})\})$. Then we have that $\delta_C(\thetav^\pi,\etav) = \delta_{C'}(\thetav,\etav)$ and also, $\lvector{C}{\thetav^\pi}=\lvector{C}{\thetav}=\lvector{C'}{\thetav}$, since both $\delta_C$ and $\lvector{C}{\cdotarg}$ do not depend on the contracted variables. Regarding the S-factors, we write using Eq.~\eqref{eq:Scompose},
    \begin{equation} \label{eq:spifactor}
    S^{\sigma}(\thetav^{\pi}) = S^{\pi \sigma}(\thetav)(S^{\pi}(\thetav))^{-1}
    \end{equation}
The permutation $ \pi \circ \sigma$ is given by
    \begin{equation}
    \pi \circ \sigma=\begin{pmatrix}1 && \ldots && m \\
    1 && \ldots \; k+1 \; k\;  \ldots && m\\
    1 & \ldots \hat r \ldots & m & r_1 &\ldots \; k+1 \; k\; \ldots & r_k\end{pmatrix}
    \end{equation}

    One finds that $\pi \circ \sigma$ corresponds to $C'$ above in the sense of Eq.~\eqref{Scontrperm}, with the same $\rho$ for both $C$ and $C'$. Combining all this into \eqref{Tpi}, we obtain
    \begin{equation}\label{eq:tcpi-b}
    S^\pi(\thetav) T_{C}(\thetav^{\pi},\etav)=T_{C'}(\thetav,\etav).
    \end{equation}
    Note that the contraction $C'$ is again of type (b).

    \item Consider the case where $k$ is contracted, but $k+1$ is not contracted.
    Given \\ $C=( m,n,\{(l_{1},r_{1}) \ldots (k,r) \ldots (l_{|C|},r_{|C|})\})$,
    let \\ $C':=(m,n,\{(l_{1},r_{1})\ldots (k+1,r)\ldots (l_{|C|},r_{|C|}) \})$. Then, we have that $\delta_C(\thetav^\pi,\etav) = \delta_{C'}(\thetav,\etav)$      and $\lvector{C}{\thetav^\pi}=\lvector{C'}{\thetav}$. Moreover, using Eq.~\eqref{eq:Scompose}, we write $S^{\sigma}(\thetav^{\pi}) = S^{\pi   \sigma}(\thetav) S^\pi(\thetav)^{-1}  $ where the permutation $\pi\circ \sigma$, which is given by
    \begin{equation}
    \pi\circ \sigma= \begin{pmatrix}1 && \ldots && m \\
    1 && \ldots\; k+1\; k\;  \ldots && m \\
    1 & \ldots \hat r\; k\; \widehat{k+1}\; \ldots & m & r_1 &\ldots \;k+1\; \ldots & r_k\end{pmatrix},
    \end{equation}
corresponds to $C'$ in the sense of Eq.~\eqref{Scontrperm}, with the same $\rho$ for both $C$ and $C'$. Combining all in \eqref{Tpi}, we arrive at
    \begin{equation}\label{eq:tcpi-c}
    S^\pi(\thetav) T_{C}(\thetav^{\pi},\etav)= T_{C'}(\thetav,\etav).
    \end{equation}

  \item  $k+1$ is contracted, but $k$ is not contracted. This case is analogous to (c); indeed, the contraction $C'$ in (c) is exactly of type (d).

\end{enumerate}

Summing over all contractions $C$ in \eqref{eq:cmesum}, we obtain from \eqref{eq:tcpi-a}, \eqref{eq:tcpi-b}, \eqref{eq:tcpi-c} that $S^\pi(\thetav)\cme{m,n}{A}(\thetav^\pi,\etav ) =  \cme{m,n}{A}(\thetav,\etav )$ as claimed. The details of this summation argument are as follows.

Using the above results in (a), (b), (c), (d), we compute $\sum_{C\in\ccal_{m,n}}T_{C}(\thetav ^{\pi},\etav )$ and we find:
\begin{equation}
\begin{aligned}
\cme{m,n}{A}(\thetav^\pi ,\etav ) &=\sum_{C\in\ccal_{m,n}}T_{C}(\thetav ^{\pi},\etav )\\
&=\sum_{C\in \ccal^{(a)}_{m,n}}T_{C}(\thetav ^{\pi},\etav )\;+\sum_{C\in \ccal^{(b)}_{m,n}}T_{C}(\thetav ^{\pi},\etav )\;+\sum_{C\in {\ccal}^{(c)}_{m,n}}T_{C}(\thetav ^{\pi},\etav )\;+\sum_{C\in \ccal^{(d)}_{m,n}}T_{C}(\thetav ^{\pi},\etav )\\
&=(S^{\pi})^{-1}\sum_{C\in \ccal^{(a)}_{m,n}}T_{C}(\thetav ,\etav )\;+\;(S^{\pi})^{-1}\sum_{C\in \ccal^{(b)}_{m,n}}T_{C'}(\thetav ,\etav )\\
&\quad +\;(S^{\pi})^{-1}\sum_{C\in {\ccal}^{(c)}_{m,n}}T_{C'}(\thetav ,\etav )\;+\; (S^{\pi})^{-1}\sum_{C\in {\ccal}^{(d)}_{m,n}}T_{C'}(\thetav,\etav)\\
&=(S^{\pi})^{-1}\Big( \sum_{C\in\ccal^{(a)}_{m,n}}T_{C}(\thetav,\etav)\;+\;\sum_{C'\in{\ccal}^{(b)}_{m,n}}T_{C'}(\thetav,\etav)\\
&\quad +\;\sum_{C'\in\ccal^{(d)}_{m,n}}T_{C'}(\thetav,\etav)+\sum_{C'\in{\ccal}^{(c)}_{m,n}}T_{C'}(\thetav,\etav)\Big)\\
&=(S^{\pi})^{-1}\Big( \sum_{C\in\ccal^{(a)}_{m,n}}T_{C}(\thetav,\etav)\;+\;\sum_{C\in{\ccal}^{(b)}_{m,n}}T_{C}(\thetav,\etav)\\
&\quad +\;\sum_{C\in\ccal^{(d)}_{m,n}}T_{C}(\thetav,\etav)+\sum_{C\in{\ccal}^{(c)}_{m,n}}T_{C}(\thetav,\etav)\Big)\\
&=(S^{\pi})^{-1}\sum_{C\in\ccal_{m,n}}T_{C}(\thetav ,\etav ).
\end{aligned}
\end{equation}
where in the sum over $C\in {\ccal}^{(d)}_{m,n}$ in the third equality we have that $C'$ is defined correspondingly as in (c).
\end{proof}

\section{Inversion formula for matrix elements} \label{sec:fmninv}

In Eq.~\eqref{eq:fmndef}, we defined  $\cme{m,n}{A}$ as a certain sum over matrix elements of $A$. We can now invert this formula in the sense given by the following Proposition.
\begin{proposition}\label{proposition:fmninversion}
For any $A \in \qf^{\omega}$,
\begin{equation}\label{eq:fmninverse}
\hscalar{ \lvector{}{\thetav} }{ A\, \rvector{}{\etav} }
= \sum_{C\in \ccal_{m,n}} \delta_C \,S_C  \, f_{m-|C|,n-|C|}^{[A]}( \hat \thetav, \hat \etav).
\end{equation}
\end{proposition}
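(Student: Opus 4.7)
The plan is to use a standard Möbius-inversion argument on the Boolean lattice of contractions. Writing $M_C := \hscalar{\lvector{C}{\thetav}}{A\,\rvector{C}{\etav}}$, the definition \eqref{eq:fmndef} reads $f^{[A]}_{m,n} = \sum_{C} (-1)^{|C|} \delta_C S_C M_C$, and \eqref{eq:fmninverse} asserts $M_\emptyset = \sum_C \delta_C S_C f^{[A]}_{m-|C|,n-|C|}$. I would substitute the definition of $f^{[A]}$ into the right-hand side of \eqref{eq:fmninverse}, combine contractions via Lemma~\ref{lemma:contractcompose}, re-index by the composed contraction, and collapse the resulting sum using the binomial identity $\sum_{k=0}^{d}\binom{d}{k}(-1)^{d-k}=\delta_{d,0}$.

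Concretely, inserting \eqref{eq:fmndef} into the RHS of \eqref{eq:fmninverse} produces the double sum
$$\sum_{C\in\ccal_{m,n}} \sum_{C'\in\ccal_{m-|C|,n-|C|}} (-1)^{|C'|}\, \delta_C\,\delta_{C'}\, S_C\, S_{C'}\; \hscalar{\lvector{C'}{\hat\thetav}}{A\,\rvector{C'}{\hat\etav}}.$$
By Lemma~\ref{lemma:contractcompose} one has $\delta_C\delta_{C'}S_C S_{C'}=\delta_{C\dot\cup C'}S_{C\dot\cup C'}$; moreover, since $\lvector{C'}{\hat\thetav}$ is obtained from $\thetav$ by first omitting the indices belonging to $C$ and then those of $C'$ (with the renumbering built into the composition), it coincides with $\lvector{C\dot\cup C'}{\thetav}$, and analogously for $\rvector{}{}$.

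The next step is to re-organise the double sum as a single sum over the composed contraction $D:=C\dot\cup C'\in\ccal_{m,n}$. For a fixed $D$ with $|D|=d$, the number of pairs $(C,C')$ with $C\dot\cup C'=D$ and $|C|=k$ equals $\binom{d}{k}$, namely the number of ways to select $k$ pairs of $D$ to play the role of $C$ (the remaining $d-k$ pairs being identified with $C'$ modulo the renumbering). This yields
$$\sum_{D\in\ccal_{m,n}} \delta_D\,S_D\; \hscalar{\lvector{D}{\thetav}}{A\,\rvector{D}{\etav}} \; \sum_{k=0}^{|D|}\binom{|D|}{k}(-1)^{|D|-k},$$
and the inner sum equals $(1-1)^{|D|}=\delta_{|D|,0}$ by the binomial theorem. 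Hence only the term $D=\emptyset$ survives, giving $\hscalar{\lvector{}{\thetav}}{A\,\rvector{}{\etav}}$, which is exactly the left-hand side of \eqref{eq:fmninverse}.

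The argument contains no genuine analytic difficulty: the distributions $f^{[A]}_{m,n}$ are well-defined in $\dcal(\rbb^{m+n})'$ by Proposition~\ref{proposition:fmnbound}, and the products appearing in the resummation are exactly of the same type as in \eqref{eq:fmndef}, so well-definedness is inherited. The only point that requires care is the bookkeeping underlying the claim $\lvector{C'}{\hat\thetav}=\lvector{C\dot\cup C'}{\thetav}$ together with the count $\binom{|D|}{k}$ of decompositions; both follow directly from the definition of $\dot\cup$ and the implicit renumbering convention, so no topological or analytic estimate enters.
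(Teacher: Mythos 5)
Your proposal is correct and follows essentially the same route as the paper: insert the definition of $\cme{m,n}{A}$ into the right-hand side, combine contractions via Lemma~\ref{lemma:contractcompose}, re-index by $D=C\dot\cup C'$, and collapse the inner sum $\sum_{D=C\dot\cup C'}(-1)^{|C'|}$ using the binomial theorem to get $\delta_{|D|,0}$. The only cosmetic difference is that you sum over $k=|C|$ rather than $j=|C'|$, which is the same sum after reflecting the binomial coefficient.
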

(Here $\hat\thetav,\hat\etav$ denotes the variables which are obtained from $\thetav,\etav$ by leaving out the components which are contracted in $C$.)
\begin{proof}
Inserting \eqref{eq:fmndef} into the right-hand side of \eqref{eq:fmninverse}, we need to show:
\begin{equation}\label{insert}
\hscalar{ \lvector{}{\thetav} }{A  \rvector{}{\etav} }
 =\sum_{C\in\ccal_{m,n}}\delta_{C}S_C(\thetav,\etav )
 \;\;\sum_{\mathclap{C'\in\ccal_{m-|C|,n-|C|}}}\;\; (-1)^{|C'|}\delta_{C'}S_{C'}(\hat{\thetav },\hat\etav)
 \hscalar{ \lvector{C \dot\cup C'}{\thetav} }{A \rvector{C \dot\cup C'}{\etav} }.
\end{equation}
Using Lemma~\ref{lemma:contractcompose}, we find
\begin{equation}
\rhs{insert} = \;\; \sum_{\mathclap{\substack{C\in\ccal_{m,n} \\ C'\in\ccal_{m-|C|,n-|C|}}}} \;\;
(-1)^{|C'|}\delta_{C\dot{\cup}C'}S_{C\dot{\cup}C'}(\thetav,\etav )\; \hscalar{ \lvector{C \dot\cup C'}{\thetav} }{A \rvector{C \dot\cup C'}{\etav} }.
\end{equation}
We denote $D:=C\dot{\cup}C'$; then, we can reorganize the sum over $C$ and $C'$ in the following way:
\begin{equation}\label{eq:dsum}
\rhs{insert}=\sum_{D\in\ccal_{m,n}}\Big(\sum_{D=C\dot{\cup}C'}(-1)^{|C'|}\Big)\delta_{D}S_{D}
\hscalar{ \lvector{D}{\thetav} }{A \rvector{D}{\etav} }.
\end{equation}
We compute the inner sum in the above formula (at fixed $D$) using the \emph{binomial formula}:
\begin{equation}
\sum_{D=C\dot{\cup}C'}(-1)^{|C'|}=\sum_{j=0}^{|D|}(-1)^{j}\binom{|D|}{j}=\begin{cases} 0 \quad&\mathrm{if}\quad |D|\geq 1,\\ 1 &\mathrm{if}\quad |D|=0,\end{cases}
\end{equation}
Hence, we find that the right hand side of \eqref{eq:dsum} gives $\hscalar{ \lvector{}{\thetav} }{ A\, \rvector{}{\etav} }$ as claimed.
\end{proof}

\section{Basis property} \label{sec:fmnbasis}

A next step which is useful in order to establish a series expansion for any quadratic form $A \in \qf^{\omega}$ in terms of the $\cme{m,n}{A}$ is to prove that the $z^{\dagger m}(\thetav)z^n(\etav)$ form a ``dual basis'' for the contracted matrix elements $\cme{m,n}{A}$.

\begin{proposition}\label{proposition:fmnbasis}
In the sense of distributions, there holds:
\begin{equation}\label{basisprop}
\cmelong{m,n}{z^{\dagger m'}(\thetav')z^{n'}(\etav')} (\thetav,\etav)
= m!n! \delta_{m,m'}\delta_{n,n'}
\operatorname{Sym}_{S,\thetav} \delta^m(\thetav-\thetav')
\operatorname{Sym}_{S,\etav}\delta^n(\etav-\etav').
\end{equation}

\end{proposition}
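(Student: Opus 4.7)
The plan is to substitute $A = z^{\dagger m'}(\thetav')z^{n'}(\etav')$ directly into the definition \eqref{eq:fmndef} and evaluate the resulting vacuum matrix elements by iterated application of the Zamolodchikov relations \eqref{zamoloalgebra}. Counting particle numbers in $\lvector{C}{\thetav}$ versus $z^{\dagger m'}(\thetav')z^{n'}(\etav')\rvector{C}{\etav}$ gives a first selection rule: the matrix element vanishes unless $m-m' = n-n'$. Under this constraint, moving the $z$-operators past the $z^\dagger$-operators towards the vacuum expresses each matrix element as a sum over ``deformed pairings'' between the $z$'s and $z^\dagger$'s in the string, each pairing producing a product of delta distributions together with $S$-factors coming from the transpositions.

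The central step is to organize the resulting double sum (over $C \in \ccal_{m,n}$ and over pairings) by identifying, within each pairing, the sub-collection $C'$ of pairings that directly connect a $z$ from $\lvector{C}{\thetav}$ with a $z^\dagger$ from $\rvector{C}{\etav}$. Reindexing at fixed total pairing and using the alternating-sign identity $\sum_{j=0}^{|C'|}(-1)^j\binom{|C'|}{j} = \delta_{|C'|,0}$ --- exactly the binomial cancellation used in the proof of Proposition~\ref{proposition:fmninversion} --- forces all surviving terms to have $C' = \emptyset$. Consequently every external $z(\theta_i)$ must be contracted with a $z^\dagger(\theta'_j)$ from $A$ and every external $z^\dagger(\eta_i)$ with a $z(\eta'_j)$; for this to be possible we need $m = m'$ and $n = n'$, producing the Kronecker deltas $\delta_{m,m'}\delta_{n,n'}$.

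In the matching case $m=m'$, $n=n'$, the surviving pairings are parametrized by pairs of permutations $(\sigma,\rho)\in \perms{m}\times \perms{n}$, each contributing a term proportional to $S^\sigma(\thetav)S^\rho(\etav)\prod_i\delta(\theta_i-\theta'_{\sigma(i)})\prod_j\delta(\eta_j - \eta'_{\rho(j)})$. The identification of the accumulated $S$-factors as $S^\sigma, S^\rho$ relies on Lemma~\ref{lemma:scpermute} together with the composition law \eqref{eq:Scompose}. Summing over $(\sigma,\rho)$ and invoking the definition \eqref{Symf} of the $S$-symmetrization (with \eqref{symstheta} used to match symmetrization of $\delta^m(\thetav-\thetav')$ in either the $\thetav$ or $\thetav'$ variables) then yields precisely $m!\,n! \cdot \operatorname{Sym}_{S,\thetav}\delta^m(\thetav-\thetav') \cdot \operatorname{Sym}_{S,\etav}\delta^n(\etav-\etav')$. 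The main obstacle I anticipate is the careful bookkeeping of the $S$-factors generated by the repeated Zamolodchikov exchanges, and in particular verifying that they combine with the prefactor $S_C$ from \eqref{eq:fmndef} to give exactly the clean permutation factors $S^\sigma$, $S^\rho$ needed both for the binomial cancellation in the central step and for the final symmetrized delta.
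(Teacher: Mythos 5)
Your outline is correct in spirit but takes a genuinely different route from the paper. The paper does \emph{not} carry out a direct Wick computation of the contracted matrix elements; instead it splits into cases on $k := m-m' = n-n'$. For $k<0$ and $k=0$ the claim is checked directly, and for $k>0$ it runs an induction on $k$ driven by the inversion formula of Proposition~\ref{proposition:fmninversion}: inserting the induction hypothesis into the inversion formula reduces the claim to an explicit identity between $\hscalar{\lvector{}{\thetav}}{A\rvector{}{\etav}}$ and a sum over contractions of length $k$, which is then verified by choosing a reference contraction $C_0$ and pushing everything through $S$-symmetrizations. Your approach bypasses the inversion formula entirely and cancels the ``direct'' $\theta$--$\eta$ pairings against the sign $(-1)^{|C|}$ inside the definition \eqref{eq:fmndef}; both arguments rest on the same binomial identity $\sum_j(-1)^j\binom{d}{j}=\delta_{d,0}$, but you apply it once and for all rather than through the intermediary of Proposition~\ref{proposition:fmninversion}. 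If carried out, your version arguably treats all $k$ at once rather than in three cases. Two cautions: first, a small imprecision --- after grouping by the total contraction $D=C\,\dot\cup\,C'$, a particle-number count forces $|D|=k$ for every nonvanishing term, so the binomial identity sets $\delta_{|D|,0}=\delta_{k,0}$; it is $D$ (hence $k$) that is forced to zero, and $C'=\emptyset$ is a consequence, not the primitive conclusion. Second, and this is the real work you've deferred, you must check that the $S$-factors from $S_C$ in \eqref{eq:fmndef}, from the Zamolodchikov exchanges, and from the Wick pairings combine into a quantity depending only on $D$ and on the $\theta$--$\theta'$, $\eta$--$\eta'$ assignment (not on the split $D=C\,\dot\cup\,C'$), since otherwise the binomial sum does not factor out. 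This is precisely the kind of bookkeeping Lemma~\ref{lemma:contractcompose} is designed for, and the paper avoids it at this point by offloading the combinatorics to Proposition~\ref{proposition:fmninversion}.
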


\begin{proof}
We consider $A:=z^{\dagger m'}(\thetav ')z^{n'}(\etav ')$. If $m-m'\neq n-n'$, then all matrix elements of $A$ in \eqref{eq:fmndef} vanish, hence $\cme{m,n}{A} =0$ and the claim follows. Therefore, we consider in the following $k:=m-m'=n-n'$.

If $k<0$, then all the matrix elements in \eqref{eq:fmndef} vanish again, and we have $\cme{m,n}{A} =0$; hence, the claim follows.

If $k=0$, then we can directly compute that
\begin{multline}\label{eq:k0case}
\cme{m,n}{A}(\thetav,\etav)
=\hscalar{ z^{\dagger m}(\thetav )\Omega }{ z^{\dagger m}(\thetav') z^{n}(\etav') J z^{\dagger n}(\etav) \Omega}
\\
=\hscalar{ z^{\dagger m}(\thetav )\Omega }{ z^{\dagger m}(\thetav ')\Omega} \hscalar{ \Omega}{z^{n}(\etav ')J z^{\dagger n}(\etav )\Omega}
=m!n!\operatorname{Sym}_{S,\thetav} \delta^m(\thetav-\thetav')\operatorname{Sym}_{S,\etav}\delta^ n(\etav-\etav'),
\end{multline}
So, again the claim follows.

If $k>0$, we prove that $\cme{m,n}{A} =0$ using induction on $k$. For given $k$, we can assume that this statement is true for some $k'$ in place of $k$, if $0<k'<k$. Now, in Prop.~\ref{proposition:fmninversion} we have $0<k_{C}<k$, and $m_{C}-m'=n_{C}-n'=: k_{C}$, $m_{C}:=m-|C|$, $n_{C}:=n-|C|$. Thus we can apply the induction hypothesis: $\cme{m_{C},n_{C}}{A} =0$. For the case $k_{C}=0$, \eqref{eq:k0case} applies; for the case $k_{C}<0$, the above argument for $k<0$ applies.

Therefore, from Prop.~\ref{proposition:fmninversion}, we have
\begin{multline}\label{eq:mtxk}
\hscalar{ \lvector{}{\thetav} }{ A \rvector{}{\etav}}
=\sum_{C\in \ccal_{m,n}}\delta_{C}S_Cf_{m-|C|,n-|C|}^{[A]}(\hat{\thetav },\hat{\etav })\\
=\cme{m,n}{A}(\thetav ,\etav )
  +\sum_{\substack{C\in \ccal_{m,n}  \\ |C|=k}}\delta_{C}S_C
   m'!n'!\operatorname{Sym}_{S,\hat{\thetav}} \delta^{m'}(\hat{\thetav }-\thetav ')\operatorname{Sym}_{S,\hat{\etav}} \delta^{ n'}(\hat{\etav }-\etav ').
\end{multline}
Using also \eqref{symstheta}, it therefore suffices to show that
\begin{equation}\label{matrixd}
\hscalar{ \lvector{}{\thetav} }{ A \rvector{}{\etav}}
=
m'!n'! \operatorname{Sym}_{S^{-1},\thetav'} \operatorname{Sym}_{S^{-1},\etav'}
 \sum_{\substack{C\in \ccal_{m,n}  \\ |C|=k}}\delta_{C}S_C \delta^{m'}(\hat{\thetav }-\thetav ') \delta^{n'}(\hat{\etav }-\etav ').
\end{equation}
Since both $\hscalar{\lvector{}{\thetav}}{A\rvector{}{\etav}}$ and $\cme{m,n}{A}$ are $S$-symmetric in the variables $\thetav,\etav$, we know from \eqref{eq:mtxk} that the right hand side of \eqref{matrixd} must be $S$-symmetric too; we can therefore take the $S$-symmetric part of each term in the sum.
\begin{multline}
\text{r.h.s.}\eqref{matrixd}\\
=
\operatorname{Sym}_{S,\thetav} \operatorname{Sym}_{S,\etav}\Big(m'!n'! \operatorname{Sym}_{S^{-1},\thetav'} \operatorname{Sym}_{S^{-1},\etav'}
 \sum_{\substack{C\in \ccal_{m,n}  \\ |C|=k}}\delta_{C}S_C \delta^{m'}(\hat{\thetav }-\thetav ') \delta^{n'}(\hat{\etav }-\etav ')\Big).
\end{multline}
We rewrite $S_C$ as $S^\sigma S^\rho$, where $\sigma$ and $\rho$ are the permutations corresponding to $C$ by Eq.~\eqref{Scontrperm}. For the moment let us consider a single term in the above sum, that implies:
\begin{multline}
 \operatorname{Sym}_{S,\thetav} \operatorname{Sym}_{S,\etav}\Big(\delta_{C}S_{C}m'!n'!\operatorname{Sym}_{S^{-1},\thetav'} \delta^{m'}(\hat{\thetav }-\thetav ')\operatorname{Sym}_{S^{-1},\etav'}\delta^{ n'}(\hat{\etav }-\etav ')\Big)\\
=\operatorname{Sym}_{S,\thetav} \operatorname{Sym}_{S,\etav}\Big(\delta_{C}S^{\sigma}(\thetav)S^{\rho}(\etav)m'!n'!\operatorname{Sym}_{S^{-1},\thetav'} \delta^{m'}(\hat{\thetav }-\thetav ')\operatorname{Sym}_{S^{-1},\etav'}\delta^{ n'}(\hat{\etav }-\etav ')\Big).\label{singtermsum}
\end{multline}
Let us consider the contraction  $C_{0}:=\{ (m,1),(m-1,2),\ldots (m-k,k+1) \}$ corresponding to $S_{C_{0}}=1$, and therefore (in the sense of formula \eqref{Scontrperm}) to $S^{\sigma_{C_{0}}}=S^{\rho_{C_{0}}}=1$.
We write a generic contraction $C$ of length $|C|=k$ as a permutation acting on the contraction $C_{0}$.
In particular, $\delta_{C}$ in the formula above rewrites in terms of $C_{0}$ as: $\delta_{C}(\thetav,\etav)=\prod_{j=1}^{|C|}\delta(\theta_{l_{j}}-\eta_{r_{j}-m})=\prod_{\substack{ i>m-|C| \\ j<|C| }}\delta(\tilde{\theta}_{i}-\tilde{\eta}_{j})\big\vert_{\substack{  \tilde{\theta}=\theta^{\sigma}\\\tilde{\eta}=\eta^{\rho} }} = \delta_{C_{0}}(\thetav^{\sigma},\etav^{\rho})$. Hence, each piece in \eqref{matrixd} becomes
\begin{multline}
\text{l.h.s.}\eqref{singtermsum}=m'!n'!\operatorname{Sym}_{S,\thetav} \operatorname{Sym}_{S,\etav}\\
\Big(\delta_{C_{0}}(\thetav^{\sigma},\etav^{\rho})S^{\sigma}(\thetav) S^{\rho}(\etav)\operatorname{Sym}_{S^{-1},\thetav'} \delta^{m'}(\widehat{\thetav^{\sigma} }^{C_{0}}-\thetav ')\operatorname{Sym}_{S^{-1},\etav'}\delta^{ n'}(\widehat{\etav^{\rho} }^{C_{0}}-\etav ')\Big).
\end{multline}
where $\widehat{\thetav}^{C_{0}}$, $\widehat{\etav}^{C_{0}}$ indicate that the variables are contracted in $C_{0}$.

Then, using the formula $\operatorname{Sym}_{S,\thetav}(S^{\sigma}(\thetav)g(\thetav^{\sigma}))=\operatorname{Sym}_{S,\thetav} (g)$ to simplify the expression, we find
\begin{multline}
\text{l.h.s.}\eqref{singtermsum}=
m'!n'!\operatorname{Sym}_{S,\thetav} \operatorname{Sym}_{S,\etav} \\
\Big(\prod_{j=1}^{|C_{0}|}\delta_{m-j+1,m+j}(\thetav,\etav) \operatorname{Sym}_{S^{-1},\thetav'}  \prod_{j=1}^{m-|C_{0}|}\delta(\theta_{j}-\theta'_{j}) \operatorname{Sym}_{S^{-1},\etav'}  \prod_{j=|C_{0}|+1}^{n}\delta(\eta_{j}-\eta_{j-|C_{0}|}') \Big).
\end{multline}
In view of \eqref{symstheta} the symmetrization in $\thetav',\etav'$ can be dropped in favour of the symmetrization in $\thetav,\etav$, we find
\begin{multline}
\text{l.h.s.}\eqref{singtermsum}\\
=m'!n'!\operatorname{Sym}_{S,\thetav}\operatorname{Sym}_{S,\etav} \Big(\prod_{j=1}^{|C_{0}|}\delta_{m-j+1,m+j}(\thetav,\etav) \prod_{j=1}^{m-|C_{0}|}\delta(\theta_{j}-\theta'_{j})   \prod_{j=|C_{0}|+1}^{n}\delta(\eta_{j}-\eta_{j-|C_{0}|}') \Big).
\end{multline}
Hence, we have
\begin{multline}
\text{r.h.s.}\eqref{matrixd}
=\sum_{\substack{C\in \ccal_{m,n}  \\ |C|=k}}m'!n'!\operatorname{Sym}_{S,\thetav}\operatorname{Sym}_{S,\etav} \\ \Big(\prod_{j=1}^{|C_{0}|}\delta_{m-j+1,m+j}(\thetav,\etav) \prod_{j=1}^{m-|C_{0}|}\delta(\theta_{j}-\theta'_{j})   \prod_{j=|C_{0}|+1}^{n}\delta(\eta_{j}-\eta_{j-|C_{0}|}') \Big).
\end{multline}
We find that the terms in the sum do not actually depend on $C$; the sum, which contains $\binom{n}{k}\binom{m}{k}k!$ terms, can then be computed:
\begin{equation}
m'!n'!\sum_{|C|=k}1=m'!n'!\left(\begin{array}{l} m \\k\end{array}\right)\left(\begin{array}{l} n \\k\end{array}\right)k!=\frac{(m'+k)!(n'+k)!}{m'!k!n'!k!}k!m'!n'!=\frac{m!n!}{k!}.
\end{equation}
Hence, we have
\begin{multline}
\text{r.h.s.}\eqref{matrixd}\\
=\frac{m!n!}{k!}\operatorname{Sym}_{S,\thetav}\operatorname{Sym}_{S,\etav} \Big(\prod_{j=1}^{|C_{0}|}\delta(\theta_{m-j+1}-\eta_{j}) \prod_{j=1}^{m-|C_{0}|}\delta(\theta_{j}-\theta'_{j})   \prod_{j=|C_{0}|+1}^{n}\delta(\eta_{j}-\eta_{j-|C_{0}|}') \Big).
\end{multline}
However, the left hand side of \eqref{matrixd} gives:
\begin{multline}
\hscalar{ \lvector{}{\thetav} }{ A \rvector{}{\etav}}
=\operatorname{Sym}_{S,\thetav}\operatorname{Sym}_{S,\etav}\\
 \Big(\prod_{j=1}^{m'}\delta(\theta'_{j},\theta_{j}) \prod_{i=1}^{n'}\delta(\eta'_{i}-\eta_{i+k})\prod_{l=1}^{m-m'}\delta(\theta_{l+m'}-\eta_{k-l+1}) \Big)\frac{n!}{(n-n')!}\frac{m!}{(m-m')!}k!.
\end{multline}
which is obtained by first computing the matrix element with unsymmetrized creators and annihilators, and using $(z(\eta)h)_{n}(\boldsymbol{\xi})=\sqrt{n-1}h_{n}(\eta,\hat{\boldsymbol{\xi}})$.

This shows \eqref{matrixd} and therefore concludes the proof.
\end{proof}

\section{Uniqueness of Araki expansion} \label{sec:expansionunique}

\begin{proposition}\label{proposition:expansionunique}
For any $m,n\in\nbb_0$, let $g_{mn}\in\dcal(\rbb^{m+n})'$ with $\gnorm{g_{mn}}{m \times n}^{\omega} < \infty$. Then,
\begin{equation}\label{arakig}
A := \sum_{m,n=0}^\infty \int \frac{d^m \theta  \,d^n \eta}{m!n!} \, g_{mn}(\thetav,\etav) \,z^{\dagger m}(\thetav)z^n(\etav)
\end{equation}
defines an element of $\mathcal{Q}^{\omega}$, and $\cme{m,n}{A}(\thetav,\etav) = \operatorname{Sym}_{S,\thetav} \operatorname{Sym}_{S,\etav} g_{mn}(\thetav,\etav)$.
\end{proposition}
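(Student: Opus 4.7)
The plan is to verify the two assertions separately: first, that the formal series defines a genuine element of $\qf^\omega$, i.e.\ that $\gnorm{A}{k}^\omega < \infty$ for every $k$; second, that its contracted matrix elements match the claimed $S$-symmetrization of $g_{mn}$, by combining the basis property in Prop.~\ref{proposition:fmnbasis} with a linearity argument.

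For the first assertion, fix $k \in \nbb_0$. The key observation is that $z^{\dagger m} z^n(g_{mn})$ maps $\Hil_j$ into $\Hil_{j-n+m}$ and annihilates $\Hil_j$ for $j<n$, so that $Q_k z^{\dagger m} z^n(g_{mn}) Q_k$ vanishes as soon as $m>k$ or $n>k$: for a nonzero contribution we need some $j \in [n,k]$ with $j-n+m \le k$, which forces $m \le k$. Thus the series for $Q_k A e^{-\omega(H/\mu)} Q_k$ is actually a finite sum over $m,n \in \{0,\ldots,k\}$, and Prop.~\ref{pro:zzdcrossnorm} yields
\begin{equation}
  \gnorm{Q_k A e^{-\omega(H/\mu)} Q_k}{} \leq \sum_{m,n=0}^{k} \frac{2}{m!\, n!} \, \frac{\sqrt{k!(k-n+m)!}}{(k-n)!} \, \onorm{g_{mn}}{m \times n} < \infty.
\end{equation}
The bound on $\gnorm{Q_k e^{-\omega(H/\mu)} A Q_k}{}$ follows in the same way after noting that the adjoint (as a quadratic form) of $z^{\dagger m} z^n(g_{mn})$ is $z^{\dagger n} z^m(g_{mn}^\ast)$ with $\onorm{g_{mn}^\ast}{n\times m} = \onorm{g_{mn}}{m\times n}$.

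For the second assertion, write $A = \sum_{m',n'} A_{m',n'}$ with $A_{m',n'}$ denoting a single summand in the series \eqref{arakig}. Each contracted matrix element $\hscalar{\lvector{C}{\thetav}}{A \rvector{C}{\etav}}$ entering \eqref{eq:fmndef} is evaluated between vectors of fixed particle numbers $m-|C|$ and $n-|C|$, so only finitely many $(m',n')$ (namely those with $m',n'\le \max(m,n)$) contribute. On this truncation, the assignment $A \mapsto \cme{m,n}{A}$ is linear, and the integration in $\thetav',\etav'$ commutes with the evaluation of $\cme{m,n}{\cdot}$ because Prop.~\ref{proposition:fmnbound} applied to $A_{m',n'}$ ensures the relevant norms are finite. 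The basis property of Prop.~\ref{proposition:fmnbasis} then forces $\cme{m,n}{A_{m',n'}} = 0$ whenever $(m',n')\neq(m,n)$, while for $(m',n')=(m,n)$ one obtains
\begin{equation}
   \cme{m,n}{A}(\thetav,\etav) = \int d^m\theta'\, d^n\eta' \, g_{mn}(\thetav',\etav')\; \operatorname{Sym}_{S,\thetav}\delta^m(\thetav-\thetav')\; \operatorname{Sym}_{S,\etav}\delta^n(\etav-\etav').
\end{equation}
Since the $S$-symmetrizations act on the external variables $\thetav,\etav$ only, they may be pulled outside the $\thetav',\etav'$ integrals, which then collapse via the delta distributions to give $\cme{m,n}{A}(\thetav,\etav) = \operatorname{Sym}_{S,\thetav}\operatorname{Sym}_{S,\etav}\, g_{mn}(\thetav,\etav)$, as claimed.

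The main subtlety is the distributional bookkeeping in the second step: because $g_{m'n'}$ is only a distribution with finite $\omega$-cross norm, the interchange of the $\thetav',\etav'$ integration with the contraction operation $A \mapsto \cme{m,n}{A}$ requires justification. This is supplied by the combination of Prop.~\ref{proposition:fmnbound} and Prop.~\ref{pro:zzdcrossnorm}, which together give a continuous estimate $\onorm{\cme{m,n}{A_{m',n'}}}{m\times n} \lesssim \onorm{A_{m',n'}}{m+n} \lesssim \onorm{g_{m'n'}}{m'\times n'}$ so that the finite-$(m',n')$ sum can safely be evaluated term by term; the truncation argument of the first paragraph already handled the outer convergence issue.
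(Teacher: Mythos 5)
Your proof is correct and follows essentially the same route as the paper: establish membership in $\qf^\omega$ by reducing each $\gnorm{A}{k}^\omega$ to a finite sum of terms bounded via Prop.~\ref{pro:zzdcrossnorm}, then compute $\cme{m,n}{A}$ by linearity, the dual-basis identity of Prop.~\ref{proposition:fmnbasis}, and collapse of the delta distributions. The only difference is that you spell out the truncation $m,n\le k$ and the justification for exchanging the contraction with the $\thetav',\etav'$-integration in more detail than the paper does, which is a welcome clarification rather than a change of method.
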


\begin{proof}
Since we want to show that $A$ is a quadratic form defined between finite particle number vectors, it is enough to consider \eqref{arakig} evaluated between vectors of finite particle number; hence, the sum on the right hand side of \eqref{arakig} is finite. By Prop.~\ref{pro:zzdcrossnorm}, we know that since $\onorm{g_{mn}}{m \times n}< \infty$, every summand is a well-defined quadratic form in $\mathcal{Q}^{\omega}$; and therefore the sum is in $\mathcal{Q}^{\omega}$ as well; the ``integral is finite'' due to the bound $\onorm{g_{mn}}{m \times n}< \infty$ (see this by computing the scalar product of $A$ between finite particle number vectors explicitly). It remains to show that $\cme{m,n}{A} = \operatorname{Sym}_S g_{mn}$:
\begin{multline}
\cme{m,n}{A}(\thetav ,\etav )=\sum_{m',n'\geq 0}\int \frac{d^{m'}\theta'd^{n'}\eta'}{m'!n'!}g_{m'n'}(\thetav ',\etav ')
\cmelong{m,n}{z^{\dagger m'}(\thetav ')z^{n'}(\etav ')}(\thetav,\etav)\\
=\sum_{m',n'\geq 0}\int \frac{d^{m'}\theta'd^{n'}\eta'}{m'!n'!}g_{m'n'}(\thetav ',\etav ')m!n! \delta_{m,m'}\delta_{n,n'}
\operatorname{Sym}_{S,\thetav} \delta^m(\thetav-\thetav')
\operatorname{Sym}_{S,\etav} \delta^ n(\etav-\etav')\\
=\operatorname{Sym}_{S,\thetav} \operatorname{Sym}_{S,\etav} g_{mn}(\thetav ,\etav ),
\end{multline}
where in the second equality we made use of Prop.~\ref{proposition:fmnbasis}.
\end{proof}

\section{Existence of the Araki expansion} \label{sec:expansionexist}

We can now show that any $A \in \qf^{\omega}$ can be expanded into a series with coefficients the $\cme{m,n}{A}$.
\begin{theorem}\label{theorem:arakiexpansion}
If $A \in \mathcal{Q}^{\omega}$, then in the sense of quadratic forms,
\begin{equation}\label{eq:arakiexp}
A = \sum_{m,n=0}^\infty \int \frac{d^m \theta \, d^n \eta}{m!n!} \,\cme{m,n}{A} (\thetav,\etav)
z^{\dagger m}(\thetav)z^ n(\etav).
\end{equation}
\end{theorem}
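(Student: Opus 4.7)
The plan is to construct the right-hand side of \eqref{eq:arakiexp} as a quadratic form $A'$ and then show $A = A'$ by comparing their contracted matrix elements.

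First, I would verify that the putative sum makes sense. By Proposition~\ref{proposition:fmnbound}, for each $m,n \in \nbb_0$ the coefficient $\cme{m,n}{A}$ satisfies $\onorm{\cme{m,n}{A}}{m \times n} \leq c_{mn} \onorm{A}{m+n} < \infty$. Therefore the choice $g_{mn} := \cme{m,n}{A}$ meets the hypothesis of Proposition~\ref{proposition:expansionunique}, and I may define
\begin{equation*}
A' := \sum_{m,n=0}^\infty \int \frac{d^m \theta\, d^n \eta}{m!n!}\, \cme{m,n}{A}(\thetav,\etav)\, z^{\dagger m}(\thetav) z^n(\etav) \in \mathcal{Q}^\omega,
\end{equation*}
and obtain moreover $\cme{m,n}{A'} = \operatorname{Sym}_{S,\thetav} \operatorname{Sym}_{S,\etav} \cme{m,n}{A}$. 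Applying Proposition~\ref{proposition:fmnsymm} to $\cme{m,n}{A}$, which is already $S$-symmetric in $\thetav$ and in $\etav$ separately, the symmetrizations act trivially, so $\cme{m,n}{A'} = \cme{m,n}{A}$ for all $m,n$.

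Next I would show that $A$ and $A'$ coincide as quadratic forms. Since they share the same contracted matrix elements, the inversion formula of Proposition~\ref{proposition:fmninversion} applied to both sides yields
\begin{equation*}
\hscalar{\lvector{}{\thetav}}{A\,\rvector{}{\etav}} = \sum_{C \in \ccal_{m,n}} \delta_C\, S_C\, \cme{m-|C|,n-|C|}{A}(\hat\thetav,\hat\etav) = \hscalar{\lvector{}{\thetav}}{A'\,\rvector{}{\etav}}
\end{equation*}
for every $m,n$, as identities of distributions in $\dcal(\rbb^{m+n})'$. Smearing against any test functions, this means
\begin{equation*}
\hscalar{\lvector{}{f}}{(A - A')\,\rvector{}{g}} = 0
\end{equation*}
for every $f \in \dcal^\omega(\rbb^m)$ and $g \in \dcal^\omega(\rbb^n)$ and every $m,n$.

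To conclude, I would invoke that vectors of the form $\lvector{}{f} = \zd(f_1)\cdots \zd(f_m) \Omega$ (and linear combinations thereof), with each $f_j$ in a suitable test function space with $\onorm{f_j}{2} < \infty$, are dense in $\fpno$. Indeed, on each fixed $n$-particle subspace $\Hil_n^\omega$ these vectors span the $S$-symmetrization of smooth $\omega$-integrable tensors, which is dense there, and $\fpno = \bigcup_k Q_k \Hil^\omega$ is the algebraic direct sum of such subspaces. Hence $A = A'$ as quadratic forms on $\fpno \times \fpno$, which is \eqref{eq:arakiexp}.

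The main technical point to verify is the density of $\{\lvector{}{f}\}$ in $\fpno$ in a topology strong enough to separate quadratic forms in $\mathcal{Q}^\omega$; the other steps are direct assemblies of the preceding propositions. A subtle but routine issue is that the sum defining $A'$ is infinite but, when evaluated on any $\psi, \chi \in \fpno$ of bounded particle number, only finitely many terms contribute (since $z^{\dagger m} z^n(f)$ shifts particle number by $m-n$ and is filtered by $Q_k$), so no convergence analysis beyond what Proposition~\ref{proposition:expansionunique} already provides is needed.
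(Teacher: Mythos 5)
Your proposal follows essentially the same route as the paper's proof: you check finiteness via Prop.~\ref{proposition:fmnbound}, construct the right-hand side via Prop.~\ref{proposition:expansionunique}, identify its Araki coefficients with those of $A$ using the $S$-symmetry from Prop.~\ref{proposition:fmnsymm}, and then pass to matrix elements via Prop.~\ref{proposition:fmninversion}. The one difference is that you spell out the final density step (that matching $\hscalar{\lvector{}{\thetav}}{\cdot\,\rvector{}{\etav}}$ suffices to equate quadratic forms on $\fpno\times\fpno$), which the paper leaves implicit; this is a welcome clarification but not a different argument.
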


\begin{proof}
According to Prop.~\ref{proposition:fmnbound}, since $A \in \mathcal{Q}^{\omega}$, we have that $\gnorm{\cme{m,n}{A}}{m\times n}^{\omega}<\infty$, thus by Prop.~\ref{proposition:expansionunique} the right-hand side of \eqref{eq:arakiexp} exists in $\qf^{\omega}$. To establish equality in \eqref{eq:arakiexp}, we need to show that both sides agree in all matrix elements. In view of Prop.~\ref{proposition:fmninversion}, it suffices to show that they agree in all $\cme{m,n}{\cdotarg}$; that is, we need to prove
\begin{equation}
\cme{m,n}{A}(\thetav ,\etav )=\sum_{m',n'\geq 0}\int \frac{d^{m'}\theta' d^{n'}\eta'}{m'!n'!}\cme{m',n'}{A}(\thetav ',\etav ')f_{mn}[z^{\dagger m'}z^{n'}](\thetav ,\etav ).
\end{equation}
But this is the case by Prop.~\ref{proposition:expansionunique}, with $g_{mn}=\cme{m,n}{A}$:
\begin{multline}
\cme{m,n}{A}(\thetav ,\etav )=\sum_{m',n'\geq 0}\int \frac{d^{m'}\theta' d^{n'}\eta'}{m'!n'!}\cme{m',n'}{A}(\thetav ',\etav ')f_{mn}[z^{\dagger m'}z^{n'}](\thetav ,\etav )\\
=\sum_{m',n'\geq 0}\int \frac{d^{m'}\theta' d^{n'}\eta'}{m'!n'!}\cme{m',n'}{A}(\thetav ',\etav ')m!n! \delta_{m,m'}\delta_{n,n'}
\operatorname{Sym}_S \delta^m(\thetav-\thetav')
\operatorname{Sym}_S\delta^ n(\etav-\etav')\\
=\cme{m,n}{A}(\thetav ,\etav ),
\end{multline}
where we have used that the $\cme{m,n}{A}$ are $S$-symmetric by Prop.~\ref{proposition:fmnsymm}.
\end{proof}

\section{Behavior of coefficients under translations and boosts}\label{sec:ftransboost}

The \emph{Araki coefficients} $\cme{m,n}{A}$ behave under Poincar\'e transformations as follows.
\begin{proposition}\label{proposition:fmnpoincare}
For any $A \in \mathcal{Q}^{\omega}$, $x \in \rbb^2$, and $\lambda \in \rbb$,
\begin{equation}\label{eq:fmnpoincare}
\cmelong{m,n}{U(x,\lambda)AU(x,\lambda)^\ast}(\thetav,\etav) = \exp\Big(
i\sum_{j=1}^{m}p(\theta_{j})\cdot x-i\sum_{i=1}^{n}p(\eta_{i})\cdot x
\Big) \cme{m,n}{A}(\thetav-\lambdav,\etav-\lambdav),
\end{equation}
where $\thetav-\lambdav = (\theta_1-\lambda, \ldots, \theta_m-\lambda)$, similarly for $\etav$.
\end{proposition}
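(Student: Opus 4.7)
The plan is to work directly from the definition \eqref{eq:fmndef} of the Araki coefficients and transport the Poincaré transformation through each ingredient. Writing $A' := U(x,\lambda) A U(x,\lambda)^\ast$, I will compute $\hscalar{\lvector{C}{\thetav}}{A' \rvector{C}{\etav}} = \hscalar{U(x,\lambda)^\ast\lvector{C}{\thetav}}{A\,U(x,\lambda)^\ast\rvector{C}{\etav}}$ and then reassemble the sum over contractions.

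First I would record the transformation law for the generators. From $(U(x,\lambda)\Psi)_n(\thetav) = \exp(i\sum_k p(\theta_k)\cdot x)\Psi_n(\thetav-\lambdav)$ and the definition of $\zd(\theta)$, a direct computation on $n$-particle wavefunctions gives
\begin{equation}\label{eq:zdtransf}
  U(x,\lambda)\zd(\theta)U(x,\lambda)^\ast = e^{ip(\theta+\lambda)\cdot x}\,\zd(\theta+\lambda),
\end{equation}
and consequently $U(x,\lambda)^\ast \zd(\theta) U(x,\lambda) = e^{-ip(\theta)\cdot x}\zd(\theta-\lambda)$. Since $U(x,\lambda)^\ast \Omega = \Omega$, inserting $U(x,\lambda)U(x,\lambda)^\ast$ between each $\zd$ in $\lvector{C}{\thetav}$ and $\rvector{C}{\etav}$ yields
\begin{align*}
 U(x,\lambda)^\ast \lvector{C}{\thetav} &= \Big(\prod_{j \notin \mathcal{L}} e^{-ip(\theta_j)\cdot x}\Big)\lvector{C}{\thetav-\lambdav},
\\
 U(x,\lambda)^\ast \rvector{C}{\etav} &= \Big(\prod_{i \notin \mathcal{R}} e^{-ip(\eta_i)\cdot x}\Big)\rvector{C}{\etav-\lambdav},
\end{align*}
where $\mathcal{L}=\{l_1,\ldots,l_{|C|}\}$ and $\mathcal{R}=\{r_1-m,\ldots,r_{|C|}-m\}$ are the contracted index sets. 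Taking the scalar product (which is antilinear in the left entry) produces the phase
\[
  \Big(\prod_{j\notin\mathcal{L}} e^{ip(\theta_j)\cdot x}\Big)\Big(\prod_{i\notin\mathcal{R}} e^{-ip(\eta_i)\cdot x}\Big)\,\hscalar{\lvector{C}{\thetav-\lambdav}}{A\,\rvector{C}{\etav-\lambdav}}.
\]

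Next I would use the delta distribution $\delta_C$ to promote the partial phase to the full phase. For each contracted pair $(l_j,r_j)$ the factor $\delta(\theta_{l_j}-\eta_{r_j-m})$ forces $e^{ip(\theta_{l_j})\cdot x}\,e^{-ip(\eta_{r_j-m})\cdot x}=1$, so multiplying the above by $\delta_C$ is the same as multiplying the full product $\prod_{j=1}^m e^{ip(\theta_j)\cdot x}\prod_{i=1}^n e^{-ip(\eta_i)\cdot x}$ by $\delta_C$. Pulling this $C$-independent factor out of the sum over $C$, I obtain
\[
 \cme{m,n}{A'}(\thetav,\etav) = e^{i\sum_j p(\theta_j)\cdot x - i\sum_i p(\eta_i)\cdot x}\sum_{C\in\ccal_{m,n}}(-1)^{|C|}\delta_C S_C(\thetav,\etav)\hscalar{\lvector{C}{\thetav-\lambdav}}{A\,\rvector{C}{\etav-\lambdav}}.
\]

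It remains to identify the sum on the right with $\cme{m,n}{A}(\thetav-\lambdav,\etav-\lambdav)$. This reduces to the observation that both $\delta_C$ and $S_C$ are invariant under a common boost shift of all rapidities: $\delta_C(\thetav,\etav)=\delta_C(\thetav-\lambdav,\etav-\lambdav)$ is immediate, and $S_C$, being built out of factors $S(\theta_a-\theta_b)$, $S(\eta_a-\eta_b)$, $S(\theta_a-\eta_b)$ (cf.\ \eqref{eq:sc}), depends only on differences and is therefore invariant under the common shift. Substituting yields exactly \eqref{eq:fmnpoincare}. The argument is essentially mechanical; the only subtlety worth flagging is the delta-trick that converts the partial product of phases (over non-contracted indices) into the full product appearing in the target formula.
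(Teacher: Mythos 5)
Your proof is correct and follows essentially the same route as the paper: apply $U(x,\lambda)^\ast$ to $\lvector{C}{\cdotarg}$ and $\rvector{C}{\cdotarg}$, note the invariance of $\delta_C$ and $S_C$ under the common boost shift, and reassemble. In fact you are slightly more careful than the paper at one step: the paper pulls the partial phase $\exp(i\sum_{t\notin\{l_j\}}p(\theta_t)\cdot x - i\sum_{k\notin\{r_j\}}p(\eta_{k-m})\cdot x)$ outside the sum over $C$ even though it visibly depends on the contracted index sets of $C$; your delta-trick, which uses $\delta_C$ to replace the partial product of phases by the full $C$-independent product before pulling it out, is exactly the justification the paper leaves implicit.
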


\begin{proof}
From the definition \eqref{eq:fmndef}, we have
\begin{equation}
\cmelong{m,n}{U(x,\lambda)AU(x,\lambda)^{*}}(\thetav ,\etav )=\sum_{C\in \ccal_{m,n}} (-1)^{|C|} \delta_C \, S_C \,
\hscalar{ \lvector{C}{\thetav}}{ U(x,\lambda)A U(x,\lambda)^{*} \rvector{C}{\etav} }.
\end{equation}
Since $U(x,\lambda)^\ast\rvector{C}{\etav} = \exp(-i \sum_{k \not\in \{r_j\}} p(\eta_{k-m})\cdot x) \,\rvector{C}{\etav-\lambdav}$, similarly for $\lvector{C}{\thetav}$, we find
\begin{multline}
\cmelong{m,n}{U(x,\lambda)AU(x,\lambda)^{*}}(\thetav ,\etav )= \exp(i \sum_{t \not\in \{l_j\}} p(\theta_{t})\cdot x-i \sum_{k \not\in \{r_j\}} p(\eta_{k-m})\cdot x)\times\\
\times \sum_{C\in \ccal_{m,n}} (-1)^{|C|} \delta_C \, S_C \,
\hscalar{ \lvector{C}{\thetav - \lambdav}}{ A  \rvector{C}{\etav-\lambdav} }.
\end{multline}
Since the factors $\delta_C$, $S_C$ depend only on differences of rapidities, they are invariant under $U(0,\lambda)$. Hence, the result follows.
\end{proof}

\section{Behavior of coefficients under reflections} \label{sec:freflect}
The behavior of the coefficients under space-time reflections (which are represented by antiunitaries $J$) is a bit more involved than the one under translations and boosts.
To study how the coefficients behaves under space-time reflections, we introduce for any contraction $C=(m,n,\{(\ell_j,r_j)\})$, the ``reflected`` contraction $C^J$ given by
$C^J=(n,m,\{(r_j-m,\ell_j+n)\})$. This contraction is the one that is obtained from $C$ by exchanging $\ell$ with $r$, and $m$ with $n$.
We also introduce a factor associated with $C$ that will become relevant in later computations:
\begin{equation}
 R_C := \prod_{j=1}^{|C|} \Big(1-\prod_{p_j=1}^{m+n} S^{(m)}_{l_j,p_j} \Big).\label{rc}
\end{equation}
This factor is related in a certain sense to the interaction of the model; indeed we note that in the free case $S=1$, one has $R_C = \delta_{|C|,0}$, and in the case $S=-1$ one has $R_C=0$ when $m+n$ is even.
\begin{lemma}
$R_C$ has the property
\begin{equation}\label{eq:RSJ}
\delta_{C}(\etav,\thetav)S_{C}(\etav,\thetav)R_{C}(\etav,\thetav)
=(-1)^{|C|}\delta_{C^{J}}(\thetav,\etav)S_{C^{J}}(\thetav,\etav)R_{C^{J}}(\thetav,\etav),
\end{equation}
\end{lemma}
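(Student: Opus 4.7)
The plan is to verify the identity in three steps: first match the delta distributions, then reformulate the $R$-factors to extract the sign $(-1)^{|C|}$, and finally reduce the remaining claim to a purely multiplicative identity among $S$-factors, which can be checked directly on $\supp\delta_C$.

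\emph{Step 1.} By the definition of $C^J$ and the symmetry of the Dirac distribution,
$\delta_{C}(\etav,\thetav) = \prod_{j=1}^{|C|}\delta(\eta_{l_j}-\theta_{r_j-m}) = \delta_{C^J}(\thetav,\etav)$,
so the delta factors on the two sides already coincide. I would set $u_j := \eta_{l_j} = \theta_{r_j-m}$ on the common support.

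\emph{Step 2.} Define $X_j := \prod_{p=1}^{m+n} S^{(m)}_{l_j,p}|_{(\etav,\thetav)}$ and $X'_j := \prod_{p=1}^{n+m} S^{(n)}_{r_j-m,p}|_{(\thetav,\etav)}$, so that $R_C(\etav,\thetav) = \prod_j(1-X_j)$ and $R_{C^J}(\thetav,\etav) = \prod_j(1-X'_j)$. Using $S(-x) = S(x)^{-1}$ and $S(0)^2 = 1$, both consequences of Def.~\ref{def:Smatrix}, the contributions of the factors of $X_j$ with $p\in\{l_k\}\cup\{r_k\}$ cancel in pairs on $\supp\delta_C$, leaving $X_j = \prod_{a\notin\{l_k\}} S(u_j-\eta_a)\,\prod_{b\notin\{r_k-m\}} S(\theta_b-u_j)$, and the same calculation applied to $X'_j$ (with the roles of $\eta$ and $\theta$ interchanged) yields $X'_j = X_j^{-1}$ on $\supp\delta_C$. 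Since $1-X^{-1} = -X^{-1}(1-X)$, I would deduce
$R_{C^J}(\thetav,\etav)|_{\supp\delta_C} = (-1)^{|C|}\,\bigl(\prod_j X_j\bigr)^{-1}\,R_C(\etav,\thetav)$,
which reduces the claim \eqref{eq:RSJ} to the multiplicative identity
$S_C(\etav,\thetav)\,\prod_j X_j = S_{C^J}(\thetav,\etav)$ on $\supp\delta_C$.

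\emph{Step 3.} To prove this $S$-factor identity, I would expand both sides using Lemma~\ref{lemma:scpermute}: $\delta_C S_C = \delta_C S^\sigma S^\rho$ with $\sigma,\rho$ given by \eqref{eq:scpermutations}, and $\delta_{C^J}S_{C^J} = \delta_{C^J}S^{\sigma^J}S^{\rho^J}$ with the analogous permutations $\sigma^J,\rho^J$ for $C^J$. Split both sides into a ``non-contracted'' part (factors $S(u_j-\eta_a)$ or $S(\theta_b-u_j)$ for $a,b$ not contracted) and a ``contracted'' part (factors $S(u_i-u_j)$ for $i\neq j$). The non-contracted $\eta$-factor of $S_C\,\prod_j X_j$ is $\prod_{p=l_j+1}^{m} S(\eta_p-u_j)\,\prod_{a\notin\{l_k\}} S(u_j-\eta_a)$; using $S(x)S(-x)=1$ this telescopes to $\prod_{a<l_j,\,a\notin\{l_k\}} S(u_j-\eta_a)$, which is precisely the non-contracted $\eta$-factor of $S^{\rho^J}$. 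An analogous telescoping in the $\theta$-variables matches the non-contracted $\theta$-factor of $S^{\sigma^J}$. The contracted part is settled by a short case analysis on the four possible orderings of $(l_i,l_j)$ versus $(r_i,r_j)$ for each unordered pair $\{i,j\}$ of contractions: in each case the factors produced by $\prod_j X_j$ and by the ``crossing'' terms of $S^\sigma, S^\rho$ combine, via $S(x)S(-x)=1$, to reproduce exactly the corresponding terms in $S^{\sigma^J}, S^{\rho^J}$.

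The main obstacle is the bookkeeping in Step 3: one has to verify that the surplus $\prod_j X_j$ exactly compensates for the asymmetry between the $S$-factor patterns of $C\in\ccal_{m,n}$ and $C^J\in\ccal_{n,m}$. The underlying algebraic engine throughout is the unitarity relation $S(x)S(-x)=1$, which drives both the telescoping of the non-contracted $S$-products and the pairwise cancellations among the contracted ones.
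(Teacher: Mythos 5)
Your proof is correct and shares the two core sub-claims with the paper's own argument: the unitarity identity $X_j\,X'_j = 1$ on $\supp\delta_C$ (the paper's version reads $\prod_{p}S^{(m)}_{l_j,p}(\etav,\thetav)\cdot\prod_{p}S^{(n)}_{r_j-m,p}(\thetav,\etav)=1$, established by the same renumbering-on-the-support argument), which via $1-X^{-1}=-X^{-1}(1-X)$ produces the sign $(-1)^{|C|}$; and the $S$-factor identity $S_C(\etav,\thetav)\prod_j X_j = S_{C^J}(\thetav,\etav)$ on $\supp\delta_C$. Where you genuinely diverge is in proving the latter: you propose a hands-on telescoping of the non-contracted $S$-products followed by a case analysis on the contracted pairs, whereas the paper routes everything through the composition law for $S^\sigma$-factors, Eq.~\eqref{eq:Scompose}. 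Concretely, the paper observes that the permutation $\rho'$ attached to $C^J$ factors as $\sigma\circ\pi$, giving $S^{\rho'}(\etav)=S^\pi(\etav^\sigma)S^\sigma(\etav)$, and similarly $S^{\sigma'}(\thetav)=S^\tau(\thetav^\rho)S^\rho(\thetav)$; the excess $S^\pi(\etav^\sigma)S^\tau(\thetav^\rho)$ is then read off in closed form and is exactly $\prod_j X_j$. This reuses an already-proven technical lemma and sidesteps the case analysis you flag as the main bookkeeping burden. Both routes are sound: the paper's buys economy by absorbing the permutation bookkeeping into one application of the composition law, while yours is more self-contained but requires you to verify by hand that the telescoped remainders and the contracted cross-terms regroup precisely into $S_{C^J}$.
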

\begin{proof}
Using Lemma~\ref{lemma:scpermute}, we rewrite \eqref{eq:RSJ} equivalently as
\begin{equation}\label{eq:RSJperm}
\delta_{C}(\etav,\thetav) S^\sigma(\etav) S^\rho(\thetav) R_{C}(\etav,\thetav)
=(-1)^{|C|}\delta_{C^{J}}(\thetav,\etav)S^{\sigma'}(\thetav) S^{\rho'}(\etav)R_{C^{J}}(\thetav,\etav),
\end{equation}
where $\sigma,\rho$ and $\sigma',\rho'$ correspond to the contractions $C$ and $C^J$, respectively, in the sense of Eq.~\eqref{Scontrperm}. We note that $\sigma$ is literally the same as in \eqref{eq:scpermutations}, instead $\rho'$ works out to be
\begin{equation}
\rho'=\begin{pmatrix} 1 & & \ldots & &  & m
\\   l_{|C|} & \ldots & l_{1} & 1 & \ldots \;\hat{l}\; \ldots & m & \end{pmatrix} .\quad
\end{equation}
By introducing the permutation
\begin{equation}
\pi=\begin{pmatrix}1 && \ldots \hat{l} \ldots && m & l_1 & \ldots & l_{|C|}
\\ l_{|C|} & \ldots & l_1 & 1 && \ldots \hat{l} \ldots && m \end{pmatrix}
\end{equation}
we notice that $\rho' = \sigma\circ 	\pi$, and therefore we have that $S^{\rho'}(\etav) = S^{\pi}(\etav^\sigma) S^\sigma(\etav)$ by Eq.~\eqref{eq:Scompose}. Correspondingly, one finds that $S^{\sigma'}(\thetav) = S^{\tau}(\thetav^\rho) S^\rho(\thetav)$ where the permutation $\tau$ is defined analogously to $\pi$. Explicitly, one has
\begin{align}
S^{\pi}(\etav^{\sigma}) &= \prod_{j=1}^{|C|}\prod_{p_{j}=1}^{m}S_{l_{j},p_{j}},\\
S^{\tau}(\thetav^{\rho}) &= \prod_{j=1}^{|C|}\prod_{m_{j}=m+1}^{m+n}S_{m_{j},r_{j}}.
\end{align}
and therefore
\begin{equation}
S^{\pi}(\etav^{\sigma}) S^{\tau}(\thetav^{\rho}) = \prod_{j=1}^{|C|}\prod_{p_{j}=1}^{m+n}S^{(m)}_{l_{j},p_{j}}(\etav,\thetav).
\end{equation}
We can see that
\begin{equation}
  R_{C^J}(\thetav,\etav) S^{\pi}(\etav^{\sigma}) S^{\tau}(\thetav^{\rho}) = (-1)^{|C|} R_{C}(\etav,\theta).
\end{equation}
Indeed, using \eqref{rc}, the formula above writes explicitly as:
\begin{equation}
\prod_{j=1}^{|C^{J}|}\Big( 1- \prod_{p_{j}=1}^{m+n}S_{r_{j}-m,p_{j}}^{(n)}(\thetav,\etav) \Big) \cdot \Big( \prod_{j=1}^{|C|}\prod_{p_{j}=1}^{m+n}S^{(m)}_{l_{j},p_{j}}(\etav,\thetav) \Big) = (-1)^{|C|} \prod_{j=1}^{|C|}\Big( 1- \prod_{p_{j}=1}^{m+n}S_{l_{j},p_{j}}^{(m)}(\etav,\thetav)\Big).
\end{equation}
So, to get the equality in the equation above, in particular we have to show that:
\begin{equation}
\prod_{p=1}^{m+n}S_{l_j,p}^{(m)}(\etav,\thetav) \cdot \prod_{p=1}^{m+n}S_{r_j-m,p}^{(n)}(\thetav,\etav)=1.
\end{equation}
To see this, consider the first of the two factors. On the support of $\delta_C$, we can replace $l_j$ with $r_j$, but noting that $l_j\leq m$ and $r_j >m$, so that the $S^{(m)}$ factors are turned into their inverse:
\begin{equation}
\prod_{p=1}^{m+n}S^{(m)}_{l_j,p}(\etav,\thetav)=\prod_{p=1}^{m+n}S_{r_j,p}^{(m)}(\etav,\thetav)^{-1}.
\end{equation}
Now we renumber the variables: The $r_j$-th component of $(\etav,\thetav)$ is $\thetav_{r_j -m}$, or alternatively speaking, the $(r_j - m)$-th component of $(\thetav,\etav)$. We get from there,
\begin{equation}
S_{r_j,p}^{(m)}(\etav,\thetav)=S_{r_j -m,p'}^{(n)}(\thetav,\etav),
\end{equation}
where $p'=p+n$ if $p\leq m$, and $p'=p-m$ if $p>m$. (Note that if $p$ is ``left'' then $p'$ is ``right'' and vice versa.)

Taking the product over all $p$, inserting into the above, and renaming the product index, we have
\begin{equation}
\prod_{p=1}^{m+n}S_{l_j,p}^{(m)}(\etav,\thetav)=\prod_{p=1}^{m+n}S^{(n)}_{r_j -m,p}(\thetav,\etav)^{-1}.
\end{equation}
and that is what we claimed.
\end{proof}

We will see now in the following Proposition how the factor $R_C$ enters the formula which describes the behaviour of the coefficients $\cme{m,n}{A}$ under the action of space-time reflections.
\begin{proposition}\label{proposition:fmnreflected}
For any $A \in \qf^{\omega}$,
\begin{equation} \label{eq:fmnreflected}
\cme{m,n}{J A^\ast J}(\thetav,\etav) = \sum_{C \in \ccal_{m,n}}
(-1)^{|C|} \delta_C S_C
R_C(\thetav,\etav)
\cme{n-|C|,m-|C|}{A}(\hat \etav ,\hat\thetav).
\end{equation}
\end{proposition}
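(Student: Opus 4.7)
The plan is to unwind the definition of $\cme{m,n}{JA^\ast J}$ given in \eqref{eq:fmndef}, use the antiunitarity of $J$ to rewrite each matrix element in terms of $A$, then apply the inversion formula from Proposition~\ref{proposition:fmninversion}, and finally reorganise the resulting double sum so that the factor $R_C$ emerges combinatorially from a cancellation.

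First I would exploit the identity $\langle\phi,JA^\ast J\psi\rangle=\langle J\psi,A\,J\phi\rangle$, valid for any antiunitary involution $J$, to reduce the problem to computing $J\lvector{C}{\thetav}$ and $J\rvector{C}{\etav}$. Since $J\zd(\theta)J=\zd(\theta)'$ and $[\zd'(\cdot),\zd(\cdot)]=0$ by \eqref{zzp}, while $\zd'(\theta)\Omega=\zd(\theta)\Omega$ holds on single-particle states (whose wave function $\xi\mapsto\delta(\xi-\theta)$ is real), a short induction on the number of creators yields
\begin{equation*}
J\,\zd(\theta_1)\cdots\zd(\theta_k)\Omega \;=\; \zd(\theta_k)\cdots\zd(\theta_1)\Omega.
\end{equation*}
Applied to $\lvector{C}{\thetav}$ and $\rvector{C}{\etav}$ this gives $J\lvector{C}{\thetav}=\rvector{}{\hat\thetav}$ and $J\rvector{C}{\etav}=\lvector{}{\hat\etav}$, where $\hat\thetav$, $\hat\etav$ denote the non-contracted subsequences of $\thetav$, $\etav$ in their natural (increasing-index) order. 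Hence every matrix element entering \eqref{eq:fmndef} becomes $\langle\lvector{}{\hat\etav},A\,\rvector{}{\hat\thetav}\rangle$, to which Proposition~\ref{proposition:fmninversion} applies and produces an inner sum over $D\in\ccal_{n-|C|,m-|C|}$ of $\delta_D S_D(\hat\etav,\hat\thetav)\cme{\cdot,\cdot}{A}$.

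The decisive step is the combinatorial reorganisation of this double sum. Each pair in $C$ and each pair in $D$ contracts some $\theta_l$ with some $\eta_r$, so one can form the combined contraction $\tilde C=C\dot\cup D\in\ccal_{m,n}$ and label its pairs as \emph{forward} (coming from $C$) or \emph{backward} (coming from $D$). The heart of the argument is then the algebraic identity
\begin{equation*}
\delta_C S_C(\thetav,\etav)\,\delta_D S_D(\hat\etav,\hat\thetav)
\;=\; \delta_{\tilde C} S_{\tilde C}(\thetav,\etav)\prod_{j\in C}\prod_{p=1}^{m+n} S^{(m)}_{l_j,p}(\thetav,\etav),
\end{equation*}
which I would prove by expressing both sides via Lemma~\ref{lemma:scpermute} as products of permutation factors $S^\sigma S^\rho$ and then merging the contractions using Lemma~\ref{lemma:contractcompose}; the one-pair prototype of this computation is already \eqref{eq:RSJ}. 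Once the identity is established, summation over the forward subset $C\subseteq\tilde C$ weighted by $(-1)^{|C|}$ factorises pair by pair, giving
\begin{equation*}
\sum_{C\subseteq\tilde C}(-1)^{|C|}\prod_{j\in C}\prod_p S^{(m)}_{l_j,p}
\;=\;\prod_{j\in\tilde C}\Big(1-\prod_p S^{(m)}_{l_j,p}\Big)\;=\; R_{\tilde C}(\thetav,\etav),
\end{equation*}
and renaming $\tilde C\to C$ yields \eqref{eq:fmnreflected}. The main obstacle I anticipate is precisely this $S$-factor identity: it demands careful bookkeeping of how a backward pair, encoded by $S_D(\hat\etav,\hat\thetav)$ with the $\eta$- and $\theta$-slots interchanged, is converted back into the $(\thetav,\etav)$-ordering of $S_{\tilde C}$, with the discrepancy collected exactly into the factor $\prod_p S^{(m)}_{l_j,p}$.
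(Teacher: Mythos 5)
Your overall strategy is indeed the paper's: use the $J$-action on the contracted vectors, apply Proposition~\ref{proposition:fmninversion}, merge the outer and inner contractions via Lemmas~\ref{lemma:scpermute} and \ref{lemma:contractcompose} together with Eq.~\eqref{eq:RSJ}, and factorise the residual sum with the distributive law. But your key $S$-factor identity attaches the extra factor to the wrong set of pairs. You claim
\begin{equation*}
\delta_C S_C(\thetav,\etav)\,\delta_D S_D(\hat\etav,\hat\thetav) = \delta_{\tilde C} S_{\tilde C}(\thetav,\etav)\prod_{j\in C}\prod_{p=1}^{m+n} S^{(m)}_{l_j,p}(\thetav,\etav),
\end{equation*}
with the product over the \emph{forward} pairs $j\in C$. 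Working through Eq.~\eqref{eq:RSJ} one finds instead
\begin{equation*}
S_{D}(\hat\etav,\hat\thetav) = S_{D^J}(\hat\thetav,\hat\etav)\prod_{j\in D^J}\prod_{p} S^{(m)}_{l_j,p}(\thetav,\etav) \quad \text{(on the support of the deltas),}
\end{equation*}
so the extra product ranges over the \emph{backward} pairs (those coming from $D$, or more precisely from $D^J$), not over $C$. You can see your version fail already at $m=n=2$ with $C=\emptyset$ and $D$ contracting $\hat\eta_1$ with $\hat\theta_1$: the left-hand side carries $S(\eta_2-\eta_1)$, but your right-hand side gives $S(\theta_2-\theta_1)$, and on the support $\theta_1=\eta_1$ these still differ.

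This mislabelling is not harmless: it corrupts the sign in your distributive-law step. With the corrected identity the inner sum is
\begin{equation*}
\sum_{C\subseteq\tilde C}(-1)^{|C|}\prod_{j\in\tilde C\setminus C}\prod_{p} S^{(m)}_{l_j,p} = \prod_{j\in\tilde C}\Big(-1 + \prod_p S^{(m)}_{l_j,p}\Big) = (-1)^{|\tilde C|}R_{\tilde C}(\thetav,\etav),
\end{equation*}
whereas your version produces $R_{\tilde C}$ without the $(-1)^{|\tilde C|}$; the two differ precisely by the sign the final formula \eqref{eq:fmnreflected} requires. Testing with $\tilde C$ consisting of the single pair $(1,3)$ in $\ccal_{2,2}$ shows the discrepancy is real: the surviving term is $\delta(\theta_1-\eta_1)\big[S(\eta_2-\theta_1)-S(\theta_2-\theta_1)\big]\cme{1,1}{A}(\eta_2,\theta_2)$, and your derivation yields its negative. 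So the proof as written does not close; the fix is to move the $\prod_p S^{(m)}_{l_j,p}$ factor to the backward pairs, whereupon the distributive law produces the required $(-1)^{|\tilde C|}R_{\tilde C}$.
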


\begin{proof}
First we notice that given any contraction $C\in\ccal_{m,n}$, we have that $J \lvector{C}{\cdotarg} = \rvector{C^J}{\cdotarg}$ and $J \rvector{C}{\cdotarg} = \lvector{C^J}{\cdotarg}$. Replacing $A$ with $JA^{*}J$ in the definition of $\cme{m,n}{A}$, given by Eq.~\eqref{eq:fmndef}, we obtain:
\begin{equation}
\cme{m,n}{JA^{*}J}(\thetav,\etav)=
\sum_{C\in\ccal_{m,n}}(-1)^{|C|}\delta_{C}S_{C}(\thetav,\etav)
\hscalar{ \lvector{C^J}{\etav} }{ A \, \rvector{C^J}{\thetav}}.
\end{equation}
Using Prop.~\ref{proposition:fmninversion} in the formula above, we find
\begin{equation}\label{maineq}
\cme{m,n}{JA^{*}J}(\thetav,\etav)=
\;\;\sum_{\mathclap{\substack{C\in\ccal_{m,n} \\ C'\in\ccal_{n-|C|,m-|C|} }}}\;\;
(-1)^{|C|}\delta_{C} S_{C}(\thetav,\etav ) \delta_{C'} S_{C'}(\hat{\etav},\hat{\thetav}) \cme{n-|C|-|C'|,m-|C|-|C'|}{A}(\Hat{\Hat{\etav}},\Hat{\Hat{\thetav}}),
\end{equation}
where $\hat{\thetav},\hat\etav$ indicates that the variables contracted in $C$ are left out, instead $\Hat{\Hat{\thetav}},\Hat{\Hat\etav}$ indicates that the variables which are contracted $C \dot\cup C^{\prime J}$ are left out.
Now we apply Lemma~\ref{lemma:contractcompose} (with $C^{\prime J}$ in place of $C'$), and setting $D:=C \dot\cup C^{\prime J}$, we reorganize the sum as follows:
\begin{equation}\label{eq:dsumJ}
\cme{m,n}{JA^{*}J}(\thetav,\etav)=\sum_{D\in\ccal_{m,n}}(-1)^{|D|} \delta_{D} S_{D}(\thetav,\etav)
\underbrace{\Big(\sum_{D = C \dot\cup C^{\prime J}} (-1)^{|C'|} \frac{S_{C^{\prime}}(\hat\etav,\hat\thetav)}{S_{C^{\prime J}}(\hat\thetav,\hat\etav)} \Big)}_{(\ast)}
\cme{n-|D|,m-|D|}{A}(\hat{\hat{\etav}},\hat{\hat{\thetav}}).
\end{equation}
It remains to compute the sum $(\ast)$.
Using Eq.~\eqref{eq:RSJ}, and taking the product with $\delta_D$ into account, we have
\begin{equation}\label{eq:dsumrewritten}
 (\ast) = \sum_{D = C \dot\cup C^{\prime J}} \frac{R_{C^{\prime J}}(\hat\thetav,\hat\etav)}{R_{C^{\prime}}(\hat\etav,\hat\thetav)}
  = \sum_{D = C \dot\cup C^{\prime J}}  \prod_{j \in \{r_i'-n\}} \frac{1-a_j}{1-a_j^{-1}}=\sum_{C^{\prime J}\subset D}\prod_{j\in \{ r_i'-n\}}\frac{1-a_j}{1-a_j^{-1}} \cdot \prod_{ k \in \{  \ell_j\}} 1,
\end{equation}
where
\begin{equation}
a_j:=\prod_{p=1}^{m+n} S_{j,p}^{(m )}(\thetav,\etav), \quad j\in \{ r_i'-n\}.
\end{equation}
(here we used the fact that
\begin{equation}
\delta_{C} \prod_{p=1}^{m+n} S_{j,p}^{(m)}(\thetav,\etav)= \delta_{C} \prod_{p=1}^{m+n} S_{j,p}^{(m-|C|)}(\hat\thetav,\hat\etav) \cdot \prod_{q=1}^{|C|}S_{r_q,j}S_{j,l_q},
\end{equation}
where the last two $S$-factors cancel due to the delta distribution.)

and where
\begin{equation}
a_j^{-1}=\prod_{p=1}^{m+n}S_{l'_j,p}^{(n)}(\etav,\thetav).\label{ajinv}
\end{equation}
To see \eqref{ajinv}, consider the right hand side of the equation above. On the support of $\delta_D$, we can replace $l'_j$ with $r'_j$, but noting that $l'_j\leq n$ and $r'_j >n$, so that the $S^{(n)}$ factors are turned into their inverse:
\begin{equation}
\prod_{p=1}^{m+n}S^{(n)}_{l'_j,p}(\etav,\thetav)=\prod_{p=1}^{m+n}S_{r'_j,p}^{(n)}(\etav,\thetav)^{-1}.
\end{equation}
Now we renumber the variables: The $r'_j$-th component of $(\etav,\thetav)$ is $\thetav_{r'_j -n}$, or alternatively speaking, the $(r'_j - n)$-th component of $(\thetav,\etav)$. We get from there,
\begin{equation}
S_{r'_j,p}^{(n)}(\etav,\thetav)=S_{r'_j -n,p'}^{(m)}(\thetav,\etav),
\end{equation}
where $p'=p+m$ if $p\leq n$, and $p'=p-n$ if $p>n$. (Note that if $p$ is ``left'' then $p'$ is ``right'' and vice versa.)

Taking the product over all $p$, inserting into the above, and renaming the product index, we have
\begin{equation}
\prod_{p=1}^{m+n}S_{l'_j,p}^{(n)}(\etav,\thetav)=\prod_{p=1}^{m+n}S^{(m)}_{r'_j -n,p}(\thetav,\etav)^{-1}.
\end{equation}
and that is what we claimed in \eqref{ajinv}.

Using the following \emph{distributional law}:
\begin{equation}
\prod_{a\in A}(\alpha_{a}+\beta_{a})=\sum_{B\subset A}\prod_{b\in B}\alpha_{b}\prod_{d\in B^{c}}\beta_{d},
\end{equation}
we find
\begin{equation}
 (\ast) =  \prod_{j \in \{\ell_j\} \cup \{r_j'-n\}} \Big(1 + \frac{1-a_j}{1-a_j^{-1}}\Big)
=  \prod_{j \in \{\ell_j\} \cup \{r_j'-n\}} (1-a_j)
 = R_D(\thetav,\etav).
\end{equation}
Inserting this result into \eqref{eq:dsumJ} concludes the proof.
\end{proof}

\chapter{Operators and quadratic forms}\label{sec:operators}

Most of the material in the following chapter is due to H. Bostelmann.

\section{Locality of quadratic forms} \label{sec:weaklocality}

In the previous chapter we discussed the existence and uniqueness of the Araki decomposition for any quadratic form $A \in \qf^\omega$. So in order to discuss the locality properties of $A$ in terms of the Araki decomposition, we need a notion of locality that is adapted to quadratic forms $A \in \qf^\omega$. This locality is studied in terms of commutators with the wedge-local field $\phi$, but we have first to clarify in which sense these commutators are well defined.

If $f \in \dcal^\omega(\rbb^2)$, then $\zd(f^+)$ maps $\fpno$ into $\fpno$, so $A\zd(f^+)$ is well-defined as a quadratic form; indeed, we have the following Lemma:
\begin{lemma}
For $f \in \dcal^\omega(\rbb^2)$ one has $\onorm{A\zd(f^+)}{k} \leq \sqrt{k+1}\onorm{f^+}{2}\onorm{A}{k+1}$.
\end{lemma}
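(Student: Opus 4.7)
The plan is to estimate the two summands in the definition \eqref{eq:aomeganorm} of $\onorm{A\zd(f^+)}{k}$ separately, using the one-particle bound \eqref{omegaz} applied to $\zd(f^+)$. The key observation is that $\zd(f^+)$ raises the particle number by exactly one, so $\zd(f^+)Q_k=Q_{k+1}\zd(f^+)Q_k$, which allows one to freely insert $Q_{k+1}$ between $A$ and $\zd(f^+)$.

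For the first summand I would insert $e^{-\omega(H/\mu)}e^{\omega(H/\mu)}=\idop$ and use that $Q_{k+1}$ commutes with $e^{\omega(H/\mu)}$ to write
\[
 Q_k A\zd(f^+) e^{-\omega(H/\mu)} Q_k
  = \bigl(Q_k A e^{-\omega(H/\mu)} Q_{k+1}\bigr)\bigl(e^{\omega(H/\mu)}\zd(f^+)e^{-\omega(H/\mu)} Q_k\bigr).
\]
Submultiplicativity of the operator norm, together with the inequality $\gnorm{Q_k A e^{-\omega(H/\mu)}Q_{k+1}}{}\leq \gnorm{Q_{k+1} A e^{-\omega(H/\mu)} Q_{k+1}}{}$ coming from $Q_k\leq Q_{k+1}$, and the first half of \eqref{omegaz} then give the bound $\sqrt{k+1}\,\onorm{f^+}{2}\,\gnorm{Q_{k+1}A e^{-\omega(H/\mu)}Q_{k+1}}{}$.

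For the second summand the factorisation is simpler, without any conjugation by $e^{\pm\omega(H/\mu)}$:
\[
 Q_k e^{-\omega(H/\mu)} A \zd(f^+) Q_k = \bigl(Q_k e^{-\omega(H/\mu)} A Q_{k+1}\bigr)\bigl(\zd(f^+)Q_k\bigr).
\]
I would bound the second factor by the $\omega=0$ specialisation of \eqref{omegaz}, namely $\gnorm{\zd(f^+)Q_k}{}\leq \sqrt{k+1}\,\gnorm{f^+}{2}$, and then use $\gnorm{f^+}{2}\leq \onorm{f^+}{2}$ (valid because $\omega\geq 0$). Adding the resulting two estimates, dividing by two, and recognising the right-hand side as $\sqrt{k+1}\,\onorm{f^+}{2}\,\onorm{A}{k+1}$ via definition \eqref{eq:aomeganorm} yields the claim.

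I do not foresee any genuine obstacle; the only delicate point is to ensure that both summands are bounded in terms of the \emph{same} quantity $\onorm{A}{k+1}$, which forces the symmetric replacement $Q_k\rightsquigarrow Q_{k+1}$ on both sides of $A$. The hypothesis $f\in\dcal^\omega(\rbb^2)$ enters only through the finiteness of $\onorm{f^+}{2}$ built into \eqref{domega}.
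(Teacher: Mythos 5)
Your proposal is correct and is essentially the paper's own proof: the same split of $\onorm{A\zd(f^+)}{k}$ into its two summands via \eqref{eq:aomeganorm}, the same use of $\zd(f^+)Q_k = Q_{k+1}\zd(f^+)Q_k$ to insert $Q_{k+1}$, the same conjugation by $e^{\pm\omega(H/\mu)}$ on the summand with the damping factor on the creator's side, and the same appeal to \eqref{omegaz} (with and without $\omega$) together with $\gnorm{f^+}{2}\leq\onorm{f^+}{2}$. The only difference is the order in which you treat the two summands, which is immaterial.
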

\begin{proof}
From \eqref{eq:aomeganorm} we have:
\begin{equation}
\onorm{A\zd(f^+)}{k}=\frac{1}{2}||e^{-\omega(H/\mu)}Q_{k}A\zd(f^+)Q_{k}||+\frac{1}{2}||Q_{k}A\zd(f^+)Q_{k}e^{-\omega(H/\mu)}||\label{onormAzd}
\end{equation}
We estimate the first norm on the right hand side of \eqref{onormAzd} as follows:
\begin{eqnarray}
||e^{-\omega(H/\mu)}Q_{k}A\zd(f^+)Q_{k}|| &=&  ||e^{-\omega(H/\mu)}Q_{k}AQ_{k+1}\zd(f^+)Q_{k}||\nonumber\\
&\leq& ||e^{-\omega(H/\mu)}Q_{k}AQ_{k+1}||\cdot ||\zd(f^+)Q_{k}||\nonumber\\
&\leq& ||e^{-\omega(H/\mu)}Q_{k+1}AQ_{k+1}||\cdot ||f^+||_{2}\sqrt{k+1}\nonumber\\
&\leq& ||e^{-\omega(H/\mu)}Q_{k+1}AQ_{k+1}|| \sqrt{k+1} \onorm{f^+}{2}.
\end{eqnarray}
And similarly, for the second norm we find:
\begin{eqnarray}
||Q_{k}A\zd(f^+)Q_{k}e^{-\omega(H/\mu)}|| &=& ||Q_{k}AQ_{k+1}e^{-\omega(H/\mu)}e^{\omega(H/\mu)}\zd(f^+)Q_{k}e^{-\omega(H/\mu)}||\nonumber\\
&\leq& ||Q_{k+1}AQ_{k+1}e^{-\omega(H/\mu)}||\cdot ||e^{\omega(H/\mu)}\zd(f^+)e^{-\omega(H/\mu)}Q_{k}||\nonumber\\
&\leq& ||Q_{k+1}AQ_{k+1}e^{-\omega(H/\mu)}||\sqrt{k+1}\onorm{f^+}{2}.
\end{eqnarray}
Putting together these two estimates, we find from \eqref{onormAzd} the claimed result.
\end{proof}
In analogous way, we have that also the product $\zd(f^+)A$, and the products of $A$ with $z(f^-)$, $\phi(f)$, $\phi'(f)$ from the left or the right are well-defined in $\qf^\omega$; this implies that we can define the commutator $[A,\phi(f)]:=A\phi(f)-\phi(f)A \in \qf^\omega$. Given this, we can define our notion of locality as follows.

\begin{definition}\label{definition:omegalocal}
  Let $A \in \qf^\omega$. We say that $A$ is \emph{$\omega$-local} in $\wcal_x$ (the right wedge with edge at $x$) iff
\begin{equation}\label{eq:acommute}
     [A,\phi(f)] = 0
\quad
\text{for all }
    f \in \dcal^\omega(\wcal_x'), \text{ as a relation in $\qf^\omega$}.
\end{equation}
$A$ is called $\omega$-local in $\wcal_x'$ iff $J A^\ast J$ is $\omega$-local in $\wcal_x$. $A$ is called $\omega$-local in the double cone $\ocal_{x,y} = \wcal_x \cap \wcal'_y$ iff it is $\omega$-local in both $\wcal_x$ and $\wcal'_y$.
\end{definition}

 In the following lemma we characterize better this notion of locality. Such a Lemma is formulated only for the standard right wedge $\wcal$, but actually it holds for other regions in analogous way.

\begin{lemma}\label{lemma:localitychar}
 Let $\omega$ be an indicatrix, and $A \in \qf^\omega$. The following conditions are equivalent:
\begin{enumerate}
\renewcommand{\theenumi}{(\roman{enumi})}
\renewcommand{\labelenumi}{\theenumi}

 \item \label{it:charlocal}
    $A$ is $\omega$-local in $\wcal$.
 \item \label{it:charcommutator}
    $[A,\phi(f)]= 0$ for all $f \in \dcal^\omega(\wcal')$.
 \item \label{it:charomegavar}
    For every $\psi,\chi\in\fpno$, there exists an indicatrix $\omega'$ such that
   $\hscalar{\psi}{[A,\phi(f)]\chi} = 0$ for all $f \in \dcal^{\omega'}(\wcal')$.
 \item \label{it:chartempered}
    For every $\psi,\chi\in\fpno$, it holds that
    $\hscalar{\phi(x)\psi}{A \chi} = \hscalar{\psi}{A \phi(x) \chi}$ for $x \in\wcal'$,
    in the sense of tempered distributions.
\end{enumerate}
\end{lemma}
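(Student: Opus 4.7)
My plan is to prove the equivalences as a ring (i) $\Leftrightarrow$ (ii) $\Rightarrow$ (iii) $\Rightarrow$ (iv) $\Rightarrow$ (ii). The first equivalence is immediate from Definition~\ref{definition:omegalocal}, and (ii) $\Rightarrow$ (iii) is trivial by choosing $\omega' = \omega$ independently of $\psi,\chi$.

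For (iv) $\Rightarrow$ (ii), I would fix $\psi,\chi \in \fpno$ and smear the distributional identity in (iv) against any $f \in \dcal^\omega(\wcal') \subset \scal(\wcal')$. This yields $\hscalar{\phi(\bar f)\psi}{A\chi} = \hscalar{\psi}{A\phi(f)\chi}$; a short computation from $z(g) = \zd(\bar g)^\ast$ gives $\phi(f)^\ast = \phi(\bar f)$, so the left side equals $\hscalar{\psi}{\phi(f)A\chi}$, and the identity collapses to $\hscalar{\psi}{[A,\phi(f)]\chi} = 0$. Since the preceding lemma (and its mirror for $\phi(f)$ on the left) guarantees $[A,\phi(f)] \in \qf^\omega$, this is (ii).

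The substantive step is (iii) $\Rightarrow$ (iv). Fix $\psi,\chi\in\fpno$ contained in $Q_k \Hil^\omega$ for some $k$, and let $\omega'$ be the indicatrix provided by (iii). First I would show that the maps $T_1: f \mapsto \hscalar{\phi(\bar f)\psi}{A\chi}$ and $T_2: f \mapsto \hscalar{\psi}{A\phi(f)\chi}$ extend to tempered distributions on $\rbb^2$. For $T_2$ the key rewriting
\begin{equation}
\hscalar{\psi}{A\phi(f)\chi}
= \bighscalar{e^{\omega(H/\mu)}\psi}{Q_{k+1}e^{-\omega(H/\mu)} A Q_{k+1}\, \phi(f)\chi}
\end{equation}
exhibits the bracketed operator as bounded by $2\gnorm{A}{k+1}^\omega$ thanks to \eqref{eq:aomeganorm}, while $\gnorm{\phi(f)\chi}{} \leq \sqrt{k+1}(\gnorm{f^+}{2}+\gnorm{f^-}{2})$ is elementary; since $f \mapsto f^\pm$ is continuous from $\scal(\rbb^2)$ into $L^2(\rbb)$, this gives the required Schwartz-continuity of $T_2$, and $T_1$ is treated symmetrically.

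By hypothesis (iii), the tempered distribution $T_2 - T_1$ vanishes on $\dcal^{\omega'}(\wcal')$. The density results recalled in Section~\ref{sec:jaffee} state that $\dcal^{\omega'}(\wcal')$ is dense in $\dcal(\wcal')$ in the $\dcal$-topology (via convolution with local units); since $T_2 - T_1$ is continuous in the finer Schwartz topology, it vanishes on all of $\dcal(\wcal')$, which is precisely the statement that $T_1$ and $T_2$ agree on $\wcal'$ as tempered distributions, i.e., (iv). The main obstacle is the first continuity estimate: for general $f \in \scal$ the vector $\phi(f)\chi$ need not lie in $\fpno$, so one cannot pair it with $A$ in the quadratic-form sense, and the workaround via the bounded avatar $Q_{k+1}e^{-\omega(H/\mu)}AQ_{k+1}$ as above is essential before the density argument becomes routine.
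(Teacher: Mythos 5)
Your proof is correct and follows essentially the same route as the paper: the same ring of implications $(i)\Leftrightarrow(ii)\Rightarrow(iii)\Rightarrow(iv)\Rightarrow(ii)$, the same preliminary observation that $f\mapsto\hscalar{\psi}{A\phi(f)\chi}$ and $f\mapsto\hscalar{\phi(\bar f)\psi}{A\chi}$ are tempered distributions because $Q_{k+1}e^{-\omega(H/\mu)}AQ_{k+1}$ is bounded and $f\mapsto\phi(f)\chi$ is Schwartz-to-Hilbert continuous, and the same density of $\dcal^{\omega'}(\wcal')$ in $\dcal(\wcal')$ for $(iii)\Rightarrow(iv)$. The only difference is that you spell out the operator-norm bookkeeping a bit more explicitly than the paper does, which is fine.
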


\begin{proof}
First we note that \ref{it:chartempered} is well defined: Indeed, since $\gnorm{A}{k}^\omega < \infty$, the matrix element $\hscalar{\psi}{A \chi}$ is well defined (by continuous extension) if $\psi,\chi \in \fpn$ and at least \emph{one} of $\psi,\chi$ is in $\Hil^{\omega}$.

Since $\phi(f) \fpn \subset \fpn$, and since the map $\scal(\rbb^2) \to \Hil$, $f \mapsto \phi(f)\chi$ is continuous with respect to the Schwarz and the Hilbert space norms in the corresponding spaces, the map $f \mapsto\hscalar{\psi}{A \phi(f) \chi}$ is a tempered distribution; the same holds for $\hscalar{\phi(\bar f)\psi}{A \chi}$ analogously.---Now the equivalence \ref{it:charlocal}$\Leftrightarrow$\ref{it:charcommutator} is true due to the definition; \ref{it:charcommutator}$\Rightarrow$\ref{it:charomegavar} is trivial since \ref{it:charomegavar} is a special case of \ref{it:charcommutator} where we choose $\omega'=\omega$; \ref{it:charomegavar}$\Rightarrow$\ref{it:chartempered} follows because $\dcal^{\omega'}(\wcal')$ is dense in $\dcal(\wcal')$ (see \cite{Bjoerck:1965}); and \ref{it:chartempered}$\Rightarrow$\ref{it:charcommutator} holds because $\dcal^\omega(\wcal')\subset\scal(\rbb^2)$ since $\dcal(\wcal')\subset\scal(\rbb^2)$ (and $\dcal^{\omega'}(\wcal')$ is dense in $\dcal(\wcal')$).
\end{proof}

The notion of $\omega$-locality is clearly weaker than the usual notion of locality. If $A$ is just a quadratic form we cannot write down well-defined commutators of $A$ for example with the unitary operators $\exp i\phi(f)^-$, or with a general operator $B \in \A(\wcal_x')$, and the notion of $\omega$-locality does not give us any information regarding such commutators. However, $\omega$-locality is not so weak as it might appear at first glance: In the following section we will try to clarify the relation between $\omega$-locality and the usual notion of locality.

\section{Relations to usual notions of locality}  \label{sec:localityequiv}

Now we would like to pass from $\omega$-local quadratic forms to \emph{operators} which are $\omega$-local and also to investigate the relation of $\omega$-locality to usual notions of locality. We are aiming to characterize closed operators which are affiliated with the local algebras. We found that this class of operators are still manageable to characterize within the framework of the Araki expansion. On the other hand, smeared pointlike fields are typically closable where they exist, see \cite{FreHer:pointlike_fields,Wollenberg:1985}, and we will also find some closable operators in our models explicitly, see Sec.~\ref{sec:localexamples}.

\begin{proposition}\label{proposition:locality}
Let $\rcal$ be one of the regions $\wcal_x$, $\wcal_y'$, $\ocal_{x,y}$ for some $x,y\in\rbb^2$.
\begin{enumerate}
 \renewcommand{\theenumi}{(\roman{enumi})}
 \renewcommand{\labelenumi}{\theenumi}

\item \label{it:abounded}
  Let $A$ be a bounded operator; then $A$ is $\omega$-local in $\rcal$ if and only if $A \in \A(\rcal)$.

\item \label{it:aclosed}
  Let $A$ be a closed operator
  with core $\fpno$, and $\fpno \subset \domain{A^\ast}$.
  Suppose that
\begin{equation}\label{eq:weyldomaincond}
\forall g \in \dcal_\rbb^\omega(\rbb^2): \quad  \exp (i \phi(g)^-) \fpno \subset \domain{A}.
\end{equation}
  Then $A$ is $\omega$-local in $\rcal$ if and only if it is affiliated with $\A(\rcal)$.

\item \label{it:aising}
 In the case $S=-1$, statement \ref{it:aclosed} is true even without the condition \eqref{eq:weyldomaincond}.

\end{enumerate}
\end{proposition}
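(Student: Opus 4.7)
The plan is to reduce all three parts to the commutation of $A$ with the Weyl-type unitaries $\{e^{i\phi(g)^-}:g\in\dcal_\rbb^\omega(\rcal')\}$ for the appropriate causal complement $\rcal'$. By the remark following the definition of $\M$ in Sec.~\ref{sec:fieldloc}, these unitaries already generate $\A(\wcal_x')$, which by wedge duality \cite[Prop.~4.4.1]{Lechner:2006} coincides with $\A(\wcal_x)'$; the analogous statements for $\wcal_y'$ and, by intersection, for double cones follow immediately. Consequently, membership $A\in\A(\rcal)$ in part (i) and affiliation $A\,\eta\,\A(\rcal)$ in part (ii) are both equivalent to the assertion that each such $e^{i\phi(g)^-}$ leaves $\domain(A)$ invariant and commutes with $A$ there.

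In both (i) and (ii) the ``easy'' direction, where commutation with the Weyl unitaries implies $\omega$-locality, follows by Stone's theorem: for $\psi,\chi\in\fpno$ the vector $\phi(g)\chi$ again lies in $\fpno$ (because by definition of $\dcal^\omega$ in \eqref{domega} one has $\onorm{g^\pm}{2}<\infty$, so the bounds of Sec.~\ref{sec:zgen} apply to $\zd(g^+)$ and $z(g^-)$), and therefore the function $t\mapsto\langle\psi,(e^{-it\phi(g)^-}Ae^{it\phi(g)^-}-A)\chi\rangle$ — which in case (ii) is well defined thanks to \eqref{eq:weyldomaincond} combined with $\fpno\subset\domain(A^\ast)$ — is differentiable at $t=0$ with derivative $i\langle\psi,[A,\phi(g)]\chi\rangle$, whose vanishing gives $\omega$-locality. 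For the converse, the key observation is that every $\chi\in\fpno$ is an \emph{entire analytic vector} for $\phi(g)$: since $\phi(g)$ sends $Q_k\fpno$ into $Q_{k+1}\fpno$ with norm growing only like $\sqrt{k+1}\,\onorm{g^\pm}{2}$, the series $\sum_k(it)^k\phi(g)^k\chi/k!$ converges in $\Hil$ to $e^{it\phi(g)^-}\chi$. The matrix element $\langle\psi,e^{-it\phi(g)}Ae^{it\phi(g)}\chi\rangle$ can then be expanded in powers of $t$, and each Taylor coefficient is reduced by iterated application of the assumed relation $[A,\phi(g)]=0$ on $\fpno$ (moving one factor of $\phi(g)$ past $A$ at a time, exploiting $\phi(g)\fpno\subset\fpno$ and $\fpno\subset\domain(A^\ast)$) to a sum of terms that vanish. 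Hence the matrix element is constantly $\langle\psi,A\chi\rangle$; density of $\fpno$ in $\Hil$, boundedness of $A$ in (i), and closedness of $A$ together with \eqref{eq:weyldomaincond} in (ii), upgrade this to the required commutation relation.

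For part (iii) the specialisation $S=-1$ turns the exchange relations \eqref{zamoloalgebra} into the canonical anticommutation relations, so that $\phi(g)=z(g^-)+\zd(g^+)$ extends to a \emph{bounded} operator with $\gnorm{\phi(g)}{}\le\gnorm{g^+}{2}+\gnorm{g^-}{2}$; therefore $e^{i\phi(g)^-}$ is given by its norm-convergent power series, each term of which maps $\fpno$ into itself (powers of $\phi(g)$ raise particle number only by the power), so that \eqref{eq:weyldomaincond} becomes automatic and (ii) applies verbatim. The principal technical obstacle is the converse direction of (ii): one must combine the core property of $\fpno$, the closedness of $A$, the hypothesis $\fpno\subset\domain(A^\ast)$, the stability $\phi(g)\fpno\subset\fpno$ and the factorial growth estimates on $\gnorm{\phi(g)^k\chi}{}$ with sufficient care to (a) justify the term-by-term analytic expansion of the matrix element uniformly in a neighbourhood of $t=0$, and (b) pass from the resulting identity $\langle\psi,e^{-it\phi(g)}Ae^{it\phi(g)}\chi\rangle=\langle\psi,A\chi\rangle$ on $\fpno\times\fpno$ to the operator identity $e^{-i\phi(g)^-}Ae^{i\phi(g)^-}=A$ on $\domain(A)$ that is needed for affiliation.
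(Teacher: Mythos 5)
Your outline for parts (i) and (ii) follows the same route as the paper: reduce to the commutation of $A$ with the Weyl-type generators $e^{i\phi(g)^-}$ of $\A(\rcal)'$, use the analyticity of matrix elements in $t$ (via $\fpno$ consisting of analytic vectors for $\phi(g)$) to pass from fields to unitaries and back, and rely on the core/closedness hypotheses together with \eqref{eq:weyldomaincond} to upgrade the $\fpno$-level identity to an operator identity on $\domain(A)$. Your Taylor-coefficient phrasing of the forward direction is essentially the paper's argument with the truncated exponential $B_n=\sum_{k\le n}(i\phi(g)^-)^k/k!$, and you correctly flag the two genuinely technical steps (uniform analytic expansion and the upgrade to $\domain(A)$); the paper also supplies a double-commutant step to pass from commutation with the single Weyl generators to commutation with all of $\M'$, which your reduction ``membership/affiliation $\Leftrightarrow$ commutation with each $e^{i\phi(g)^-}$'' silently absorbs and which, for an unbounded $A$, is not automatic.

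For part (iii), however, there is a genuine gap. You claim that since $\phi(g)$ is bounded when $S=-1$, the Weyl unitary $e^{i\phi(g)^-}$ is a norm-convergent power series with each term mapping $\fpno$ into itself, ``so that \eqref{eq:weyldomaincond} becomes automatic.'' This does not follow: each partial sum $B_n\chi$ lies in $\fpno\subset\domain(A)$, but the norm limit $e^{i\phi(g)^-}\chi$ has no reason to lie in $\domain(A)$ unless one already knows that $(AB_n\chi)_n$ is convergent (i.e., unless one already has the commutation relation one is trying to prove, or some additional summability input like the one appearing later in Prop.~\ref{proposition:summable2}). Boundedness of a unitary does not imply it preserves the domain of a closed operator. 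The way the paper closes this is different in an essential way: when $S=-1$ the fields $\phi(g)^-$ themselves are bounded operators generating $\M'$, so one can dispense with Weyl unitaries entirely and run the affiliation argument directly with the $\phi(g)$, for which the needed domain stability $\phi(g)\fpno\subset\fpno\subset\domain(A)$ is trivially true; \eqref{eq:weyldomaincond} is thereby rendered unnecessary, not automatic. You should replace your claim accordingly.
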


\emph{Remark on the conditions}:
 We could make the condition \eqref{eq:weyldomaincond} in the hypothesis \ref{it:aclosed} of Prop.~\ref{proposition:locality} a bit weaker by requiring that $\exp (i t \phi(g)^-) \fpno \subset \domain{A}$ for small $|t|$, and we could also restrict to $\supp g \subset \rcal'$. With this, one can see that the proof of this Proposition (see below) proceeds analogously.

\begin{proof}
We will prove this proposition only in the case where $\rcal$ is the standard right wedge $\rcal=\wcal$ and its associated algebra   $\A(\rcal)=\A(\wcal)=\M$. The other cases  $\rcal=\wcal_x$ or $\rcal=\wcal_y'$ can be obtained from the case before by applying Poincar\'e transformations (and using the property of covariance of the associated algebra); analogously the case $\rcal=\ocal_{x,y}$ can be obtained from the case $\rcal=\wcal$ by applying Poincar\'e transformations to the right and left wedges of the intersection.

We notice that \ref{it:abounded} is a special case of \ref{it:aclosed} since a bounded operator is in particular a closed operator with domain the entire Hilbert space $\Hil$. Also, obviously a bounded operator is affiliated to the von Neumann algebra $\A(\rcal)$ if and only if it is an element of this von Neumann algebra. We will now prove \ref{it:aclosed}; for this
we consider an operator $A$ which is closed and $\omega$-local in $\wcal$. We need to show that $A$ commutes with the unitaries $\exp(i\phi(g)^-)$ where $g \in \dcal^\omega_\rbb(\wcal ')$, in a way that is compatible with the domain of $A$, in the sense that each unitary $\exp(i\phi(g)^-)$ in $\A(\rcal')$ should carry the domain of $A$, $\domain{A}$, onto itself and satisfy the commutation relation there (``affiliation'' of $A$ with the von Neumann algebra $\A(\rcal)$). So, let $g \in \dcal^\omega_\rbb(\wcal ')$, we consider the ``cut-off'' series expansion:
\begin{equation}
 B_n:=\sum_{k=0}^n \frac{i^k}{k!}(\phi(g)^-)^k,
\end{equation}
with $n \in \nbb_0$. This is an operator defined at least on $\fpno$, since $\phi$ can be applied finitely many times to finite particle vectors; also its adjoint is defined there, since $B_n^{*}$ is the same as $B_n$ with $i$ replaced by $-i$, considering that $\phi$ is essentially self-adjoint on the space of vectors of finite particle number.

Since $A$ is $\omega$-local in $\wcal$ and since $B_n$ has a series expansion in terms of $\phi(g)$, with $g \in \dcal^\omega_\rbb(\wcal ')$, we have, as a consequence of Lemma~\ref{lemma:localitychar}\ref{it:chartempered}, that $B_n$ commutes with $A$:
\begin{equation}\label{eq:phincommute}
     \bighscalar{B_n^\ast \psi}{ A \chi } = \bighscalar{ A^{\ast} \psi}{ B_n \chi }
\quad
\text{for all }
     \quad \psi,\chi \in \fpno,
\end{equation}
where we used that $\psi \in \domain{A\st}$, since in the above equation we took $A$ to the left side of the scalar product, applied to $\psi$.
Since $\psi$ and $\chi$ are analytic vectors for $\phi(g)$, we have for $n \to \infty$ that the exponential series for $B_n$ converges in strong operator topology, namely that $B_n\chi \to B\chi$ and $B_n\st\psi \to B\st\psi$, where $B:=\exp i \phi(g)^-$.

Equation~\eqref{eq:phincommute} implies in the limit $n \to \infty$:
\begin{equation}\label{eq:gfcommute}
     \bighscalar{B\st \psi}{ A \chi } = \bighscalar{ A^{\ast} \psi}{ B \chi }
\quad
\text{for all }
     \quad \psi,\chi \in \fpno.
\end{equation}
By hypothesis \eqref{eq:weyldomaincond}, $B \chi \in \domain{A}$; this implies that we can take $A\st$ in \eqref{eq:gfcommute} to the right side of the scalar product, that is $\hscalar{A\st \psi}{B \chi} = \hscalar{\psi}{A B \chi}$. Since $B$ is bounded, we can do the same with $B\st$ in \eqref{eq:gfcommute}, we have $\bighscalar{B\st \psi}{ A \chi }=\bighscalar{ \psi}{ BA \chi }$; since $\psi$ can be chosen from a dense set in $\Hil$, we conclude that
\begin{equation}\label{eq:bafpncommute}
    B A \chi = A B \chi
\quad
\text{for all }
     \chi \in \fpno.
\end{equation}
We would like now to generalize \eqref{eq:bafpncommute} to general vectors $\chi \in \domain{A}$ and to more general operators $B$ in the commutant of $\M$.

First, we consider a general vector $\chi \in \domain{A}$. Since $\fpno$ is a core for $A$, we can find a sequence of vectors $(\chi_j)$ in $\fpno$ such that $\chi_j \to \chi$,  and also $A \chi_j \to A \chi$ in Hilbert space norm since $A$ is a closed operator. Since $A \chi_j \to A \chi$ we compute from Eq.~\eqref{eq:bafpncommute}:
\begin{equation}\label{eq:balimit}
  A B \chi_j = B A \chi_j \to B A \chi.
\end{equation}
Since $B$ is bounded we have that $ B \chi_j \to B \chi$; moreover since $A$ is closed and $B\chi_j \in \domain{A}$ (by \eqref{eq:weyldomaincond}), we have that $B \chi \in \domain{A}$. Hence, from \eqref{eq:balimit} we have that
\begin{equation}\label{eq:badcommute}
     AB\chi = BA\chi
\quad 
\text{for all }
     \chi \in \domain{A},\;
      B = \exp(i \phi(g)^-),\; g \in \dcal^\omega_\rbb(\wcal').
\end{equation}
By doing a similar computation as in \eqref{eq:balimit}, we can see that the same result then holds if $B$ is a finite product of Weyl operators $\exp(i \phi(g)^-)$, or a linear combination of product of Weyl operators, or the strong operator limit of linear combinations of products of Weyl operators. Thus, by the double commutant theorem, \eqref{eq:badcommute} holds for all $B \in \{\exp(i \phi(g)^-) \,\big|\, g \in \dcal^\omega_\rbb(\wcal')  \}''=\M'$. This because by the double commutant theorem the closure of $\{\exp(i \phi(g)^-) \,\big|\, g \in \dcal^\omega_\rbb(\wcal')  \}$ in the strong operator topology is equal to the bicommutant of $\{\exp(i \phi(g)^-) \,\big|\, g \in \dcal^\omega_\rbb(\wcal')  \}$, which is in turn equal to $\M'$. Due to \eqref{eq:badcommute} and the fact that $B\in \M'$, this means that $A$ is affiliated with $\M$ (we denote this by $A \affiliated \M$), as claimed.

For the converse, we consider $A \affiliated \M$ and $g \in \dcal_\rbb(\wcal')$. We need to show that $A$ is $\omega$-local in $\wcal$. For any $t \in \rbb$, we have that $\exp i t \phi(g)^- \in \M'$, since we can write $\exp i t\phi(g)^-=\exp i \phi(tg)^-$ and we know that $\M'$ is generated by $\exp i \phi(g)^-$ with $g \in \dcal_\rbb(\wcal')$. The fact that $A$ is affiliated with $\M$ implies that $A$ commutes with $\exp i t \phi(g)^-$, $g \in \dcal_\rbb(\wcal')$, in the following sense:
\begin{equation}\label{eq:wlconverse}
   \forall \psi,\chi \in \fpno\; \forall t \in \rbb: \quad
    \hscalar{e^{-i t \phi(g)^-} \psi}{A \chi}
=   \hscalar{A^\ast \psi}{e^{i t \phi(g)^-} \chi}.
\end{equation}
To pass from the Weyl operators back to the fields $\phi$ (which is needed for $\omega$-locality), we notice that since $\psi,\chi$ are analytic vectors for $\phi(g)$, we can think of these $\exp(i t \phi(g)^-)$ as series expansion in terms of $\phi$; in particular, we have that these functions are analytic in $t$ and therefore both sides of \eqref{eq:wlconverse} are real analytic in $t$. We can then compute the derivatives at $t=0$ of both sides of \eqref{eq:wlconverse} and equal the corresponding derivatives; hence we find:
\begin{equation}
\forall \psi,\chi \in \fpno\; \forall t \in \rbb: \quad
    \hscalar{\phi(g) \psi}{A \chi}
=   \hscalar{A^\ast \psi}{\phi(g) \chi}.
\end{equation}
This implies that $A$ is $\omega$-local in $\wcal$ by Lemma~\ref{lemma:localitychar}\ref{it:chartempered}. This completes the proof of \ref{it:aclosed}.

To prove \ref{it:aising}, note that in the case $S=-1$, the operators $\phi(g)^-$ are actually bounded operators, and generate the algebra $\mcal'$ \cite{Lechner:2005}. So, we do not need to consider Weyl operators. We can also restrict to $g \in \dcal_\rbb^\omega(\wcal')$ since this space is dense in $\dcal_\rbb(\wcal')$ which is dense in the space of test functions $g \in \scal(\wcal')$ considered by Lechner. The algebra $\mcal'$ generated by these operators smeared with test functions in $\dcal_\rbb^\omega(\wcal')$ is dense in the corresponding algebra constructed by Lechner \cite{Lechner:2005}; hence it is the same algebra, since it is closed by definition.
It is clear that for fields $\phi(g) \fpno \subset \fpno\subset \domain{A}$, and therefore that they transform $\fpno$ into $\domain A$. Using this instead of \eqref{eq:weyldomaincond}, we can follow a similar, but simpler, computation as for \ref{it:aclosed} without reference to Weyl operators, but only to fields, and show that $A \affiliated \mcal$.
\end{proof}

Considering these results, our strategy in order to construct local operators will be first to show $\omega$-locality, then to prove (independently) that $A$ is closed or closable, and finally to apply Proposition~\ref{proposition:locality}.

\section{Closable operators and summability}  \label{sec:Amnbounds}

In Prop.~\ref{proposition:locality} \ref{it:aclosed} we have some conditions for the affiliation of operators to local algebras, in particular one condition is that the operator is closable; but it is difficult to characterize closability of a quadratic form $A$ in terms of the Araki expansion and to apply this condition directly to examples. Also, even if we can show that $A$ is closable, we can not say much in general about its domain.

For this reason we are aiming only to find a \emph{sufficient} condition, and not a \emph{necessary} one, for the closability of a quadratic form, which would let us to apply Prop.~\ref{proposition:locality} \ref{it:aclosed}. This condition is a summability condition on the norms of the Araki coefficients $\cme{m,n}{A}$, which would imply the absolute convergence of the sum in the Araki expansion on a certain domain.

So, first we present a Proposition which gives a sufficient condition for the closability of $A$ as an operator.

\begin{proposition}\label{proposition:summable1}
 Let $A \in \qf^\omega$. Suppose that for each fixed $n$,
\begin{equation}\label{eq:summable1}
   \sum_{m=0}^\infty \frac{2^{m/2}}{\sqrt{m!}} \Big( \onorm{ \cme{m,n}{A} }{m \times n} + \onorm{ \cme{n,m}{A} }{n \times m} \Big) < \infty.
\end{equation}
Then, $A$ extends to a closed operator $A^-$ with core $\fpno$, and $\fpno \subset \domain  (A^-)\st$.
\end{proposition}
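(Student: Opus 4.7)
The plan is to combine the Araki expansion (Theorem~\ref{theorem:arakiexpansion}) with the operator bound of Proposition~\ref{pro:zzdcrossnorm} to show that the quadratic form $A$ extends to an operator on $\fpno$, and then to use the symmetric role of the two summands in the hypothesis \eqref{eq:summable1} to obtain closability through the density of the adjoint's domain.

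First I would fix $\psi \in Q_k \fpno$ for some $k \in \nbb_0$ and estimate $\| A\psi\|$ term by term in the expansion \eqref{eq:arakiexp}. The summand $\frac{1}{m!n'!}z^{\dagger m} z^{n'}(\cme{m,n'}{A})\psi$ vanishes for $n' > k$, and for $n' \leq k$ the bound \eqref{eq:cross} together with $\psi \in \Hil^\omega$ yields
\begin{equation}
 \Bigl\| \frac{1}{m!n'!} z^{\dagger m} z^{n'}(\cme{m,n'}{A})\psi \Bigr\|
 \;\leq\; \frac{2\sqrt{k!}}{n'!(k-n')!}\cdot \frac{\sqrt{(k-n'+m)!}}{m!}\,\onorm{\cme{m,n'}{A}}{m\times n'}\,\gnorm{e^{\omega(H/\mu)}\psi}{}.
\end{equation}
The elementary inequality $\binom{k-n'+m}{m}\leq 2^{k-n'+m}$, i.e.\ $(k-n'+m)!\leq 2^{k-n'+m}\, m!\,(k-n')!$, converts the combinatorial factor into $2^{(k-n'+m)/2}\sqrt{(k-n')!}/\sqrt{m!}$, so that the whole expression is bounded by a constant depending only on $k, n'$ times $\frac{2^{m/2}}{\sqrt{m!}}\,\onorm{\cme{m,n'}{A}}{m\times n'}$. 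Summing over $m$ is finite by the first half of \eqref{eq:summable1}, and summing over $n'=0,\ldots,k$ is a finite sum. Hence the Araki series converges absolutely in $\Hil$, defines $A\psi$ as a genuine vector in $\Hil$, and extends the quadratic form to a linear map $A\colon\fpno\to\Hil$.

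Next I would argue symmetrically for the adjoint. Viewing $A^\ast$ as a quadratic form in $\qf^\omega$ and applying Theorem~\ref{theorem:arakiexpansion} to it, one reads off the relation $\cme{m,n}{A^\ast}(\thetav,\etav) = \overline{\cme{n,m}{A}(\eta_n,\ldots,\eta_1,\theta_m,\ldots,\theta_1)}$, which follows from the involution $(z^{\dagger m}z^n(f))^\ast = z^{\dagger n} z^m(f^\ast)$ noted in the Remark after Proposition~\ref{pro:zzdcrossnorm}, together with the uniqueness statement Proposition~\ref{proposition:expansionunique}. Since the involution preserves the crossed norm, $\onorm{\cme{m,n}{A^\ast}}{m\times n}=\onorm{\cme{n,m}{A}}{n\times m}$. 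The \emph{second} summand of \eqref{eq:summable1} is then exactly the hypothesis needed to repeat the argument of the previous paragraph with $A^\ast$ in place of $A$, giving $A^\ast\chi\in\Hil$ for every $\chi\in\fpno$.

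Because $A^\ast$ is densely defined on $\fpno$, the operator $A|_{\fpno}$ is closable; denote its closure by $A^-$. Using the standard identity $(A^-)^\ast = (A|_{\fpno})^\ast$, we get $\fpno\subset\domain{A^\ast}=\domain{(A^-)^\ast}$, and $\fpno$ is a core for $A^-$ by construction. I expect the main technical obstacle to lie in the second paragraph: verifying cleanly that $\cme{m,n}{A^\ast}$ is obtained from $\cme{n,m}{A}$ by the described reordering and complex conjugation requires a small calculation based on Proposition~\ref{proposition:fmninversion} and the involution formula, and one must check that the $S$-symmetrizations and the sign conventions for $f^\ast$ are consistent so that the norm identity $\onorm{\cme{m,n}{A^\ast}}{m\times n}=\onorm{\cme{n,m}{A}}{n\times m}$ really holds.
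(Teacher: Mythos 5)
Your proposal is correct and follows essentially the same route as the paper's proof: you estimate $\|A\psi\|$ via Proposition~\ref{pro:zzdcrossnorm} applied to the Araki expansion, use the same elementary binomial inequality $\frac{(p+q)!}{p!q!}\le 2^{p+q}$ to control the combinatorial factors, invoke the second half of \eqref{eq:summable1} symmetrically for $A^\ast$ via the involution $(z^{\dagger m} z^n(f))^\ast = z^{\dagger n} z^m(f^\ast)$ and $\onorm{f^\ast}{n\times m}=\onorm{f}{m\times n}$, and conclude closability from the dense definedness of both $A|_{\fpno}$ and $A^\ast|_{\fpno}$. The only cosmetic difference is that the paper estimates the operator norm $\gnorm{A Q_k e^{-\omega(H/\mu)}}{}$ directly and summarizes the adjoint step as ``the roles of $m$ and $n$ exchanged,'' whereas you spell out the identification of $\cme{m,n}{A^\ast}$; both are equivalent.
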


\begin{proof}
Let $k \in \nbb_0$. Using Prop.~\ref{pro:zzdcrossnorm} and the representation \eqref{eq:arakiexp}, we compute
\begin{equation}
\begin{aligned}
 \gnorm{ A Q_k e^{-\omega(H/\mu) } }{}
 & \leq \sum_{m,n =0}^\infty \frac{1}{m!n!} \gnorm{ z^{\dagger m}z^n(\cme{m,n}{A}) Q_k e^{-\omega(H/\mu) }  }{}
\\
 & \leq \sum_{m=0}^\infty \sum_{n=0}^k \frac{2}{m!n!} \frac{\sqrt{k!(k-n+m)!}}{(k-n)!} \onorm{ \cme{m,n}{A} }{m \times n} ;
\end{aligned}
\end{equation}
this means that if the infinite series on the r.h.s.~in the above equation converges, then we can extend $A Q_k e^{-\omega(H/\mu) }$ to a bounded operator.
So, we estimate $k!/n!(k-n)! \leq 2^k$ and $(k-n+m)!/(k-n)!m! \leq 2^{k-n+m}$ using $\frac{(p+q)!}{p!q!}\leq 2^{p+q}$, and we obtain
\begin{equation}\label{eq:aqknorm}
 \gnorm{ A Q_k e^{-\omega(H/\mu) } }{}
  \leq 2^{k+1} \sum_{n=0}^k  \frac{2^{-n/2}}{\sqrt{n!}} \sum_{m=0}^\infty  \frac{2^{m/2}}{\sqrt{m!}} \onorm{ \cme{m,n}{A} }{m \times n} ,
\end{equation}
and this converges since we assumed \eqref{eq:summable1}. Since $k$ was arbitrary, this allows us to define $A$ as an (unbounded) operator on $\fpno$. We can use a similar argument with the roles of $m$ and $n$ exchanged and with $A\st$ in place of $A$ to show that also $\fpno \subset \domain A\st$. Hence we have that both $A$ and $A\st$ are densely defined in $\Hil$ (as $\fpno$ is dense in $\Hil$); this implies by application of \cite[Thm.~VIII.1]{Reed:1972} (see item (b) and the proof of the theorem) that $A^- := A^{\ast\ast}$ is a closed extension of $A$ with core $\fpno$ (by definition of $A^-$), and $(A^-)\st = A^{\ast\ast\ast}=A\st$.
\end{proof}

We can also find a stricter summability condition so that also the extra condition in Prop.~\ref{proposition:locality}\ref{it:aclosed} \eqref{eq:weyldomaincond} is fulfilled.

\begin{proposition}\label{proposition:summable2}
 Let $A \in \qf^\omega$. Suppose that
\begin{equation}\label{eq:summable2}
   \sum_{m,n=0}^\infty \frac{2^{(m+n)/2}}{\sqrt{m!n!}} \onorm{ \cme{m,n}{A} }{m \times n} < \infty.
\end{equation}
Then, $A$ extends to a closed operator $A^-$ with core $\fpno$; one has $\fpno \subset \domain (A^-)\st$; and the condition \eqref{eq:weyldomaincond} is fulfilled by $A^-$.
\end{proposition}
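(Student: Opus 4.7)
The first two conclusions are inherited from Proposition~\ref{proposition:summable1}: the hypothesis \eqref{eq:summable2} implies \eqref{eq:summable1}, since convergence of the non-negative double sum forces, for each fixed $n$, convergence of the inner sum over $m$ of both $\onorm{\cme{m,n}{A}}{m\times n}$ and $\onorm{\cme{n,m}{A}}{n\times m}$; the $n$-dependent factors $2^{n/2}/\sqrt{n!}$ are constants in $m$ and can be absorbed. Hence $A$ already extends to a closed $A^-$ with core $\fpno$ and $\fpno\subset\domain (A^-)^\ast$.

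I therefore focus on verifying \eqref{eq:weyldomaincond}. Fix $g\in\dcal_\rbb^\omega(\rbb^2)$ and $\chi\in\fpno$ with $Q_\ell\chi=\chi$, and set $\Psi:=\exp(i\phi(g)^-)\chi$. My approach is to approximate $\Psi$ by the truncations $\Psi^{(N)}:=Q_N\Psi$ and to show (a) each $\Psi^{(N)}\in\fpno$, (b) $\Psi^{(N)}\to\Psi$ in $\Hil$, and (c) the sequence $(A\Psi^{(N)})$ is Cauchy in $\Hil$; closedness of $A^-$ then forces $\Psi\in\domain A^-$.

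The key regularity input is $\Psi\in\Hil^\omega$. Iterating the bounds from the lemma around \eqref{omegaz}, namely $\|e^{\omega(H/\mu)}\zd(g^+)e^{-\omega(H/\mu)}Q_j\|\leq\sqrt{j+1}\,\onorm{g^+}{2}$ and $\|e^{\omega(H/\mu)}z(g^-)e^{-\omega(H/\mu)}Q_j\|\leq\sqrt{j}\,\gnorm{g^-}{2}$, along $k_1$ consecutive field operators yields the Bose-type estimate
\[
\|e^{\omega(H/\mu)}\phi(g)^{k_1}\chi\|\leq (2c)^{k_1}\sqrt{(\ell+k_1)!/\ell!}\,\|e^{\omega(H/\mu)}\chi\|,
\]
with $c:=\max(\onorm{g^+}{2},\gnorm{g^-}{2})<\infty$ (finite because $g\in\dcal^\omega(\rbb^2)$). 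Taylor-expanding $\exp(i\phi(g)^-)$ and summing against this bound gives $\|e^{\omega(H/\mu)}\Psi\|<\infty$, so $\Psi\in\Hil^\omega$; since $e^{\omega(H/\mu)}$ commutes with $Q_N$, items (a) and (b) follow.

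For (c), the argument leading to \eqref{eq:aqknorm} in Proposition~\ref{proposition:summable1}, applied to a vector of fixed particle number $k$, gives $\|AP_k\Psi\|\leq 2^{k+1}C\|e^{\omega(H/\mu)}P_k\Psi\|$, where $C$ is the finite constant in \eqref{eq:summable2}. Since $P_k\phi(g)^{k_1}\chi=0$ whenever $k_1<k-\ell$, the same iterated field bounds yield $\|e^{\omega(H/\mu)}P_k\Psi\|\leq\sum_{k_1\geq k-\ell}(2c)^{k_1}\sqrt{(\ell+k_1)!/\ell!}/k_1!\cdot\|e^{\omega(H/\mu)}\chi\|$; the ratio of consecutive summands $2c\sqrt{\ell+k_1+1}/(k_1+1)$ tends to zero, so the tail is controlled by its first term, of order $(2c)^{k-\ell}\sqrt{k!/\ell!}/(k-\ell)!\sim(2c)^k k^{\ell/2}/\sqrt{k!}$. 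Consequently $\sum_k 2^k\|e^{\omega(H/\mu)}P_k\Psi\|<\infty$, hence $\sum_k\|AP_k\Psi\|<\infty$, and $(A\Psi^{(N)})$ is Cauchy, completing the proof. The main obstacle is precisely this competition: the $2^k$ growth in the estimate for $AP_k\Psi$ (which originated from the coarse combinatorial bounds $k!/(n!(k-n)!)\leq 2^k$ in \eqref{eq:aqknorm}) must be beaten by the super-exponential decay of $P_k\Psi$, which ultimately stems from the $1/k_1!$ in the exponential series dominating the $\sqrt{(\ell+k_1)!}$ Bose growth.
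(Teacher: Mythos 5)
Your proposal is correct and follows essentially the same route as the paper: reduce to Proposition~\ref{proposition:summable1} for closability, then establish \eqref{eq:weyldomaincond} by bounding $\|e^{\omega(H/\mu)}\phi(g)^j e^{-\omega(H/\mu)}Q_\ell\|$ by induction in $j$, using this to control $\|e^{\omega(H/\mu)}P_k e^{i\phi(g)^-}\chi\|$ via the Taylor series, combining with the $2^{k+1}$-type bound from \eqref{eq:aqknorm} to show $\sum_k\|AP_k e^{i\phi(g)^-}\chi\|<\infty$, and concluding by closedness. Your asymptotic claim $\sqrt{k!/\ell!}/(k-\ell)!\sim k^{\ell/2}/\sqrt{k!}$ undercounts a factor of roughly $k^{\ell/2}$ (the correct order is $k^\ell/\sqrt{k!}$), but since $\ell$ is fixed this polynomial correction is harmless for the convergence argument.
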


\begin{proof}
Since the hypothesis of this Proposition is stronger than the one in Prop.~\ref{proposition:summable2}, the first part of the statement above has been already proved in the proposition before; so it only remains to show that $\exp (i \phi(g)^-) \fpno \subset \domain{A^-}$.
For this purpose it is useful the following lemma:
\begin{lemma}
With $\phi(g) = \zd(g^+) + z(g^-)$ and for $g \in \dcal^\omega(\rbb^2)$, it follows from \eqref{omegaz} that
\begin{equation}
 \gnorm{ e^{\omega(H/\mu)} \phi(g)^j e^{-\omega(H/\mu)} Q_\ell  }{}
  \leq  \sqrt{\frac{(\ell+j)!}{\ell!}}\,c_g^j , \quad \text{where } c_g := \onorm{g^+}{2} + \onorm{g^-}{2}.\label{normphij}
\end{equation}
\end{lemma}
\begin{proof}
We prove Eq.~\eqref{normphij} using induction in $j$:
\begin{multline}
\gnorm{ e^{\omega(H/\mu)} \phi(g)^j e^{-\omega(H/\mu)} Q_\ell  }{}\\
=\gnorm{ e^{\omega(H/\mu)}\phi(g)e^{-\omega(H/\mu)} Q_{\ell +j-1}e^{\omega(H/\mu)}\phi(g)^{j-1}e^{-\omega(H/\mu)}Q_\ell }{}\\
 =\gnorm{ e^{\omega(H/\mu)}\phi(g)e^{-\omega(H/\mu)} Q_{\ell +j-1}}{}\cdot \gnorm{e^{\omega(H/\mu)}\phi(g)^{j-1}e^{-\omega(H/\mu)}Q_\ell }{}\\
 \leq \sqrt{\ell +j}c_{g}\sqrt{\frac{(\ell +j -1)!}{\ell!}}c_{g}^{j-1}\\
 =c_{g}^{j}\sqrt{\frac{(\ell+j)!}{\ell!}}.
\end{multline}
where in the fourth inequality we made use of \eqref{omegaz} and of \eqref{normphij} with $j$ replaced by $j-1$.
\end{proof}
Since $\fpn$ consists of analytic vectors for $\phi(g)$, we can write  $e^{i \phi(g)^-} $ as a (convergent) exponential series; we then obtain for $k \geq \ell$ and using \eqref{normphij},
\begin{multline}
 \gnorm{ P_k e^{\omega(H/\mu)} e^{i \phi(g)^-} e^{-\omega(H/\mu)} Q_\ell  }{}
  \leq \sum_{j=k-\ell}^\infty \frac{1}{j!}  \gnorm{ e^{\omega(H/\mu)} \phi(g)^j e^{-\omega(H/\mu)} Q_\ell  }{}
\\
\leq \sum_{j=k-\ell}^{\infty}\frac{1}{j!}\sqrt{\frac{(\ell + j)!}{\ell!}} c_{g}^{j}
\\
=\sum_{p=0}^{\infty}\frac{1}{(p+k-\ell)!}\sqrt{\frac{(p+k)!}{\ell!}}c_{g}^{p+k-\ell}
\\
\leq c_g^{k-\ell} \frac{2^{k/2}}{\sqrt{(k-\ell)!}} \sum_{p=0}^\infty \frac{(\sqrt{2} c_g)^p}{\sqrt{p!}}
\leq c_g' \frac{(\sqrt{2}c_g)^k}{\sqrt{(k-\ell)!}}\label{powerphinorm}
\end{multline}
with some constant $c_g'>0$ depending on $g$. (Notice that in the first inequality we can replace $\phi(g)^-$ with $\phi(g)$ since $e^{i\phi(g)^-}$ is applied to finite particle number vectors.)

Now let $\psi \in \Hil_\ell^\omega$. Using the estimates \eqref{eq:aqknorm}, \eqref{powerphinorm} and $P_k \leq Q_k$, we have
\begin{multline}
 \gnorm{ A P_k e^{i \phi(g)^-} \psi  }{}
  \leq \gnorm{ A Q_k e^{-\omega(H/\mu)}  }{}
 \gnorm{ P_k e^{\omega(H/\mu)} e^{i \phi(g)^-} e^{-\omega(H/\mu)} Q_\ell  }{}
 \gnorm{  e^{\omega(H/\mu)} \psi }{}
\\
  \leq
 2 c_g' \gnorm{ e^{\omega(H/\mu)} \psi}{}
\frac{(\sqrt{8}c_g)^k}{\sqrt{(k-\ell)!}}
\sum_{m,n=0}^\infty
\frac{2^{(m-n)/2}}{\sqrt{m!n!}} \onorm{ \cme{m,n}{A} }{m \times n},\label{apweylnorm}
\end{multline}
where the series exists by the hypothesis \eqref{eq:summable2}. This expression is summable over $k$ since $\frac{(\sqrt{8}c_g)^k}{\sqrt{(k-\ell)!}}$ is summable over $k$ by quotient criteria. Since $Q_k \exp(i \phi(g)^-) \psi$ is a vector in $\fpno\subset \domain A^-$ due to \eqref{powerphinorm} and in particular it is a finite particle number vector, we have that $\chi_k := Q_k \exp(i \phi(g)^-) \psi \in \domain A^-$; by \eqref{apweylnorm}, the same is true for $A \chi_k$. This implies that $\chi_k$ and $A \chi_k$ are both Cauchy sequences in $\Hil$. Indeed, one has from \eqref{powerphinorm} that $Q_{k}\exp(i\phi(g)^-)\psi\rightarrow \exp(i\phi(g)^-)\psi$ for $k\rightarrow \infty$, so this is a convergent sequence in $\Hil$ and therefore a Cauchy sequence. Similarly, due to \eqref{apweylnorm}, we have:
\begin{multline}
||AQ_{k}e^{i\phi(g)^-}\psi - AQ_{\ell}e^{i\phi(g)^-}\psi||=||A(Q_{k}-Q_{\ell})e^{i\phi(g)^-}\psi||\\
\leq \sum_{j=\ell +1}^{k}||AP_{k}e^{i\phi(g)^-}\psi||\leq \sum_{j=\ell +1}^{\infty}||AP_{k}e^{i\phi(g)^-}\psi||\rightarrow 0, \quad \ell \rightarrow \infty.
\end{multline}
Hence, this is also a convergent sequence in $\Hil$ and therefore a Cauchy sequence.

This means that $(\chi_k,A \chi_k)$ converges in graph norm in $\Hil \times \Hil$, since both entries are Cauchy sequences. Since $A$ is a closed operator, the limit point of that sequence in the graph is also in the graph; this implies that $\lim \chi_k = \exp(i \phi(g)^-) \psi$ is contained in the domain of $A^-$ by definition of graph. Thus \eqref{eq:weyldomaincond} holds.
\end{proof}

\section{Examples of closable operators} \label{sec:boundsexamples}

We present now some examples where we can apply the results before and obtain closable operators.

The simplest example is where the Araki expansion of $A$ is finite, namely where only a finite number of $\cme{m,n}{A}$ is different from zero. An example of such situation is when $A$ is a polynomial in $\phi(g)$. In this case the conditions of Prop.~\ref{proposition:summable2} are fulfilled in a trivial way. In general the requirement that $A$ is local will force the Araki expansion to be infinite; however, an interesting example of a local operator with a finite Araki expansion can be found in Sec.~\ref{sec:evenexample}.

We present an example of an operator with infinite Araki expansion: the ``normal ordered exponential'' of $\phi(g)$ given by
\begin{equation}\label{eq:normalexp}
  \mathclap{:}\exp c \,\phi(g)\mathclap{:} \; := \sum_{m,n \geq 0} \frac{c^{m+n}}{m!n!} \zd(g^+)^m z(g^-)^n , \quad c \in \cbb, \;g \in \scal(\rbb^2).
\end{equation}
Using \eqref{eq:fmndef} and \eqref{proposition:fmnbasis}, its Araki coefficients are
\begin{equation}
 \cmelong{m,n}{\,\mathclap{:}\exp c \,\phi(g)\mathclap{:} \,}(\thetav,\etav) = c^{m+n}
\operatorname{Sym}_{S,\thetav} \operatorname{Sym}_{S,\etav} g^+(\theta_1)\cdots g^+(\theta_m) g^-(\eta_1)\cdots g^-(\eta_n).
\end{equation}
By \eqref{eq:crossnormcomparison} (right inequality) and using $||g^+||^{m}||g^-||^n \leq (||g^+||+||g^-||)^{m} (||g^+||+ ||g^-||)^{n}$, they fulfil for any $\omega$,
\begin{equation}
  \bigonorm{ \cmelong{m,n}{\,\mathclap{:}\exp c \,\phi(g)\mathclap{:} \,} }{m \times n}
  \leq |c|^{m+n}(\gnorm{g^+}{2}+ \gnorm{g^-}{2})^{m+n},
\end{equation}
This implies that \eqref{eq:normalexp} is a well-defined element of $\qf^\omega$ by Prop.~\ref{proposition:expansionunique}. Moreover, the summability condition \eqref{eq:summable2} is also fulfilled:
\begin{eqnarray}
\sum_{m,n=0}^{\infty}\frac{2^{(m+n)/2}}{\sqrt{m!n!}}\bigonorm{ \cmelong{m,n}{\,\mathclap{:}\exp c \,\phi(g)\mathclap{:} \,} }{m \times n}
&=& \sum_{m,n=0}^{\infty}\frac{2^{(m+n)/2}}{\sqrt{m!n!}}|c|^{m+n}(\gnorm{g^+}{2}+ \gnorm{g^-}{2})^{m+n}\nonumber\\
&\leq& \sum_{m,n=0}^{\infty}\frac{c'^{m+n}}{\sqrt{m!n!}}.
\end{eqnarray}
We would also be able to show using techniques as in Sec.~\ref{sec:ftoa} that this quadratic form is $\omega$-local in $\wcal_r'$ if $\supp g \subset \wcal_r'$. Then, we can apply Proposition~\ref{proposition:summable2} and show that $\mathclap{:}\exp c \,\phi(g)\mathclap{:}$ is a closable operator and moreover, by Proposition~\ref{proposition:locality}, that its closure is affiliated with $\A(\wcal)$.

But our interest is actually in closable operators which are affiliated with the ``strictly local'' algebra $\A(\ocal)$, where $\ocal$ are the double cones. We will investigate examples of these operators in Sec.~\ref{sec:localexamples}. For the moment, we discuss some properties of the class of closable operators that we consider.

In the example above, with the methods discussed before, we have shown that $\mathclap{:}\exp c \,\phi(g)\mathclap{:}$ is closable for any $c \in \cbb$. In the free case, where $S=1$, we know even more: if $c$ is purely imaginary and $g$ is real valued, the operator is actually bounded. However, this is not visible within our methods since the exponential series does not converge \emph{absolutely} in operator norm.
In nontrivial examples we will always need to compute estimates for $\onorm{\cme{m,n}{A}}{m \times n}$, instead with the exact expressions $\cme{m,n}{A}$, so that we do not have control on the convergence of the series for the $\cme{m,n}{A}$ itself, but we can only use summability conditions like \eqref{eq:summable1} or \eqref{eq:summable2}.

We expect that whenever we have local observables in the theory which exist as Wightman fields or more in general as Jaffe fields \cite{Jaffe:1967}, they can be described by our class of closable operators. But our class is actually even more general: Indeed we do not require that our closable local operators $A$ have a common invariant domain, namely that they fulfil the condition common to Wightman fields that $\forall f\; :\; \phi(f)\mathcal{D}\subset \mathcal{D}$, where $\mathcal{D}\subset \Hil$ is dense and $\mathcal{D} \subset \domain \phi(f)$. This means that we do not know whether $n$-point functions of these operators would exist. On the other hand, they would still be meaningful local observables by showing that they are affiliated with local algebras of bounded operators.

\chapter{The characterization theorem for local operators} \label{sec:localitythm}

We will present in this chapter the conditions which characterize $\omega$-locality of a quadratic form $A$ in a standard double cone $\ocal_r$ of radius $r$ and center at the origin. These conditions are formulated in different ways. They can be imposed on a quadratic form, and in this case we call them conditions (A); or they can be imposed on the Araki coefficients as analytic functions of one variable, and we call them conditions (F'); they can also be imposed on the Araki coefficients as meromorphic functions of several variables, in that case we call them conditions (F). We will show in the next Chapters~\ref{sec:atofp},\ref{sec:fptof} and \ref{sec:ftoa} that these conditions are equivalent.

Note that the conditions (A), (F) and (F') depend on the indicatrix $\omega$ and on the radius $r$ of the double cone; but we will not indicate this explicitly.

\section{Formulate conditions (A)}

The condition of locality for an operator, or more in general, for a quadratic form $A\in\qf^\omega$ is better formulated using the notion of $\omega$-locality given by Def.~\ref{definition:omegalocal}. So, we have the following definition of the condition (A):

\begin{definition}\label{def:conditionA}
$A \in \qf^\omega$ fulfills condition (A) if it is $\omega$-local in $\ocal_r$.
\end{definition}

\section{Formulate conditions (F')}

Now we expand the quadratic form $A$ into the Araki series; we find that the condition of $\omega$-locality can be expressed for the coefficients $\cme{m,n}{A}$ in terms of analytic continuations of these distributions along certain lines in $\rbb^{k}$. To describe this continuation we need to introduce some notation, that can be found also in more details in Appendix~\ref{sec:graphs}.

We call a \emph{graph} $\gcal$ in $\rbb^k$ a collection of nodes which are points on the lattice $\pi \zbb^k$, connected by edges; the edges are lines parallel to the axis which connect nodes on the lattice that are next neighbors. Namely, an edge is given by $\lambdav(s) = \boldsymbol{\nu} + s \ev^{(j)}$, where $\boldsymbol{\nu}$ and $\boldsymbol{\nu} + \pi \ev^{(j)}$ are nodes of $\gcal$, $\ev^{(j)}$ is a standard basis vector of $\rbb^k$ and $0 < s < \pi$.

The \emph{tube over $\gcal$}, which we denote $\tube(\gcal)$, is the set of all $\zetav = \thetav + i \lambdav$ with $\thetav \in \rbb^k$ and $\lambdav$ on an edge of $\gcal$.

We call a \emph{CR distribution on $\tube(\gcal)$} (where \emph{CR} stands for ``Cauchy-Riemann'') a distribution on $\tube(\gcal)$ which is analytic along the edges; namely, considering the edge $\lambdav(s)$ given above, we have that $F$ is analytic in $\zeta_j$ in the specified domain, and it is still a distribution in the remaining (real) variables.
We also require that the boundary values at the nodes, namely the values in the limits $s \searrow 0$ and $s \nearrow \pi$, exist as distributions, and that if several edges meet in a common node, then the corresponding distributional boundary values are the same. In this way, we can just speak of \emph{the} distributional boundary value at a node, without specifying which is the direction of the limit.

The first graph in $\rbb^k$ that we will consider in our study is denoted with $\gcal^k_+$ and it is given as follows. It has the nodes $\lambdav^{(k,j)} = (0,\ldots,0,\pi\ldots,\pi)$, where there are $j$ entries of $\pi$ with $0 \leq j \leq k$. Its edges are the lines parallel to the axis which connect the nodes $\lambdav^{(k,j)}$ and $\lambdav^{(k,j+1)}$.

Another graph that we will consider is denoted with $\gcal^k_-$. This graph is given by $\gcal^k_+$ shifted of $-i\piv$. Namely, it has the nodes $\lambdav^{(k,-j)} = (-\pi,\ldots,-\pi,0,\ldots,0)$, where there are $j$ entries of $-\pi$, with $0 \leq j \leq k$, and the edges are lines parallel to the axis connecting next neighbors.

We also consider the union of $\gcal^k_+$ and $\gcal^k_-$, that we denote with $\gcal^k_0$, cf.~Fig.~\ref{fig:zerostair}. Since the graphs $\gcal^k_\pm$ have the node $\lambdav^{(k,0)}=0$ in common, the graph $\gcal^k_0$ has $2k+1$ nodes and $2k$ edges.

\begin{figure}
\begin{center}
\input{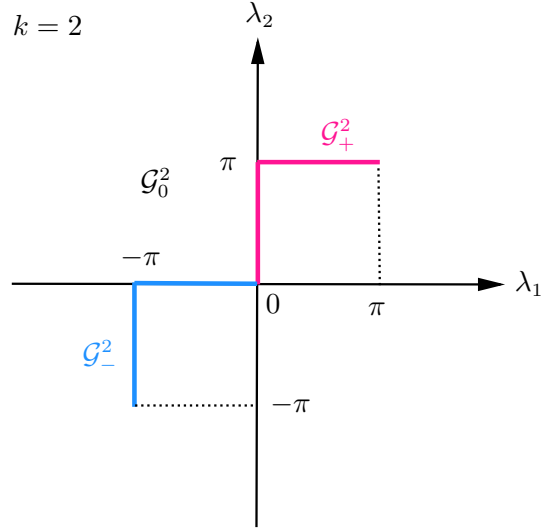}
\caption[The stair $\gcal_{0}^{2}$]{The stair $\gcal_{0}^{2}=\gcal^2_+ \cup \gcal^2_-$. As for $\gcal^2_+$ and $\gcal^2_-$ see the corresponding definitions before Def.~\ref{def:conditionFP}} \label{fig:zerostair}
\end{center}
\end{figure}

Now that we have introduced our notation, we can formulate our locality condition in terms of CR distributions on $\tube(\gcal^k_0)$ as follows:

\begin{definition} \label{def:conditionFP}
A collection $F'=(F'_{k})_{k=0}^\infty$ of distributions on $\tube(\gcal^k_0)$ fulfills condition (F') if the following holds for any $k$, and with $\thetav \in \rbb^k$ arbitrary:
\begin{enumerate}
\renewcommand{\theenumi}{(F\arabic{enumi}')}
\renewcommand{\labelenumi}{\theenumi}

\item \label{it:fpmero} \emph{Analyticity:} $F'_k$ are CR distributions on $\tube(\gcal^k_0)$.

\item \label{it:fpperiod} \emph{Periodicity:}
$
F_k' (\thetav + i \lambdav^{(k,-k)}) =
F_k' (\thetav + i \lambdav^{(k,k)}).
$

\item \label{it:fpsymm} \emph{$S$-symmetry}:
For any $1 \leq j < k$,
\begin{equation*}
F_k'(\theta_1, \ldots, \theta_j, \theta_{j+1}, \ldots , \theta_k)
= S(\theta_{j+1}-\theta_j) F_k'(\theta_1, \ldots, \theta_{j+1}, \theta_{j}, \ldots , \theta_k) .
\end{equation*}

\item \label{it:fprecursion} \emph{Recursion relations:}
For any $0 \leq m \leq k$,
\begin{equation*}
F_{k}'(\thetav +i\lambdav^{(k,-m)})
=\sum_{C\in\ccal_{m,k-m}}(-1)^{|C|}\delta_{C}
S_{C} R_{C}(\thetav) F_{k-2|C|}'(\check \thetav + i \lambdav^{(k-2|C|,m-|C|)}),
\end{equation*}
where $\check\thetav = (\theta_{m+1},\ldots,\widehat{\theta_{l_{1}}},\ldots,\widehat{\theta_{l_{|C|}}},\ldots,\theta_{k},\theta_{1},\ldots,\widehat{\theta_{r_{1}}},\ldots,\widehat{\theta_{r_{|C|}}},\ldots,\theta_{m}).$

\item \label{it:fpboundsreal}
\emph{Bounds at nodes:}
For any $j \in \{0,\ldots,k\}$,
\begin{equation*}
\onorm{ F_k'( \cdotarg + i \lambdav^{(k,j)} )}{(k-j) \times j} < \infty,
 \quad
\onorm{ F_k'( \cdotarg + i \lambdav^{(k,-j)} )}{j \times (k-j)} < \infty.
\end{equation*}

\item \label{it:fpboundsimag}
\emph{Bounds at edges:}
For each fixed $k$ there exists a $c>0$ such that for any $\lambdav$ on an edge of $\mathcal{G}^{k}_{\pm}$,
\begin{equation*}
\gnorm{ e^{\pm i \mu r \sum_j \sinh \zeta_j}e^{-\sum_{j}\oa(\pm \sinh \zeta_{j})}
F_k'( \pmb{\zeta})\big\vert_{\zetav=\cdot +i\lambdav} }{\times} \leq c.
\end{equation*}

\end{enumerate}

\end{definition}

Note that in \ref{it:fpboundsimag} the argument $\pm \sinh \zeta_j$ lies in the upper half complex plane both on $\gcal_+^k$ and $\gcal_-^k$, and thus in the domain of $\oa$. Likewise, the expression $\pm i \mu r \sum_j \sinh \zeta_j$ lies in the lower half complex plane both on $\gcal_+^k$ and $\gcal_-^k$, and therefore $e^{\pm i \mu r \sum_j \sinh \zeta_j}$ is a damping factor at large $\re \zeta_j$, and this damping factor is the stronger the larger $r$ is.

\section{Formulate conditions (F)}

Now we will present the formulation of a locality condition in terms of meromorphic functions $F_k$ on $\cbb^k$. Indeed, we will find that the Araki coefficients of a local operator can be extended as meromorphic functions to the entire multi-variables complex plane. The conditions will involve also the expressions of the residues of $F_k$ and other properties of these functions; some of the notations that we use for the residues in several complex variables can be found in Appendix~\ref{sec:bvlemma}.

For formulating these conditions we need again to introduce some notation. We denote with $\gcal^k_1$ the graph which is defined as a ``periodic extension'' of $\gcal^k_0$. Namely, it has the nodes $\lambdav^{(k,j)}$ with $j \in \zbb$, and we set, as a recursive expression, for any $j$, $\lambdav^{(k,j+2k)} := \lambdav^{(k,j)} + 2 \piv$, where $\piv=(\pi,\ldots,\pi)$.

We introduce also for given $k$ and $0 \leq j \leq k$ the vectors
\begin{equation}
   \nuv^{(k,j)} = (1,2,\ldots,k\!-\!j,\;-j,\ldots,-2,-1) \in \rbb^k.
\end{equation}

Now we formulate the locality condition in terms of properties of the functions $F_k$ as follows:

\begin{definition}\label{def:conditionF}
A collection $F=(F_{k})_{k=0}^\infty$ of functions $\cbb^k \to \bar\cbb$ fulfills conditions (F) if the following holds for any fixed $k$, and with $\zetav \in \cbb^k$ arbitrary:

\begin{enumerate}
\renewcommand{\theenumi}{(F\arabic{enumi})}
\renewcommand{\labelenumi}{\theenumi}

\item \label{it:fmero}
\emph{Analyticity:}
$F_k$ is meromorphic on $\cbb^k$, and analytic where $\im \zeta_1 < \ldots < \im \zeta_k < \im \zeta_1 + \pi$.

\item \label{it:fsymm} \emph{$S$-symmetry:}
\begin{equation*}
F_k(\zeta_1,\ldots,\zeta_j,\zeta_{j+1},\ldots,\zeta_k)
=  S(\zeta_{j+1}-\zeta_j) F_k(\zeta_1,\ldots,\zeta_{j+1},\zeta_{j},\ldots,\zeta_k) .
\end{equation*}

\item \label{it:fperiod} \emph{$S$-periodicity:}
\begin{equation*}
F_k (\zetav + 2i\pi \ev^{(j)} ) =
\Big(\prod_{\substack{i=1 \\ i \neq j}}^k S(\zeta_i-\zeta_j)\Big)  F_k (\zetav ).
\end{equation*}

\item \label{it:frecursion}

\emph{Recursion relations:}
The $F_k$ have first order poles at $\zeta_n-\zeta_m = i \pi$, where $1 \leq m < n \leq k$,
and
\begin{equation*}
\operatorname{Res}_{\zeta_n-\zeta_m = i \pi} F_{k}(\boldsymbol{\zeta})
= - \frac{1}{2\pi i }
\Big(\prod_{j=m}^{n} S_{j,m} \Big)
\Big(1-\prod_{p=1}^{k} S_{m,p} \Big)
F_{k-2}( \boldsymbol{\hat\zeta} ).
\end{equation*}

\item \label{it:fboundsreal}
\emph{Bounds on nodes:}
For each $j \in \{0,\ldots,k\}$ and $\ell \in \zbb$, we have
\begin{equation*}
\| F_k\big( \cdotarg + i \lambdav^{(k,j+k\ell)} + i 0 \nuv^{(k,j)}\big) \|_{(k-j) \times j}^{\omega} < \infty.
\end{equation*}

\item \label{it:fboundsimag}
\emph{Pointwise bounds:}
There exist $c,c'>0$ such that for all $\zetav\in\tube(\ich \gcal^k_\pm)$:
\begin{equation*}
  |F_k(\zetav)| \leq c \frac{\prod_j \exp \big(\mu r  |\im \sinh \zeta_j|+ c' \omega(\cosh \re \zeta_j)\big)}{ \operatorname{dist}(\im \zetav,\partial \ich \gcal_\pm)^{k/2}}.
\end{equation*}
\end{enumerate}
\end{definition}

We notice that the conditions (F) are ``translation invariant'' by $i\piv$; this means that if we have a family of functions $F_k$ which fulfill these conditions, then also $F_k(\cdotarg + i \piv)$ fulfill them as well (see Sec.~\ref{sec:locdoublecone} for details).

\section{Formulate the theorem}

Now the following theorem states that these conditions are equivalent:

\begin{theorem} \label{thm:localequiv}
 Let $r>0$ and an analytic indicatrix $\omega$ be fixed.
\begin{enumerate}
\renewcommand{\theenumi}{(\roman{enumi})}
\renewcommand{\labelenumi}{\theenumi}
 \item If $A \in \qf^\omega$ fulfills (A), then there is a unique set of functions $F'_k$ fulfilling (F') such that
\begin{equation}\label{cmeFprel}
\cme{m,n}{A}(\thetav,\etav) = F'_{m+n}(\thetav,\etav+i\piv),
\quad
\cme{m,n}{J A^\ast J}(\thetav,\etav) = F'_{m+n}(\thetav-i \piv,\etav).
\end{equation}
 \item If $F'_k$ fulfill (F'), then there are unique functions $F_k$ fulfilling (F), such that for $0 \leq j \leq k$ and $\ell \in \{0,k\}$,
\begin{equation}
F_{k}'\big(\thetav + i \lambdav^{(k,j-\ell)}\big)
= F_{k}\big(\thetav + i \lambdav^{(k,j-\ell)} + i0\nuv^{(k,j)} \big).
\end{equation}
 \item \label{thm:converse} If $F_k$ fulfill (F), then the following quadratic form $A$ fulfills (A):
\begin{equation}\label{ArakiF}
A := \sum_{m,n=0}^\infty \int \frac{d^{m}\theta\,d^n\eta}{m!n!}
F_{m+n}(\thetav + i \zerov, \etav+i\piv-i\zerov)  z^{\dagger m}(\thetav) z^n(\etav).
\end{equation}
\end{enumerate}
\end{theorem}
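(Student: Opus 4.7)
The plan is to prove the three-way equivalence as a cycle of implications $(A)\Rightarrow(F')\Rightarrow(F)\Rightarrow(A)$, working with the coefficients or functions in the form best suited to each step: $\omega$-locality is most naturally checked on quadratic forms, analytic continuation is most natural for the CR distributions $F'_k$ on the stair tube, and the final reconstruction via the Araki series is cleanest when the $F_k$ are globally meromorphic.

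For $(A)\Rightarrow(F')$, I would first exploit $\omega$-locality of $A$ in the right wedge $\wcal_{-r}$. Rewriting the commutation relation $[A,\phi(f)]=0$ for $f\in \dcal^\omega(\wcal'_{-r})$ in terms of the Araki coefficients via the Zamolodchikov relations \eqref{zamoloalgebra} and Proposition~\ref{proposition:expansionunique}, and using Proposition~\ref{prop:omegapw} to access the analyticity of $f^-$ on $\strip(0,\pi)$, I would extract, in the spirit of \cite[Prop.~4.2.6]{Lechner:2006}, the statement that the $\cme{m,n}{A}$ are common distributional boundary values of a single CR distribution on the tube over $\gcal^k_+$. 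This defines $F'_k$ on the positive stair through $\cme{m,n}{A}(\thetav,\etav) = F'_{m+n}(\thetav,\etav+i\piv)$. Rerunning the same argument on the shifted left wedge $\wcal'_r$ and applying it to $JA^{\ast}J$ (which by Definition~\ref{definition:omegalocal} is $\omega$-local in $\wcal_r$) extends $F'_k$ onto $\gcal^k_-$ with $\cme{m,n}{JA^{\ast}J}(\thetav,\etav) = F'_{m+n}(\thetav-i\piv,\etav)$. Consistency at the common node $\lambdav^{(k,0)}=0$ yields the recursion \ref{it:fprecursion} as an immediate transcription of the reflection formula of Proposition~\ref{proposition:fmnreflected}; the $S$-symmetry \ref{it:fpsymm} transfers from Proposition~\ref{proposition:fmnsymm}; the periodicity \ref{it:fpperiod} is the consistency of the two $J$-reflected descriptions after a full shift by $2i\piv$; and the bounds \ref{it:fpboundsreal}, \ref{it:fpboundsimag} follow from Proposition~\ref{proposition:fmnbound} combined with the pointwise estimate of Proposition~\ref{prop:omegapw}, which supplies exactly the factor $e^{-\mu r \cosh\theta \sin\lambda}$ controlling the wedge shift by $r$.

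For $(F')\Rightarrow(F)$, I would extend the $F'_k$ from $\tube(\gcal^k_0)$ to a meromorphic function on all of $\cbb^k$ by iterating \ref{it:fprecursion} together with \ref{it:fpsymm} and \ref{it:fpperiod}. Applying \ref{it:fprecursion} at the nodes $\lambdav^{(k,\pm k)}$ gives values on the neighbouring stairs $\gcal^k_0 \pm 2i\piv$, and the $S$-symmetry permutes the order of imaginary parts, producing the whole periodic extension over $\gcal^k_1$; a multi-variable Malgrange--Zerner/edge-of-the-wedge argument (adapted to stair-shaped tubes, cf.~Appendix~\ref{sec:bvlemma}) then fills in the interior of each strip and yields a function analytic wherever $0<\im\zeta_{j+1}-\im\zeta_j<\pi$, with at most simple poles at $\zeta_n-\zeta_m=i\pi$ whose residues are forced to match \ref{it:frecursion}. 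The global uniform bound \ref{it:fboundsimag} is obtained from \ref{it:fpboundsimag} by a Cauchy estimate on small bidiscs whose radii are taken proportional to $\operatorname{dist}(\im\zetav,\partial\ich\gcal_\pm)$, combined with a Phragm\'en--Lindel\"of argument of the type used in Proposition~\ref{prop:omegapw} to convert the analytic indicatrix bound into the pointwise $\omega$-growth.

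For $(F)\Rightarrow(A)$, equation~\eqref{ArakiF} defines $A$ as a formal series; the node bound \ref{it:fboundsreal} (with $j=0$) gives $\onorm{\cme{m,n}{A}}{m \times n}<\infty$, so by Proposition~\ref{proposition:expansionunique} $A\in\qf^\omega$ with Araki coefficients the boundary values of $F_{m+n}$ prescribed in \eqref{ArakiF}. To verify $\omega$-locality in $\ocal_r$, I would evaluate $\langle\psi,[A,\phi(f)]\chi\rangle$ for $f\in\dcal^\omega(\ocal'_r)=\dcal^\omega(\wcal'_{-r})\cap\dcal^\omega(\wcal_r)$, insert the Araki expansions of $A$ and of $\phi(f)$, and reduce the commutator to integrals of $F_{m+n}$ against Paley--Wiener factors $f^{\pm}$. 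Using Proposition~\ref{prop:omegapw} together with the pointwise bound \ref{it:fboundsimag}, the integrand is analytic and damped sufficiently to shift the contours between neighbouring nodes of $\gcal^{m+n}_0$; the residues picked up along the way are exactly those dictated by \ref{it:frecursion}, and they telescope against the corresponding direct contributions from the reflected wedge via Proposition~\ref{proposition:fmnreflected}, cancelling the commutator. The hardest step will be $(F')\Rightarrow(F)$: lifting a finite family of distributional identities on the stair tube to a global meromorphic object with prescribed pole structure requires simultaneous control of distributional boundary values at the corners of the stair and a uniform Phragm\'en--Lindel\"of estimate valid across all imaginary-shift parameters, where the graph-tube machinery and the auxiliary results of Appendix~\ref{sec:bvlemma} will carry the main technical weight.
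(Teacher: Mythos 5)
Your plan reproduces the paper's overall cycle $(A)\Rightarrow(F')\Rightarrow(F)\Rightarrow(A)$, and steps one and three are, in broad strokes, what the paper does (Chapters 5 and 7). However, your treatment of $(F')\Rightarrow(F)$ has a genuine gap. You propose to extend $F'_k$ from $\tube(\gcal^k_0)$ to adjacent stairs using periodicity and $S$-symmetry, and then "fill in" with a multi-variable edge-of-the-wedge argument, with "simple poles whose residues are forced to match \ref{it:frecursion}." But edge-of-the-wedge requires the distributional boundary values coming from different faces to \emph{agree} at the common boundary; here they generically do \emph{not} agree at the common nodes $\im\zetav=\lambdav^{m\cap m'}$ of the stairs $\gcal^k_{1,m}$ and $\gcal^k_{1,m'}$ — by \ref{it:fprecursion} they differ by sums of delta-distribution terms. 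Invoking edge-of-the-wedge at this point simply fails. The paper's key device (Prop.~\ref{proposition:extendinterior}) is to pass to the auxiliary distribution
\[
G_k(\zetav) := F'_k(\zetav)\cdot \prod_{j>j'}\frac{\zeta_j-\zeta_{j'}-i\pi}{\zeta_j-\zeta_{j'}+i\pi},
\]
whose extra rational factor has zeros exactly on the support of the offending delta terms, so that $G_k$ becomes an honest CR distribution on the enlarged graph $\gcal^k_2$; only then does the graph-tube theorem (Lemma~\ref{lem:graphtube}) apply, after which dividing out the factor reproduces $F_k$ with the prescribed first-order poles. Without this (or an equivalent mechanism for converting the recursion discrepancy into a pole), your plan does not get off the ground. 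Note also that the relevant appendix for the graph-tube machinery is Appendix~\ref{sec:graphs}, not Appendix~\ref{sec:bvlemma}; the latter (multi-variable residues and boundary values) is used in the paper in the $(F)\Rightarrow(A)$ direction, for Prop.~\ref{proposition:fshifted}.

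A second, lesser inaccuracy is in $(F)\Rightarrow(A)$: you describe the contour shift as "picking up residues that telescope against the reflected wedge contributions." In the paper the shift of the $\xi$-contour from $\rbb$ to $\rbb+i\pi$ in Prop.~\ref{proposition:wedgelocal} is clean — no residues are collected, because the relevant argument of $F_{m+n+1}$ stays inside $\ich\gcal^{m+n+1}_+$ where \ref{it:fmero} gives analyticity. The recursion relations and reflection structure enter separately, in Prop.~\ref{proposition:fshifted} (identifying $\cme{m,n}{JA^\ast J}$ with boundary values of $F^{\pi}_{m+n}$ via the multi-variable residue lemma) together with Lemma~\ref{lemma:shiftpi}, and locality in the second wedge is then obtained by re-running the same contour-shift argument for $F^\pi_k$. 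Your overall conclusion would still come out, but not via residue telescoping within a single commutator evaluation.
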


We will prove separately the three parts of the theorem in the following three Chapters~\ref{sec:atofp},\ref{sec:fptof} and \ref{sec:ftoa}.

\chapter{(A) \texorpdfstring{$\Rightarrow$}{=>} (F')}\label{sec:atofp}

In this chapter we want to show that if we have a quadratic form $A$ which is localized in the standard double cone, then its Araki coefficients $\cme{m,n}{A}$ are boundary values of a common CR distributions on a graph, which fulfil specific symmetry conditions, recursion relations and bounds.

So, our task is to prove the following theorem:

\begin{theorem}\label{theorem:AtoFp}
For any $A\in\mathcal{Q}^{\omega}$ fulfilling $(A)$, there are functions $F_k'$ fulfilling  (F') such that for any $k \in \nbb_0$ and $0 \leq j \leq k$,
\begin{equation}\label{eq:fkpboundary}
F_{k}'(\thetav+i\lambdav^{(k,j)})=\cme{k-j,j}{A}(\thetav),
\quad
F_{k}'(\thetav+i\lambdav^{(k,-k+j)})=\cme{k-j,j}{JA^{*}J}(\thetav).
\end{equation}

\end{theorem}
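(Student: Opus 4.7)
My plan is to split the double-cone $\omega$-locality of $A$ into its two wedge components and treat each half of the graph $\gcal_0^k = \gcal_+^k \cup \gcal_-^k$ separately. Since $A$ is $\omega$-local in $\wcal_{-r}$, the relation $[A,\phi(f)]=0$ for $f \in \dcal^\omega(\wcal_{-r}')$ combined with the Araki expansion and the commutation relations \eqref{comzpz}, \eqref{zzp} between $z,\zd$ and $z',\zd{}'$ yields, by the standard wedge-locality argument (analogous to the one of \cite{Lechner:2008} for $\mathcal{M}$-local operators and alluded to in the introduction), a CR distribution $F_k^+$ on the tube over the simplex $\{0 < \lambda_1 < \ldots < \lambda_k < \pi\}$ whose boundary value at the corner $\lambdav^{(k,j)}$ is $\cme{k-j,j}{A}$. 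Applying the same reasoning to $JA^*J$, which by hypothesis is $\omega$-local in $\wcal_r$, produces an analogous function $F_k^-$ realising $\cme{k-j,j}{JA^*J}$ as boundary value at $\lambdav^{(k,j)}$.

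I then define $F_k'$ on $\tube(\gcal_0^k)$ by setting $F_k' = F_k^+$ on $\tube(\gcal_+^k)$ and $F_k'(\thetav + i\lambdav) = F_k^-(\thetav + i(\lambdav + \piv))$ on $\tube(\gcal_-^k)$. Compatibility at the shared node $\lambdav^{(k,0)}=0$ reduces to the identity $\cme{k,0}{A}(\thetav) = \cme{0,k}{JA^*J}(\thetav)$, which follows at once from Prop.~\ref{proposition:fmnreflected} since $\ccal_{k,0}$ contains only the empty contraction. Uniqueness is then automatic by the edge-of-the-wedge principle, as the values at nodes uniquely determine the strip-analytic $F_k^\pm$.

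Most of the properties in Def.~\ref{def:conditionFP} now fall into place. Item (F1') holds by construction; (F5') is a direct application of Prop.~\ref{proposition:fmnbound} to the boundary values; (F3') follows from Prop.~\ref{proposition:fmnsymm} at the nodes together with analytic continuation along each edge, using that $S$ is analytic in the strip $\strip(0,\pi)$ (Def.~\ref{def:Smatrix}); (F2') is the same identity $\cme{k,0}{JA^*J} = \cme{0,k}{A}$ used for the gluing, now compared at the extreme nodes $\lambdav^{(k,\pm k)}$. The recursion (F4') is obtained by matching boundary values on $\gcal_-^k$ with the explicit formula of Prop.~\ref{proposition:fmnreflected}: the right-hand side of that proposition, rewritten in terms of the $\cme{\cdot,\cdot}{A}$, is precisely the recursive expression demanded in (F4') once the labels $(m,n)$ of the reflection formula are identified with the pair $(k-j,j)$ used to index the nodes $\lambdav^{(k,-k+j)}$.

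The main obstacle is the pointwise edge bound (F6'), since this is where the finite radius $r$ of the double cone genuinely enters and has to be extracted quantitatively. Along the $j$-th edge of $\gcal_+^k$, the function $F_k^+(\thetav + i\lambdav)$ is obtained by Fourier-Laplace transforming a matrix element of $A$ against one creator $\zd(f)$ with $f \in \dcal^\omega(\wcal_{-r}')$; the analyticity in $\zeta_j$ is inherited from that of $f^-(\zeta_j)$, for which the sharp Paley-Wiener estimate of Prop.~\ref{prop:omegapw} supplies a factor $e^{-\mu r \cosh\theta_j \sin\lambda_j}e^{-\omega(\cosh\theta_j)/a_\omega}$. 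Iterating this in each of the $k$ coordinates produces the product bound required on $\gcal_+^k$; the analogous argument using the $\omega$-locality of $JA^*J$ in $\wcal_r$ yields the corresponding bound on $\gcal_-^k$ with the opposite sign of $r$, accounting for the two cases $\pm$ in (F6'). Uniformity in $\lambdav$ along each edge is then secured by a Phragm\'en–Lindel\"of interpolation between the endpoint bounds of (F5'), exactly in the spirit of the proof of Prop.~\ref{prop:omegapw}.
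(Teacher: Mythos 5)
Your overall strategy matches the paper's exactly: use the two wedge localities that define $\ocal_r$, one giving analyticity on $\gcal_+^k$ via a Lechner-type argument (the paper's Lemma~\ref{lemma:K} and Lemma~\ref{lemma:analbouf}, culminating in Prop.~\ref{proposition:analpositivesimplex}), the other (applied to $JA^*J$) giving $\gcal_-^k$ after a shift by $i\piv$ (Prop.~\ref{proposition:ffjcont}), glue the two at $\lambdav^{(k,0)}$ using $\cme{k,0}{A}=\cme{0,k}{JA^*J}$, and then read off (F3'), (F4') from Props.~\ref{proposition:fmnsymm} and~\ref{proposition:fmnreflected}. That part of your sketch is sound.

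Two things, however, need repair. First, your wedge attribution is backwards. The analyticity of $\cme{m,n}{A}$ in $\theta_m$ along the $\gcal_+^k$ edges (Lemma~\ref{lemma:analbouf}) requires $A$ to be $\omega$-local in a \emph{left} wedge $\wcal_r'$; this is because $\cme{m,n}{A} = \langle JA^*J\rangle^{\mathrm{con}}_{m+n,m}$, so Lechner's right-wedge argument must be applied to $JA^*J$, not to $A$. Locality of $A$ in the \emph{right} wedge $\wcal_{-r}$ controls $\cme{m,n}{JA^*J}$, i.e.\ the $\gcal_-^k$ half. You have swapped which hypothesis drives which simplex. Since both localities are available, the pieces exist, but your stated chain would stall when you tried to invoke Lemma~\ref{lemma:K}.

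Second, the derivation of (F6') is the hard part and your sketch glosses over the genuine difficulty. You cannot simply ``iterate'' Prop.~\ref{prop:omegapw} in each of the $k$ coordinates, because $F_k'$ is not a tensor product of one-variable functions $f^-(\zeta_j)$; the edge continuation involves a single complex variable at a time, and the other arguments are coupled through the contracted matrix elements. The paper instead introduces the auxiliary distribution $\hat F_k(\zetav) := F_k'(\zetav)\,e^{-i\mu r\sum_j\sinh\zeta_j}\,e^{-\sum_j\oa(\sinh\zeta_j)}$, establishes the cross-norm bound $\gnorm{\hat F_k(\cdot + i\lambdav)}{\times} < \infty$ first at the nodes (via Prop.~\ref{proposition:fmnbound} and \eqref{eq:crossnormcomparison}), then along the edges (via the one-variable $L^2$ bound in Lemma~\ref{lemma:analbouf}), and only then invokes the maximum modulus principle for $\gnorm{\cdotarg}{\times}$ (Lemma~\ref{lemma:maxmodcross}). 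This is a non-trivial interpolation step that your ``Phragm\'en--Lindel\"of between endpoint bounds'' phrase alludes to but does not actually supply; if you want a complete proof you need the explicit damping-factor construction and the cross-norm maximum modulus machinery.
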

(These two equations are a rewritten form of equations \eqref{cmeFprel} in the previous chapter.)

\section{Define function on positive simplex} \label{sec:fkppositive}

A first step in the proof of this theorem is to consider the case where the quadratic form $A$ is $\omega$-local in a wedge, and study the properties of analyticity of its Araki coefficients $\cme{m,n}{A}$.

We prove the following lemma which is very similar to \cite[Lemma~4.1]{Lechner:2008}, but it is more general in our case because it is formulated for the quadratic form $A\in \qf^{\omega}$ and to the class of vectors in $\mathcal{H}^{\omega}$.

\begin{lemma}\label{lemma:K}
Let $A\in \qf^{\omega}$ be $\omega$-local in $\rightwedge$,  and $\psi\in P_{n_1}\hcal^{\omega}$, $\chi \in P_{n_2}\hcal^{\omega}$. There exists an analytic function $K:\strip(0,\pi) \to \mathbb{C}$ whose boundary values satisfy, $\theta \in \rbb$,
\begin{equation}\label{kboundary}
K(\theta)= \hscalar{ \psi}{ [\zd(\theta),A]\chi},
\quad
K(\theta +i\pi)= \hscalar{ \psi}{[A,z(\theta)]\chi}
\end{equation}
in the sense of distributions. Moreover, there holds the bound
\begin{equation}\label{K:L2bound}
\Big( \int d\theta\, |K(\theta +i\lambda)|^2\Big)^{1/2} \leq c_{n_1,n_2} \onorm{\psi}{2}\onorm{\chi}{2}\gnorm{A}{n_1 + n_2 +1}^{\omega}, \quad 0\leq \lambda\leq \pi.
\end{equation}
with $c_{n_1,n_2}:= 2(\sqrt{n_1 + 1}+ \sqrt{n_2 +1})$.
\end{lemma}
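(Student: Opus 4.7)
The strategy is to identify two candidate boundary distributions for $K$, derive a KMS-type identity from the $\omega$-locality of $A$, and use it to glue them into a holomorphic function on $\strip(0,\pi)$.

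First, I would set $K_0(\theta):=\hscalar{\psi}{[\zd(\theta),A]\chi}$ and $K_\pi(\theta):=\hscalar{\psi}{[A,z(\theta)]\chi}$ as tempered distributions on $\rbb$, so that \eqref{kboundary} is satisfied by construction. To show both lie in $L^2(\rbb)$ with the bound required by \eqref{K:L2bound} at $\lambda\in\{0,\pi\}$, I smear against $g\in\dcal(\rbb)$:
\[
 \int g(\theta)K_0(\theta)\,d\theta = \hscalar{z(\bar g)\psi}{A\chi}-\hscalar{\psi}{A\zd(g)\chi}.
\]
Applying \eqref{omegaz} gives $\gnorm{z(\bar g)\psi}{2}\leq \sqrt{n_1}\gnorm{g}{2}\gnorm{\psi}{2}$ and $\gnorm{\zd(g)\chi}{2}\leq \sqrt{n_2+1}\gnorm{g}{2}\gnorm{\chi}{2}$; inserting $e^{\pm\omega(H/\mu)}$ on the $A$-factor together with the projectors $Q_k$ for $k=n_1+n_2+1$ invokes the norm \eqref{eq:aomeganorm} and yields
\[
 \Bigl|\int g\,K_0\Bigr|\leq 2(\sqrt{n_1}+\sqrt{n_2+1})\gnorm{g}{2}\onorm{\psi}{2}\onorm{\chi}{2}\gnorm{A}{n_1+n_2+1}^\omega.
\]
By duality this bounds $\gnorm{K_0}{2}$ as required; the analogous computation for $K_\pi$ produces the constant $\sqrt{n_1+1}+\sqrt{n_2}$, and both are dominated by $c_{n_1,n_2}/2=\sqrt{n_1+1}+\sqrt{n_2+1}$.

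Second, to produce the interior function $K$, I would exploit the $\omega$-locality of $A$ in $\wcal$: for every $f\in\dcal^\omega(\wcal')$, expanding $\phi(f)=\zd(f^+)+z(f^-)$ in the vanishing commutator $[A,\phi(f)]=0$ and taking $\hscalar{\psi}{\cdot\,\chi}$ yields the duality
\[
 \int d\theta\, f^+(\theta) K_0(\theta) = \int d\theta\, f^-(\theta) K_\pi(\theta).
\]
By Proposition~\ref{prop:omegapw} with $r=0$, $f^-$ extends analytically to $\strip(0,\pi)$, is continuous on its closure, satisfies $f^-(\theta+i\pi)=f^+(\theta)$, and decays as $e^{-\omega(\cosh\theta)/a_\omega}$ uniformly in $\lambda$; rewritten as $\int f^-(\cdot+i\pi)K_0 = \int f^- K_\pi$, the identity above is exactly the classical KMS boundary relation characterising pairs of boundary values of a function holomorphic on $\strip(0,\pi)$. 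Passing to the Fourier transforms $\hat K_0,\hat K_\pi\in L^2(\rbb)$ in the rapidity variable, the shift property $\widehat{f^-(\cdot+i\pi)}(\xi)=e^{\pi\xi}\widehat{f^-}(\xi)$ converts the duality into $\int \widehat{f^-}(\xi)\bigl(e^{\pi\xi}\overline{\hat K_0(\xi)}-\overline{\hat K_\pi(\xi)}\bigr)\,d\xi=0$. Since $\dcal^\omega(\wcal')$ is dense in $\dcal(\wcal')$ (cf.\ the discussion following \eqref{domega}), the resulting Paley--Wiener class $\{\widehat{f^-}\}$ is rich enough to force the pointwise relation $\hat K_\pi(\xi)=e^{\pi\xi}\hat K_0(\xi)$; in particular $e^{\pi\xi}\hat K_0\in L^2$. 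One then defines
\[
 K(\zeta):=\frac{1}{\sqrt{2\pi}}\int e^{i\xi\zeta}\hat K_0(\xi)\,d\xi,\qquad \zeta\in\strip(0,\pi),
\]
which is holomorphic in the interior and has $K_0,K_\pi$ as its distributional boundary values.

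The uniform $L^2$ bound \eqref{K:L2bound} follows from Plancherel: $\gnorm{K(\cdot+i\lambda)}{2}^2=\int|e^{-\lambda\xi}\hat K_0(\xi)|^2 d\xi$, and since $e^{-\lambda\xi}\leq \max(1,e^{-\pi\xi})$ for $\lambda\in[0,\pi]$, this is controlled by $\max(\gnorm{K_0}{2}^2,\gnorm{K_\pi}{2}^2)$, giving the desired uniform estimate. The main obstacle is the density argument in the KMS step: one must justify that the Fourier--Paley--Wiener images of $f^-$ coming from $f\in\dcal^\omega(\wcal')$ form a set large enough to separate tempered distributions supported in a weighted $L^2$ space where $e^{\pi\xi}\hat K_0-\hat K_\pi$ lives. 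This rests on the non-triviality of $\dcal^\omega(\wcal')$ (condition~\ref{it:omegagrowth} on the indicatrix) together with its density in $\dcal(\wcal')$, and deserves careful handling of the decay rates dictated by Proposition~\ref{prop:omegapw}.
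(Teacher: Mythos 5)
Your overall strategy — identify the two boundary distributions, establish $L^2$ bounds on them, and then glue them into a holomorphic function using the $\omega$-locality — is a genuinely different route from the paper's. The paper instead passes to the time-zero fields $\varphi,\pi$ of $\phi$, defines $k_\pm(f)=\langle\psi,[\varphi(f),A]\chi\rangle$ resp.\ $\langle\psi,[\pi(f),A]\chi\rangle$, observes that $\omega$-locality of $A$ in $\wcal$ together with locality of $\phi$ forces $\supp k_\pm$ into the right half-line, applies the classical Paley--Wiener theorem for tempered distributions to get that the Fourier--Laplace transforms $\tilde k_\pm$ are analytic on the lower half-plane, and then defines $K(\zeta)=\frac12(\mu\cosh\zeta\,\tilde k_-(-\mu\sinh\zeta)-i\,\tilde k_+(-\mu\sinh\zeta))$. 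The support-to-analyticity step happens in Minkowski momentum space, where Paley--Wiener is available off the shelf, and is only afterwards pulled back to the rapidity strip via $\zeta\mapsto-\mu\sinh\zeta$.

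The step in your argument that has a genuine gap — and you flag it yourself — is the density claim in the Fourier gluing. To conclude $\hat K_\pi(\xi)=e^{\mp\pi\xi}\hat K_0(\xi)$ from $\int\widehat{f^-}\,(e^{\mp\pi\xi}\overline{\hat K_0}-\overline{\hat K_\pi})\,d\xi=0$ over $f\in\dcal^\omega(\wcal')$, you need the set $\{\widehat{f^-}\}$ (Fourier transforms taken \emph{in the rapidity variable} $\theta$, not in $p$) to separate in the appropriate weighted dual. But $f^-(\theta)=\tilde f(-p(\theta))$ is the restriction of the Minkowski Fourier transform to a mass hyperbola; the support of $f$ in $\wcal'$ does not translate into any transparent Paley--Wiener description of $\widehat{f^-}$ as a function of the conjugate of $\theta$, and there is no off-the-shelf theorem that this family is ``rich enough.'' Making this rigorous would essentially force you back into Minkowski momentum space and reproduce the paper's argument. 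A second, minor issue: the Plancherel estimate $\gnorm{K(\cdot+i\lambda)}{2}^2=\int e^{-2\lambda\xi}|\hat K_0|^2$ is bounded by $\gnorm{K_0}{2}^2+\gnorm{K_\pi}{2}^2$, not by $\max(\gnorm{K_0}{2}^2,\gnorm{K_\pi}{2}^2)$, so it gives the claimed constant $c_{n_1,n_2}$ only up to a factor $\sqrt2$; to get the sharp constant you would still need the three-lines theorem. The paper itself has to work for the uniform interior $L^2$ bound — since $L^2$-ness of $K$ on interior slices is not automatic in its construction, it inserts the damping factor $e^{-i\mu s\sinh\zeta}$ to force $L^2$-decay, applies the three-lines theorem, and then lets $s\searrow0$ by monotonicity. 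Your boundary $L^2$ estimates for $K_0$ and $K_\pi$, including the constants $2(\sqrt{n_1}+\sqrt{n_2+1})$ and $2(\sqrt{n_1+1}+\sqrt{n_2})$, match the paper's exactly and are correct.
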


\begin{proof}
The proof of this Lemma mainly follows the proof of \cite[Lemma 4.1]{Lechner:2008}. But here we are going to repeat the argument again.

We considers the time zero fields $\varphi,\pi$ of $\phi$ \cite[Eq.~(3.18)]{Lechner:2008}, and the corresponding expectation values, $f \in \scal_\rbb(\rbb^2)$,
\begin{equation}\label{def:k+k-}
k_{-}(f):=\langle \psi,[\varphi(f),A]\chi\rangle,\quad k_{+}(f):=\langle \psi,[\pi(f),A]\chi\rangle.
\end{equation}
Since $A$ is $\omega$-local -- in the sense of Lemma~\ref{lemma:localitychar}\ref{it:chartempered} -- in the standard right wedge and $\phi$ is localized in the standard left wedge, we have that these $k_{\pm}$ are Schwartz distributions with support in the right half-line. Then, we apply the classical Paley-Wiener result for tempered distributions \cite[Thm.~IX.16]{ReedSimon:1975-2}, which implies that the Fourier-Laplace transforms $\tilde k_\pm$ of $k_\pm$ are analytic functions on the lower half plane, bounded by a polynomial in $p$ at infinity and by an inverse power of $\im p$ near the real line.

We choose as our function $K$ the following combination of $\tilde k_\pm$:
\begin{equation}
 K(\zeta) := \frac{1}{2}\Big( \mu \cosh(\zeta) \,\tilde k_-(-\mu\sinh\zeta) - i \tilde k_+(-\mu\sinh\zeta) \Big).\label{defK}
\end{equation}
Noticing that the strip $\strip(0,\pi)$ is mapped by $(-\sinh)$ to the lower half plane, this $K$ works out to have the proposed analyticity property; and also one can show that it has the proposed boundary values. Indeed, using the relation between $z, \zd$ and $\varphi, \pi$:
\begin{eqnarray}
\zd(f^+) &=& \frac{1}{2}(\varphi(f)-i\pi(\omega^{-1}f))\\
z(f^+) &=& \frac{1}{2}(\varphi(f_-)+i\pi(\omega^{-1}f_-)),
\end{eqnarray}
where $\omega=\mu \cosh\theta$ and $f_-(x):= f(-x)$, we compute:
\begin{equation}
\begin{aligned}
  \langle\psi,[z(f^+),A]\chi\rangle
  &=
  \frac{1}{2}(k_-(f_-)+i\,k_+(\omega^{-1}f_-))\\
  & =
  \frac{1}{2}\int dp\,
  \Big(\tilde{k}_-(p)+\frac{i\tilde{k}_+(p)}{\sqrt{p^2+m^2}}\Big)\tilde{f}(p)\\
  &=
  \frac{1}{2}\int d\theta \Big( \mu \cosh(-\theta) \; \tilde{k}_-(-\mu \sinh(-\theta)) +i\tilde{k}_+(-\mu \sinh(-\theta)) \Big)f^+(\theta).
\end{aligned}
\end{equation}
Similarly, we obtain:
\begin{equation}
  \langle\psi,[\zd(f^+),A]\chi\rangle =  \frac{1}{2}\int d\theta \Big(   \mu \cosh\theta\; \tilde{k}_{-}(-\mu\sinh\theta)-i\tilde{k}_{+}(-\mu\sinh\theta)\Big)f^+(\theta).
\end{equation}
So, $K$ has boundary values which are defined as distributions and formally we can write:
\begin{equation}
K(\theta)=\langle \psi, [\zd(\theta),A] \chi \rangle.
\end{equation}
and since $\tilde{k}_{\pm}(-\mu\sinh(\theta+i\pi))=\tilde{k}_{\pm}(\mu\sinh\theta)$ and $\cosh(\theta+i\pi)=-\cosh\theta$, we have
\begin{equation}
K(\theta+i\pi)=\langle \psi, [A,z(\theta)] \chi \rangle.
\end{equation}
Regarding the proposed bounds for $K$, Eq.~\eqref{K:L2bound}, we first compute a bound for the boundary values of $K$ given by  Eq.~\eqref{kboundary}. We compute, $f \in  \dcal(\rbb)$,
\begin{equation}
 \begin{aligned}
\left|\int d\theta\, K(\theta + i\pi) f(\theta) \right|&=
|\langle \psi, [A,z(f)] \chi \rangle|\\
&\leq |\langle \psi, z(f)A \chi \rangle|+|\langle \psi, A z(f) \chi \rangle|\\
&=|\langle \psi, z(f)Q_{n_1 +1}A e^{-\omega(H/\mu)}e^{\omega(H/\mu)}Q_{n_2}\chi \rangle|\\
&\quad +|\langle \psi,Q_{n_1}e^{\omega(H/\mu)}e^{-\omega(H/\mu)} A Q_{n_2 -1}z(f) \chi \rangle|\\
&\leq \gnorm{{\zd}(\overline{f})\psi}{} \gnorm{Q_{n_1+1}Ae^{-\omega(H/\mu)}Q_{n_2}}{}\gnorm{e^{\omega(H/\mu)}\chi}{}\\
&\quad + \gnorm{e^{\omega(H/\mu)}\psi}{} \gnorm{Q_{n_1}e^{-\omega(H/\mu)}A Q_{n_2-1}}{}\gnorm{z(f)\chi}{}\\
&\leq 2\sqrt{n_1 +1} \gnorm{f}{}\gnorm{\psi}{2}\gnorm{A}{n_1 + n_2 +1}^{\omega}\onorm{\chi}{2} \\
&\quad + 2\sqrt{n_2}\onorm{\psi}{2}
\gnorm{A}{n_1 +n_2}^{\omega}\gnorm{f}{}\gnorm{\chi}{2}\\
&\leq 2(\sqrt{n_1 +1}+ \sqrt{n_2})\gnorm{f}{} \onorm{\psi}{2}\onorm{A}{n_1 + n_2 +1}\onorm{\chi}{2}.
\end{aligned}
\end{equation}
where we used that $\gnorm{Q_k A e^{-\omega(H/\mu)}Q_k}{}\leq 2 \onorm{A}{k}$ for any $k \in \nbb_0$ and \cite[Lemma 4.1.3]{Lechner:2006}.

Similarly, we find,
\begin{equation}
|K(f)| \leq 2(\sqrt{n_1}+\sqrt{n_2 +1})\gnorm{f}{} \onorm{\psi}{2} \onorm{\chi}{2}\onorm{A}{n_1 + n_2 +1}.
\end{equation}
As a consequence of Riesz' Lemma, we have that the boundary values of $K$ are $L^2$-functions with norms
\begin{equation}\label{Kboundarystima}
\gnorm{K}{2} \leq c_{n_1, n_2}\onorm{\psi_{n_1}}{2} \onorm{\chi_{n_2}}{2}\onorm{A}{n_1 + n_2 +1}.
\end{equation}
where $c_{n_1,n_2}:=2(\sqrt{n_1 +1}+ \sqrt{n_2+1})$.

To prove Eq.~\eqref{K:L2bound}, we consider the function \eqref{defK} shifted by $\exp(-i\mu s \sinh \zeta)$:
$K^{(s)}(\zeta):= e^{-i\mu s \sinh \zeta}K(\zeta)$, $s >0$. We consider its absolute value:
\begin{equation}\label{Ks}
|K^{(s)}_{\lambda}(\theta)|= \frac{1}{2}e^{- \mu s \sin\lambda \cosh\theta}|\mu \cosh(\theta + i\lambda)\tilde{k}_-(-\mu \sinh(\theta+i\lambda))-i\tilde{k}_+(-\mu \sinh(\theta +i\lambda))|.
\end{equation}
Since $\theta \rightarrow \tilde{k}_-(-\mu \sinh(\theta+i\lambda))$ and $\theta \rightarrow \tilde{k}_+(-\mu \sinh(\theta +i\lambda))$ are bounded by polynomials in $\cosh \theta$ for $|\theta|\rightarrow \infty$ and for every fixed $\lambda \in (0,\pi)$ (see remark after Eq.~\eqref{def:k+k-}), we have, due to the damping factor $e^{-i\mu s \sinh \zeta}$, that $K_{\lambda}^{(s)}\in L^{2}(\rbb)$ for every $\lambda \in [0,\pi]$, $s>0$. So, we can apply the three lines theorem, and by the estimates of the boundary values of $K$ given by \eqref{Kboundarystima}, we get
\begin{equation}
 \gnorm{K_{\lambda}^{(s)}}{2} \leq c_{n_1,n_2}\onorm{\psi}{2}\onorm{\chi}{2}\onorm{A}{n_1 + n_2 +1}, \quad 0\leq \lambda \leq \pi.
\end{equation}
Since \eqref{Ks} increases monotonically for $s \rightarrow 0$, then this bound holds in particular for $s=0$,
$K_\lambda = K_{\lambda}^{(0)}$, $0\leq \lambda\leq \pi$.

This concludes the proof of Lemma~\ref{lemma:K}.
\end{proof}

Using the Lemma above we can study analytic continuations of the functions $\cme{m,n}{A}$; this is again very similar to \cite[Lemma 4.3]{Lechner:2008}. Note that in \cite[Lemma 4.3]{Lechner:2008} Lechner proved the analogous result for
$A$ localized in the \emph{right} wedge; in our case we consider $A$ in the \emph{left} wedge: This is consistent with the relation $\cme{m,n}{A}=\langle J A^\ast J\rangle_{m+n,m}^{\mathrm{con}}$.

\begin{lemma}\label{lemma:analbouf}
Let $A\in\qf^{\omega}$ be $\omega$-local in $\leftwedge$. Then,

\begin{enumerate}

\item $\cme{m,n}{A}$ has an analytic continuation in the variable $\theta_{m}$ to the strip $\strip(0,\pi)$, $m\geq 1$. Its distributional boundary value at $\im\theta_{m}=\pi$ is given by
\begin{equation}\label{eq:analbouf}
\cme{m,n}{A}(\theta_1,\ldots,\theta_{m}+i\pi,\ldots,\theta_{m+n})
=
\cme{m-1,n+1}{A}(\theta_1,\ldots,\theta_{m},\ldots,\theta_{m+n}).
\end{equation}
\item There exists a constant $c_{m+n}$ such that the distribution $\cme{m,n}{A}$ fulfils the following bound, $g_1, \ldots, g_{m+n} \in \dcal(\rbb)$, $0\leq \lambda\leq \pi$:
\begin{multline}\label{fmnbound}
 \Big\lvert \int \cme{m,n}{A}(\theta_1,\ldots,\theta_{m-1},\theta_m +i\lambda,\theta_{m+1},\ldots,\theta_{m+n}) \prod_{j=1}^{m+n}g_j(\theta_j)\, d\theta_j\Big\rvert\\
\leq c_{m+n}\onorm{A}{m+n}\prod_{j=1}^{m+n}\onorm{g_j}{2}, \quad 0\leq \lambda\leq \pi.
\end{multline}
\end{enumerate}
\end{lemma}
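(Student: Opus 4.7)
The plan is to view $\cme{m,n}{A}(\thetav,\etav)$, for fixed values of $\theta_1,\ldots,\theta_{m-1},\etav$, as a distribution in $\theta_m$ of the form $\hscalar{\psi}{[\zd(\theta_m),A]\chi}$ for appropriate vectors $\psi\in P_{m-1}\Hil^\omega$ and $\chi\in P_n\Hil^\omega$ built from the remaining rapidities. Once this identification is in place, Lemma~\ref{lemma:K} applied to the pair $(\psi,\chi)$ immediately provides the analytic extension in $\theta_m$ to $\strip(0,\pi)$ together with the boundary value at $\im\theta_m=\pi$ given by $\hscalar{\psi}{[A,z(\theta_m)]\chi}$; the latter should then be recognized as precisely $\cme{m-1,n+1}{A}$ with $\theta_m$ migrated from the creator group to the annihilator group. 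Note that we can work with $A$ $\omega$-local in $\leftwedge$ instead of $\rightwedge$ as in Lemma~\ref{lemma:K} simply by interchanging the roles of $\zd$ and $z$ (equivalently, reflecting with $J$ and using $\cme{m,n}{A}=\cmelong{n,m}{JA^{*}J}^{*}$ up to reordering).

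To establish the identification, I would split the sum \eqref{eq:fmndef} according to whether the index $m$ is among $\{l_1,\ldots,l_{|C|}\}$. Contractions with $m\notin\{l_j\}$ are in bijection with contractions $C'\in\ccal_{m-1,n}$ (shifting the constraints on $l_j$), and in this case $\lvector{C}{\thetav}=\zd(\theta_m)\lvector{C'}{\check\thetav}$, so the matrix element can be rewritten as $\hscalar{\lvector{C'}{\check\thetav}}{A\,\zd(\theta_m)\rvector{C'}{\etav}}$ minus commutator terms obtained by pulling $\zd(\theta_m)$ across $A$. Contractions with $m=l_j$ contribute a $\delta(\theta_m-\eta_{r_j-m})$ which, together with appropriate $S$-factors coming from the Zamolodchikov relations \eqref{zamoloalgebra}, precisely accounts for the ``missing'' commutator contribution from $[\zd(\theta_m),A]$. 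The cleanest way to verify the resulting identity is by induction on $m$, using Prop.~\ref{proposition:fmninversion} at each step to peel off the outer creator and Lemmas~\ref{lemma:scpermute} and \ref{lemma:contractcompose} to match the $S_C$-factors.

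With this identification, Lemma~\ref{lemma:K} yields an analytic function $K$ on $\strip(0,\pi)$ whose boundary values at $\im\theta_m=0$ and $\im\theta_m=\pi$ agree with $\hscalar{\psi}{[\zd(\theta_m),A]\chi}$ and $\hscalar{\psi}{[A,z(\theta_m)]\chi}$ respectively. Running the same combinatorial reorganization in reverse at the upper boundary identifies $K(\cdotarg+i\pi)$ with $\cme{m-1,n+1}{A}(\theta_1,\ldots,\theta_{m-1},\theta_m,\theta_{m+1},\ldots,\theta_{m+n})$, giving \eqref{eq:analbouf}. For the bound \eqref{fmnbound}, I would smear all the variables other than $\theta_m$ against the $g_j$, producing explicit vectors $\psi_g\in P_{m-1}\Hil^\omega$ and $\chi_g\in P_n\Hil^\omega$ whose $\omega$-norms are bounded by $\prod_{j\neq m}\onorm{g_j}{2}$ via iterated applications of \eqref{omegaz}. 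The $L^2$-estimate \eqref{K:L2bound} then controls the resulting $K$ in $L^2(\rbb)$ uniformly for $\lambda\in[0,\pi]$, and a final Cauchy--Schwarz against $g_m$ delivers the product bound with $c_{m+n}:=2(\sqrt{m}+\sqrt{n+1})\cdot(\text{combinatorial constant from the contraction sum})$.

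The main obstacle is the combinatorial identification in the second paragraph: the reorganization of the sum over $\ccal_{m,n}$ into the single commutator matrix element requires the Zamolodchikov $S$-factors generated when moving $\zd(\theta_m)$ across the other creators and across $A$ to cancel precisely the $S_C$-factors attached to those contractions where $m$ is contracted, with the signs $(-1)^{|C|}$ and the $\delta_C$ contributing the remaining commutator pieces. This is plausible given the analogous free-field formula in terms of nested commutators, but the $S$-bookkeeping is delicate and is where the bulk of the technical work lies; the induction on $m$ should reduce it to the single algebraic step of moving one $\zd$ past $A$.
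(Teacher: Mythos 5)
Your strategy matches the paper's almost exactly: reorganize the contraction sum so that each term exhibits a commutator $[\,\cdot\,,\zd(\theta_m)]$ (the paper derives this as a restatement of Lechner's Lemma 4.2, Eqs.~\eqref{fcom}--\eqref{fcomm}), then invoke Lemma~\ref{lemma:K} for the analytic continuation in $\theta_m$, and for the bound smear all other variables against the $g_j$, split the $S$- and $\delta$-factors, use \eqref{K:L2bound} and a final Cauchy--Schwarz in $\theta_m$. Your proposed induction on $m$ via Prop.~\ref{proposition:fmninversion} is a viable alternative to the paper's direct derivation from the Zamolodchikov relations.

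Two points where your sketch is imprecise and would need repair. First, the opening framing of $\cme{m,n}{A}$ as a \emph{single} matrix element $\hscalar{\psi}{[\zd(\theta_m),A]\chi}$ is not correct: after the reorganization it is a \emph{sum} over the contractions $\hat\ccal_{m,n}$ that do not involve $m$, each term carrying its own $\delta_C$, its own $S_C^{(m)}$, and its own pair of contracted vectors; only after smearing the remaining variables do you obtain a sum of $K$-functions. Second, you do not mention that the $S_C^{(m)}$-factors themselves depend on $\theta_m$ through pieces $S(\theta_m-\theta_{r_j})$, and these must be analytically continued along with the commutator matrix element. The crossing relation $S(\theta+i\pi)=S(-\theta)$ turns $S_C^{(m)}$ into $S_C^{(m-1)}$ at the upper boundary; this is exactly the mechanism by which the boundary value is recognized as $\cme{m-1,n+1}{A}$, and without it the identification fails. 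Also note that $\zd(\theta_m)$ sits at the \emph{end} of the creator string, so writing $\lvector{C}{\thetav}=\zd(\theta_m)\lvector{C'}{\check\thetav}$ silently ignores the $S$-factors picked up when moving it to the front; these are precisely what must cancel against the $S_C$-factors from the $m$-contracted terms, and hence are part of the ``delicate $S$-bookkeeping'' you flag but cannot simply be suppressed in the ansatz.
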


The idea for the proof of the above lemma is to rewrite the definition of $\cme{m,n}{A}$, or $\langle J A\st J \rangle^\mathrm{con}_{m+n,n}$ in the notation of \cite{Lechner:2008}, in terms of a sum of matrix elements of commutators $[\zd(\theta), J A\st J]$, as in \cite[Lemma 4.2]{Lechner:2008}. Then one can apply Lemma~\ref{lemma:K} and find the analytic continuation of the Araki coefficients. More details on this technique can be found in \cite[Sec.~4]{Lechner:2008}.

To prove Lemma~\ref{lemma:analbouf}, we therefore first need the result of Lechner \cite[Lemma 4.2]{Lechner:2008} which we rewrite here using our notation. This result is proved by using the exchange relations of the Zamolodchikov-Faddeev algebra.

\begin{lemma}
Let $\hat{\ccal}_{m,n}\subset \ccal_{m,n}$ denote the subset of those contractions $C\in\ccal_{m,n}$ which do not contract $m$, i.e. fulfil $m\notin \pmb{\ell_{C}}$. Then
\begin{eqnarray}
\cme{m,n}{A}&=&\sum_{C\in \hat{\ccal}_{m,n}}(-1)^{|C|}\delta_{C}\cdot S_{C}^{(m)}\cdot \langle J\pmb{r}_{C},\left[JA^{*}J,z^{\dagger}_{m}\right]J\pmb{\ell}_{C}\cup\{  m\}\rangle,\label{fcom}\\
f_{m-1,n+1}^{[A]}&=&\sum_{C\in \hat{\ccal}_{m,n}}(-1)^{|C|}\delta_{C}\cdot S_{C}^{(m-1)}\cdot \langle J \pmb{r}_{C},\left[z_{m},JA^{*}J\right]J\pmb{\ell}_{C}\cup\{  m\}\rangle.\label{fcomm}
\end{eqnarray}
\end{lemma}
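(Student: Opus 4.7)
The plan is to follow the argument of \cite[Lemma~4.2]{Lechner:2008}, which establishes essentially the same identities in Lechner's notation. The translation between the two formulations is dictated by the relation $\cme{m,n}{A} = \langle JA^{*}J\rangle^{\text{con}}_{m+n,m}$ noted after the definition \eqref{eq:fmndef}: the identities above are just Lechner's formulae rewritten with $JA^{*}J$ in place of $A$. I now outline how such a proof runs in our conventions.

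First I would start from the definition \eqref{eq:fmndef} and partition $\ccal_{m,n}$ into the uncontracted subset $\hat\ccal_{m,n}$ (where $m$ is not among the $l_j$) and its complement. For each $C$ in the complement, the index $m$ is contracted with a unique $r_j \in \{m+1,\ldots,m+n\}$, and the delta factor $\delta(\theta_m-\eta_{r_j-m})$ inside $\delta_C$ identifies $\theta_m$ with $\eta_{r_j-m}$; such a contraction is thus naturally parametrized by a pair $(r_j,C')$ with $C' \in \hat\ccal_{m-1,n-1}$. The key observation is that both types of contributions combine into matrix elements of the commutator $[JA^{*}J,z^{\dagger}_m]$: the uncontracted part produces the term $JA^{*}J\cdot z^{\dagger}_m$, and the contracted part the term $-z^{\dagger}_m \cdot JA^{*}J$.

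Concretely, for $C\in\hat\ccal_{m,n}$ I would move $\zd(\theta_m)$ to the rightmost position in $\lvector{C}{\thetav}$ by repeated use of the Zamolodchikov exchange relations \eqref{zamoloalgebra}, and then rewrite the resulting matrix element using the antiunitarity of $J$, $J\Omega=\Omega$, and $\zd(\psi)' = J\zd(\psi)J$, so that $z^{\dagger}_m$ appears to the right of $JA^{*}J$. For $C$ in the complementary set, I would perform the analogous manoeuvre on the right-hand vector $\rvector{C}{\etav}$, moving $\zd(\eta_{r_j-m})$ into the extreme position and then using the delta function to replace $\eta_{r_j-m}$ by $\theta_m$, so that $z^{\dagger}_m$ ends up to the left of $JA^{*}J$. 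Each commutation step produces $S$-factors whose accumulated product, together with the original $S_C$, should turn out to equal precisely $S^{(m)}_C$. The sum over $r_j$ in the complementary part then reproduces the second term of the commutator with the correct sign. The identity \eqref{fcomm} is derived by the same mechanism applied to $\cme{m-1,n+1}{A}$, with $z(\theta_m)$ in the role of $\zd(\theta_m)$; the shift $S^{(m)}_C \to S^{(m-1)}_C$ simply reflects that the distinguished index now sits in the right block rather than the left block.

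The main obstacle will be the bookkeeping of $S$-factors and signs. Each transposition of two Zamolodchikov operators generates a factor $S(\theta-\eta)$, and to match the right-hand sides $\delta_C S^{(m)}_C$ and $\delta_C S^{(m-1)}_C$ one must verify that the accumulated $S$-factors, together with $S_C$, the $(-1)^{|C|}$ in \eqref{eq:fmndef}, and the sign inherent in the commutator all combine consistently. The antiunitarity of $J$ also produces complex conjugations of inner products that must cancel. These verifications are routine but delicate, and they are essentially the content of \cite[Lemma~4.2]{Lechner:2008} carried through in the present conventions.
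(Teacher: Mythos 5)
Your plan is the same as the paper's proof, which itself is an explicit rewriting of Lechner's Lemma 4.2 / Appendix A: split $\ccal_{m,n}$ into $\hat\ccal_{m,n}$ and its complement $\check\ccal_{m,n}$, parametrize the latter by adjoining a pair $(m,r)$ to a contraction in $\hat\ccal_{m,n}$, apply the Zamolodchikov exchange relations to the matrix element $\langle J\pmb{r}_C, JA^*J\,J\pmb{\ell}_C\rangle$ to peel off a commutator plus a crossing term, and match the accumulated $S$-factors and delta functions against $\delta_{C'}S_{C'}^{(m)}$. One small point of confusion: in $\lvector{C}{\thetav}$ the operator $\zd(\theta_m)$ already sits at the rightmost position (next to $\Omega$), so no repositioning within that vector is needed; what the proof actually commutes is $z_m$ across the creation operators of $J\pmb{r}_C$ (or, equivalently, $\zd_m$ past $JA^*J$), which is what generates the sum over $r$ with the $\delta_{m,r}$ and $S$-factors that reproduce the $\check\ccal_{m,n}$ contributions.
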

Note that here and for the rest of this section we will write explicitly the upper index $(m)$ on the factor $S_{C}$ (see Def.~\eqref{eq:sc}) for a matter of convenience.
\begin{proof}
We rewrite in our notation the proof in \cite[Appendix A]{Lechner:2008}.

We denote with $\hat{\ccal}_{m,n}$ the contractions $C\in \ccal_{m,n}$ which do not contract $m$, namely $m\notin J\pmb{\ell_{C}}$ and with $\check{\ccal}_{m,n}$ the contractions with $m\in J\pmb{\ell_{C}}$. We notice that $\ccal_{m,n}=\hat{\ccal}_{m,n} \dot{\cup} \check{\ccal}_{m,n}$.

We consider a contraction $C'\in \check{\ccal}_{m,n}$, which can be written as the union of a contraction $C\in \hat{C}_{m,n}$ and a contraction $\{(m,r)\}$ with $r\notin J\pmb{r_{C}}$. Namely, $C'= C \cup \{ (m,r)  \}$ and we have $|C|=|C'|-1$. Then, recalling the definitions \eqref{eq:deltac} and \eqref{eq:sc}, we have in this case:
\begin{eqnarray}
\delta_{C'} &=& \delta_{m,r} \cdot \delta_{C},\\
S_{C'}^{(m)} &=& \prod_{j=1}^{|C|}\prod_{m_{j}=l_{j}+1}^{r_{j}-1}S_{m_{j},l_{j}}^{(m)}\cdot \prod_{\substack{r_{i}<r_{j} \\ l_{i}<l_{j}}}S^{(m)}_{l_{j},r_{i}}\cdot \prod_{p=m+1}^{r-1}S_{p,m}^{(m)}\cdot \prod_{\substack{r_{i}<r \\ l_{i}<m}}S_{m,r_{i}}^{(m)}\cdot \prod_{\substack{r<r_{j} \\ m<l_{j}}}S_{l_{j},r}^{(m)}\nonumber\\
&=& S_{C}^{(m)}\cdot \prod_{p=m+1}^{r-1}S_{m,p} \cdot \prod_{r_{i}<r}S_{r_{i},m},
\end{eqnarray}
since $l_{1},\ldots, l_{|C|}<m$. By multiplying the two last products of $S$-factors in the formula above, we get:
\begin{equation}
\delta_{C'}\cdot S_{C'}^{(m)}=\delta_{C}\cdot S_{C}^{(m)}\cdot \delta_{m,r}\cdot \prod_{\substack{ p=m+1 \\ p\neq r_{i} \text{ for } r_{i}<r}}^{r-1}S_{m,p}.\label{sdeltaproofcomm}
\end{equation}
Now we consider the matrix element $\langle J\pmb{r}_{C},JA^{*}J \; J\pmb{\ell}_{C}\rangle$. Using the Zamolodchikov's algebra \eqref{zamoloalgebra} repeatedly, we find
\begin{eqnarray}
\langle J\pmb{r}_{C},JA^{*}J\; J\pmb{\ell}_{C}\rangle &=& \langle \zd_{m+1}\ldots \widehat{\zd_{r_{1}-m}}\ldots \widehat{\zd_{r_{|C|}-m}}\ldots \zd_{m+n}\Omega, JA^{*}J \zd_{m}\ldots \widehat{\zd_{l_{1}}}\ldots \widehat{\zd_{l_{|C|}}}\ldots \zd_{1}\Omega \rangle \nonumber\\
&=& \langle J\pmb{r}_{C},\left[JA^{*}J,z^{\dagger}_{m}\right]J\pmb{\ell}_{C}\cup\{  m\}\rangle \nonumber\\
&+& \sum_{\substack{ r=m+1\\ r\notin J\pmb{r_{C}}}}^{m+n}\delta_{m,r}\prod_{\substack{ p=m+1 \\ p\neq r_{i} \text{ for } r_{i}<r}}^{r-1}S_{m,p}\cdot  \langle J\pmb{r}_{C}\cup \{r \},JA^{*}J \;J\pmb{\ell}_{C}\cup\{  m\}\rangle.\label{proofcomm}
\end{eqnarray}
We multiply Eq.~\eqref{proofcomm} with $(-1)^{|C|}\delta_{C}S_{C}^{(m)}$ and we sum over $C\in \hat{\ccal}_{m,n}$. Since $C'=C\cup \{ (m,r) \}$, we have $\sum_{C'\in \check{\ccal}_{m,n}}=\sum_{r=m+1, r\notin J\pmb{r_{C}}}^{m+n}\sum_{C\in \hat{\ccal}_{m,n}}$. The delta distributions and the $S$-factors in \eqref{proofcomm} are equal to the ones in \eqref{sdeltaproofcomm}. Hence, we have:
\begin{multline}
\sum_{C\in \hat{\ccal}_{m,n}}(-1)^{|C|}\delta_{C}S_{C}^{(m)}\langle J\pmb{r}_{C},JA^{*}J\; J\pmb{\ell}_{C}\rangle \\
 =\sum_{C\in \hat{\ccal}_{m,n}}(-1)^{|C|}\delta_{C}S_{C}^{(m)}\langle J\pmb{r}_{C},\left[JA^{*}J,z^{\dagger}_{m}\right]J\pmb{\ell}_{C}\cup\{  m\}\rangle\\
- \sum_{C'\in \check{\ccal}_{m,n}}(-1)^{|C'|}\delta_{C'}S_{C'}^{(m)}\langle J\pmb{r}_{C'},JA^{*}J\; J\pmb{\ell}_{C'}\rangle.
\end{multline}
where we used that $(-1)^{|C'|}=-(-1)^{|C|}$. Since $\ccal_{m,n}=\hat{\ccal}_{m,n}\dot{\cup}\check{\ccal}_{m,n}$, we finally find
\begin{multline}
\sum_{C\in \hat{\ccal}_{m,n}}(-1)^{|C|}\delta_{C}S_{C}^{(m)}\langle J\pmb{r}_{C},\left[JA^{*}J,z^{\dagger}_{m}\right]J\pmb{\ell}_{C}\cup\{  m\}\rangle\\
=\sum_{C\in \ccal_{m,n}}(-1)^{|C|}\delta_{C}S_{C}^{(m)}\langle J\pmb{r}_{C},JA^{*}J\; J\pmb{\ell}_{C}\rangle.
\end{multline}
The right hand side of the formula above is by definition $\cme{m,n}{A}$. Hence, we proved \eqref{fcom}.

The proof of \eqref{fcomm} is analogous.
\end{proof}

Now, we can prove Lemma \ref{lemma:analbouf} following closely the proof of \cite[Lemma 4.3]{Lechner:2008}.

\begin{proof}
\begin{enumerate}

\item We consider the distributions $\cme{m,n}{A}$ \eqref{eq:fmndef} rewritten in the following way:
\begin{equation}
\cme{m,n}{A}(\thetav,\pmb{\eta}):=\sum_{C\in \ccal_{m,n}}(-1)^{|C|}\delta_{C}\cdot S_{C}^{(m)}\cdot \langle J\pmb{r}_{C},JA^{*}J\pmb{\ell}_{C}\rangle.\label{fmnjaj}
\end{equation}
Using \eqref{fcom}, we rewrite $\cme{m,n}{A}$ as:
\begin{equation}
f_{m,n}^{[A]}(\thetav,\pmb{\eta})=\sum_{C\in \hat{\ccal}_{m,n}}(-1)^{|C|}\delta_{C}\cdot S_{C}^{(m)}\cdot \langle J\pmb{r}_{C},\left[JA^{*}J,z^{\dagger}_{m}\right]J\pmb{\ell}_{C}\cup\{  m\}\rangle.
\end{equation}
In \eqref{fmnjaj} we have that $S_C^{(m)}$ \eqref{eq:sc} has an analytic continuation in the variable $\theta_m$ to the strip $\strip(0,\pi)$ with boundary value $S_{C}^{(m-1)}$ at $\mathbb{R}+i\pi$. Indeed, the factors $S_{m,r_{j}}^{(m)}=S_{m,r_{j}}$ in $S_{C}^{(m)}$ can be analytically continued in $\theta_{m}$ to the strip $\strip(0,\pi)$ with boundary value $S(\theta_{m}+i\pi-\theta_{r_{j}})=S(\theta_{r_{j}}-\theta_{m})=S^{(m-1)}(\theta_{m}-\theta_{r_{j}})$ since the scattering function $S$ is analytic in $\strip(0,\pi)$ and fulfils crossing-symmetry relation. The other factors $S_{m_{j},r_{j}}^{(m)}$, with $m_{j}\neq m$, do not depend on $\theta_{m}$ (in particular $l_{i},r_{j}\neq m$) since $m$ is not contracted in $C\in \hat{\ccal}_{m,n}$ and therefore they fulfil $S_{a,b}^{(m)}=S^{(m-1)}_{a,b}$ ($a,b\neq m$).

We have that the delta distribution $\delta_C$ does not depend on the variable $\theta_m$ since $m$ is not contracted in $C$.

We know from Lemma \ref{lemma:K} that $\langle J\pmb{r}_{C},\left[JA^{*}J,z^{\dagger}_{m}\right]J\pmb{\ell}_{C}\cup\{  m\}\rangle$, integrated in $\theta_j, j\neq m$, with test functions in $\mathcal{D}(\mathbb{R})$, can be analytically continued in $\theta_{m}$ to the strip $\strip(0,\pi)$. Also, due to Lemma \ref{lemma:K} its boundary value at $\im \zeta_{m}=i\pi$ is given by $\langle J\pmb{r}_{C},\left[z_{m},JA^{*}J\right]J\pmb{\ell}_{C}\cup\{  m\}\rangle$.

From the considerations above we can conclude that $\cme{m,n}{A}$ can be analytically continued in the variable $\theta_m$ to the strip $\strip(0,\pi)$ with distributional boundary value at the other side of the strip $\im \zeta_{m}=i\pi$ given by:
\begin{equation}
\cme{m,n}{A}(\theta_{1},\ldots,\theta_{m}+i\pi,\ldots,\theta_{m+n})=\sum_{C\in \hat{\ccal}_{m,n}}(-1)^{|C|}\delta_{C}S^{(m-1)}_{C}\langle J\pmb{r}_{C},\left[z_{m},JA^{*}J\right]J\pmb{\ell}_{C}\cup\{  m\}\rangle.\label{proofanalyfmn}
\end{equation}
Using \eqref{fcomm} the right hand side of \eqref{proofanalyfmn} is exactly $f_{m-1,n+1}^{[A]}$.

\item Using \eqref{fcom}, we rewrite the left hand side of \eqref{fmnbound} as follows:
\begin{multline}\label{intfmmncomm}
 \Big\lvert \int \cme{m,n}{A}(\theta_1,\ldots,\theta_{m-1},\theta_m +i\lambda,\theta_{m+1},\ldots,\theta_{m+n}) \prod_{j=1}^{m+n}g_j(\theta_j)\, d\theta_j\Big\rvert\\
=\Big \lvert \int \sum_{C \in \hat{\ccal}_{m,n}}(-1)^{|C|} \delta_C(\thetav) S_C^{(m)}(\thetav) \times \\
\times \langle J\pmb{r_C},[JA^{*}J,{\zd}(\theta_m +i\lambda)]J\pmb{\ell_C}\cup \{m\}\rangle \prod_{j=1}^{m+n}g_j(\theta_j)\, d\theta_j\Big \rvert.
\end{multline}
where $\pmb{r_C}(\thetav)={\zd}(\theta_{m+n})\ldots \widehat{{\zd}(\theta_{r_1})}\ldots \widehat{{\zd}(\theta_{r_{|C|}})} \ldots {\zd}(\theta_{m+1})\Omega$.

We denote $\pmb{\theta_r} := (\theta_{r_1}, \ldots, \theta_{r_{|C|}})$, $\boldsymbol{\theta_r} \in \rbb^{|C|}$. Following Lechner, we split $\delta_C S_C^{(m)}$ into the product of three factors: $\delta_{C}S_{C}^{(m)}= \delta_{C} S_{C}^L S_C ^M S_C^R$, where $S_C^L$ depends on $\{\theta_1,\ldots, \theta_{m-1} \} \backslash \{ \theta_{l_1},\ldots, \theta_{l_{|C|}} \}$ and $\pmb{\theta_r}$, where $S_C^R$ depends on $\{\theta_{m+1},\ldots, \theta_{m+n} \}$, and where $S_C^M = \prod_{j=1}^{|C|}S_{m,l_j}$ depends on $\theta_m$. We also define, $g_1,\ldots, g_{m+n} \in \dcal(\rbb)$,
\begin{eqnarray}
F_{\pmb{\theta_r}}^L &:=& S_{C}^{L} \cdot (g_{m-1} \otimes \ldots \otimes \widehat{g_{l_1}}\otimes \ldots \otimes \widehat{g_{l_{|C|}}}\otimes \ldots \otimes g_1),\\
F_{\pmb{\theta_r}}^R &:=& S_{C}^{R} \cdot (g_{m+1} \otimes \ldots \otimes \widehat{g_{r_1}}\otimes \ldots \otimes \widehat{g_{r_{|C|}}}\otimes \ldots \otimes g_{m+n}).
\end{eqnarray}
$F_{\pmb{\theta_r}}^L$ and $F_{\pmb{\theta_r}}^R$ are functions of $m-1-|C|$ variables, $\{\theta_1,\ldots, \theta_{m-1} \} \backslash \{ \theta_{l_1},\ldots, \theta_{l_{|C|}} \}$, and of $n-|C|$ variables, $\{\theta_{m+1},\ldots, \theta_{m+n} \}\backslash \{ \theta_{r_1},\ldots, \theta_{r_{|C|}} \}$, respectively, and they depend parametrically on $\pmb{\theta_r}$. Since $|S(\theta)|=1$ for $\theta \in \rbb$, they have norms:
\begin{eqnarray}
 \onorm{F^L _{\pmb{\theta_r}}}{2} &\leq& \prod_{\substack{ j=1 \\ j \neq \pmb{\ell_C} }}^{m-1}\onorm{g_j}{2},\\
 \onorm{F^R _{\pmb{\theta_r}}}{2} &\leq& \prod_{\substack{ j=m+1 \\ j \neq \pmb{r_C} }}^{m+n}\onorm{g_j}{2},
\end{eqnarray}
where we made use of the sublinearity of $\omega$, see \ref{it:sublinear}.

After integrating over the delta distributions in \eqref{intfmmncomm}, we find
\begin{multline}
 \text{l.h.s.}\eqref{intfmmncomm}\leq \Big\lvert   \sum_{C \in \hat{\ccal}_{m,n}}(-1)^{|C|} \int d^{|C|}\boldsymbol{\theta_r}\, \prod_{j=1}^{|C|}g_{l_j}(\theta_{r_j})g_{r_j}(\theta_{r_{j}})\times \\
\times \int d\theta_m \, g_m(\theta_m) \prod_{j=1}^{|C|} S(\theta_m -\theta_{r_j }+i\lambda)\langle J\pmb{r_C}, [JA^{*}J, \zd(\theta_m +i\lambda)] J\pmb{\ell_C}\cup\{ m \} \rangle (F_{\boldsymbol{\theta_r}}^R \otimes F_{\boldsymbol{\theta_r}}^{L})\Big\rvert.
\end{multline}
We move the absolute value inside the integral in $\boldsymbol{\theta_r}$ and we apply the Cauchy-Schwarz inequality with respect to the variable $\theta_m$, we find:
\begin{multline}
\text{l.h.s.}\eqref{intfmmncomm}\leq
  \sum_{C \in \hat{\ccal}_{m,n}} \int d^{|C|}\boldsymbol{\theta_r}\, \prod_{j=1}^{|C|} | g_{l_j}(\theta_{r_j})g_{r_j}(\theta_{r_{j}})| \times \\
\times \Big\lvert \int d\theta_m \, g_m(\theta_m) \prod_{j=1}^{|C|} S(\theta_m -\theta_{r_j }+i\lambda)\langle J\pmb{r_C}, [JA^{*}J, \zd(\theta_m +i\lambda)] J\pmb{\ell_C}\cup\{ m \} \rangle (F_{\boldsymbol{\theta_r}}^R \otimes F_{\boldsymbol{\theta_r}}^{L})\Big\rvert\\
\leq  \sum_{C \in \hat{\ccal}_{m,n}} \int d^{|C|}\boldsymbol{\theta_r}\, \prod_{j=1}^{|C|} | g_{l_j}(\theta_{r_j})g_{r_j}(\theta_{r_{j}})| \times \\
\times \gnorm{g_m}{2}\cdot \Big( \int d\theta_m\, |\langle J\pmb{r_C}, [JA^{*}J, \zd(\theta_m +i\lambda)] J\pmb{\ell_C}\cup\{ m \} \rangle (F_{\boldsymbol{\theta_r}}^R \otimes F_{\boldsymbol{\theta_r}}^{L})|^2 \Big)^{1/2},
\end{multline}
where we used that $|S(\zeta)| \leq 1$, $\zeta \in S(0,\pi)$. Putting \eqref{K:L2bound}, we find
\begin{multline}
 \text{l.h.s.}\eqref{intfmmncomm}\leq \sum_{C \in \hat{\ccal}_{m,n}} \int d^{|C|}\boldsymbol{\theta_r}\, \prod_{j=1}^{|C|} | g_{l_j}(\theta_{r_j})g_{r_j}(\theta_{r_{j}})| \times \\
\times \gnorm{g_m}{2} c_{m-1-|C|,n-|C|}\sqrt{(m-1-|C|)!}\sqrt{(n-|C|)!} \onorm{JA^{*}J}{m+n-2|C|} \onorm{F_{\boldsymbol{\theta_r}}^L}{2}\gnorm{F_{\boldsymbol{\theta_r}}^R}{2}\\
\leq \sum_{C \in \hat{\ccal}_{m,n}} \int d^{|C|}\boldsymbol{\theta_r}\, \prod_{j=1}^{|C|} | g_{l_j}(\theta_{r_j})g_{r_j}(\theta_{r_{j}})| \times \\
\times \gnorm{g_m}{2} c_{m-1-|C|,n-|C|}\sqrt{(m-1-|C|)!}\sqrt{(n-|C|)!} \onorm{A}{m+n} \prod_{\substack{ j=1 \\ j \neq \pmb{\ell_C} }}^{m-1}\onorm{g_j}{2}\prod_{\substack{ j=m+1 \\ j \neq \pmb{r_C} }}^{m+n}\onorm{g_j}{2}.
\end{multline}
Now, we can apply the Cauchy-Schwarz inequality with respect to the variable $\boldsymbol{\theta_r}$, we find
\begin{multline}\label{intfmncomput}
\text{l.h.s.}\eqref{intfmmncomm}\leq
 \sum_{C \in \hat{\ccal}_{m,n}}
\prod_{j=1}^{|C|}\gnorm{g_{l_j}}{2}\gnorm{g_{r_j}}{2}\times \\
\times \gnorm{g_m}{2} c_{m-1-|C|,n-|C|}\sqrt{(m-1-|C|)!}\sqrt{(n-|C|)!} \onorm{A}{m+n} \prod_{\substack{ j=1 \\ j \neq \pmb{\ell_C} }}^{m-1}\onorm{g_j}{2}\prod_{\substack{ j=m+1 \\ j \neq \pmb{r_C} }}^{m+n}\onorm{g_j}{2}.
\end{multline}
As for the constants depending on $m,n$ in the expression above, we compute:
\begin{equation}
 c_{m-1-|C|,n-|C|}\sqrt{(m-1-|C|)!}\sqrt{(n-|C|)!} \leq \frac{4\sqrt{(m+n)!}}{|C|!}.
\end{equation}
Inserting in \eqref{intfmncomput}, the dependence on $C$ of this equation is only in the term $1/|C|!$, hence we can compute
\begin{equation}
\sum_{C\in \hat{\ccal}_{m,n}}\frac{1}{|C|!}= 2^{m+n-1},
\end{equation}
see also \cite[page 20]{Lechner:2008}.

Finally, we arrive at
\begin{equation}
\text{l.h.s.}\eqref{intfmmncomm}\leq\sqrt{(m+n)!}2^{m+n+1}\prod_{j=1}^{m+n}\onorm{g_j}{2} \onorm{A}{m+n}.
\end{equation}
This concludes the proof of Lemma~\ref{lemma:analbouf}.
\end{enumerate}
\end{proof}

Using the results above, now we can define $F_k'$ on the ``positive simplex'' $\tube(\gcal_+^k)$, cf.~Fig.~\ref{fig:zerostair}, as follows.

\begin{proposition}\label{proposition:analpositivesimplex}

If $A\in \mathcal{Q}^{\omega}$ is $\omega$-local in the wedge $\wcal_r'$, then there are CR distributions $F_k'$ on $\tube(\gcal^k_+)$ such that
\begin{equation}\label{eq:fkpboundaryvalue}
F_{k}'(\thetav+i\lambdav^{(k,j)})=\cme{k-j,j}{A}(\thetav), \quad
0\leq j \leq k.
\end{equation}
\end{proposition}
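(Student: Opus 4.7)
The strategy is to construct $F_k'$ edgewise by iterated application of Lemma~\ref{lemma:analbouf}, after first reducing to the case where the localization region is the standard left wedge $\wcal'$. Since Lemma~\ref{lemma:analbouf} is formulated only for $A$ which is $\omega$-local in $\wcal'$, I set $A' := U(-r\ev^{(1)},0)\, A\, U(-r\ev^{(1)},0)^\ast$; by translation covariance this operator is $\omega$-local in $\wcal' = \wcal_r' - r\ev^{(1)}$. Prop.~\ref{proposition:fmnpoincare} then gives
\begin{equation*}
\cme{m,n}{A}(\thetav,\etav) = \exp\Bigl(-ir\mu \sum_{j=1}^m \sinh\theta_j + ir\mu \sum_{i=1}^n \sinh\eta_i\Bigr)\, \cme{m,n}{A'}(\thetav,\etav),
\end{equation*}
and this phase factor extends to an entire function $\exp(-ir\mu \sum_{j=1}^k \sinh\zeta_j)$ on $\cbb^k$ whose restriction to the nodes of $\gcal_+^k$ reproduces the above expression (using $\sinh(\theta+i\pi) = -\sinh\theta$). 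It therefore suffices to construct a CR distribution $F_k'^{A'}$ for $A'$ satisfying \eqref{eq:fkpboundaryvalue}, and then set $F_k'^{A}(\zetav) := \exp\bigl(-ir\mu \sum_{j=1}^k \sinh\zeta_j\bigr)\, F_k'^{A'}(\zetav)$ throughout $\tube(\gcal_+^k)$.

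The graph $\gcal_+^k$ has $k$ edges: for $0 \leq j \leq k-1$, edge number $j$ joins $\lambdav^{(k,j)}$ to $\lambdav^{(k,j+1)}$ by letting only the $(k-j)$-th component of the imaginary part vary from $0$ to $\pi$. Applying Lemma~\ref{lemma:analbouf}(i) to the distribution $\cme{k-j,j}{A'}$ yields an analytic continuation in the variable $\theta_{k-j}$ throughout the strip $\strip(0,\pi)$, with distributional boundary value at $\im\zeta_{k-j} = \pi$ equal to $\cme{k-j-1,j+1}{A'}$. Define $F_k'^{A'}$ on edge $j$ to be precisely this analytic continuation: an analytic function in $\zeta_{k-j}$ taking values in distributions in the remaining $k-1$ real variables.

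It remains to check that these edgewise definitions glue into a single CR distribution on $\tube(\gcal_+^k)$ satisfying \eqref{eq:fkpboundaryvalue}. Analyticity along each edge holds by construction. At every interior node $\lambdav^{(k,j)}$ with $1 \leq j \leq k-1$, the distributional boundary value arriving from edge $j-1$ (i.e.\ $\im\zeta_{k-j+1} \nearrow \pi$) is $\cme{k-j,j}{A'}$ by Lemma~\ref{lemma:analbouf}(i), while the value arriving from edge $j$ (i.e.\ $\im\zeta_{k-j} \searrow 0$) is $\cme{k-j,j}{A'}$ by the starting data of our edgewise construction; the two agree, and the common nodal value is the required $\cme{k-j,j}{A'}$. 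At the two boundary nodes $\lambdav^{(k,0)}$ and $\lambdav^{(k,k)}$ the values are by definition $\cme{k,0}{A'}$ and $\cme{0,k}{A'}$, respectively. Multiplying by the phase factor from the first paragraph yields \eqref{eq:fkpboundaryvalue} for the original $A$. The main technical obstacle is bookkeeping: one must verify that the iterative recipe -- in which each application of Lemma~\ref{lemma:analbouf} takes as input the distributional boundary value produced by the previous step -- matches the chosen parametrization of the edges of $\gcal_+^k$ and the creator/annihilator conventions built into the indices of the coefficients $\cme{m,n}{\cdot}$; once this is settled, the gluing and the reinsertion of the translation phase are automatic.
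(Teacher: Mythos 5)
Your proposal is correct and follows essentially the same route as the paper's proof: reduce to the standard left wedge $\wcal'$ by conjugating with $U(-r\ev^{(1)})$, construct the CR distribution edge by edge via iterated application of Lemma~\ref{lemma:analbouf}(i) with the nodal boundary values matching by Eq.~\eqref{eq:analbouf}, and then reintroduce the translation through the entire phase factor $\exp(-i\mu r\sum_j\sinh\zeta_j)$. (Incidentally, your sign on that phase is the one forced by Prop.~\ref{proposition:fmnpoincare} with $x=-r\ev^{(1)}$; the displayed formula \eqref{eq:shiftedfp} in the paper carries the opposite sign, which appears to be a typo, since the boundary identity $F_k'(\thetav+i\lambdav^{(k,j)})=\cme{k-j,j}{A}(\thetav)$ at the node $j=0$ reads $\cme{k,0}{A}=\cme{k,0}{B}\exp(-i\mu r\sum\sinh\theta_l)$, consistent with your version.)
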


\begin{proof}
First we consider the case where $r=0$. By Lemma~\ref{lemma:analbouf} we have that $\cme{k-j,j}{A}$ can be analytically continued to the edge $\im \zetav = \lambdav^{(k,j)} + \lambda \ev^{(k-j)}$, $0 < \lambda < \pi$; we define $F_k'$ on that edge as the analytic continuation of $\cme{k-j,j}{A}$ due to that Lemma. The boundary values of $F_k'$ are \eqref{eq:fkpboundaryvalue}, independent of the direction, due to Eq.~\eqref{eq:analbouf}.

Now we consider the case where $r \neq 0$. We consider the translated $A$, $B := U(-r\ev^{(1)}) A U(-r\ev^{(1)})^\ast$, so that the resulting quadratic form is $\omega$-local in the standard wedge $\leftwedge$. Hence, we can apply the result before for $r=0$ and conclude that $\cme{k-j,j}{B}$ has an analytic function ${F_k'}{}^{[B]}$ on $\tube(\gcal^k_+)$ whose boundary values are $\cme{k-j,j}{B}$.

We set
\begin{equation}\label{eq:shiftedfp}
 F_k'{}^{[A]}(\zetav) := F_k'{}^{[B]}(\zetav) \exp \big(i\mu r \textstyle{\sum_j} \sinh \zeta_j \big).
\end{equation}
One can see that this function has the proposed analyticity property and, using the relation between $\cme{k-j,j}{B}$ and $\cme{k-j,j}{A}$ given by Eq.~\eqref{eq:fmnpoincare} and the boundary values of $B$, one can also see that it has the proposed boundary values at the nodes.
\end{proof}

\section{Define function on negative simplex} \label{sec:fkpnegative}

Now, if an operator, or rather a quadratic form, $A$ is localized in a double cone $\ocal_r$, then it is in particular localized in the wedge $\wcal_r'$. Also, since $A$ is localized in the double cone, $JA\st J$ is localized in the \emph{same} wedge. Hence, we can apply Prop.~\ref{proposition:analpositivesimplex} both to $A$ and to $J A\st J$. This implies an extension of the domain of analyticity of the functions $F_k'$ discussed in Prop.~\ref{proposition:analpositivesimplex} to $\tube(\gcal^k_0)$, cf.~Fig.~\ref{fig:zerostair}.

In the following proposition we define the functions $F_k'$ for an operator localized in a double cone $\ocal_r$:

\begin{proposition}\label{proposition:ffjcont}
If $A$ is $\omega$-local in $\ocal_r$, then there exist CR distributions $F_k'$ which fulfil \ref{it:fpmero} and \ref{it:fpperiod}, and have the boundary values \eqref{eq:fkpboundary}.
\end{proposition}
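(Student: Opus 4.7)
The strategy is to build $F_k'$ by applying Prop.~\ref{proposition:analpositivesimplex} twice, once to $A$ and once to $JA^\ast J$, and then gluing the two pieces along the common node $\lambdav^{(k,0)}=0$. Since $A$ is $\omega$-local in $\ocal_r=\wcal_{-r}\cap\wcal_r'$, it is in particular $\omega$-local in the \emph{left} wedge $\wcal_r'$; Prop.~\ref{proposition:analpositivesimplex} then yields CR distributions $F_k^{\prime[A]}$ on $\tube(\gcal_+^k)$ with $F_k^{\prime[A]}(\thetav+i\lambdav^{(k,j)})=\cme{k-j,j}{A}(\thetav)$. In parallel, $A$ being $\omega$-local in the right wedge $\wcal_{-r}$ is, by the definitional equivalence of Def.~\ref{definition:omegalocal} (applied with $B=JA^\ast J$, for which $JB^\ast J=A$), the same as $JA^\ast J$ being $\omega$-local in the left wedge $\wcal_{-r}'$. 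Since the proof of Prop.~\ref{proposition:analpositivesimplex} only uses the translation by $-r\ev^{(1)}$ in \eqref{eq:shiftedfp}, it goes through for any real wedge parameter; applying it to $JA^\ast J$ with parameter $-r$ produces CR distributions $F_k^{\prime[JA^\ast J]}$ on $\tube(\gcal_+^k)$ with boundary values $\cme{k-j,j}{JA^\ast J}$ at $\lambdav^{(k,j)}$.

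The gluing exploits that $\gcal_-^k=\gcal_+^k-\piv$, so I will set
\begin{equation*}
F_k'(\zetav):=\begin{cases} F_k^{\prime[A]}(\zetav) & \text{for }\zetav\in\tube(\gcal_+^k), \\ F_k^{\prime[JA^\ast J]}(\zetav+i\piv) & \text{for }\zetav\in\tube(\gcal_-^k). \end{cases}
\end{equation*}
A direct computation using $\lambdav^{(k,-j)}+\piv=\lambdav^{(k,k-j)}$ shows that the negative-simplex boundary values are $F_k'(\thetav+i\lambdav^{(k,-j)})=\cme{j,k-j}{JA^\ast J}(\thetav)$, which, after substituting $j\leadsto k-j$, is precisely the second equality of \eqref{eq:fkpboundary}. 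The first equality is immediate from the definition on $\tube(\gcal_+^k)$. Consistency of the two branches at the common node $\lambdav^{(k,0)}=0$ boils down to $\cme{k,0}{A}(\thetav)=\cme{0,k}{JA^\ast J}(\thetav)$; this follows at once from Prop.~\ref{proposition:fmnreflected}, since $\ccal_{0,k}$ contains only the empty contraction, so the right-hand side of \eqref{eq:fmnreflected} collapses to the single term $\cme{k,0}{A}$. Granted this, the concatenated $F_k'$ is a CR distribution on $\tube(\gcal_0^k)$, establishing \ref{it:fpmero}.

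The periodicity condition \ref{it:fpperiod}, $F_k'(\thetav+i\lambdav^{(k,-k)})=F_k'(\thetav+i\lambdav^{(k,k)})$, unravels by the same definitions to the identity $\cme{k,0}{JA^\ast J}(\thetav)=\cme{0,k}{A}(\thetav)$, which is once more a one-line consequence of Prop.~\ref{proposition:fmnreflected} applied with $(m,n)=(k,0)$. The conceptual heart of the argument is the geometric observation $j(\wcal_{-r})=\wcal_r'$: under the reflection $j(x)=-x$ the right wedge $\wcal_{-r}$ becomes the left wedge $\wcal_r'$, and Def.~\ref{definition:omegalocal} is arranged precisely so that both $A$ and $JA^\ast J$ land in \emph{left} wedges simultaneously, making the left-wedge analyticity machinery of Prop.~\ref{proposition:analpositivesimplex} applicable to both. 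Beyond careful bookkeeping of which node on $\gcal_0^k$ is indexed by which pair $(m,n)$, no genuine obstacle arises, as all the analytic content has already been packaged into the earlier wedge-local result.
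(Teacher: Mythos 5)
Your proof takes essentially the same route as the paper's: apply Prop.~\ref{proposition:analpositivesimplex} separately to $A$ and to $JA^\ast J$, glue the two branches along $\gcal_0^k$ via a shift by $i\piv$, verify consistency at the common node $\lambdav^{(k,0)}=0$, and read off periodicity from the two extreme nodes. Two small divergences are worth flagging.

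First, the wedge you assign to $JA^\ast J$. You read the second clause of Def.~\ref{definition:omegalocal} literally (with $B=JA^\ast J$ and $x=-r$) and conclude $JA^\ast J$ is $\omega$-local in $\wcal_{-r}'$; the paper instead asserts $\omega$-locality in $\wcal_r'$, the \emph{same} wedge as $A$. Your own geometric observation $j(\wcal_{-r})=\wcal_r'$ in fact supports the paper's assignment, and the paper's usage later — in the proof of Theorem~\ref{theorem:FtoA}, $JA^\ast J$ $\omega$-local in $\wcal_r'$ is translated into $A$ $\omega$-local in $\wcal_{-r}$ — confirms that the intended reading of Def.~\ref{definition:omegalocal} carries a sign flip, i.e.\ $A$ $\omega$-local in $\wcal_x'$ should be $JA^\ast J$ $\omega$-local in $\wcal_{-x}$. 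For the present proposition this is harmless, since $\wcal_r'\subset\wcal_{-r}'$ for $r>0$ means your weaker localization still yields the CR distributions with the stated boundary values (and the resulting continuation is unique). But the stronger $\wcal_r'$ is what produces the $r$-dependent damping in Prop.~\ref{proposition:fpbounds}, so it is worth keeping track of this.

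Second, the matching and periodicity identities $\cme{k,0}{A}=\cme{0,k}{JA^\ast J}$ and $\cme{k,0}{JA^\ast J}=\cme{0,k}{A}$: you obtain them by collapsing the sum in Prop.~\ref{proposition:fmnreflected} (only the empty contraction survives in $\ccal_{0,k}$ and $\ccal_{k,0}$), whereas the paper computes both sides directly as matrix elements of the form $\hscalar{z^{\dagger k}(\thetav)\Omega}{A\Omega}$. Both are valid; your route is slicker in that it reuses the already-established reflection machinery, while the paper's is more self-contained.
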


\begin{proof}

Since $A$ is $\omega$-local in a double cone $\ocal_r$, then $A$ is, in particular, localized in the wedge $\wcal_r'$. Therefore, we can apply Prop.~\ref{proposition:analpositivesimplex} and define a family of functions $F_k'$ analytic on the tube $\tube(\gcal_+^k)$, the boundary of which are the required boundary values \eqref{eq:fkpboundaryvalue}.

Since $A$ is $\omega$-local in the double cone, then $JA^{*}J$ is $\omega$-local in the \emph{same} wedge $\wcal_r'$. Therefore, by Prop.~\ref{proposition:analpositivesimplex} there is another set of functions $F_{k}'[JA^{*}J]$ analytic on the same domain $\tube(\gcal^k_+)$.

We use this to define $F_k'$ on $\tube(\gcal^k_+ - \piv)$, cf.~Fig.~\ref{fig:zerostair}, by
\begin{equation}
   F_{k}'(\zetav- i \piv) := F_{k}'{}^{[JA^{*}J]} (\zetav).\label{Fpnegativesimplex}
\end{equation}
This has the required boundary values
\begin{equation}\label{eq:fkpotherbv}
 F_{k}'(\thetav+i\lambdav^{(k,-k+j)})
 = F_{k}'{}^{[JA^{*}J]} (\thetav+i\lambdav^{(k,j)})
 = f_{k-j,j}^{[JA^{*}J]}(\thetav)
\end{equation}

To show that the functions $F'_k$ are actually analytic on $\tube(\gcal_0^k)$, it remains to show that the two boundary values (coming from $\gcal_+^k$ and $\gcal_-^k$) of $F_k'$ at $\lambdav^{(k,0)}$ agree. By Eq.~\eqref{eq:fkpboundaryvalue}, we know that for real $\thetav$,
\begin{equation}
F_{k}'(\thetav+i\lambdav^{(k,0)})\Big\vert_{\gcal_+^k}=\cme{k,0}{A}(\thetav).
\end{equation}
Likewise, by Eq.~\eqref{eq:fkpotherbv}, we have
\begin{equation}
F_{k}'(\thetav+i\lambdav^{(k,0)})\Big\vert_{\gcal_-^k}
=
\cme{0,k}{J A\st J}(\thetav).
\end{equation}
However, a short computation shows that these are equal:
\begin{equation}
\cme{k,0}{A}(\thetav)
=\hscalar{z^{\dagger}(\theta_{1})\ldots z^{\dagger}(\theta_{k})\Omega}{A\Omega}
=\hscalar{\Omega}{J A\st J z^{\dagger}(\theta_{k})\ldots z^{\dagger}(\theta_{1})\Omega}
=\cme{0,k}{JA^{*}J}(\thetav).
\end{equation}
This proves \ref{it:fpmero}. Similarly, we can show that also the boundary values $F_{k}'(\thetav+i\lambdav^{(k,-k)})$ and $F_{k}'(\thetav+i\lambdav^{(k,k)})$ agree, and this gives the periodicity condition \ref{it:fpperiod}.
\end{proof}

\section{Cross-norm bounds} \label{sec:fkpbasic}

To prove the bounds \ref{it:fpboundsreal} and \ref{it:fpboundsimag}, we use the \emph{maximum modulus principle}; this is contained in the proof of the following proposition:

\begin{proposition} \label{proposition:fpbounds}
If $A$ is $\omega$-local in $\ocal_r$, then the distributions $F_k'$ fulfil the bounds  \ref{it:fpboundsreal} and \ref{it:fpboundsimag} with this parameter $r$.
\end{proposition}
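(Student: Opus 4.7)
The proof splits naturally according to the two clauses (F5$'$) and (F6$'$).

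\textbf{Bounds at nodes (F5$'$).} This part is essentially a bookkeeping exercise. By Proposition~\ref{proposition:ffjcont}, the boundary values of $F_k'$ at the nodes are precisely the Araki coefficients:
\begin{equation*}
F_k'(\thetav + i\lambdav^{(k,j)}) = \cme{k-j,j}{A}(\thetav),
\qquad
F_k'(\thetav + i\lambdav^{(k,-j)}) = \cme{j,k-j}{JA^{\ast}J}(\thetav).
\end{equation*}
Since $A \in \qf^\omega$, and the adjoint action of $J$ preserves $\qf^\omega$ (the norms $\onorm{\cdot}{k}$ are invariant under reflection, as discussed in Sec.~\ref{sec:qforms}), we have $JA^{\ast}J \in \qf^\omega$ too. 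Applying Proposition~\ref{proposition:fmnbound} to both $A$ and $JA^{\ast}J$ immediately gives the required finiteness of the $\omega$-adapted cross-norms at every node.

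\textbf{Bounds on edges (F6$'$).} This is the nontrivial part, and we focus on edges of $\gcal^k_+$ (the case of $\gcal^k_-$ follows from the same argument applied to $JA^\ast J$, or via the periodicity \ref{it:fpperiod}). Pick an edge, parametrized by $s\in[0,\pi]$ as $\lambdav(s) = \lambdav^{(k,j)} + s\ev^{(k-j)}$, so that only the imaginary part of $\zeta_{k-j}$ varies. Introduce the weight $w(\zeta) := \exp(i\mu r \sinh\zeta - \oa(\sinh\zeta))$, which is analytic in $\zeta$ on the strip $\strip(0,\pi)$, and set $W(\zetav) := \prod_p w(\zeta_p)$. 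The plan is to fix unit-norm test functions $g_1,\ldots,g_k \in \dcal(\rbb)$ and study the pairing
\begin{equation*}
\Phi(s) := \int d^k\theta\, F_k'(\thetav + i\lambdav(s)) W(\thetav + i\lambdav(s)) \prod_{p=1}^k g_p(\theta_p)
\end{equation*}
as a function of the single complex variable $s$ (extended to $\strip(0,\pi)$ by the CR property of $F_k'$ together with the analyticity of $W$ in the strip). Taking $\sup$ over the $g_p$ then yields the edge cross-norm.

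\textbf{Phragmén--Lindelöf application.} I would show that $\Phi$ is analytic on $\rbb + i(0,\pi)$ and continuous up to the boundary, then estimate it at the endpoints and in the interior. At $s=0$ and $s=\pi$, $F_k'$ specializes to $\cme{m,n}{A}$ at the adjacent nodes, and the factor $|w(\theta + i\lambda)|$ for $\lambda \in \{0,\pi\}$ is bounded by $e^{\omega(1)} e^{-\omega(\cosh\theta)}$ thanks to axiom~\ref{it:omegaestimate} combined with the lower bound $|\sinh\zeta| \geq \cosh\theta - 1$ used in the proof of Proposition~\ref{prop:omegapw}. Combined with the left inequality in \eqref{eq:crossnormcomparison}, this translates the node bounds of (F5$'$) into uniform bounds $|\Phi(0)|, |\Phi(\pi)| \leq c$. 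For the interior $s \in (0,\pi)$, since $A$ is $\omega$-local in $\ocal_r \subset \wcal_r'$, Lemma~\ref{lemma:analbouf} applied to the translated form $U(-r\ev^{(1)})AU(r\ev^{(1)})$ gives an $L^2$-type bound on $\cme{m,n}{A}$ throughout the strip, which in particular yields at most polynomial growth of $\Phi(s)$ (in $\re s$, after shifting the contour)---the localization bound $e^{i\mu r \sinh\zeta}$ absorbing the wedge shift of $r$ as in \eqref{eq:shiftedfp}, and the $e^{-\oa(\sinh\zeta)}$ factor providing the $\omega$-damping analogous to the function $g$ in \eqref{gdeffminus}. A Phragmén--Lindelöf argument (as in the proof of Proposition~\ref{prop:omegapw}) then propagates the endpoint bounds to a uniform bound on the whole closed strip, giving (F6$'$).

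\textbf{Main obstacle.} The conceptually nontrivial step is verifying that the combination $W \cdot F_k'$ genuinely has sub-exponential growth along the edge, so that Phragmén--Lindelöf applies. This requires carefully matching: (i) the damping behavior of $w(\zeta)$ on the shifted imaginary parts (where $\sinh\zeta$ lies in the upper half plane so $\oa(\sinh\zeta)$ is well-defined and has real part controlling the Paley--Wiener growth), and (ii) the analytic continuation bounds of Lemma~\ref{lemma:analbouf}, which must be translated to the non-standard wedge $\wcal_r'$ via a Poincaré transform and then reconciled with the factor $e^{i\mu r \sum_j \sinh\zeta_j}$ that cancels precisely this wedge shift, analogous to the mechanism of \eqref{eq:shiftedfp}.
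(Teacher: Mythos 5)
Your proof has the same overall architecture as the paper's: for \ref{it:fpboundsreal} you read off the node values $F_k'(\cdotarg+i\lambdav^{(k,j)})=\cme{k-j,j}{A}$ and $F_k'(\cdotarg+i\lambdav^{(k,-j)})=\cme{j,k-j}{JA^\ast J}$ via Eq.~\eqref{eq:fkpboundary} and apply Proposition~\ref{proposition:fmnbound} to $A$ and $JA^\ast J$; for \ref{it:fpboundsimag} you multiply by the same weight $\prod_j\exp(\pm i\mu r\sinh\zeta_j-\oa(\pm\sinh\zeta_j))$, reduce the node cross-norms via the left inequality in \eqref{eq:crossnormcomparison}, invoke Lemma~\ref{lemma:analbouf} for the growth control on the edges, and run a Phragmén--Lindelöf argument. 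The paper packages the last two steps into Lemma~\ref{lemma:maxmodcross}, a maximum-modulus principle for the $\gnorm{\cdotarg}{\times}$ norm, but the content is the same; the treatment of $\gcal_-^k$ via $JA^\ast J$ also matches.

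There is, however, a concrete flaw in your Phragmén--Lindelöf setup. You define
$\Phi(s) = \int d^k\theta\,F_k'(\thetav+i\lambdav(s))\,W(\thetav+i\lambdav(s))\prod_p g_p(\theta_p)$,
integrating over \emph{all} $k$ real variables, and then propose to extend $\Phi$ to an analytic function of the complex variable $s\in\strip(0,\pi)$ ``by the CR property.'' But $\Phi$, as defined, is only a function of the real edge parameter $s\in[0,\pi]$; promoting $s$ to a complex number $s_1+is_2$ merely shifts $\theta_{k-j}\mapsto\theta_{k-j}-s_2$ inside the integral, and after translating $g_{k-j}$ this is absorbed, so there is no genuine analytic dependence on $s$. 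Phragmén--Lindelöf cannot be applied to $\Phi$ directly. The holomorphy you need lives in a different variable: the convolution $(\hat F_k\ast g)(\pmb t+i\lambdav)$, regarded as a function of the $k$ real parameters $\pmb t\in\rbb^k$ together with $\lambdav$ on the edge, is a CR function, and the Phragmén--Lindelöf estimate is applied to it as a function of $t_{k-j}+i\lambda_{k-j}$ on the strip (with the other $t_p$ and $g$ held fixed). The passage back to the cross-norm, using \eqref{eq:fgsup} to write $\gnorm{\hat F_k(\cdotarg+i\lambdav)}{\times}=\sup_{g,\pmb t}|(\hat F_k\ast g)(\pmb t+i\lambdav)|$, must be taken \emph{after} the maximum-modulus step, not before. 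This is precisely what Lemma~\ref{lemma:maxmodcross} does for you; once you reorder the supremum and the Phragmén--Lindelöf argument accordingly, your proof goes through and reproduces the paper's.
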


\begin{proof}
We notice that on the nodes of the graph $\gcal_+^k$ we have, by Eq.~\eqref{eq:fkpboundary} and as a consequence of Prop.~\ref{proposition:fmnbound},
\begin{equation}\label{eq:fkpnodebounds}
  \gnorm{ F_k' (\cdotarg + i \lambdav^{(k,j)})}{(k-j) \times j}^{\omega}
 = \gnorm{ \cme{k-j,j}{A} }{(k-j) \times j}^{\omega} < \infty
\end{equation}
for any $0 \leq j \leq k$.

Also on the nodes of the graph $\gcal_-^k$, $\lambdav^{(k,-j)}$, there holds a similar inequality, for any $0 \leq j \leq k$,
\begin{equation}
\gnorm{F'_{k}(\cdotarg + i\lambda^{(k,-j)})}{j \times (k-j)}^{\omega}= \gnorm{ \cme{j,k-j}{JA^{*}J} }{j \times (k-j)}^{\omega} < \infty.
\end{equation}
due to Eq.~\eqref{eq:fkpboundary} and Prop.~\ref{proposition:fmnbound}. This proves \ref{it:fpboundsreal}.

Now we consider the distribution
\begin{equation}\label{eq:fkhatdef}
   \hat F_k (\zetav) := F_k' (\zetav) \exp(-i \mu r \textstyle{\sum_j} \sinh \zeta_j)\exp(-\sum_{j}\oa(\sinh\zeta_{j})).
\end{equation}
This is a CR distribution on $\tube(\gcal_+^k)$, since $F_k'$ is a CR distribution on $\tube(\gcal_+^k)$ by Prop.~\ref{proposition:ffjcont} and because the exponential functions are entire analytic.

On the nodes of the graph we have that the first exponential gives a factorizing phase factor, and the second exponential fulfils the estimate
\begin{equation}\label{expomegastima}
|e^{-\oa(\sinh\zeta_{j})}|\leq e^{-\omega(\cosh\theta_{j})}e^{\omega(1)},
\end{equation}
where we used \ref{it:omegaestimate}, and the relations $\sinh\theta = \cosh\theta -e^{-\theta}$ (which implies $|\sinh\theta|\geq \cosh\theta -1$) and $\omega(\cosh\theta)\leq \omega(\cosh\theta -1)+\omega(1)$ (which implies that $\omega(\cosh\theta)-\omega(1)\leq \omega(\cosh\theta -1)\leq \omega(|\sinh\theta|)$), together with the sublinearity and monotonicity of $\omega$.

Therefore, from \eqref{eq:fkpnodebounds} and \eqref{eq:crossnormcomparison}(left inequality), $\hat F_k$ fulfils the bound:
\begin{multline}
  \gnorm{ \hat F_k (\cdotarg + i \lambdav^{(k,j)})}{ \times }
  =\gnorm{F'_k (\cdotarg + i \lambdav^{(k,j)})\exp(-\sum_{j}\oa(\sinh\zeta_{j}))\exp(-i\mu r \sum_{j}\sinh \zeta_{j}) }{ \times }\\
  \leq e^{\sum_{j=1}^{k}\omega(1)}\gnorm{F'_k (\cdotarg + i \lambdav^{(k,j)})\exp(-\sum_{j}\oa(\cosh\theta_{j}))\exp(-i\mu r \sum_{j}\sinh \zeta_{j}) }{ \times }\\
  \leq e^{k\cdot \omega(1)}\gnorm{  F'_k (\cdotarg + i \lambdav^{(k,j)})\exp(-i\mu r \sum_{j}\sinh \zeta_{j})}{(k-j) \times j}^{\omega} \\
  =e^{k\cdot \omega(1)}\gnorm{\cme{k-j,j}{A}}{(k-j) \times j}^{\omega}< \infty.
\end{multline}
Moreover, by \eqref{eq:shiftedfp} we have for $\im\zetav$ on an edge of $\gcal_+^k$:
\begin{multline}
\hat{F}_k(\zetav) =\\
 F_k'^{[U(-re^{(1)})AU(-re^{(1)})^{*}]} (\zetav)\exp(i \mu r \textstyle{\sum_j} \sinh \zeta_j) \exp(-i \mu r \textstyle{\sum_j} \sinh \zeta_j)\exp(-\sum_{j}\oa(\sinh\zeta_{j}))\\
=F_k'^{[U(-re^{(1)})AU(-re^{(1)})^{*}]} (\zetav)\exp(-\sum_{j}\oa(\sinh\zeta_{j})).
\end{multline}
Now we compute a bound for $(\hat{F}_k \ast (g_1 \otimes \ldots \otimes g_k))(\pmb{t} +i\lambdav)$, with $g_1, \ldots, g_k \in \dcal(\rbb)$ and $\lambdav$ on an edge of $\gcal^k_+$ in the direction $\pmb{e}^{(j)}$. By Lemma~\ref{lemma:analbouf} we have
\begin{multline}\label{stimahatF}
|(\hat{F}_k \ast (g_1 \otimes \ldots \otimes g_k))(\pmb{t} +i\lambdav)|=\\
 \Big\lvert \int d\thetav\, F_k'^{[U(-re^{(1)})AU(-re^{(1)})^{*}]} (\thetav +i\lambdav)e^{-\sum_{j}\oa(\sinh(\theta_{j}+i\lambda_j))} g_1(t_1 -\theta_1)\ldots g_k(t_k - \theta_k)  \Big\rvert=\\
\Big\lvert \int d\thetav\, \cme{j,k-j}{A}(\theta_1, \ldots,\theta_{j-1}, \theta_j +i\lambda_j, \theta_{j+1},\ldots,\theta_k)e^{-\sum_{\ell}\oa(\sinh(\theta_{\ell}+i\lambda_\ell))} g_1(t_1 -\theta_1)\ldots g_k(t_k - \theta_k)  \Big\rvert\\
\leq c_k \onorm{A}{k} \prod_{\ell=1}^k \onorm{e^{-\oa(\sinh(\cdotarg +i\lambda_\ell))}g_\ell(t_\ell-\cdotarg)}{2}.
\end{multline}
Using \eqref{expomegastima}, we find that
\begin{eqnarray}
\onorm{e^{-\oa(\sinh(\cdotarg +i\lambda_\ell))}g_\ell(t_\ell-\cdotarg)}{2} &=& \Big( \int |g_\ell(t_\ell -\theta_\ell)|^2 e^{-2\oa(\sinh(\theta_\ell +i\lambda_\ell))}e^{2\omega(\cosh\theta_\ell)}\, d\theta_\ell  \Big)^{1/2} \nonumber\\
&\leq& e^{\omega(1)} \gnorm{g_\ell}{2}.
\end{eqnarray}
Inserting in \eqref{stimahatF}, we arrive at
\begin{equation}
 |(\hat{F}_k \ast (g_1 \otimes \ldots \otimes g_k))(\pmb{t} +i\lambdav)|\leq c'_k \onorm{A}{k} \prod_{\ell=1}^k \gnorm{g_\ell}{2}.
\end{equation}
where $c'_k :=e^{\omega(1)} c_k$.

This shows that the convoluted function $(\hat{F}_k \ast (g_1 \otimes \ldots \otimes g_k))(\pmb{t} +i\lambdav)$ is bounded in $\pmb{t}$ and therefore we can apply the maximum modulus principle for the norm $\gnorm{\cdotarg}{\times}$ (see Lemma~\ref{lemma:maxmodcross}), we find
\begin{equation}
  \sup_{\lambdav} \gnorm{\hat F_k(\cdotarg + i \lambdav)}{\times}
  \leq \sup_{0 \leq j \leq k} \gnorm{\hat F_k(\cdotarg + i \lambdav^{(k,j)})}{\times} < \infty.
\end{equation}
Similarly, we can apply the arguments above to the function $F_k'$ on $\tube(\gcal_+^k-i \piv)$, with $J A^\ast J$ in the place of $A$; the only differences with respect to the result above is a shift of $\zetav$ by $i\piv$ due to \eqref{Fpnegativesimplex}, and hence a resulting minus sign in the exponent in \eqref{eq:fkhatdef}. Namely, the argument applies in a similar way to the distribution:
\begin{equation}
\hat F_k^{[JA^{*}J]}(\zetav) := F_k'^{[JA^{*}J]} (\zetav) \exp(i \mu r \textstyle{\sum_j} \sinh \zeta_j)\exp(-\sum_{j}\oa(-\sinh\zeta_{j})).
\end{equation}
This gives the bounds \ref{it:fpboundsimag}.
\end{proof}

\section{Recursion relations} \label{sec:fkprecursion}

It remains to show the property of $S$-symmetry \ref{it:fpsymm} and the recursion relations \ref{it:fprecursion}. The proof of this is a direct consequence of the properties of the Araki coefficients, as we can see in the proof of the following proposition.
\begin{proposition} \label{proposition:fprecursion}
For fixed $A \in \qf^{\omega}$, any set of distributions $F_k'$ with boundary values \eqref{eq:fkpboundary} fulfils \ref{it:fpsymm} and \ref{it:fprecursion}.
\end{proposition}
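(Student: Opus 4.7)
The plan is to derive both \ref{it:fpsymm} and \ref{it:fprecursion} directly from properties of the contracted matrix elements $\cme{m,n}{A}$ established in Chapter~\ref{sec:araki}, simply transported through the boundary-value identifications \eqref{eq:fkpboundary}. Both conditions are pointwise statements at specific nodes of $\gcal_0^k$, so no new analytic input beyond what Prop.~\ref{proposition:ffjcont} already provides is required.

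For \ref{it:fpsymm}, I would first note that the assertion concerns $\thetav \in \rbb^k$, i.e.\ the node $\lambdav^{(k,0)} = 0$ of $\gcal_0^k$. At this node the first relation in \eqref{eq:fkpboundary} with $j = 0$ gives
\[
  F_k'(\thetav) \;=\; \cme{k,0}{A}(\thetav).
\]
Proposition~\ref{proposition:fmnsymm}, specialized to $n = 0$, then tells us that $\cme{k,0}{A}$ is $S$-symmetric in all $k$ of its arguments; the special case of an adjacent transposition $\pi = (j,j+1)$ is precisely the identity claimed in \ref{it:fpsymm}.

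For \ref{it:fprecursion}, I would translate each side into the language of the Araki coefficients and then invoke Proposition~\ref{proposition:fmnreflected}. Concretely, the second relation in \eqref{eq:fkpboundary}, applied with the $k$-tuple $\thetav$ split as $(\theta_1,\ldots,\theta_m;\theta_{m+1},\ldots,\theta_k)$ and with $j = k-m$, yields
\[
  F_k'(\thetav + i\lambdav^{(k,-m)}) \;=\; \cme{m,k-m}{JA^{\ast}J}(\thetav).
\]
For each summand on the right-hand side of \ref{it:fprecursion}, the first relation in \eqref{eq:fkpboundary}, now applied at level $k - 2|C|$ with $j = m - |C|$, identifies
\[
  F_{k-2|C|}'(\check\thetav + i\lambdav^{(k-2|C|,\,m-|C|)}) \;=\; \cme{k-m-|C|,\,m-|C|}{A}(\check\thetav),
\]
where the tuple $\check\thetav$ corresponds to the concatenation $(\hat\etav,\hat\thetav)$ of uncontracted components appearing on the right-hand side of Proposition~\ref{proposition:fmnreflected}. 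Plugging these two identifications into \ref{it:fprecursion} turns it into a term-by-term rewriting of Proposition~\ref{proposition:fmnreflected}.

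I expect no genuine obstacle: the two properties being proved are already encoded in Propositions~\ref{proposition:fmnsymm} and~\ref{proposition:fmnreflected}. The only delicate point is bookkeeping, namely checking that the ordering of variables in $\check\thetav$ and the role of the contraction indices $(l_j,r_j)$ in the factors $\delta_C S_C R_C$ match exactly between Proposition~\ref{proposition:fmnreflected} and the statement of \ref{it:fprecursion} (in particular, that the sum over $\ccal_{m,k-m}$ is indexed compatibly on both sides). Once these conventions are aligned, both identities follow immediately.
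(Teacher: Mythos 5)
Your proposal is correct and follows exactly the paper's own route: \ref{it:fpsymm} is read off from Proposition~\ref{proposition:fmnsymm} at the node $\lambdav^{(k,0)}=0$, and \ref{it:fprecursion} is obtained by substituting the two boundary-value identifications \eqref{eq:fkpboundary} into Proposition~\ref{proposition:fmnreflected}, with the same index bookkeeping ($j=k-m$ for the left side, $j=m-|C|$ at level $k-2|C|$ for the right side).
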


\begin{proof}

By \eqref{eq:fkpboundary}, the property of S-symmetry \ref{it:fpsymm} of the functions $F_k'$ is a direct consequence of the property of S-symmetry of the distributions $\cme{k,0}{A}$, discussed in Prop.~\ref{proposition:fmnsymm}.

To prove \ref{it:fprecursion}, we use Prop.~\ref{proposition:fmnreflected} and we insert there $\cme{m,n}{A}$ as the boundary value of $F'_{m+n}$ given by \eqref{eq:fkpboundary}. We find
\begin{equation}
F'_{m+n}(\thetav -i\pmb{\pi},\etav)
=\sum_{C\in\ccal_{m,n}}(-1)^{|C|}\delta_{C}
 S_{C} R_C(\thetav,\etav) F'_{m+n-2|C|}(\hat \etav, \hat\thetav + i \piv).\label{recursione5}
\end{equation}
This is exactly \ref{it:fprecursion}.
\end{proof}

The proof of Theorem~\ref{theorem:AtoFp} is a consequence of Propositions~\ref{proposition:analpositivesimplex}, \ref{proposition:ffjcont}, \ref{proposition:fpbounds} and \ref{proposition:fprecursion}.

\chapter{(F') \texorpdfstring{$\Rightarrow$}{=>} (F)}\label{sec:fptof}

In the following Chapter we show that we can extend the CR distributions $F_k'$ to meromorphic functions $F_k$ which fulfil the properties (F). More precisely, we want to prove the following Theorem:

\begin{theorem} \label{theorem:FptoF}
If $F_k'$ are distributions fulfilling (F'), then there exist functions $F_k$ fulfilling (F) such that their distributional boundary values are
\begin{equation}\label{eq:ffpboundary}
\begin{aligned}
F_{m+n}(\thetav+i\zerov,\etav+i\piv-i\zerov) &= F_{m+n}'(\thetav,\etav+i\piv), \\
F_{m+n}(\thetav-i\piv+i\zerov,\etav-i\zerov) &= F_{m+n}'(\thetav-i\piv,\etav),
\end{aligned}
\end{equation}
where $\thetav\in\rbb^m$, $\etav\in \rbb^n$.
\end{theorem}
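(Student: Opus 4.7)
The plan is to build $F_k$ in stages: first promote the CR distribution $F_k'$ on the staircase tube $\tube(\gcal_0^k)$ to a function analytic on the open convex hull of the staircase in imaginary space, then use the $S$-symmetry (F3') to sweep out further ordering regions, and finally combine the periodicity (F2') with the recursion (F4') to extend meromorphically to all of $\cbb^k$ with the prescribed pole structure. The node bounds (F5) follow immediately from (F5'), while (F6) comes from (F6') through Cauchy-type estimates on the interior of $\tube(\ich\gcal_\pm^k)$.

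\textbf{From staircase to simplex.} Fix $k$ and first restrict to $\tube(\gcal_+^k)$. Condition (F1') only says $F_k'$ is analytic in one complex variable at a time along each edge. I would use a tube-over-staircase variant of the edge-of-the-wedge theorem for distributions (essentially the boundary-value lemma developed in Appendix~\ref{sec:bvlemma}) to promote this to a function $F_k^{(+)}$ analytic on $\rbb^k + i\ich\gcal_+^k$, whose distributional boundary values at the staircase agree with $F_k'$. The inputs are the CR property (F1'), the node bounds (F5'), and the edge bounds (F6'); the output is joint analyticity in all variables. Repeating the argument on $\gcal_-^k$ produces $F_k^{(-)}$ on $\rbb^k + i\ich\gcal_-^k$, and since $\gcal_0^k = \gcal_+^k\cup\gcal_-^k$ they agree at the common node $\lambdav^{(k,0)}=0$ by construction of $F_k'$, so they glue together. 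The $S$-symmetry (F3') then lets me analytically extend across each hyperplane $\im\zeta_j=\im\zeta_{j+1}$ with factor $S(\zeta_{j+1}-\zeta_j)$; since $S$ is analytic on $\strip(0,\pi)$, iterating over all adjacent transpositions covers every ordering of imaginary parts of width at most $\pi$, which is exactly the region in (F1).

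\textbf{Periodic and meromorphic extension.} Next I would use (F2') to identify boundary values at $\lambdav^{(k,-k)}$ and $\lambdav^{(k,k)}$, together with the already established $S$-symmetry to obtain (F3). The key point in the extension across each hyperplane $\zeta_n - \zeta_m = i\pi$ is that this is where two different orderings of imaginary parts collide, and the recursion (F4') expresses the boundary value at $\lambdav^{(k,-m)}$ as a sum over contractions: the $|C|=0$ term matches the value obtained from the other side (by (F2')), while each $|C|=1$ contraction $\{(m,n)\}$ contributes a $\delta$-distribution supported at $\theta_m = \eta_{n-m}$, which upon analytic continuation in the remaining variables is precisely the trace of a first-order pole at $\zeta_n-\zeta_m=i\pi$ with residue
\begin{equation*}
  -\frac{1}{2\pi i}\Big(\prod_{j=m}^{n} S_{j,m}\Big)\Big(1-\prod_{p=1}^{k}S_{m,p}\Big) F_{k-2}(\hat\zetav),
\end{equation*}
matching (F4). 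The $|C|\geq 2$ terms correspond to iterated residues and are automatic from the structure. Piecing this together yields a globally meromorphic $F_k$ on $\cbb^k$; uniqueness follows because the extension on the simplex is determined by $F_k'$ and the global extension is rigidly dictated by (F3) and (F4).

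\textbf{Bounds and main obstacle.} Condition (F5) at each node $\lambdav^{(k,j+k\ell)}$ reduces, via the periodicity built into the construction, to (F5'). For (F6), the edge estimate (F6') controls the $\gnorm{\cdot}{\times}$-norm of $e^{\pm i\mu r\sum\sinh\zeta_j - \sum \oa(\pm\sinh\zeta_j)}F_k'$ on each edge of $\gcal_\pm^k$; convolving with a product Cauchy kernel along the edges transfers this to a pointwise bound on the open tube over $\ich\gcal_\pm^k$, and produces the $\operatorname{dist}(\im\zetav,\partial\ich\gcal_\pm)^{-k/2}$ factor of (F6). Undoing the exponential weight and using $\re\oa(z)\leq a_\omega\omega(|z|)+b_\omega$ from~\ref{it:omegaestimate} gives the claimed exponential envelope. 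The boundary value identities \eqref{eq:ffpboundary} hold by construction at the nodes $\lambdav^{(k,0)}$ and $\lambdav^{(k,k)}-i\piv$. The hard step is the first: the tube-over-staircase edge-of-the-wedge statement with only distributional (not functional) boundary values and the sub-exponential control provided by (F6'), which requires the boundary-value lemma of Appendix~\ref{sec:bvlemma} together with some care in matching up boundary directions where several edges meet at a node. Once that step is in place, the rest of the construction is essentially bookkeeping with $S$-factors and a sum over contractions.
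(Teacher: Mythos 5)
Your sketch catches the right intuitions — recursion-deltas become poles, $S$-symmetry spreads analyticity across imaginary-part orderings, (F5') and (F6') supply the bounds — but the core technical mechanism that converts those intuitions into a proof is missing, and one reference is wrong.

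First, the reference: the tool for promoting a CR distribution on $\tube(\gcal)$ to an analytic function on $\tube(\ich\gcal)$ is Lemma~\ref{lem:graphtube} in Appendix~\ref{sec:graphs} (Kazlow-type graph-tube theorem), not Appendix~\ref{sec:bvlemma}, which concerns residues and jump formulas for functions already known to be meromorphic. Second, and more substantially, you apply the graph-tube extension directly to $\gcal_0^k$ and then hope that $S$-symmetry plus periodicity plus the recursion relations will produce the global meromorphic function. That leaves the key step unjustified: how do the delta distributions appearing in (F4') actually become traces of first-order poles? You assert this happens ``upon analytic continuation'' and that the $|C|\ge 2$ terms are ``automatic from the structure,'' but this is precisely where a precise mechanism is needed. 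The paper's route is different and avoids this hand-waving: it first extends $F_k'$ to the periodic ladder of staircases $\gcal_1^k$ and its shifts $\gcal_{1,m}^k$, producing a distribution defined on the larger graph $\gcal_2^k$ whose boundary values at shared nodes are discontinuous exactly by the contraction terms of (F4'). It then multiplies by the rational factor $\prod_{j>j'}\frac{\zeta_j-\zeta_{j'}-i\pi}{\zeta_j-\zeta_{j'}+i\pi}$, whose zeros cancel those delta jumps, giving a genuine CR distribution $G_k$ on $\tube(\gcal_2^k)$; Lemma~\ref{lem:graphtube} then applies, and dividing off the rational factor yields $F_k$ with manifest first-order poles at $\zeta_j-\zeta_{j'}=i\pi$. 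Without this construction (or an equivalent inverse-jump-formula argument), you have an analytic function on $\ich\gcal_0^k$ and its $S$-permuted and periodic translates, but no proof that they patch together into a single meromorphic function with the prescribed residues on the walls in between. Your argument for (F5) and (F6) is essentially the paper's (maximum modulus on the graph for $\|\cdot\|_{\times}$, then pointwise bounds via a mean-value/Poisson estimate, cf.~Lemma~\ref{lemma:maxmodcross} and Prop.~\ref{proposition:pointwise}), so that part is fine modulo details.
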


\section{Continuation along graphs} \label{sec:fkalonggraph}

We consider a family of CR distributions $F_{k}'$ on $\tube(\gcal^k_{0})$ which fulfils the conditions (F'). Using the symmetry relations in conditions (F'), we want to extend the functions $F_{k}'$ to certain larger graphs.

\subsection{Continue to one stair} \label{sec:fkonestair}

The first graph that we consider is $\gcal_{1}^{k}:=\gcal_{0}^{k}+2 \piv \mathbb{Z}$. This graph has the edges and the nodes of $\gcal^k_0$ translated simultaneously in all coordinates by integer multiples of $2\pi$, cf.~Fig.~\ref{fig:onestair}.

\begin{figure}
\begin{center}
\input{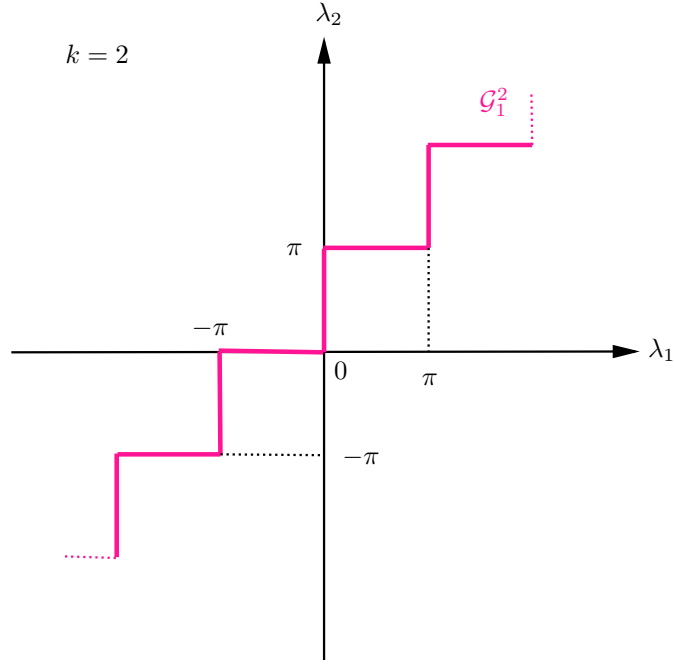}
\caption[The stair $\gcal_{1}^{2}$]{The stair $\gcal_{1}^{2}$, which arises from $\gcal_{0}^{2}$ by $2\pi$ periodic continuation.} \label{fig:onestair}
\end{center}
\end{figure}

We continue $F_{k}'$ to $\tube(\gcal_1^k)$ by defining for $n\in \mathbb{Z}$ and $\zetav\in \tube(\gcal_{0}^{k})$,
\begin{equation}\label{defonestair}
F_{k}'(\zetav+2 i n \piv):=F_{k}'(\zetav).
\end{equation}
This is indeed a CR distribution on the graph, since we can show, using \ref{it:fpperiod}, that the boundary values $F_{k}(\thetav+i\lambdav^{(k,k+2nk)}+i0\ev^{(k)})$ and $F_{k}(\thetav+i\lambdav^{(k,k+2nk)}-i0\ev^{(1)})$ agree for all $n \in \mathbb{Z}$ at real $\thetav$.

This can be see from the following direct computation:
\begin{eqnarray}
F_{k}(\thetav+i\lambdav^{(k,k+2nk)}+i0\ev^{(k)})&=&F_{k}(\thetav+i\lambdav^{(k,k+2nk)}+i0\ev^{(k)}-2in'\piv )\\
&=& F_{k}(\thetav+i\lambdav^{(k,-k)}+i0\ev^{(k)})\nonumber\\
&=&F_{k}(\thetav+i\lambdav^{(k,k)}-i0\ev^{(1)})\nonumber\\
&=&F_{k}(\thetav+i\lambdav^{(k,k+2nk)}-i0\ev^{(1)}),\nonumber
\end{eqnarray}
where in the first and in the last equality we made use of the definition \eqref{defonestair}, where in the second equality we put $n'=n+1$ since $(\lambdav^{(k,k+2nk)}+i0\ev^{(k)}-2in'\pi)$ must be a point in the graph $\gcal_{0}^{k}$, and where in the third equality we made use of the condition \ref{it:fpperiod}.

We notice that the graphs $\gcal_0^k$ and $\gcal_1^k$ can also be written in the following way:
\begin{align}
   \label{setg0}
   \gcal_0^k &= \{  \text{edges} : -\pi \leq \lambda_1 \leq \ldots \leq \lambda_k \leq \lambda_1 +  \pi\leq 2\pi \},
 \\ \label{setg1}
   \gcal_1^k &= \{  \text{edges} : \lambda_1 \leq \ldots \leq \lambda_k \leq \lambda_1 +  \pi \}.
\end{align}
We will use the notation $\{\text{edges} : C(\lambdav) \}$ to indicate the graph given by all the next-neighbour edges on the lattice $\pi\zbb^k$ such that
the condition $C(\lambdav)$ is true for all $\lambdav$ on the edge; the nodes of the graph are the endpoints of these edges.

We can prove the equality in \eqref{setg0} as follows:
Clearly $\gcal^{k}_{0}$ is a subset of \eqref{setg0}.
Since any edge in the set \eqref{setg0} fulfils $\lambda_{m}\in [l\pi, (l+1)\pi]$, with $l=-1,0$, then it is either an edge in $\gcal_{+}^{k} = \{  \text{edges} \,|\, 0\leq  \lambda_1 \leq \ldots \leq \lambda_m \leq \ldots \leq \lambda_k \leq \pi \}$ or in $\gcal_{-}^{k} = \{  \text{edges} \,|\, -\pi\leq  \lambda_1 \leq \ldots \leq \lambda_m \leq \ldots \leq \lambda_k \leq 0 \}$. Since $\gcal_{0}^{k}=\gcal_{+}^{k}\cup \gcal_{-}^{k}$, it is then an edge of $\gcal_{0}^{k}$.

We can prove the equality in \eqref{setg1} as follows:
Since any edge in $\gcal_{0}^{k}$ fulfils the condition $-\pi\leq \lambda_{1}\leq \lambda_{2}\leq \ldots \leq \lambda_{k}\leq \lambda_{1}+\pi\leq 2\pi$, using the definition of $\gcal_{1}^{k}$ we have that a generic edge in $\gcal_{1}^{k}$ is determined by the condition $-\pi+2n\pi\leq \lambda_{1}\leq \ldots \leq \lambda_{k}\leq \lambda_{1}+\pi\leq 2\pi+2n\pi$ for some $n \in \mathbb{Z}$. Since $n$ was arbitrary, this is equivalent to the condition that $\lambda_{1}\leq \ldots \leq \lambda_{k}\leq \lambda_{1}+\pi$.

\subsection{Continue to all stairs} \label{sec:fkallstairs}

The next graph that we consider is for $0 \leq m \leq k-1$ the graph $\gcal^{k}_{1,m}:=\gcal^{k}_{1}+\lambdav^{(k,-m)}=\gcal^{k}_{1}+(-\pi,\ldots,-\pi,0,\ldots,0)$, with $m$ entries of $-\pi$, cf.~Fig.~\ref{fig:multistairs} and Fig.~\ref{fig:multistairs3}; notice that $\gcal^k_{1,0}=\gcal^k_{1}$.

\begin{figure}
\begin{center}
\input{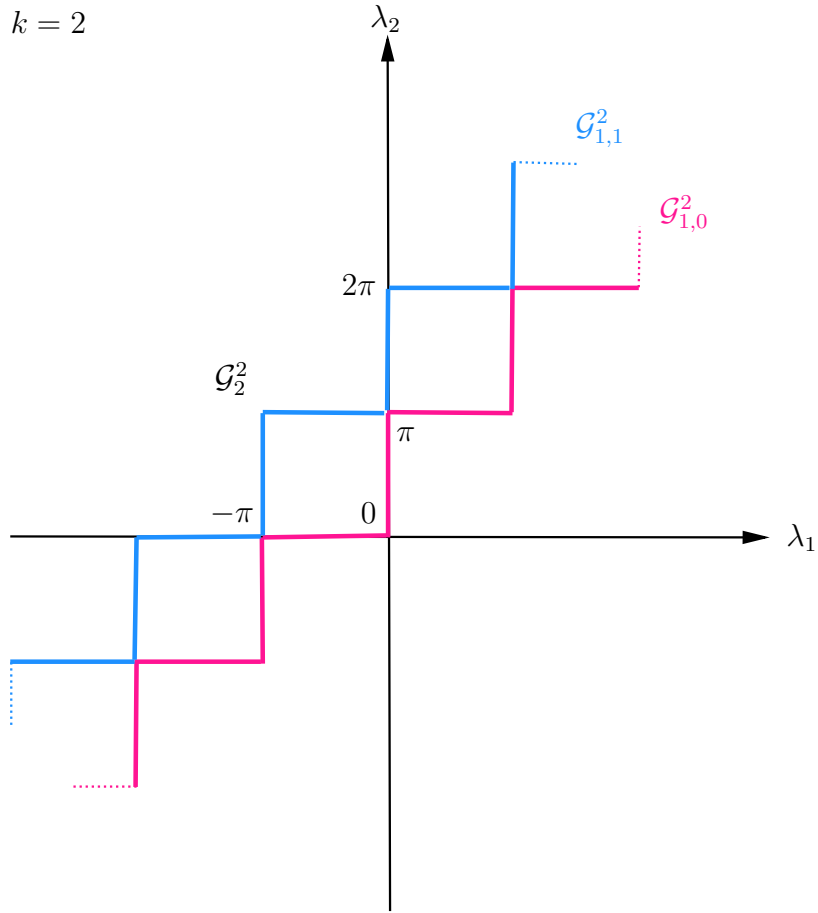}
\caption[The stairs $\gcal_{1,m}^{2}$]{The stairs $\gcal_{1,m}^{2}$, where the stair $\gcal_{1,1}^{2}$ arises from the stair $\gcal_{1,0}^{2}$ by shifting by $-\pi$ in $\lambda_1$ direction.} \label{fig:multistairs}
\end{center}
\end{figure}

\begin{figure}
\begin{center}
\input{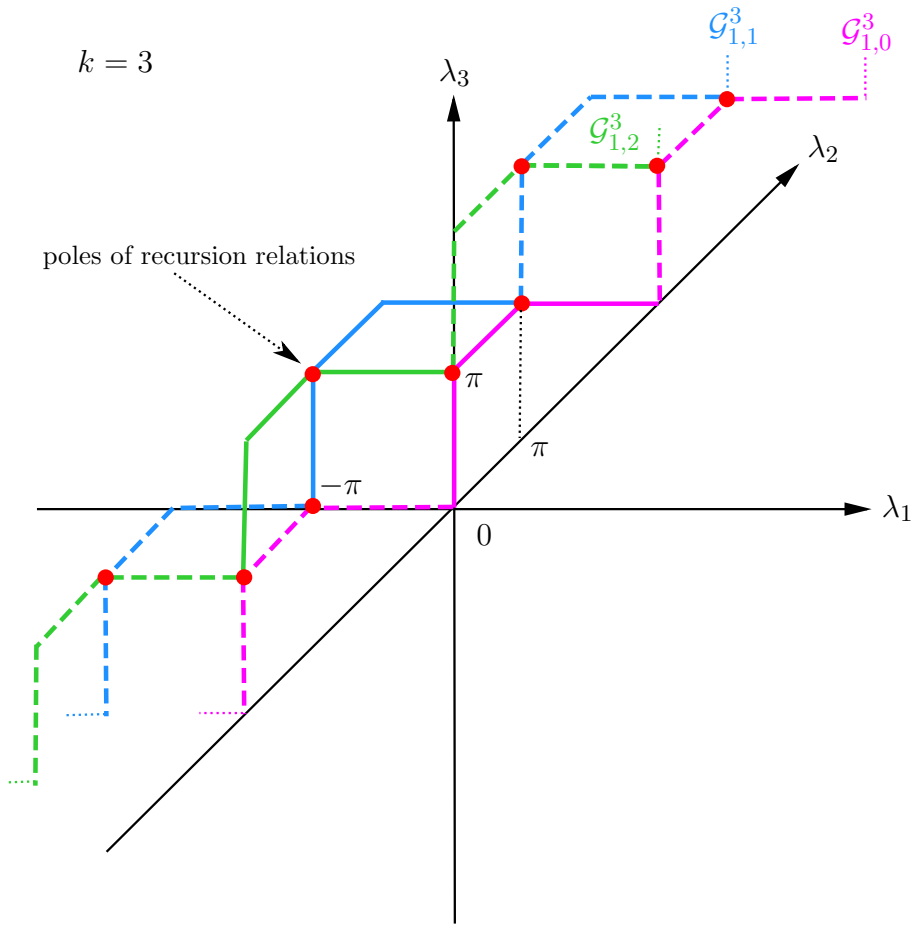}
\caption[The stairs $\gcal_{1,m}^{3}$]{The stairs $\gcal_{1,m}^{3}$, where the stair $\gcal_{1,1}^{3}$ arises from the stair $\gcal_{1,0}^{3}$ by shifting by $-\pi$ in $\lambda_1$ direction and the stair $\gcal_{1,2}^{3}$ arises from $\gcal_{1,0}^{3}$ by shifting by $-\pi$ in $\lambda_2$ direction.} \label{fig:multistairs3}
\end{center}
\end{figure}

We define $F_{k}'$ on $\tube(\gcal^{k}_{1,m})$ by
\begin{equation}\label{stairs}
F_{k}'(\zetav):=F_{k}'(\zeta_{m+1},\ldots,\zeta_{k},\zeta_{1}+2i\pi,\ldots,\zeta_{m}+2i\pi),
\end{equation}
with $\pmb{\zeta}\in \tube(\gcal^{k}_{1,m})$.

We note that this definition is consistent with the domain $\tube(\gcal^{k}_{1})$ of definition of $F_k$ in the following sense: We can show that for $\zetav\in \tube(\gcal^{k}_{1,m})$, the argument $\hat{\zetav}:=(\zeta_{m+1},\ldots,\zeta_{k},\zeta_{1}+2i\pi,\ldots,\zeta_{m}+2i\pi)$ of the function $F_k$ on the r.h.s.~is in $\tube(\gcal^{k}_{1})$.

Indeed, consider a point $\zetav\in \mathcal{T}(\gcal^{k}_{1,m})$,  $\zetav=\check{\zetav}+i(\underbrace{-\pi,\ldots,-\pi}_{m},0,\ldots,0)$, with $\check{\zetav}\in \mathcal{T}(\gcal^{k}_{1})$.

We have $\hat{\zetav}=(\check{\zeta}_{m+1},\ldots,\check{\zeta}_{k},\check{\zeta}_{1}+2i\pi-i\pi,\ldots,\check{\zeta}_{m}+2i\pi-i\pi)= (\check{\zeta}_{m+1},\ldots,\check{\zeta}_{k},\check{\zeta}_{1}+i\pi,\ldots,\check{\zeta}_{m}+i\pi)$.

Since $\im \check{\zetav}\in \gcal_{1}^{k}=\left\{ \text{edges} \,|\, \im\zeta_{1}\leq \ldots \leq \im\zeta_{k}\leq \im\zeta_{1}+\pi\right\}$ (see Eq.~\eqref{setg1}), we have that the following sets of inequalities hold: $\im\check{\zeta}_{m+1}\leq \ldots \leq \im\check{\zeta}_{k}$, $\im\check{\zeta}_{1}+\pi\leq \ldots \leq \im\check{\zeta}_{m}+\pi$, $\im \check{\zeta}_{k}\leq \im \check{\zeta}_{1}+\pi$ and $\im\check{\zeta}_{m}+\pi\leq \im\check{\zeta}_{m+1}+\pi$ since $\im \check{\zeta}_{m}<\im \check{\zeta}_{m+1}$.

Hence, $\hat{\zetav}$ is a point in $\mathcal{T}(\gcal^{k}_{1})$.

Definition \eqref{stairs} yields $F_{k}'$ as a distribution on the tube over the graph
\begin{equation}\label{eq:g2def}
 \gcal^{k}_2 := \bigcup_{0 \leq m \leq k-1} \gcal^{k}_{1,m} =
\{  \text{edges} : \lambda_1 \leq \ldots \leq \lambda_k \leq \lambda_1 + 2 \pi \}.
\end{equation}
To prove the second equality in \eqref{eq:g2def}, consider a generic edge in the set $\{ \text{edges} : \lambda_1 \leq \lambda_2 \leq \ldots \leq \lambda_k \leq \lambda_1 + 2 \pi \}$:
\begin{equation}
\lambdav=(0,\ldots,0,\rho,\pi,\ldots,\pi,\underbrace{2\pi,\ldots,2\pi}_{j})+n\piv ,\quad n\in \mathbb{Z}, \quad \rho \in [0,\pi],\label{firstlambda}
\end{equation}
or
\begin{equation}
\lambdav = (\underbrace{0,\ldots,0}_{j},\pi,\ldots,\pi,\rho,2\pi,\ldots,2\pi)+n\piv ,\quad n\in \mathbb{Z},\quad \rho=[\pi,2\pi].\label{secondlambda}
\end{equation}
In the case of \eqref{firstlambda}, we compute
\begin{equation}
\lambdav+(\pi,\ldots,\pi,\underbrace{0,\ldots,0}_{j})=(\pi,\ldots,\pi,\rho+\pi,2\pi,\ldots,2\pi)+n\piv  \in \gcal^{1}_{k}.
\end{equation}
We find $\lambdav\in \gcal^{k}_{1,k-j}$ (because $\lambdav\in \gcal_{1}^{k}+(-\pi,\ldots,-\pi,\underbrace{0,\ldots,0}_{j})$).

In the case of \eqref{secondlambda}, we compute
\begin{equation}
\lambdav+(\underbrace{\pi,\ldots,\pi}_{j},0,\ldots,0)=(\pi,\ldots,\pi,\rho,2\pi,\ldots,2\pi)+n\piv  \in \gcal^{k}_{1}.
\end{equation}
We find $\lambdav\in \gcal^{k}_{1,j}$ (because $\lambdav\in \gcal_{1}^{k}+(\underbrace{-\pi,\ldots,-\pi}_{j},0,\ldots,0)$).

This concludes the proof of the second equality in \eqref{eq:g2def}.

\subsection{Difference of boundary values} \label{sec:fkdifference}

It is important to note that $F_k'$ are CR distributions on all $\tube(\gcal^{k}_{1,m})$, but they are \emph{not} CR distributions on $\tube(\gcal^k_2)$. In other words, we have that on the nodes that two graphs $\gcal^{k}_{1,m}$ and $\gcal^{k}_{1,m'}$ ($m > m')$ have in common, the boundary values of $F_k'$ from different edges do not need to agree. We can show that these common nodes are given by
\begin{equation}\label{eq:stairsmeet}
\underbrace{(\overbrace{-\pi \ldots -\pi}^{m'}, \overbrace{0 \ldots 0}^{m-m'}, \overbrace{\pi \ldots \pi}^{k-m})}_{=:\lambdav^{m \cap m'}}+\ell\piv , \quad \ell\in \mathbb{Z}.
\end{equation}

Indeed, let $\zetav$ be a point in $\gcal^{k}_{m}$ and $\gcal^{k}_{m'}$, with $m'<m$ and let $\lambdav:=\im\zetav$.

Since $\lambdav\in \gcal^{k}_{m}$, we have $\lambdav+(\underbrace{\pi,\ldots,\pi}_{m},0,\ldots,0)\in \gcal_{1}^{k}$, namely $(\lambda_{1}+\pi,\ldots,\lambda_{m}+\pi,\lambda_{m+1},\ldots,\lambda_{k})\in \gcal^{k}_{1}$.

This implies
\begin{equation}
\lambda_{1}+\pi\leq \lambda_{m'}+\pi\leq \lambda_{m}+\pi\leq \lambda_{m+1}\leq \ldots \leq \lambda_{k}\leq \lambda_{1}+2\pi,
\end{equation}
which gives in particular the condition
\begin{equation}
\lambda_{m'+1}\leq \ldots \leq \lambda_{m}\leq \lambda_{m+1}-\pi\leq \ldots \leq \lambda_{k}-\pi\leq \lambda_{1}+\pi.\label{condone}
\end{equation}
On the other hand, since $\lambdav\in \gcal^{k}_{m'}$, we have also $\lambdav+(\underbrace{\pi,\ldots,\pi}_{m'},0,\ldots,0)\in \gcal_{1}^{k}$, namely $(\lambda_{1}+\pi,\ldots,\lambda_{m'}+\pi,\lambda_{m'+1},\ldots,\lambda_{k})\in \gcal^{k}_{1}$.

This implies
\begin{equation}
\lambda_{1}+\pi\leq\ldots \leq \lambda_{m'}+\pi\leq \lambda_{m'+1}\leq \ldots \leq \lambda_{k}\leq \lambda_{1}+2\pi,
\end{equation}
which gives in particular
\begin{equation}
\lambda_{1}+\pi\leq \ldots \leq \lambda_{m'}+\pi\leq \lambda_{m'+1}\leq \ldots \leq \lambda_{m}.\label{condtwo}
\end{equation}
From the conditions \eqref{condone} and \eqref{condtwo}, we find
\begin{equation}
\lambda_{1}+\pi\leq \ldots \leq \lambda_{m'}+\pi\leq \lambda_{m'+1}\leq \ldots \leq \lambda_{m}\leq \lambda_{m+1}-\pi \leq \ldots \leq \lambda_{k}-\pi \leq \lambda_{1}+\pi.
\end{equation}
This gives
\begin{equation}
\lambda_{1}+\pi= \ldots = \lambda_{m'}+\pi= \lambda_{m'+1}= \ldots = \lambda_{m}= \lambda_{m+1}-\pi = \ldots = \lambda_{k}-\pi =\lambda_{1}+\pi.
\end{equation}
Hence, we have
\begin{equation}
\lambdav=(\underbrace{\lambda_{1} \ldots \lambda_{1}}_{m'},\underbrace{\lambda_{1}+\pi \ldots \lambda_{1}+\pi}_{m-m'},\underbrace{\lambda_{1}+2\pi \ldots \lambda_{1}+2\pi}_{k-m})=(0\ldots 0,\pi \ldots \pi,2\pi \ldots 2\pi)+\underbrace{\lambda_{1}\pmb{1}}_{\in \piv \mathbb{Z}}.
\end{equation}
Therefore, the points in common between the two stairs $\gcal^{k}_{1,m}$ and $\gcal^{k}_{1,m'}$, $m'<m$, are given by
\begin{equation}
\lambdav=(\underbrace{-\pi \ldots -\pi}_{m'}, \underbrace{0 \ldots 0}_{m-m'}, \underbrace{\pi \ldots \pi}_{k-m})+\ell\piv , \quad \ell\in \mathbb{Z}.
\end{equation}
Now, we compute the difference of the boundary values of $F_k$ at the point $\zetav=\thetav+i\lambdav^{m \cap m'}$, with $\ell=0$.

On $\tube(\gcal_{1,m}^{k})$ (i.e. $\zetav \in \tube(\gcal_{1,m}^{k})$), we have
\begin{multline}\label{eq:fgm}
F_{k}'(\zetav)\big\vert_{\gcal^{k}_{1,m}}=F_{k}'(\zeta_{m+1},\ldots,\zeta_{k},\zeta_{1}+2i\pi,\ldots,\zeta_{m}+2i\pi)\\
=F_{k}'(\theta_{m+1}+i\pi,\ldots,\theta_{k}+i\pi,\theta_{1}+2i\pi-i\pi,\ldots,\theta_{m'}+2i\pi-i\pi,\theta_{m'+1}+2i\pi,\ldots,\theta_{m}+2i\pi)\\
=F_{k}'(\theta_{m+1}-i\pi,\ldots,\theta_{k}-i\pi,\theta_{1}-i\pi,\ldots,\theta_{m'}-i\pi,\theta_{m'+1},\ldots,\theta_{m}),
\end{multline}
where in the first equality we made use of definition \eqref{stairs}, where in the second equality we used Eq.~\eqref{eq:stairsmeet} with $\ell =0$ and in the third equality we made use of condition \ref{it:fpperiod}. Note that in the last line of \eqref{eq:fgm} the function $F_{k}$ is evaluated at the node $\pmb{\lambda}^{-(m'+k-m)}$ of $\gcal_{0}^{k}$.

Analogously, we find for $\pmb{\zeta}\in \tube(\gcal_{1,m'}^{k})$:
\begin{multline}\label{eq:fgmp}
F_{k}'(\zetav)\big\vert_{\gcal^{k}_{1,m'}}
=F_{k}'(\zeta_{m'+1},\ldots,\zeta_{k},\zeta_{1}+2i\pi,\ldots,\zeta_{m'}+2i\pi)\\
=F_{k}'(\theta_{m'+1},\ldots,\theta_{m},\theta_{m+1}+i\pi,\ldots,\theta_{k}+i\pi,\theta_{1}+2i\pi-i\pi,\ldots,\theta_{m'}+2i\pi-i\pi)\\
=F_{k}'(\theta_{m'+1},\ldots,\theta_{m},\theta_{m+1}+i\pi,\ldots,\theta_{k}+i\pi,\theta_{1}+i\pi,\ldots,\theta_{m'}+i\pi),
\end{multline}
where in the first equality we made use of definition \eqref{stairs} and in the second equality we used Eq.~\eqref{eq:stairsmeet} with $\ell=0$. Note that in last line of \eqref{eq:fgmp} the function $F_{k}$ is evaluated at the node $\pmb{\lambda}^{(k-m+m')}$ of $\gcal_{0}^{k}$.

We can compute the difference of the boundary values \eqref{eq:fgm} and \eqref{eq:fgmp} using condition \ref{it:fprecursion}; this difference is in general non-zero and also quite complicated to write down. Below we will simplify it by multiplying the functions $F_k'$ with certain linear factors (see Prop.~\ref{proposition:extendinterior}).

We did the computation above considering the point \eqref{eq:stairsmeet} with $\ell=0$; we can compute the case $\ell=1$ in a similar way, and then we can obtain the corresponding result for general $\ell$ by periodicity.

\section{Extend to the interior} \label{sec:fkinterior}

Now we will use the results of the previous section in order to construct the meromorphic functions $F_k$ (fulfilling the properties (F)); first we construct it on the tube over the open set:
\begin{equation}\label{eq:ich2}
\ical_2^k := \ich \gcal_2^k =\{ \lambdav\in\rbb^k : \lambda_{1}<\ldots < \lambda_{k}< \lambda_{1}+2\pi \}.
\end{equation}
We can prove the second equality in \eqref{eq:ich2} following the argument schematized in the four points below:
\begin{enumerate}
\item $\ich{\gcal_{2}^{k}}$ is translation invariant (by $k\piv $ obvious, by other translation due to convex combination: if $\lambdav\in \ich{\gcal_{2}^{k}}$, then $\lambdav+k\piv \in \ich{\gcal_{2}^{k}}$, therefore for every \\ $\mu \in [0,1)$: $\underbrace{\mu(\lambdav+k\piv )+(1-\mu)\lambdav}_{\lambdav+\mu k \piv }\in \ich{\gcal_{2}^{k}}$).

\item $ \ich{\gcal_{++}^{k}}=\ich(\{ \text{edges} \;|\; 0\leq \lambda_{1}\leq \ldots \leq \lambda_{k}\leq 2\pi \})=\{ \lambdav \; |\; 0< \lambda_{1}< \ldots <\lambda_k < 2\pi (< 2\pi +\lambda_1)\}$, cf.~\cite[Corollary 5.2.6]{Lechner:2006}.

\item Due to item 1, we have for all $c\in \rbb$ that $\ich{\gcal_{2}^{k}}\supset  \ich{\gcal_{++}^{k}}+c$. This implies $\bigcup_{c} (\ich{\gcal_{++}^{k}}+c)\subset \ich{\gcal_{2}^{k}}$. We call $M_c := \ich{\gcal_{2}^{k}} + c $. This set can alternatively be written as $M_c = \{ \lambdav \, | \, 0 < \lambda_1-c < \ldots < \lambda_k -c < 2\pi  \}$. Set $M := \{ \lambdav \, | \, \lambda_1 < \ldots < \lambda_k < \lambda_1 +2\pi \}$. We want to show that $M \subset \bigcup_{c} M_c$ (which implies $M \subset \ich{\gcal_{2}^{k}}$). We prove this as follows: Given $\lambdav \in M$, choose $c= \lambda_1 - \frac{\lambda_1 +2\pi -\lambda_k}{2}= \frac{\lambda_1}{2} + \frac{\lambda_k}{2} -\pi$. Then we show that $\lambdav \in M_c$: Indeed, we have $\lambda_1 -c = \frac{\lambda_1}{2}-\frac{\lambda_k}{2}+\frac{2\pi}{2}=\frac{1}{2}(\lambda_1 -\lambda_k +2\pi)>0$, $\lambda_k -c = \frac{\lambda_k}{2}-\frac{\lambda_1}{2}+\pi= \frac{1}{2}(\lambda_k - \lambda_1 + 2\pi)<2\pi$, and moreover we have that $\lambda_i < \lambda_j$ if and only if $\lambda_i - c < \lambda_j - c$. This concludes the proof that $M \subset \ich{\gcal_{2}^{k}}$.

\item On the other hand, $\overline{\gcal_{2}^{k}}= \{ \text{edges} \; |\; \lambda_{1}\leq  \ldots \leq \lambda_{k}\leq \lambda_{1}+2\pi \}\subset \overline{M}$. Since $M$ is convex, we have $\ich \gcal_{2}^{k}\subset M$. Hence, we have shown that $\ich{\gcal_{2}^{k}}= M$.
\end{enumerate}

Now, we prove the following proposition which shows that we can extend meromorphically the functions $F_k$ to $\tube(\ical_2^k)$.
\begin{proposition} \label{proposition:extendinterior}
 Let $F_k'$ be distributions fulfilling (F'). Then there exist meromorphic functions $F_k$ on $\tube(\ical_2^k)$ which have the boundary values \eqref{eq:ffpboundary}. They are analytic except for possible first-order poles at $\zeta_j-\zeta_m=i\pi$, $j>m$.
\end{proposition}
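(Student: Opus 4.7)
The plan is to extend the CR distributions $F_k'$, already defined on each individual stair $\gcal^k_{1,m}$ via Sec.~\ref{sec:fkallstairs}, into a single meromorphic function on $\tube(\ical^k_2)$. The obstruction is that on the common nodes $\thetav + i\lambdav^{m\cap m'} + \ell i \piv$ between different stairs $\gcal^k_{1,m}$ and $\gcal^k_{1,m'}$, the two boundary-value recipes \eqref{eq:fgm} and \eqref{eq:fgmp} disagree; and by the recursion \ref{it:fprecursion} the discrepancy is expressed through factors $R_C$ contracting certain pairs of variables, which are exactly the pairs whose rapidity difference equals $i\pi$ at these nodes. This suggests to clear the discrepancy by multiplying with linear factors vanishing on the offending hyperplanes.

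Concretely, I would introduce
\begin{equation}
G_k(\zetav) := \Big(\prod_{1 \leq m < n \leq k} (\zeta_n - \zeta_m - i\pi)\Big) F_k'(\zetav),
\end{equation}
defined initially on each $\tube(\gcal^k_{1,m})$ via the formula \eqref{stairs} pulled back through the linear prefactor. First I would check, using the $S$-symmetry \ref{it:fpsymm} and the $2\pi$-periodicity implemented in \eqref{defonestair}, that $G_k$ is a well-defined CR distribution on each stair separately. Then I would verify the \emph{matching condition} at every common node $\thetav + i\lambdav^{m\cap m'} + \ell i\piv$: using the recursion \ref{it:fprecursion} to rewrite the boundary value of $F_k'$ on one stair in terms of its value on the other plus a sum over non-trivial contractions $C$, each such correction carries a factor $R_C$. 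Since the contracted variables $(\theta_{l_j},\theta_{r_j})$ have imaginary parts differing by exactly $i\pi$ at the common node, the prefactor $\prod(\zeta_n-\zeta_m-i\pi)$ contains a zero for each such contracted pair; combined with the remaining linear factors that evaluate to finite constants on the node, the $R_C$ contributions are annihilated. This yields agreement of boundary values across stairs, so $G_k$ extends to a CR distribution on all of $\tube(\gcal_2^k)$.

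Once $G_k$ is a CR distribution on $\tube(\gcal_2^k)$, I apply a multi-variable edge-of-the-wedge / tube theorem (the boundary-value lemma of Appendix~\ref{sec:bvlemma}) to the open convex hull $\ich \gcal_2^k = \ical_2^k$: since $\ical_2^k$ is connected and $\gcal_2^k$ covers all its boundary directions (the edges of $\gcal_2^k$ are incident to $\ical_2^k$), $G_k$ extends uniquely to an analytic function on $\tube(\ical_2^k)$. Dividing by the product $\prod_{m<n}(\zeta_n-\zeta_m-i\pi)$ then defines
\begin{equation}
F_k(\zetav) := \frac{G_k(\zetav)}{\prod_{1\leq m<n\leq k}(\zeta_n-\zeta_m-i\pi)},
\end{equation}
which is meromorphic on $\tube(\ical_2^k)$ with at worst simple poles exactly on the hyperplanes $\zeta_n-\zeta_m = i\pi$, $n>m$. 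Taking boundary values from inside $\ical_2^k$ towards the relevant nodes of $\gcal^k_0$ (approached from the appropriate direction $\nuv^{(k,j)}$) recovers the distributions $F'_k$ as specified in \eqref{eq:ffpboundary}, since the linear prefactors do not vanish at those boundary nodes.

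The main obstacle is the matching check in the second step: it is essentially a bookkeeping exercise in which the combinatorial structure of the recursion \ref{it:fprecursion} (the $R_C$ factor built from the $S$'s) must be matched against the combinatorial structure of the vanishing of $\prod(\zeta_n-\zeta_m-i\pi)$ on the common nodes. One has to verify, contraction by contraction $C$, that the number and position of $i\pi$-resonances in the prefactor exactly compensate the non-vanishing pieces of $R_C$, so that the sum of all $|C|\geq 1$ correction terms in the difference of boundary values multiplies to zero. The remaining steps (checking that the boundary-value lemma applies to $\ical_2^k$, and that the resulting $F_k$ inherits the correct distributional boundary values at $\gcal^k_0$) are then direct.
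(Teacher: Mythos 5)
Your proposal follows the same strategy as the paper: clear the poles with a prefactor that vanishes on the hyperplanes $\zeta_n-\zeta_m=i\pi$, check that the difference of boundary values of $F_k'$ across the common nodes $\lambdav^{m\cap m'}+\ell\piv$ (coming from \ref{it:fprecursion}) is killed because the $\delta_C$-factors force equality of the contracted rapidities, apply the tube theorem to the resulting CR distribution on $\gcal_2^k$, and divide back. Two remarks.

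First, the paper multiplies by the bounded rational factor $\prod_{j>j'}\frac{\zeta_j-\zeta_{j'}-i\pi}{\zeta_j-\zeta_{j'}+i\pi}$ rather than your polynomial $\prod_{m<n}(\zeta_n-\zeta_m-i\pi)$. Both vanish on the same hyperplanes and cancel the same $\delta_C$-terms (your denominator $\zeta_j-\zeta_{j'}+i\pi$ would never vanish on $\gcal_2^k$ anyway, since $\lambda_j-\lambda_{j'}\geq 0$ there for $j>j'$), so the cancellation argument and the application of the tube theorem go through either way. The paper's choice keeps $G_k$ and $F_k'$ with comparable growth at real infinity; yours adds a polynomial factor, which is harmless for Lemma~\ref{lem:graphtube} but means $G_k$ no longer inherits the bounds of \ref{it:fpboundsimag} verbatim. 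It is a cosmetic difference.

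Second, and more substantively, your final sentence is not correct as stated: the linear prefactors \emph{do} vanish at the boundary nodes of $\gcal_0^k$. At $\lambdav^{(k,j)}$ with $0<j<k$, the pairs $(m,n)$ with $\lambda_m=0$, $\lambda_n=\pi$ give $\zeta_n-\zeta_m-i\pi = \theta_n-\theta_m$, which vanishes on the diagonal $\theta_n=\theta_m$. So you cannot deduce that $F_k=G_k/\prod(\cdot)$ has boundary value $F_k'$ simply because the divisor is nonzero. The repair is the one the paper gives: on the (open) edges of $\gcal_0^k$ both $F_k'$ and $F_k$ are CR distributions and $G_k=\prod(\zeta_n-\zeta_m-i\pi)\,F_k'$ holds there, so the multiplication-then-division is the identity on the edges, giving $F_k=F_k'$ on $\tube(\gcal_0^k)$ edge by edge; agreement at the nodes then follows by taking the distributional limits. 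A small citation slip: the extension lemma you want is Lemma~\ref{lem:graphtube} of Appendix~\ref{sec:graphs} (the tube theorem for CR distributions on graphs), not the residue calculus of Appendix~\ref{sec:bvlemma}.
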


\begin{proof}
Using the functions $F_k'$ defined in the previous section, we can define distributions $G_k$ on $\mathcal{T}(\gcal_{2}^{k})$ by
\begin{equation}\label{gdef}
G_k(\zetav) := F_k'(\zetav) \cdot \prod_{j > j'} \frac{ \zeta_{j}-\zeta_{j'} - i \pi}{\zeta_{j}-\zeta_{j'}+i\pi}.
\end{equation}
We can show that these functions $G_k$ are CR distributions on the graph $\gcal_{2}^{k}$. We already know that the functions $F_k'$ are CR distributions on all $\tube(\gcal^{k}_{1,m})$, but they are possibly not CR distributions on $\tube(\gcal^k_2)$.

We note that the rational factor $\prod_{j > j'} \frac{ \zeta_{j}-\zeta_{j'} - i \pi}{\zeta_{j}-\zeta_{j'}+i\pi}$ is bounded at real infinity and it has a pole at $\zeta_{j}-\zeta_{j'}+i\pi=0$, $(i>j)$.

\emph{Remark}: the pole of the rational factor will not effect our argument in Section \ref{sec:fkpermute} since we will never consider $G_k$ in the region $\{ \zetav \; |\; |\im \zeta_j - \im \zeta_{j'}|< 2\pi \}$. Indeed, by the time we will extend the function to there in Section \ref{sec:fkpermute}, we have gone back to $F_k$.

So, it remains to show that the boundary values of the functions $G_k$ agree at the nodes $\im \zetav=\lambdav^{m \cap m'}$, given by Eq.~\eqref{eq:stairsmeet}. For this, we consider the product $F_k'(\zetav) \cdot \prod_{j > j'} \frac{ \zeta_{j}-\zeta_{j'} - i \pi}{\zeta_{j}-\zeta_{j'}+i\pi}$ at the points $\zetav=\thetav +i \lambdav^{m \cap m'}$. Using \ref{it:fprecursion}, we can compute the difference of the boundary values \eqref{eq:fgm} and \eqref{eq:fgmp}. Inserting this into the product above, we find in particular the product $\delta_C \cdot \prod_{j > j'} \frac{ \zeta_{j}-\zeta_{j'} - i \pi}{\zeta_{j}-\zeta_{j'}+i\pi}$. In $\delta_C$, factors $\delta(\theta_\ell -\theta_r)$ of two types can occur: either $\ell \in \{m'+1,\ldots,m\}$ and $r \in \{m+1,\ldots, k\}$ or $\ell \in \{m'+1,\ldots,m\}$ and $r \in \{1,\ldots, m'\}$. On the other hand, the factor $\prod_{j > j'} \frac{ \zeta_{j}-\zeta_{j'} - i \pi}{\zeta_{j}-\zeta_{j'}+i\pi}$, evaluated at the nodes $\im \zetav=\lambdav^{m \cap m'}$, contains also terms where $j' \in \{m'+1,\ldots, m \}$ and $j \in \{ m+1,\ldots, k \}$ or where $j'\in \{ 1, \ldots, m'\}$ and $j\in \{m' +1, \ldots, m\}$. So, one sees that the support of these delta functions always coincide with the zero of the rational factor in \eqref{gdef}. It is indeed $\zeta_\ell -\zeta_r = \pm i\pi$ which cancel the delta functions, except for the term corresponding to the contraction $C=(m,n,\emptyset)$.

One follows a similar argument for the case $\ell = 1$. In this case, we have to show that the boundary values of the functions $G_k$ agree at the nodes $\im \zetav=\lambdav^{m \cap m'} +i\piv$. In place of \eqref{eq:fgm} and \eqref{eq:fgmp} we find, respectively,
\begin{eqnarray}
F_{k}'(\zetav)\big\vert_{\gcal^{k}_{1,m}} &=& F_{k}'(\theta_{m+1},\ldots,\theta_k,\theta_1,\ldots,\theta_{m'},\theta_{m'+1}+i\pi,\ldots,\theta_m +i\pi),\label{bovalell1}\\
F_{k}'(\zetav)\big\vert_{\gcal^{k}_{1,m'}} &=& F_{k}'(\theta_{m'+1}-i\pi,\ldots,\theta_m -i\pi, \theta_{m+1},\ldots, \theta_k,\theta_1,\ldots,\theta_{m'}).\label{bouvalell1}
\end{eqnarray}
As before, we consider the product $F_k'(\zetav) \cdot \prod_{j > j'} \frac{ \zeta_{j}-\zeta_{j'} - i \pi}{\zeta_{j}-\zeta_{j'}+i\pi}$, but at the points $\zetav=\thetav +i\lambdav^{m \cap m'}+i\piv$. We compute the difference of the boundary values \eqref{bovalell1} and \eqref{bouvalell1} using \ref{it:fprecursion}. Inserting this into the product above, we find in particular the product $\delta_C \cdot \prod_{j > j'} \frac{ \zeta_{j}-\zeta_{j'} - i \pi}{\zeta_{j}-\zeta_{j'}+i\pi}$. In $\delta_C$, we can have factors $\delta(\theta_\ell -\theta_r)$ where $\ell \in \{ m+1,\ldots, k\}$ and $r\in \{ m'+1, \ldots, m \}$ or where $\ell \in \{1,\ldots, m' \}$ and $r\in \{ m' +1, \ldots, m \}$. On the other hand, the factor $\prod_{j > j'} \frac{ \zeta_{j}-\zeta_{j'} - i \pi}{\zeta_{j}-\zeta_{j'}+i\pi}$, now evaluated at the nodes $\im \zetav=\lambdav^{m \cap m'} +i\piv$, contains also terms where $j' \in \{ m'+1,\ldots,m \}$ and $j \in \{m+1,\ldots, k \}$ or where $j' \in \{ 1, \ldots, m' \}$ and $j \in \{ m'+1,\ldots, m \}$. So, also in this case, one sees that the support of these delta functions always coincide with the zero of the rational factor in \eqref{gdef}. Therefore, the zero $\zeta_\ell -\zeta_r = \pm i\pi$ cancels the delta functions, except for the term corresponding to the contraction $C=(m,n,\emptyset)$.

One can then obtain the corresponding result for general $\ell$ by periodicity.

That gives:
\begin{equation}
G_{k}(\zetav)\big\vert_{\gcal^{k}_{1,m}}=G_{k}(\zetav)\big\vert_{\gcal^{k}_{1,m'}},
\end{equation}
where $G_{k}(\zetav)\big\vert_{\gcal^{k}_{1,m}}$ indicates that we approach the points \eqref{eq:stairsmeet} from the edges of $\gcal^{k}_{1,m}$. Hence, the $G_k$ are CR distributions on $\gcal_2^k$.

Now we can apply Lemma \ref{lem:graphtube}, which gives an extension of $G_k$ to an analytic function on $\tube(\ich \gcal_{2}^{k})$, with distributional boundary values on $\tube(\ach \gcal_{2}^{k})$. Then, we define $F_k$ as
\begin{equation}
F_k(\zetav) := G_k(\zetav) \cdot \prod_{j > j'} \frac{ \zeta_{j}-\zeta_{j'} + i \pi}{\zeta_{j}-\zeta_{j'}-i\pi},\label{def:FfromG}
\end{equation}
This function is clearly analytic in the same domain, but it has possible poles at $\zeta_{j}-\zeta_{j'}=i\pi$. Considering the limit in the sense of distributions of $F_k$ to the boundary of $\tube(\gcal^k_0)$, we find that it coincides with $F_k'$ by construction. This because the function $F_k$ is analytic on the edges of $\gcal^k_0$ and therefore there is no need of the rational factor in \eqref{gdef}. So, on the edges of $\gcal^k_0$ the functions $F_k$ and $F'_k$ agree, and therefore they agree on the nodes of the graph. This gives \eqref{eq:ffpboundary}.
\end{proof}

\section{Permuted stairs} \label{sec:fkpermute}

We can extend meromorphically the functions $F_{k}$ to a larger graph by using the property of $S$-symmetry of the functions $F_k'$ cf.~Fig.~\ref{fig:multistairsS}, as the following proposition shows.

\begin{proposition}\label{proposition:extendpermuted}
  The functions $F_k$ of Prop.~\ref{proposition:extendinterior} extend meromorphically to $\tube(\ical_3^k)$, where
\begin{equation}\label{eq:i3k}
\ical_3^k :=\{ \lambdav \in \rbb^k: |\lambda_{j}-\lambda_{j'}|< 2\pi \text{ for all $j,j'$}\}.
\end{equation}
They fulfil the S-symmetry condition \ref{it:fsymm}.
\end{proposition}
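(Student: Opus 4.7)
The plan is to use the $S$-symmetry formula on each ``ordered cell'' of $\ical_3^k$ to define the extension of $F_k$, and then glue the pieces across the walls separating cells. For $\sigma \in \perms{k}$, I would introduce the permuted cell $\ical_2^{k,\sigma} := \{\lambdav \in \rbb^k : \lambda_{\sigma(1)} < \lambda_{\sigma(2)} < \cdots < \lambda_{\sigma(k)} < \lambda_{\sigma(1)} + 2\pi\}$, with $\ical_2^{k,\mathrm{id}} = \ical_2^k$. For any $\lambdav \in \ical_3^k$, the constraint $|\lambda_j - \lambda_{j'}| < 2\pi$ is equivalent to $\max_j \lambda_j - \min_j \lambda_j < 2\pi$, so sorting the components gives a permutation $\sigma$ with $\lambdav \in \overline{\ical_2^{k,\sigma}}$. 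Thus the open cells $\ical_2^{k,\sigma}$ are pairwise disjoint and their closures cover $\ical_3^k$, with the walls between neighboring cells being affine hyperplanes of the form $\{\lambda_p = \lambda_q\}$.

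Next, on each cell I would put
\[
F_k^{(\sigma)}(\zetav) := S^\sigma(\zetav)\, F_k(\zetav^\sigma), \qquad \zetav \in \tube(\ical_2^{k,\sigma}),
\]
where $\zetav^\sigma := (\zeta_{\sigma(1)}, \ldots, \zeta_{\sigma(k)})$ lies in $\tube(\ical_2^k)$ so that $F_k(\zetav^\sigma)$ makes sense via Prop.~\ref{proposition:extendinterior}. Since $S$ is meromorphic on $\cbb$ by Def.~\ref{def:Smatrix}, each $F_k^{(\sigma)}$ is meromorphic on its cell, and the case $\sigma = \mathrm{id}$ reproduces $F_k$. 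Two adjacent cells $\tube(\ical_2^{k,\sigma})$ and $\tube(\ical_2^{k,\sigma\tau})$ with $\tau = (j,j+1)$ meet along the wall $\im \zeta_{\sigma(j)} = \im \zeta_{\sigma(j+1)}$. The composition law \eqref{eq:Scompose} gives $S^{\sigma\tau}(\zetav) = S^\sigma(\zetav) \cdot S(\zeta_{\sigma(j+1)} - \zeta_{\sigma(j)})$ and $\zetav^{\sigma\tau} = (\zetav^\sigma)^\tau$, so the agreement of $F_k^{(\sigma)}$ and $F_k^{(\sigma\tau)}$ at the wall reduces, after the substitution $\etav = \zetav^\sigma$, to the $S$-symmetry
\[
F_k(\etav) = S(\eta_{j+1} - \eta_j)\, F_k(\etav^\tau)
\]
for distributional boundary values of $F_k$ on the wall $\im \eta_j = \im \eta_{j+1}$ of $\partial \tube(\ical_2^k)$.

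To establish this wall-symmetry I would repeat the construction of Prop.~\ref{proposition:extendinterior}, this time starting from the $S$-symmetric partner distributions. Concretely, for each $\sigma$ define $F_k'^{(\sigma)}(\zetav) := S^\sigma(\zetav) F_k'(\zetav^\sigma)$ on the permuted graph $\gcal_0^{k,\sigma}$; the $S$-symmetry \ref{it:fpsymm} of $F_k'$ together with the composition law \eqref{eq:Scompose} guarantees that these distributions agree on overlaps $\gcal_0^{k,\sigma} \cap \gcal_0^{k,\sigma'}$, so they assemble into a single CR distribution on the union $\bigcup_\sigma \gcal_0^{k,\sigma}$. Applying the rational-factor correction as in \eqref{gdef}--\eqref{def:FfromG} and then Lemma~\ref{lem:graphtube} meromorphically extends this to $\tube(\ich \bigcup_\sigma \gcal_0^{k,\sigma})$, which contains $\tube(\ical_3^k)$; by uniqueness of meromorphic extension this agrees with $F_k$ on $\tube(\ical_2^k)$ and with $F_k^{(\sigma)}$ on the remaining cells, automatically verifying the wall-symmetry and hence providing a single meromorphic function on $\tube(\ical_3^k)$. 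Condition \ref{it:fsymm} is then immediate from the defining formula for $F_k^{(\sigma)}$ together with another application of the composition law.

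The main obstacle is precisely this wall-symmetry: condition \ref{it:fpsymm} only provides $S$-symmetry of $F_k'$ on the one-dimensional graph $\gcal_0^k$, whereas the walls of $\tube(\ical_2^k)$ where we need to glue are real $(2k-1)$-dimensional. Propagating the symmetry from the graph to the walls requires viewing the permuted distributions $F_k'^{(\sigma)}$ as a single CR distribution on the combined graph and invoking uniqueness of its meromorphic continuation, rather than a direct pointwise verification on the walls; an alternative phrasing would be an edge-of-the-wedge identification of distributional boundary values from the two opposite wedges $\tube(\ical_2^k)$ and $\tube(\ical_2^{k,\tau})$.
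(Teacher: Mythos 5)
Your setup (permuted cells and $F_k^{(\sigma)}(\zetav):=S^\sigma(\zetav)F_k(\zetav^\sigma)$) matches the paper's, and you correctly identify that a direct verification of the wall-symmetry cannot be extracted from \ref{it:fpsymm} alone. But the workaround you sketch has two concrete gaps. First, the domain: $\ical_3^k$ contains the full diagonal $\{\lambda_1=\cdots=\lambda_k\}$ and is therefore unbounded, whereas each $\gcal_0^{k,\sigma}\subset[-\pi,\pi]^k$; so $\ich\bigcup_\sigma\gcal_0^{k,\sigma}$ is bounded and cannot contain $\ical_3^k$. You would at least have to work with the periodically extended graphs $\gcal_2^{k,\sigma}$, multiplying the number of agreement conditions to check. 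Second, \ref{it:fpsymm} supplies the $S$-symmetry of $F_k'$ only at the node $\lambdav=0$, while permuted graphs $\gcal_0^{k,\sigma}$ and $\gcal_0^{k,\sigma'}$ share further nodes and, for $k\geq 3$, whole edges: for instance, with $\sigma=\mathrm{id}$ and $\sigma'=(1\,2)$, the node $(0,0,\pi)$ and the entire edge $\{(0,0,s):0<s<\pi\}$ are common. At those points the claimed "agreement on overlaps" is not given by (F') as stated and would need a separate argument (analytic continuation of the $\lambdav=0$ identity along shared edges, plus \ref{it:fpperiod} and \ref{it:fprecursion} to handle the remaining nodes $\pm\piv$). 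As written, the step "assemble into a single CR distribution on $\bigcup_\sigma\gcal_0^{k,\sigma}$" does not go through.

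The paper avoids both problems by choosing a different gluing set: it glues only at the real subspace $\rbb^k$, which lies in the boundary of every $\tube(\ical_{2,\sigma}^k)$ and is precisely where \ref{it:fpsymm} (iterated via the composition law \eqref{eq:Scompose}) forces the boundary values to coincide. Since $S$ has no poles on $\rbb$, there is a complex neighbourhood $\ncal$ of $\rbb^k$ on which all $S^\sigma$ are analytic; edge-of-the-wedge around each real point then gives analyticity on a smaller $\ncal'$, the set $\rcal=\ncal'\cup\bigcup_\sigma\tube(\ical_{2,\sigma}^k)$ is connected, and its envelope of holomorphy (equal to the envelope of meromorphy by the Jarnicki--Pflug theorem cited) is $\tube(\ical_3^k)$. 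That tube-type step is what supplies the continuation across the interior walls "for free", and \ref{it:fsymm} then holds on all of $\tube(\ical_3^k)$ simply because the identity $F_k(\zetav)=S^\sigma(\zetav)F_k(\zetav^\sigma)$ between meromorphic functions already holds on the open subset $\tube(\ical_{2,\sigma}^k)$. Your "alternative phrasing" at the end (edge-of-the-wedge) is the right mechanism, but it must be applied at $\rbb^k$ — the thin set on which (F') actually constrains boundary values — rather than at the $(2k-1)$-real-dimensional walls.
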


\begin{figure}
\begin{center}
\input{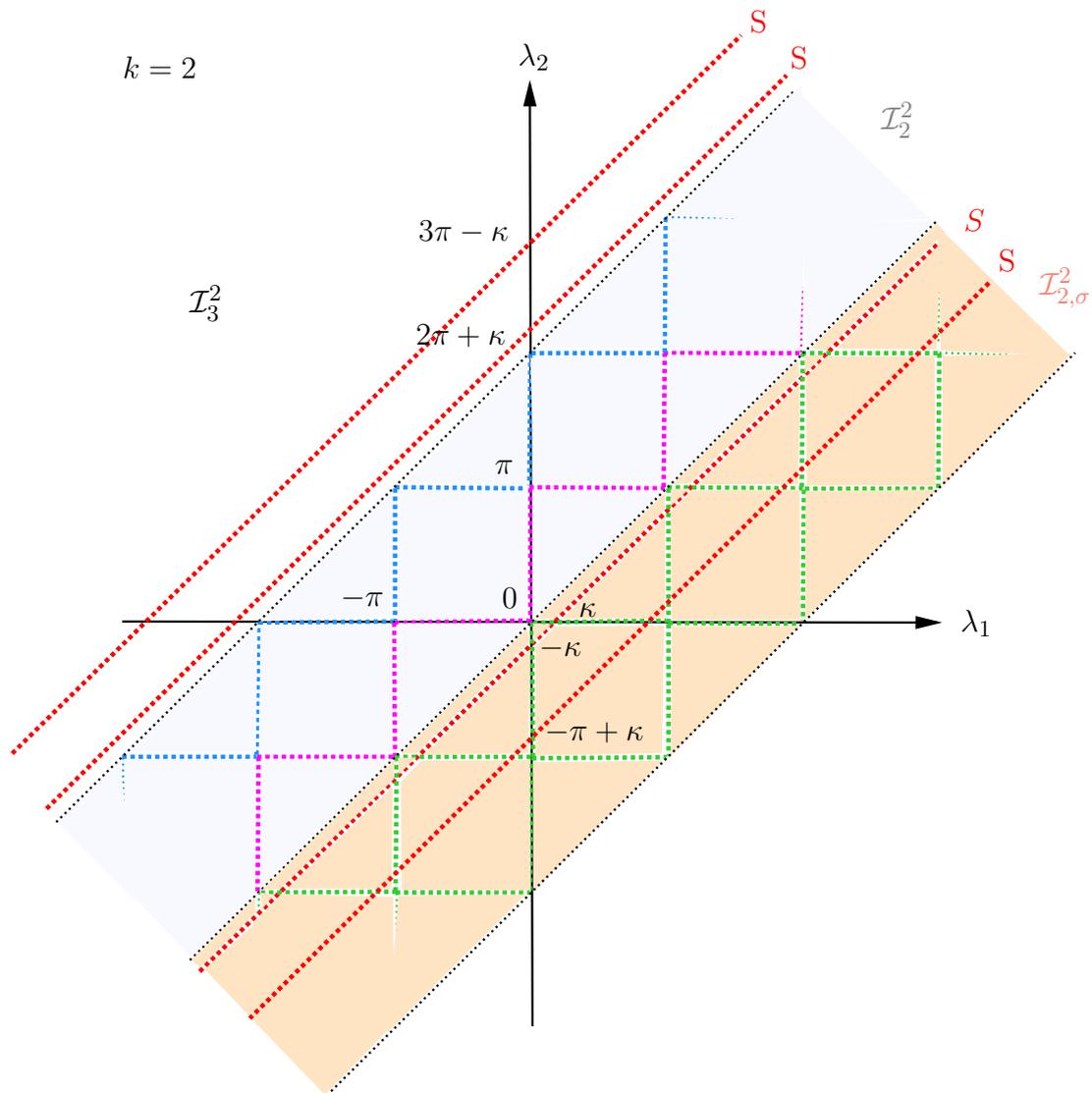}
\caption[The regions $\ical_{2}^{2}$, $\ical_{2,\sigma}^{2}$ and $\ical_{3}^{2}$]{The regions $\ical_{2}^{2}$, $\ical_{2,\sigma}^{2}$ and $\ical_{3}^{2}$, where $\ical_{3}^{2}$ is the convex hull of the union of $\ical_{2}^{2}$ and $\ical_{2,\sigma}^{2}$, and where $\ical_{2}^{2}$, $\ical_{2,\sigma}^{2}$ are defined by Eq.~\eqref{eq:ich2} and Eq.~\eqref{i2sigma}, respectively. The red dotted lines show poles due to the function $S$ and for simplicity we restrict to the case where $S$ has only one pole in the strip $-\pi/2<\im \zeta<0$.} \label{fig:multistairsS}
\end{center}
\end{figure}

\begin{proof}
Given a fixed permutation $\sigma \in \perms{k}$, we consider the ``permuted region''
\begin{equation}\label{i2sigma}
  \ical_{2,\sigma}^k :=\{ \lambdav  \in \rbb^k: \lambda_{\sigma(1)}<\ldots<\lambda_{\sigma(k)}<\lambda_{\sigma(1)}+2\pi \}.
\end{equation}
We define the function $F_k$ on the tube based on this permuted region $\tube(\ical_{2,\sigma}^k)$ by
\begin{equation}\label{permf}
F_{k}(\zetav):=F_{k}(\zeta_{\sigma(1)},\ldots,\zeta_{\sigma(k)}) \,S^{\sigma}(\zetav),
\end{equation}
where $S^\sigma$ is given by Eq.~\eqref{eq:Sperm}. Since $S$ is a meromorphic function for all arguments, $S^\sigma$ is also a meromorphic function for all arguments; hence, \eqref{permf} defines $F_k$ as a meromorphic function on each of the disjoint regions $\tube(\ical_{2,\sigma}^k)$.
But since $S$ has no poles on the real line, we can find a complex neighbourhood $\ncal$ of $\rbb^k$ (not necessarily tubular) where all $S^\sigma$ are analytic, cf.~Fig.~\ref{fig:neighperm}; hence $F_k$ is analytic in $\ncal \cap \tube(\ical_{2,\sigma}^k)$ for all $\sigma$. Due to the property of $S$-symmetry \ref{it:fpsymm}, the boundary values of $F_k$ at $\rbb^k$ from within all these domains coincide in the sense of distributions. So, we can apply the edge-of-the-wedge theorem (for example, see \cite{Eps:edge_of_wedge}) around each real point, and find that $F_k$ has an analytic continuation to a possibly smaller complex neighbourhood $\ncal'\subset \ncal$ of $\rbb^k$. This implies that $F_k$ is meromorphic on the connected domain
\begin{equation}
   \rcal := \ncal' \cup \bigcup_{\sigma \in \perms{k}} \tube(\ical_{2,\sigma}^k).
\end{equation}
By \cite{Eps:edge_of_wedge} we have just shown that $F_k$ is meromorphic in the tubes $\tube(\ical_{2,\sigma}^k)$ and that the boundary values at $\rbb$ coincide; in the case where the function was analytic, we could apply the tubular edge-of-the-wedge theorem \cite{Bros:1977} and extend this function to the envelope of holomorphy of $\rcal$ given by $\ich(\rcal)$. To extend this result to the case of meromorphic functions, we use \cite[Theorem~3.6.6]{JarnickiPflug:2000}, which says that the envelope of meromorphy is the same as the envelope of holomorphy. So, we can extend the function meromorphically to $\ich(\rcal) = \tube(\ical_{3}^k)$.
\end{proof}

\begin{figure}
\begin{center}
\input{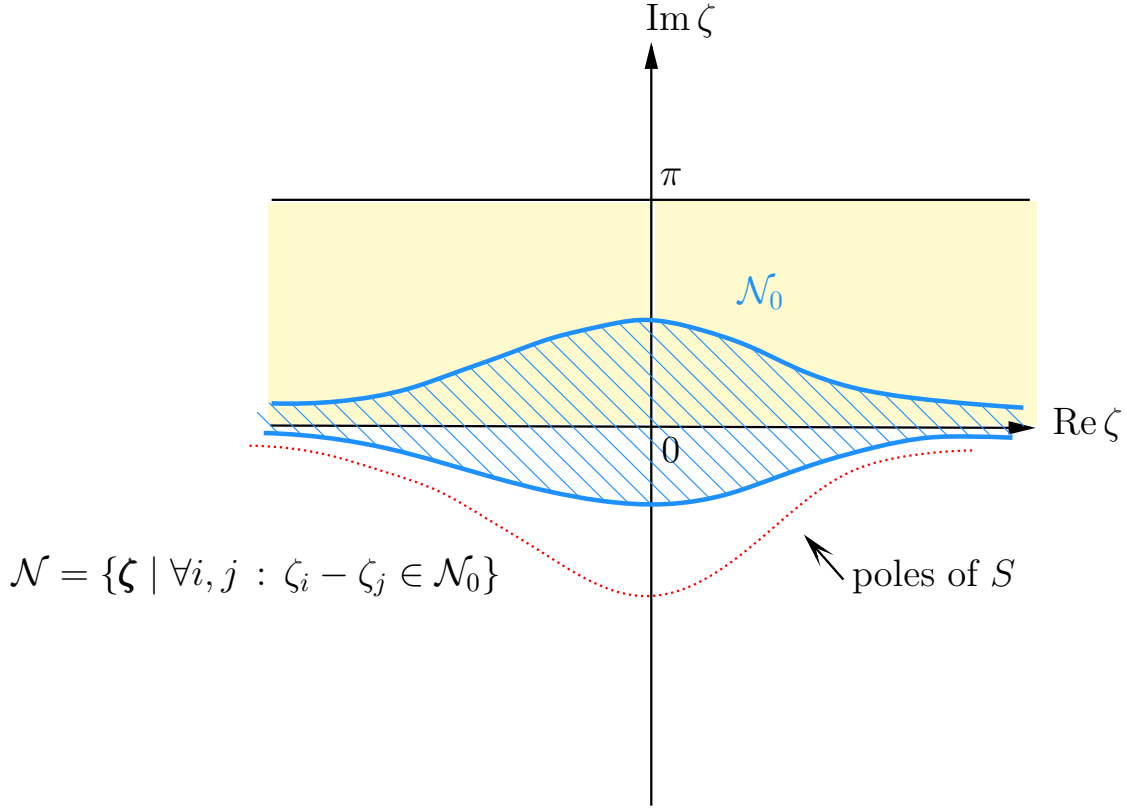}
\caption[The neighbourhood $\ncal$]{The neighbourhood $\ncal$. If the poles of the function $S$ approach the real axis as $\re \zeta \rightarrow \pm \infty$ as depicted above, then we can choose a complex neighbourhood $\ncal_0$ of the real axis as indicated, on which $S$ is still analytic.} \label{fig:neighperm}
\end{center}
\end{figure}

\section{Extension to entire plane} \label{sec:fkeverywhere}

Now we use the properties of periodicity and $S$-symmetry of $F_k'$ to extend meromorphically $F_k$ to the entire multi-variable complex plane, cf.~Fig.~\ref{fig:entire2}.

\begin{proposition}\label{proposition:extendeverywhere}
  The functions $F_k$ of Prop.~\ref{proposition:extendinterior} extend meromorphically to $\cbb^k$. They fulfil \ref{it:fmero}, \ref{it:fsymm} and \ref{it:fperiod}.
\end{proposition}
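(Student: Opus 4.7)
The strategy is to use the $S$-periodicity relation (F3) as a rule for meromorphically extending $F_k$ from $\tube(\ical_3^k)$ to $\cbb^k$. For any $\zeta \in \cbb^k$, choose $\nv \in \zbb^k$ such that $\zeta^0 := \zeta - 2i\pi\nv \in \tube(\ical_3^k)$; this is always possible, since $\ical_3^k$ contains a fundamental domain for the translation action of $2\pi\zbb^k$ on $\rbb^k$ (take, e.g., $n_j := \lfloor \im\zeta_j/(2\pi)\rfloor$, which yields $\im\zeta^0_j \in [0,2\pi)$ and hence $|\im\zeta^0_i - \im\zeta^0_j|<2\pi$). Then set
\begin{equation*}
F_k(\zeta) := \prod_{j=1}^k \Bigl(\prod_{i \neq j} S(\zeta^0_i - \zeta^0_j)\Bigr)^{n_j} F_k(\zeta^0).
\end{equation*}
Since $S$ is meromorphic on $\cbb$ and $F_k$ is meromorphic on $\tube(\ical_3^k)$ by Prop.~\ref{proposition:extendpermuted}, this prescription is meromorphic in $\zeta$.

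Well-definedness requires that two decompositions $\zeta = \zeta^0+2i\pi\nv = \tilde\zeta^0+2i\pi\tilde\nv$ (with both $\zeta^0,\tilde\zeta^0 \in \tube(\ical_3^k)$) yield the same value. Setting $\mv := \tilde\nv-\nv$, the condition that both $\zeta^0$ and $\zeta^0+2i\pi\mv$ lie in $\tube(\ical_3^k)$ forces $|m_i - m_{i'}|\leq 1$. By subtracting a multiple of $(1,\ldots,1)$ and invoking the global $2i\piv$-periodicity $F_k(\zeta+2i\piv)=F_k(\zeta)$ already established on $\tube(\ical_3^k)$ from (F2') (compatible with the prescription since $\prod_{j}\prod_{i\neq j} S(\zeta_i - \zeta_j) = 1$ by pairing via $S(\theta)S(-\theta)=1$), the consistency reduces to verifying the single-coordinate identity
\begin{equation*}
F_k(\zeta + 2i\pi\ev^{(j)}) = \prod_{i \neq j} S(\zeta_i - \zeta_j)\, F_k(\zeta)
\end{equation*}
on the non-empty overlap $\tube(\ical_3^k) \cap (\tube(\ical_3^k) - 2i\pi\ev^{(j)})$.

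Establishing this single-coordinate S-periodicity is the genuine technical obstacle. S-symmetry (F2) combined with the global $2i\piv$-periodicity alone yields only a tautology, so the additional input must come from the recursion relation \ref{it:fprecursion} of Def.~\ref{def:conditionFP}, which links boundary values of $F_k'$ at nodes $\lambdav^{(k,-m)}$ of $\gcal_-^k$ to values at nodes of $\gcal_+^k$ with a specific structure of $S$- and $R_C$-factors. The plan is to evaluate both sides of the targeted identity at a point on the boundary of $\gcal_0^k$, where $F_k$ reduces to the boundary distributions $F_k'$, use the reflection formula in Prop.~\ref{proposition:fmnreflected} for $\cme{m,n}{JA^*J}$ together with $S$-symmetry to permute coordinates, and identify the $S$-factor product $\prod_{i\neq j}S(\zeta_i-\zeta_j)$ with the multiplicative factor prescribed by (F3); analytic continuation then propagates the identity throughout the connected overlap, and thence to the entire region where both sides are meromorphic.

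Once the extension is consistent, verification of \ref{it:fmero}, \ref{it:fsymm}, \ref{it:fperiod} is routine. Condition \ref{it:fperiod} holds by construction. For \ref{it:fsymm}, swapping coordinates $j,j+1$ in $\zeta$ simultaneously swaps them in $\zeta^0$ and $\nv$; using $S(\theta+2i\pi)=S(\theta)$ (obtained by iterating $S(\theta+i\pi)=S(\theta)^{-1}$ from Def.~\ref{def:Smatrix}), the product of $S$-factors in the prescription reorganises so as to produce the expected factor $S(\zeta_{j+1}-\zeta_j)^{-1}$ in front of $F_k$ evaluated at the swapped argument, matching the S-symmetry already known on $\tube(\ical_3^k)$. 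For \ref{it:fmero}, meromorphicity on $\cbb^k$ is built into the prescription, while analyticity on $\{\im\zeta_1<\ldots<\im\zeta_k<\im\zeta_1+\pi\}$ follows from Prop.~\ref{proposition:extendinterior}, since this region sits strictly inside $\tube(\ical_2^k)$ and avoids the pole locus $\zeta_n-\zeta_m=i\pi$, which would require $\im\zeta_n-\im\zeta_m = \pi$ in violation of the strict inequality.
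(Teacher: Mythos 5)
Your high-level structure is correct: define the extension by the $S$-periodicity rule, show it is meromorphic, reduce well-definedness to $|m_i-m_{i'}|\le 1$ and then to the single-coordinate identity $F_k(\zetav + 2i\pi\ev^{(j)}) = \prod_{i\neq j}S(\zeta_i-\zeta_j)F_k(\zetav)$, and verify it on a real open set. You are also right that $S$-symmetry together with the global $2i\piv$-periodicity alone does not give the single-coordinate identity, so some additional input is required. However, you misidentify that input. The paper does \emph{not} invoke the recursion relation \ref{it:fprecursion} nor Prop.~\ref{proposition:fmnreflected} at this step. The genuine extra ingredient is the definitional relation \eqref{stairs}, $F_k'(\zetav) = F_k'(\zeta_{m+1},\ldots,\zeta_k,\zeta_1+2i\pi,\ldots,\zeta_m+2i\pi)$ for $\zetav\in\tube(\gcal^k_{1,m})$, which was laid down in Sec.~\ref{sec:fkallstairs} when $F_k'$ was extended to the stairs. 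On the boundary point $\thetav + i(0,\ldots,0,2\pi,\ldots,2\pi)$, relation \eqref{stairs} followed by the $2i\piv$-periodicity \eqref{defonestair} turns the right-hand side into $F_k'(\theta_{m+1},\ldots,\theta_k,\theta_1,\ldots,\theta_m)$, and then the $S$-symmetry \ref{it:fpsymm} for the cyclic permutation $\sigma$ produces exactly the factor $\prod_{\ell>m}\prod_{j\le m}S(\theta_\ell-\theta_j) = S^\sigma(\thetav)$ — a clean identity with no residual terms.

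Your proposed route through \ref{it:fprecursion} and Prop.~\ref{proposition:fmnreflected} is problematic because those relations carry the full sum over contractions $C$, with $\delta_C S_C R_C$ weights. Only the $C=\emptyset$ term reproduces the cyclic relabelling; all other terms contribute distributional corrections supported on diagonals. Those corrections are precisely what the rational factor in \eqref{gdef} is designed to cancel in Prop.~\ref{proposition:extendinterior}, and they have already been consumed there; re-introducing them at this stage would force you to argue separately that the delta-function contributions are irrelevant, which, since the identity is between distributional boundary values, is not automatic. The recursion \ref{it:fprecursion} belongs to the construction of $F_k$ on $\tube(\ical_2^k)$ (where it ensures that $G_k$ is a CR distribution across stair junctions), not to the proof of $S$-periodicity on $\cbb^k$. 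You should replace your appeal to \ref{it:fprecursion} and the reflection formula with the explicit stair extension \eqref{stairs}; the rest of your argument — including the reduction via $2i\piv$-periodicity and permutation invariance to the case $\nv=(0,\ldots,0,1,\ldots,1)$, and the verifications of \ref{it:fmero}, \ref{it:fsymm}, \ref{it:fperiod} — goes through essentially as in the paper.
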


\begin{figure}
\begin{center}
\input{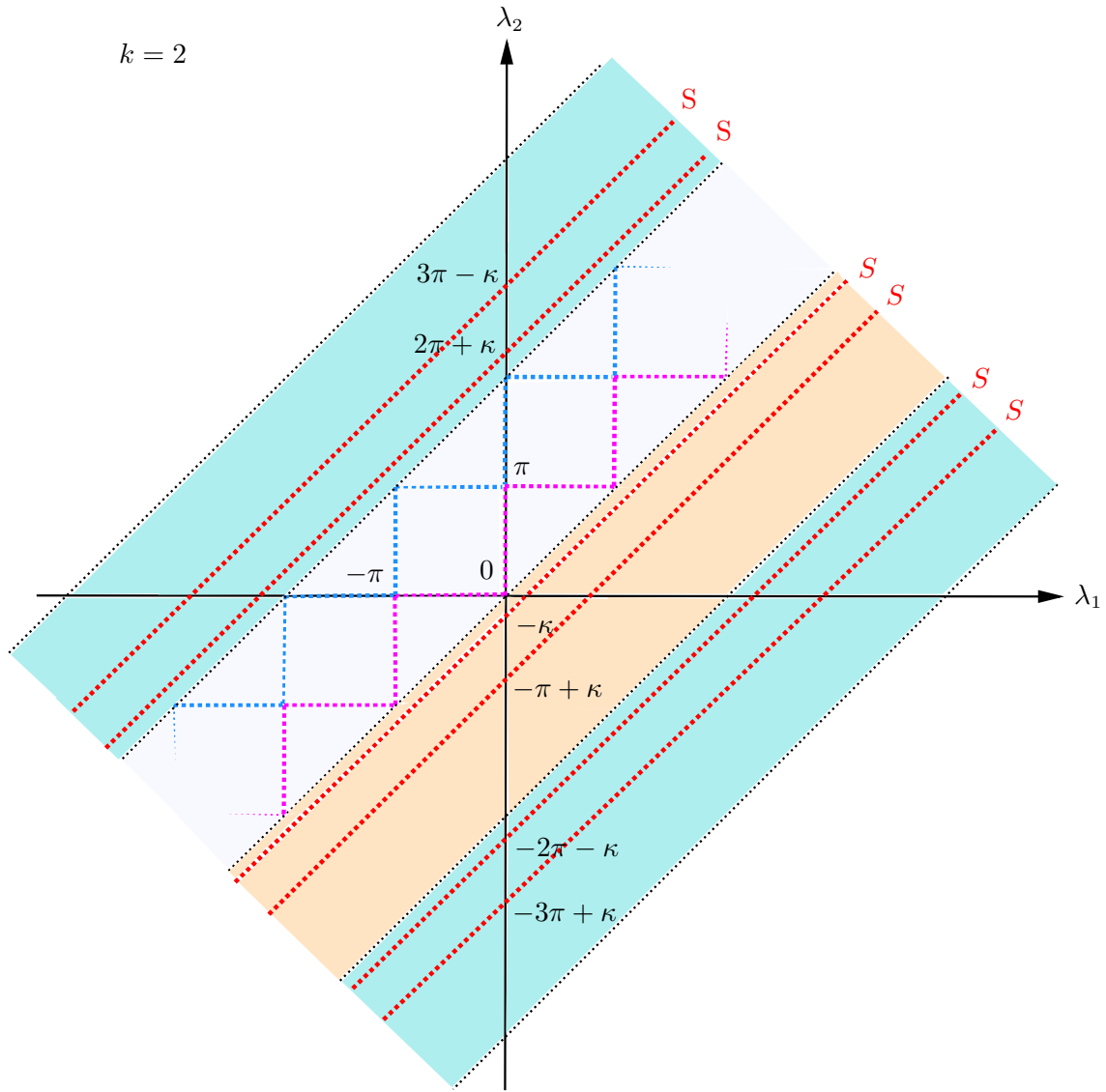}
\caption[The extension of $F_2$ to the entire rapidity multi-variables complex plane]{The extension of $F_2$ to the entire rapidity multi-variables complex plane by $2\pi$ periodic continuation of $\ical_{3}^{2}$.} \label{fig:entire2}
\end{center}
\end{figure}

\begin{proof}
We define $F_{k}$ on $\cbb^k$ by
\begin{equation}\label{fperiodica}
F_{k}(\zetav) :=
\Bigg( \prod_{\ell=1}^{k}\Big( \prod_{\substack{j=1\\ j \neq \ell}}^{k} S(\zeta_\ell-\zeta_j)
   \Big)^{ n_\ell } \Bigg)
F_{k}(\zetav + 2 i \pi \nv),
\end{equation}
where we choose $\nv\in\zbb^k$ such that $\zetav + 2 i \pi\nv \in \tube(\ical_3^k)$.
To show that this is well-defined, we need first to prove that it is possible to choose such $\nv$ for any $\zetav$: Given $\lambdav\in \mathbb{R}^{k}$, choose $n_{j}\in \mathbb{Z}$ such that $\tilde{\lambda}_{j}:=\lambda_{j}+2\pi n_{j}\in [0,2\pi)$. Then $|\tilde{\lambda}_{j}-\tilde{\lambda}_{i}|< 2\pi$ for all $i,j$, and therefore $\tilde{\lambdav}\in \ical_3^k$.
However, we might have several of such choices for $\nv$. Suppose that, for fixed $\zetav$, there exist $\nv\neq\nv'$ such that $\im\zetav+2\pi\nv\in \ical_3^k$ and $\im\zetav+2\pi\nv'\in \ical_3^k$. In this case, we have (a priori) two definitions of $F_{k}(\zetav)$, namely one built with $\nv$ and one built with $\nv'$. What we have to prove is that these two definitions actually give the same value $F_{k}(\zetav)$. Namely, we need to show that
\begin{equation}\label{periodrel}
\underbrace{\prod_{\ell=1}^{k}\Big( \prod_{\substack{j=1\\ j \neq \ell}}^{k} S(\zeta_\ell-\zeta_j)
   \Big)^{ n_\ell }}_{=:S_{\nv}(\zetav)}
F_{k}(\zetav +2i\pi \nv)
=\underbrace{\prod_{\ell=1}^{k}\Big( \prod_{\substack{j=1\\ j \neq \ell}}^{k} S(\zeta_\ell-\zeta_j)
   \Big)^{ n'_\ell }}_{=:S_{\nv'}(\zetav)}
F_{k}(\zetav+2i\pi \nv').
\end{equation}
In order to simplify the above expression, we call $\hat{\zetav}=\zetav+2i\pi \nv'$. Dividing by $S_{\nv'}(\zetav)$, and using $2 i\pi$-periodicity of the $S$-factors, $S(\hat{\zeta}_{\ell}-\hat{\zeta}_{i})=S(\zeta_{\ell}-\zeta_{i}+2\pi i)=S(\zeta_{\ell}-\zeta_{i})$, formula \eqref{periodrel} becomes:
\begin{equation}
F_{k}(\hat{\zetav}+2i\pi (\nv-\nv'))
=\prod_{\ell=1}^{k}\Big( \prod_{\substack{i=1\\ i \neq \ell}}^{k} S(\zeta_\ell-\zeta_i)
   \Big)^{n_\ell- n'_\ell }F_{k}(\hat{\zetav})\label{mminusnp}
\end{equation}
Calling for all $j$, $n_{j}-n'_{j}=: n''_{j}$ we get from \eqref{mminusnp} the same expression as in \eqref{fperiodica}. So, we can assume without loss of generality that $\nv'=0$ and $\im\zetav \in \ical_3^k$.

Hence, for $\zetav\in \ical_3^k$ and $\zetav+2i\pi\nv\in \ical_3^k$, we want to show:
\begin{equation}
F_{k}(\zetav)=\prod_{\ell=1}^{k}\Big( \prod_{\substack{i=1\\ i \neq \ell}}^{k} S(\zeta_\ell-\zeta_i)
   \Big)^{ n_\ell }
F_{k}(\zetav +2i\pi \nv).\label{nnpcond}
\end{equation}
We can check that the factor $S_{\nv}(\zetav)$ defined above has the property that $S_{\nv}(\zetav)=S_{\nv^\rho}(\zetav^\rho)$ for any permutation $\rho$. Indeed, we have by definition:
\begin{eqnarray}
S_{\nv^{\rho}}(\zetav^{\rho}) &=& \prod_{\ell=1}^{k}\Big( \prod_{\substack{j=1\\ j \neq \ell}}^{k} S(\zeta_{\rho(\ell)}-\zeta_\rho{j})\Big)^{n_{\rho(\ell)}}\nonumber\\
&=& \prod_{\rho^{-1}(\ell)=1}^{k}\Big( \prod_{\substack{\rho^{-1}(j)=1\\ \rho^{-1}(j) \neq \rho^{-1}(\ell)}}^{k} S(\zeta_{\ell}-\zeta_j)\Big)^{n_{\ell}} \nonumber\\
&=& \prod_{\ell=1}^{k}\Big( \prod_{\substack{j=1\\ j \neq \ell}}^{k} S(\zeta_{\ell}-\zeta_j)\Big)^{n_{\ell}}.\label{Snrho}
\end{eqnarray}
since the products over $\ell$ and $j$ run over all the indices $1,\ldots, k$, and since $\rho^{-1}(j)= \rho^{-1}(\ell)$ if and only if $j=\ell$.

We can see that relation \eqref{periodrel} is invariant under the permutation of the components of $\zetav$ and $\nv$ by $\rho$:
\begin{equation}
S_{\nv^{\rho}}(\zetav^{\rho})F_{k}(\zetav^{\rho} +2i\pi \nv^{\rho}) =S_{{\nv'}^{\rho}}(\zetav^{\rho})F_{k}(\zetav^{\rho}+2i\pi {\nv'}^{\rho}).
\end{equation}
The invariance of this relation is due to \eqref{Snrho} and the fact that $F_k$ is S-symmetric by Prop.~\ref{proposition:extendpermuted}:
\begin{equation}
S_{\nv}(\zetav)S^{\rho}(\zetav +2i\pi \nv)F_{k}(\zetav +2i\pi \nv) =S_{\nv'}(\zetav)S^{\rho}(\zetav+2i\pi \nv')F_{k}(\zetav+2i\pi \nv').
\end{equation}
Using that $S$ is $2\pi i$-periodic, we find
\begin{equation}
S_{\nv}(\zetav)S^{\rho}(\zetav)F_{k}(\zetav +2i\pi \nv) =S_{\nv'}(\zetav)S^{\rho}(\zetav)F_{k}(\zetav+2i\pi \nv'),
\end{equation}
and therefore:
\begin{equation}
S_{\nv}(\zetav)F_{k}(\zetav +2i\pi \nv) =S_{\nv'}(\zetav)F_{k}(\zetav+2i\pi \nv').
\end{equation}
Hence we can assume that $n_1 \leq \ldots \leq n_k$.

Now, with $\lambdav:=\im\zetav$, the conditions $\lambdav\in \ical_3^k$ and $\lambdav+2 \pi \nv \in \ical_3^k$ imply (cf.~\eqref{eq:i3k})
\begin{equation}
\forall j,k: \quad |\lambda_{j}-\lambda_{k}| < 2\pi, \quad |\lambda_{j}-\lambda_{k}+2\pi (n_{j}-n_{k})| < 2\pi.
\end{equation}
We can show that $n_j\in\{N,N+1\}$ for all $j$ with some fixed $N \in \zbb$: We have $2\pi > |(\lambda_{i}-\lambda_{j})+2\pi (n_{i}-n_{j})|\geq | |\lambda_{i}-\lambda_{j}|-2\pi |n_{i}-n_{j}|| \geq 2\pi |n_{i}-n_{j}|- |\lambda_{i}-\lambda_{j}|\geq 2\pi |n_{i}-n_{j}|-2\pi$. This implies $2 > |n_{i}-n_{j}|$, and therefore we have for all $i=1,\ldots, k$ : $|n_{i}-n_{j}|=1$. We choose $N=\min_j n_j$.

In the following we consider only the case $N=0$; indeed, we can handle the other case $N=-1$ with similar arguments, and to prove the case for all other $N$ we use the $2i\piv$-periodicity of $F_k$.

We want to show the identity \eqref{periodrel} (where $\nv=(0,\ldots,0,1,\ldots,1)$ with $m$ entries of $0$, and $\nv'=0$) between meromorphic functions, hence it suffices to check it on a real open set, possibly on the boundary of the domain. Therefore, we can choose $\im \zetav=0$ and $\im\zetav+2\pi \nv\in \bar\ical_3^k$.
Inserting $F_k'$ as the boundary value of $F_k$, it remains to show that for real $\thetav$, in the sense of distributions, we have:
\begin{equation}\label{neq01form}
F_{k}'(\thetav)=\Big(\prod_{\ell>m}\prod_{j \leq m}S(\theta_{\ell}-\theta_{j})\Big)
F_{k}'(\theta_1,\ldots,\theta_m,\theta_{m+1} + 2 \pi i, \ldots, \theta_k+2\pi i).
\end{equation}
This uses
\begin{equation}
\prod_{\ell=1}^{k}\Big( \prod_{\substack{i=1\\ i \neq \ell}}^{k} S(\zeta_\ell-\zeta_i)
   \Big)^{ n_\ell }=\prod_{\ell\;:\; n_{\ell}=1}\prod_{i\neq \ell}S(\theta_{\ell}-\theta_{i})=\prod_{\ell \;:\; n_{\ell}=1}\Big(\prod_{i\; :\; n_{i}=0}S(\theta_{\ell}-\theta_{i})\Big).
\end{equation}

On the right hand side of \eqref{neq01form}, $F_k'$ is evaluated on a point of $\tube(\gcal^{k}_{1,m})$ (this point is $\thetav +i(\underbrace{0,\ldots,0}_{m},\underbrace{2\pi,\ldots,2\pi}_{k-m})$). Using Eq.~\eqref{stairs}, we find
\begin{equation}
F_{k}'(\thetav)=\Big(\prod_{\ell>m}\prod_{j \leq m}S(\theta_{\ell}-\theta_{j})\Big)
F_{k}'(\theta_{m+1}+ 2 \pi i,\ldots,\theta_{k}+ 2 \pi i,\theta_{1}+ 2 \pi i,\ldots,\theta_{m}+ 2 \pi i).
\end{equation}
Then, using the $2 i \piv$-periodicity of $F_k'$ \eqref{defonestair}, we find
\begin{equation}
\rhs{neq01form} = \Big(\prod_{\ell>m}\prod_{j \leq m}S(\theta_{\ell}-\theta_{j})\Big) F_{k}'(\theta_{m+1},\ldots,\theta_{k},\theta_{1},\ldots,\theta_{m}).
\end{equation}
We consider a permutation $\sigma$ given by
\begin{equation}
\sigma=\begin{pmatrix} 1 & \ldots & m & m+1 & \ldots & k \\ m+1 & \ldots & k & 1 &\ldots & m \end{pmatrix}.
\end{equation}
We compute $S^{\sigma}$ from the definition \eqref{eq:Sperm}: $\sigma$ swaps $1\ldots m$ with $m+1\ldots k$, but leaves the order of indices unchanged otherwise. Thus the $S$ factors in the product are of the form $S_{\ell,j}$ where $\ell>m$ and $j\leq m$. Hence, we find
\begin{equation}\label{stairform}
\rhs{neq01form} = \Big(\prod_{\ell>m}\prod_{j \leq m}S(\theta_{\ell}-\theta_{j})\Big) F_{k}'(\theta_{m+1},\ldots,\theta_{k},\theta_{1},\ldots,\theta_{m}) = S^\sigma(\thetav) F_k'(\thetav^\sigma).
\end{equation}
Since $F_{k}'$ is S-symmetric by \ref{it:fpsymm}, this proves \eqref{neq01form}.

By the argument above we have just showed that $F_k$ is well-defined on $\cbb^k$. It is meromorphic on $\cbb^k$ due to the poles of the $S$-matrix and the possible poles of the recursion relations, and it is analytic on $\ich(\gcal_1^k)$; hence \ref{it:fmero} is fulfilled. We have already showed \ref{it:fsymm} in Prop.~\ref{proposition:extendpermuted} (there we actually proved $S$-symmetry of $F_k$ on a smaller domain than $\cbb^k$; but since $F_k$ is meromorphic, this property holds also on the larger domain). Regarding \ref{it:fperiod}, we notice that this is a special case of \eqref{fperiodica}, where the shift by $2i\pi$ involves all the complex arguments $\zetav$ of $F_k$.
\end{proof}

\section{Residua}

Now, we want to compute the residua of $F_{k}$ and prove formula \ref{it:frecursion}.
\begin{proposition}\label{proposition:residua}
  The first-order poles of $F_k$ at $\zeta_m-\zeta_n=i\pi$, $m>n$, have residua as given by \ref{it:frecursion}.
\end{proposition}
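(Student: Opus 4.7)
My plan is to exploit the construction of $F_k$ given in Prop.~\ref{proposition:extendinterior}: from \eqref{def:FfromG} one has $F_k(\zetav) = G_k(\zetav)\prod_{j>j'}\frac{\zeta_j-\zeta_{j'}+i\pi}{\zeta_j-\zeta_{j'}-i\pi}$ with $G_k$ analytic on $\tube(\ical_2^k)$. Hence $F_k$ has at most simple poles precisely on the surfaces $\zeta_j-\zeta_{j'}=i\pi$ with $j>j'$, and with the convention $m<n$ of \ref{it:frecursion},
\begin{equation*}
\operatorname{Res}_{\zeta_n-\zeta_m=i\pi} F_k(\zetav) = 2\pi i\, G_k(\zetav)\big|_{\zeta_n=\zeta_m+i\pi}\cdot\prod_{\substack{j>j'\\(j,j')\neq(n,m)}}\frac{\zeta_j-\zeta_{j'}+i\pi}{\zeta_j-\zeta_{j'}-i\pi}\bigg|_{\zeta_n=\zeta_m+i\pi}.
\end{equation*}
The task therefore reduces to evaluating $G_k$ on the pole surface.

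To compute this value I would use a Sokhotski--Plemelj type argument at the boundary of $\tube(\ical_2^k)$, comparing the boundary values of $F_k'$ on two adjacent stairs $\gcal^k_{1,m_1}, \gcal^k_{1,m_2}$ that share a common node $\lambdav^{m_1\cap m_2}$ lying on the pole surface $\im\zeta_n-\im\zeta_m=\pi$. The two boundary values were written down explicitly in Section~\ref{sec:fkdifference} by means of \ref{it:fprecursion}, so their difference is a sum over contractions $C$ with $|C|\geq 1$, each term carrying delta distributions $\delta_C$. As observed in the proof of Prop.~\ref{proposition:extendinterior}, the rational prefactors in \eqref{gdef} carry zeros that kill all $|C|\geq 2$ contributions; those therefore produce no contribution to $G_k$ (and no higher-order poles of $F_k$). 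Only the $|C|=1$ sector remains, and it contributes exactly the simple pole.

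By \ref{it:fsymm} it suffices to treat one representative pole surface: I would select adjacent stairs so that the only $|C|=1$ contraction supplying the delta $\delta(\theta_n-\theta_m)$ is $C=\{(m,n)\}$. The corresponding summand of \ref{it:fprecursion} yields, after dividing by $2\pi i$, a term of the form $-\frac{1}{2\pi i}\,S_C\,R_C\,F_{k-2}'$; by \eqref{eq:sc} and Lemma~\ref{lemma:scpermute}, $S_C$ evaluated on the pole surface reduces to $\prod_{j=m}^{n}S_{j,m}$, and by \eqref{rc} $R_C$ reduces to $1-\prod_{p=1}^{k}S_{m,p}$. Inserting these into the residue formula above and checking that the remaining rational prefactors cancel consistently with the $S$-symmetry permutation factors yields precisely the expression claimed in \ref{it:frecursion}, with $F_{k-2}'$ promoted to the meromorphic $F_{k-2}(\hat\zetav)$ via Prop.~\ref{proposition:extendeverywhere}.

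The main obstacle will be the bookkeeping of $S$-factors, which come from three independent sources: the $S_C$ and $R_C$ of the recursion (tied to the permutations $\sigma,\rho$ of Lemma~\ref{lemma:scpermute}), the $S$-symmetry permutations used to bring the pole into standard position, and the remaining rational prefactors evaluated on the pole surface. Verifying that these three sets of factors combine exactly into $\prod_{j=m}^n S_{j,m}\,(1-\prod_p S_{m,p})$ without spurious extra phases, and that no $|C|\geq 2$ contribution re-emerges as a singularity of $F_k$ at $\zeta_n-\zeta_m=i\pi$ after the further analytic continuation to $\cbb^k$ of Prop.~\ref{proposition:extendeverywhere}, is the delicate technical step.
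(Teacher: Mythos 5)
Your route through $G_k$ and the pole-stripping factor from \eqref{def:FfromG} is logically tenable, but it is considerably more roundabout than what the paper does, and one step of your reasoning is misstated. You claim that the rational prefactor in \eqref{gdef} kills the $|C|\geq 2$ contributions and leaves the $|C|=1$ sector alive. In fact, what the proof of Prop.~\ref{proposition:extendinterior} shows is that \emph{every} $|C|\geq 1$ term in the boundary-value discontinuity of $F_k'$ is annihilated by the zeros of that rational factor -- that is precisely how $G_k$ is made continuous across the stair meeting nodes. So there is no ``$|C|=1$ sector remaining inside $G_k$'' from which to read off a simple pole: $G_k$ is regular there, and the recursion data you need has been absorbed entirely into the prefactor. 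You would therefore have to evaluate the analytic function $G_k$ on the pole hypersurface by some independent means and then multiply back the full $\operatorname{Res}$ of the rational factor, tracking all the remaining $\frac{\zeta_j-\zeta_{j'}+i\pi}{\zeta_j-\zeta_{j'}-i\pi}$ evaluated at $\zeta_n-\zeta_m=i\pi$. That bookkeeping is avoidable.

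The paper never re-introduces $G_k$. It first reduces to $(m,n)=(1,k)$ by $S$-symmetry, then observes that a residue, being a meromorphic function on the pole hypersurface, can be verified on any real open subset. It chooses the subset where the $\theta_j$ are pairwise distinct \emph{except} for the pair $(1,k)$: on this set every contraction $C$ in \ref{it:fprecursion} with a delta factor $\delta(\theta_j-\theta_{j'})$ for $(j,j')\neq(1,k)$ has support of measure zero, so only $C=\emptyset$ and $C=\{(1,k)\}$ survive -- no appeal to rational-prefactor cancellations is needed. The $|C|=0$ term is exactly the ``other side'' boundary value $F_k'(\thetav+i\lambdav^{(k,k)})$, and the $|C|=1$ term, read against the Plemelj relation $F(a-i0)-F(a+i0)=2\pi i\,\delta\cdot\operatorname{Res}F$, gives the residue immediately, after which the general $(m,n)$ formula follows by the $S$-symmetry bookkeeping you correctly anticipated. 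If you want to salvage your version, replace ``the rational prefactor kills $|C|\geq 2$'' by ``the restriction to generic real $\theta_j$ kills all but $C=\{(m,n)\}$,'' and then there is no need to pass through $G_k$ at all -- at which point you have essentially recovered the paper's argument.
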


\begin{proof}
It suffices to prove \ref{it:frecursion} for $m=1$, $n=k$; indeed, the general case follows from that particular case by using $S$-symmetry (see below).
Since the residues are meromorphic functions on the pole hypersurfaces, it suffices to verify formula \ref{it:frecursion} on a real open set.
Therefore we compute the difference of the boundary values of $F_k$ at the points $\zetav_\pm = \thetav + i (0,\ldots, 0, \pi \pm 0)$, where we assume that $\theta_j \neq \theta_{j'}$ for $j \neq j'$ (except for $j=1,j'=k$).  We note that $\zetav_-\in\ich(\gcal_1^k)$ but $\zetav_+\in\ich(\gcal_{1,k-1}^k)$. Hence, using \eqref{stairs}, the $2\pi i$-periodicity of $F_k$, and the boundary values of $F_k$ as in \eqref{eq:ffpboundary}, we find
\begin{multline}
  F_{k}(\theta_{1}, \ldots, \theta_{k-1},\theta_{k}+i\pi -i0)- F_{k}(\theta_{1}, \ldots, \theta_{k-1},\theta_{k}+i\pi+i0)\\
=F'_{k}(\theta_{1} \ldots \theta_{k-1},\theta_{k}+i\pi)-F'_{k}(\theta_{k}-i\pi,\theta_{1}, \ldots, \theta_{k-1})\\
=\delta(\theta_k-\theta_1)\Big( 1- \prod_{p=1}^{k}S(\theta_p-\theta_k)\Big)F_{k-2}(\theta_{2}, \ldots, \theta_{k-1}),
\end{multline}
where in the second equality we made use of \ref{it:fprecursion} in the case $m=1$. So, we can read from the formula above the residue of the pole:
\begin{equation}
\res_{\zeta_{k}-\zeta_{1}=i\pi}F_{k}(\zetav)=\frac{1}{2 \pi i}\Big( 1-\prod_{p=1}^{k}S(\zeta_{p}-\zeta_{1})\Big)F_{k-2}(\hat{\zetav}).
\end{equation}
This is exactly \ref{it:frecursion} in the case $m=1,n=k$.
\end{proof}
Now, using $S$-symmetry on the residua, we compute the residua of $F_{k}$ for generic $m,n$, with $m<n$:
\begin{multline}
\res_{\zeta_{n}-\zeta_{m}=i\pi}F_{k}(\zetav)
=(\zeta_{n}-\zeta_{m}-i\pi)
F_{k}(\zetav)\big\vert_{\zeta_{n}-\zeta_{m}=i\pi}\\
=(\zeta_{n}-\zeta_{m}-i\pi)F_{k}(\zeta_{m},\hat{\zetav},\zeta_{n})\cdot \prod_{j=1}^{m-1}S(\zeta_{m}-\zeta_{j})\prod_{i=n+1}^{k}S(\zeta_{i}-\zeta_{n})\big \vert_{\zeta_{n}-\zeta_{m}=i\pi}\\
=[\res_{\zeta_{k}-\zeta_{1}=i\pi}F_{k}(\zeta_{m},\hat{\zetav},\zeta_{n})]\cdot \prod_{j=1}^{m-1}S(\zeta_{m}-\zeta_{j})\prod_{i=n+1}^{k}S(\zeta_{i}-\zeta_{n})\\
=\frac{1}{2i\pi}\Big( 1-\prod_{p=1}^{k}S(\zeta_{p}-\zeta_{1})\Big)\prod_{j=1}^{m-1}S(\zeta_{m}-\zeta_{j})\prod_{i=n+1}^{k}S(\zeta_{i}-\zeta_{n})\cdot F_{k-2}(\hat{\zetav})\\
=\frac{1}{2\pi i}\Big( 1-\prod_{p=1}^{k}S(\zeta_{p}-\zeta_{m})\Big)\prod_{j=1}^{m-1}S(\zeta_{m}-\zeta_{j})\prod_{i=n+1}^{k}S(\zeta_{i}-\zeta_{n})F_{k-2}(\hat{\zetav})\\
=-\prod_{q=1}^{k}S(\zeta_{q}-\zeta_{m})\Big(  1-\prod_{p=1}^{k}S(\zeta_{m}-\zeta_{p})\Big)\prod_{j=1}^{m-1}S(\zeta_{m}-\zeta_{j})\prod_{i=n+1}^{k}S(\zeta_{i}-\zeta_{n})\\
=-\prod_{q=m}^{n}S(\zeta_{q}-\zeta_{m})\Big(  1-\prod_{p=1}^{k}S(\zeta_{m}-\zeta_{p})\Big),
\end{multline}
where in the third equality we made use of \ref{it:frecursion} for $l=1,r=k$. So, we find, $l<r$,
\begin{equation}
\res_{\zeta_{n}-\zeta_{m}=i\pi}F_{k}(\zetav)=-\frac{1}{2i\pi}\Big( \prod_{j=m}^{n}S_{j,m}\Big)\Big( 1-\prod_{p=1}^{k}S_{m,p}\Big)F_{k-2}(\hat{\zetav}).
\end{equation}

\section{Pointwise bounds} \label{sec:fkbounds}

Now we discuss the boundedness properties of $F_k$, namely \ref{it:fboundsreal} and \ref{it:fboundsimag}.

\begin{proposition}\label{proposition:fpointwise}
The functions $F_k$ fulfil the bounds \ref{it:fboundsreal} and \ref{it:fboundsimag}.
\end{proposition}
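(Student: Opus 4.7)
The proof splits into the nodal bounds \ref{it:fboundsreal} and the pointwise bounds \ref{it:fboundsimag}, which I would treat separately.

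For \ref{it:fboundsreal}, the strategy is to reduce every node $\lambdav^{(k,j+k\ell)}$ of the extended stair $\gcal^k_1$ to a node of the original graph $\gcal^k_0$, where the required bound is already available as \ref{it:fpboundsreal}. Each such node is reached from a node of $\gcal^k_0$ by some combination of the $2\pi\piv$-shift~\eqref{fperiodica}, the permuted-stair rewriting~\eqref{stairs}, and (for nodes in permuted simplices) the $S$-symmetry~\eqref{permf}. Each of these operations expresses the boundary value of $F_k$ at the extended node as the corresponding boundary value on $\gcal^k_0$ multiplied by products of $S^\sigma$-factors evaluated in real arguments. Since $|S(\theta)|=1$ for $\theta\in\rbb$, these factors are unimodular and therefore leave the weighted cross-norm $\|\cdot\|_{m\times n}^\omega$ invariant. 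Careful bookkeeping of the direction $\nuv^{(k,j)}$ of approach ensures that the distributional limit is always taken from inside $\ich\gcal^k_1$, so that no pole hypersurface is crossed.

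For \ref{it:fboundsimag}, I would introduce the weighted function
\[
H_\pm(\zetav) := F_k(\zetav)\,\exp\!\Bigl(\mp i \mu r \sum_{j=1}^k \sinh\zeta_j\Bigr)\prod_{j=1}^k \exp\!\Bigl(-\tfrac{1}{a_\omega}\oa(\pm\sinh\zeta_j)\Bigr)
\]
on $\tube(\ich \gcal^k_\pm)$, in analogy with the function $\hat F_k$ used in the proof of Prop.~\ref{proposition:fpbounds}. This is holomorphic on the interior of the tube: the first-order poles of $F_k$ at $\zeta_n-\zeta_m=i\pi$ lie on $\partial\ich\gcal^k_\pm$, the poles of $S$ do not enter this region by Def.~\ref{def:Smatrix}, and $\pm\sinh\zeta_j$ takes values in the open upper half plane where $\oa$ is analytic. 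I would then show that $\|H_\pm(\cdot + i\lambdav)\|_\times$ is uniformly bounded for $\lambdav$ ranging over the closed stair $\overline{\ich\gcal^k_\pm}$: on open edges this is exactly \ref{it:fpboundsimag}, and at the corner nodes $\lambdav^{(k,\pm j)}$ the bound follows by combining \ref{it:fboundsreal} (just established) with the pointwise estimate $|e^{-\oa(\pm\sinh\zeta)/a_\omega}|\leq e^{\omega(1)/a_\omega}e^{-\omega(\cosh\theta)/a_\omega}$ derived as in Prop.~\ref{proposition:fpbounds}, together with the comparison~\eqref{eq:crossnormcomparison}. An iterated application of the maximum-modulus principle for cross-norms (Lemma~\ref{lemma:maxmodcross}, used strip by strip along the convex polysimplex) then propagates the boundary estimate to all interior $\lambdav\in\ich\gcal^k_\pm$.

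The final step, which I expect to be the main technical obstacle, is the passage from this uniform cross-norm bound to the pointwise estimate carrying the $\operatorname{dist}(\im\zetav,\partial\ich\gcal_\pm)^{-k/2}$ factor. For fixed $\zetav_0=\thetav_0+i\lambdav_0$ with $\rho := \operatorname{dist}(\lambdav_0,\partial\ich\gcal^k_\pm)$, a polydisc of polyradius $\rho/(2k)$ around $\zetav_0$ lies inside $\tube(\ich\gcal^k_\pm)$. Iterating Cauchy's integral formula in each complex variable, $H_\pm(\zetav_0)$ is represented as a pairing of $H_\pm(\cdot+i\lambdav)$ against a tensor product of one-dimensional contour kernels; deforming these contours and exploiting the tensor-product structure of the cross-norm, the pairing is dominated by $\|H_\pm(\cdot + i\lambdav)\|_\times$ times a product of $L^2$-normalised one-dimensional indicator functions of real intervals of length of order $\rho$, each contributing a factor $\rho^{-1/2}$ and altogether yielding the required $\rho^{-k/2}$. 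Dividing by the weight factor and using \ref{it:omegaestimate} to compare $\re\oa(\pm\sinh\zeta)$ with $\omega(\cosh\re\zeta)$ finally produces~\ref{it:fboundsimag}.
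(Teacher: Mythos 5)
Your approach is essentially the paper's: reduce \ref{it:fboundsreal} to \ref{it:fpboundsreal}, weight $F_k$ by exponential factors to produce a holomorphic $H_\pm$ matching \ref{it:fpboundsimag}, propagate the cross-norm bound into the interior via Lemma~\ref{lemma:maxmodcross}, and extract the pointwise estimate by a polydisc mean-value argument. Three remarks on the execution.

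First, for \ref{it:fboundsreal} the paper's argument is much shorter than yours. The nodes $\lambdav^{(k,j+k\ell)}$ appearing in \ref{it:fboundsreal} are exactly the nodes of $\gcal^k_1$, which are reached from $\gcal^k_0$ purely by shifts of $2\pi\piv$; the $2\pi\piv$-periodicity of $F_k'$ on that stair (Eq.~\eqref{defonestair}) makes the condition manifestly invariant under $\ell\to\ell\pm 2$, reducing it to the $\ell\in\{0,-1\}$ cases, which are precisely \ref{it:fpboundsreal}. No permuted-stair rewriting~\eqref{stairs} or $S$-symmetry~\eqref{permf} is needed — those would enter only if \ref{it:fboundsreal} referenced nodes of $\gcal^k_2$ or of permuted simplices, which it does not.

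Second, your $H_\pm$ does not coincide with the expression appearing in \ref{it:fpboundsimag}: you have $\mp i\mu r\sum_j\sinh\zeta_j$ where the condition has $\pm i\mu r\sum_j\sinh\zeta_j$, and you have inserted a factor $1/a_\omega$ in the $\oa$-exponent which is absent from the condition. The sign mismatch spoils the damping property (\cite{BostelmannCadamuro:characterization-wip} notwithstanding, cf.\ the remark after Def.~\ref{def:conditionFP}), and since $a_\omega\geq 1$ the factor $e^{-\oa/a_\omega}$ is \emph{larger} in modulus than $e^{-\oa}$, so the claim that ``on open edges this is exactly \ref{it:fpboundsimag}'' does not hold with your normalization. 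Both are presumably slips; the paper's $H_\pm(\zetav)=\exp\big(\pm i\mu r\sum_j\sinh\zeta_j-\sum_j\oa(\pm\sinh\zeta_j)\big)F_k(\zetav)$ is literally the quantity bounded in \ref{it:fpboundsimag}, so the edge bound is immediate, and you do not actually need the extra step at the corner nodes (which is anyway subsumed by $\bar\gcal$ in Lemma~\ref{lemma:maxmodcross}).

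Third, the ``main technical obstacle'' you identify in your last paragraph — passing from a uniform cross-norm bound to a pointwise bound with the $\operatorname{dist}(\im\zetav,\partial\ich\gcal_\pm)^{-k/2}$ factor — is already Prop.~\ref{proposition:pointwise} in Appendix~\ref{sec:graphs}, proved there by the polydisc mean-value argument you sketch (with $L^2$-normalized characteristic functions of discs of radius $\sim\operatorname{dist}/\sqrt{k}$). You can simply cite it rather than rederive it.
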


\begin{proof}

We notice that \ref{it:fboundsreal} is invariant under the transformation $\ell \rightarrow \ell \pm 2$ due to the property of periodicity \eqref{defonestair}. Hence, \ref{it:fboundsreal} follows directly from \ref{it:fpboundsreal}. To prove \ref{it:fboundsimag}, we consider the function
\begin{equation}
H_\pm(\zetav):= \exp\big(\pm i \mu r \sum_j \sinh \zeta_j -\sum_{j}\oa(\pm \sinh \zeta_{j})\big)\label{fhdef}
F_k( \zetav ).
\end{equation}
From condition \ref{it:fpboundsimag}, we know that
\begin{equation}
   \gnorm{H_\pm(\cdotarg + i \lambdav)}{\times} \leq c \quad \text{for $\lambdav$ on an edge of $\gcal_\pm^k$. }
\end{equation}
By the maximum modulus principle, Lemma~\ref{lemma:maxmodcross}, we have that the same bound holds for all $\lambdav\in\ich\gcal_\pm^k$. Then, applying Prop.~\ref{proposition:pointwise}, we find
\begin{equation}
   H_\pm(\thetav + i \lambdav) \leq c'  \operatorname{dist}(\lambdav,\partial \ich \gcal_\pm)^{-k/2}.\label{hboundinterior}
\end{equation}
By computing $\re\oa(\pm\sinh (\theta_j+i\lambda_j)) \leq a_\omega \omega(\cosh \theta_j) + b_\omega$, we find from \eqref{hboundinterior} and \eqref{fhdef} the bound \ref{it:fboundsimag} for $F_k$.
\end{proof}

The proof of Theorem \ref{theorem:FptoF} is a consequence of Propositions~\ref{proposition:extendeverywhere}, \ref{proposition:residua} and \ref{proposition:fpointwise}.

\chapter{(F) \texorpdfstring{$\Rightarrow$}{=>} (A)}\label{sec:ftoa}

In this chapter we want to prove that given a family of meromorphic functions $F_k$ with the properties (F), then the Araki expansion defines a quadratic form which is $\omega$-local in a double cone. In other words, we want to prove the following theorem:
\begin{theorem}\label{theorem:FtoA}
If $(F_k)$ is a sequence of functions fulfilling (F), then
\begin{equation}\label{eq:afromf}
A := \sum_{m,n=0}^\infty \int  \frac{d^{m}\theta d^n \eta}{m!n!}F_{m+n}(\thetav+i\zerov,\etav+i\piv-i\zerov) z^{\dagger m}(\thetav) z^n(\etav)
\end{equation}
defines a quadratic form fulfilling (A).
\end{theorem}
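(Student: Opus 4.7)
The plan is to reverse the flow of Chapters~\ref{sec:atofp} and~\ref{sec:fptof}: from functions $F_k$ fulfilling~(F) I first extract boundary distributions $F'_k$ satisfying~(F'), identify these as the Araki coefficients of $A$ and of $JA^*J$, and then invert the reasoning of Lemma~\ref{lemma:analbouf} to conclude $\omega$-locality in $\ocal_r$.

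To set up the framework, I would first define $f_{m,n}(\thetav,\etav) := F_{m+n}(\thetav+i\zerov,\,\etav+i\piv-i\zerov)$, interpreted as the distributional boundary value of $F_{m+n}$ at the node $\lambdav^{(m+n,n)}$ approached from $\ich\gcal^{m+n}_+$ along the direction $\nuv^{(m+n,n)}$ (which does land in the strict interior by inspection of the components). Condition~\ref{it:fboundsreal} with $k=m+n$, $j=n$, $\ell=0$ gives $\onorm{f_{m,n}}{m\times n}<\infty$, while~\ref{it:fsymm} makes $f_{m,n}$ $S$-symmetric in its first $m$ and last $n$ variables separately. Prop.~\ref{proposition:expansionunique} then yields $A\in\qf^\omega$ with $\cme{m,n}{A}=f_{m,n}$. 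Next, I would define CR distributions $F'_k$ on $\tube(\gcal^k_0)$ by boundary values of $F_k$ at the nodes $\lambdav^{(k,j)}$, $-k\leq j\leq k$, in the direction $\nuv^{(k,|j|)}$ (for $\gcal^k_-$, shifted appropriately using periodicity), and verify conditions~(F'): \ref{it:fpmero} follows from~\ref{it:fmero} with~\ref{it:fboundsimag} providing the distributional boundary values; \ref{it:fpperiod} from~\ref{it:fperiod} evaluated at the extreme nodes $\lambdav^{(k,\pm k)}$; \ref{it:fpsymm} directly from~\ref{it:fsymm}; \ref{it:fpboundsreal} from~\ref{it:fboundsreal}; and \ref{it:fpboundsimag} from~\ref{it:fboundsimag}. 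The recursion relation~\ref{it:fprecursion} requires more care: iterating the single-pole residue formula~\ref{it:frecursion} across the $|C|$ hyperplanes that separate a node of $\gcal^k_+$ from one of $\gcal^k_-$, and invoking~\ref{it:fperiod} to track the $2\pi i$-shifted contributions, organises the boundary-value difference into the sum over contractions appearing on the right-hand side of~\ref{it:fprecursion}.

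The identification $\cme{m,n}{A}(\thetav,\etav)=F'_{m+n}(\thetav,\etav+i\piv)$ is by construction of $F'_{m+n}$ at the node $\lambdav^{(m+n,n)}$. For $\cme{m,n}{JA^*J}(\thetav,\etav)=F'_{m+n}(\thetav-i\piv,\etav)$, I would match Prop.~\ref{proposition:fmnreflected} against~\ref{it:fprecursion} just proven: both express the same sum over contractions in terms of the $F'_{m+n-2|C|}$ at neighbouring nodes, so the two agree term by term. Since $\ocal_r=\wcal_{-r}\cap\wcal_r'$, $\omega$-locality of $A$ in $\ocal_r$ reduces by Def.~\ref{definition:omegalocal} to $\omega$-locality of $A$ in $\wcal_{-r}$ and of $JA^*J$ in $\wcal_{-r}$. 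For the former, I would expand $\hscalar{\psi}{[A,\phi(f)]\chi}$ via Theorem~\ref{theorem:arakiexpansion} for $\psi,\chi\in\fpno$ and $f\in\dcal^\omega(\wcal_{-r}')$. Each term involves $\cme{m,n}{A}$ smeared against $f^\pm$, and the fact that $F'_{m+n}$ is a CR distribution on $\tube(\gcal^{m+n}_+)$ with adjacent boundary values $\cme{m,n}{A}$ and $\cme{m-1,n+1}{A}$ permits a deformation of the $\theta_m$-contour from $\rbb$ to $\rbb+i\pi$. The Paley-Wiener decay for $f^-$ from Prop.~\ref{prop:omegapw} (which delivers $e^{-\mu r\cosh\theta\sin\lambda}e^{-\omega(\cosh\theta)/a_\omega}$ precisely because $\supp f\subset\wcal_{-r}'$) absorbs the growth $e^{\mu r|\im\sinh\zeta_j|+c'\omega(\cosh\theta_j)}$ from~\ref{it:fboundsimag}; the shifted term matches the annihilator contribution, and the commutator cancels term by term. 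The analogous argument on $\tube(\gcal^{m+n}_-)$ applied to $JA^*J$ completes the proof.

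The hard part will be executing the final contour deformation rigorously. The node bounds~\ref{it:fpboundsreal} are only of cross-norm type, while~\ref{it:fpboundsimag} is pointwise but degenerates like $\operatorname{dist}(\im\zetav,\partial\ich\gcal^k_\pm)^{-k/2}$ near the boundary of the tube. The cleanest route is to first integrate the wave functions $\psi,\chi$ and all but one $f^\pm$ factor against the free variables, obtaining by Fubini a scalar analytic function of $\theta_m$ on $\strip(0,\pi)$ with genuine pointwise estimates, and only then invoke Cauchy's theorem. The exponential matching between~\ref{it:fboundsimag} and Prop.~\ref{prop:omegapw} is tight, so the wedge parameter $r$ must enter consistently in both — this matching is exactly what the geometry $\supp f\subset\wcal_{-r}'$ provides. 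Finally, the exchange of the infinite Araki sum with the $\theta_m$-integration along the shifted contour needs a dominated convergence argument grounded in the summability available from~(F') together with the finite-particle-number cutoffs on $\psi,\chi$.
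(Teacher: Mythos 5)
Your general plan — establish $A\in\qf^\omega$ via Prop.~\ref{proposition:expansionunique}, identify the coefficients of $JA^*J$, and prove wedge locality by a contour shift absorbing the Paley--Wiener decay of $f^-$ — is the right skeleton, and steps (1)--(2) match the paper. But there are two genuine gaps in steps (3)--(4).

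The decisive one is your choice of field in the commutator. You propose to expand $\hscalar{\psi}{[A,\phi(f)]\chi}$ directly and deform the $\theta_m$-contour. But the Zamolodchikov operators do not commute cleanly among themselves: $\zd(\theta)\zd(\xi)=S(\theta-\xi)\zd(\xi)\zd(\theta)$ gives $[\zd(\theta),\zd(\xi)]=(S(\theta-\xi)-1)\zd(\xi)\zd(\theta)$, so $[z^{\dagger m}z^n(\cme{m,n}{A}),\,\phi(f)]$ does not reduce to a single $\xi$-integral of $\cme{m,n}{A}$ against $f^\pm$; there are $S$-factor tails weighted by $(S-1)$ that destroy the one-variable cancellation structure. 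The paper avoids this by computing $[A,\phi'(g)]$ with the \emph{reflected} operators: the commutators $[z(g)',\zd(\theta)]=B^{g,\theta}$ and $[z(g)',z(\theta)]=0$ from Eqs.~\eqref{comzpz},~\eqref{zzp} are multiplication operators, and Prop.~\ref{proposition:aphicomm} then puts $[A,\phi'(g)]$ in a form where the cancellation between the $\cme{m,n+1}{A}(\thetav,\xi,\etav)$ and $\cme{m+1,n}{A}(\thetav,\xi,\etav)$ terms is precisely a one-dimensional shift $\xi\to\xi+i\pi$, realized through the identity $B^{g^-,\xi+i\pi}=(B^{\overline{g^+},\xi})^*$. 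Without the primed operators and the $B^{g,\theta}$ structure, the contour-shift argument you sketch never gets a single well-defined integration variable to move. (Also note: $\omega$-locality of $A$ in $\wcal_r'$ reduces by Def.~\ref{definition:omegalocal} to $JA^*J$ $\omega$-local in $\wcal_r$, not $\wcal_{-r}$; this is a minor sign slip but shows the bookkeeping needs care.)

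The second gap is the derivation of the $JA^*J$ identification. You propose to establish \ref{it:fprecursion} from \ref{it:frecursion} by ``iterating the single-pole residue formula across the $|C|$ hyperplanes'' and then match term by term against Prop.~\ref{proposition:fmnreflected}. This is indeed what must be proven, but iterating single-pole residues in several complex variables, and organizing the resulting contributions as a sum over contractions, is not a one-line bookkeeping exercise: one needs to track which directions the boundary is approached from, why hyperplanes corresponding to non-contractions contribute zero, and why the combinatorics close up. The paper spends Lemma~\ref{lemma:resgenrecursion}, Prop.~\ref{proposition:fshifted}, and the multi-variable residue machinery of Appendix~\ref{sec:bvlemma} (Lemma~\ref{lemma:onepole} and Prop.~\ref{proposition:multivarres}) on exactly this; the latter handles the case where the different approach directions $\bv^\pm,\bv^\bot,\cv$ relative to several poles must be tracked simultaneously. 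Your plan does not address this, and without it the claimed identification $\cme{m,n}{JA^*J}(\thetav,\etav)=F'_{m+n}(\thetav-i\piv,\etav)$ is unsupported.

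Finally, the paper's route is structurally different from yours: it proves $JA^*J = \sum\int F^\pi_{m+n}z^{\dagger m}z^n/(m!n!)$ with $F^\pi_k:=F_k(\cdot+i\piv)$ (Prop.~\ref{proposition:fshifted}), shows $(F^\pi_k)$ again satisfy (F) (Lemma~\ref{lemma:shiftpi}), and applies the \emph{same} wedge-locality proposition to $JA^*J$. This avoids having to run a separate argument on $\tube(\gcal^k_-)$, which is what you would be forced to do. Your approach could in principle be made to work but needs both the reflected-field machinery and the multi-variable residue calculus filled in.
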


\section{Well-definedness} \label{sec:definesq}

We can show that \eqref{eq:afromf} is well-defined. Indeed, set $g_{mn}(\thetav,\etav):= F_{m+n}(\thetav+i\zerov,\etav+i\piv-i\zerov)$. We have from \ref{it:fboundsreal} that $\onorm{g_{mn}}{m\times n}<\infty$. Hence, by applying Prop.~\ref{proposition:expansionunique}, we have that the series in \eqref{eq:afromf} is a well-defined quadratic form $A\in\qf^\omega$.

\section{Commutator for creators-annihilators} \label{sec:zdzcommute}	

In order to show that $A$ is $\omega$-local in a double cone, we need to compute the commutators of $A$ with the wedge-local fields $\phi(x),\phi'(x)$, and to show that they vanish if $x$ is in certain regions in Minkowski space.

We compute these commutators from the Araki expansion and we express them in terms of the Araki coefficients.

To that end, first we compute the commutator $\lbrack z^{\dagger m}(\thetav) z^n(\etav), z^{\# \prime} (\betav) \rbrack$ in operator form, where $z^{\# \prime} = z^{\prime} ,z^{\dagger \prime} $; an expression for this commutator generalizes the commutation relations \eqref{comzpz} and involves the multiplication operator $B^{g,\theta}$, which is defined in Eq.~\eqref{multop}.

It is useful the following lemma:
\begin{lemma}
Let $g\in\mathcal{H}_{1}$. The following exchange relations hold on $\fpn$ (in the sense of operator-valued distributions):
\begin{equation}
B^{g,\theta'}\zd(\theta)=S(\theta'-\theta)\zd(\theta)B^{g,\theta'}.\label{combz}
\end{equation}
\end{lemma}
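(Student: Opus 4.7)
The plan is to verify the identity \eqref{combz} by applying both sides to an arbitrary $S$-symmetric $n$-particle wave function and comparing the resulting distributions on the $(n+1)$-particle space. Since $B^{g,\theta'}$ is block-diagonal with respect to particle number and acts as a multiplication operator in each sector, whereas $\zd(\theta)$ shifts particle number by one, the nontriviality of the commutation relation is measured precisely by the ``new'' $S$-factor that $B^{g,\theta'}$ picks up when the rapidity $\theta$ is added to the list of arguments. I expect this single factor to be exactly $S(\theta'-\theta)$, giving \eqref{combz}.

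More concretely, I would first recall that for $\Psi \in \hcal_n$ one has, as an operator-valued distribution,
\begin{equation*}
(\zd(\theta)\Psi)(\theta_1,\ldots,\theta_{n+1}) = \frac{1}{\sqrt{n+1}\,n!} \sum_{\sigma\in\perms{n+1}} S^\sigma(\theta_1,\ldots,\theta_{n+1})\,\delta(\theta_{\sigma(1)}-\theta)\,\Psi(\theta_{\sigma(2)},\ldots,\theta_{\sigma(n+1)}),
\end{equation*}
obtained by writing $\zd(f)\Psi = \sqrt{n+1}\,P^S_{n+1}(f\otimes\Psi)$ and specializing to $f=\delta(\cdot-\theta)$. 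Then I would compute the left-hand side of \eqref{combz} applied to $\Psi$ by multiplying the above expression pointwise by $g(\theta')\prod_{j=1}^{n+1}S(\theta'-\theta_j)$ (the action of $B^{g,\theta'}$ in the $(n+1)$-particle sector). For the right-hand side, I would first apply $B^{g,\theta'}$ on $\hcal_n$, giving the wave function $g(\theta')\prod_{j=1}^{n}S(\theta'-\theta_j)\Psi(\theta_1,\ldots,\theta_n)$, and then apply $\zd(\theta)$ by the same distributional formula.

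In the resulting sum for the right-hand side, each term carries a product $\prod_{i\in\{1,\ldots,n+1\}\setminus\{\sigma(1)\}}S(\theta'-\theta_i)$, which under the factor $\delta(\theta_{\sigma(1)}-\theta)$ equals $\bigl(\prod_{i=1}^{n+1}S(\theta'-\theta_i)\bigr)/S(\theta'-\theta)$. Factoring out this common product, the two sides agree up to the single factor $S(\theta'-\theta)^{-1}$, which after rearrangement yields \eqref{combz}. The only subtlety is the distributional meaning of the identity (i.e.\ the replacement $\theta_{\sigma(1)}\mapsto\theta$ inside the $S$-factor under the delta), but since $S$ is smooth on the real line this is straightforward. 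No significant obstacle is expected; the argument is purely algebraic once the formula for $\zd(\theta)$ as an $S$-symmetric distribution is written out.
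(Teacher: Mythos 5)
Your proof is correct and follows the same idea as the paper: apply both operators to an $n$-particle vector, use that $B^{g,\theta'}$ multiplies the $(n+1)$-particle wave function by $g(\theta')\prod_{j=1}^{n+1}S(\theta'-\theta_j)$, and observe that the delta-supported summands of $\zd(\theta)\Psi$ contribute exactly one extra factor $S(\theta'-\theta)$. The paper states this in a single compressed line, whereas you spell out the $S$-symmetrized distributional formula for $\zd(\theta)$ explicitly, but the mechanism is identical.
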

\begin{proof}
\eqref{combz} can be computed directly from the definitions. We apply $B^{g,\theta'}\zd(\theta)$ to an arbitrary $n$-particle vector $\Psi_{n} \in\fpn$. We have
\begin{equation}
B^{g,\theta'}z^{\dagger}(\theta)\Psi_{n}=g(\theta')\prod_{j=1}^{n}S(\theta'-\theta_{j})S(\theta'-\theta)z^{\dagger}(\theta)\Psi_{n}=S(\theta'-\theta)z^{\dagger}(\theta)B^{g,\theta'}\Psi_{n}.
\end{equation}
\end{proof}

Now, we can prove the following lemma:
\begin{lemma}
Let $g\in\mathcal{H}_{1}$. The following commutation relations hold in the sense of operator-valued distributions on $\fpn$:
\begin{align}
[z(\overline{g})',\zd(\theta_{1})\ldots \zd(\theta_{m})]&=\sum_{j=1}^{m} \Big(\prod_{l=j+1}^{m}S(\theta_{j}-\theta_{l})\Big)\zd(\theta_{1})\ldots\widehat{\zd(\theta_{j})}\ldots \zd(\theta_{m})B^{g,\theta_{j}},\label{comm2}
\\
[\zd(\overline{g})',z(\theta_{1})\ldots z(\theta_{m})]&=-\sum_{j=1}^{m}\Big(\prod_{l=1}^{j-1}S(\theta_{l}-\theta_{j})\Big)(B^{\bar g,\theta_{j}})^{*}z(\theta_{1})\ldots\widehat{z(\theta_{j})}\ldots z(\theta_{m}).\label{comm4}
\end{align}
\end{lemma}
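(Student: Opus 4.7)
The plan is to proceed by induction on $m$ for both identities, using the basic commutation relations \eqref{comzpz}, the triviality of the $z$--$z$ and $\zd$--$\zd$ cross-commutators \eqref{zzp}, the Leibniz-type rule $[X, YZ] = [X, Y]Z + Y[X,Z]$, and the exchange relation \eqref{combz} to move the multiplication operators $B^{g,\theta}$ past the chains of creators or annihilators.

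For \eqref{comm2}, the base case $m=1$ reduces to the first equation of \eqref{comzpz}, since the empty product $\prod_{l=2}^{1}S(\theta_1 - \theta_l)$ equals $1$. For the inductive step, I would write
\begin{equation*}
[z(\bar g)', \zd(\theta_1)\cdots\zd(\theta_m)] = [z(\bar g)', \zd(\theta_1)]\,\zd(\theta_2)\cdots\zd(\theta_m) + \zd(\theta_1)\,[z(\bar g)', \zd(\theta_2)\cdots\zd(\theta_m)],
\end{equation*}
then substitute $[z(\bar g)', \zd(\theta_1)] = B^{g,\theta_1}$ from \eqref{comzpz} in the first term, and iterate \eqref{combz} to pull $B^{g,\theta_1}$ across each remaining $\zd(\theta_l)$ ($l = 2,\ldots, m$), picking up the factor $\prod_{l=2}^{m}S(\theta_1 - \theta_l)$; this produces exactly the $j=1$ contribution of the claimed sum. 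Applying the induction hypothesis to the second term and multiplying by $\zd(\theta_1)$ on the left yields, after relabelling, precisely the contributions $j = 2,\ldots, m$ (note that the $S$-factors $\prod_{l=j+1}^{m}S(\theta_j-\theta_l)$ are unaffected since only indices $l > j$ appear, and the $\zd(\theta_1)$ simply slots into its natural position within $\zd(\theta_1)\cdots\widehat{\zd(\theta_j)}\cdots\zd(\theta_m)$).

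For \eqref{comm4}, I would apply the same induction scheme with $\zd(\bar g)'$ in place of $z(\bar g)'$ and $z$-operators in place of $\zd$-operators. The base case is the second equation of \eqref{comzpz}. The main difference is that $(B^{\bar g,\theta})^\ast$ now appears on the \emph{left}, which dictates that the decomposition
\begin{equation*}
[\zd(\bar g)', z(\theta_1)\cdots z(\theta_m)] = [\zd(\bar g)', z(\theta_1)]\,z(\theta_2)\cdots z(\theta_m) + z(\theta_1)\,[\zd(\bar g)', z(\theta_2)\cdots z(\theta_m)]
\end{equation*}
should be recombined by pushing the $(B^{\bar g,\theta_j})^\ast$ arising in the inductive term to the far left past $z(\theta_1)$. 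For this I would use the adjoint of \eqref{combz}, namely $z(\theta)(B^{\bar g,\theta'})^\ast = S(\theta-\theta')^{-1}(B^{\bar g,\theta'})^\ast z(\theta) = \overline{S(\theta'-\theta)}\,(B^{\bar g,\theta'})^\ast z(\theta)$; since $S$ is unitary on the real axis, $\overline{S(\theta'-\theta)} = S(\theta-\theta')^{-1}$ and the product of $S$-factors accumulated by sliding $(B^{\bar g,\theta_j})^\ast$ across $z(\theta_1),\ldots, z(\theta_{j-1})$ reorganises into $\prod_{l=1}^{j-1} S(\theta_l - \theta_j)$, matching the claimed formula.

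The main (very mild) obstacle is bookkeeping: keeping the $S$-factors and the ``hat'' omissions consistent across the inductive step, and verifying that the adjoint of \eqref{combz} indeed yields the inverse $S$-factor with the correct argument so that the final product reads $\prod_{l=1}^{j-1}S(\theta_l-\theta_j)$ (and not its inverse or a permuted variant). Since all identities are distributional and $\zd, z$ preserve $\fpn$, no analytic issue arises and the derivation is purely algebraic.
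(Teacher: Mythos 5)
Your proof plan is sound and reaches the correct conclusions, but it takes a genuinely different route from the paper's own proof. For \eqref{comm2} the paper peels off the \emph{last} factor $\zd(\theta_m)$, applies the induction hypothesis to $\zd(\theta_1)\cdots\zd(\theta_{m-1})$ on the left, and then slides $B^{g,\theta_j}$ past the single new factor $\zd(\theta_m)$ to the right; you peel off the \emph{first} factor $\zd(\theta_1)$ instead and slide $B^{g,\theta_1}$ rightward past the remaining $m-1$ creators. Both bookkeeping schemes give the same result. For \eqref{comm4} the difference is more substantial: the paper simply takes the adjoint of \eqref{comm2} (after replacing $g$ by $\bar g$ and reversing the order of the $\theta$'s, using $\overline{S(\theta)}=S(-\theta)$), which is short but requires a careful relabelling of indices; you run an independent induction and must therefore supply the adjointed exchange relation for $(B^{\bar g,\theta'})^\ast$ past $z(\theta)$. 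Your version is a bit longer but arguably more transparent, as the two proofs become structurally parallel rather than one being derived from the other.

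One concrete slip to correct in your displayed formula for the adjoint of \eqref{combz}: you write
$z(\theta)(B^{\bar g,\theta'})^\ast = S(\theta-\theta')^{-1}(B^{\bar g,\theta'})^\ast z(\theta) = \overline{S(\theta'-\theta)}(B^{\bar g,\theta'})^\ast z(\theta)$,
but the asserted equality $S(\theta-\theta')^{-1}=\overline{S(\theta'-\theta)}$ is wrong. From Def.~\ref{def:Smatrix} one has both $S(\theta)^{-1}=\overline{S(\theta)}$ and $S(\theta)^{-1}=S(-\theta)$, so $\overline{S(\theta'-\theta)} = S(\theta-\theta')$, \emph{not} $S(\theta-\theta')^{-1}$. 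Taking adjoints of \eqref{combz} directly gives $z(\theta)(B^{g,\theta'})^\ast = \overline{S(\theta'-\theta)}(B^{g,\theta'})^\ast z(\theta) = S(\theta-\theta')(B^{g,\theta'})^\ast z(\theta)$. Your second expression (and the final product $\prod_{l=1}^{j-1}S(\theta_l-\theta_j)$ you claim) is the correct one, so the argument goes through; only the first intermediate expression $S(\theta-\theta')^{-1}$ and the justification appealing to unitarity need to be fixed to read $S(\theta-\theta')=\overline{S(\theta'-\theta)}$. Also note that in the inductive step only $z(\theta_1)$ is newly crossed; the crossings past $z(\theta_2),\ldots,z(\theta_{j-1})$ are already contained in the induction hypothesis, so the phrase ``sliding across $z(\theta_1),\ldots,z(\theta_{j-1})$'' conflates the step with the full accumulated product.
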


\begin{proof}
Our proof of Eq.~\eqref{comm2} is based on induction on $m$.
For $m=1$, Eq.~\eqref{comm2} reduces to \eqref{comzpz}, and is proven as in \cite[Lemma 4.2.5]{Lechner:2006}.

So, assume that Eq.~\eqref{comm2} holds for $m-1$ in place of $m$. We have
\begin{multline}\label{eq:computecommutator}
[z(\overline{g})',\zd(\theta_{1})\ldots \zd(\theta_{m-1})\zd(\theta_{m})]\\
=[z(\overline{g})',\zd(\theta_{1})\ldots \zd(\theta_{m-1})]\zd(\theta_{m})
+\zd(\theta_{1})\ldots \zd(\theta_{m-1})[z(\overline{g})',\zd(\theta_{m})]\\
=\sum_{j=1}^{m-1}\Big(\prod_{l=j+1}^{m-1}S(\theta_{j}-\theta_{l})\Big) \zd(\theta_{1})\ldots \widehat{\zd(\theta_{j})}\ldots \zd(\theta_{m-1})B^{g,\theta_{j}}\zd(\theta_{m})\\
+\zd(\theta_{1})\ldots \zd(\theta_{m-1})B^{g,\theta_{m}},
\end{multline}
where in the second equality we made use of Eq.~\eqref{comzpz} and Eq.~\eqref{comm2} in the case $m-1$.

Now, using the exchange relation \eqref{combz}, we bring $B^{g,\theta_{j}}$ to the last position in the third line of \eqref{eq:computecommutator}, we find
\begin{multline}
[z(\overline{g})',\zd(\theta_{1})\ldots \zd(\theta_{m})]=\sum_{j=1}^{m-1}\prod_{l=j+1}^{m}S(\theta_{j}-\theta_{l})\zd(\theta_{1})\ldots\widehat{\zd(\theta_{j})}\ldots \zd(\theta_{m})B^{g,\theta_{j}}\\
+\zd(\theta_{1})\ldots \zd(\theta_{m-1})B^{g,\theta_{m}}.
\end{multline}
Hence, we have
\begin{equation}
[z(\overline{g})',\zd(\theta_{1})\ldots \zd(\theta_{m})]=\sum_{j=1}^{m}\prod_{l=j+1}^{m}S(\theta_{j}-\theta_{l})\zd(\theta_{1})\ldots\widehat{\zd(\theta_{j})}\ldots \zd(\theta_{m})B^{g,\theta_{j}},
\end{equation}
which is \eqref{comm2}.

Then one can obtain \eqref{comm4} from \eqref{comm2} by taking the adjoint of \eqref{comm2}:
\begin{multline}
[z^{\dagger}(\bar{g})',z(\theta_{1})\ldots z(\theta_{m})]=([z^{\dagger}(\theta_{m})\ldots z^{\dagger}(\theta_{1}),z(g)'])^{*}\\
=-\sum_{j=1}^{m}\prod_{l=1}^{j-1}S(\theta_{l}-\theta_{j})(B^{\bar g,\theta_{j}})^{*}z(\theta_{1})\ldots\widehat{z(\theta_{j})}\ldots z(\theta_{m}),
\end{multline}
where in the second equality we made use of Eq.~\eqref{comm2}.
\end{proof}

\section{Commutator for Araki expansion} \label{sec:Afieldcomm}

Now, given a generic $A\in\qf^\omega$, we compute the commutator $[A,\phi'(x)]$ in terms of its Araki expansion.
\begin{proposition}\label{proposition:aphicomm}
Let $A\in Q^{\omega}$, $g\in \mathcal{D}(\mathbb{R}^{2})$. It holds that
\begin{multline}\label{maincommutator}
[A,\phi'(g)]=\sum_{m,n\geq 0}\int \frac{d^{m}\thetav d^{n}\etav}{m!n!}\int d\xi\;
 \Big( \cme{m,n+1}{A} (\thetav, \xi , \etav) z^{\dagger m}(\thetav) (B^{\overline{g^{+}},\xi})^{*}  z^{n}(\etav) \\
- \cme{m+1,n}{A} (\thetav,\xi,\etav) z^{\dagger m}(\thetav) B^{g^{-},\xi}  z^{n}(\etav)\Big).
\end{multline}
\end{proposition}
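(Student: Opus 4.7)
The plan is to start from the Araki expansion of $A$ provided by Theorem~\ref{theorem:arakiexpansion}, decompose $\phi'(g)$ into its reflected creation and annihilation parts, and then push each part through the normal-ordered monomials $z^{\dagger m}(\thetav)z^n(\etav)$ using the multi-particle exchange relations. Concretely, I would first observe that $\phi'(g) = J\phi(g^j)J = \zd(\overline{g^{+}})' + z(\overline{g^{-}})'$, where the identities $(g^j)^{\pm} = \overline{g^{\pm}}$ follow from the definition \eqref{fouriertransformg} after the substitution $x\mapsto -x$. By the mutual commutativity relations \eqref{zzp}, $\zd(\overline{g^{+}})'$ commutes with every $\zd(\theta)$ while $z(\overline{g^{-}})'$ commutes with every $z(\eta)$, so
\begin{equation*}
[z^{\dagger m}(\thetav)z^n(\etav),\phi'(g)] = z^{\dagger m}(\thetav)\,[z^n(\etav),\zd(\overline{g^{+}})'] + [z^{\dagger m}(\thetav),z(\overline{g^{-}})']\,z^n(\etav).
\end{equation*}
Each inner commutator then expands via \eqref{comm2} and \eqref{comm4} into a sum over a distinguished index $j$ of terms that remove one factor, insert a single $B$-operator, and carry a product of $S$-factors.

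The next step, which is the heart of the argument, is to use the $S$-symmetry of $\cme{m,n}{A}$ established in Prop.~\ref{proposition:fmnsymm} to rename the distinguished rapidity $\eta_j$ (respectively $\theta_j$) to a common name $\xi$ and permute it to a fixed position: the beginning of the $\eta$-block for the first commutator and the end of the $\theta$-block for the second. The permutation $\pi$ that performs this relocation has precisely $j-1$ inversions, all involving the index of $\xi$, so the factor $S^{\pi}$ produced by $S$-symmetry is exactly the pointwise inverse of the $S$-product coming from the commutation relation. Using $S(\alpha-\beta)S(\beta-\alpha) = 1$ on the real line (a consequence of Def.~\ref{def:Smatrix}(2)), all $S$-factors cancel and the resulting integrand becomes independent of $j$. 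Summing $j=1,\ldots,n$ (respectively $j=1,\ldots,m$) produces a multiplicity that absorbs the $1/n!$ (resp.\ $1/m!$), and reindexing $n\mapsto n+1$ (resp.\ $m\mapsto m+1$) yields exactly the two terms in the proposition, with $\cme{m,n+1}{A}$ paired to $(B^{\overline{g^{+}},\xi})^{\ast}$ and $\cme{m+1,n}{A}$ paired to $B^{g^{-},\xi}$.

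The main technical obstacle is the bookkeeping in the cancellation step: one has to verify that the inversion set of the chosen permutation matches the $S$-product generated by \eqref{comm2} or \eqref{comm4} on the nose, so that the cancellation goes through simultaneously for every $j$. This is a transparent but somewhat tedious counting exercise, very much in the spirit of Lemma~\ref{lemma:scpermute} and the case analysis in the proof of Prop.~\ref{proposition:fmnsymm}. The interchange of the Araki sum with the commutator is unproblematic: the identity is claimed only as a relation in $\qf^{\omega}$, so when tested between vectors in $\fpno$ of fixed particle number the sum over $(m,n)$ truncates to a finite one, and absolute convergence is guaranteed by the norm bounds of Prop.~\ref{proposition:fmnbound} together with Prop.~\ref{pro:zzdcrossnorm}.
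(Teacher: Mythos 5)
Your proposal matches the paper's proof of Proposition~\ref{proposition:aphicomm} step for step: decompose $\phi'(g)=\zd(\overline{g^+})'+z(\overline{g^-})'$, use \eqref{zzp} to reduce the commutator with $z^{\dagger m}z^n$ to a single $[z^n,\zd(\overline{g^+})']$ and $[z^{\dagger m},z(\overline{g^-})']$, expand via \eqref{comm2}/\eqref{comm4}, absorb the $S$-factors by $S$-symmetry (Prop.~\ref{proposition:fmnsymm}) while relocating the distinguished rapidity, and finally count the multiplicity and reindex. Your explicit identification of $S^{\pi}$ as the pointwise inverse of the commutation-relation $S$-product and the remark on truncation to finite particle number supply detail the paper only implies, but the route is the same.
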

\begin{proof}
We need to compute the commutator:
\begin{equation}
[A,\phi'(g)]=\Big\lbrack \sum_{m,n=0}^{\infty}\int \frac{d^{m}\thetav d^{n}\etav}{m!n!}\;\cme{m,n}{A}(\thetav,\etav) z^{\dagger m}(\thetav)z^{n}(\etav),z^{\dagger}(\overline{g^{+}})'+z(\overline{g^{-}})'\Big\rbrack.
\end{equation}
Using \eqref{zzp}, we find
\begin{multline}
[A,\phi'(g)]=\sum_{m,n=0}^{\infty}\int \frac{d^{m}\thetav d^{n}\etav}{m!n!}\;
\cme{m,n}{A}(\thetav,\etav) z^{\dagger m} (\thetav) [z^{n}(\etav),\zd(\overline{g^{+}})']\\
+\sum_{m,n=0}^{\infty}\int \frac{d^{m}\thetav d^{n}\etav}{m!n!}\;
\cme{m,n}{A}(\thetav,\etav) [z^{\dagger m}(\thetav),z(\overline{g^{-}})']z^{n}(\etav).
\end{multline}
Applying \eqref{comm2} and \eqref{comm4} to the formula above, we find
\begin{equation}
\begin{aligned}
\lbrack A,\phi'(g) \rbrack &=
\sum_{m\geq 0, n\geq 1}\int \frac{d^{m}\thetav d^{n}\etav}{m!n!}\;\cme{m,n}{A}(\thetav,\etav) \sum_{j=1}^{n}\ \Big(\prod_{l=1}^{j-1}S(\eta_{l}-\eta_{j})\Big) \times \\
&\qquad\qquad \times z^{\dagger m}(\thetav)(B^{\overline{g^{+}},\eta_{j}})^{*}z(\eta_{1})\ldots\widehat{z(\eta_{j})}\ldots z(\eta_{n})\\
&\quad -\sum_{m\geq 1, n\geq 0}\int \frac{d^{m}\thetav d^{n}\etav}{m!n!}\;\cme{m,n}{A}(\thetav,\etav)
\sum_{j=1}^{m}\Big(\prod_{l=j+1}^{m}S(\theta_{j}-\theta_{l})\Big) \times \\
&\qquad\qquad \times z^{\dagger}(\theta_{1})\ldots\widehat{\zd(\theta_{j})}\ldots \zd(\theta_{m})B^{g^{-},\theta_{j}}z^{n}(\etav).
\end{aligned}
\end{equation}
We call $\eta_{j}=:\xi$ in the first sum and $\theta_{j}=:\xi$ in the second sum; we permute the argument of $\cme{m,n}{A}$ so that they become $(\hat\thetav,\xi,\etav)$ and $(\thetav,\xi,\hat\etav)$, respectively; we notice that this cancels the $S$-factors in the sums. (Here we use Prop.~\ref{proposition:fmnsymm}.) Hence, we find:
\begin{multline}
[A,\phi'(g)]=\sum_{m\geq 0, n\geq 1}\int \frac{d^{m}\thetav d^{n-1}\hat{\etav}}{m!(n-1)!}\int d\xi\;
 \cme{m,n}{A}(\thetav,\xi,\hat\etav)
z^{\dagger m}(\thetav)(B^{\overline{g^{+}},\xi})^{*}z^{n-1}(\hat{\etav})\\
-\sum_{m\geq 1, n\geq 0}\int \frac{d^{m-1}\hat{\thetav} d^{n}\etav}{(m-1)!n!}\int d\xi\;
 \cme{m,n}{A}(\hat{\thetav},\xi,\etav)z^{\dagger m-1}(\hat{\thetav})B^{g^{-},\xi}z^{n}(\etav).
\end{multline}
Now, we relabel the summation indices, we find
\begin{multline}
[A,\phi'(g)]=\sum_{m,n\geq 0}\int \frac{d^{m}\thetav d^{n}\etav}{m!n!}\int d\xi\;
 \Big( \cme{m,n+1}{A} (\thetav, \xi , \etav) z^{\dagger m}(\thetav) (B^{\overline{g^{+}},\xi})^{*}  z^{n}(\etav) \\
- \cme{m+1,n}{A} (\thetav,\xi,\etav) z^{\dagger m}(\thetav) B^{g^{-},\xi}  z^{n}(\etav)\Big),
\end{multline}
which is \eqref{maincommutator}.
\end{proof}

\section{Localization in a left wedge} \label{sec:awedgelocal}

Using that we have a family of functions $F_k$ fulfilling the conditions (F), we want to prove that the operator $A$, given by \eqref{eq:afromf}, is localized in a shifted left wedge. This means that we have to prove that the commutator $[A,\phi'(g)]$ vanishes if $g$ has support in the corresponding right wedge. To show this, we use Prop.~\ref{proposition:aphicomm}, and the idea for the proof is as follows. Due to properties \ref{it:fmero} and \ref{it:fsymm} (see Prop.~\ref{proposition:expansionunique}), we have $\cme{m,n}{A}(\thetav,\etav)=F_{m+n}(\thetav+i\zerov,\etav+i\piv-i\zerov)$; then, it follows that
$\cme{m+1,n}{A}(\thetav,\xi+i\pi,\etav)=\cme{m,n+1}{A}(\thetav,\xi,\etav)$. In matrix elements between vectors of finite particle number we can compute directly $B^{g^-,\xi+i \pi} = (B^{\overline{g^+},\xi})\st$, using also that $g$ has compact support. Inserting this into Eq.~\eqref{maincommutator}, we see that $[A,\phi'(g)]$ vanishes if it is possible to shift the integration contour in $\xi$ from $\rbb$ to $\rbb+i\pi$.

The fact that we can shift the integration contours depends on the growth behaviour of the analytic functions involved, namely $F_k$ and $g$, and therefore it depends on the localization regions of $A$ and $g$. So, first we study the growth behaviour of these functions.

Hence, we define for fixed $m,n\in\nbb_0$, $f \in \dcal(\rbb^{m+n})$, $q \in \nbb_0$, and $\nuv \in \rbb^q$, the function
\begin{equation}\label{eq:kxi}
   K(\xi) := e^{-i\mu r \sinh \xi} \Big(\prod_{j=1}^q S(\xi-\nu_j)\Big)\int d^m\theta d^n\eta \, f(\thetav,\etav)  F_{m+n+1}(\thetav+i\zerov,\xi,\etav+i\piv-i\zerov).
\end{equation}
Since $F_k$ fulfils the conditions (F), this $K$ is analytic for $\xi\in\strip(0,\pi)$, with boundary values which are distributions. We also define for fixed $g \in \dcal(\rbb^2)$ the function,
\begin{equation}\label{eq:hxi}
  h(\xi) := e^{i\mu r \sinh \xi} g^-(\xi) .
\end{equation}
Since the support of $g$ is compact, this function is entire analytic.

Now, we want to prove the following lemma:

\begin{lemma} \label{lemma:Kbounds}
If the $F_{k}$ fulfil \ref{it:fmero} and \ref{it:fboundsimag}, then there exist $c,c'>0$ such that
\begin{equation}
    \lvert K(\xi+i\lambda) \rvert
  \leq
   \frac{c \, e^{c' \omega(\cosh \xi)}}{(\lambda(\pi-\lambda))^{(m+n)/2}}.\label{stimaK}
\end{equation}
\end{lemma}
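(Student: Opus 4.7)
The overall strategy is to derive the pointwise bound on $K$ from the interior pointwise estimate \ref{it:fboundsimag} for $F_{m+n+1}$ in the tube $\tube(\ich\gcal_+^{m+n+1})$, combined with the smoothness and compact support of $f\in\dcal(\rbb^{m+n})$, and to tame the resulting $\cosh\xi$-growth by a Phragm\'en--Lindel\"of-type argument in the single-variable strip $\strip(0,\pi)$.

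Concretely, for fixed $\xi\in\rbb$ and $\lambda\in(0,\pi)$, I would pick the interior point of $\tube(\ich\gcal_+^{m+n+1})$ whose imaginary parts are $\lambdav=(s_\theta, 2s_\theta,\ldots,ms_\theta,\lambda,\pi-ns_\eta,\ldots,\pi-s_\eta)$, with $s_\theta:=\lambda/(2m+2)$ and $s_\eta:=(\pi-\lambda)/(2n+2)$. The Euclidean distance from $\lambdav$ to $\partial\ich\gcal_+^{m+n+1}$ is then of order $\min(s_\theta,s_\eta)\sim\lambda(\pi-\lambda)/((m+n)\pi)$. Applying \ref{it:fboundsimag} at this point gives a pointwise bound on $F_{m+n+1}$ in which the $\thetav$- and $\etav$-slot growth factors $e^{\mu r|\im\sinh\zeta_j|+c'\omega(\cosh\re\zeta_j)}$ reduce to absolute constants on $\supp f$, since $\cosh\theta_j,\cosh\eta_j$ are bounded there and the imaginary perturbations $js_\theta,(n-j+1)s_\eta$ are at most $\pi/2$. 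The surviving contributions are the $\xi$-slot growth $e^{\mu r\cosh\xi\sin\lambda+c'\omega(\cosh\xi)}$ and the denominator $\min(s_\theta,s_\eta)^{-(m+n+1)/2}$.

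To transfer this interior bound to the distributional boundary-value pairing present in $K$, I would represent the integration against $f$ via a Cauchy integral in each of the $m+n$ directions $\thetav,\etav$ (using the analyticity of $F_{m+n+1}$ in a complex neighbourhood of the real axes and the absence of poles near the configuration under consideration), combined with one integration by parts against a derivative of $f$ per direction, permissible by the smoothness and compact support of $f$. Each such integration by parts should absorb a half-power of $\min(s_\theta,s_\eta)$, so that the final denominator reduces from the naive $(m+n+1)/2$ to the claimed $(m+n)/2$. Combined with the uniform bound on $\prod_j|S(\xi+i\lambda-\nu_j)|$ on $\strip(0,\pi)$ (Definition~\ref{def:Smatrix}) and the prefactor $|e^{-i\mu r\sinh(\xi+i\lambda)}|=e^{\mu r\cosh\xi\sin\lambda}$, this produces a preliminary estimate of the form $c\,e^{2\mu r\cosh\xi\sin\lambda+c'\omega(\cosh\xi)}(\lambda(\pi-\lambda))^{-(m+n)/2}$.

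The final step is to reduce the spurious $e^{2\mu r\cosh\xi\sin\lambda}$ in the estimate above to the admissible $e^{c'\omega(\cosh\xi)}$. For this I would pass to $\tilde K(\xi):=e^{+i\mu r\sinh\xi}K(\xi)$, which is analytic on $\strip(0,\pi)$ and has $|e^{+i\mu r\sinh(\xi+i\lambda)}|=e^{-\mu r\cosh\xi\sin\lambda}$ as a damping factor on both boundaries; $\tilde K$ can therefore be controlled at $\im\xi=0,\pi$ by a suitably $\omega$-weighted estimate on the pairing of $f$ against the distributional boundary values of $F_{m+n+1}$ at the nodes $\lambdav^{(m+n+1,n)},\lambdav^{(m+n+1,n+1)}$. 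A three-lines argument in the spirit of the Phragm\'en--Lindel\"of step used in Proposition~\ref{prop:omegapw}, adapted to weights of the form $e^{c'\omega(\cosh\xi)}$, then propagates the admissible growth into the strip, and undoing the substitution $K=e^{-i\mu r\sinh\xi}\tilde K$ yields the claimed bound. The principal technical obstacle will be interleaving the interior Cauchy/integration-by-parts argument (which produces the denominator $(\lambda(\pi-\lambda))^{-(m+n)/2}$) with the three-lines argument (which tames the $\cosh\xi$-growth) without the latter reintroducing further $\lambda,\pi-\lambda$ factors; achieving this cleanly likely requires intermediate weighted Sobolev-type norms incorporating simultaneously the $\omega$-damping in $\xi$ and the smoothness of $f$.
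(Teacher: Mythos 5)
The overall skeleton of your proposal—pick an interior point of $\ich\gcal_+^{m+n+1}$ whose imaginary parts interpolate between $0$ and $\pi$, apply \ref{it:fboundsimag} there, and exploit the smoothness of $f\in\dcal(\rbb^{m+n})$ to control the passage to distributional boundary values—is indeed the right idea, and it is also the approach the paper takes. However, the two steps where you differ from the paper both contain gaps.

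\textbf{(1) The transfer from interior to boundary.} The paper does not do a ``Cauchy integral in each of the $m+n$ directions'' followed by one integration by parts per direction. Instead it collapses the $m+n$ directions to a \emph{single} complex variable $z$, setting $G(z):=e^{-i\mu r\sinh(\xi+i\lambda)}F_{m+n+1}(\thetav+zh\nuv_L,\,\xi+i\lambda,\,\etav+i\piv-zh\nuv_R)$ with $h=\min(\lambda,\pi-\lambda)/(m+n+1)$, $\nuv_L=(1,\ldots,m)$, $\nuv_R=(n,\ldots,1)$, computes $\operatorname{dist}(\im\zetav,\partial\ich\gcal_+)\geq h\im z/\sqrt 2$, and then bounds iterated antiderivatives $G^{(-\ell)}$ of this one-variable function in order to pair with $\ell$-fold derivatives of (the reparametrized) $f$. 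The remaining $m+n-1$ real integrations are done trivially over $\supp f$. Your multidirectional Cauchy-integral scheme could perhaps be made to work, but the mechanism you invoke—``each integration by parts should absorb a half-power of $\min(s_\theta,s_\eta)$''—is not correct and not what happens: integrations by parts against $f$ regularize the singularity as the interior parameter ($\im z$ in the paper) tends to zero, making the boundary limit finite; they do \emph{not} lower the $h$-exponent, which stays at $k/2=(m+n+1)/2$ throughout (the $(m+n)/2$ in the Lemma's statement vs.\ $(m+n+1)/2$ in the body of the paper's derivation is a half-power bookkeeping slip, harmless downstream, but your ``$m+n$ integrations each absorb $1/2$'' would produce a total reduction of $(m+n)/2$, not $1/2$, so the arithmetic also does not match).

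\textbf{(2) The Phragm\'en--Lindel\"of / three-lines step.} The paper's proof contains no such step—the whole lemma is established from the interior pointwise bound alone—so this is a genuine structural departure. You introduce it to eliminate what you correctly see, from the printed sign conventions, as a spurious factor $e^{2\mu r\cosh\xi\sin\lambda}$: both the prefactor $e^{-i\mu r\sinh(\xi+i\lambda)}$ and the $\xi$-slot of \ref{it:fboundsimag} contribute $e^{\mu r\cosh\xi\sin\lambda}$. But a three-lines argument cannot do the job cleanly here. First, the boundary values of $F_{m+n+1}$ at $\im\xi=0$ and $\im\xi=\pi$ are only distributions in $(\thetav,\etav,\xi)$; after integration against $f(\thetav,\etav)$ you still have a distribution in $\xi$, not a function, so there is no pointwise boundary bound to interpolate. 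Second, even if one regularized first, the three-lines theorem interpolates between boundary bounds and cannot produce the $(\lambda(\pi-\lambda))^{-(m+n)/2}$ blow-up near the edges of the strip—so you would be left with no way to combine the two estimates, which you yourself flag as the ``principal technical obstacle''. In the downstream application (Lemma~\ref{lemma:integralshift}) the amplifying $e^{2\mu r\cosh\xi\sin\lambda}$ in $K$ is in fact cancelled exactly against the damping $e^{-2\mu r\cosh\xi\sin\lambda}$ in $h=e^{i\mu r\sinh\xi}g^-$, so the cleanest reading is that either the factor should appear in the stated bound, or there is a sign typo in \eqref{eq:kxi}; either way, the remedy is a cancellation between $K$ and $h$, not a separate Phragm\'en--Lindel\"of argument inside Lemma~\ref{lemma:Kbounds}.
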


\begin{figure}
\begin{center}
\input{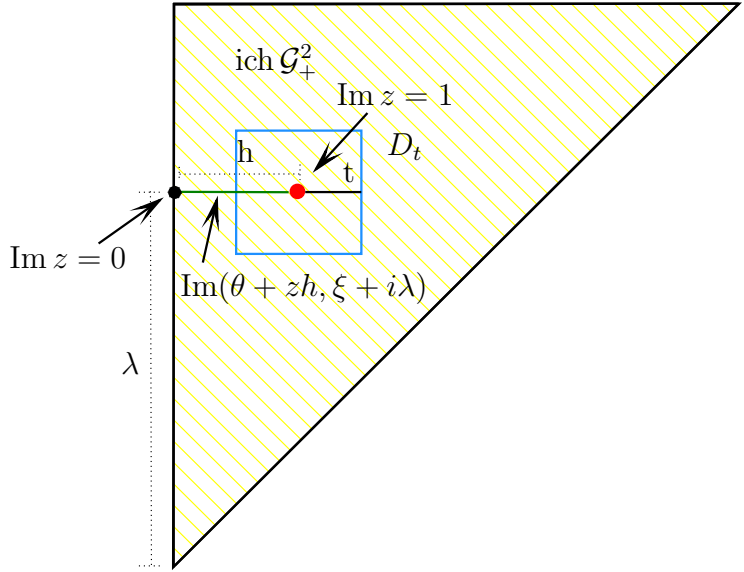}
\caption[The computation of the pointwise bound \eqref{stimaK}]{The computation of the pointwise bound \eqref{stimaK}. $D_t$ refers to the disc introduced in the proof of Prop.~\ref{proposition:pointwise}} \label{fig:pointwise}
\end{center}
\end{figure}

\begin{proof}
We introduce $h:=\min(\lambda,\pi-\lambda)/(m+n+1)$ and $\nuv_L := (1,2,\ldots,m)$, $\nuv_R := (n,\ldots,2,1)$. For fixed $\xi,\lambda,\thetav,\etav$, we define
\begin{equation}
   G(z):=e^{-i\mu r \sinh \xi} F_{m+n+1}(\thetav+zh\nuv_L,\xi+i\lambdav,\etav +i\piv -zh\nuv_R).\label{defGz}
\end{equation}
We can show that when $z\in \rbb +i(0,1)$, then $(\thetav+zh\nuv_L,\xi+i\lambdav,\etav +i\piv -zh\nuv_R) \in \ich \gcal_{+}^{k}$, cf.~Fig.~\ref{fig:pointwise}. That is, we have to show:
\begin{equation}
0< \im z h < \ldots < m h \im z < \lambda < \pi - n h \im z < \ldots < \pi-h\im z < \pi.
\end{equation}
Obviously, we have $\im z h >0$, $\pi-h \im z < \pi$, $\im z h < \ldots < m h \im z $ and $\pi - n h \im z < \ldots < \pi-h\im z$. It remains to show that $ m h \im z < \lambda < \pi - n h \im z$. By definition $h \leq \frac{\lambda}{m+n+1}$, hence $m h \im z \leq  m \frac{\lambda}{m+n+1}\cdot 1 = \frac{m}{m+n+1}\lambda < \lambda$. By definition we have also $h \leq \frac{\pi - \lambda}{m+n+1}$, hence $\pi -n h \im z \geq \pi - \frac{(\pi-\lambda)n}{m+n+1}\im z \geq \pi - (\pi - \lambda)\im z \geq \pi - (\pi - \lambda)=\lambda$.

Since the function $F_{m+n+1}$ is analytic in $\ich \gcal_{+}^{k}$, we have that the function $G$ is analytic in $z \in \rbb+i(0,1)$; for the imaginary part of the argument of $F_{m+n+1}$, we can show
\begin{equation}
 \operatorname{dist}\big( (h \im z, \ldots, mh\im z, \lambda, \pi-nh\im z, \ldots, \pi-h \im z)   ,\partial\ich\gcal^k_+\big) \geq \frac{1}{\sqrt{2}} h \im z.\label{distargFG}
\end{equation}
The proof of \eqref{distargFG} works as follows. Note that $\partial\ich\gcal^k_+ = \{  \lambda_i = \lambda_{i+1} \text{ for some } i \}\vee \{ \lambda_1=0 \} \vee \{ \lambda_k = \pi \}$. We compute \eqref{distargFG} directly, noting that $\operatorname{dist}(\cdot)$ denotes the euclidean distance in $\rbb^{k}$. Obviously, we have:
\begin{multline}
\operatorname{dist}\big( (h \im z, \ldots, mh\im z, \lambda, \pi-nh\im z, \ldots, \pi-h \im z) ,\{\lambda_1=0  \} \big)\\
 = \sqrt{  (h \im z - 0)^2 + (2 h \im z - \lambda_2)^2 + \ldots }\geq  h \im z,
\end{multline}
and,
\begin{multline}
\operatorname{dist}\big( (h \im z, \ldots, mh\im z, \lambda, \pi-nh\im z, \ldots, \pi-h \im z) ,\{\lambda_k=\pi  \} \big) \\
= \sqrt{ \ldots + (\pi - 2 h \im z -\lambda_{k-1})^2 + (\pi - h \im z -\pi)^2 }\geq  h \im z.
\end{multline}
So, it remains to show
\begin{multline}
\operatorname{dist}\big( (h \im z, \ldots, mh\im z, \lambda, \pi-nh\im z, \ldots, \pi-h \im z)   , \{  \lambda_i = \lambda_{i+1} \text{ for some } i \} \big) \\
\geq \frac{1}{\sqrt{2}} h \im z.
\end{multline}
By direct computation, we find
\begin{multline}
\operatorname{dist}\big( (h \im z, \ldots, mh\im z, \lambda, \pi-nh\im z, \ldots, \pi-h \im z)   , \{  \lambda_i = \lambda_{i+1} \text{ for some } i \} \big)\\
=\sqrt{ \ldots + (h i \im z - \lambda_i )^2 + (h (i + 1) \im z - \lambda_i)^2 + \ldots }.
\end{multline}
We call $\mu := h i \im z - \lambda_i$, $h(i +1)\im z - \lambda_i = \mu + h \im z$, $a:= h \im z$. We compute the minima of the function $f(\mu):= \mu^2 + (\mu + a)^2$: from $\frac{d}{d\mu}(\mu^2 + (\mu + a)^2)=2\mu + 2(\mu + a)\stackrel{!}{=}0$, we find $\mu = -a/2$. We have $f(-a/2)=( -a/2)^2 + (  -a/2 +a )^2 = 2 ( a/2 )^2 = a^2/2$. Hence, we have $\mu^2 + (\mu + a)^2 \geq a^2 /2$; namely:
\begin{equation}
\sqrt{ \ldots + (h i \im z - \lambda_i )^2 + (h (i + 1) \im z - \lambda_i)^2 + \ldots } \geq \frac{1}{\sqrt{2}}h \im z.
\end{equation}
This concludes the proof of \eqref{distargFG}.

Then, since the argument of $F_{m+n+1}$ is a point in the interior of $\tube(\gcal_+^k)$, we can apply condition \ref{it:fboundsimag}, which gives, for any $R>0$, a constants $c_R$ (dependent on $R$, but not on $\xi,\lambda$) and $c'$ such that
\begin{equation}\label{boundGz}
   |G(z)| \leq c_R  e^{c' \omega(\cosh \xi)} (h|\im z|)^{-k/2}
\quad \text{for all }  z \in (-R,R)+i(0,1), \; \|\thetav\|<R, \; \|\etav\|<R.
\end{equation}
Following, for example, \cite[Prop.~4.2]{BostelmannFewster:2009}, we compute a bound for the boundary distribution $\Big\lvert
\int G(x+i0) g(x)\,dx\Big\rvert$ as follows: let $z=x+iy$, we have
\begin{eqnarray}
\Big\lvert\int G(x+i0) g(x)\,dx\Big\rvert &=& \Big\lvert\lim_{y\rightarrow 0}\int G(x+iy) g(x)\,dx\Big\rvert \nonumber\\
&=& \Big\lvert\lim_{y\rightarrow 0}\int G^{(-\ell)}(x+iy) g^{(\ell)}(x)\,dx\Big\rvert.\label{boundboundaryG}
\end{eqnarray}
We can show that
\begin{equation}
\lvert  G^{(-\ell)}(z)\rvert \leq c^{(\ell)}_R e^{c'\omega(\cosh\xi)}h^{-k/2}( |\im z|^{-k/2 + \ell -1/4} +1 ), \quad (\ell > k/2 +1/4).\label{Gellebound}
\end{equation}
We prove \eqref{Gellebound} using induction on $\ell$. For $\ell =0$ it follows directly from \eqref{boundGz}. Now assume that \eqref{Gellebound} is true for $\ell -1$ in place of $\ell$; we prove it for $\ell$:
\begin{multline}
\int_{i/2}^{z}\lvert G^{(-\ell +1)}(z')\, dz'\rvert \leq c_R^{\ell -1}e^{c'\omega(\cosh\xi)}h^{-k/2}\int_{i/2}^{z}( \lvert \im z'\rvert ^{-k/2 + \ell -1 -1/4}+1)\, dz'.\label{intGellinduction}
\end{multline}
We choose in the strip $z\in \rbb +i(0,1)$ the integration path: $\gamma:= \{ 0\leq \re z' \leq \re z ,\; \im z' = 1/2  \}\cup \{ \re z' = \re z,\; 1/2 \leq \im z' \leq \im z  \}$. We compute the integral above along this curve, we find:
\begin{multline}
\text{ r.h.s.\eqref{intGellinduction} }= c_R^{\ell -1}e^{c'\omega(\cosh\xi)}h^{-k/2}\int_{0}^{\re z} d\re z' \\
+ c_R^{\ell -1}e^{c'\omega(\cosh\xi)}h^{-k/2}\int_{1/2}^{\im z} d \im z'\; ( \lvert \im z'\rvert ^{-k/2 + \ell -1 -1/4}+1)\\
= c_R^{\ell -1}e^{c'\omega(\cosh\xi)}h^{-k/2}\Big( \re z + \frac{\lvert \im z'\rvert^{-k/2+\ell-1-1/4+1}}{-k/2 +\ell -1 -1/4 +1}\big\rvert_{1/2}^{\im z} + \im z'\big\rvert_{1/2}^{\im z}\Big)\\
\leq c^{(\ell)}_R e^{c'\omega(\cosh\xi)}h^{-k/2}( |\im z|^{-k/2 + \ell -1/4} +1 ),
\end{multline}
where we used that $\re z < R$.

Choosing $\ell > k/2 +1/4$ and inserting \eqref{Gellebound} in \eqref{boundboundaryG}, we find
\begin{equation}
\text{ r.h.s. }\eqref{boundboundaryG} \leq c'^{(\ell)}_R e^{c' \omega(\cosh \xi)} \gnorm{g^{(\ell)}}{1} h^{-k/2}.
\end{equation}
Hence, denoting $c_{g,R}:=c'^{(\ell)}_R \gnorm{g^{(\ell)}}{1}$, and inserting the definition of $h$ where we estimated $\min(\lambda,\pi-\lambda)\geq \frac{1}{3}(\pi - \lambda)\lambda$, we obtain the following estimate for the boundary distribution:
\begin{equation}
 \Big\lvert
\int G(x+i0) g(x)\,dx
\Big\rvert
\leq c_{g,R} h^{-k/2}
   e^{c' \omega(\cosh \xi)}
\leq \frac{ (3(m+n+1))^{k/2}c_{g,R} \, e^{c' \omega(\cosh \xi)}} {( \lambda(\pi-\lambda) )^{k/2}},\label{boundaryGbound}
\end{equation}
where the constant $c_{g,R}$ might depend on the test function $g\in\dcal(-R,R)$ and the cutoff $R$, but \emph{not} on $G$ (and hence not on $\xi,\lambda,\thetav,\etav$).

Now, we recall \eqref{eq:kxi} and \eqref{defGz}, and we compute
\begin{multline}
\lvert   K(\xi+i\lambda) \rvert \\
= \Big\lvert e^{-i\mu r \sinh (\xi +i\lambda)} \Big(\prod_{j=1}^q S(\xi-\nu_j +i\lambda)\Big)\int d^m\theta d^n\eta \, f(\thetav,\etav)  F_{m+n+1}(\thetav+i\zerov,\xi +i\lambda,\etav+i\piv-i\zerov) \Big\rvert\\
= \Big\lvert  \Big(\prod_{j=1}^q S(\xi-\nu_j +i\lambda)\Big)\int d^m\theta d^n\eta \, f(\thetav,\etav) e^{-i\mu r \sinh (\xi +i\lambda)} F_{m+n+1}(\thetav+i\zerov,\xi +i\lambda,\etav+i\piv-i\zerov)   \Big\rvert\\
=\big \lvert  \Big(\prod_{j=1}^q S(\xi-\nu_j +i\lambda)\Big)\Big\rvert \cdot \Big\lvert \int d^m\theta d^n\eta \, f(\thetav,\etav) e^{-i\mu r \sinh (\xi +i\lambda)} F_{m+n+1}(\thetav+i\zerov,\xi+i\lambda,\etav+i\piv-i\zerov) \Big\rvert\\
\leq \Big\lvert \int d^m\theta d^n\eta \, f(\thetav,\etav) e^{-i\mu r \sinh (\xi +i\lambda)} F_{m+n+1}(\thetav+i\zerov,\xi +i\lambda,\etav+i\piv-i\zerov) \Big\rvert.
\end{multline}
where we have estimated the factors $S(\xi-\nu_j +i\lambda)$ by $1$, since they are bounded functions on the strip $\strip(0,\pi)$.

In order to apply \eqref{boundaryGbound}, we perform in the last line of the equation above a coordinate transformation $t:\cbb \times \rbb^{m+n-1} \rightarrow \cbb^m$ and $u:\cbb \times \rbb^{m+n-1} \rightarrow \cbb^n$, such that we can rewrite that integral as
\begin{multline}
\lvert   K(\xi+i\lambda) \rvert \leq \Big\lvert  \int d^m\theta d^n\eta \, f(\thetav,\etav) e^{-i\mu r \sinh (\xi +i\lambda)} F_{m+n+1}(\thetav+i\zerov,\xi +i\lambda,\etav+i\piv-i\zerov)\Big\rvert\\
=\Big\lvert  \int dx d\rho \, f(t(x,\rho),u(x,\rho)) e^{-i\mu r \sinh (\xi +i\lambda)} F_{m+n+1}(t(x+i0,\rho),\xi +i\lambda,u(x+i0,\rho)+i\piv)\Big\rvert\\
= \Big\lvert \int dx d\rho\, \tilde{f}(x,\rho) e^{-i\mu r \sinh (\xi +i\lambda)}\tilde{F}(x+i0,\rho,\xi +i\lambda)\Big\rvert.
\end{multline}
where we denoted $ \tilde{f}(x,\rho):=f(t(x,\rho),u(x,\rho))$ and $\tilde{F}(x+i0,\rho,\xi +i\lambda):=F_{m+n+1}(t(x+i0,\rho),\xi +i\lambda,u(x+i0,\rho)+i\piv)$.

We have that $G(x+i0)=e^{-i\mu r \sinh (\xi +i\lambda)}\tilde{F}(x+i0,\rho,\xi +i\lambda)$. We can now apply \eqref{boundaryGbound}, we find
\begin{eqnarray}
\lvert   K(\xi +i\lambda) \rvert &\leq& \int d\rho\, \Big\lvert  \int dx\,  \tilde{f}(x,\rho)G(x+i0)  \Big\rvert \nonumber\\
&\leq&  \int d\rho \frac{ (3(m+n+1))^{k/2}c_{\tilde{f},\rho,R} \, e^{c' \omega(\cosh \xi)}} {( \lambda(\pi-\lambda) )^{(m+n)/2}}\nonumber\\
&\leq & \frac{ (3(m+n+1))^{k/2} \, e^{c' \omega(\cosh \xi)}} {( \lambda(\pi-\lambda) )^{(m+n)/2}} \int_{\operatorname{supp}\tilde{f}} d\rho\, c_{\tilde{f},\rho,R},
\end{eqnarray}
where we used that $f$ has compact support in a ball of radius $R$, and where the last integral is finite; we call $c:=(3(m+n+1))^{k/2}\int_{\operatorname{supp}\tilde{f}} d\rho\, c_{\tilde{f},\rho,R}$. This gives the result in Lemma~\ref{lemma:Kbounds}.
\end{proof}

\begin{lemma} \label{lemma:integralshift}
Let $m,n\in\nbb_0$. If $F_{m+n}$ fulfils \ref{it:fmero} and \ref{it:fboundsimag}, then there exists an analytic indicatrix $\omega'\geq \omega$ such that for all $f \in \dcal(\rbb^{m+n})$ and $g \in \dcal^{\omega'}(\wcal_r)$,
\begin{equation}
  \int K(\xi+i0) h(\xi) d\xi = \int K(\xi+i\pi-i0) h(\xi+i\pi) d\xi,\label{shiftcontoureq}
\end{equation}
where $f,g$ enter the definitions of $K,h$ given by Eq.~\eqref{eq:kxi} and Eq.~\eqref{eq:hxi}.
\end{lemma}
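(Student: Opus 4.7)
My plan is to apply Cauchy's theorem to $K(\xi)h(\xi)$ on a rectangular contour inside $\strip(0,\pi)$, and then pass to the limits where the horizontal sides approach the boundary of the strip and the vertical sides are pushed to real infinity. The key preparation will be to choose the indicatrix $\omega'$ large enough that the growth of $K$ in $\xi$ supplied by Lemma~\ref{lemma:Kbounds} is dominated by the decay of $g^{\pm}$ supplied by Prop.~\ref{prop:omegapw}. Explicitly, Lemma~\ref{lemma:Kbounds} yields a constant $c'>0$ with $|K(\xi+i\lambda)| \leq c\, e^{c'\omega(\cosh\xi)}/(\lambda(\pi-\lambda))^{(m+n)/2}$, while Prop.~\ref{prop:omegapw} applied to $g \in \dcal^{\omega'}(\wcal_r)$ gives $|g^{\pm}(\xi+i\lambda)| \leq c''\, e^{-\mu r\cosh\xi\sin\lambda}e^{-\omega'(\cosh\xi)/a_{\omega'}}$. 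I would take an analytic indicatrix $\omega' \geq \omega$ with $\omega'(p)/a_{\omega'} \geq (c'+2)\omega(p) + 2\log(1+p)$ for all $p\geq 0$, which is always achievable by enlarging $\omega$ by a suitable multiple of $\log(1+p)$ as in \eqref{eq:omegalog}. With this choice the product satisfies $|K(\xi+i\lambda)h(\xi+i\lambda)| \leq c'''\, e^{-2\omega(\cosh\xi)}/(\lambda(\pi-\lambda))^{(m+n)/2}$ uniformly on $\strip(0,\pi)$.

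Next, for $0<\epsilon<\pi/2$ and $R>0$ I would use the rectangular contour $\Gamma_{R,\epsilon}$ with vertices $\pm R+i\epsilon,\,\pm R + i(\pi-\epsilon)$. On a neighbourhood of its closure both $K$ and $h$ are holomorphic, so Cauchy's theorem gives $\oint_{\Gamma_{R,\epsilon}} K(\xi)h(\xi)\,d\xi = 0$. The estimate above shows that the two vertical sides contribute at most $c(\epsilon)\,e^{-2\omega(\cosh R)}$, which vanishes as $R\to\infty$, and by dominated convergence (using the $\xi$-integrable factor $e^{-2\omega(\cosh\xi)}$) the two horizontal integrals converge as $R\to\infty$ to the corresponding integrals over $\rbb$. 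Thus, for every $\epsilon \in (0,\pi/2)$,
\[
  \int_\rbb K(\xi+i\epsilon)h(\xi+i\epsilon)\,d\xi \;=\; \int_\rbb K(\xi+i(\pi-\epsilon))h(\xi+i(\pi-\epsilon))\,d\xi.
\]

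The final step is to let $\epsilon\to 0^{+}$. The polynomial blow-up $(\lambda(\pi-\lambda))^{-(m+n)/2}$ in Lemma~\ref{lemma:Kbounds} shows that $K$ has distributional boundary values on the edges $\im\xi=0,\pi$ of $\strip(0,\pi)$, while $h(\cdot + i\lambda)$ is smooth and rapidly decreasing, uniformly for $\lambda$ in a neighbourhood of each boundary component, thanks both to $\omega'$ and to the derivative bounds \eqref{eq:fmstrip} on $g^{\pm}$. The pairings $\lambda\mapsto\int_\rbb K(\xi+i\lambda)h(\xi+i\lambda)\,d\xi$ should therefore extend continuously to $\lambda=0$ and $\lambda=\pi$, and in the limit they reproduce the two sides of \eqref{shiftcontoureq}. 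The main obstacle will be precisely this continuity-in-$\lambda$ statement: to make it rigorous I would represent $K$ near each boundary edge locally as a finite-order $\partial_{\re\xi}^{N}$ of a function continuous up to the boundary, by the standard edge-of-the-wedge machinery already implicit in the proof of Lemma~\ref{lemma:Kbounds}, and then integrate by parts against $h$, using the derivative bounds \eqref{eq:fmstrip} of Prop.~\ref{prop:omegapw} to control the resulting derivative- and boundary terms uniformly in $\lambda$.
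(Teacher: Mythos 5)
Your proposal takes essentially the same route as the paper: choose an analytic indicatrix $\omega'$ large enough that the growth of $K$ from Lemma~\ref{lemma:Kbounds} is dominated by the decay of $g^{\pm}$ from Prop.~\ref{prop:omegapw}, apply Cauchy's theorem on a contour in $\strip(\epsilon,\pi-\epsilon)$ (the paper phrases this via the closed strip; you use an explicit rectangle and send $R\to\infty$, which is equivalent), and then pass $\epsilon\to 0$ by writing $K$ as $\partial^{\ell}_{\re\xi}$ of a function bounded up to the boundary and integrating by parts against derivatives of $h$ — which is exactly what the paper does with its antiderivatives $K^{(-\ell)}$. Two minor points worth flagging: your claimed uniform bound $|Kh|\leq c'''e^{-2\omega(\cosh\xi)}/(\lambda(\pi-\lambda))^{(m+n)/2}$ drops the polynomial decay factor $(1+\cosh\xi)^{-2}$ that your own choice of $\omega'$ produces and that is needed for integrability, and the antiderivative/integration-by-parts step forces $\ell>(m+n)/2$, so the logarithmic part of $\omega'$ must scale with $m+n$ (as in the paper's $\frac{m+n+6}{2}\log(1+p)$), not just a fixed constant; your remark that one may enlarge $\omega$ by a suitable multiple of $\log(1+p)$ covers this, but the written choice $2\log(1+p)$ alone would not suffice to absorb the $(1+|\xi|)^\ell$ factors in the antiderivative bounds.
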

\begin{proof}
We set $\omega'(p):=a_\omega (c'\omega(p)+ \frac{m+n+6}{2} \log(1+p))$, with $c'$ as in Lemma~\ref{lemma:Kbounds}, and $a_{\omega'}=a_\omega \geq 1$. This is a valid indicatrix, see Example 1 in Sec.~\ref{sec:jaffee}.
We will show below that
\begin{equation}
\int K(\xi+i0)h(\xi)\;d\xi=\lim_{\epsilon \searrow 0}\int K(\xi+i\epsilon) h(\xi+i\epsilon)\;d\xi,\label{contour}
\end{equation}
and similarly for the upper boundary. We can show that for fixed $\epsilon$ and with some $c_\epsilon>0$,
\begin{equation}
\forall \lambda \in [\epsilon, \pi-\epsilon]:|h(\xi+i\lambda)K(\xi+i\lambda)|\leq \frac{c_\epsilon}{(1+\cosh\xi)^{(m+n+4)/2}}.
\end{equation}
Indeed, on the strip $\lambda \in [\epsilon, \pi-\epsilon]$ we can estimate the denominator in \eqref{stimaK} by $(\lambda(\pi-\lambda))^{(m+n)/2}\geq \epsilon^{m+n}$, and we can set $c_\epsilon := 1/\epsilon^{m+n}$; moreover, using Lemma~\ref{lemma:Kbounds} and Prop.~\ref{prop:omegapw}, we have
\begin{multline}
|h(\xi+i\lambda)K(\xi+i\lambda)|\leq c_\epsilon e^{c'\omega(\cosh\xi)}e^{-\omega'(\cosh\xi)/a_{\omega'}}\\
\leq c_\epsilon \exp\Big( c'\omega(\cosh\xi) - \frac{ a_{\omega}(c'\omega(\cosh\xi) + \frac{m+n+4}{2}\log (1+ \cosh\xi))}{a_{\omega'}} \Big)\\
\leq c_\epsilon \exp \Big( -\frac{m+n+4}{2}\log(1+ \cosh\xi) \Big)\\
= c_\epsilon (1+ \cosh\xi)^{-(m+n+4)/2},
\end{multline}
where we have set $a_{\omega}=a_{\omega'}$.

So, on the strip $\lambda \in [\epsilon, \pi-\epsilon]$, where we have shown above that the function $h\cdot K$ is analytic and decays fast in real direction, we can apply Cauchy's formula and we get,
\begin{equation}
\forall \epsilon>0: \int d\xi\; K(\xi+i\epsilon)h(\xi+i\epsilon)=\int d\xi\; K(\xi+i\pi-i\epsilon)h(\xi+i\pi-i\epsilon)\label{prop}
\end{equation}
Now, to conclude the result \eqref{shiftcontoureq} from this, it remains to show Eq.~\eqref{contour}. Namely, we need to show that
\begin{equation}
\lim_{\epsilon \searrow 0} \int d\xi\; K(\xi+i\epsilon) \big( h(\xi)-h(\xi+i\epsilon) \big)=0.
\end{equation}
We denote $K^{(-\ell)}$ the $\ell$-th antiderivative of $K$, with $\Big( \frac{\partial}{\partial \zeta}\Big)^j K^{(-\ell)}(i\frac{\pi}{2})=0$, $0\leq j<\ell$. We will show
\begin{equation}
\lvert K^{(-\ell)}(\xi +i \lambda)\rvert \leq c\, e^{c'\omega(\cosh\xi)}(1+ \lvert \xi \rvert)^{\ell}\left[ \lambda (\pi-\lambda)\right]^{(m+n)/2-\ell -1/4},\; \ell<(m+n)/2.\label{boundantideK}
\end{equation}
We prove \eqref{boundantideK} using induction on $\ell$. For $\ell =0$, it follows directly from the bound of Lemma~\ref{lemma:Kbounds}. Now assume that \eqref{boundantideK} is true for $\ell -1$ in place of $\ell$, we prove it for $\ell$.

We integrate $K^{(-\ell+1)}$ along the lines from $i\pi/2$ to $\xi + i \pi/2$ and then from $\xi + i \pi/2$ to $\xi + i \lambda$. By induction hypothesis, we find
\begin{equation}
\begin{aligned}
\int_{i\pi/2}^{\xi + i\lambda} \lvert &K^{(-\ell +1)}(\xi' + i\lambda') d(\xi' +i\lambda')\rvert\\
&\leq \int_{i\pi/2}^{\xi +i\lambda}c\, e^{c'\omega(\cosh\xi')}(1+ \lvert \xi' \rvert)^{\ell-1}\left[ \lambda' (\pi-\lambda')\right]^{(m+n)/2-\ell +1-1/4}\\
&=c\,e^{c'\omega(\cosh\xi)}(1+ \lvert \xi \rvert)^{\ell-1}\int_{\pi/2}^{\lambda}d\lambda'\,\left[ \lambda' (\pi-\lambda')\right]^{(m+n)/2-\ell +1-1/4}\\
&\quad+c\,\left[ (\pi/2) (\pi-\pi/2)\right]^{(m+n)/2-\ell +1-1/4}\int_{0}^{\xi}d\xi'\,e^{c'\omega(\cosh\xi')}(1+ \lvert \xi' \rvert)^{\ell-1}\\
&=c\,e^{c'\omega(\cosh\xi)}(1+ \lvert \xi \rvert)^{\ell-1}\Big((\pi/2)^{(m+n)/2-\ell +1-1/4}\int_{\pi/2}^{\lambda>\pi/2}d\lambda'\, (\pi-\lambda')^{(m+n)/2-\ell +1-1/4}\\
&\quad+ (\pi/2)^{(m+n)/2-\ell +1-1/4}\int_{\pi/2}^{\lambda<\pi/2}d\lambda'\,{\lambda'}^{(m+n)/2-\ell +1-1/4}\Big)\\
&\quad+c\,\left[ (\pi/2) (\pi-\pi/2)\right]^{(m+n)/2-\ell +1-1/4}e^{c'\omega(\cosh\xi)}\int_{0}^{\xi}d\xi'\,(1+ \lvert \xi' \rvert)^{\ell-1}\\
&=c\,e^{c'\omega(\cosh\xi)}(1+ \lvert \xi \rvert)^{\ell-1}\pi/2)^{(m+n)/2-\ell +1-1/4}\times\\
&\quad \times \Big( (\pi-\lambda')^{(m+n)/2-\ell -1/4}\big\rvert_{\pi/2}^{\lambda}+ {\lambda'}^{(m+n)/2-\ell -1/4}\big\rvert_{\pi/2}^{\lambda}\Big)\\
&\quad+c\,\left[ (\pi/2) (\pi-\pi/2)\right]^{(m+n)/2-\ell +1-1/4}e^{c'\omega(\cosh\xi)}(1+ \lvert \xi \rvert)^{\ell}\\
&\leq c''\, e^{c'\omega(\cosh\xi)}(1+ \lvert \xi \rvert)^{\ell}\left[ \lambda (\pi-\lambda)\right]^{(m+n)/2-\ell -1/4},
\end{aligned}
\end{equation}
where in the third equality we used that $e^{c'\omega(\cosh\xi')}\leq e^{c'\omega(\cosh\xi)}$ and the monotonicity of $\omega$, i.e. \ref{it:omegamonoton}. This concludes the proof of \eqref{boundantideK}.

For $\ell>(m+n)/2$, we find by repeated integration, with some $c''>0$ and for all $\lambda$,
\begin{equation}
|K^{(-\ell)}(\xi+i\lambda)|\leq c'' (1+|\xi|)^\ell e^{c' \omega(\cosh \xi)}.\label{antiderivKbound}
\end{equation}
Using integration by parts and the bound \eqref{antiderivKbound}, we find
\begin{multline}
\lim_{\epsilon \searrow 0}\left|  \int d\xi\; K(\xi+i\epsilon) \big( h(\xi)-h(\xi+i\epsilon) \big) \right|\\
=\lim_{\epsilon\searrow 0}\left|  \int d\xi\; K^{(-\ell)}(\xi+i\epsilon)\left(  h^{(\ell)}(\xi)-h^{(\ell)}(\xi+i\epsilon)\right)\right|\\
\leq c'' \lim_{\epsilon\searrow 0} \epsilon \int d\xi\; (1+|\xi|)^\ell e^{c' \omega(\cosh \xi)} \sup_{0 < \lambda< \pi } |h^{(\ell+1)}(\xi+i\lambda)| = 0\label{integcontourlimit}
\end{multline}
if we can show that the integral in the last line is finite. To that end, using the bounds on $g^-$ from Prop.~\ref{prop:omegapw} and the definition of $\omega'$ after Eq.~\eqref{shiftcontoureq}, we have that for all $\lambda\in(0,\pi)$,
\begin{eqnarray}
 |h^{(\ell+1)}(\xi+i\lambda)| &\leq&
 c''' (\cosh \xi)^{\ell+1} e^{-\omega'(\cosh\xi)/a_{\omega'}}\nonumber\\
 &=& c'''(\cosh\xi)^{\ell +1}e^{-c'\omega(\cosh\xi)}e^{-\frac{m+n+6}{2}\log(1+\cosh\xi)}  \nonumber\\
 &\leq& c'''e^{-c'\omega(\cosh\xi)}(\cosh\xi)^{\ell +1} (\cosh\xi)^{-\frac{m+n}{2}-3}\nonumber\\
  &=&  c''' (\cosh \xi)^{\ell-(m+n)/2-2}  e^{-c'\omega(\cosh \xi)},
\end{eqnarray}
where we used $c':= 1/ a_{\omega'}$.

Inserting in \eqref{integcontourlimit}, we find
\begin{multline}
\int d\xi\; (1+|\xi|)^\ell e^{c' \omega(\cosh \xi)} \sup_{0 < \lambda< \pi } |h^{(\ell+1)}(\xi+i\lambda)|\\
\leq \int d\xi\; (1+|\xi|)^\ell e^{c'\omega(\cosh\xi)}e^{-c'\omega(\cosh\xi)}(1+\cosh\xi)^{\ell -(m+n)/2 -2}\\
=\int d\xi\; (1+|\xi|)^\ell (1+\cosh\xi)^{\ell -(m+n)/2 -2}.
\end{multline}
This integral is finite if we choose $ m+n < 2 \ell \leq m+n+2$ (where the left inequality is due to the condition $\ell > (m+n)/2$ before \eqref{antiderivKbound} and the right inequality is due to the requirement that $\ell -(m+n)/2 -2<0$).
\end{proof}

Using this we can prove wedge-locality of $A$, as discussed in the beginning of this section.

\begin{proposition}\label{proposition:wedgelocal}
Let $F_k$ be a sequence of functions fulfilling \ref{it:fmero}, \ref{it:fsymm}, \ref{it:fboundsreal} and \ref{it:fboundsimag}, and let $A$ be as in \eqref{eq:afromf}. Then $A$ is $\omega$-local in the wedge $\wcal_r'$.
\end{proposition}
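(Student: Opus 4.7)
The plan is to compute matrix elements of $[A,\phi'(g)]$ between finite particle number vectors $\psi,\chi\in\fpno$ using the explicit formula of Prop.~\ref{proposition:aphicomm}, identify the result as a $\xi$-contour integral of the type handled by Lemma~\ref{lemma:integralshift}, and then collapse this contour integral to zero.

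First, I would set up the identification between $B^{g^-,\xi+i\pi}$ and $(B^{\bar g^+,\xi})^*$. Using Prop.~\ref{prop:omegapw}, $g^-$ extends analytically to $\strip(0,\pi)$ with $g^-(\xi+i\pi)=g^+(\xi)$, and the bound \eqref{eq:fmstrip} applied to $g\in\dcal^{\omega'}(\wcal_r)$ furnishes the damping factor $e^{-i\mu r\sinh\xi}e^{-\omega'(\cosh\xi)/a_{\omega'}}$. Using $S(\xi+i\pi-\theta_j)=S(\theta_j-\xi)$ and $|S|=1$ on $\rbb$, a direct computation shows that as multiplication operators on any $\Hil_n$,
\begin{equation*}
  B^{g^-,\xi+i\pi} = (B^{\bar g^+,\xi})^{*} \quad \text{for } \xi\in\rbb,
\end{equation*}
and that the left-hand side is the analytic continuation of $B^{g^-,\xi}$ to $\xi\mapsto\xi+i\pi$ in the strip $\strip(0,\pi)$.

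Next, by Prop.~\ref{proposition:expansionunique} the Araki coefficients of the $A$ defined by \eqref{eq:afromf} are $\cme{m,n}{A}(\thetav,\etav)=\operatorname{Sym}_{S,\thetav}\operatorname{Sym}_{S,\etav}F_{m+n}(\thetav+i\zerov,\etav+i\piv-i\zerov)$, which by \ref{it:fsymm} equals $F_{m+n}(\thetav+i\zerov,\etav+i\piv-i\zerov)$. Reading off the two pieces in \eqref{maincommutator} gives
\begin{align*}
  \cme{m,n+1}{A}(\thetav,\xi,\etav) &= F_{m+n+1}(\thetav+i\zerov,\,\xi+i\pi-i0,\,\etav+i\piv-i\zerov), \\
  \cme{m+1,n}{A}(\thetav,\xi,\etav) &= F_{m+n+1}(\thetav+i\zerov,\,\xi+i0,\,\etav+i\piv-i\zerov),
\end{align*}
so the two terms in Prop.~\ref{proposition:aphicomm} involve the boundary values of the \emph{same} meromorphic function on the opposite sides of the strip in $\xi$.

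Now I would take the matrix element $\langle\psi, [A,\phi'(g)]\chi\rangle$ with $\psi,\chi\in\fpno$ of fixed total particle number, so that only finitely many terms in the $(m,n)$-sum contribute. For each such term, integrating out $\thetav,\etav,\lambdav$ against the wavefunctions of $\psi,\chi$ and the $S$-factors coming from $(B^{\bar g^+,\xi})^*$ or $B^{g^-,\xi}$ produces a function $K(\xi)$ of precisely the form \eqref{eq:kxi}, with $q$ equal to the particle number on which the $B$-operator acts (and $\nuv$ its rapidities), while the $\xi$-dependent prefactor $e^{i\mu r\sinh\xi}g^-(\xi)$ is exactly $h(\xi)$ of \eqref{eq:hxi}. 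The matrix element then collapses to
\begin{equation*}
  \int d\xi\,\bigl[K(\xi+i\pi-i0)\,h(\xi+i\pi) - K(\xi+i0)\,h(\xi)\bigr],
\end{equation*}
which vanishes by Lemma~\ref{lemma:integralshift} provided $g\in\dcal^{\omega'}(\wcal_r)$ with the indicatrix $\omega'$ furnished by that lemma. Since such $g$ are dense in $\dcal^\omega(\wcal_r)$ in a topology fine enough for matrix elements (Lemma~\ref{lemma:localitychar}\ref{it:charomegavar}), this establishes $\omega$-locality in $\wcal_r'$.

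The main obstacle is the bookkeeping in the middle step: checking that the $S$-factors produced by moving $\xi$ through the argument list of $F_{m+n+1}$ via \ref{it:fsymm}, together with the $S$-factors in the multiplication operators $B^{g^-,\xi}$ and $(B^{\bar g^+,\xi})^*$, assemble exactly into the product $\prod_j S(\xi-\nu_j)$ appearing in the definition \eqref{eq:kxi} of $K$, and that the $e^{-i\mu r\sinh\xi}$ in $K$ is balanced by the $e^{i\mu r\sinh\xi}$ in $h$. A secondary point is that the indicatrix $\omega'$ required by Lemma~\ref{lemma:integralshift} depends on $m+n$; since the sum over $m,n$ is finite once $\psi,\chi$ are fixed, one may take the maximum of the relevant $\omega'$'s, and then use Lemma~\ref{lemma:localitychar}\ref{it:charomegavar} to conclude.
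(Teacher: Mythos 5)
Your proposal matches the paper's proof essentially step for step: reduce to matrix elements via Lemma~\ref{lemma:localitychar}\ref{it:charomegavar}, use Prop.~\ref{proposition:aphicomm} to express $[A,\phi'(g)]$ as a $\xi$-integral, identify $K$ and $h$ from \eqref{eq:kxi}--\eqref{eq:hxi}, close the contour via Lemma~\ref{lemma:integralshift}, and use the analytic continuation $B^{g^-,\xi+i\pi}_q=(B^{\overline{g^+},\xi}_q)^*$ from Prop.~\ref{prop:omegapw}. Your remark that $\omega'$ from Lemma~\ref{lemma:integralshift} depends on $m+n$ but that only finitely many terms contribute for fixed $\psi,\chi$ is a correct and careful reading of the compatibility with Lemma~\ref{lemma:localitychar}\ref{it:charomegavar}; note also that the $S$-factor bookkeeping you worry about is trivial here, since in both terms of \eqref{maincommutator} the argument $\xi$ already sits at the same position $m+1$ of $F_{m+n+1}$, so no $S$-reshuffling via \ref{it:fsymm} is required.
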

\begin{proof}

We have that $A$, given by \eqref{eq:afromf}, is well-defined by properties \ref{it:fmero}, \ref{it:fboundsreal} and we have $\cme{m,n}{A}(\thetav,\etav)=F_{m+n}(\thetav+i\zerov,\etav+i\piv-i\zerov)$ by \ref{it:fsymm}. This is due to an application of Prop.~\ref{proposition:expansionunique}.

Now, by application of Lemma~\ref{lemma:localitychar}\ref{it:charomegavar}, it suffices to show that $\hscalar{\psi}{ [A,\phi'(g)]\chi}=0$ for fixed $\psi,\chi\in\fpno$, and for all $g \in \dcal^{\omega'}(\wcal_r)$, where the indicatrix $\omega'$ is chosen in a suitable way.

We can assume that $\psi,\chi$ have fixed particle number and compact support in rapidity space. This is possible because for any $\psi \in \hcal^{\omega,f}$, more specifically $\psi\in \hcal^{\omega}\cap \hcal_{n}$, there exists $\psi_{n}\in \hcal_{n}\cap \dcal(\rbb^{n})$, such that $\gnorm{\exp(\omega(H/\mu))(\psi-\psi_n)}{}\rightarrow 0$ for $n\rightarrow \infty$. Moreover, let $\psi_n \rightarrow \psi$, $\chi_n \rightarrow \chi$ for $n\rightarrow \infty$ in the sense of the above norm, then $\hscalar{\psi_n}{ [A,\phi'(g)]\chi_n}$ converges to $\hscalar{\psi}{ [A,\phi'(g)]\chi}$, since $[A,\phi'(g)]$ is a well-defined element of $\qf^\omega$.

We use Prop.~\ref{proposition:aphicomm}, and we consider a summand in Eq.~\eqref{maincommutator} with fixed $m,n$, since we can compare in this equation only terms with the same number of creators $\zd$ and annihilators $z$. It suffices to show that if $g\in \mathcal{D}^{\omega'}(\wcal_r)$, for fixed $q \in \nbb_0$, we have
\begin{multline}
\int d^m\theta d^n\eta \int d\xi \; f(\thetav,\etav) \Big(F_{m+n+1}(\thetav+i\zerov, \xi +i\pi-i0, \etav +i\piv-i\zerov)(B_{q}^{\overline{g^{+}},\xi})^{*}\\
-F_{m+n+1}(\thetav+i\zerov,\xi+i0,\etav +i\piv-i\zerov)B_{q}^{g^{-},\xi}\Big)=0.\label{vanishing}
\end{multline}
We use the definitions \eqref{multop}, \eqref{eq:kxi} and \eqref{eq:hxi}, and we rewrite \eqref{vanishing} as follows:
\begin{equation}
  \int K(\xi+i0) h(\xi) d\xi = \int K(\xi+i\pi-i0) h(\xi+i\pi) d\xi.
\end{equation}
This is given by Lemma~\ref{lemma:integralshift}; moreover, here we used the following fact: By Proposition \ref{prop:omegapw} the Fourier transform $g^{-}$ of a function $g\in \mathcal{D}^{\omega'}(\wcal_r)$ extends to an analytic function on the strip $\strip(0,\pi)$ with boundary value $g^{-}(\xi +i\pi)=g^{+}(\xi)$. Recalling that the scattering function $S$ is analytic in $\strip(0,\pi)$, we have
\begin{equation}
\begin{aligned}
B^{g^{-},\xi+i\pi}_{q}&=g^{-}(\xi+i\pi)\prod_{j=1}^{q}S(\xi-\theta_{j}+i\pi)\\
&= g^{+}(\xi)\prod_{j=1}^{q}S(\xi-\theta_{j})^{*} \\
&=\Big( \overline{g^{+}(\xi)}\prod_{j=1}^{q}S(\xi-\theta_{j})\Big)^{*} \\
&= (B^{\overline{g^{+}},\xi}_{q})^{*}.
\end{aligned}
\end{equation}
\end{proof}

\section{Generalized recursion relations} \label{sec:genrecursion}

Now, we start to pass from the wedge locality to the \emph{double cone} locality of $A$. So, as first step, we compute the residua of $F_k$ in several dimensions, using \ref{it:frecursion}. We have the following lemma:

\begin{lemma}\label{lemma:resgenrecursion}
There holds
\begin{equation}\label{resgenrec}
\res_{\eta_{r_{1}}-\theta_{\ell_{1}}=0}\ldots \res_{\eta_{r_{|C|}}-\theta_{\ell_{|C|}}=0}F_{m+n}(\thetav,\etav +i\piv)=\\
\frac{(-1)^{|C|}}{(2i\pi)^{{|C|}}}S_{C} R_{C}(\thetav,\etav) F_{m+n-2{|C|}}(\hat{\thetav},\hat{\etav}+i\piv),
\end{equation}
where $C$ is the contraction $(m,n,\{(\ell_1,r_1+m),\ldots,(\ell_{|C|},r_{|C|}+m)\})$.
\end{lemma}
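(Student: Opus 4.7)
The plan is to proceed by induction on $|C|$. The base case $|C|=0$ is immediate: the empty products in the definitions of $S_C$ and $R_C$ give $1$, and no residues are taken, so both sides equal $F_{m+n}(\thetav,\etav+i\piv)$.

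For the inductive step, I would single out one pair --- say $(\ell_{|C|}, r_{|C|})$ --- and set $C' := C \setminus \{(\ell_{|C|}, r_{|C|}+m)\}$, so that $C = C' \,\dot\cup\, \{(\ell_{|C|}, r_{|C|}+m)\}$ in the sense of Lemma~\ref{lemma:contractcompose}. First apply the induction hypothesis to the $|C|-1$ innermost residues (those indexed by $C'$); this reduces the left-hand side to
\begin{equation*}
\tfrac{(-1)^{|C|-1}}{(2i\pi)^{|C|-1}}\, S_{C'} R_{C'}(\thetav,\etav)\,\res_{\eta_{r_{|C|}}-\theta_{\ell_{|C|}}=0} F_{m+n-2(|C|-1)}(\check\thetav,\check\etav+i\piv),
\end{equation*}
where $\check\thetav, \check\etav$ denote the variables with the components contracted by $C'$ removed. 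Then evaluate the remaining single residue via the basic recursion \ref{it:frecursion} applied to $F_{m+n-2(|C|-1)}$; this produces the remaining factor of $-\tfrac{1}{2i\pi}$ together with an $S$-factor and an $R$-factor (expressed in the reduced system of indices), multiplying $F_{m+n-2|C|}(\hat\thetav,\hat\etav+i\piv)$.

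The remaining task is to verify that $S_{C'} R_{C'}$ combined with the single-pair factors from the last step equals $S_C R_C$ at the pole location $\eta_{r_{|C|}} = \theta_{\ell_{|C|}}$. For the $R$-factors this is automatic, since $R_C = \prod_{j=1}^{|C|}(1-\prod_p S^{(m)}_{\ell_j,p})$ is already a product over pairs, and one can check directly that the factor $(1-\prod_{p=1}^{m+n-2(|C|-1)} S_{\ell_{|C|}',p})$ produced by \ref{it:frecursion} in the reduced system coincides with $(1-\prod_{p=1}^{m+n} S^{(m)}_{\ell_{|C|},p})$ evaluated at the pole, after using the crossing identity $S(\theta+i\pi)=S(\theta)^{-1}$ to convert the shifted arguments and the identity $S(0)^2=1$ (a consequence of the same symmetry at $\theta=0$) to cancel the spurious factors at $p = \ell_{|C|}$ and $p=r_{|C|}+m$. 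For the $S$-factor produced by \ref{it:frecursion}, the same $\theta \mapsto \theta+i\pi$ conversion turns the product $\prod_{j=\ell_{|C|}}^{r_{|C|}+m}S_{j,\ell_{|C|}}$ into exactly the single-pair contribution $\prod_{p=\ell_{|C|}+1}^{r_{|C|}+m-1}S^{(m)}_{p,\ell_{|C|}}$ to $S_C$, and the composition law Lemma~\ref{lemma:contractcompose} guarantees that multiplying by $S_{C'}$ reproduces $S_C$ (and incidentally that the result does not depend on which pair was singled out, so the iterated residue is independent of the order in which we take residues).

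The main obstacle will be the combinatorial bookkeeping of indices: when one residue is taken, two variables are removed and the remaining variables are re-indexed, so the factors coming from \ref{it:frecursion} (which are stated in terms of the ``current'' indices $m$, $n$ of the function) must be translated back into the $S^{(m)}$-factors of the original contraction $C$. Lemma~\ref{lemma:contractcompose} is precisely the statement that such relabellings are compatible with the $\delta_C S_C$ structure, so once the translation at the pole is made via the crossing relations and $S(0)^2=1$, the induction closes cleanly.
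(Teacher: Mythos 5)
Your proposal is correct and essentially the same as the paper's proof: both proceed by induction on $|C|$, splitting $C$ into a single pair $C_1$ and the remainder $C'$ (via Lemma~\ref{lemma:contractcompose}), applying \ref{it:frecursion} for the remaining single residue, and merging the $S$- and $R$-factors. Two small points of comparison. First, the paper takes $|C|=1$ as the base case and singles out the \emph{first} pair in the inductive step (so the outermost residue is the one left over), whereas you single out the last pair and call the remaining $|C|-1$ residues ``innermost'' --- strictly, those are the outermost in the displayed nesting $\res_{(\ell_1,r_1)}\cdots\res_{(\ell_{|C|},r_{|C|})}F$, so to match your formula you need the (true) commutativity of residues on independent hyperplanes, which you note only as an afterthought. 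Second, you are actually a bit more careful than the paper in recording that the products in \ref{it:frecursion} include the endpoints $q=\ell$ and $q=r+m$ (giving $S(0)$ at both ends on the pole), so that $S(0)^2=1$ is needed to pass from the \ref{it:frecursion} product to $S_{C_1}$; the paper elides this in its displayed chain of equalities. Neither point is a gap.
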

\begin{proof}
Our proof is based on induction on $\ell$.
We first note that \ref{it:frecursion} in our specific situation simplifies to
\begin{equation}\label{eq:oneres}
\res_{\eta_{r}-\theta_{\ell}=0}F_{m+n}(\thetav+i\zerov,\etav+i\piv-i\zerov)
=-\frac{1}{2\pi i }S_{C_{1}} R_{C_{1}} (\thetav,\etav)
F_{m+n-2}( \boldsymbol{\hat\theta}+i\zerov,\boldsymbol{\hat\eta} +i\piv-i\zerov),
\end{equation}
where $C_1=(m,n,\{(\ell,r+m)\})$.

Also, notice that in the case $\zetav=(\thetav,\etav +i\piv)$, $m=\ell, n=r$, the S-factors in  \ref{it:frecursion} simplify:
\begin{equation}
\Big( 1-\prod_{p=1}^{m+n}S_{\ell,p}(\zetav)\Big)\cdot \Big(  \prod_{q=\ell}^{r}S_{q,\ell}(\zetav)\Big)
= \Big( 1-\prod_{p=1}^{m+n}S_{\ell,p}^{(m)}(\thetav,\etav)\Big)\cdot \Big(  \prod_{q=\ell}^{r}S_{q,\ell}^{(m)}(\thetav,\etav)\Big)
=R_{C_{1}}S_{C_{1}}.
\end{equation}
This is just Eq.~\eqref{resgenrec} in the case $|C|=1$.

Now assume that Eq.~\eqref{resgenrec} holds for $|C|-1$ in place of $|C|$.
We consider $C=C'\dot\cup C_{1}$, with $C'=(m,n,\{ (\ell_{2},r_{2}+m),\ldots, (\ell_{|C|},r_{|C|}+m)  \})$ and $C_1 \in \ccal_{m-|C'|,n-|C'|}$, $|C_1|=1$. We have
\begin{equation}
\begin{aligned}
\res_{\eta_{r_{1}}-\theta_{\ell_{1}}=0}&\Big(\res_{\eta_{r_{2}}-\theta_{\ell_{2}}=0}\ldots\res_{\eta_{r_{|C|}}-\theta_{\ell_{|C|}}=0} F(\thetav ,\etav +i\piv) \Big)\\
&=\frac{(-1)^{|C'|}}{(2i\pi)^{|C'|}}R_{C'} S_{C'} (\thetav,\etav)
\res_{\eta_{r_{1}}-\theta_{\ell_{1}}=0} F_{m+n-2|C|+2}(\hat{\thetav},\hat{\etav}+i\piv)\\
&=\frac{(-1)^{|C|}}{(2i\pi)^{|C|}}R_{C'} S_{C'} (\thetav,\etav)
 R_{C_{1}} S_{C_{1}} (\hat\thetav,\hat\etav) F_{m+n-2|C|}( \Hat{\Hat{\thetav}},\Hat{\Hat{\etav}} +i\piv)\\
&=\frac{(-1)^{|C|}}{(2i\pi)^{|C|}}R_{C} S_{C} (\thetav,\etav)F_{m+n-2|C|}( \Hat{\Hat{\thetav}},\Hat{\Hat{\etav}} +i\piv),
\end{aligned}
\end{equation}
where in the first equality we used Eq.~\eqref{resgenrec} in the case $|C|-1$, where in the second equality we made use of Eq.~\eqref{eq:oneres} and in the third equality we used Lemma~\ref{lemma:contractcompose}. To obtain $R_{C'}R_{C_{1}}=R_C$ in the third equality, we used \eqref{rc} and the fact that $\delta_{C'}R_{C_{1}}(\hat{\thetav},\hat{\etav})= \delta_{C'}R_{C_{1}}(\thetav,\etav)$.
\end{proof}

\section{Coefficients of the reflected operator} \label{sec:fshifted}

Now, we prove the following proposition:
\begin{proposition}\label{proposition:fshifted}
Given a family of functions $F_{k}$ with properties (F), the quadratic form $A$ in \eqref{eq:afromf} fulfils
\begin{equation}\label{eq:Fshifted}
JA^{*}J=\sum_{m,n=0}^{\infty}\int \frac{d^{m}\thetav d^{n}\etav}{m!n!}\;F^{\pi}_{m+n}(\thetav+i\zerov,\etav+i\piv-i\zerov)z^{\dagger m}(\thetav)z^{n}(\etav),
\end{equation}
where $F_{k}^{\pi}=F_{k}(\cdot +i\piv)$.
\end{proposition}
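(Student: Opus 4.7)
The plan is to verify \eqref{eq:Fshifted} by comparing Araki coefficients via Prop.~\ref{proposition:expansionunique}. Since $F_{m+n}$, and hence $F^\pi_{m+n}$, is $S$-symmetric in its first $m$ and last $n$ arguments by \ref{it:fsymm}, Prop.~\ref{proposition:expansionunique} applied to the right-hand side of \eqref{eq:Fshifted} shows that its Araki coefficients equal $F^\pi_{m+n}(\thetav + i\zerov, \etav + i\piv - i\zerov)$ without any nontrivial $S$-symmetrization. The same application to \eqref{eq:afromf} yields $\cme{m,n}{A}(\thetav,\etav) = F_{m+n}(\thetav + i\zerov, \etav + i\piv - i\zerov)$. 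Inserting this expression into Prop.~\ref{proposition:fmnreflected} writes $\cme{m,n}{JA^*J}$ as a sum over contractions, so the proposition reduces to the boundary-value identity
\begin{equation*}
F^\pi_{m+n}(\thetav + i\zerov, \etav + i\piv - i\zerov) = \sum_{C \in \ccal_{m,n}} (-1)^{|C|} \delta_C S_C R_C(\thetav,\etav)\, F_{m+n-2|C|}(\hat\etav + i\zerov, \hat\thetav + i\piv - i\zerov).
\end{equation*}

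As a preliminary step, I will establish that $F_k$ is $2i\piv$-periodic as a meromorphic function. Iterating \ref{it:fperiod} variable by variable and using the $2i\pi$-periodicity of $S$ to simplify intermediate $S$-factors, every unordered pair $\{i,j\}$ contributes both $S(\zeta_i-\zeta_j)$ and $S(\zeta_j-\zeta_i)$, which cancel by unitarity of $S$. Applying this $2i\piv$-periodicity inside $F^\pi_{m+n}(\thetav + i\zerov, \etav + i\piv - i\zerov) = F_{m+n}(\thetav + i\piv + i\zerov, \etav + 2i\piv - i\zerov)$ rewrites it as $F_{m+n}(\thetav - i\piv + i\zerov, \etav - i\zerov)$, so the identity above becomes
\begin{equation*}
F_{m+n}(\thetav - i\piv + i\zerov, \etav - i\zerov) = \sum_{C \in \ccal_{m,n}} (-1)^{|C|} \delta_C S_C R_C\, F_{m+n-2|C|}(\hat\etav + i\zerov, \hat\thetav + i\piv - i\zerov).
\end{equation*}

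This remaining identity is the boundary-value formulation of the recursion \ref{it:fprecursion} for the associated $F'$, translated to $F$ via the correspondence in Theorem \ref{theorem:FptoF}(ii). I would prove it by iterative contour deformation: both sides are boundary values of $F_{m+n}$ on opposite sides of the poles at $\eta_r - \theta_\ell = i\pi$, and the single-pole crossing at $\eta_r-\theta_\ell = i\pi$ produces, by the residue formula \ref{it:frecursion} (see Eq.~\eqref{eq:oneres}), a term proportional to $\delta(\eta_r-\theta_\ell) S_{C_1} R_{C_1} F_{m+n-2}$. Iterating over all subsets of potential contraction pairs and composing the resulting $\delta_C$, $S_C$, $R_C$ factors via Lemmas~\ref{lemma:contractcompose} and~\ref{lemma:resgenrecursion} generates precisely the full sum over $C \in \ccal_{m,n}$. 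This combinatorial step---essentially a reversal of the derivation of \ref{it:fprecursion} from Prop.~\ref{proposition:fmnreflected} carried out in Prop.~\ref{proposition:fprecursion}---is the main obstacle; the careful tracking of $i0$ prescriptions, signs, and the nested residue directions across the successive pole crossings is what makes the argument delicate.
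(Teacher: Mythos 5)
Your opening steps track the paper exactly: reduce to equality of Araki coefficients via Prop.~\ref{proposition:expansionunique}, insert into Prop.~\ref{proposition:fmnreflected}, and rewrite $F^\pi_{m+n}$ using (F)-properties. Your $2i\piv$-periodicity argument (iterating \ref{it:fperiod} and cancelling the $S$-factors pairwise) is correct and gives $F^\pi_{m+n}(\thetav+i\zerov,\etav+i\piv-i\zerov)=F_{m+n}(\thetav-i\piv+i\zerov,\etav-i\zerov)$, which one can check is equivalent to the paper's intermediate form $F_{m+n}(\etav-i\zerov,\thetav+i\piv+i\zerov)$. However, there are two genuine problems with what follows.

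First, a structural awkwardness: your rewriting leaves the $\thetav$-block in the first $m$ slots, whereas the zero-contraction term on the right of the target identity is $F_{m+n}(\hat\etav+i\zerov,\hat\thetav+i\piv-i\zerov)$, with the $\etav$-block first. A residue/contour-deformation comparison must be set up between boundary values of one fixed function at the same slot ordering, differing only in the $i0$ prescriptions. Without first applying \ref{it:fsymm} to commute the two blocks (as the paper does explicitly, obtaining $F_{m+n}(\etav-i\zerov,\thetav+i\piv+i\zerov)$ with the same slot order as the right-hand side), the phrase ``both sides are boundary values of $F_{m+n}$ on opposite sides of the poles'' is not yet literally true, and your bookkeeping of $S$-factors and signs would be off by a permutation factor that you have not accounted for.

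Second and more seriously, the core of the proof is the step you describe as ``iterative contour deformation ... the careful tracking of $i0$ prescriptions, signs, and the nested residue directions across the successive pole crossings is what makes the argument delicate'' --- and you leave it at that. Lemmas~\ref{lemma:contractcompose} and \ref{lemma:resgenrecursion} handle only the algebraic composition of $\delta_C$, $S_C$, $R_C$; they do not justify the analytic step of moving a boundary distribution across a union of intersecting pole hyperplanes and collecting residue contributions for every subset of crossings. The paper does exactly this step with Prop.~\ref{proposition:multivarres}, the multi-variable boundary-value/residue formula proved in Appendix~\ref{sec:bvlemma} (with its own careful induction on the number of hyperplanes, a choice of convex cone and vectors $\av_p,\bv_M,\cv$, and a regularization via antiderivatives to handle the distributional boundary values). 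You do not cite this proposition, and you would essentially need to re-derive it inline. Since you explicitly flag this as ``the main obstacle'' without supplying the argument, the proposal as written does not constitute a proof of the proposition.
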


\begin{proof}
By Prop.~\ref{proposition:expansionunique}, this is a well-defined element of $\qf^\omega$. So, we need only to show that the coefficients $\cme{m,n}{\cdotarg}$ of the right and of the left hand sides of \eqref{eq:Fshifted} are equal. We can rewrite the coefficients of the right hand side as follows:
\begin{multline}
F^{\pi}_{m+n}(\thetav+i\zerov,\etav+i\piv-i\zerov)
=F_{m+n}(\thetav+i\piv+i\zerov,\etav+2i\piv-i\zerov)\\
= \prod_{k=1}^{n}\Big(  \prod_{\ell=1}^{m}S(\eta_{k}-\theta_{\ell})\prod_{\substack{q=1 \\ q\neq k}}^{n}S(\eta_{q}-\eta_{k})\Big)F_{m+n}(\thetav+i\piv+i\zerov,\etav-i\zerov)\\
= \prod_{k=1}^{n}\Big(  \prod_{\ell=1}^{m}S(\eta_{k}-\theta_{\ell})\prod_{\substack{ q=1 \\ q\neq k}}^{n}S(\eta_{q}-\eta_{k})\Big)\Big(  \prod_{t=1}^{n}\prod_{u=1}^{m}S(\theta_{u}-\eta_{t})\Big)F_{m+n}(\etav-i\zerov,\thetav+i\piv+i\zerov),\label{Fpicomput}
\end{multline}
where in the first equality we made use of \ref{it:fperiod} and in the second equality we used \ref{it:fsymm}.

One can check that $\prod_{k=m+1}^{m+n}\prod_{\substack{ q=m+1 \\ q\neq k}}^{m+n}S(\eta_{q}-\eta_{k})=1$. Hence, we find:
\begin{equation}
F^{\pi}_{m+n}(\thetav+i\zerov,\etav+i\piv-i\zerov)=F_{m+n}(\etav-i\zerov,\thetav+i\piv+i\zerov).
\end{equation}
As for the left hand side of \eqref{eq:Fshifted}, we know that the coefficients of $JA\st J$ are given by Prop.~\ref{proposition:fmnreflected}; inserting $\cme{m,n}{A}$ as the boundary value of $F_k$, we need to show:
\begin{equation}
F_{m+n}(\etav-i\zerov,\thetav+i\piv+i\zerov)
=
\sum_{C\in\ccal_{m,n}}(-1)^{|C|}\delta_{C} S_C R_C (\thetav,\etav) F_{m+n}(\hat{\etav}+i\zerov,\hat\thetav+i\piv-i\zerov).\label{substcoeff}
\end{equation}
Now, we prove the following lemma:
\begin{lemma}
There holds the following equality:
\begin{equation}
F_{m+n}(\etav-i\zerov,\thetav+i\piv+i\zerov)
=
\sum_{C\in\ccal_{n,m}}\delta_{C} S_C R_C (\etav,\thetav) F_{m+n}(\hat{\etav}+i\zerov,\hat\thetav+i\piv-i\zerov).
\end{equation}
\end{lemma}
\begin{proof}
The proof of this lemma is an application of Prop.~\ref{proposition:multivarres} with the substitution $\zv = (\etav,\thetav)$,
with the indices $p$ in Prop.~\ref{proposition:multivarres} labelling the pairs $(\ell,r)$, $1 \leq \ell \leq n$, $n+1 \leq r \leq n+m$, with the contractions $C \in \ccal_{n,m}$ in place of $M \subset \{1,\ldots,p\}$, and with the following vectors in $\rbb^{n+m}$,
\begin{align}
\av_{(\ell,r)} &:= (0,\ldots,0,\underbrace{1}_{\ell},0,\ldots,0,\underbrace{-1}_{r},0,\ldots,0),\\
\bv_C &:= (1,\ldots,\underbrace{0}_{\ell_j},\ldots,\underbrace{1}_n,\underbrace{-1}_{n+1},\ldots,\underbrace{0}_{r_j},\ldots,-1),
  \quad \text{with } C=(n,m,\{(\ell_j,r_j)\}), \\
\cv &:= (\underbrace{-1,\ldots,-1}_{n},\underbrace{1,\ldots,1}_m).
\end{align}
We note that $\av_{(\ell,r)} \cdot \bv_C \geq 0$, and that this $=0$ when $(\ell,r)$ is contracted in $C$; we also note that $\av_{(\ell,r)} \cdot \cv < 0$; so we can apply Prop.~\ref{proposition:multivarres}.
We insert the residues of $F_{m+n}$ given by Lemma~\ref{lemma:resgenrecursion} into Prop.~\ref{proposition:multivarres}; however we note that the orientation of the hyperplanes $\zv\cdot\av_{\ell,r}=0$ is opposite to the pole hyperplanes in \eqref{resgenrec}, and this gives an additional factor $(-1)^{|C|}$. Hence, we obtain
\begin{equation}
F_{m+n}(\etav-i\zerov,\thetav+i\piv+i\zerov)
=
\sum_{C\in\ccal_{n,m}}\delta_{C} S_C R_C (\etav,\thetav) F_{m+n}(\hat{\etav}+i\zerov,\hat\thetav+i\piv-i\zerov).
\end{equation}
We note that there are sets of pairs $(\ell,r)$ that cannot form a valid contraction, but these cases correspond to residua that vanish. For example for $m=2$ and $n=2$, let us consider the set of pairs $\{ (1,3), (2,3) \}$. This set cannot form a contraction since the indices in the pairs are not pairwise different. On the other hand, in Prop.~\ref{proposition:multivarres} the residua on the hyperplanes $\zeta_3 -\zeta_1 =i\pi$ and $\zeta_3 -\zeta_2=i\pi$ is zero since the function $F$ does not depend on the variable $\zeta_3$ after computing the first residue. A second example is given by the set of pairs $\{ (1,2),(3,4)  \}$; this case also does not form a valid contraction since $\ell \leq 2$ and $r\geq 3$ are violated. On the other hand, the residua on the poles $\zeta_1 -\zeta_2 =0$ and $\zeta_3 -\zeta_4 =0$ is also zero since two right or two left variables do not differ by $i\pi$.
\end{proof}
Using Eq.~\eqref{eq:RSJ}, we find:
\begin{equation}
F_{m+n}(\etav-i\zerov,\thetav+i\piv+i\zerov)
=
\sum_{C\in\ccal_{n,m}}(-1)^{|C|}\delta_{C^{J}} S_{C^{J}} R_{C^{J}} (\thetav,\etav) F_{m+n}(\hat{\etav}+i\zerov,\hat\thetav+i\piv-i\zerov).
\end{equation}
Now, relabelling the summation index from $C$ to $C^J$, we finally find
\begin{equation}
F_{m+n}(\etav-i\zerov,\thetav+i\piv+i\zerov)
=
\sum_{C\in\ccal_{m,n}} (-1)^{|C|}\delta_{C} S_C R_C(\thetav,\etav) F_{m+n}(\hat{\etav}+i\zerov,\hat\thetav+i\piv-i\zerov).
\end{equation}
This proves \eqref{substcoeff} and therefore concludes the proof of Prop.~\ref{proposition:fshifted}.
\end{proof}

\section{Locality in a double cone}\label{sec:locdoublecone}

The last step that we need in order to conclude that $A$ fulfils the condition (A), namely that it is $\omega$-local in the double cone $\ocal_r$, is the following lemma:

\begin{lemma}\label{lemma:shiftpi}
If some functions $F_k$ fulfil the conditions (F), then $F_k(\cdotarg + i \piv)$ fulfil them as well.
\end{lemma}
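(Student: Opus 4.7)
Set $\tilde F_k(\zetav) := F_k(\zetav + i\piv)$. I would verify each of the six conditions \ref{it:fmero}--\ref{it:fboundsimag} in turn. Conditions \ref{it:fmero}, \ref{it:fsymm}, \ref{it:fperiod}, and \ref{it:frecursion} transfer immediately from $F_k$ to $\tilde F_k$ because they all depend only on rapidity differences $\zeta_i-\zeta_j$, which are invariant under the simultaneous translation $\zetav \mapsto \zetav + i\piv$: the pole hypersurfaces $\zeta_n-\zeta_m = i\pi$, the factors $S(\zeta_i-\zeta_j)$ and $R_C$, and the open region $\{\im\zeta_1<\ldots<\im\zeta_k<\im\zeta_1+\pi\}$ are all unchanged under such a shift.

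For \ref{it:fboundsreal}, a short induction on the recursive definition $\lambdav^{(k,j+2k)} := \lambdav^{(k,j)} + 2\piv$ (together with $\lambdav^{(k,\pm k)} = \pm\piv$) shows that $\lambdav^{(k,j+k\ell)} = \lambdav^{(k,j)} + \ell\piv$ for every $0\leq j\leq k$ and $\ell\in\zbb$. Thus $\lambdav^{(k,j+k\ell)}+\piv = \lambdav^{(k,j+k(\ell+1))}$, so
\begin{equation*}
\bigl\| \tilde F_k\bigl(\cdot + i\lambdav^{(k,j+k\ell)} + i0\nuv^{(k,j)}\bigr)\bigr\|_{(k-j)\times j}^\omega
= \bigl\| F_k\bigl(\cdot + i\lambdav^{(k,j+k(\ell+1))} + i0\nuv^{(k,j)}\bigr)\bigr\|_{(k-j)\times j}^\omega < \infty
\end{equation*}
by \ref{it:fboundsreal} for $F_k$ applied with $\ell$ replaced by $\ell+1$.

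The only nontrivial point is \ref{it:fboundsimag}, because shifting by $i\piv$ maps $\tube(\ich\gcal_-^k)$ into $\tube(\ich\gcal_+^k)$ but takes $\tube(\ich\gcal_+^k)$ out of the natural domain of the hypothesis. The plan is to first upgrade \ref{it:fperiod} to full $2i\piv$-periodicity. Iterating \ref{it:fperiod} successively for $j=1,\ldots,k$, using that $S$ is itself $2\pi i$-periodic and that Definition~\ref{def:Smatrix} yields $S(\zeta)S(-\zeta)=1$, the product $\prod_{j}\prod_{i\neq j}S(\zeta_i-\zeta_j) = \prod_{i<j} S(\zeta_i-\zeta_j)S(\zeta_j-\zeta_i) = 1$ collapses, giving $F_k(\zetav+2i\piv) = F_k(\zetav)$. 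Now for $\zetav\in\tube(\ich\gcal_-^k)$ one has $\zetav+i\piv\in\tube(\ich\gcal_+^k)$ and the bound \ref{it:fboundsimag} for $F_k$ applies directly; the required identities $\sinh(\zeta_j+i\pi) = -\sinh\zeta_j$, $\cosh\re(\zeta_j+i\pi) = \cosh\re\zeta_j$, and $\operatorname{dist}(\im\zetav+\piv,\partial\ich\gcal_+) = \operatorname{dist}(\im\zetav,\partial\ich\gcal_-)$ turn this bound into the desired one for $\tilde F_k$ on $\tube(\ich\gcal_-^k)$. For $\zetav\in\tube(\ich\gcal_+^k)$ one instead writes $\tilde F_k(\zetav) = F_k(\zetav+i\piv) = F_k(\zetav-i\piv)$ using $2i\piv$-periodicity, and since $\zetav-i\piv\in\tube(\ich\gcal_-^k)$ the bound from \ref{it:fboundsimag} for $F_k$ on $\ich\gcal_-^k$ yields the analogous bound for $\tilde F_k$ on $\ich\gcal_+^k$. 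This combination of periodicity with the case split between $\gcal_+^k$ and $\gcal_-^k$ is the one subtle step in the argument.
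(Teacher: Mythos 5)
Your proof is correct and takes essentially the same route as the paper's: the conditions depending only on rapidity differences transfer immediately, \ref{it:fboundsreal} amounts to the relabelling $\ell\to\ell+1$ of the nodes $\lambdav^{(k,j+k\ell)}$, and \ref{it:fboundsimag} is handled via $\sinh(\zeta+i\pi)=-\sinh\zeta$ combined with the periodicity condition. The only difference is one of explicitness: the paper's treatment of \ref{it:fboundsimag} compresses the argument into a single sentence invoking \ref{it:fperiod}, whereas you spell out the derivation of full $2i\piv$-periodicity (the product of $S$-factors telescoping to $1$) and the case split between $\zetav\in\tube(\ich\gcal_-^k)$ (shift lands in $\tube(\ich\gcal_+^k)$, bound applies directly) and $\zetav\in\tube(\ich\gcal_+^k)$ (use periodicity to shift down by $i\piv$ into $\tube(\ich\gcal_-^k)$), which is exactly what the paper's appeal to \ref{it:fperiod} hides.
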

\begin{proof}
Condition \ref{it:fmero} is clearly invariant under the translation $\zetav\rightarrow \zetav+i\piv$. The same is true for conditions \ref{it:fsymm} and \ref{it:fperiod} since the $S$-factors depend only on difference of rapidities. The poles of the recursion relations depend on differences of rapidities, too, so condition \ref{it:frecursion} is also invariant under $\zetav\rightarrow \zetav+i\piv$. In \ref{it:fboundsreal}, the shift of $i\piv$ implies $l \rightarrow l+1$. Since $l\in \mathbb{Z}$ is arbitrary, this means only a relabelling of the nodes $\pmb{\lambda}^{(k,j+kl)}$, $j\in \{0,\ldots,k \}$. So, condition \ref{it:fboundsreal} is invariant. Since $\sinh(\zetav + i \piv)=-\sinh \zetav$, the exponential factor in \ref{it:fboundsimag} is invariant under $\zetav\rightarrow \zetav+i\piv$. Moreover, the shift of $i\piv$ implies that the argument of $F_{k}$ is shifted from $\mathcal{G}_{-}$ to $\mathcal{G}_{+}$ and from $\mathcal{G}_{+}$ to $\mathcal{G}_{+} + \piv$, so by \ref{it:fperiod}, condition \ref{it:fboundsimag} is also invariant under the shift $\zetav\rightarrow \zetav+i\piv$.
\end{proof}
\begin{proof}[Proof of Theorem~\ref{theorem:FtoA}]

We already showed in Sec.~\ref{sec:definesq} that $A$ given by \eqref{eq:afromf} is a well-defined element of $\qf^\omega$. Moreover, we showed in Prop.~\ref{proposition:wedgelocal} that $A$ is $\omega$-local in $\wcal_r'$.

Now, from Prop.~\ref{proposition:fshifted}, we know that $J A^\ast J$ is given in the same form as $A$, just with the coefficient functions $F_k^\pi$ instead of $F_k$. We showed in Lemma \ref{lemma:shiftpi} that in the case where $F_k$ fulfils the conditions (F), then $(F_k^\pi)$ also fulfil the same conditions (F).
Then, by applying Prop.~\ref{proposition:wedgelocal} to $J A^\ast J$, we find that $A$ is $\omega$-local in $\wcal_{-r}$. Thus we have that $A$ is $\omega$-local in $\wcal_r' \cap \wcal_{-r} = \ocal_r$, and hence it fulfils (A).
\end{proof}

\chapter{Examples of local operators}\label{sec:localexamples}

We will now discuss some concrete examples for the case $S=-1$, which fulfil the conditions (F) introduced in Chapter~\ref{sec:localitythm}. In one of these example we admit only a finite number of coefficient functions $F_k$ for even $k$, which -- because of the recursion relation -- is possible
only if $S = -1$; the other example contains a infinite family of coefficient functions for odd $k$.

For $\zetav \in \cbb^k$, we set
\begin{equation}
   E(\zetav) = \sum_{j=1}^k \cosh \zeta_j .
\end{equation}
(the dimensionless energy function).

\section{Buchholz-Summers type} \label{sec:evenexample}

We discuss an example of a local operator similar to the one given by Buchholz and Summers in \cite{BuchholzSummers:2007}.
\begin{proposition}
We put $F_{k}=0$ for $k\neq 2$ and we set
\begin{equation}
F_{2}(\zeta_{1},\zeta_{2})=\sinh \frac{\zeta_{1}-\zeta_{2}}{2}\tilde{g}(\mu E(\pmb{\zeta})),
\end{equation}
where $\tilde{g}$ denotes the Fourier transform of a function $g \in \mathcal{D}(-r,r)$ for some $r>0$.

The family of functions $F_k$ fulfil the properties (F) with respect to $\omega(p):= \ell \log (1+p)$ with $\ell$ sufficiently large.
\end{proposition}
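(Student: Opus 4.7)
The plan is to verify each of the six conditions \ref{it:fmero}--\ref{it:fboundsimag} in turn, exploiting heavily that $S=-1$ and that $F_k$ vanishes except for $k=2$. First I would record the key algebraic fact that for constant $S=-1$ and for $F_2$ one has $\prod_{p=1}^{2}S_{m,p}=(-1)^{2}=1$, so the recursion factor $1-\prod_p S_{m,p}$ in \ref{it:frecursion} vanishes identically at $k=2$; more generally it vanishes for all even $k$. This is the reason an Araki expansion with only a single even-level coefficient is consistent.

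For \ref{it:fmero}, observe that $\sinh\tfrac{\zeta_1-\zeta_2}{2}$ is entire, and $\tilde g(\mu E(\zetav))$ is entire because $g\in\dcal(-r,r)$ forces $\tilde g$ to be entire of exponential type $r$, and $E(\zetav)=\cosh\zeta_1+\cosh\zeta_2$ is entire; so $F_2$ is entire on $\cbb^2$ (in particular meromorphic, no poles at all). For \ref{it:fsymm}, note $\sinh\tfrac{\zeta_1-\zeta_2}{2}$ is antisymmetric and $E(\zetav)$ is symmetric under transposition, which matches $S=-1$. For \ref{it:fperiod}, a direct computation gives $\sinh\tfrac{\zeta_j+2i\pi-\zeta_{j'}}{2}=-\sinh\tfrac{\zeta_j-\zeta_{j'}}{2}$, while $\cosh$ has period $2i\pi$, and the desired factor $S(\zeta_{j'}-\zeta_j)=-1$ matches. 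Condition \ref{it:frecursion} is automatic at $k=2$ since $F_2$ has no poles and the prefactor on the right-hand side vanishes as noted; at $k\neq 2$ both sides are zero (the prefactor vanishes for even $k$, and $F_{k-2}=0$ when both $k$ and $k-2$ are odd, with the only remaining possibilities being covered by $F_0=0$ and $F_{-1}$ absent).

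For the node bounds \ref{it:fboundsreal} and the pointwise bounds \ref{it:fboundsimag} I would use the Paley--Wiener estimate
\[
|\tilde g(p+iq)|\leq C_N(1+|p|)^{-N}e^{r|q|}
\]
valid for every $N\in\nbb_0$. On any line $\im\zetav=\lambdav$ with $\lambdav\in\rbb^2$, write $\mu E(\thetav+i\lambdav)=\mu\sum_j\cosh\theta_j\cos\lambda_j+i\mu\sum_j\sinh\theta_j\sin\lambda_j$; hence
\[
|\tilde g(\mu E(\zetav))|\leq C_N\Big(1+\mu\big|\!\sum_j\cosh\theta_j\cos\lambda_j\big|\Big)^{-N}\exp\Big(\mu r\sum_j|\sinh\theta_j||\sin\lambda_j|\Big),
\]
and $|\sinh\tfrac{\zeta_1-\zeta_2}{2}|$ grows at most like $\cosh\tfrac{\theta_1-\theta_2}{2}\leq (\cosh\theta_1\cosh\theta_2)^{1/2}$, which for $\omega(p)=\ell\log(1+p)$ is dominated by $e^{c'\omega(\cosh\theta_1)}e^{c'\omega(\cosh\theta_2)}$ provided $c'\ell\geq 1/2$. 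Since $|\sin\lambda_j|\leq |\sinh\zeta_j|/\cosh\theta_j$ estimated by $|\im\sinh\zeta_j|/\cosh\theta_j$ up to a bounded factor, and $\cosh\theta_j|\sin\lambda_j|=|\im\sinh\zeta_j|$, the exponential factor fits exactly the form required in \ref{it:fboundsimag}. The distance-to-boundary denominator is irrelevant because $F_2$ is entire and the tube domain is open. For the node bounds \ref{it:fboundsreal} one uses that at $\lambdav^{(2,j)}$ one has $\sin\lambda_i=0$, so there is no exponential growth from $\tilde g$; the argument $\mu E$ is then real, and by taking $N$ large in the Paley--Wiener estimate one gets arbitrarily fast polynomial decay in at least one of $\cosh\theta_1,\cosh\theta_2$ after a change of variables $u=\theta_1+\theta_2$, $v=\theta_1-\theta_2$ (using $\cosh\theta_1\pm\cosh\theta_2=2\cosh(u/2)\cosh(v/2)$ or $2\sinh(u/2)\sinh(v/2)$). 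Combined with the weight $e^{-\omega(\cosh\theta_1)}$ or $e^{-\omega(\cosh\theta_2)}$ and the right-hand inequality in \eqref{eq:crossnormcomparison}, one gets $\|F_2(\cdot+i\lambdav^{(2,j)})\|^{\omega}_{(2-j)\times j}<\infty$ for $\ell$ large enough. Periodic shifts $\lambdav^{(2,j+2\ell')}$ are handled by \ref{it:fperiod}, which only contributes unimodular phase factors on the real axis.

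The main technical obstacle will be the node bound at the intermediate node $\lambdav^{(2,1)}=(0,\pi)$, where the argument of $\tilde g$ is $\mu(\cosh\theta_1-\cosh\theta_2)$, which is small on the diagonal $\theta_1=\pm\theta_2$. There the prefactor $\sinh\tfrac{\theta_1-\theta_2-i\pi}{2}=-i\cosh\tfrac{\theta_1-\theta_2}{2}$ grows exponentially in $|\theta_1-\theta_2|$, so one cannot rely on decay of $\tilde g$ alone; one must use the damping $(1+\cosh\theta_j)^{-\ell}$ to absorb this growth. The computation then amounts to showing that the two-dimensional integral in $(u,v)$ converges once the change of variables is made and $\ell$ is chosen large enough, which is routine but is the place where the hypothesis ``$\ell$ sufficiently large'' is actually used.
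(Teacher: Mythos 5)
Your proof follows essentially the same route as the paper's: verify (F1)--(F4) by the same elementary observations, use Paley--Wiener on $\tilde g$ together with $|\sinh z|\leq e^{|\re z|}\leq\cosh\re\zeta_1\cosh\re\zeta_2$ for (F6), and reduce (F5) to a finiteness argument for a two-dimensional integral with the damping $(1+\cosh\theta_j)^{-\ell}$ absorbing the hyperbolic growth of the prefactor at the intermediate node. The only real divergence is cosmetic: at $\lambdav^{(2,1)}$ you propose the linear change $u=\theta_1+\theta_2$, $v=\theta_1-\theta_2$ with $\cosh\theta_1\mp\cosh\theta_2=2\sinh(u/2)\sinh(v/2)$ resp.\ $2\cosh(u/2)\cosh(v/2)$, whereas the paper substitutes $x_j=\sinh(\theta_j/2)$ (so the argument of $\tilde g$ becomes polynomial, $2(x_1^2-x_2^2)$) and then switches to polar coordinates to carry out the integral explicitly; both substitutions accomplish the same thing, the paper's making the routine convergence check a little more transparent. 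One small imprecision worth flagging: the phrase ``arbitrarily fast polynomial decay in at least one of $\cosh\theta_1,\cosh\theta_2$'' is only accurate at the two outer nodes; at $\lambdav^{(2,1)}$ the argument $\mu(\cosh\theta_1-\cosh\theta_2)$ vanishes identically on $\theta_1=\pm\theta_2$, so $\tilde g$ gives no decay there at all --- you correctly recognize this in your final paragraph, but the earlier sentence should be scoped to $j\in\{0,2\}$.
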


We can show that in this case the recursion relations are trivial. Note also that Eq.~\eqref{ArakiF} becomes
\begin{multline}
A=\frac{1}{2}\int d\theta_{1}d\theta_{2} \sinh\Big( \frac{\theta_{1}-\theta_{2}}{2}\Big)\tilde{g}(\mu E(\pmb{\theta}))z^{+}(\theta_{1})z^{+}(\theta_{2})\\
-\frac{1}{2}\int d\theta_{1}d\theta_{2} \sinh\Big( \frac{\theta_{1}-\theta_{2}}{2}\Big)\tilde{g}(-\mu E(\pmb{\theta}))z(\theta_{1})z(\theta_{2})\\
+i\int d\theta_{1} d\theta_{2}\cosh \Big( \frac{\theta_{1}-\theta_{2}}{2}\Big)\tilde{g}(\mu \cosh \theta_{1}-\mu\cosh\theta_{2})z^{+}(\theta_{1})z(\theta_{2}).
\end{multline}
so, the Araki sum is finite.

\begin{proof}
We verify conditions (F):
\begin{enumerate}
\renewcommand{\theenumi}{F\arabic{enumi}}
\renewcommand{\labelenumi}{(\theenumi)}

\item \emph{Analyticity}: Since $g$ has compact support, its Fourier transform $\tilde{g}$ is entire analytic. $\sinh$, $E(\pmb{\zeta})$ are entire, too, and the composition of analytic functions is also analytic. So, the function $F_{2}$ is entire analytic.

\item \emph{$S$-symmetry}: Since $E(\zeta_{1},\zeta_{2})=E(\zeta_{2},\zeta_{1})$, we have
\begin{equation}
F_{2}(\zeta_{2},\zeta_{1})=\sinh \Big( \frac{\zeta_{2}-\zeta_{1}}{2}\Big)\tilde{g}(\mu E(\zeta_{2},\zeta_{1}))
=-\sinh\Big( \frac{\zeta_{1}-\zeta_{2}}{2}\Big)\tilde{g}(\mu E(\pmb{\zeta}))=-F_{2}(\zeta_{1},\zeta_{2}).
\end{equation}

\item \emph{$S$-periodicity}: Due to the property of $S$-symmetry, it suffices to show the property of periodicity with respect to the variable $\zeta_{1}$. Since $E(\zeta_{1}+2i\pi,\zeta_{2})=E(\zeta_{1},\zeta_{2})$, then:
\begin{multline}
F_{2}(\zeta_{1}+2i\pi,\zeta_{2})=\sinh \Big( \frac{\zeta_{1}-\zeta_{2}+2i\pi}{2}\Big)\tilde{g}(\mu E(\zeta_{1}+2i\pi,\zeta_{2}))\\
=-\sinh\Big( \frac{\zeta_{1}-\zeta_{2}}{2}\Big)\tilde{g}(\mu E(\zeta_{1},\zeta_{2}))=-F_{2}(\zeta_{1},\zeta_{2}).
\end{multline}

\item \emph{Recursion relations}: we can show that both sides of \ref{it:frecursion} are zero for all $k$. To see this, we consider the left hand side of \ref{it:frecursion}. Since the function $F_{k}$ is entire analytic, its residua are zero, so the left hand side of \ref{it:frecursion} vanishes. Now we consider the right hand side of \ref{it:frecursion}. In the case $S=-1$ and $k$ even, $\Big( 1- \prod_{p=1}^{k}S_{p,m}\Big)=0$. If $k$ is odd, all the functions $F_{k}$ are zero by hypothesis. So, in both cases the right hand side of \ref{it:frecursion} also vanishes.

\item \emph{Bounds on nodes}: We show that the $||\cdot||_{m\times n}^{\omega}$-norm (see Eq.~\eqref{eq:crossnorm}) of $F_{2}$ on the nodes of $\mathcal{G}_{0}$ is finite. For any $j\in \{0,1,2  \}$, the nodes of $\mathcal{G}_{0}$ are the points $\pmb{\lambda}^{(2,j)}$ and $\pmb{\lambda}^{(2,-j)}$. We consider only the nodes $\pmb{\lambda}^{(2,j)}$, since the proof is analogous for the other points.

\begin{enumerate}
\item We use Formula \eqref{eq:crossnormcomparison} and we compute on the node $\pmb{\lambda}^{(2,0)}$ the norm $||f_{20}||_{2}^{2}$:
\begin{equation}
\int d^{2}\pmb{\theta}\;|f_{20}(\theta_{1},\theta_{2})|^{2}\leq  \int d\theta_{1}d\theta_{2}\;\left|\; \sinh\Big( \frac{\theta_{1}-\theta_{2}}{2}\Big)\tilde{g}(\mu E(\pmb{\theta}))\;\right|^{2}.\label{node20}
\end{equation}
Since $g$ is a Schwartz function, then there is a constant $c$ such that $|\tilde{g}(p)|\leq \frac{c}{1+p^{2}}$. Hence
\begin{equation}
\int d^{2}\pmb{\theta}\;|f_{20}(\theta_{1},\theta_{2})|^{2}\leq \int d^{2}\pmb{\theta}\;\sinh^{2} \Big( \frac{\theta_{1}-\theta_{2}}{2}\Big)\Big( \frac{c}{1+(\cosh\theta_{1}+\cosh\theta_{2})^{2}\mu^{2}}\Big)^{2}.\label{node20int}
\end{equation}
Since $ \sinh^{2} \Big( \frac{\theta_{1}-\theta_{2}}{2}\Big)\leq \cosh^{2}\Big( \frac{\theta_{1}-\theta_{2}}{2}\Big)\leq (\cosh\theta_{1})^{2}+(\cosh\theta_{2})^{2}$, the integrand function behaves for large $\pmb{\theta}$ as
\begin{equation}
\sinh^{2} \Big( \frac{\theta_{1}-\theta_{2}}{2}\Big)\Big( \frac{c}{1+(\cosh\theta_{1}+\cosh\theta_{2})^{2}\mu^{2}}\Big)^{2}\approx
\frac{1}{(\cosh\theta_{1})^{2}+(\cosh\theta_{2})^{2}}
\end{equation}
and, therefore, the integral on the right hand side of \eqref{node20int} is finite.

By \eqref{eq:crossnormcomparison}, $||f_{20}||_{2}<\infty$ implies $||f_{20}||_{2\times 0}^{\omega}<\infty$.

\item By a similar computation in \eqref{node20int}, we have $||f_{02}||^{\omega}_{0\times 2}<\infty$ on the node $\pmb{\lambda}^{(2,2)}$.

\item We use Formula \eqref{eq:crossnormcomparison} with $\omega(p)=\ell \log(1+p)$ and we compute on the node $\pmb{\lambda}^{(2,1)}$ the norm $||f_{11}e^{-\omega(\theta_{2})}||^{2}_{2}$:
\begin{multline}\label{eq:f11int}
\int d\theta_{1}d\theta_{2}\;|f_{11}(\theta_{1},\theta_{2})(1+E(\theta_{2}))^{-\ell}|^{2}\\
=\int d\theta_{1}d\theta_{2}\;\left|\; \cosh\Big( \frac{\theta_{1}-\theta_{2}}{2}\Big)\tilde{g}(\mu \cosh \theta_{1}-\mu \cosh \theta_{2})(1+E(\theta_{2}))^{-\ell}\;\right|^{2}.
\end{multline}
Using the substitution $x_{j}=\sinh \frac{\theta_{j}}{2}$, $d\theta_{j}=\frac{2dx_{j}}{\sqrt{1+x_{j}^{2}}}$, $j=1,2$, we find
\begin{multline}
\lhs{eq:f11int}\\
=
\int \frac{2 dx_{1}}{\sqrt{1+x_{1}^{2}}}\frac{2 dx_{2}}{\sqrt{1+x_{2}^{2}}}\;
\left|\Big(\prod_{j=1}^{2}\sqrt{1+x_{j}^{2}}-x_{1}x_{2}\Big)\tilde{g}(2x_{1}^{2}-2x_{2}^{2})2(1+x_{2}^{2})^{-\ell}\right|^{2},
\end{multline}
where we used that $ \cosh( \frac{\theta_{1}-\theta_{2}}{2})=\cosh\frac{\theta_{1}}{2}\cosh\frac{\theta_{2}}{2}-\sinh\frac{\theta_{1}}{2}\sinh\frac{\theta_{2}}{2}$ and $\cosh\frac{\theta_{j}}{2}=\sqrt{1+\sinh^2 \frac{\theta_{j}}{2}}$, $j=1,2$.

Using $\frac{1}{\sqrt{1+x_{j}^{2}}}\leq 1,\; j=1,2$, and since $\prod_{j=1}^{2}\sqrt{1+x_{j}^{2}}-x_{1}x_{2}\leq (1+x_{1}^{2})(1+x_{2}^{2})+|x_{1}x_{2}|\leq 2(1+x_{1}^{2})(1+x_{2}^{2})$ (since $|x|\leq 1+x^{2}$ for $x\in\mathbb{R}$), we have:
\begin{equation}
\text{l.h.s (8.8)}
\leq 16\int dx_{1}dx_{2}\;(1+x_{1}^{2})^{2}(1+x_{2}^{2})^{2-2\ell}|\tilde{g}(2x_{1}^{2}-2x_{2}^{2})|^{2}.
\end{equation}
Using that $|\tilde{g}(p)|\leq c_{\ell}(1+p^{2})^{-\ell}$ with $\ell$ as above, we find
\begin{equation}
\text{l.h.s (8.8)}\leq 16 \int dx_{1}dx_{2}\; (1+x_{1}^{2})^{2}(1+x_{2}^{2})^{2-2\ell}\frac{c_{\ell}^{2}}{(1+4(x_{1}^{2}-x_{2}^{2})^{2})^{2\ell}}.\label{substexamp}
\end{equation}
Now we perform in \eqref{substexamp} the substitution $x_{1}=\rho\sin\varphi$, $x_{2}=\rho\cos\varphi$. As for the term $\Big( \frac{1}{1+x_{2}^{2}}\frac{1}{1+4(x_{1}^{2}-x_{2}^{2})^{2}}\Big)^{2\ell}$, we have:
\begin{eqnarray}
\frac{1}{1+x_{2}^{2}}\frac{1}{1+4(x_{1}^{2}-x_{2}^{2})^{2}}&=& \frac{1}{1+\rho^{2}\cos^{2}\varphi}\frac{1}{1+4\rho^{4}(\cos^{2}\varphi-\sin^{2}\varphi)^{2}}\nonumber\\
&=&\frac{1}{1+\rho^{2}\cos^{2}\varphi +4\rho^{4}(\cos^{2}\varphi-\sin^{2}\varphi)^{2}+\ldots}\nonumber\\
&\leq& \frac{1}{1+\rho^{2}(\cos^{2}\varphi +4\rho^{2}(\cos^{2}\varphi -\sin^{2}\varphi)^{2})}.
\end{eqnarray}
There holds:
\begin{equation}
\frac{1}{1+\rho^{2}(\cos^{2}\varphi+4\rho^{2}(\cos^{2}\varphi - \sin^{2}\varphi)^{2})}\leq
\begin{cases} \frac{1}{1+c \rho^{2}}, \quad &\rho\geq 1,
\\ 1, \quad &\rho\leq 1,
\end{cases}
\end{equation}
where in the upper inequality we made use of the fact that $\cos^{2}\varphi +4\rho^{2}(\cos^{2}\varphi - \sin^{2}\varphi )^{2}\geq \cos^{2}\varphi + 4(\cos^{2}\varphi - \sin^{2}\varphi )^{2}\geq c >0$.

Hence, we have from \eqref{substexamp}:
\begin{equation}
 \begin{aligned}
\lhs{eq:f11int}
 &\leq 16 \Big(\int_{0}^{1} \rho d\rho d\varphi\; (1+\rho^{2}\sin^{2}\varphi)^{2}(1+\rho^{2}\cos^{2}\varphi)^{2} \times \\
 &\qquad \times \frac{c_{\ell}^{2}}{(1+\rho^{2}(\cos^{2}\varphi +4\rho^{2}(\cos^{2}\varphi -\sin^{2}\varphi)^{2}))^{2\ell}}\\
 &\quad +\int_{1}^{\infty} \rho d\rho d\varphi\; (1+\rho^{2}\sin^{2}\varphi)^{2}(1+\rho^{2}\cos^{2}\varphi)^{2} \times \\
 &\qquad\times \frac{c_{\ell}^{2}}{(1+\rho^{2}(\cos^{2}\varphi +4r^{2}(\cos^{2}\varphi -\sin^{2}\varphi)^{2}))^{2\ell}}\Big)\\
&\leq 16  \Big( c_{\ell}^{2}\int_{0}^{1} \rho d\rho d\varphi\; (1+\rho^{2}\sin^{2}\varphi)^{2}(1+\rho^{2}\cos^{2}\varphi)^{2}\\
&\quad +\int_{1}^{\infty} \rho d\rho d\varphi\; (1+\rho^{2}\sin^{2}\varphi)^{2}(1+\rho^{2}\cos^{2}\varphi)^{2}\frac{c_{\ell}^{2}}{(1+c \rho^{2})^{2\ell}}\Big).
\end{aligned}
\end{equation}
The integral in the fourth line of the formula above is clearly finite. As for the integral in the last line, we notice that the integrand function behaves like $\frac{c}{\rho^{4\ell-9}}$ for $\rho>>1$, so this integral is finite if we choose $\ell$ sufficiently large.
\end{enumerate}

\item \emph{Pointwise bounds}: For $g\in \mathcal{D}(-r,r)$ we have
\begin{equation}
|\tilde{g}(\mu E(\pmb{\zeta}))|\leq \frac{||g||_{1}}{\sqrt{2\pi}}e^{\mu r |\im (\cosh\zeta_{1}+\cosh\zeta_{2})|},
\end{equation}
and $\sinh z$ for $z\in \mathbb{C}$ is bounded by
\begin{equation}
|\sinh z|=\left|  \frac{e^{z}-e^{-z}}{2i}\right|\leq \left| \frac{e^{z}}{2i}\right| + \left|\frac{e^{-z}}{2i} \right|\leq e^{|\operatorname{Re}z|}.
\end{equation}
Hence, we have
\begin{eqnarray}
|F_{2}(\pmb{\zeta})| &=& \left|  \sinh\Big( \frac{\zeta_{1}-\zeta_{2}}{2}\Big)\tilde{g}(\mu E(\pmb{\zeta}))\right|\nonumber\\
&\leq& e^{| \operatorname{Re}( \frac{\zeta_{1}-\zeta_{2}}{2} ) |}c e^{\mu r |\im (\cosh\zeta_{1} + \cosh\zeta_{2})|}\nonumber\\
&\leq& (\cosh \operatorname{Re}\zeta_{1})(\cosh\operatorname{Re}\zeta_{2}) c e^{\mu r (|\im \sinh \zeta_{1}| + |\im \sinh \zeta_{2}|)},\label{pointestim}
\end{eqnarray}
where in the last inequality we made use of the equalities $\im (\sinh z) = \cosh (\re z) \sin (\im z)$ and $\im (\cosh z) = \sinh (\re z) \sin( \im z)$ which imply the estimate $|\im (\cosh z)|\leq |\im (\sinh z)|$.

Since $e^{\omega(\cosh z)}= (1+\cosh z)^{3}$, $\ell=3$, we have from \eqref{pointestim}:
\begin{equation}
|F_{2}(\pmb{\zeta})|\leq c \prod_{j=1}^{2}\Big(e^{\omega(\cosh \operatorname{Re}\zeta_{j})}e^{\mu r |\im \sinh \zeta_{j}|}\Big).
\end{equation}
This is in agreement with condition \ref{it:fboundsimag} because $\operatorname{dist}(\im \pmb{\zeta}, \partial \ich \mathcal{G}_{+})$ is bounded from above.
\end{enumerate}
\end{proof}

\section{Schroer-Truong type}\label{sec:oddexample}

Now we discuss an example which is similar to the one given by Schroer and Truong in \cite{SchroerTruong:1978}, who give examples of local operators in the Ising model as formal series in terms of annihilators and creators.

The test function $g$ which appears in Prop.~\ref{pro:schroertruong} and Prop.~\ref{pro:kprototype} need to be of a certain class of test functions similar to Jaffe class, more precisely in the space of functions $\mathcal{S}_\omega$ defined in \cite[Definition~1.8.1]{Bjoerck:1965}. We do not state the technical definition here, but we will remark that this space of test function $\mathcal{S}_\omega$ is non-trivial since $\dcal^\omega$ is dense in $\mathcal{S}_\omega$ due to \cite[Proposition~1.8.6]{Bjoerck:1965} and \cite[Theorem~1.8.7]{Bjoerck:1965}, and since it was already shown in \cite[Lemma~1.3.9]{Bjoerck:1965} that $\dcal_\omega$ is non-trivial. Further, the functions in $\mathcal{S}_\omega$ fulfil specific bounds in momentum space, which are part of the definition \cite[Definition~1.8.1]{Bjoerck:1965}, and are given by
\begin{equation}
\sup_{p \in \rbb} e^{\lambda \omega(p)}| \partial^j \tilde{g}(p)| < \infty
\end{equation}
for each $j\in \nbb_0$, for each non-negative constant $\lambda$ and for an indicatrix $\omega$ of a certain class, see \cite[Definition~1.3.23]{Bjoerck:1965}. Note that the indicatrix $\omega(p)$ in Prop.~\ref{pro:schroertruong}, given by $\omega(p)=p^\alpha$, with $1/3<\alpha<1$, belongs to this class. We will make use of these bounds in Prop.~\ref{pro:kprototype}.

Most of the material in this section is due to H. Bostelmann. Our task is to prove the following proposition:

\begin{proposition}\label{pro:schroertruong}
Let $g \in \dcal(\rbb)$, $g \in \mathcal{S}_\omega$ with the indicatrix $\omega(p)$ given by $\omega(p)=p^\alpha$, with $1/3<\alpha<1$. We consider the set of meromorphic functions
\begin{equation}\label{eq:oddexamplePerm}
  F_{2k+1} (\zetav) =  \frac{1}{(4 \pi i)^{k} k!} \tilde g( \mu E(\zetav) ) \sum_{\sigma \in \perms{2k+1}} \sign \sigma \prod_{j=1}^k \tanh \frac{ \zeta_{\sigma(2j-1)} - \zeta_{\sigma(2j)}}{2}, \quad k \in \nbb_0,
\end{equation}
with $F_{2k} = 0$ for any $k$.

The family of functions $F_k$ fulfil the properties (F).
\end{proposition}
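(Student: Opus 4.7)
The plan is to verify each of the six conditions (F1)--(F6) of Definition~\ref{def:conditionF} for the family \eqref{eq:oddexamplePerm}. Several conditions are essentially structural consequences of the explicit formula. For \ref{it:fmero}: since $g$ has compact support, $\tilde g$ is entire, hence so is $\tilde g(\mu E(\zetav))$; each factor $\tanh((\zeta_a-\zeta_b)/2)$ is meromorphic on $\cbb$ with simple poles precisely at $\zeta_a-\zeta_b\in i\pi(1+2\zbb)$, so $F_{2k+1}$ is meromorphic with at most first-order poles, and on the strip $\im\zeta_1<\ldots<\im\zeta_{2k+1}<\im\zeta_1+\pi$ no such pole occurs. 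For \ref{it:fsymm} with $S=-1$ (i.e.\ antisymmetry), the sum $\sum_\sigma\sign\sigma$ is manifestly antisymmetric. For \ref{it:fperiod}: the shift $\zeta_j\mapsto\zeta_j+2i\pi$ leaves $\tilde g(\mu E)$ invariant (period $2i\pi$ of $\cosh$) and every $\tanh((\zeta_a-\zeta_b)/2)$ invariant (period $i\pi$ of $\tanh$), while the required prefactor equals $(-1)^{2k}=1$.

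The substantive step is \ref{it:frecursion}, the residue condition at $\zeta_n-\zeta_m=i\pi$ with $m<n$. Only those $\sigma\in\perms{2k+1}$ for which $\{\sigma(2j-1),\sigma(2j)\}=\{m,n\}$ for some $j$ produce a pole there, and $\operatorname{Res}_{w=i\pi}\tanh(w/2)=2$. Grouping such $\sigma$'s in pairs differing by the transposition $(2j-1\;2j)$, the two members have opposite $\sign\sigma$ but their distinguished $\tanh$ factors also have opposite residue signs, so the contributions add coherently. On the pole surface one has $\cosh\zeta_m+\cosh\zeta_n=0$, whence $\tilde g(\mu E(\zetav))=\tilde g(\mu E(\hat\zetav))$, and the remaining $\tanh$ factors together with the induced sum over $\tau\in\perms{2k-1}$ reproduce $F_{2k-1}(\hat\zetav)$ after reindexing the $2k-1$ leftover positions. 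Summing over the $k$ choices of pair-position $j$ combines with the normalization $1/((4\pi i)^k k!)$ to yield precisely $\frac{1}{\pi i}F_{2k-1}(\hat\zetav)$, which matches the right-hand side of \ref{it:frecursion}; for $S=-1$ and $2k+1$ odd this right-hand side equals $-\frac{1}{2\pi i}\,(-1)^{n-m+1}\!\cdot 2\cdot F_{2k-1}(\hat\zetav)$, up to a sign from the Koszul-type reindexing of positions that is absorbed by the antisymmetry \ref{it:fsymm} of $F_{2k+1}$.

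For \ref{it:fboundsreal} and \ref{it:fboundsimag} I would exploit two estimates on $\tilde g$: the Paley--Wiener bound $|\tilde g(\zeta)|\le c\, e^{r|\im\zeta|}$ from $\supp g\subset(-r,r)$, and the $\scal_\omega$ decay $|\tilde g(p)|\le c_\lambda e^{-\lambda\omega(p)}$ on $\rbb$ for every $\lambda\ge0$. A three-lines argument, using the analytic extension $\oa$ of $\omega$, as in the proof of Proposition~\ref{prop:omegapw}, combines them into
\begin{equation*}
   |\tilde g(\mu E(\zetav))|\le c\,\exp\Big(\mu r\sum_{j}|\im\sinh\zeta_j|-c'\sum_{j}\omega(\cosh\re\zeta_j)\Big).
\end{equation*}
Each factor $\tanh((\zeta_a-\zeta_b)/2)$ is uniformly bounded away from its pole by a constant of order $\operatorname{dist}^{-1}$, and together with maximum-modulus sharpening (in the spirit of Proposition~\ref{proposition:pointwise} and Lemma~\ref{lemma:maxmodcross}) this yields the $\operatorname{dist}^{-k/2}$-type control required by \ref{it:fboundsimag}. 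For \ref{it:fboundsreal}, on the nodes of $\gcal_0^k$ all $\tanh$ factors are $O(1)$, $\re E(\zetav)$ is a sum $\sum_j\pm\cosh\theta_j$, and the cross-norm reduces to integrability of $|\tilde g(\mu E(\zetav))|\,e^{-\omega(\cdot)}$, controlled by the $\scal_\omega$ decay for large $|\thetav|$.

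The principal obstacle will be the node bound \ref{it:fboundsreal} at configurations where the $\pm$ signs in $\re E=\sum_j\cos\lambda_j\cosh\theta_j$ conspire to make $\re E$ small over a large region of $\thetav$-space, so that the pure real decay of $\tilde g$ is insufficient. To handle this, one retains the imaginary part $\im E=\sum_j\sin\lambda_j\sinh\theta_j$, which is generically nonzero on the edges adjacent to the node and offsets the cancellation, or alternatively splits the integration into regions where $|E(\zetav)|$ is respectively large or small and treats each piece by a different bound on $\tilde g$; the condition $1/3<\alpha<1$ in $\omega(p)=p^\alpha$ is precisely what makes the resulting integrals converge. A secondary difficulty is the sharper distance factor $\operatorname{dist}^{-k/2}$ in \ref{it:fboundsimag} (rather than the naive $\operatorname{dist}^{-k}$ from bounding each $\tanh$ separately), which requires the maximum-modulus estimate applied to the full product $\tilde g(\mu E)\cdot\prod\tanh$ rather than factor by factor.
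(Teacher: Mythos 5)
Your treatment of conditions \ref{it:fmero}--\ref{it:frecursion} is sound and follows the same route the paper takes (the paper organizes it via the auxiliary functions $T_m$ and Lemma~\ref{lem:Tmresidues}, but the content of your residue computation is the same, modulo tracking the $(-1)^{i+j}$ sign carefully). For \ref{it:fboundsimag} you propose a maximum-modulus/three-lines argument where the paper instead proves the explicit polynomial bound $|T_{2k+1}(\zetav)\prod_{i<j}(\zeta_i-\zeta_j+i\pi)| \le c\prod_j(|\re\zeta_j|+1)$; both routes are viable.

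There is, however, a genuine gap in your treatment of \ref{it:fboundsreal}. You assert that ``on the nodes of $\gcal_0^k$ all $\tanh$ factors are $O(1)$'' and reduce the node bound to integrability of $|\tilde g(\mu E(\zetav))|e^{-\omega}$. This is false: on the nodes one is evaluating $F_{m+n}$ at $(\thetav,\etav+i\piv)$, so for a pair in which one index sits at imaginary part $0$ and the other at $\pi$ the factor is $\tanh\frac{\theta_a-\eta_b+i\pi}{2}=\coth\frac{\theta_a-\eta_b}{2}$, which has a first-order pole on the real hypersurface $\theta_a=\eta_b$. This is exactly the pole surface predicted by \ref{it:frecursion}, and the boundary value at the node exists only as a distribution, not as a bounded kernel. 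Controlling the $\onorm{\cdot}{m\times n}$-norm of such a kernel is the dominant difficulty in verifying \ref{it:fboundsreal}, and it is why the paper invests Lemma~\ref{lem:kerneloneover}, Lemma~\ref{lem:kernelcoth}, Proposition~\ref{pro:pderivremainder}, Lemma~\ref{lem:expquotient} and Proposition~\ref{pro:kprototype} in showing that the principal-value distribution $\coth((\theta-\eta\pm i0)/2)$ acts as a bounded Fourier multiplier ($\Theta(\pm p)$) and therefore contributes only a fixed constant to the cross-norm. The secondary difficulty you do correctly identify --- that $E(\thetav)-E(\etav)$ is small on a large region so the decay of $\tilde g$ alone is insufficient, necessitating the $e^{-\omega(E(\etav))}$ weight and the $\alpha>1/3$ condition --- is handled in the paper by Lemma~\ref{lem:expquotient} (the $\Gamma$-function computation), but only after the $\coth$ singularities have been tamed. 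Without addressing the $\coth$ singularities your approach to \ref{it:fboundsreal} does not go through.
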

\begin{proof}
The proof of this proposition is given in the following Sec.~\ref{sec:elemproperties} and Sec.~\ref{sec:operatorbounddom}.
\end{proof}

\subsection{Elementary properties}\label{sec:elemproperties}

We consider the following building block of the functions $F_{2k+1}$:
\begin{equation}
   T_m(\zetav) :=  \frac{1}{2^{k} k !}\sum_{\sigma \in \perms{m}} \sign \sigma \prod_{j=1}^{k} \tanh \frac{ \zeta_{\sigma(2j-1)} - \zeta_{\sigma(2j)}}{2}, \quad \text{where } k = \lfloor m/2 \rfloor, \; m \in \nbb_0.
\end{equation}
It is useful to rewrite these in the following way. We call a \emph{pairing} of $m$ indices a set of pairs, $P = \{ (\ell_1,r_1),\ldots,(\ell_k,r_k)\}$, with $k = \lfloor m/2\rfloor$, where $\ell_j,r_j \in \{1,\ldots,m\}$ are pairwise different and $\ell_j < r_j$. This implies that $\{ \ell_j\} \cup \{r_j\}$ is the whole set $\{1,\ldots,m\}$ if $m$ even, except for one element in the case where $m$ is odd, namely $\{1,\ldots,m\}\setminus \{ \hat{m}\}$. We denote the set of all pairings of $m$ indices by $\pcal_m$. The \emph{signum} of a pairing $P$ is defined as
\begin{equation}\label{eq:signPdef}
   \sign P =  \left\{\begin{array}{l} \sign \begin{pmatrix} 1 & 2 & 3 & 4 & \cdots & 2k\!-\!1 & 2k  \\
         \ell_1 & r_1 & \ell_2 & r_2 & \cdots & \ell_k & r_k \end{pmatrix} , \quad \text{ if } m \text{ even } \\ \sign \begin{pmatrix} 1 & 2 & 3 & 4 & \cdots & 2k\!-\!1 & 2k &  m \\
         \ell_1 & r_1 & \ell_2 & r_2 & \cdots & \ell_k & r_k &\hat{m} \end{pmatrix} , \quad \text{ if } m \text{ odd } \end{array} \right.
\end{equation}
Note that this definition does not depend on the order of pairs $(\ell_j,r_j)$.

Using these definitions, we can express the function $T_m$ as
\begin{equation}\label{eq:TmWithPairs}
   T_m(\zetav) =  \sum_{P \in \pcal_m} \sign P \prod_{(\ell,r)\in P} \tanh \frac{ \zeta_{\ell} - \zeta_{r}}{2}.
\end{equation}
Note that here $P$ corresponds to $2^k k!$ permutations $\sigma$ in the sum \eqref{eq:oddexamplePerm}: these come from permuting the pairs among each other ($k!$ possibilities) and exchanging the two elements in each pair ($2^k$ possibilities).

For real arguments $T_m$ is evidently bounded. We conjecture a specific bound which plays some role in our argument. It will become useful in Sec.~\ref{sec:operatorbounddom}, when we will need to discuss condition \ref{it:fboundsreal}.
\begin{conjecture} \label{conj:Tmbounds}
 We have $|T_m(\thetav)| \leq 1$ for all $\thetav \in \rbb^m$.
\end{conjecture}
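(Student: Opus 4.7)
The plan is to derive a closed-form product representation for $T_m(\thetav)$ by recognizing it as a Pfaffian and invoking Schur's classical Pfaffian identity; the conjectured bound then follows at once. First I would observe that the matrix $A(\thetav)$ with entries $A_{\ell,r}(\thetav) := \tanh\frac{\theta_\ell - \theta_r}{2}$ is antisymmetric in $\ell,r$. For even $m = 2k$, the representation \eqref{eq:TmWithPairs} says precisely that $T_{2k}(\thetav) = \operatorname{Pf} A(\thetav)$ by the definition of the Pfaffian as a signed sum over pairings.

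Next, I would change variables to $x_j := e^{\theta_j} > 0$, under which $A_{\ell,r}$ takes the form $(x_\ell - x_r)/(x_\ell + x_r)$. Schur's Pfaffian identity then applies:
\[
\operatorname{Pf}\!\left(\frac{x_\ell - x_r}{x_\ell + x_r}\right)_{\ell,r=1}^{2k} = \prod_{1 \leq \ell < r \leq 2k} \frac{x_\ell - x_r}{x_\ell + x_r}.
\]
Translating back gives $T_{2k}(\thetav) = \prod_{\ell < r} \tanh\frac{\theta_\ell - \theta_r}{2}$, and since $|\tanh x| \leq 1$ for $x \in \rbb$, the even case of the conjecture follows immediately.

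For odd $m = 2k+1$, I would reduce to the even case by appending an auxiliary index $m+1$. Let $\tilde A$ be the $(m+1) \times (m+1)$ antisymmetric matrix with $\tilde A_{\ell,r} = A_{\ell,r}$ for $\ell,r \leq m$ and with $\tilde A_{\ell, m+1} = 1 = \lim_{\theta_{m+1} \to -\infty} \tanh\frac{\theta_\ell - \theta_{m+1}}{2}$. Expanding $\operatorname{Pf} \tilde A$ along the last row and column -- each pairing of $\{1,\dots,m+1\}$ pairs $m+1$ with a unique $j \leq m$ and pairs the remaining indices among themselves -- reproduces, after matching the sign convention \eqref{eq:signPdef}, exactly the sum defining $T_m(\thetav)$. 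Applying Schur's identity to $\tilde A$ and then taking the limit $\theta_{m+1} \to -\infty$ in the product formula yields $T_m(\thetav) = \prod_{1 \leq \ell < r \leq m} \tanh\frac{\theta_\ell - \theta_r}{2}$, and the bound is again immediate.

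The only non-routine step is the sign bookkeeping in the odd case: one must verify that the signum of a pairing in $\pcal_m$ as defined by \eqref{eq:signPdef} coincides with the signum of the corresponding pairing of $\{1,\dots,m+1\}$ arising in the Pfaffian expansion of $\tilde A$ along the last row and column, so that $\operatorname{Pf} \tilde A = +T_m$ rather than $-T_m$. This can be settled by a short direct calculation; checking $m = 1$ (where $T_1 = 1$ and $\operatorname{Pf}\begin{pmatrix} 0 & 1 \\ -1 & 0 \end{pmatrix} = 1$) and $m = 3$ fixes the appropriate choice of the added entries, and once this sign is under control the remainder of the argument is formal.
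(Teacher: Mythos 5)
Your proof is correct, and it actually \emph{resolves} what the paper left as a conjecture. The paper only offers evidence: numerical checks up to $m=11$, and a sketch that would require verifying the logarithmic-derivative formula \eqref{logderiv} for all $m$ (which the paper only checked for $m\leq 4$). Your Pfaffian argument closes this gap cleanly. Recognizing $T_{2k}=\operatorname{Pf}(A)$ and substituting $x_j=e^{\theta_j}$ (rather than the paper's $x_j=\tanh\frac{\theta_j}{2}$) puts the matrix entries into the exact form $\frac{x_\ell-x_r}{x_\ell+x_r}$ that Schur's classical Pfaffian identity factorizes, giving $T_{2k}(\thetav)=\prod_{\ell<r}\tanh\frac{\theta_\ell-\theta_r}{2}$ directly. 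The odd-case reduction via an auxiliary index $m+1$ with entry $1=\lim_{\theta_{m+1}\to-\infty}\tanh\frac{\theta_\ell-\theta_{m+1}}{2}$ is sound, and the sign bookkeeping you flag is in fact trivial: since $m+1=2k+2$, the column $m+1\mapsto m+1$ in the permutation defining $\operatorname{sign}\tilde P$ is a fixed point and can be dropped, reducing exactly to the paper's definition \eqref{eq:signPdef} for odd $m$. Taking $\theta_{m+1}\to-\infty$ on both sides of Schur's identity (valid since $\operatorname{Pf}$ is a polynomial in its entries, hence continuous) yields the product formula for odd $m$ as well. It is worth noting that your closed form is precisely what the paper's unverified identity \eqref{logderiv} would imply upon integration: differentiating $\log\big|\prod_{i<j}\frac{\alpha(x_i-x_j)}{1-\alpha^2 x_ix_j}\big|$ reproduces the right-hand side of \eqref{logderiv}, so your argument retroactively proves that formula too. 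The trade-off is that your approach depends on citing Schur's 1911 identity as an external classical result, whereas the paper was apparently attempting a self-contained argument, but given that the paper did not succeed, your route is clearly preferable.
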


While we have not been able to prove this in full, there is strong evidence that the conjecture is true.

First, we have verified numerically up to $m=11$ by evaluating $T_m$ at a large number of randomly chosen points.

Second, we sketch an argument for a rigorous proof. With the substitution $x_j := \tanh \frac{\theta_j}{2}$, the function becomes
\begin{equation}
   T_m(\xv) =  \sum_{P \in \pcal_m} \sign P \prod_{(\ell,r)\in P} \frac{x_\ell -x_r}{1-x_\ell x_r}.
\end{equation}
This is now defined on the cube $\xv \in (-1,1)^m$. On the boundary of the cube, that is when one component of $\xv$ is equal to $\pm 1$, it is easy to see that
\begin{equation}
T_m(\xv)= \pm T_{m-1}(\hat \xv),
\end{equation}
where $\hat\xv:=(x_1, \ldots, \widehat{\pm 1},\ldots, x_m)$.

Therefore, if we can show that $T_m$ takes its extrema at the boundary, then the conjectured bound follows by induction. In fact, we verified for $m\leq 4$ that the following formula holds
\begin{equation}\label{logderiv}
\frac{d}{d\alpha}\log |T_m(\alpha \xv)|= \frac{1}{\alpha} \sum_{i<j}\frac{1+x_i x_j \alpha^2}{1-x_i x_j \alpha^2}.
\end{equation}
The right hand side is clearly positive for $\alpha \in (0,1)$ and $\xv \in (-1,1)^m$. Therefore, if Eq.~\eqref{logderiv} holds for all $m$, then since $T_m(0)=0$, the function does take its extrema at the boundary and therefore the conjecture would be proven.

We prove some further results about $T_m$.
\begin{lemma} \label{lem:Tmresidues}
 For any $1 \leq i < j \leq m$,
 \begin{equation}
  \res_{\zeta_j-\zeta_i=i\pi} T_m = 2 (-1)^{i+j}  T_{m-2} (\hat\zetav),\label{resTm}
 \end{equation}
where $\hat\zetav = (\zeta_1,\ldots,\hat \zeta_i,\ldots,\hat\zeta_j,\ldots\, \zeta_m)$.
\end{lemma}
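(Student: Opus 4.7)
The plan is to prove Lemma~\ref{lem:Tmresidues} by working with the pairing representation \eqref{eq:TmWithPairs} of $T_m$ and identifying which terms in the sum over $\pcal_m$ can possibly contribute a pole at $\zeta_j-\zeta_i=i\pi$. The key analytic input is that $\tanh(w/2)$ has a simple pole at $w=i\pi$ with residue $2$; in particular, for $i<j$ the factor $\tanh\frac{\zeta_i-\zeta_j}{2}$ has residue $-2$ at $\zeta_j-\zeta_i=i\pi$, while every other $\tanh\frac{\zeta_\ell-\zeta_r}{2}$ with $\{\ell,r\}\neq\{i,j\}$ remains analytic at this hyperplane (its potential singularities lie on different hyperplanes).

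First I would observe that in the sum \eqref{eq:TmWithPairs}, only pairings $P\in\pcal_m$ containing the specific pair $(i,j)$ contribute to the residue at $\zeta_j-\zeta_i=i\pi$; for all other $P$, the product is regular there, so its residue vanishes. Denoting $\pcal_m^{(i,j)}\subset\pcal_m$ the set of pairings containing $(i,j)$, the obvious bijection $P\mapsto P':=P\setminus\{(i,j)\}$ identifies $\pcal_m^{(i,j)}$ with the set of pairings of $\{1,\ldots,m\}\setminus\{i,j\}$. After relabeling by the order-preserving bijection onto $\{1,\ldots,m-2\}$, this set is just $\pcal_{m-2}$, and the remaining product $\prod_{(\ell,r)\in P'}\tanh\tfrac{\zeta_\ell-\zeta_r}{2}$ equals the corresponding product in $T_{m-2}(\hat\zetav)$.

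The main (and only really nontrivial) step is the sign calculation $\sign P=(-1)^{i+j+1}\sign P'$. To verify this, I would use the freedom in the definition \eqref{eq:signPdef} to list the pair $(i,j)$ first, so that the associated permutation $\sigma_P$ sends $1\mapsto i$, $2\mapsto j$, and $3,\ldots,m\mapsto$ the remaining indices in the order prescribed by $P'$ (with the unpaired index, if $m$ is odd, placed at position $m$ exactly as in $P'$). Counting inversions directly: the value $i$ at position $1$ contributes $i-1$ inversions (one for each of $1,\ldots,i-1$ among the values at positions $\geq 3$), and the value $j$ at position $2$ contributes $j-2$ inversions (one for each of $\{1,\ldots,j-1\}\setminus\{i\}$), while inversions among positions $\geq 3$ equal those of $\sigma_{P'}$ since order-preserving relabeling preserves inversions. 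Hence $\sign P=(-1)^{(i-1)+(j-2)}\sign P'=(-1)^{i+j+1}\sign P'$.

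Combining these ingredients gives
\begin{equation*}
\res_{\zeta_j-\zeta_i=i\pi}T_m(\zetav)
=\sum_{P\in\pcal_m^{(i,j)}}\sign P\cdot(-2)\prod_{(\ell,r)\in P'}\tanh\tfrac{\zeta_\ell-\zeta_r}{2}
=(-2)(-1)^{i+j+1}T_{m-2}(\hat\zetav)
=2(-1)^{i+j}T_{m-2}(\hat\zetav),
\end{equation*}
as claimed. I expect the sign bookkeeping to be the only delicate part; the rest is a direct consequence of the pole structure of $\tanh$ and the fact that distinct pairings involve disjoint sets of difference hyperplanes, so no cross-contributions occur.
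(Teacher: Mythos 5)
Your proof is correct and follows essentially the same approach as the paper: isolate the pairings containing $(i,j)$, use $\res_{\zeta_j-\zeta_i=i\pi}\tanh\frac{\zeta_i-\zeta_j}{2}=-2$, identify the remaining product with the $T_{m-2}$ summand under $P\mapsto P'$, and establish $\sign P=(-1)^{i+j+1}\sign P'$. The only (cosmetic) difference is in the sign bookkeeping: you place the pair $(i,j)$ first and count inversions directly, whereas the paper argues by moving a column of the permutation matrix; both give the same factor.
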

\begin{proof}
Writing $T_m$ as in \eqref{eq:TmWithPairs}, it is clear that only pairings $P$ with $(i,j)\in P$ contribute to the residuum. Writing therefore $P = P' \dot\cup \{(i,j)\}$, we have
\begin{equation}
  \res_{\zeta_j-\zeta_i=i\pi} T_m
=   \sum_{\substack{P \in \pcal_m \\ (i,j)\in P}}   \sign P
\res_{\zeta_j-\zeta_i=i\pi} \tanh \frac{\zeta_i-\zeta_j}{2}
\prod_{(\ell,r) \in P'} \tanh \frac{ \zeta_{\ell} - \zeta_{r}}{2}.\label{resTmproof}
\end{equation}
One notes that $\res_{\zeta_j-\zeta_i=i\pi} \tanh \frac{\zeta_i-\zeta_j}{2} = -2$ and, from Eq.~\eqref{eq:signPdef}, we have
$\sign P = (-1)^{i+j+1} \sign P'$: Indeed, the passage
\begin{multline}
P=\begin{pmatrix} 1 & 2 & \cdots & i & \cdots & j & \cdots &  2k\!-\!1 & 2k  \\
         \ell_1 & r_1 & \cdots & \ell_i & \cdots & r_j & \cdots & \ell_k & r_k \end{pmatrix}
         \\
         \rightarrow P'=\begin{pmatrix} 1 & 2 & \cdots & i & i+1 & \cdots & 2k\!-\!1 & 2k  \\
         \ell_1 & r_1 & \cdots & \ell_i & r_j & \cdots & \ell_k & r_k \end{pmatrix}
\end{multline}
corresponds to permuting the $i$-column to the right (or, equivalently, the $j$-column to the left) and the sign of this permutation is $(-1)^{i-j+1}$.

Inserting this in \eqref{resTmproof}, we find immediately \eqref{resTm}.
\end{proof}

Using \eqref{eq:TmWithPairs}, we can rewrite the meromorphic functions $F_{2k+1}$ as
\begin{equation}\label{eq:FoddWithT}
 F_{2k+1} (\zetav) = \frac{1}{(2 \pi i)^{k}} \tilde g( \mu E(\zetav) ) T_{2k+1}(\zetav).
\end{equation}
Using \eqref{eq:TmWithPairs} and the result in Lemma~\ref{lem:Tmresidues}, now it is easy to check that the functions $F_k$ above fulfil the conditions \ref{it:fmero}, \ref{it:fsymm}, \ref{it:fperiod}, \ref{it:frecursion}, \ref{it:fboundsimag}.

Indeed, these functions $F_k$ are analytic if $\im \zeta_i < \im\zeta_j < \im \zeta_i + \pi$ ($i < j$) because the factors $\tanh$ are analytic in this domain (they have a pole at $\im\zeta_j -\im\zeta_i = \pi$); (note that it is enough to check this for $\im \zeta_i < \im\zeta_j$, $i < j$, since \ref{it:fmero} requires so). Hence they satisfy property \ref{it:fmero}.

From \eqref{eq:oddexamplePerm} we can see that they are antisymmetric \ref{it:fsymm}, due to the fact that $\tanh\Big(\frac{\zeta_r -\zeta_\ell}{2}\Big)=-\tanh \Big(\frac{\zeta_\ell -\zeta_r}{2}\Big)$.

Since $\tanh(z+i\pi)=\tanh(z)$, they are also $(2 \pi i)$-periodic in each variable \ref{it:fperiod}; we note that in this case the $S$-factor in \ref{it:fperiod} is equal to $1$.

Now we compute
\begin{multline}
  \res_{\zeta_j-\zeta_i=i\pi} F_{2k+1} = \frac{1}{(2 \pi i)^{k}} \tilde g( \mu E(\zetav) ) \Big\vert_{\zeta_j-\zeta_i=i\pi} \res_{\zeta_j-\zeta_i=i\pi} T_{2k+1} (\zetav)
\\
= \frac{2(-1)^{i+j} }{(2 \pi i)^{k}}  \tilde g( \mu E(\hat \zetav) )\, T_{2k-1}(\hat\zetav)
= \frac{2 (-1)^{i+j}}{2\pi i} F_{2k-1}(\hat\zetav),
\end{multline}
where in the second equality we used Lemma~\ref{lem:Tmresidues}. Note that on the pole $\zeta_j-\zeta_i=i\pi$, $\cosh \zeta_i = - \cosh \zeta_j$; this implies that $E(\zetav)$ on the pole becomes $E(\hat \zetav)$.

This gives the condition \ref{it:frecursion}, where we note that in the case $S=-1$, the factor $(1-\prod_p S_{m,p})$ becomes $2$ and the factor $\prod_{j=m}^{n}S_{j,m}$ becomes $(-1)^{m-n+1}$.

In the following proposition, we show the bound \ref{it:fboundsimag}:
\begin{proposition}
 If $r>0$, and $\supp g \subset (-r,r)$, then the $F_{2k+1}$ fulfil \ref{it:fboundsimag} with this constant $r$.
\end{proposition}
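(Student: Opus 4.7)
My plan is to prove the bound by exploiting the factorization \eqref{eq:FoddWithT},
\[
F_{2k+1}(\zetav) = \frac{1}{(2\pi i)^k}\, \tilde g(\mu E(\zetav))\, T_{2k+1}(\zetav),
\]
and estimating the two factors separately. For the Paley--Wiener factor $\tilde g(\mu E(\zetav))$, since $\supp g \subset (-r,r)$, I have $|\tilde g(p)| \leq C\, e^{r|\im p|}$. Using the elementary identity $|\im \cosh\zeta| = |\sinh(\re\zeta)\sin(\im\zeta)| \leq |\cosh(\re\zeta)\sin(\im\zeta)| = |\im\sinh\zeta|$, this at once gives
\[
|\tilde g(\mu E(\zetav))| \leq C \exp\Bigl(\mu r \sum_{j=1}^{2k+1}|\im\sinh\zeta_j|\Bigr),
\]
which already reproduces the exponential factor in the numerator of \ref{it:fboundsimag}.

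For $T_{2k+1}(\zetav)$ I use the representation \eqref{eq:TmWithPairs} as a signed sum over pairings. The key pointwise bound on a single factor is
\[
\Bigl|\tanh\tfrac{\zeta_\ell - \zeta_r}{2}\Bigr| \leq C_0\frac{\sqrt{\cosh(\re\zeta_\ell)\cosh(\re\zeta_r)}}{\delta_{\ell,r}}, \qquad \delta_{\ell,r} := \pi - |\im\zeta_r - \im\zeta_\ell|,
\]
which follows from the closed-form expression $|\tanh(z/2)|^2 = (\cosh\re z - \cos\im z)/(\cosh\re z + \cos\im z)$, the lower bound $1+\cos y \geq 2\delta^2/\pi^2$ with $\delta = \pi - |y|$, and $\cosh(a-b) \leq 2\cosh a\cosh b$. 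The geometric input is that $\delta_{\ell,r} \geq 2d$, where $d := \operatorname{dist}(\im\zetav,\partial\ich\gcal_\pm^{2k+1})$: in $\ich\gcal_+^{2k+1}$, the decomposition $\delta_{\ell,r} = (\pi - \lambda_r) + \lambda_\ell$ expresses $\delta_{\ell,r}$ as a sum of distances to two individual faces of the boundary, and each such distance is bounded below by $d$; the argument for $\gcal_-^{2k+1}$ is identical after a shift by $i\piv$.

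Combining over a pairing $P \in \pcal_{2k+1}$, and observing that every index $j$ appears in at most one pair, the product of the $k$ tanh factors is bounded by $(C_0/2d)^k \prod_{j=1}^{2k+1}(\cosh\re\zeta_j)^{1/2}$. Summing over the $|\pcal_{2k+1}|$ pairings merely introduces a combinatorial constant $N_k$ depending on $k$. Multiplying with the Paley--Wiener estimate, I arrive at
\[
|F_{2k+1}(\zetav)| \leq \frac{c_k}{d^k} \Bigl(\prod_{j=1}^{2k+1}(\cosh\re\zeta_j)^{1/2}\Bigr) \exp\Bigl(\mu r \sum_j |\im\sinh\zeta_j|\Bigr).
\]
To bring this into the form \ref{it:fboundsimag}, I absorb $\sqrt{\cosh\re\zeta_j}$ into $\exp(c'\omega(\cosh\re\zeta_j))$; this is possible because the indicatrix $\omega(p) = p^\alpha$ grows faster than any power of $\log p$, so $p^{1/2} \leq C_{c'} \exp(c'\omega(p))$ for every $c' > 0$. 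Finally, on $\ich\gcal_\pm^{2k+1}$ one has $d \leq \pi$, hence $d^{-k} \leq \pi^{1/2}\, d^{-(2k+1)/2}$, and the required bound with denominator $d^{-(2k+1)/2}$ follows.

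The main technical issue is the sharpness of the tanh estimate: one must check that the singularity at $\zeta_\ell - \zeta_r \to \mp i\pi$ is of order exactly $\delta_{\ell,r}^{-1}$, and that the real-direction growth is only $\sqrt{\cosh\re z}$ (rather than $\cosh\re z$). Both facts come cleanly out of the closed-form expression for $|\tanh(z/2)|^2$; beyond this, the remainder of the argument is routine book-keeping.
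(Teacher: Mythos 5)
Your proof is correct, and it takes a genuinely different (and, in the key step, sharper) route than the paper. The paper bounds $T_{2k+1}$ by passing to the auxiliary function $z\mapsto(z+i\pi)\tanh(z/2)$, which is bounded on the strip $-\pi\le\im z\le0$ by $c_1(|\re z|+1)$, and then multiplies $T_{2k+1}$ by the full polynomial $\prod_{i<j}(\zeta_i-\zeta_j+i\pi)$ over \emph{all} $\binom{2k+1}{2}$ pairs. This produces the bound $|T_{2k+1}(\zetav)|\le c_2\prod_{i<j}|\zeta_i-\zeta_j+i\pi|^{-1}\exp(\sum_j|\re\zeta_j|)$. That inverse product carries $\binom{2k+1}{2}$ small factors near the boundary rather than just $k$, so as written it yields a blow-up of order $d^{-\binom{2k+1}{2}}$, not the $d^{-(2k+1)/2}$ required by \ref{it:fboundsimag}; the paper's concluding sentence that these estimates ``together imply the estimate needed'' glosses over this. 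Your approach avoids this entirely: the closed-form expression $|\tanh(z/2)|^2=(\cosh\re z-\cos\im z)/(\cosh\re z+\cos\im z)$, combined with $1-\cos\delta\ge2\delta^2/\pi^2$ and $\cosh(a-b)\le2\cosh a\cosh b$, yields the factor-by-factor bound $|\tanh\frac{\zeta_\ell-\zeta_r}{2}|\le\pi\sqrt{2}\,\sqrt{\cosh\re\zeta_\ell\cosh\re\zeta_r}/\delta_{\ell,r}$, and since a pairing uses only $k$ pairs, the resulting power is $d^{-k}$, which is already stronger than the required $d^{-(2k+1)/2}$. The Paley--Wiener factor and the absorption of $\sqrt{\cosh\re\zeta_j}$ into $\exp(c'\omega(\cosh\re\zeta_j))$ with $\omega(p)=p^\alpha$ are both handled correctly, and the $\gcal_-$ case follows by the $i\piv$-shift as you indicate.

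One small imprecision: you describe $\lambda_\ell$ and $\pi-\lambda_r$ as ``distances to two individual faces''; this is literally true only for $\ell=1$ (face $\{\lambda_1=0\}$) and $r=2k+1$ (face $\{\lambda_{2k+1}=\pi\}$). The correct chain is $d\le\lambda_1\le\lambda_\ell$ and $d\le\pi-\lambda_{2k+1}\le\pi-\lambda_r$, which still gives $\delta_{\ell,r}=\lambda_\ell+(\pi-\lambda_r)\ge2d$, so the conclusion stands. With that reading, the argument is complete and in fact tightens the bound stated in the condition.
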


\begin{proof}
From \eqref{eq:FoddWithT} we want to compute bounds for
\begin{equation}
\lvert F_{2k+1} (\zetav) \rvert = \frac{1}{(2 \pi i)^{k}} \lvert \tilde g( \mu E(\zetav) ) \rvert \cdot \lvert T_{2k+1}(\zetav) \rvert
\end{equation}
for all $\zetav\in\tube(\ich \gcal^k_\pm)$.

As for the bound on $g$, we can show that the support properties of $g$ imply
\begin{equation}
   \lvert \tilde g (p + i q) \rvert \leq
       \frac{\lVert g \rVert_1}{\sqrt{2\pi}} e^{r|q|}.\label{gstima}
\end{equation}
 Indeed, we have
 \begin{eqnarray}
 \lvert \tilde g (p + i q) \rvert &=& \frac{1}{\sqrt{2\pi}}\Big \lvert \int dx\, e^{i(p+iq)\cdot x}g(x)\Big\rvert \nonumber\\
 &\leq& \frac{1}{\sqrt{2\pi}}\int dx\, \lvert    e^{i(p+iq)\cdot x}\rvert \cdot \lvert  g(x)\rvert \nonumber\\
 &\leq& \frac{1}{\sqrt{2\pi}}\sup_{x\in [-r,r]}\lvert  e^{i(p+iq)\cdot x}\rvert \int dx\, \lvert  g(x)\rvert \nonumber\\
 &=& \frac{\lVert g \rVert_1}{\sqrt{2\pi}} e^{r|q|}.
 \end{eqnarray}
Now if we take $\zetav$ of the form $\zetav^0 + i \lambda e^{(j)}$, where $\im \zetav^0$ is a node of the graph $\gcal_1$, then we have
\begin{equation}
\lvert\im E(\zetav)\rvert = \lvert\im \cosh (\zetav^0_j + i \lambda) \rvert
 \leq  \sinh \lvert \re\zeta_j \rvert \sin \lambda
 \leq  \cosh (\re \zeta_j) \sin \lambda.
\end{equation}
Hence, we have from \eqref{gstima}:
\begin{equation}\label{eq:gtildebound}
   \lvert \tilde g (\mu E(\zetav)) \rvert \leq
       \frac{\lVert g \rVert_1}{\sqrt{2\pi}} e^{\mu r \cosh (\re\zeta_j )\sin \lambda}.
\end{equation}
Now it remains to find bounds on $T_{2k+1}(\zeta)$. To find these bounds, we consider the function
\begin{equation}
 z \mapsto \frac{z+i\pi}{z-i} \tanh\frac{z}{2}.
\end{equation}
This function has poles at $z=i$, $z=\pm i\pi$; so it is clearly analytic on $-\pi\leq \im z\leq 0$. We can show that this function is bounded on the strip $-\pi \leq \im z \leq 0$. Indeed, for $\re z \rightarrow \infty$ one has $\lvert \tanh\frac{z}{2} \rvert \leq 1$ and $\lvert \frac{z+i\pi}{z-i\pi} \rvert\leq 1$; on every compact set on the real line and for $\im z \in [-\pi,0]$, the only crucial point is the behaviour of the function in the limit $z\rightarrow -i\pi$ (since the other pole at $z=i$ is excluded in the interval $\im z\in [-\pi,0]$); for $z\rightarrow -i\pi$ one has $\tanh \frac{z}{2} \approx \frac{2}{z+i\pi}+\ldots$, and hence $\frac{z+i\pi}{z-i} \tanh\frac{z}{2}\approx \frac{z+i\pi}{z-i}\frac{2}{z+i\pi}+\ldots$; so the function is continuous on the compact set, and therefore it is bounded.

Hence we can find a constant $c'_{1}>0$ such that
\begin{equation}
\lvert \frac{z+i\pi}{z-i} \tanh\frac{z}{2} \rvert \leq c'_{1}.
\end{equation}
This implies:
\begin{eqnarray}
  \big\lvert (z+i\pi) \tanh\frac{z}{2} \big\rvert &\leq& c'_{1} \lvert  z-i \rvert \nonumber\\
  &\leq& c'_{1} \sqrt{ (\re z)^2 + (\pi +1)^2} \nonumber\\
  &\leq& c''_{1}( \lvert \re z \rvert + \pi +1 )\nonumber \\
  &\leq& c_{1}\Big( \frac{\re z}{\pi +1} +1\Big)\nonumber\\
  &\leq& c_1( \lvert \re z \rvert + 1) \quad
  \text{whenever } -\pi  \leq \im z \leq 0,\label{ztanh}
\end{eqnarray}
since $\frac{1}{\pi +1}\leq 1$.

Also, since $\lvert \tanh\frac{z}{2} \rvert \leq 1$, we have from the equation above
\begin{equation}
  \big\lvert z+i\pi \big\rvert \leq c_1( \lvert \re z \rvert + 1) \quad
  \text{whenever } -\pi  \leq \im z \leq 0.\label{zpi}
\end{equation}
Now we can compute for any $\zetav$ with $\im \zeta_1  < \ldots< \im \zeta_{2k+1} < \im \zeta_1+\pi$ the following estimate,
\begin{multline}
 \Big\lvert T_{2k+1}(\zetav) \prod_{i<j} (\zeta_i-\zeta_j+i \pi) \Big\rvert
 \leq \Big\lvert  \sum_{P \in \pcal_{2k+1}}
  \prod_{(\ell,r)\in P} \tanh \frac{\zeta_\ell-\zeta_r}{2}   \prod_{i<j} (\zeta_i-\zeta_j+i \pi)\Big\rvert\\
 \leq
 \sum_{P \in \pcal_{2k+1}}
  \prod_{(\ell,r)\in P}
     \Big\lvert (\zeta_\ell-\zeta_r+i\pi) \tanh \frac{\zeta_\ell-\zeta_r}{2} \Big\rvert
  \prod_{(i,j)\not\in P}
     \Big\lvert \zeta_i-\zeta_j+i\pi \Big\rvert,\label{Tstima}
\end{multline}
where in the first equality we used \eqref{eq:TmWithPairs} and where in the second inequality we split the product $ \prod_{i<j} \lvert \zeta_i-\zeta_j+i \pi\rvert$ into $\prod_{(\ell,r)\in P}\lvert \zeta_\ell-\zeta_r+i\pi\rvert \prod_{(i,j)\not\in P}\lvert \zeta_i-\zeta_j+i\pi \rvert$.

We estimate the first product in the second line of \eqref{Tstima} using \eqref{ztanh} and the second product using \eqref{zpi}, we find
\begin{multline}
\Big\lvert T_{2k+1}(\zetav) \prod_{i<j} (\zeta_i-\zeta_j+i \pi) \Big\rvert\\
\leq
\sum_{P \in \pcal_{2k+1}}
  \prod_{(\ell,r)\in P}c_1(|\re \zeta_\ell - \re \zeta_r| + 1)
  \prod_{(i,j)\not\in P}c_1(|\re \zeta_i - \re \zeta_j| + 1)\\
\leq \sum_{P \in \pcal_{2k+1}}  c_1^{k(k-1)/2}\prod_{i<j} (|\re \zeta_i - \re \zeta_j| + 1)\\
\leq
  |\pcal_{2k+1}| c_1^{k(k-1)/2} \prod_{i<j} (|\re \zeta_i - \re \zeta_j| + 1),\label{Tstima2}
\end{multline}
where in the second inequality we used that the number of pairs $(i,j)$ with $i<j$ is $k(k-1)/2$, and that $\prod_{(\ell,r)\in P}(|\re \zeta_\ell - \re \zeta_r| + 1)
  \prod_{(i,j)\not\in P}(|\re \zeta_i - \re \zeta_j| + 1)= \prod_{i<j} (|\re \zeta_i - \re \zeta_j| + 1)$.
In the third inequality we denoted the sum of the pairings $P\in \pcal_{2k+1}$ by $|\pcal_{2k+1}|$.

At fixed $k$, the right-hand side is bounded by a polynomial in the $\re\zeta_j$; more precisely, by multiplying out the sum $(|\re \zeta_i | + |\re \zeta_j| + 1)$ in the first line and estimating each term of the sum by $\prod_{j=1}^{2k+1}(|\re\zeta_j|^{(2k+1)k}+1)$, we get
\begin{eqnarray}
\prod_{i<j} (|\re \zeta_i - \re \zeta_j| + 1) &\leq& \prod_{i<j} (|\re \zeta_i | + |\re \zeta_j| + 1)\nonumber\\
&\leq& 3^{(2k+1)k}\prod_{j=1}^{2k+1}(|\re\zeta_j|^{(2k+1)k}+1)\nonumber\\
&\leq& 3^{(2k+1)k}\prod_{j=1}^{2k+1}c_k e^{|\re\zeta_j|}\nonumber\\
&\leq& c'_k e^{\sum_{j=1}^{2k+1}|\re \zeta_j|}.
\end{eqnarray}
Hence we can find a constant $c_2>0$ such that we have from \eqref{Tstima2}
\begin{equation}\label{eq:Tkbound}
\lvert T_{2k+1}(\zetav) \rvert
 \leq c_2  \Big(\prod_{i<j} \lvert\zeta_i-\zeta_j+i \pi \rvert^{-1}\Big)
 \exp \Big(\sum_{j=1}^{2k+1} \lvert \re \zeta_j \rvert\Big) .
\end{equation}
Since with $\omega(p)\sim p^{\alpha}$, $0<\alpha<1$, we have $e^{|\operatorname{Re}\zeta_{j}|}\leq e^{\omega(\cosh(\operatorname{Re}\zeta_{j}))}$,
Equations~\eqref{eq:FoddWithT}, \eqref{eq:gtildebound} and \eqref{eq:Tkbound} together imply the estimate needed for \ref{it:fboundsimag}.
\end{proof}

We have shown that the family of functions $F_{2k+1}$ given by \eqref{eq:FoddWithT} fulfil the conditions \ref{it:fmero}, \ref{it:fsymm}, \ref{it:fperiod}, \ref{it:frecursion} and \ref{it:fboundsimag} of the theorem, but we have not yet shown that they fulfil \ref{it:fboundsreal}. This will be discussed in the next subsection.

\subsection{Operator bounds and domain}\label{sec:operatorbounddom}

In order to show that the family of functions $F_{2k+1}$ are the Araki coefficients of a local operator, we need to show that, setting $f_{mn}(\thetav,\etav) = F_{m+n}(\thetav,\etav+i\pi-i0)$, the norm $\lVert  f_{mn} \rVert_{m \times n}^{\omega}$ is finite, as required by \ref{it:fboundsreal}. Moreover, we compute estimates for $\lVert  f_{mn} \rVert_{m \times n}^{\omega}$ which imply that the Araki series
\begin{equation}
  A = \sum_{m,n} \int \frac{d^m\theta d^n \eta}{m!n!} f_{mn}(\thetav,\etav) z^{\dagger m}(\thetav) z^n(\etav),
\end{equation}
is summable when we apply it to vectors of a certain class; we use this to show that $A$ is a closable operator on a dense domain (see Prop.~\ref{proposition:summable1}). After lots of preparations, the main result of this section will be Theorem~\ref{theorem:oddoperator}.

In order to compute these estimates, we start by introducing some notation.
Given a smooth function of two real variables, that we call $K$, we define the following quantity:
\begin{equation} \label{eq:deltadef}
  \delta K(\theta,\eta) :=  \frac{K(\theta,\eta)-K(\eta,\eta)}{\theta-\eta}.
\end{equation}
We can extend this definition to the diagonal $\theta=\eta$ by taking the limit of \eqref{eq:deltadef} for $\theta \rightarrow \eta$. We will now show that this $\delta K$ is a smooth function (this will become useful later on, for example in Eq.~\eqref{meanhatK}). First of all, we know that the limit
\begin{equation}
\lim_{\theta\rightarrow \eta}\frac{K(\theta,\eta)-K(\eta,\eta)}{\theta-\eta}=\frac{\partial}{\partial \theta}K(\theta,\eta)\big\vert_{\theta=\eta}
\end{equation}
exist because by hypothesis $K$ is a smooth function. Then we need to show that the function defined for $\theta \neq \eta$ by $\frac{K(\theta,\eta)-K(\eta,\eta)}{\theta-\eta}$ and for $\theta=\eta$ by $\frac{\partial}{\partial \theta}K(\theta,\eta)\big\vert_{\theta=\eta}$ is smooth. We use the fact that a function is smooth if and only if it has a Taylor expansion of all orders. Since $K$ is smooth we can write its Taylor expansion around $\theta=\eta$:
\begin{equation}
K(\theta,\eta)-K(\eta,\eta)=(\theta-\eta)K'(\eta,\eta)+\frac{(\theta-\eta)^2}{2}K''(\eta,\eta)+\ldots+\frac{(\theta-\eta)^{k-1}}{(k-1)!}K^{(k-1)}(\eta,\eta)+R
\end{equation}
with $\lvert R \rvert \leq c \lvert  \theta-\eta \rvert^k$. This implies that the function $(K(\theta,\eta)-K(\eta,\eta))/(\theta-\eta)$ has a Taylor expansion around $\theta = \eta$ given by:
\begin{equation}
\frac{K(\theta,\eta)-K(\eta,\eta)}{\theta-\eta}=K'(\eta,\eta)+\frac{\theta-\eta}{2}K''(\eta,\eta)+\ldots+\frac{(\theta-\eta)^{k-2}}{(k-1)!}K^{(k-1)}(\eta,\eta)+R'
\end{equation}
with $R':= \frac{R}{\theta-\eta}$. We have $\lvert R'\rvert\leq c \lvert \theta-\eta \rvert^{k-1}$. So, this function has also a Taylor expansion, but of order $k-2$, and therefore it is $(k-2)$-times differentiable. Since this holds for all $k$, then $\delta K$ is smooth.

We can rewrite \eqref{eq:deltadef} as a Taylor expansion where the function $\delta K$ is used as a Taylor remainder term:
\begin{equation} \label{eq:deltaremain1}
  K(\theta,\eta) =  K(\eta,\eta) + (\theta-\eta) \delta K(\theta,\eta).
\end{equation}
In the case where $K$ depends on several variables, $K: \rbb^m \times \rbb^n \to \cbb$, then similarly to \eqref{eq:deltadef} we can define $\delta_j K(\thetav,\etav)$ as:
\begin{equation}\label{deltajKdef}
 \delta_j K(\thetav,\etav) :=  \frac{K(\thetav,\etav)-K(\theta_1, \ldots, \theta_{j-1},\eta_j,\theta_{j+1},\ldots,\theta_m)}{\theta_j-\eta_j}.
\end{equation}
We can also define for $k \leq \max(m,n)$ and $J = \{j_1,\ldots  j_\ell \} \subset \{1,\ldots, k\}$, the quantity $\delta_J K := \delta_{j_1} \ldots \delta_{j_l} K$; we notice that this does not depend on the order of the indices $j_1,\ldots,j_\ell$. We can then generalize Eq.~\eqref{eq:deltaremain1}, and we have
\begin{equation} \label{eq:deltaremainn}
\begin{aligned}
  K(\thetav,\etav) &=  \sum_{J \subset \{1,\ldots,k\}} \Big(\prod_{j \in J} (\theta_j-\eta_j) \Big) \delta_J K(\thetav^J,\etav),
\\
 &\text{where} \quad \theta^J_j = \begin{cases}
                              \theta_j &\quad  \text{if } j \in J \text{ or } j>k, \\
                              \eta_j &\quad  \text{otherwise, }
                            \end{cases}
\end{aligned}
\end{equation}
where $k$ is some integer with $k<m$.

We can prove \eqref{eq:deltaremainn} using induction on $k$. Using\\ $\sum_{J \subset \{1,\ldots,k\}}=\sum_{\substack{J \subset \{1,\ldots,k\} \\ 1\in J}}+\sum_{\substack{J \subset \{1,\ldots,k\} \\ 1\not\in J}}$, we can write
\begin{equation}
\begin{aligned}
K(\thetav,\etav) &=  \sum_{J \subset \{1,\ldots,k\}} \Big(\prod_{j \in J} (\theta_j-\eta_j) \Big) \delta_J K(\thetav^J,\etav)\\
&=\sum_{\substack{J \subset \{1,\ldots,k\} \\ 1\in J}}\Big(\prod_{j \in J} (\theta_j-\eta_j) \Big) \delta_J K(\thetav^J,\etav) 
+\sum_{\substack{J \subset \{1,\ldots,k\} \\ 1\not\in J}}\Big(\prod_{j \in J} (\theta_j-\eta_j) \Big) \delta_J K(\thetav^J,\etav)\\
&=\sum_{J \subset \{2,\ldots,k\}} \Big(\prod_{j \in J} (\theta_j-\eta_j) \Big)(\theta_1-\eta_1) \delta_{j_2}\ldots \delta_{j_k}\delta_1 K(\thetav^{J\cup \{ 1 \}},\etav)\\
&\qquad +\sum_{ J \subset \{2,\ldots,k\}}\Big(\prod_{j \in  J} (\theta_j-\eta_j) \Big) \delta_{ J} K(\eta_1, \widehat{\thetav^{ J}},\etav)\\
&=(\theta_1 -\eta_1)\delta_1 \sum_{J \subset \{2,\ldots,k\}} \Big(\prod_{j \in J} (\theta_j-\eta_j) \Big) \delta_{j_2}\ldots \delta_{j_k} K(\thetav^{J\cup \{ 1 \}},\etav)
+K(\eta_1, \hat{\thetav},\etav)\\
&=(\theta_1-\eta_1)\delta_1 K(\theta_1,\hat \thetav,\etav)-K(\eta_1, \hat \thetav,\etav)=K(\thetav,\etav),
\end{aligned}
\end{equation}
where in the last equality we made use of \eqref{eq:deltaremain1}.

We use these definitions to obtain estimates of certain integral kernels, as we can see in the following.
\begin{lemma} \label{lem:kerneloneover}
 Let $K : \rbb^m \times \rbb^{n} \to \cbb$ be smooth, and $k \leq \max(m,n)$. Then,
\begin{equation}
   \Biggnorm{
   K(\thetav,\etav)  \prod_{j=1}^k \frac{1}{\theta_j-\eta_j \pm i0}
  }{m \times n}
\leq
  (2\pi)^{k}
  \sum_{J \subset \{1,\ldots,k \} } \gnorm{\delta_J K}{J},
\end{equation}
where
\begin{equation}
 \gnorm{\delta_J K}{J}^2= \int d\thetav_J d\etav_J\, \sup_{\etav_{J^c}}|\delta_J K(\thetav^J,\etav)|^2.
\end{equation}
\end{lemma}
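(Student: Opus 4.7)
The plan is to combine the Taylor-like expansion \eqref{eq:deltaremainn} with the $L^2$-boundedness of the Cauchy (Hilbert) transform to trade each singular factor $1/(\theta_j-\eta_j\pm i 0)$ for either a cancellation against $(\theta_j-\eta_j)$ or a bounded $L^2$ operator.

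First, I would apply \eqref{eq:deltaremainn} to write
\begin{equation*}
K(\thetav,\etav)\prod_{j=1}^k\frac{1}{\theta_j-\eta_j\pm i 0}
=\sum_{J\subset\{1,\ldots,k\}}\delta_J K(\thetav^J,\etav)\prod_{j\in J^c}\frac{1}{\theta_j-\eta_j\pm i 0},
\end{equation*}
where $J^c:=\{1,\ldots,k\}\setminus J$, since the factors $(\theta_j-\eta_j)$ for $j\in J$ cancel pointwise the corresponding pole factors. The triangle inequality for $\|\cdot\|_{m\times n}$ then reduces the estimate to a sum of $|J^c|$-singular kernels, one for each $J$.

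Next, for a fixed $J$ the key observation is that $\delta_J K(\thetav^J,\etav)$ is independent of the variables $\theta_j$ with $j\in J^c$ (because $\theta^J_j=\eta_j$ there). Hence in the duality pairing of the $J$-term against test functions $g(\thetav)$, $h(\etav)$, each remaining Cauchy kernel $1/(\theta_j-\eta_j\pm i 0)$ acts only on the $\theta_j$-dependence of $g$. Using the Sokhotski--Plemelj identity $1/(\theta-\eta\pm i 0)=\mathrm{PV}\,1/(\theta-\eta)\mp i\pi\delta(\theta-\eta)$ and the $L^2$-boundedness of the Hilbert transform, one has for any $g\in\dcal(\rbb)$
\begin{equation*}
\Bignorm{\eta\mapsto\int d\theta\,\frac{g(\theta)}{\theta-\eta\pm i 0}}{2}\leq 2\pi\,\gnorm{g}{2}.
\end{equation*}
Applying this once per index $j\in J^c$ (and using Fubini to carry out the integrations sequentially) replaces $g$ by an effective test function $\tilde g$ of the variables $\thetav_J$ together with $\eta_j$ for $j\in J^c$ and the non-paired $\theta$-variables, with $\gnorm{\tilde g}{2}\leq(2\pi)^{|J^c|}\gnorm{g}{2}$.

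Finally, the remaining pairing involves only $\delta_J K(\thetav^J,\etav)$ integrated against $\tilde g$ and $h$. Applying Cauchy--Schwarz in the variables $\thetav_J,\etav_J$, and bounding the integrand by $\sup_{\etav_{J^c}}|\delta_J K(\thetav^J,\etav)|$ (and, where applicable, the supremum in any additional spectator variables) produces exactly the norm $\gnorm{\delta_J K}{J}$ times $\gnorm{g}{2}\gnorm{h}{2}$. Using $(2\pi)^{|J^c|}\leq(2\pi)^k$ in every summand and summing over $J$ yields the claimed inequality. The only subtle technical point is step two: one must verify that the Sokhotski--Plemelj/Hilbert-transform bound can be applied iteratively in mutually independent variables, which is justified by Fubini together with the fact that $\delta_J K$ has no $\theta_j$ dependence for $j\in J^c$; the rest is bookkeeping of the expansion and a direct application of Cauchy--Schwarz.
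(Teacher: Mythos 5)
Your argument is correct and follows essentially the same route as the paper's proof: expand via Eq.~\eqref{eq:deltaremainn} to cancel the poles indexed by $J$, treat the remaining Cauchy kernels for $j\in J^c$ as an $L^2$-bounded convolution acting on the test function (the paper does this explicitly in Fourier space, where the multiplier is $(2\pi)\Theta(\pm p)$, which is exactly the Hilbert-transform/Sokhotski--Plemelj fact you invoke), and then finish with Cauchy--Schwarz in the $\thetav_J,\etav_J$ variables together with a supremum over $\etav_{J^c}$. The only place the paper is more explicit is in reducing the general $m,n\geq k$ case to $m=n=k$ by a preliminary Cauchy--Schwarz in the spectator variables, which you gesture at but do not spell out.
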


\noindent
(We denote with $\thetav_J$ the components $\theta_j$ with $j \in J$ or $j>k$, and $\etav_J$, $\etav_{J^c}$ analogously.)

\begin{proof}
We consider the special case where $m=n=k$. We consider $f,g \in L^2(\rbb^k)$. Inserting \eqref{eq:deltaremainn}, we find
\begin{multline}\label{eq:ijintro}
  \int d^k\theta \, d^k\eta \,
 f(\thetav) g(\etav) K(\thetav,\etav) \prod_{j=1}^k \frac{1}{\theta_j-\eta_j \pm i0}
\\
=\sum_{J \subset \{1,\ldots,k\}}
  \int d^k\theta \, d^k\eta \,
     f(\thetav) g(\etav)  \delta_J K(\thetav^J,\etav)\prod_{j\in J}(\theta_j - \eta_j)
 \prod_{j=1}^{k} \frac{1}{\theta_j-\eta_j \pm i0}
\\
=
\sum_{J \subset \{1,\ldots,k\}}
\underbrace{
  \int d^k\theta \, d^k\eta \,
     f(\thetav) g(\etav)  \delta_J K(\thetav^J,\etav)
 \prod_{j \not\in J} \frac{1}{\theta_j-\eta_j \pm i0}
}_{=:I_J}.
\end{multline}
We estimate the integral $I_J$ in the following way. We first split $d^k\theta=(\prod_{j \in J} d\theta_j)(\prod_{i \not\in J} d\theta_i)$ (the same for $d^k\eta$), we have
\begin{equation}\label{Khat}
\begin{aligned}
|I_J| &\leq
  \int (\prod_{j \in J} d\theta_j)
 \Big\lvert \int
  (\prod_{j \not\in J} \frac{d\theta_j d\eta_j}{\theta_j-\eta_j\pm i0 })
  f(\thetav) \hat K (\hat \thetav, \hat \etav)
\Big\rvert ,
\\
& \text{where } \hat K(\hat \thetav, \hat \etav) = \int (\prod_{j \in J} d\eta_j)  g(\etav) \delta_J K(\thetav^J,\etav).
\end{aligned}
\end{equation}
We notice that $\hat K$ depends only on $\theta_j$, $j \in J$, and on $\eta_j$, $j \not\in J$. Then we realize that the inner integral can be seen as the convolution of $f$ with $(\eta_j-\theta_j\pm i0)^{-1}$.
\begin{equation}\label{eqtruncated}
\begin{aligned}
\Big\lvert \int
  (\prod_{j \not\in J} \frac{d\theta_j d\eta_j}{\theta_j-\eta_j\pm i0 })
  & f(\thetav) \hat K (\hat \thetav, \hat \etav)
\Big\rvert
\\
&=\Big\lvert \int(\prod_{j \not\in J} d\eta_j)\Big(\int
  (\prod_{j \not\in J} \frac{d\theta_j }{\theta_j-\eta_j\pm i0 })
  f(\thetav)\Big) \hat K (\hat \thetav, \hat \etav)
\Big\rvert\\
&=\Big\lvert \int(\prod_{j \not\in J} d\eta_j)\Big( \frac{1}{\cdot-\eta_j\pm i0 }\ast
  f\Big)(\eta_j) \hat K (\hat \thetav, \hat \etav)
\Big\rvert.
\end{aligned}
\end{equation}
The expression above can be estimated in Fourier space considering that by Fourier transform that convolution becomes a multiplication with $\sqrt{2 \pi} \Theta(\pm p)$.
\begin{equation}
\begin{aligned}
\text{l.h.s.}\eqref{eqtruncated} &= \Big\lvert \int(\prod_{j \not\in J} dp_j) \Big(\frac{1}{\cdot-\eta_j\pm i0 }\ast
  f\Big)^{\widetilde{}}(p_j) \widetilde{\hat K} (\hat \thetav, \pmb{p})
\Big\rvert\\
&=\Big\lvert \int(\prod_{j \not\in J} dp_j) (2\pi)^{|J^c|/2}\Big(\frac{1}{\cdot-\eta_j\pm i0 }\Big)^{\widetilde{}}(p_j)\cdot
  \tilde{f}(p_j) \widetilde{\hat K} (\hat \thetav, \pmb{p})
\Big\rvert\\
&=\Big\lvert \int(\prod_{j \not\in J} dp_j) (2\pi)^{|J^c|} \Theta(\pm p_j)
  \tilde{f}(p_j) \widetilde{\hat K} (\hat \thetav, \pmb{p})
\Big\rvert\\
&\leq \Big(\int (\prod_{j \not\in J} dp_j)\lvert (2\pi)^{|J^c|} \Theta(\pm p_j)\tilde{f}(p_j)\rvert^{2}\Big)^{1/2}\Big(\int (\prod_{j \not\in J} dp_j)\lvert \widetilde{\hat K} (\hat \thetav, \pmb{p}) \rvert^{2}\Big)^{1/2}\\
  &=(2\pi)^{|J^c|}\Big(\int (\prod_{j \not\in J} dp_j)\lvert \tilde{f}(p_j)\rvert^{2}\Big)^{1/2}\Big(\int (\prod_{j \not\in J} dp_j)\lvert \widetilde{\hat K} (\hat \thetav, \pmb{p}) \rvert^{2}\Big)^{1/2}\\
  &=(2\pi)^{|J^c|}\Big(\int (\prod_{j \not\in J} d\theta_j)\lvert f(\theta_j)\rvert^{2}\Big)^{1/2}\Big(\int (\prod_{j \not\in J} d\eta_j)\lvert \hat K (\hat \thetav, \hat\etav) \rvert^{2}\Big)^{1/2}.
\end{aligned}
\end{equation}
Inserting into \eqref{Khat}, and since $|J^c|\leq k$, we obtain the estimate:
\begin{equation}
|I_J| \leq (2\pi)^{k}
  \int (\prod_{j \in J} d\theta_j)
  \Big(\int (\prod_{j \not\in J} d\theta_j) |f(\thetav)|^2 \Big)^{1/2}
  \Big(\int (\prod_{j \not\in J} d\eta_j) |\hat K(\hat \thetav, \hat\etav)|^2 \Big)^{1/2}.
\end{equation}
Applying the Cauchy-Schwarz inequality with respect to the $\theta_j$ ($j \in J$) integrals, we find
\begin{eqnarray}
|I_J| &\leq& (2\pi)^{k}
  \Big(\int (\prod_{j=1}^{k}d\theta_j) |f(\thetav)|^2 \Big)^{1/2}
  \Big(\int (\prod_{j \in J} d\theta_j)(\prod_{j \not\in J} d\eta_j) |\hat K(\hat \thetav, \hat\etav)|^2 \Big)^{1/2}\nonumber\\
  &=&(2\pi)^{k}   \gnorm{f}{2} \, \gnorm{\hat K}{2}.
\end{eqnarray}
We insert the definition of $\hat K$ \eqref{Khat} and we apply the Cauchy-Schwarz inequality to the $\eta_j$ ($j\in J$) integrals, we arrive at
\begin{equation}
\begin{aligned}
|I_J|&\leq (2\pi)^{k}   \gnorm{f}{2} \Big(\int (\prod_{j \in J} d\theta_j)(\prod_{j \not\in J} d\eta_j) \Big\lvert \int (\prod_{j \in J} d\eta_j)  g(\etav) \delta_J K(\thetav^J,\etav)  \Big\rvert^2 \Big)^{1/2}\\
&\leq (2\pi)^{k}   \gnorm{f}{2} \Big(\int (\prod_{j \in J} d\theta_j)(\prod_{j \not\in J} d\eta_j) \Big(\int (\prod_{j \in J} d\eta_j)  \lvert g(\etav)\rvert^2 \Big) \times \\
  &\qquad \qquad \times \Big(\int (\prod_{j \in J} d\eta_{j})\lvert \delta_J K(\thetav^J,\etav) \rvert^2\Big)\Big)^{1/2}\\
&\leq (2\pi)^k \gnorm{f}{2} \gnorm{g}{2}\cdot \Big( \int \Big( \prod_{j\in J}d\theta_j d\eta_j\Big) \sup_{\eta_{j}, j \notin J} |\delta_J K(\thetav^J,\etav)|^2\Big)^{1/2}\\
 &\leq (2\pi)^{k} \gnorm{f}{2} \, \gnorm{g}{2} \,
  \gnorm{\delta_J K}{J} .
\end{aligned}
\end{equation}
Inserting the estimate above in \eqref{eq:ijintro}, we find the result in Lemma~\ref{lem:kerneloneover} in the case $m=n=k$.

The more general statement (also for $m, n \neq k$) can then be obtained using the Cauchy-Schwarz inequality in the following way. We denote with $\tilde{\thetav}$ and $\tilde{\etav}$ the first $k$ variables in $\thetav$ and in $\etav$, respectively. We can write
\begin{multline}
\Biggnorm{
   K(\thetav,\etav)  \prod_{j=1}^k \frac{1}{\theta_j-\eta_j \pm i0}
  }{m \times n}\\
  = \sup_{\substack{ \gnorm{f}{2} \leq 1\\ \gnorm{g}{2} \leq 1}}\Big\lvert \int d\tilde{\thetav} d\hat \thetav d \tilde{\etav} d\hat \etav\; K(\tilde\thetav,\hat\thetav,\tilde\etav,\hat\etav)\prod_{j=1}^k \frac{1}{\tilde{\theta}_j- \tilde{\eta}_j \pm i0}f(\tilde\thetav,\hat\thetav)g(\tilde\etav,\hat\etav)  \Big\rvert\\
   =\sup_{\substack{ \gnorm{f}{2} \leq 1\\ \gnorm{g}{2} \leq 1}}\int  d\hat \thetav  d\hat \etav\, \Big\lvert \int d\tilde{\thetav}d \tilde{\etav}\, K(\tilde\thetav,\hat\thetav,\tilde\etav,\hat\etav)\prod_{j=1}^k \frac{1}{\tilde{\theta}_j- \tilde{\eta}_j \pm i0}f(\tilde\thetav,\hat\thetav)g(\tilde\etav,\hat\etav)  \Big\rvert.
\end{multline}
We apply the result for $m=n=k$, discussed above, and we find
\begin{multline}
\Biggnorm{K(\thetav,\etav)  \prod_{j=1}^k \frac{1}{\theta_j-\eta_j \pm i0}
  }{m \times n}\\
\leq (2\pi)^{k}
  \sum_{J \subset \{1,\ldots,k \} } \sup_{\substack{ \gnorm{f}{2} \leq 1\\ \gnorm{g}{2} \leq 1}}\int d\hat \thetav  d\hat \etav\, \sup_{\eta_j, j\notin J}\gnorm{\delta_J K((\tilde \thetav,\hat \thetav)^{J},(\tilde \etav,\hat\etav))}{2}\gnorm{f(\tilde \thetav,\hat \thetav)}{2}\gnorm{g(\tilde \etav,\hat \etav)}{2},
\end{multline}
where the $L^2$ norms in the equation above are taken with respect to the $\tilde \thetav$ , $\tilde \etav$ variables. Using the Cauchy-Schwarz inequality, we find
\begin{eqnarray}
\Biggnorm{K(\thetav,\etav)  \prod_{j=1}^k \frac{1}{\theta_j-\eta_j \pm i0}
  }{m \times n}
  &\leq& (2\pi)^{k}
  \sum_{J \subset \{1,\ldots,k \} } \sup_{\substack{ \gnorm{f}{2} \leq 1\\ \gnorm{g}{2} \leq 1}}\gnorm{\delta_J K}{J}\gnorm{f}{2}\gnorm{g}{2}\nonumber\\
  \leq (2\pi)^{k}
  \sum_{J \subset \{1,\ldots,k \} } \gnorm{\delta_J K}{J},
\end{eqnarray}
which is the desired result in Lemma~\ref{lem:kerneloneover}.
\end{proof}

We use this result to obtain estimates for more general integral kernels.
\begin{lemma} \label{lem:kernelcoth}
Let $K : \rbb^m \times \rbb^{n} \to \cbb$ be smooth, and $k \leq \min(m,n)$. Then,
\begin{equation}
   \Biggnorm{
   K(\thetav,\etav)  \prod_{j=1}^k \coth \frac{\theta_j-\eta_j\pm i0}{2}
  }{m \times n}
\leq
  (8\pi)^{k}
  \sum_{J \subset \{1,\ldots,k\}} \gnorm{\delta_J K}{J} .
\end{equation}
\end{lemma}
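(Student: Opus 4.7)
The plan is to reduce this lemma to Lemma~\ref{lem:kerneloneover} by splitting $\coth$ into its singular and regular parts. From the Laurent expansion $\coth(z/2) = 2/z + z/6 + O(z^3)$ at the origin, together with the limits $\coth(z/2) \to \pm 1$ as $z \to \pm\infty$, the function $b(z) := \coth(z/2) - 2/z$ extends to a smooth odd function on $\rbb$. A direct monotonicity argument using $b'(z) = 2/z^2 - 1/(2\sinh^2(z/2)) > 0$ for $z > 0$ (which follows from $\sinh(z/2) > z/2$) shows that $b$ is increasing from $0$ to $1$ on $(0,\infty)$, so $|b(z)| \leq 1$ uniformly. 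As distributions we may therefore write $\coth\frac{z \pm i0}{2} = \frac{2}{z \pm i0} + b(z)$, with the singular part to be handled by Lemma~\ref{lem:kerneloneover} and the smooth piece $b$ absorbed into $K$.

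Next, I would expand the product via distributivity,
\begin{equation*}
\prod_{j=1}^k \coth \frac{\theta_j - \eta_j \pm i0}{2}
 = \sum_{I \subset \{1,\ldots,k\}} 2^{|I|}\, B_{I^c}(\thetav,\etav) \prod_{j \in I} \frac{1}{\theta_j - \eta_j \pm i0},
\end{equation*}
where $B_{I^c}(\thetav,\etav) := \prod_{j \notin I,\, j \leq k} b(\theta_j - \eta_j)$ is smooth with $\|B_{I^c}\|_\infty \leq 1$. For each $I$ I would set $\tilde K_I := K \cdot B_{I^c}$ and apply Lemma~\ref{lem:kerneloneover} with $I$ in place of $\{1,\ldots,k\}$, which gives
\begin{equation*}
\Biggnorm{\tilde K_I \prod_{j \in I} \frac{1}{\theta_j - \eta_j \pm i0}}{m \times n}
\leq (2\pi)^{|I|} \sum_{J \subset I} \gnorm{\delta_J \tilde K_I}{J}.
\end{equation*}
The key simplification comes from observing that $B_{I^c}$ depends on the variables $\theta_{j'}$ only for $j' \notin I$, while for $J \subset I$ the difference operator $\delta_J$ acts only in $\theta_j$ with $j \in J$; these sets are disjoint, so $\delta_J$ commutes with multiplication by $B_{I^c}$, giving $\delta_J \tilde K_I = B_{I^c} \cdot \delta_J K$ and hence $\gnorm{\delta_J \tilde K_I}{J} \leq \gnorm{\delta_J K}{J}$.

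Finally I would collect terms and exchange the order of summation to obtain
\begin{equation*}
\Biggnorm{K \prod_{j=1}^k \coth \frac{\theta_j - \eta_j \pm i0}{2}}{m\times n}
\leq \sum_{J \subset \{1,\ldots,k\}} \gnorm{\delta_J K}{J} \sum_{I \supset J} (4\pi)^{|I|},
\end{equation*}
and evaluate the inner sum as $(4\pi)^{|J|} (1+4\pi)^{k-|J|} \leq (8\pi)^k$, using $1 + 4\pi \leq 8\pi$. The main technical point will be the verification of the smoothness of $b$ at the removable singularity $z=0$ and the uniform bound $|b| \leq 1$ on $\rbb$; once these are in hand, the rest is combinatorial bookkeeping. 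I expect the commutation step $\delta_J \tilde K_I = B_{I^c} \cdot \delta_J K$ to require the most care in writing out rigorously, since one must track how $\delta_J$ interacts with the special evaluation point $\thetav^J$ used in the definition of the norm $\gnorm{\cdot}{J}$.
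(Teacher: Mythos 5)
Your proposal follows essentially the same route as the paper's own proof: the paper sets $L(x) := \coth x - x^{-1}$ (so your $b(z) = L(z/2)$), shows $|L|\leq 1$ by the same elementary monotonicity argument, expands the product over subsets via the distributive law, absorbs the bounded remainder factors into $K$, applies Lemma~\ref{lem:kerneloneover} to each resulting term, and finishes with the same style of combinatorial count giving $(8\pi)^k$. You were also right to flag the commutation step as the delicate point: when you estimate $\gnorm{\delta_J \tilde K_I}{J}$ you must be careful about which variables the substitution $\thetav^J$ affects -- the norm $\gnorm{\cdot}{J}$ replaces $\theta_j$ by $\eta_j$ for $j \leq k$ outside $J$, and since the remainder factor $b$ vanishes at the diagonal, this interplay depends on whether Lemma~\ref{lem:kerneloneover} is read with $k$ replaced by $|I|$ (relabelling the pole variables to the front) or with the original $k$; the paper's own proof is written loosely at exactly this spot, so spelling this out is indeed where your write-up should invest the most care.
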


\begin{proof}
We consider the function $L(x):=\coth x - x^{-1}$; this function is real analytic; there is actually no pole at $x=0$ since $\lim_{x\rightarrow 0}(\coth x - x^{-1})\approx x^{-1}-x^{-1}=0$. Moreover, by computing the extrema of this function, we can show that $|\coth x - x^{-1}|\leq 1$ for all $x \in \rbb$.

We have
\begin{equation}\label{LKstima}
\begin{split}
  \Biggnorm{K(\thetav,\etav)  \prod_{j=1}^k \coth \frac{\theta_j-\eta_j\pm i0}{2}}{m \times n}
 = \Biggnorm{ K(\thetav,\etav)  \prod_{j=1}^k \Big(L(\frac{\theta_j-\eta_j}{2}) + \frac{2}{\theta_j-\eta_j\pm i0} \Big) }{m \times n}\\
 \leq \sum_{J \subset \{1,\ldots,k\}}
   \Biggnorm{
   K(\thetav,\etav) \Big(\prod_{j \in J^c} L(\frac{\theta_j-\eta_j}{2}) \Big)
 \Big(\prod_{j \in J}  \frac{2}{\theta_j-\eta_j\pm i0} \Big)
  }{m\times n}
\\
 =
 \sum_{J \subset \{1,\ldots,k\}} 2^{|J|}
   \Biggnorm{
  \underbrace{ K(\thetav,\etav) \Big(\prod_{j \in J^c} L(\frac{\theta_j-\eta_j}{2}) \Big)}_{=:\hat K(\thetav,\etav)}
 \Big(\prod_{j \in J}  \frac{1}{\theta_j-\eta_j\pm i0} \Big)
  }{m\times n},
\end{split}
\end{equation}
where in the first equality we wrote $\coth x = L(x) + x^{-1}$ and in the second inequality we used the distributional law.

Now we can apply Lemma~\ref{lem:kerneloneover} with $\hat K$ in place of $K$, we have
\begin{equation}
\Biggnorm{
  \hat K(\thetav,\etav)
 \Big(\prod_{j \in J}  \frac{1}{\theta_j-\eta_j\pm i0} \Big)}{m\times n}\leq (2\pi)^{|J|}\sum_{I\subset J}\gnorm{\delta_{I}\hat K}{I}.
\end{equation}
Hence, we find from \eqref{LKstima},
\begin{multline}\label{KhatKstima}
\Biggnorm{K(\thetav,\etav)  \prod_{j=1}^n \coth \frac{\theta_j-\eta_j\pm i0}{2}}{m \times n}
\leq \sum_{J\subset \{ 1,\ldots,k  \}} 2^{|J|}(2\pi)^{|J|}\sum_{I\subset J}\gnorm{\delta_{I}\hat K}{I}\\
=\sum_{J\subset \{ 1,\ldots,k  \}} (4\pi)^{|J|}\sum_{I\subset J}\gnorm{\delta_{I}\hat K}{I}.
\end{multline}
Note that for any $I \subset J$, we find
\begin{eqnarray}
\gnorm{\delta_I \hat K}{I}^2 &=& \int \Big( \prod_{i \in I}d\theta_i d\eta_i \Big)\sup_{\eta_i, i\in I^c}
|\delta_{I} K (\thetav^I,\etav)|^2
\cdot\Big(  \prod_{j\in J^c}L\Big(\frac{\theta_j -\eta_j}{2}\Big)\Big)^2 \nonumber\\
&\leq& \prod_{j\in J^c}\sup_{\theta_j-\eta_j \in \rbb }\big\lvert L\Big(\frac{\theta_j-\eta_j}{2}\Big)\big\rvert^2 \cdot \gnorm{\delta_{I} K}{I}^2\nonumber\\
&\leq& (\gnorm{L}{\infty }^k)^2 \gnorm{\delta_I K}{I}^2.
\end{eqnarray}
Inserting this into \eqref{KhatKstima} and noting that $\gnorm{L}{\infty }\leq 1$, we obtain
\begin{equation}
\begin{split}
  \Biggnorm{K(\thetav,\etav)  \prod_{j=1}^n \coth \frac{\theta_j-\eta_j\pm i0}{2}}{m \times n}
 &\leq
 \sum_{J \subset \{1,\ldots,k\}} (4 \pi)^{|J|}
 \sum_{I \subset J } \gnorm{ \delta_I K}{I}
\\
&\leq
  (8 \pi)^{k}
 \sum_{I \subset \{1,\ldots,k\} } \gnorm{ \delta_I K}{I},
\end{split}
\end{equation}
where in the second inequality we used that there are at most $2^k$ sets $J$ that we can insert between $I$ and $\{1,\ldots,k  \}$: $I \subset J \subset \{1, \ldots,k   \}$, and that $|J|\leq k$. This gives the result in Lemma~\ref{lem:kernelcoth}.
\end{proof}

We computed the estimates above on the remainder terms $\delta_J K(\thetav^J,\etav)$. But it can be hard to compute these estimates in the examples. So, we try instead to relate them to the partial derivatives of $K$ near the diagonal, which are more simple to compute. We present the relation between the estimates on $\delta_J K(\thetav^J,\etav)$ and on the partial derivatives of $K$ in the following proposition.

\begin{proposition} \label{pro:pderivremainder}
 Let $K: \rbb^m \times \rbb^{n} \to \cbb$ be smooth, and $k \leq \min(m,n)$. Let $L:\rbb^m \times \rbb^{n} \to \rbb_+$ and $\epsilon: \rbb^{n} \to (0,1]$ be continuous such that\footnote{We write $\partial \theta_N = \prod_{j \in N} \partial \theta_j$.}
\begin{equation}\label{eq:derivcond}
  \forall N \subset \{1,\ldots, k\}:
 \; \Big\lvert \frac{\partial^{|N|}K}{\partial \theta_N} (\thetav,\etav) \Big\rvert
  \leq L(\thetav^{N^c},\etav)
  \quad \text{whenever } |\theta_j-\eta_j| < \epsilon(\etav)\text{ for all } j \in N.
\end{equation}
Then, for all $J \subset \{1,\ldots,k \}$,
\begin{equation}\label{L2deltaK}
    \gnorm{\delta_J K}{J}
  \leq 8^{|J|} \sum_{M \subset J} \gnorm{\epsilon(\etav)^{-|J|} L(\thetav^M,\etav) }{J}.
\end{equation}
\end{proposition}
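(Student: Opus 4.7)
The plan is to split, for each $j \in J$, the plane of $(\theta_j, \eta_j)$-values into a ``small'' region $\{|\theta_j - \eta_j| < \epsilon(\etav)\}$ and a ``large'' region $\{|\theta_j - \eta_j| \geq \epsilon(\etav)\}$, and to use two different elementary estimates for $\delta_j$ on each. On the small region I will use the fundamental theorem of calculus to write
\[
  \delta_j f(\thetav, \etav) = \int_0^1 \frac{\partial f}{\partial \theta_j}\Big|_{\theta_j \to \eta_j + t(\theta_j-\eta_j)}\, dt,
\]
so that $|\delta_j f|$ is bounded by the supremum of $|\partial f/\partial \theta_j|$ on a line segment entirely contained in the small region. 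On the large region I will use the triangle inequality
\[
  |\delta_j f(\thetav, \etav)| \leq \frac{|f(\theta_1,\ldots,\theta_j,\ldots,\theta_m, \etav)| + |f(\theta_1,\ldots,\eta_j,\ldots,\theta_m, \etav)|}{|\theta_j - \eta_j|},
\]
exploiting $|\theta_j - \eta_j| \geq \epsilon(\etav)$ to later extract the needed $\epsilon^{-|J|}$ factor.

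Iterating these two estimates over all $j \in J$ then gives, for each subset $M \subset J$, a pointwise bound valid on the region $R_M$ where the large condition holds precisely for $j \in M$. The first estimate, applied for $j \in J \setminus M$, replaces each $\delta_j$ by $\partial_{\theta_j}$ at an intermediate point with $|\tilde\theta_j - \eta_j| < \epsilon(\etav)$; the second, applied for $j \in M$, produces factors $1/|\theta_j - \eta_j|$ together with substitutions $\theta_j \leftrightarrow \eta_j$ at position $j$, yielding a sum of $2^{|M|}$ terms labelled by subsets $M' \subset M$. At the intermediate points, the hypothesis \eqref{eq:derivcond} applies with $N = J \setminus M$ and bounds the resulting partial derivative of order $|J \setminus M|$ by $L(\thetav^{N^c}, \etav)$. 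Careful bookkeeping shows that, after evaluating at $\thetav^J$ as prescribed in the norm and accounting for the case-(b) substitutions at positions $j \in M$, each of these $2^{|M|}$ terms produces $L(\thetav^{M'}, \etav)$ with $M'$ ranging over subsets of $M$.

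Putting the pieces together, on $R_M$ one obtains $|\delta_J K(\thetav^J, \etav)| \leq \sum_{M' \subset M} L(\thetav^{M'}, \etav) \prod_{j \in M} 1/|\theta_j - \eta_j|$. Using $|\theta_j - \eta_j| \geq \epsilon(\etav)$ on $R_M$ and then $\epsilon(\etav) \leq 1$ to replace $\epsilon^{-|M|}$ by the uniform factor $\epsilon^{-|J|}$, summing over $M \subset J$ (the $R_M$ being almost-everywhere disjoint), and applying the triangle inequality for the $L^2$-norm $\gnorm{\cdot}{J}$, all combinatorial constants are easily absorbed into $8^{|J|}$. The main difficulty I anticipate is the combinatorial bookkeeping described above: tracking exactly which argument configuration each intermediate estimate produces, so that the final bound has $L$ evaluated at $\thetav^M$ with the correct $M$. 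A secondary subtlety is that the region $R_M$ depends on $\etav$ through $\epsilon(\etav)$, so in order for the sup over $\etav_{J^c}$ in the definition of $\gnorm{\cdot}{J}$ to commute harmlessly with the estimates, the pointwise bound must first be made uniform in $\etav$, which is exactly what the replacement $\epsilon^{-|M|} \leq \epsilon^{-|J|}$ achieves.
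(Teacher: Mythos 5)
Your proposal takes a genuinely different route from the paper's: you decompose the integration domain directly into the $2^{|J|}$ regions $R_M$ according to which of the coordinate pairs satisfy the small/large condition, aiming for a pointwise bound on each $R_M$ followed by a single integration; the paper instead proceeds by induction on $|J|$, splitting only the $\theta_1$-integral into small and large parts at each inductive step, with the rest of $\delta_J$ handled by the induction hypothesis applied to $\hat K := \delta_1 K$. Both strategies build on the same small/large idea, and yours is potentially more direct, but your integration step has a genuine gap.

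The problem lies precisely in the move you announce: ``Using $|\theta_j-\eta_j|\geq\epsilon(\etav)$ on $R_M$ and then $\epsilon(\etav)\leq 1$ to replace $\epsilon^{-|M|}$ by the uniform factor $\epsilon^{-|J|}$.'' If you replace $\prod_{j\in M}|\theta_j-\eta_j|^{-1}$ by $\epsilon^{-|J|}$ \emph{before} doing the $L^2$-integration, the resulting pointwise bound $\epsilon^{-|J|}\sum_{M'\subset M}L(\thetav^{M'},\etav)$ is \emph{constant} in $\theta_j$ for $j\in M\setminus M'$ (since $L(\thetav^{M'},\etav)$ does not depend on those variables). But on $R_M$ those variables range over the unbounded set $\{|\theta_j-\eta_j|\geq\epsilon(\etav)\}$, so $\int d\theta_j$ of a $\theta_j$-constant bound diverges. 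You must keep the factor $|\theta_j-\eta_j|^{-2}$ for $j\in M\setminus M'$ alive through the integration, where it yields the convergent $\int_{|t|\geq\epsilon}|t|^{-2}\,dt = 2/\epsilon$, and only collect the $\epsilon^{-1}$ afterward. Relatedly, your dismissal of the ``secondary subtlety'' doesn't hold up: taking $\sup_{\etav_{J^c}}$ destroys the constraint $|\theta_j-\eta_j|\geq\epsilon(\etav)$ for $j\in M$ (the infimum of $\epsilon(\etav)$ over $\etav_{J^c}$ may be $0$), so once the sup is taken you no longer control the singularity of $|\theta_j-\eta_j|^{-2}$. The replacement $\epsilon^{-|M|}\leq\epsilon^{-|J|}$ does not cure either problem. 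The paper's induction avoids both: at each step the $\theta_1$-integral is computed with $\etav$ held fixed, so the $\epsilon(\etav)$-dependent split is unambiguous and the $2/\epsilon(\etav)$ factor is produced cleanly; the passage to $\gnorm{\cdot}{J}$, with its $\sup_{\etav_{J^c}}$, is taken only at the very end, after the $\theta_J$-integral inequality has been established pointwise in $\etav$. If you want to keep your decomposition into $R_M$, you would need to reorganize the argument so that the $\theta_J$-integral (with the $|\theta_j-\eta_j|^{-2}$ factors for $j\in M$) is done before any sup is taken, and explain why the final bound survives the sup; as written the sketch does not go through.
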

Note: In the norm $\gnorm{\epsilon(\etav)^{-|J|}L(\thetav^M,\etav)}{J}$, the integrations are not performed with respect to the variables $\theta_j$ ``removed'' in $\thetav^M$.

\begin{proof}
 First we want to prove that for all $k,m,n,L$ with the restrictions above, and all $J \subset \{1,\ldots,k\}$, there holds
\begin{equation}\label{eq:deltaKinduction}
  \int \sup_{\eta_j, j \notin J^c}| \delta_J K(\thetav^J,\etav) |^2 d \theta_J
  \leq 16^{|J|} \sum_{M \subset J} \epsilon(\etav)^{-2|J|} \int d\theta_M \sup_{\eta_j, j\notin J^c}L(\thetav^M,\etav)^2,
\end{equation}
where $d\theta_M = \prod_{j \in M} d\theta_j$.

We prove this using induction on $|J|$.

For $|J|=0$, namely for $J=\emptyset$, formula \eqref{eq:deltaKinduction} reads
\begin{equation}
  \sup_{\eta_1,\ldots,\eta_k}| K( \eta_1,\ldots,\eta_k,\theta_{k+1},\ldots,\theta_{m},\etav) |^2
 \leq
  \sup_{\eta_1,\ldots,\eta_k} L( \eta_1,\ldots,\eta_k,\theta_{k+1},\ldots,\theta_{m},\etav)^2
\end{equation}
by definition of $\thetav^{J}$ and $\thetav^{M}$ given in \eqref{eq:deltaremainn}, and since $M=0$.

This is just \eqref{eq:derivcond} with $N=\emptyset$, $\theta_j=\eta_j$ ($j \leq k$) and by the definition of $\thetav^{N^c}$ given in \eqref{eq:deltaremainn}.

Now consider the case $J\neq \emptyset$. By renumbering of the variables, we can assume that $1\in J$ and write $J=\{1\} \dot\cup \hat J$.

Then we split the integral in \eqref{eq:deltaKinduction} into the sum of two integrals, one with $|\theta_1-\eta_1|< \epsilon (\etav)$ and one with $|\theta_1-\eta_1| > \epsilon (\etav)$.

First we consider the integral \eqref{eq:deltaKinduction} with $|\theta_1-\eta_1|< \epsilon (\etav)$. We write $\thetav=(\theta_1,\hat \thetav)$, $\etav=(\eta_1,\hat \etav)$, and we define (similarly to \eqref{eq:deltadef}):
\begin{equation}\label{hatKdeltaK}
   \hat K(\hat\thetav,\hat\etav) :=\frac{K(\thetav,\etav)-K(\eta_1,\hat\thetav,\etav)}{\theta_1-\eta_1},
\end{equation}
where $\hat K(\hat\thetav,\hat\etav)$ depends parametrically on $\theta_1,\eta_1$. We can show that $\delta_{\hat J}\hat K = \delta_J K$. Indeed, using \eqref{eq:deltadef}, we can rewrite \eqref{hatKdeltaK} as
\begin{equation}
\hat K(\hat\thetav,\hat\etav) = \delta_1 K(\thetav,\etav).
\end{equation}
Setting $J=\{  1\}\dot{\cup}\hat J$, with $\hat J = \{ j_2,\ldots,j_l \}\subset \{ 2,\ldots,k \}$, and recalling the definitions of $\delta_J K$ and $J$ after \eqref{deltajKdef}, we have
\begin{equation}
\delta_{\hat J}\hat K = \delta_{j_2}\ldots \delta_{j_l}\hat K= \delta_{j_2}\ldots \delta_{j_l}\delta_1 K = \delta_J K.
\end{equation}
By definition \eqref{hatKdeltaK}, and since $\hat K$ is a continuous and differentiable function on the interval $|\theta_1-\eta_1|< \epsilon (\etav)$, we can apply the mean value theorem, and we have
\begin{equation}\label{meanhatK}
    \hat K(\hat\thetav,\hat\etav) = \frac{\partial K}{\partial \theta_1} (\xi,\hat\thetav,\eta_1,\hat\etav)
\end{equation}
where $\xi$ depends on the variables $\hat\thetav,\etav$ but $|\xi-\eta_1|<\epsilon(\etav)$; hence, we have that
\begin{multline}
    \forall N \subset \{2,\ldots, k\}:
 \; \Big\lvert \frac{\partial^{|N|}\hat K}{\partial \theta_N} (\hat\thetav,\hat\etav) \Big\rvert
  =\Big\lvert \frac{\partial^{|N|+1} K}{\partial \theta_1 \partial \theta_N} (\xi,\hat\thetav,\etav)
\Big\rvert
  \leq \sup_{|\xi-\eta_1|<\epsilon(\etav)}
\; \Big\lvert \frac{\partial^{|N|+1} K}{\partial \theta_1 \partial \theta_N} (\xi,\hat\thetav,\etav)
\Big\rvert\\
\leq
L(\thetav^{N^c \backslash\{1\}},\etav)
=L(\eta_1,\thetav^{N^c \backslash\{1\}},\etav)
= \hat L(\hat\thetav^{N^c}, \hat \etav),
\end{multline}
where in the first equality we made use of \eqref{meanhatK}, where in the third inequality we used \eqref{eq:derivcond}, and in the last equality we defined $\hat L(\hat\thetav,\hat\etav):=L(\eta_1,\hat\thetav,\etav)$, which depends on $\eta_1$ as a parameter. So, we have shown that $\hat K$ fulfils the hypothesis \eqref{eq:derivcond} of the Prop.~\ref{pro:pderivremainder} with respect to $\hat L$. Hence, we can apply Prop.~\ref{pro:pderivremainder} with respect to $\hat K, \hat L, \hat J$ (induction hypothesis). We get
\begin{multline}\label{eq:intLE}
   \int_{ |\theta_1-\eta_1|< \epsilon(\etav) } d \theta_J
   \sup_{\eta_j, j\in J^c}| \delta_J K(\thetav^J,\etav) |^2
  =
   \int_{ |\theta_1-\eta_1|< \epsilon(\etav) } d \theta_1 \int d\theta_{\hat J}
   \sup_{\eta_j, j\in \hat{J}^c}| \delta_{\hat J} \hat K(\hat \thetav^{\hat J},\hat \etav) |^2
\\
  \leq 2 \epsilon(\etav)  16^{|\hat J|}
  \sum_{M \subset \hat J} \epsilon(\etav)^{-2|\hat J|} \int d\hat\theta_{M} \sup_{\eta_j, j\in \hat{J}^c}\hat L(\hat\thetav^{M},\hat\etav)^2\\
  \leq \frac{1}{8} 16^{|J|} \sum_{M \subset J} \epsilon(\etav)^{-2|J|} \int d\theta_M \sup_{\eta_j, j\in J^c}L(\thetav^M,\etav)^2,
\end{multline}
where in first equality we used that $\delta_{\hat J}\hat K = \delta_J K$, and $J=\{1\} \dot\cup \hat J$ so that we can split the integral in $\theta_J$ into the integrals in $\theta_1$ and $\theta_{\hat J}$. In the second inequality we used that  $\int_{ |\theta_1-\eta_1|< \epsilon(\etav) } d \theta_1 = 2\epsilon(\etav)$ and we applied \eqref{L2deltaK}, with $\hat K$ and $\hat J$ in place of $K$ and $J$, to $\int d\theta_{\hat J}| \delta_{\hat J} \hat K(\hat \thetav^{\hat J},\hat \etav) |^2$; in the last inequality we used that $|\hat J|>|J|$ and therefore that if $M \subset \hat J$, then $M\subset J$; also, since $|\hat J| >|J|$, then $\epsilon(\etav)^{-2|\hat J|}\leq \epsilon(\etav)^{-2|J|}$. We also used that $\epsilon(\etav)\leq 1$, so that the factor $\epsilon(\etav)$ is estimated by $1$ in the last inequality. Moreover, using the definition $\hat L(\hat\thetav,\hat\etav):=L(\eta_1,\hat\thetav,\etav)$, we can write $\hat L(\hat\thetav^{M},\hat\etav)=L(\eta_1,\hat \thetav^{M},\etav)$ with $M \in \hat J$; on the other hand, since $1 \not\in M$ the right hand side is the same as $L(\thetav^M,\etav)$, with $M \in J$. Finally, since $|\hat J|=|J|-1$, then $2 \cdot 16^{|\hat J|}=2\cdot 16^{|J|}\cdot 1/16 =16^{|J|}\cdot 1/8$.

Now we consider the part of the integral \eqref{eq:deltaKinduction} with $|\theta_1-\eta_1| > \epsilon (\etav)$. We use that $J= \{ 1 \}\dot \cup \hat J$ and, by definition of $\delta_1 K (\thetav,\etav)=(K(\thetav,\etav)-K(\eta_1,\hat \thetav,\etav))/(\theta_1-\eta_1)$, we have that
\begin{equation}
 \delta_J K(\thetav^J,\etav)
 =\delta_1 \delta_{\hat J} K(\thetav^J,\etav)
 =\frac{\delta_{\hat J}K(\thetav^J,\etav)- \delta_{\hat J}K(\eta_1,\hat \thetav^{J},\etav)}{\theta_1 - \eta_1}
 = \frac{\delta_{\hat J}K(\thetav^J,\etav)- \delta_{\hat J}K(\thetav^{\hat J},\etav)}{\theta_1 - \eta_1},
\end{equation}
where in the last equality we used that $1 \not \in \hat J$.

From the equation above, we compute the estimate
\begin{equation}\label{eq:ksplit}
 |\delta_J K(\thetav^J,\etav) |^2
 \leq \frac{2}{(\theta_1-\eta_1)^2}
 \big( |\delta_{\hat J}K(\thetav^J,\etav)|^2 + |\delta_{\hat J}K(\thetav^{\hat J},\etav)|^2 \big).
\end{equation}
We consider the first summand on the r.h.s.. We consider $K$ as a function of $m+n-2$ variables, regarding the dependence on $\theta_1,\eta_1$ as parameters. We know that it fulfils \eqref{eq:derivcond}, so we can apply \eqref{L2deltaK} with $\hat J$ in place of $J$ (induction hypothesis) (see the second inequality of the equation below).
\begin{multline}\label{eq:intGE1}
   \int_{ |\theta_1-\eta_1| > \epsilon(\etav) } d \theta_J
  \frac{2}{(\theta_1-\eta_1)^2} \sup_{\eta_j,j \in J^c} |\delta_{\hat J}K(\thetav^J,\etav)|^2
\\
= \Big(\int_{ |\theta_1-\eta_1| > \epsilon(\etav) } d\theta_1 \frac{2}{(\theta_1-\eta_1)^2}\Big) \int d\theta_{\hat J}\sup_{\eta_j,j\in J^c}|\delta_{\hat J}K(\thetav^{\{ 1\}\dot \cup \hat J},\etav)|^2
\\
\leq
 4 \epsilon(\etav)^{-1}  16^{|\hat J|}
\sum_{\hat M \subset \hat J} \epsilon(\etav)^{-2|\hat J|} \int d\theta_{\hat M}  \sup_{\eta_j,j\in J^c}L(\thetav^{\{1\}\cup \hat M },\etav)^2
\\
\leq \frac{1}{4} 16^{|J|} \epsilon(\etav)\epsilon(\etav)^{-2|\hat J|-2} \sum_{ M \subset  J} \int d\theta_{ M} \sup_{\eta_j,j\in J^c}L(\thetav^{ M },\etav)^2
\\
\leq \frac{1}{4} 16^{|J|} \epsilon(\etav)^{-2|J|} \sum_{ M \subset  J} \int d\theta_{ M} \sup_{\eta_j,j\in J^c}L(\thetav^{ M },\etav)^2,
\end{multline}
where in the second inequality we computed the integral
\begin{equation}
\int_{ |\theta_1-\eta_1| > \epsilon(\etav) } d\theta_1 \frac{2}{(\theta_1-\eta_1)^2}=\int_{\eta_1 +\epsilon}^{\infty}\frac{2}{(\theta_1-\eta_1)^2}+\int_{\infty}^{\eta_1 -\epsilon}\frac{2}{(\theta_1-\eta_1)^2}=\frac{4}{\epsilon}.
\end{equation}
In the last inequality in \eqref{eq:intGE1} we used the substitution $M = \hat M \cup \{1\}$. Since $\hat M \subset \hat J$, then $M \subset J$; we also replaced the integral in $\theta_{ \hat M}$ with the integral in $\theta_M$, noting that in the case where $1\in M$, we are integrating over more terms (than before) which are however positive. We also used $\epsilon(\etav)\leq 1$.

Now we consider the integral over the second summand in \eqref{eq:ksplit}, we apply \eqref{L2deltaK} to $K(\eta_1,\theta_2,\ldots,\theta_m,\etav)$ and $L(\eta_1,\theta_2,\ldots,\theta_m,\etav)$ as functions of $m+n-2$ variables, parametrically dependent on $\eta_1$, and with $\hat J$ in place of $J$ (induction hypothesis)(see the second inequality below), and we obtain
\begin{multline} \label{eq:intGE2}
   \int_{ |\theta_1-\eta_1| > \epsilon(\etav) } d \theta_J
  \frac{2}{(\theta_1-\eta_1)^2}  \sup_{\eta_j, j\in J^c}|\delta_{\hat J} K(\thetav^{\hat J},\etav)|^2
\\
=\int_{ |\theta_1-\eta_1| > \epsilon(\etav) } d \theta_1
  \frac{2}{(\theta_1-\eta_1)^2}  \int d \theta_{\hat J} \sup_{\eta_j, j\in J^c}|\delta_{\hat J} K(\thetav^{\hat J},\etav)|^2
\\
\leq
 \Big( \int_{ |\theta_1-\eta_1| > \epsilon(\etav) } d\theta_1 \frac{2}{(\theta_1-\eta_1)^2} \Big)
 16^{|\hat J|} \sum_{ M \subset \hat J} \epsilon(\etav)^{-2|\hat J|}
 \int d\theta_{ M} \sup_{\eta_j, j\in J^c} L(\thetav^{ M },\etav)^2
\\
\\
\leq \frac{1}{4} \epsilon(\etav)^{-2|J|}  16^{|J|} \sum_{ M \subset J} \int d\theta_{M}\sup_{\eta_j, j\in J^c} L(\thetav^{ M },\etav)^2,
\end{multline}
where we have used $\epsilon(\etav)\leq 1$.

Now we can sum the three estimates \eqref{eq:intLE}, \eqref{eq:intGE1}, \eqref{eq:intGE2}, and we find \eqref{eq:deltaKinduction}.

Then we integrate \eqref{eq:deltaKinduction} over $\etav$ and we get
\begin{equation}\label{squareresult}
  \gnorm{ \delta_J K(\thetav^J, \etav)}{J}^2 \leq 16^{|J|} \sum_{M \subset J}
  \gnorm{\epsilon(\etav)^{-2|J|}  L(\thetav^M,\etav)}{J}^2,
\end{equation}
To pass from \eqref{squareresult} to \eqref{L2deltaK} we need to take the square root of both sides in \eqref{squareresult}:
\begin{equation}
  \gnorm{ \delta_J K(\thetav^J, \etav)}{2} \leq 4^{|J|} \sqrt{\sum_{M \subset J}
  \gnorm{\epsilon(\etav)^{-2|J|}  L(\thetav^M,\etav)}{2}^2}.
\end{equation}
Now we use the usual estimate of the $\ell^2$ norm against the $\ell^1$ norm: for a vector $\pmb{a}\in \rbb^n$, we have $\sqrt{\sum_j |a_j|^2} \leq \sum_j |a_j|$; noting that the sum on the r.h.s. has $2^{|J|}$ summands, we finally find
\begin{equation}
\gnorm{ \delta_J K(\thetav^J, \etav)}{2} \leq 8^{|J|} \sum_{M \subset J}
  \gnorm{\epsilon(\etav)^{-2|J|}  L(\thetav^M,\etav)}{2}.
  \end{equation}
\end{proof}

\emph{Remark}: To prove the results in the following Lemma~\ref{lem:expquotient} and Prop.~\ref{pro:kprototype}, we need the following estimates.

For fixed $0 < \alpha < 1$, we can show that there is a constant $c_\alpha>0$ such that for all $a,b \in \rbb$,
\begin{equation}\label{eq:alphaineq1}
   c_\alpha^{-1} (|a|^\alpha + |b|^\alpha) \geq (|a| + |b|)^\alpha
  \geq c_\alpha (|a|^\alpha + |b|^\alpha) .
\end{equation}
This follows from the fact that the function $(|a|^\alpha + |b|^\alpha)/(|a| + |b|)^\alpha $ is homogeneous of order 0, and continuous and non-zero on the unit circle. See Sec~\ref{sec:jaffee} for details on the argument. In particular we find from there that
\begin{equation}
m (|a|+|b|)^\alpha \leq |a|^\alpha + |b|^\alpha \leq M (|a|+|b|)^{\alpha}\leq \frac{M}{m}(|a|^\alpha+|b|^\alpha).
\end{equation}
By setting $c_\alpha ^{-1}:= 1/m$ and $c_\alpha := 1/m$, we find \eqref{eq:alphaineq1}.

Another estimate that one can obtain from \eqref{eq:alphaineq1} is
\begin{equation}\label{eq:alphaineq2}
   c_\alpha^{-1} (|a|^\alpha + |b|^\alpha) \geq |a \pm b|^\alpha
  \geq c_\alpha |a|^\alpha - |b|^\alpha,
\end{equation}
and the same with $a$ and $b$ exchanged.

To show the right inequality in \eqref{eq:alphaineq2}, we consider the left inequality in \eqref{eq:alphaineq1} and we replace $a$ with $a \pm b$, we find:
\begin{equation}
c_\alpha ^{-1}( |a\pm b|^\alpha + |b|^\alpha )\geq (|a\pm b| + |b|)^\alpha.
\end{equation}
By moving $c_\alpha$ and $|b|^\alpha$ to the right hand side of the inequality, we get
\begin{equation}
|a\pm b|^\alpha \geq c_\alpha (|a\pm b|+ |b|)^\alpha - |b|^\alpha.
\end{equation}
Since $|a \pm b|\geq ||a|-|b||$, we find
\begin{equation}
|a\pm b|^\alpha \geq c_\alpha (|a\pm b|+ |b|)^\alpha - |b|^\alpha \geq c_\alpha (|a| - |b| + |b|)^\alpha -|b|^\alpha = c_\alpha |a|^\alpha - |b|^\alpha.
\end{equation}
Therefore,
\begin{equation}
|a\pm b|^\alpha \geq c_\alpha |a|^\alpha - |b|^\alpha.
\end{equation}
To show the left inequality in \eqref{eq:alphaineq2}, we consider again the left inequality in \eqref{eq:alphaineq1}. Since $|a|+|b|\geq |a \pm b|$, we have
\begin{equation}
c_\alpha^{-1} (|a|^\alpha + |b|^\alpha) \geq (|a|+|b|)^{\alpha}\geq |a \pm b|^\alpha,
\end{equation}
and therefore:
\begin{equation}
c_\alpha^{-1} (|a|^\alpha + |b|^\alpha) \geq |a \pm b|^\alpha.
\end{equation}

Another estimate that one also can have by choosing $c_\alpha<1$, is the following:
\begin{equation}\label{eq:alphaineq3}
   |a-b|^\alpha + |b|^\alpha \geq \frac{c_\alpha}{2} (|a|^\alpha + |b|^\alpha).
\end{equation}
To prove this inequality, we consider \eqref{eq:alphaineq2}; it follows from this formula that
\begin{equation}
|a-b|^\alpha + |b|^\alpha \geq c_\alpha |a-b+b|^\alpha = c_\alpha |a|^\alpha.
\end{equation}
Choosing $c_\alpha <1$, we have also
\begin{equation}
|a-b|^\alpha + |b|^\alpha \geq |b|^\alpha \geq c_\alpha |b|^\alpha.
\end{equation}
Taking the sum member by member of the two equations above, we have
\begin{equation}
2(|a-b|^\alpha + |b|^\alpha) \geq c_\alpha ( |a|^\alpha + |b|^\alpha )
\end{equation}
and therefore we find \eqref{eq:alphaineq3}.

Now we want to compute the $\gnorm{\cdotarg}{m \times n}$ of certain concrete functions; in view of Lemma~\ref{lem:kerneloneover}, this will mean to compute the $L^2$ norm of certain exponentials in the simplest case. We have the following lemma:
\begin{lemma} \label{lem:expquotient}
Let $0 < \alpha < 1$, $\beta > 0$, $m,n \in \nbb_0$. Let $K:\rbb^m \times \rbb^n \to \cbb$ be given by
\begin{equation}\label{KexpE}
  K(\thetav,\etav) = \exp( - \beta E(\etav)^\alpha ) \exp( - \beta |E(\thetav) - E(\etav)|^\alpha ).
\end{equation}
Then, there is $d_\alpha>0$ (depending on $\alpha$ only) such that
\begin{equation}\label{L2Kgamma}
  \gnorm{K}{2}^2 \leq d_\alpha^{m+n} \beta^{-(m+n)/2\alpha} \gamma_\alpha(m) \gamma_\alpha(n),
\end{equation}
where
\begin{equation}
   \gamma_\alpha(0) := 1, \quad \gamma_\alpha(k):= \frac{\Gamma(k/2\alpha)}{ \alpha \Gamma(k/2)} \text{ for } k \in \nbb.
\end{equation}
\end{lemma}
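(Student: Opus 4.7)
The plan is to exploit the factorisation trick available because $\alpha<1$. Applying inequality \eqref{eq:alphaineq3} with $a=E(\thetav)$ and $b=E(\etav)$, I obtain
\begin{equation*}
|E(\thetav)-E(\etav)|^\alpha + E(\etav)^\alpha \geq \tfrac{c_\alpha}{2}\bigl(E(\thetav)^\alpha + E(\etav)^\alpha\bigr),
\end{equation*}
so that $|K(\thetav,\etav)|^2 \leq \exp\bigl(-\beta c_\alpha E(\thetav)^\alpha\bigr)\exp\bigl(-\beta c_\alpha E(\etav)^\alpha\bigr)$. This decouples the $\thetav$ and $\etav$ integrations, reducing the claim to the one--sided estimate
\begin{equation*}
 J_k := \int_{\rbb^k} d^k\theta\, \exp\bigl(-\beta c_\alpha E(\thetav)^\alpha\bigr) \leq D_\alpha^k\, \beta^{-k/(2\alpha)}\, \gamma_\alpha(k)
\end{equation*}
for some $D_\alpha>0$, since then $\gnorm{K}{2}^2 \leq J_m J_n \leq D_\alpha^{m+n}\beta^{-(m+n)/(2\alpha)}\gamma_\alpha(m)\gamma_\alpha(n)$ and $d_\alpha := \sqrt{D_\alpha}$ works (in fact one can absorb the square root since $J_m J_n$ already has the right form).

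To bound $J_k$ for $k\geq 1$ I use $\cosh\theta_j \geq 1 + \theta_j^2/2$ and thus $E(\thetav) \geq k + |\thetav|^2/2$, together with one application of \eqref{eq:alphaineq1} to get $E(\thetav)^\alpha \geq c_\alpha\bigl(k^\alpha + 2^{-\alpha}|\thetav|^{2\alpha}\bigr)$. The factor $e^{-\beta c_\alpha^2 k^\alpha}\leq 1$ is dropped and there remains an $O(k)$--dimensional radial integral
\begin{equation*}
  J_k \leq \int_{\rbb^k} d^k\theta\, \exp\bigl(-\beta c_\alpha^2 2^{-\alpha} |\thetav|^{2\alpha}\bigr).
\end{equation*}
In polar coordinates this equals $\frac{\omega_{k-1}}{2\alpha}\,\lambda^{-k/(2\alpha)}\,\Gamma\!\bigl(k/(2\alpha)\bigr)$ with $\lambda = \beta c_\alpha^2 2^{-\alpha}$ and $\omega_{k-1} = 2\pi^{k/2}/\Gamma(k/2)$, which collapses to $\pi^{k/2}\lambda^{-k/(2\alpha)}\gamma_\alpha(k)$ by the very definition of $\gamma_\alpha(k)$. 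Collecting the constants gives $J_k \leq (2\pi c_\alpha^{-2/\alpha})^{k/2}\beta^{-k/(2\alpha)}\gamma_\alpha(k)$, so $D_\alpha = 2\pi c_\alpha^{-2/\alpha}$ does the job. The edge cases $k=0$ (or $m=0$ or $n=0$) give $J_0 = 1 = \gamma_\alpha(0)$ directly, consistent with the convention in the statement.

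The main technical point, and the only one requiring care, is the decoupling step: the exponent $-\beta E(\etav)^\alpha - \beta|E(\thetav)-E(\etav)|^\alpha$ does \emph{not} factor a priori, and the combination of the two terms is essential because neither term alone gives a bound that is integrable in both $\thetav$ and $\etav$. The key lemma \eqref{eq:alphaineq3} is exactly the triangle-type inequality in the $\alpha$-th power quasi-norm that achieves the decoupling, at the price of the constant $c_\alpha/2$ that then propagates into $D_\alpha$. After that step, every remaining calculation is a routine polar-coordinate computation, the only subtlety being to verify that the power of $\beta$ coming out of the radial substitution $s = \lambda r^{2\alpha}$ matches the claimed $\beta^{-(m+n)/(2\alpha)}$, which it does.
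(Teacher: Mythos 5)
Your proof is correct and follows essentially the same route as the paper: the decoupling via \eqref{eq:alphaineq3} with $a=E(\thetav)$, $b=E(\etav)$ is identical, and both reduce to a radial integral whose $s=\lambda r^{2\alpha}$ substitution produces $\Gamma(k/2\alpha)$. The only difference is in arriving at that radial integral: the paper substitutes $p_i=\sinh(\theta_i/2)$ so that $E(\thetav)=2\pv^2+m$ becomes exact and the Jacobian $(1+p_i^2)^{-1/2}\leq 1$ is simply dropped, whereas you lower-bound $\cosh\theta_j\geq 1+\theta_j^2/2$ and invoke \eqref{eq:alphaineq1} one extra time, which is a harmless and slightly more elementary variant. (Minor nit: you write $D_\alpha = 2\pi c_\alpha^{-2/\alpha}$ but your bound has $(\cdot)^{k/2}$ not $(\cdot)^{k}$, so $D_\alpha$ should be the square root of that; you flag this yourself, and the stated $\gnorm{K}{2}^2$ bound is unaffected.)
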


\begin{proof}
  We consider the integral
\begin{equation}\label{L2normaK}
  \gnorm{K}{2}^2  = \int d\thetav \, d\etav \, \exp \big(
     - 2 \beta |E(\thetav)-E(\etav)|^\alpha - 2\beta |E(\etav)|^\alpha
 \big),
\end{equation}
and we perform the substitution
\begin{equation}
  p_i = \sinh \frac{\theta_i}{2},
 \quad \frac{2 dp_i}{\sqrt{p_i^2 + 1}} = d\theta_i,
 \quad E(\thetav) = 2 \pv^2 + m,
\end{equation}
and $q_i = \sinh \eta_i/2$ in analogous way. We find
\begin{equation}
\begin{split}
  \gnorm{K}{2}^2 = 2^{m+n} \int d\pv\, d\qv\,
\Big(\prod_{i=1}^m (1+p_i^2)^{-1/2}\Big)
\Big(\prod_{j=1}^n (1+q_j^2)^{-1/2}\Big) \times \\
\times \exp\big(
  - 2\beta |2\pv^2 - 2\qv^2 + m - n |^\alpha - 2\beta |2\qv^2 + n|^\alpha
\big).
\end{split}
\end{equation}
We use the inequality \eqref{eq:alphaineq3}, and the estimates $1+p_i^2 \geq 1$, $1+q_i^2 \geq 1$ (hence we estimate $\prod_{i=1}^m (1+p_i^2)^{-1/2} \leq 1$ and $\prod_{j=1}^n (1+q_j^2)^{-1/2} \leq 1$), and we get
\begin{equation}\label{eq:splitintegral}
  \gnorm{K}{2}^2 \leq 2^{m+n} \int d\pv\, d\qv\,
\exp\big(
  - \beta c_\alpha |2\pv^2 + m |^\alpha -  \beta c_\alpha |2\qv^2 + n|^\alpha
\big).
\end{equation}
We notice that in the integral above the dependence on the variables $p,q$ factorizes; first we consider the integral on $d\pv$, in the case $m \geq 1$.

We perform the substitution with polar coordinates $\pv = p \ev(\Omega)$, $d\pv = p^{m-1}dpd\Omega$,
where $\int d\Omega = 2 \pi^{m/2} / \Gamma(\frac{m}{2})$. Since $m \geq 1$, we can drop the dependence on $m$ in the exponent, we get
\begin{multline} \label{eq:gammaint}
 \int d\pv \exp( - \beta c_\alpha |2 \pv^2 + m |^\alpha )
  \leq  \frac{ 2 \pi^{m/2} }{ \Gamma(\frac{m}{2}) } \int_0^\infty dp\, p^{m-1} \exp(-\beta c_\alpha 2^\alpha p^{2 \alpha})
\\
  =  \Big( \frac{\sqrt{\pi/2}}{(\beta c_\alpha)^{1/2\alpha}} \Big)^m \frac{1}{\alpha}
   \frac{\Gamma(m/2\alpha)}{ \Gamma(m/2)}
 \leq d_\alpha^m \beta^{-m/2\alpha} \gamma_\alpha(m)
\end{multline}
where $d_\alpha>0$ is a suitable constant also involving $\sqrt{\pi/2}/ c_{\alpha}^{1/2\alpha}$.

To prove the second equality in the above equation we used the representation of the gamma function: for $z \in \cbb$, $\Gamma(z)=\int_{0}^{\infty}e^{-t}t^{z-1}dt$. By the substitution \\ $t=\beta c_\alpha 2^\alpha p^{2\alpha}$, $dp=dt (1/t)(1/(2\alpha))( t/(\beta c_\alpha 2^\alpha))^{1/2\alpha}$, $p^{m-1}=(t/(\beta c_\alpha 2^\alpha) )^{(m-1)/2\alpha}$, we find
\begin{multline}
 \frac{ 2 \pi^{m/2} }{ \Gamma(\frac{m}{2}) } \int_0^\infty dp\, p^{m-1} \exp(-\beta c_\alpha 2^\alpha p^{2 \alpha})\\
= \frac{ 2 \pi^{m/2} }{ \Gamma(\frac{m}{2}) } \frac{1}{2\alpha}\Big( \frac{1}{\beta c_\alpha 2^\alpha} \Big)^{1/2\alpha}\Big( \frac{1}{2\alpha} \Big)^{(m-1)/2\alpha}\int_0^\infty dt\,t^{\frac{m}{2\alpha}-1}e^{-t}\\
=\Big( \frac{\sqrt{\pi/2}}{(\beta c_\alpha)^{1/2\alpha}} \Big)^m \frac{1}{\alpha}
   \frac{\Gamma(m/2\alpha)}{ \Gamma(m/2)},
\end{multline}
where in the last equality we used that $\int_0^\infty dt\,t^{\frac{m}{2\alpha}-1}e^{-t} = \Gamma(m/2\alpha)$.

We can find an analogous result for the integral on $d\qv$ in the case $n \geq 1$. Multiplying the two results in \eqref{eq:splitintegral} and redefining the constant $d_\alpha$, we find \eqref{L2Kgamma}.

We will compute separately in the cases $m=0$ or $n=0$ the norm $\gnorm{K}{2}$ using the definition \eqref{KexpE}. For example, in the case $n=0$ and $m \neq 0$, we have
\begin{equation}
\int d\thetav \lvert K(\thetav,0)  \rvert^2 = \int d\thetav \exp(-2\beta |E(\thetav)|^\alpha)
\leq 2^m \int d\pv \exp(-2\beta |2\pv^2 + m|^\alpha)
\end{equation}
where in the second inequality we performed the change of variables after \eqref{L2normaK} and we used that $\prod_{i=1}^m (1+p_i^2)^{-1/2} \leq 1$. Using the same argument as before in \eqref{eq:gammaint}, we find
\begin{equation}
\gnorm{K}{2}^2 \leq d_\alpha^{m} \beta^{-m/2\alpha} \gamma_\alpha(m).
\end{equation}
Here it enters the requirement that $\gamma_\alpha(0) := 1$ in \eqref{L2Kgamma}: This is in order to match the result of our direct computation above with equation \eqref{L2Kgamma} in the case $n=0$ and $m\neq 0$.
\end{proof}

Combining the results above we compute estimates on the integral kernels which are more interesting for the example.

\begin{proposition}\label{pro:kprototype}
Let $0 < \alpha < 1$, $m,n \in \nbb_0$, and $k \leq \min(m,n)$. Let $g : \rbb \to \cbb$ be smooth and $c>0$ such that
\begin{equation}\label{eq:gderivest}
   \Big\lvert \frac{\partial^j g}{\partial p^j}(p) \Big\rvert
   \leq c \, \exp(-|p|^\alpha) \quad \text{for all $p \in \rbb$, $0 \leq j \leq k$.}
\end{equation}
Let $K: \rbb^m \times \rbb^n \to \cbb$ be defined by
\begin{equation}
  K(\thetav,\etav) := \frac{g\big(E(\thetav)-E(\etav)\big)}{  \exp(E(\etav)^\alpha)}
 \prod_{j=1}^k \coth\frac{\theta_j-\eta_j \pm i0}{2}.
\end{equation}
Then, there are constants $c'>0$, $d>0$ (depending on $c$, $\alpha$, and $k$, but not on $m$ or $n$) such that
\begin{equation}\label{Knormmtimesn}
   \gnorm{K}{m \times n} \leq c' d^{m+n} \sqrt{\gamma_\alpha(m)\gamma_\alpha(n)}.
\end{equation}
\end{proposition}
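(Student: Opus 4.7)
The plan is to apply in sequence the three preceding results. Define the smooth factor
\[
\hat K(\thetav,\etav) := g\big(E(\thetav)-E(\etav)\big)\exp(-E(\etav)^\alpha),
\]
so that $K = \hat K \cdot \prod_{j=1}^k \coth\frac{\theta_j-\eta_j\pm i0}{2}$. Lemma~\ref{lem:kernelcoth} directly gives $\gnorm{K}{m\times n}\leq (8\pi)^k \sum_{J\subset\{1,\ldots,k\}}\gnorm{\delta_J\hat K}{J}$, so it suffices to control each $\gnorm{\delta_J\hat K}{J}$ via Prop.~\ref{pro:pderivremainder} and then evaluate the resulting $L^2$-type integrals via Lemma~\ref{lem:expquotient}.

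Since $\exp(-E(\etav)^\alpha)$ is $\thetav$-independent and $\partial E(\thetav)/\partial\theta_j = \sinh\theta_j$, the chain rule gives, for $N\subset\{1,\ldots,k\}$,
\[
\frac{\partial^{|N|}\hat K}{\partial\theta_N}(\thetav,\etav)
 = g^{(|N|)}\big(E(\thetav)-E(\etav)\big)\exp(-E(\etav)^\alpha)\prod_{j\in N}\sinh\theta_j.
\]
I choose $\epsilon(\etav)\equiv 1$. Then $|\theta_j-\eta_j|<1$ implies $|\sinh\theta_j|\leq e\cosh\eta_j$, and combining this with hypothesis \eqref{eq:gderivest} yields the pointwise bound (valid for all $N\subset\{1,\ldots,k\}$)
\[
L(\thetav,\etav) := C_k \Big(\prod_{j=1}^k \cosh\eta_j\Big)\exp\big(-|E(\thetav)-E(\etav)|^\alpha - E(\etav)^\alpha\big).
\]
The polynomial factor is absorbed by $\cosh\eta_j \leq C_{\alpha,\delta}\exp(\delta\cosh^\alpha\eta_j)$ together with $\cosh^\alpha\eta_j\leq E(\etav)^\alpha$, so that $\sum_{j=1}^k\cosh^\alpha\eta_j\leq k E(\etav)^\alpha$; choosing $\delta = 1/(2k)$ gives
\[
L(\thetav,\etav) \leq C'_k \exp\big(-|E(\thetav)-E(\etav)|^\alpha - \tfrac{1}{2} E(\etav)^\alpha\big).
\]
Prop.~\ref{pro:pderivremainder} then reduces the problem to bounding $\gnorm{L(\thetav^M,\etav)}{J}$ for each $M\subset J$.

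For this final bound, the remark after Prop.~\ref{pro:pderivremainder} fixes the integration variables: $\theta_j$ for $j\in M\cup\{k<j\leq m\}$ and $\eta_j$ for $j\in J\cup\{k<j\leq n\}$, with supremum taken over $\eta_j$ for $j\in J^c\cap\{1,\ldots,k\}$. The crucial algebraic cancellation is that the $\cosh\eta_j$ with $j\in M^c\cap\{1,\ldots,k\}$ appear with opposite signs in $E(\thetav^M)$ and $E(\etav)$; introducing
$\thetav' := (\theta_j)_{j\in M\cup\{k<j\leq m\}}$ and $\etav' := (\eta_j)_{j\in M\cup\{k<j\leq n\}}$
(of sizes $m' = |M|+m-k$ and $n' = |M|+n-k$), one obtains $E(\thetav^M)-E(\etav) = E(\thetav')-E(\etav')$. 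The supremum affects only $E(\etav)^\alpha$, and applying \eqref{eq:alphaineq1} iteratively gives $E(\etav)^\alpha \geq c_\alpha E(\etav')^\alpha + c'_\alpha \sum_{j\in J\setminus M}\cosh^\alpha\eta_j$, so the integration over $\eta_j$ for $j\in J\setminus M$ factors out as a product of finite one-dimensional integrals $\int_{\rbb}\exp(-c''\cosh^\alpha\eta_j)\, d\eta_j$. What remains is an integral over $\thetav',\etav'$ of precisely the form treated by Lemma~\ref{lem:expquotient}, yielding
\[
\gnorm{L(\thetav^M,\etav)}{J}^2
\leq C_k\, d^{m+n}\, \gamma_\alpha(|M|+m-k)\,\gamma_\alpha(|M|+n-k).
\]
Since $\gamma_\alpha$ is (eventually) monotonically increasing — as one checks from Stirling, $\gamma_\alpha(k+1)/\gamma_\alpha(k) \sim (k/2\alpha)^{1/2\alpha-1/2}\to\infty$ — and $|M|\leq k$, there is a $k$-dependent constant with $\gamma_\alpha(|M|+m-k)\leq C''_k\, \gamma_\alpha(m)$ (and analogously for $n$). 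Summing the $3^k$ contributions from $M\subset J\subset\{1,\ldots,k\}$, together with the $(8\pi)^k$ prefactor from Lemma~\ref{lem:kernelcoth}, yields \eqref{Knormmtimesn}.

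The principal obstacle is the combinatorial bookkeeping in $\gnorm{L(\thetav^M,\etav)}{J}$ — correctly identifying which variables are integrated, supremized, or ``removed'', and exploiting the cancellation $E(\thetav^M)-E(\etav) = E(\thetav')-E(\etav')$ to reorganize the integrand into the form Lemma~\ref{lem:expquotient} handles. A secondary subtlety is the monotonicity of $\gamma_\alpha$, needed to collapse the family of factors $\gamma_\alpha(|M|+m-k)$ into the single $\gamma_\alpha(m)$; this is nontrivial for small $k$ but follows from Stirling asymptotics at the cost of $k$-dependent constants, which are absorbable into $c'$.
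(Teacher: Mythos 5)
There is a genuine gap at the heart of the argument: the choice $\epsilon(\etav)\equiv 1$ does not allow you to verify the hypothesis~\eqref{eq:derivcond} of Prop.~\ref{pro:pderivremainder}, and this step cannot be repaired without shrinking $\epsilon$. Notice that what you actually show is $\lvert\partial^{|N|}\hat K/\partial\theta_N(\thetav,\etav)\rvert\leq L(\thetav,\etav)$, whereas the hypothesis demands $\lvert\partial^{|N|}\hat K/\partial\theta_N(\thetav,\etav)\rvert\leq L(\thetav^{N^c},\etav)$. Since your $L$ contains $\exp(-|E(\thetav)-E(\etav)|^\alpha)$, after substituting $\thetav\to\thetav^{N^c}$ the argument becomes $E(\thetav^{N^c})-E(\etav)$, and the discrepancy $E(\thetav)-E(\thetav^{N^c})=\sum_{j\in N}(\cosh\theta_j-\cosh\eta_j)$ must be controlled. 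With $\epsilon\equiv 1$ this quantity is \emph{not} uniformly bounded: by the mean-value theorem $|\cosh\theta_j-\cosh\eta_j|\leq\sinh(|\eta_j|+1)\sim e\cosh\eta_j$ on the set $|\theta_j-\eta_j|<1$, which blows up for large $|\eta_j|$. Concretely, take $k=m=1$, $n=2$, $N=\{1\}$; then $E(\thetav^{N^c})-E(\etav)=-\cosh\eta_2$, while $E(\thetav)-E(\etav)=\cosh\theta_1-\cosh\eta_1-\cosh\eta_2$ can be made $\approx 0$ simultaneously (choose $\eta_1$ large, $\theta_1\approx\eta_1+1$, and $\cosh\eta_2\approx(e-1)\cosh\eta_1$). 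Then the required inequality $\exp(-|E(\thetav)-E(\etav)|^\alpha)\leq C\exp(-|E(\thetav^{N^c})-E(\etav)|^\alpha)$ reads $1\leq C\exp(-((e-1)\cosh\eta_1)^\alpha)\to 0$, a contradiction.

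The paper avoids exactly this by taking $\epsilon(\etav)=(4\prod_{j=1}^k\cosh\eta_j)^{-1}$, which forces $|\cosh\theta_j-\cosh\eta_j|\leq 1$ on $|\theta_j-\eta_j|<\epsilon(\etav)$ and hence $|E(\thetav)-E(\thetav^{N^c})|\leq k$; the transition to $\thetav^{N^c}$ then costs only a constant $e^{k^\alpha}$ (via the sublinearity bound~\eqref{eq:alphaineq2}). The trade-off is the $\epsilon(\etav)^{-|J|}$ factor in the conclusion of Prop.~\ref{pro:pderivremainder}, which contributes $(\prod_j\cosh\eta_j)^{|J|}$; together with the one already in $L$ this gives $(\prod_j\cosh\eta_j)^{|J|+1}$, absorbed by the stretched-exponential damping $\exp(-\tfrac12 E(\etav)^\alpha)$ exactly as you attempted. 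Your absorption of a single power of $\prod\cosh\eta_j$ is correct in isolation, and the downstream bookkeeping (the cancellation $E(\thetav^M)-E(\etav)=E(\thetav')-E(\etav')$, the splitting of the $\eta_j$, $j\in J\setminus M$ integrations, the invocation of Lemma~\ref{lem:expquotient}, and the monotonicity of $\gamma_\alpha$) is sound in outline. But all of that rests on a verified application of Prop.~\ref{pro:pderivremainder}, and with $\epsilon\equiv 1$ its hypothesis fails, so the proof as written does not go through.
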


\begin{proof}
We have from Lemma \ref{lem:kernelcoth} that
\begin{equation}\label{eq:kandkprime}
\gnorm{K}{m \times n}
=\Biggnorm{\frac{g\big(E(\thetav)-E(\etav)\big)}{  \exp(E(\etav)^\alpha)}
 \prod_{j=1}^k \coth\frac{\theta_j-\eta_j \pm i0}{2}}{m\times n}
\leq
  (8\pi)^{k}
  \sum_{J \subset \{1,\ldots,k\}} \gnorm{\delta_J K'(\thetav^{J} ,\etav)}{J},
\end{equation}
where $K'(\thetav,\etav) = g\big(E(\thetav)-E(\etav)\big) \exp( - E(\etav)^\alpha)$.

Using Proposition~\ref{pro:pderivremainder} we want to estimate $\delta_J K'$. So, we need to check that $K'$ fulfils the hypothesis \eqref{eq:derivcond} of the proposition. So, we need to compute an estimate for the following quantity
\begin{equation}
 \frac{\partial^{|N|}K'}{\partial \theta_N} (\thetav,\etav)
= \frac{\partial^{|N|}g}{\partial \theta_N}\big(E(\thetav)-E(\etav)\big)
  \exp(-E(\etav)^\alpha).
\end{equation}
After computing one by one these derivatives, applying repeatedly the chain rule, $p=E(\thetav)-E(\etav)$, $j \in N$,
\begin{equation}
\frac{\partial K'}{\partial \theta_j} (\thetav,\etav)
= \frac{\partial p}{\partial \theta_j}\frac{\partial g}{\partial p}\big(E(\thetav)-E(\etav)\big)
  \exp(-E(\etav)^\alpha),
\end{equation}
we find that the partial derivatives of $K'$ are
\begin{equation}\label{eq:kprimederiv}
 \frac{\partial^{|N|}K'}{\partial \theta_N} (\thetav,\etav)
= \Big(\prod_{j \in N} \sinh \theta_j\Big)
  \frac{\partial^{|N|}g }{\partial p^{|N|}} \big(E(\thetav)-E(\etav)\big)
  \exp(-E(\etav)^\alpha).
\end{equation}
Using the hypothesis \eqref{eq:gderivest}, we compute the following estimate
\begin{equation}\label{gEstima}
 \begin{aligned}
    \Big\lvert \frac{\partial^{|N|}g }{\partial p^{|N|}} \big(E(\thetav)-E(\etav)\big) \Big\rvert
  &\leq
    c \exp(-|E(\thetav)-E(\etav)|^\alpha)
 \\ &\leq
    c \exp(|E(\thetav)-E(\thetav^{N^c})|^\alpha)  \exp(-c_\alpha|E(\thetav^{N^c})-E(\etav)|^\alpha),
 \end{aligned}
\end{equation}
where in the second inequality we made use of \eqref{eq:alphaineq2} (right inequality): $|E(\thetav)-E(\thetav^{N^c})+E(\thetav^{N^c})-E(\etav)|\geq c_\alpha |E(\thetav^{N^c})-E(\etav)|^\alpha - |E(\thetav)-E(\thetav^{N^c})|^\alpha$.

Now we compute an estimate for $|E(\thetav)-E(\thetav^{N^c})|$; we have
\begin{equation}\label{eq:edifference1}
  |E(\thetav)-E(\thetav^{N^c})| = \lvert \sum_{j \in N}\cosh \theta_j - \sum_{j \in N}\cosh \eta_j  \rvert  \leq \sum_{j \in N} |\cosh \theta_j - \cosh \eta_j|.
\end{equation}
To estimate $\cosh \theta_j - \cosh \eta_j$, we set $\epsilon(\etav):=(4 \prod_{j=1}^k \cosh \eta_j)^{-1}$. If we consider the closed interval $|\theta_j-\eta_j|<\epsilon(\etav)$, then we can apply the mean value theorem, and we have
\begin{equation}
 \begin{aligned}
    \lvert \cosh \theta_j - \cosh \eta_j \rvert
    &=|\sinh\xi|\cdot |\theta_j - \eta_j| \\
    &\leq \epsilon(\etav) \sup \{ \sinh|\xi| : |\xi - \eta_j| < \epsilon(\etav) \} \\
  &\leq \frac{1}{4 \cosh \eta_j} \sup \{ \sinh|\xi| : |\xi - \eta_j| < \epsilon(\etav) \} \\
  &\leq
   \frac{1}{4 \cosh \eta_j} \sinh( |\eta_j| + \epsilon(\etav) )
\\
   &= \frac{1}{4 \cosh \eta_j} (\sinh |\eta_j| \cosh \epsilon(\etav) + \cosh \eta_j \sinh \epsilon(\etav) )
\\
&=\frac{1}{4}\Big( \frac{\sinh |\eta_j|}{\cosh \eta_j}\cosh \epsilon(\etav) + \sinh \epsilon(\etav)  \Big)
\\
  &\leq \frac{1}{4}2 \cosh \epsilon(\etav)= \frac{1}{2} \cosh \epsilon(\etav)\approx \frac{1}{2} \leq 1
 \end{aligned}
\end{equation}
where in the third inequality we used that $\prod_{i=1}^{k}\cosh\eta_i \geq \cosh \eta_j$ and therefore $\epsilon(\etav)=(4 \prod_{i=1}^k \cosh\eta_i)^{-1}\leq (4\cosh\eta_j)^{-1}$. In the fourth inequality we used that $|\xi|-|\eta_j|< |\xi - \eta_j|< \epsilon(\etav)$, which implies $|\xi| < |\eta_j|+\epsilon(\etav)$. In the seventh inequality we used that $\tanh |\eta_j|\leq 1$ and that for small $\epsilon$, $\sinh \epsilon(\etav)\leq  \cosh\epsilon(\etav)$. In the ninth approximation we used that for small $\epsilon$,  $\cosh\epsilon \approx 1$.

As a consequence of the equation above and of \eqref{eq:edifference1}, we have for $N \subset\{1,\ldots,k\}$, and if $|\theta_j-\eta_j|<\epsilon(\etav)$ for all $j\in N$, that
\begin{equation}\label{eq:edifference}
  |E(\thetav)-E(\thetav^{N^c})| \leq \sum_{j \in N} |\cosh \theta_j - \cosh \eta_j| \leq \sum_{j\in N} 1
  \leq k,
\end{equation}
where we used that $|N|\leq k$.

Inserting \eqref{eq:edifference} in \eqref{gEstima}, we get
\begin{equation}
 \begin{aligned}
    \Big\lvert \frac{\partial^{|N|}g }{\partial p^{|N|}} \big(E(\thetav)-E(\etav)\big) \Big\rvert
 &\leq
    c \exp(|E(\thetav)-E(\thetav^{N^c})|^\alpha)  \exp(-c_\alpha|E(\thetav^{N^c})-E(\etav)|^\alpha)
  \\ &\leq
    c \exp(k^\alpha) \exp(-c_\alpha|E(\thetav^{N^c})-E(\etav)|^\alpha) .
 \end{aligned}
\end{equation}
Inserting into \eqref{eq:kprimederiv}, it follows that
\begin{multline}\label{K'Leq}
    \Big\lvert\frac{\partial^{|N|}K' }{\partial \theta_N} (\thetav,\etav) \Big\rvert
    \leq c\, e^k \Big\lvert\prod_{j \in N} \sinh \theta_j\Big\rvert
      \exp(-c_\alpha|E(\thetav^{N^c})-E(\etav)|^\alpha - E(\etav)^\alpha)\\
  \leq c\,(4e)^k \Big(\prod_{j =1}^k \cosh \eta_j\Big)
   \exp(-c_\alpha|E(\thetav)-E(\etav)|^\alpha - E(\etav)^\alpha)
\end{multline}
where in the first inequality we used that for $0<\alpha<1$, $e^{k^\alpha}\leq e^k$, and in the second inequality we  used that $\lvert \prod_{j\in N}\sinh\theta_j\rvert\leq 4^k \prod_{j=1}^{k}\cosh\eta_j$. This last inequality follows from a short computation: if $|\theta-\eta|\leq \epsilon$, then $|\sinh\theta|\leq |\sinh(\eta +\epsilon)| \leq |\sinh \eta \cosh\epsilon| + |\cosh\eta \sinh\epsilon|$; since $|\sinh\eta|\leq \cosh\eta$ and since for small $\epsilon$, $\cosh\epsilon \leq \cosh 1$ and $\sinh \epsilon \leq \sinh 1$, then we have $|\sinh \eta \cosh\epsilon| + |\cosh\eta \sinh\epsilon| \leq \cosh \eta ( \cosh 1 + \sinh 1)$. Since $\cosh 1 + \sinh 1 = e^1 \leq 4$, then $|\sinh\theta|\leq  \cosh \eta ( \cosh 1 + \sinh 1) \leq 4 \cosh \eta$.

In view of Prop.~\ref{pro:pderivremainder} we call
\begin{equation}
   L(\thetav,\etav) := c\,(4e)^k \Big(\prod_{j =1}^k \cosh \eta_j\Big)
   \exp(-c_\alpha|E(\thetav)-E(\etav)|^\alpha - E(\etav)^\alpha),
\end{equation}
and from \eqref{K'Leq} we write
\begin{equation}
\Big\lvert\frac{\partial^{|N|}K' }{\partial \theta_N} (\thetav,\etav) \Big\rvert \leq L(\thetav^{N^c},\etav).
\end{equation}
With this $L$, we can fulfil the hypothesis \eqref{eq:derivcond} of Proposition~\ref{pro:pderivremainder}.

Hence, we can apply Prop.~\ref{pro:pderivremainder}, and using \eqref{L2deltaK}, we have from \eqref{eq:kandkprime},
\begin{multline}\label{eq:mjsum}
 \gnorm{K}{m \times n}
 \leq
 (8\pi)^{k}\sum_{J \subset \{1,\ldots,k\}}8^{|J|}
  \sum_{M \subset J}
  \gnorm{\epsilon(\etav)^{-|J|} L(\thetav^{M} ,\etav)}{J}
 \\
\leq
  (64\pi)^{k}
  \sum_{M \subset J \subset \{1,\ldots,k\}}
  \gnorm{\epsilon(\etav)^{-|J|} L(\thetav^{M} ,\etav)}{J}
\end{multline}
where in the last inequality we used that $|J|\leq k$.

On the right hand side of the above formula we have
\begin{multline}\label{epsilonL}
 \epsilon(\etav)^{-|J|} L(\thetav^{M} ,\etav)
 =4c(4e)^k \prod_{j=1}^{k}(\cosh\eta_j)^{|J|+1}\exp(-c_\alpha |E(\thetav^M)-E(\etav)|^\alpha - E(\etav)^{\alpha})
 \\
 \leq  \underbrace{4c (4 e)^k \prod_{j=1}^k(\cosh\eta_j)^{|J|+1} \exp\Big( -\frac{c_{\alpha}^k}{2} (\cosh\eta_j)^\alpha\Big)}_{\leq c'}
 \exp\big(- \frac{c_\alpha}{2} |E(\thetav^M)-E(\etav)|^\alpha -\frac{1}{2} E(\etav)^{\alpha} \big).
\end{multline}
In the second inequality, we have used that $\frac{1}{2}E(\etav)^{\alpha}\geq \frac{c_\alpha^k}{2}\sum_{j=1}^k\cosh\eta_j$, which can be obtained by repeated application of \eqref{eq:alphaineq1}.

Also, we split $\exp(-E(\etav)^{\alpha})=\exp(-\frac{1}{2} E(\etav)^{\alpha})\exp(-\frac{1}{2} E(\etav)^{\alpha})$; finally, we used that $E(\thetav^M)-E(\etav)=E(\hat \thetav) - E(\hat \etav)$, where $\hat\thetav,\hat\etav$ denote the variables $\theta_j,\eta_j$ with $j \in M$ or $j>k$.

Clearly, $4 e \cosh^{|J|+1}\eta_j \exp(-\frac{c_\alpha^k}{2}\cosh^\alpha \eta_j)$ is bounded by a constant, that we call $c'$; the constant $c'$ might depend on $k$; we have $c'=1$ for $k=0$.

We compute the following estimate for the last factor $\exp(-\frac{1}{2} E(\etav)^{\alpha})$ in the last line of \eqref{epsilonL}. We split $E(\etav)=E(\hat \etav)+E(\check \etav)$. Using \eqref{eq:alphaineq1}, we find
\begin{equation}
  E(\etav)^\alpha \geq c_\alpha E(\hat\etav)^\alpha + c_\alpha E(\check \etav)^\alpha,
\end{equation}
where $\check \etav$ are the remaining components of $\etav$. Therefore, we have
\begin{equation}
\exp(-\frac{1}{2} E(\etav)^{\alpha})\leq -\frac{c_\alpha}{2}E(\hat\etav)^{\alpha}-\frac{c_\alpha}{2}E(\check \etav)^{\alpha}.
\end{equation}
Inserting this into \eqref{epsilonL}, we find
\begin{equation}
 \epsilon(\etav)^{-|J|} L(\thetav^{M} ,\etav)
 \leq  c' \exp\big(- \frac{c_\alpha}{2} |E(\hat\thetav)-E(\hat\etav)|^\alpha -\frac{c_\alpha}{2} E(\hat\etav)^{\alpha} \big)
  \exp (-\frac{c_\alpha}{2} E(\check\etav)^{\alpha} ).
\end{equation}
Taking the supremum over $\eta_j$, $j\in J^c$, affects only the last factor; we obtain
\begin{equation}
\sup_{\eta_j,j\in J^c}\epsilon(\etav)^{-|J|}L(\thetav^M,\etav)\leq c'\exp\Big( -\frac{c_\alpha}{2}|E(\hat\thetav)-E(\hat\etav)|^\alpha - \frac{c_\alpha}{2}E(\hat\etav)^\alpha \Big)\exp\Big( -\frac{c_\alpha}{2}E(\check{\check{\etav}})^\alpha\Big),
\end{equation}
where $\check{\check{\etav}}$ denotes the variables $\eta_j$, $j\in J$ or $j>k$.

Now we apply Lemma~\ref{lem:expquotient} once in the $\hat\thetav,\hat \etav$ variables, and once in $\check{\check{\etav}}$ (with no corresponding $\theta$'s).
Considering that we have $m-k + |M|$ variables $\hat \thetav$, $n-k+|M|$ variables $\hat \etav$, $n-(n-(k-|M|))=k-|M|$ variables $\check{\etav}$ and $k-|M|-(k-|J|)=|J|-|M|$ variables $\check{\check{\etav}}$, we find
\begin{equation}
 \gnorm{\epsilon(\etav)^{-|J|} L(\thetav^{M} ,\etav)}{2}^2
 \leq (c'')^{m+n} \gamma_\alpha(m-k+|M|) \gamma_\alpha(n-k+|M|) \gamma_\alpha(|J|-|M|),
\end{equation}
where $c''$ is some constant which also include the constants $c', c_\alpha, d_\alpha$ to some power $m + n$ (from Lemma~\ref{lem:expquotient}).

We use monotonicity of $\gamma_\alpha$: $\gamma_\alpha(m-(k-|M|))\leq \gamma_\alpha (m)$, $\gamma_{\alpha}(n-(|J|-|M|))\leq \gamma_\alpha(n)$, and
$\gamma_\alpha(|J|-|M|) \leq \gamma_\alpha(k)$. We absorb this $\gamma_\alpha(k)$ into the constant $c''$ (which might depend on $k$):
\begin{equation}
 \gnorm{\epsilon(\etav)^{-|J|} L(\thetav^{M} ,\etav)}{2}^2
 \leq (c''')^{m+n} \gamma_\alpha(m) \gamma_\alpha(n).
\end{equation}
Inserting this into \eqref{eq:mjsum}, we find
\begin{eqnarray}
\gnorm{K}{m\times n} &\leq& (64\pi)^k \sum_{M\subset J \subset \{ 1,\ldots,k \}}(c''')^{(m+n)/2}\sqrt{\gamma_\alpha(m)\gamma_{\alpha}(n)}\nonumber\\
&=& (64\pi)^k(c''')^{(m+n)/2}\sqrt{\gamma_\alpha(m)\gamma_{\alpha}(n)} \Big(\sum_{M\subset J \subset \{ 1,\ldots,k \}} 1\Big) \nonumber\\
&=& (64\pi)^k(c''')^{(m+n)/2}\sqrt{\gamma_\alpha(m)\gamma_{\alpha}(n)} 2^k 2^{|J|}\nonumber\\
&\leq& (64\pi)^k(c''')^{(m+n)/2}\sqrt{\gamma_\alpha(m)\gamma_{\alpha}(n)} 2^k 2^k \nonumber\\
&\leq& c' d^{m+n} \sqrt{\gamma_\alpha(m)\gamma_\alpha(n)},
\end{eqnarray}
where in the second equality we used that $\sum_{M\subset J \subset \{ 1,\ldots,k \}} 1 = 2^k 2^{|J|}$, and in the fourth inequality that $|J|\leq k$. The constants $c',d$ might depend on $k$, but not on $m,n$. Hence, we find our result \eqref{Knormmtimesn}.
\end{proof}

Now we come back to our example of local operators, given by the family of functions $F_{2k+1}$  defined in \eqref{eq:oddexamplePerm}. We define
\begin{equation} \label{eq:fmnex}
 f_{mn}(\thetav,\etav) := F_{m+n} (\thetav, \etav + i \piv - i \zerov)
\end{equation}
with $\thetav\in\rbb^m$, $\etav\in\rbb^n$, $m+n$ odd.

As for the function $g$ that appears in the definition of $F_{2k+1}$, we assume that its Fourier transform $\tilde g(\mu \cdot)$ fulfils the bounds in Eq.~\eqref{eq:gderivest}.

\begin{proposition}\label{proposition:oddfmnnorm}
 Let $f_{mn}$ be as in \eqref{eq:fmnex}, and $\omega(p)=p^\alpha$. Then, $\onorm{f_{mn}}{m \times n}<\infty$ for all $m,n$. If further Conjecture~\ref{conj:Tmbounds} is true, then there are constants $c,d>0$, depending on $n,\alpha$ but not on $m$, such that
\begin{equation}
   \onorm{f_{mn}(\thetav,\etav)}{m \times n} + \onorm{f_{nm}(\thetav,\etav)}{n \times m} \leq c \, d^m \, \sqrt{\gamma_\alpha(m)}.
\end{equation}
\end{proposition}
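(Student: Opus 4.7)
The plan is to expand $T_{m+n}(\thetav,\etav+i\piv-i\zerov)$ via the pairing representation \eqref{eq:TmWithPairs}, regroup the pairings according to their ``cross-pair'' structure, and apply Proposition~\ref{pro:kprototype} to each group, using Conjecture~\ref{conj:Tmbounds} to keep constants independent of $m$. Substituting $\zetav = (\thetav,\etav+i\piv-i\zerov)$ and using the identity $\tanh((z-i\pi+i0)/2) = \coth((z+i0)/2)$ for cross pairs together with the $i\pi$-invariance $\tanh((\eta_\ell+i\pi-\eta_r-i\pi)/2) = \tanh((\eta_\ell-\eta_r)/2)$ for same-side $\eta$--$\eta$ pairs, a sign-tracking regrouping of \eqref{eq:signPdef} yields the decomposition
\begin{equation*}
T_{m+n}(\thetav,\etav+i\piv-i\zerov)
= \sum_{C} \sigma(C)\, T_{m-|C|}(\check\thetav)\, T_{n-|C|}(\check\etav) \prod_{(\ell, r) \in C} \coth\tfrac{\theta_\ell-\eta_r+i0}{2},
\end{equation*}
where the sum runs over all partial matchings $C$ between $\theta$- and $\eta$-indices ($|C|\leq\min(m,n)$), $\check\thetav,\check\etav$ are the unmatched variables, and $\sigma(C) \in \{\pm 1\}$; the identity is readily verified in the base case $m=1, n=2$ and established in general by induction on $|C|$.

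After relabelling $(\theta_{\ell_j}, \eta_{r_j}) \mapsto (\theta_j, \eta_j)$, each summand has exactly the form of Proposition~\ref{pro:kprototype} with the smooth part $\tilde g(\mu(E(\thetav)-E(\etav)))$ multiplied by the bounded smooth factors $h_\theta(\thetav) := T_{m-|C|}(\check\thetav)$ and $h_\eta(\etav) := T_{n-|C|}(\check\etav)$. Inspecting the proofs of Propositions~\ref{pro:pderivremainder} and~\ref{pro:kprototype}, such multiplicative factors can be absorbed into the upper bound $L$ in \eqref{eq:derivcond} with a prefactor built from uniform bounds on $h_\theta, h_\eta$ and their partial derivatives up to order $|C|$. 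Since $T_{m-|C|}, T_{n-|C|}$ are smooth bounded functions for each fixed $m,n$, this generalized Proposition~\ref{pro:kprototype} immediately yields $\onorm{f_{mn}}{m\times n}<\infty$ for all $m, n$, proving the first assertion.

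For the quantitative bound, Conjecture~\ref{conj:Tmbounds} provides the uniform pointwise estimate $\lvert T_{m-|C|}(\check\thetav)\rvert, \lvert T_{n-|C|}(\check\etav)\rvert \leq 1$ independent of $m,n$. Combined with an $m$-uniform estimate for a bounded number of partial derivatives of $T_\ell$ (accessible by exploiting the Pfaffian structure of $T_\ell$ together with analyticity in a strip around $\rbb^\ell$), the generalized Proposition~\ref{pro:kprototype} yields a bound $c'_{|C|} d_{|C|}^{m+n}\sqrt{\gamma_\alpha(m)\gamma_\alpha(n)}$ for each $C$, with constants depending only on $|C|$ and $\alpha$. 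Since the number of partial matchings with $|C| = k$ is $\binom{m}{k}\binom{n}{k}k!\leq(mn)^k$ and $k\leq n$, summing gives
\begin{equation*}
\onorm{f_{mn}}{m\times n} \leq \sum_{k=0}^{n}(mn)^k\,c'_k d_k^{m+n}\sqrt{\gamma_\alpha(m)\gamma_\alpha(n)} \leq c\,d^m\,\sqrt{\gamma_\alpha(m)},
\end{equation*}
with $c, d$ depending on $n$ and $\alpha$ but not on $m$ (the polynomial factor $m^n$ is absorbed into a slightly enlarged $d$). The bound on $\onorm{f_{nm}}{n\times m}$ follows by the symmetric argument with $\thetav\leftrightarrow\etav$.

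The main obstacle is securing the $m$-uniform bounds on finitely many derivatives of $T_\ell$; while Conjecture~\ref{conj:Tmbounds} handles the zeroth derivative, its derivative analogue --- though expected to hold --- is not directly stated and requires a separate argument. A secondary, purely combinatorial concern is the precise sign $\sigma(C)$ in the decomposition, which can be tracked from \eqref{eq:signPdef} but is irrelevant to the final estimates since only $|\sigma(C)| = 1$ enters the triangle inequality.
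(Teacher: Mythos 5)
Your decomposition into coth cross-pairs and residual $T$-factors is exactly the reorganization the paper performs in~\eqref{eq:fmnothersum}, and your use of Prop.~\ref{pro:kprototype} together with the count of partial matchings follows the same overall route.

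However, the ``main obstacle'' you flag --- needing $m$-uniform bounds on the partial derivatives of $T_\ell$ --- is a false alarm, and recognizing that dissolves it. After your relabelling, $T_{m-|C|}(\check\thetav)$ depends only on the \emph{unmatched} variables $\theta_{|C|+1},\ldots,\theta_m$, and $T_{n-|C|}(\check\etav)$ only on $\eta_{|C|+1},\ldots,\eta_n$. But in Prop.~\ref{pro:pderivremainder} the derivatives $\partial^{|N|}/\partial\theta_N$ are taken \emph{only} with respect to indices $N\subset\{1,\dots,k\}$ with $k=|C|$, i.e.\ only with respect to the variables appearing in the $\coth$ factors. Therefore
\begin{equation*}
\frac{\partial^{|N|}}{\partial\theta_N}\Big[\tilde g\big(\mu E(\thetav)-\mu E(\etav)\big)e^{-E(\etav)^\alpha}\,T_{m-|C|}(\check\thetav)\,T_{n-|C|}(\check\etav)\Big]
= T_{m-|C|}(\check\thetav)\,T_{n-|C|}(\check\etav)\,\frac{\partial^{|N|}K'}{\partial\theta_N},
\end{equation*}
and Conjecture~\ref{conj:Tmbounds} supplies the $\leq 1$ bound on the $T$-factors with no derivatives of $T_\ell$ involved at all. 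They simply multiply the function $L$ in~\eqref{eq:derivcond} by a constant $\leq 1$. So the ``derivative analogue of the conjecture'' you call for is not required, and the proof closes without the separate argument you anticipated. Once this is clear, the rest of your argument (the qualitative $\onorm{f_{mn}}{m\times n}<\infty$ using mere boundedness of $T_\ell$, the $m$-independence of the constants under the conjecture, the combinatorial count of partial matchings, and the symmetric treatment of $\onorm{f_{nm}}{n\times m}$) matches the paper's proof.
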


\begin{proof}
Recall from \eqref{eq:FoddWithT} and \eqref{eq:TmWithPairs} that
\begin{equation}
   f_{mn}(\thetav,\etav) = \tilde g(\mu E(\thetav)-\mu E (\etav)) \sum_{P \in \pcal_{m+n}}
  \sign P \prod_{(p,q)\in P} \tanh \frac{\zeta_{p}-\zeta_q}{2},
\end{equation}
where $\zeta_j = \theta_j$ for $j \leq m$, and $\zeta_{j}=\eta_{j-m}+i\pi$ for $j>m$. We reorganize the sum over pairings $P$ in the following way:
Fist we consider the number $\ell$ of pairs which contain one $\theta_i$ and one $\eta_j$, and we sum over $0\leq \ell\leq \min(m,n)$; then we sum over all the possible corresponding pairs at fixed $\ell$; finally we sum over all other pairings of $\theta_i$ with $\theta_j$ and $\eta_i$ with $\eta_j$. In this way, for every choice of such pairs $(p_1,q_1),\ldots, (p_\ell, q_\ell)$, there is a $\beta_{(p_1,q_1),\ldots,(p_\ell,q_\ell)} \in \{+1,-1 \}$, such that:
\begin{multline}\label{eq:fmnothersum}
   f_{mn}(\thetav,\etav) = \tilde g(\mu E(\thetav)-\mu E (\etav)) \sum_{\ell = 0}^{\min(m,n)}
  \sum_{\substack{ (p_1,q_1),\ldots,(p_\ell,q_\ell) \\ 1 \leq p_i \leq m < q_i \leq m+n}}
  \beta_{(p_1,q_1),\ldots,(p_\ell,q_\ell)}
\\ \times
  \Big(\prod_{j=1}^\ell \coth \frac{\theta_{p_j}-\eta_{q_j}+i0}{2}\Big)
  T_{m-\ell}(\hat \thetav) T_{n-\ell}(\hat \etav).
\end{multline}
Here, $\hat\thetav$ denotes all $\theta_j$ with $j$ not in the list $p_1,\ldots,p_\ell$, and $\hat\etav$ denotes all $\eta_j$ with $j$ not in the list $q_1,\ldots,q_\ell$.

To find such $\beta$, we can write the permutation in Eq.~\eqref{eq:signPdef} as the composition of three permutations given for $k$ even by:
\begin{align}
   \sigma_1 &=  \begin{pmatrix} 1 & \ldots & 2k  \\
         1 \hat{\ldots}  m  & m+1 \hat{\ldots} m+n & p_1,q_1\ldots p_\ell,q_\ell \end{pmatrix},
\\
   \sigma_2 &=  \begin{pmatrix}   1 \hat{\ldots}  m  & m+1 \hat{\ldots} m+n & p_1,q_1\ldots p_\ell,q_\ell\\
         p'_1,p''_1 \ldots  p'_{m-\ell},p''_{m-\ell}  & m+1 \hat{\ldots} m+n & p_1,q_1\ldots p_\ell,q_\ell \end{pmatrix},
\\
\label{eq:sigma3def}
   \sigma_3 &=  \begin{pmatrix}   p'_1,p''_1 \ldots  p'_{m-\ell},p''_{m-\ell}  & m+1 \hat{\ldots} m+n & p_1,q_1\ldots p_\ell,q_\ell\\
         p'_1,p''_1 \ldots  p'_{m-\ell},p''_{m-\ell}  & q'_1,q''_1 \ldots q'_{n-\ell},q''_{n-\ell} & p_1,q_1\ldots p_\ell,q_\ell \end{pmatrix}.
\end{align}
Note that the pairing $P$ is given by the last line of \eqref{eq:sigma3def}. Then $\sign P$ is therefore the product
\begin{equation}
 \sign P = \sign  \sigma_1 \cdot \sign\sigma_2 \cdot \sign\sigma_3
\end{equation}
where $\sign \sigma_1 = \beta$, $\sign \sigma_2$ and $\sign \sigma_3$ form the respective signum terms in $T_{m-\ell}$ and $T_{n-\ell}$, see \eqref{eq:TmWithPairs}. The case $k$ odd can be treated similarly.

We consider one term in the sum \eqref{eq:fmnothersum}, multiplied with the exponential $\exp(-E(\etav)^\alpha)$:
\begin{equation}\label{eq:oneterm}
 f(\thetav,\etav) :=  \tilde g(\mu E(\thetav)-\mu E (\etav)) \Big(\prod_{j=1}^\ell \coth \frac{\theta_{j}-\eta_{j}+i0}{2}\Big)
  T_{m-\ell}(\hat \thetav) T_{n-\ell}(\hat \etav) \exp(-E(\etav)^\alpha).
\end{equation}
We want to estimate the norm $\gnorm{f}{m \times n}$. The functions $T_{m-\ell}$ and $T_{n-\ell}$ are bounded and therefore Prop.~\ref{pro:kprototype} yields that $\gnorm{f}{m \times n}<\infty$ and hence $\onorm{f_{mn}}{m \times n}<\infty$. If the Conjecture~\ref{conj:Tmbounds} is true, then we have that  $\gnorm{T_{m-\ell}}{\infty} \leq 1$ and $\gnorm{T_{n-\ell}}{\infty} \leq 1$. Applying Proposition~\ref{pro:kprototype}, we have that $\gnorm{f}{m \times n} \leq c\,d^m\,\gamma_\alpha(m)^{1/2}$, where $d^n$ and $\gamma_\alpha(n)^{1/2}$ are absorbed into the constants $c,d$, that therefore might depend on $n$, $k$ and $\alpha$.

We note that every term of the sum \eqref{eq:fmnothersum} is of the form \eqref{eq:oneterm}, except for renumbering of the variables and $\pm$ signs. So we can apply the same estimate to each of this terms. We need only to count the number of summands in \eqref{eq:fmnothersum}; this can be estimated in the following way: we need to multiply the number $\binom{m}{\ell}$ of possibilities of setting $\ell$ indices within $m$, with the number $\binom{n}{\ell}$ of possibilities of setting $\ell$ indices within $n$, and with the number $\ell !$ of possible exchanges of the pairs. So,
there are at most $\binom{m}{\ell} \binom{n}{\ell} \ell! \leq 2^{m+n}\ell!$ different choices of pairs $(p_1,q_1),\ldots,(p_\ell,q_\ell)$; and $\ell$ takes at most $\ell+1\leq \min(m,n)+1 \leq n+1$ values.

Therefore, we have to multiply the estimate discussed above with such number. We absorb $2\cdot 2^{n}k!$ into some other constants $c',d'$ that hence might depend on $n$, $k$ and $\alpha$ (but not on $m$), we find
\begin{equation}
   \gnorm{f_{mn}(\thetav,\etav) \exp(-E(\etav)^\alpha)}{m \times n} \leq c' \, (d')^m \, \sqrt{\gamma_\alpha(m)}.
\end{equation}
In analogous way, we get
\begin{equation}\label{gnormfgamma}
 \gnorm{f_{mn}(\thetav,\etav) \exp(-E(\thetav)^\alpha)}{m \times n} \leq c'' \, (d'')^m \, \sqrt{\gamma_\alpha(m)}.
\end{equation}
Recalling the definition \eqref{eq:crossnorm2}, this gives the result in Prop.~\ref{proposition:oddfmnnorm}.
\end{proof}

Using our results of Sec.~\ref{sec:elemproperties} and Proposition~\ref{proposition:oddfmnnorm}, we can now prove the following corollary:
\begin{corollary}
$A$ is $\omega$-local in $\ocal_r$.
\end{corollary}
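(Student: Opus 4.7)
The strategy is to assemble the verifications already carried out for the family $(F_{2k+1})$ of \eqref{eq:oddexamplePerm} (extended by $F_{2k}=0$) into a full verification of conditions (F) of Definition~\ref{def:conditionF}, and then to invoke Theorem~\ref{thm:localequiv}(iii) (equivalently Theorem~\ref{theorem:FtoA}), which converts conditions (F) into condition (A), i.e.\ $\omega$-locality of the corresponding Araki series in $\ocal_r$.

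First I would record that Section~\ref{sec:elemproperties} has already established five of the six conditions: \ref{it:fmero} (meromorphy with first-order poles only on the hyperplanes $\zeta_j-\zeta_i=i\pi$), \ref{it:fsymm} (antisymmetry, consistent with $S=-1$), \ref{it:fperiod} ($2\pi i$-periodicity in each variable), \ref{it:frecursion} (via the residue formula of Lemma~\ref{lem:Tmresidues} applied to $T_{2k+1}$), and \ref{it:fboundsimag} (pointwise bounds on $\tube(\ich\gcal_\pm^k)$, obtained from \eqref{eq:Tkbound} combined with \eqref{eq:gtildebound}). The only remaining piece of (F) is \ref{it:fboundsreal}.

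Next I would verify \ref{it:fboundsreal}. For $\ell=0$ and $0\leq j\leq k$, the bound $\onorm{F_k(\cdotarg+i\lambdav^{(k,j)}+i0\nuv^{(k,j)})}{(k-j)\times j}<\infty$ is precisely the \emph{first} (unconditional) assertion of Proposition~\ref{proposition:oddfmnnorm}, applied with $m=k-j$, $n=j$; the stronger bound involving $\gamma_\alpha(m)$, and with it Conjecture~\ref{conj:Tmbounds}, is not required at this stage. For $\ell\neq 0$ the nodes $\lambdav^{(k,j+k\ell)}$ are related to $\lambdav^{(k,j)}$ by translations in $\piv\zbb$; shifting by $2\piv$ amounts to a sign by \ref{it:fperiod} (since $S=-1$), while shifting by $\piv$ is handled by Lemma~\ref{lemma:shiftpi}, which states that $F_k(\cdotarg+i\piv)$ satisfies the same conditions (F) as $F_k$ and thus obeys the same cross-norm bounds. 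Hence every node of $\gcal_1^k$ is covered.

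With all of (F) in hand, Theorem~\ref{thm:localequiv}(iii) directly yields that the quadratic form $A$ built from the $F_{m+n}$ via \eqref{ArakiF} lies in $\qf^\omega$ and fulfils (A), i.e.\ is $\omega$-local in $\ocal_r$. There is no genuine obstacle: all the analytic difficulty sits in Proposition~\ref{proposition:oddfmnnorm} and in the characterization theorem of Chapter~\ref{sec:ftoa}, and the corollary is an assembly step on top of these.
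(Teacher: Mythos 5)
Your proof follows the same route as the paper: assemble the verifications from Sec.~\ref{sec:elemproperties} for \ref{it:fmero}--\ref{it:frecursion} and \ref{it:fboundsimag}, obtain \ref{it:fboundsreal} from Proposition~\ref{proposition:oddfmnnorm} (and you correctly note that only its unconditional first assertion is needed here, not Conjecture~\ref{conj:Tmbounds}), and then invoke Theorem~\ref{thm:localequiv}\ref{thm:converse}. This is exactly what the paper does.

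One remark on your treatment of the $\ell\neq 0$ nodes in \ref{it:fboundsreal}: invoking Lemma~\ref{lemma:shiftpi} here is circular as written, since the hypothesis of that lemma is precisely that $(F_k)$ already fulfils all of (F), including \ref{it:fboundsreal} for every $\ell$. The direct argument is, however, immediate for $S=-1$: with $k=2k'+1$ odd the $S$-factor in \ref{it:fperiod} is $(-1)^{2k'}=1$, so $F_k$ is literally $2\pi i$-periodic in each variable and the norms at nodes shifted by $2\piv$ coincide; and for the single shift by $\piv$ one has $E(\zetav+i\piv)=-E(\zetav)$ and $T_{2k'+1}(\zetav+i\piv)=T_{2k'+1}(\zetav)$, so $F_k(\zetav+i\piv)=(2\pi i)^{-k'}\tilde g(-\mu E(\zetav))T_{2k'+1}(\zetav)$, and the estimates of Proposition~\ref{proposition:oddfmnnorm} apply verbatim because $\tilde g(-\cdot)$ satisfies the same decay bound \eqref{eq:gderivest} as $\tilde g$ (the support of $g$ being the symmetric interval $(-r,r)$). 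This closes the point without appealing to Lemma~\ref{lemma:shiftpi}; the rest of your reasoning is sound.
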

\begin{proof}
By Proposition~\ref{proposition:oddfmnnorm} we have that the functions $F_{2k+1}$ satisfy the condition \ref{it:fboundsreal} with the indicatrix $\omega(p)$ given by\footnote{In the case where $\omega$ grows slower, for example polynomially, this result does not seems to be true any more.} $\omega(p)=p^\alpha$, $0<\alpha<1$.

We have also shown in Sec.~\ref{sec:elemproperties} that the functions $F_{2k+1}$ satisfy the conditions \ref{it:fmero}, \ref{it:fsymm}, \ref{it:fperiod}, \ref{it:frecursion} and \ref{it:fboundsimag} of the theorem. Hence, by Theorem~\ref{thm:localequiv}\ref{thm:converse}, $A$ is $\omega$-local in $\ocal_r$.
\end{proof}

Due to Prop.~\ref{proposition:expansionunique} we have from Prop.~\ref{proposition:oddfmnnorm} that $A = \sum \int \frac{d\thetav d\etav}{m!n!} f_{mn}(\thetav,\etav) z^{\dagger m}(\thetav) z^n(\etav)$ is a well defined quadratic form in $\mathcal{Q}^{\omega}$. Moreover, applying Prop.~\ref{proposition:summable1} we can also show that $A$ extends to a closable operator affiliated with $\A(\ocal_r)$, as the following theorem shows:

\begin{theorem}\label{theorem:oddoperator}
  Let $1/3 < \alpha < 1$. Suppose that Conjecture~\ref{conj:Tmbounds} is true. With $f_{mn}$ as above, the quadratic form
\begin{equation}\label{eq:asumqf}
  A = \sum_{m,n=0}^\infty \int \frac{d\thetav d\etav}{m!n!} f_{mn}(\thetav,\etav) z^{\dagger m}(\thetav) z^n(\etav)
\end{equation}
extends to a closed operator which is affiliated with $\A(\ocal_r)$.
\end{theorem}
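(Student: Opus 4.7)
The plan is to combine Proposition~\ref{proposition:oddfmnnorm}, the summability criterion of Proposition~\ref{proposition:summable1}, and the affiliation criterion of Proposition~\ref{proposition:locality}\ref{it:aising} (which is the version of \ref{it:aclosed} valid for $S=-1$ without the Weyl domain condition \eqref{eq:weyldomaincond}). The preceding corollary already establishes that the quadratic form $A$ in \eqref{eq:asumqf} is a well-defined element of $\qf^\omega$ and is $\omega$-local in $\ocal_r$, so the only new content of the theorem is closability and affiliation.

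First I would insert the bound of Proposition~\ref{proposition:oddfmnnorm} into the summability condition \eqref{eq:summable1}. For each fixed $n$ this reduces the problem to verifying
\begin{equation*}
  \sum_{m=0}^\infty \frac{2^{m/2}}{\sqrt{m!}}\, c \, d^m \sqrt{\gamma_\alpha(m)} < \infty,
\end{equation*}
with $\gamma_\alpha(m)=\Gamma(m/2\alpha)/(\alpha\,\Gamma(m/2))$. I would apply Stirling's formula to the ratio of consecutive terms: using $\Gamma(x+a)/\Gamma(x)\sim x^a$ as $x\to\infty$, one finds
\begin{equation*}
  \frac{\gamma_\alpha(m+1)}{\gamma_\alpha(m)} \sim \Bigl(\frac{m}{2\alpha}\Bigr)^{1/2\alpha}\Bigl(\frac{m}{2}\Bigr)^{-1/2},
\end{equation*}
so that the ratio test yields
\begin{equation*}
  \frac{a_{m+1}}{a_m} \sim \text{const}\cdot m^{\frac{1}{4\alpha}-\frac{3}{4}},\qquad a_m:=\frac{(d\sqrt{2})^m \sqrt{\gamma_\alpha(m)}}{\sqrt{m!}}.
\end{equation*}
The exponent $\tfrac{1}{4\alpha}-\tfrac{3}{4}$ is strictly negative precisely when $\alpha>1/3$, which is exactly our hypothesis; hence $a_{m+1}/a_m\to 0$ and the series converges. (The condition $\alpha<1$ was already needed earlier in Proposition~\ref{pro:schroertruong} to set up the Jaffe-class indicatrix $\omega(p)=p^\alpha$.) This is the only genuine calculation in the proof and is the step I expect to be the main technical input; everything else is a matter of invoking results already established.

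Second, with the summability verified for every $n$, Proposition~\ref{proposition:summable1} applies and yields a closed extension $A^-$ of $A$ having $\fpno$ as a core, together with the inclusion $\fpno\subset\domain (A^-)^\ast$. Because $\omega$-locality is a statement about matrix elements between vectors of $\fpno$, and since $A$ and $A^-$ agree on this common core, the $\omega$-locality of $A$ in $\ocal_r$ (already established via Theorem~\ref{thm:localequiv}\ref{thm:converse} in the preceding corollary) transfers to $A^-$.

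Finally, since we are in the Ising case $S=-1$, Proposition~\ref{proposition:locality}\ref{it:aising} lets us dispense with the condition \eqref{eq:weyldomaincond} on Weyl operators (which would otherwise require us to verify the stronger summability \eqref{eq:summable2}, and would presumably demand a stricter range of $\alpha$). Applying \ref{it:aising} to the closed, densely defined, $\omega$-local operator $A^-$ with core $\fpno\subset\domain(A^-)^\ast$ gives $A^-\affiliated \A(\ocal_r)$, completing the proof. No step requires new estimates beyond Proposition~\ref{proposition:oddfmnnorm}; the whole argument is an assembly of Propositions~\ref{proposition:oddfmnnorm}, \ref{proposition:summable1}, and \ref{proposition:locality}\ref{it:aising}, hinged on the Stirling asymptotics above.
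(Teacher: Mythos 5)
Your argument is correct and matches the paper's proof: both reduce the theorem to the summability condition \eqref{eq:summable1} via Proposition~\ref{proposition:oddfmnnorm}, invoke Proposition~\ref{proposition:summable1} for closability, and then apply Proposition~\ref{proposition:locality}\ref{it:aising} to obtain affiliation in the $S=-1$ case. The only (cosmetic) difference is in the asymptotics: you run the ratio test directly on $a_m = (d\sqrt{2})^m\sqrt{\gamma_\alpha(m)/m!}$ using $\Gamma(x+a)/\Gamma(x)\sim x^a$, while the paper expands $\gamma_\alpha(m)/m!$ via full Stirling approximations before applying the quotient criterion — both lead to the same threshold $\alpha>1/3$.
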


\begin{proof}
We want to show that the following series converges by using the estimate that we computed in Prop.~\ref{proposition:oddfmnnorm}:
\begin{equation}\label{eq:sumtoconverge}
   \sum_{m=0}^\infty \frac{2^{m/2}}{\sqrt{m!}} \Big( \onorm{ f_{mn} }{m \times n} + \onorm{ f_{nm}}{n \times m} \Big)
  \leq c \sum_{m=0}^\infty (\sqrt{2} d)^m \Big(\frac{\gamma_\alpha(m)}{m!}\Big)^{1/2}.
\end{equation}
Using that ($n\in \nbb$) $\Gamma(n)=(n-1)!$ and the Stirling's approximation, $n! \sim \sqrt{2\pi n}\Big( \frac{n}{e}\Big)^n$,  we find for large $m$,
\begin{multline}
  \frac{\gamma_\alpha(m)}{m!} = \frac{\Gamma(m/2 \alpha)}{\alpha m!\, \Gamma(m/2)}
  =\frac{\Big(  \frac{m}{2\alpha}-1\Big)!}{\alpha m! \Big( \frac{m}{2}-1 \Big)!}\\
  \sim \frac{\sqrt{2\pi \Big( \frac{m}{2\alpha}-1 \Big)}\Big(  \frac{\frac{m}{2\alpha}-1}{e} \Big)^{\frac{m}{2\alpha}-1}}{\alpha \sqrt{2\pi m}\Big( \frac{m}{e}\Big)^m \sqrt{2\pi( m/2 -1)}\Big( \frac{m/2 -1}{e} \Big)^{m/2 -1} }\\
 \sim \frac{1}{\sqrt{2\pi m \alpha}} \Big(\sqrt{2} e^{3/2} (2\alpha e)^{-1/2\alpha}\Big)^m  m^{\frac{m}{2}(1/\alpha - 3)}.
\end{multline}
In the case $\alpha > 1/3$, we have that the right hand side of  \eqref{eq:sumtoconverge} converges by application of the quotient criterion.
Therefore we can apply Prop.~\ref{proposition:summable1} and find that $A$ is closable; since moreover we have shown that $A$ is $\omega$-local (see the remark after Eq.~\eqref{gnormfgamma}), we can apply Prop.~\ref{proposition:locality}\ref{it:aising}, and conclude that the closure is affiliated with $\A(\ocal_r)$.
\end{proof}

\section{Local observables for general $S$}

We will indicate in this section how the construction of local operators discussed in Sec.~\ref{sec:evenexample} and Sec.~\ref{sec:oddexample} for $S=-1$, see Eq.~\eqref{eq:oddexamplePerm}, can be generalized to general $S$ matrices. In order to maintain the comparison with the case $S=-1$, we will restrict to the case $S(0)=-1$ rather than $S(0)=+1$.

Note that in the general case we do not expect ``Buchholz-Summers'' type of operators because the recursion relations force the family of functions $F_k$ to be infinite.

The main building block of \eqref{eq:oddexamplePerm} is the function
\begin{equation}
 H_{-1}(\zeta):=\tanh\frac{\zeta}{2},
\end{equation}
which has the properties
\begin{equation}
 H_{-1}(-\zeta)=-H_{-1}(-\zeta),\quad H_{-1}(\zeta +2i\pi)=H_{-1}(\zeta),\quad \res_{\zeta=-i\pi}H_{-1}(\zeta)=2.
\end{equation}
We propose to replace this with the $S$-dependent variant
\begin{equation}
H_{S}(\zeta)=\frac{e^{\zeta/2}+S(-\zeta)e^{-\zeta/2}}{e^{\zeta/2}+e^{-\zeta/2}},
\end{equation}
which has the properties ($S(i\pi)=S(0)=-1$)
\begin{equation}\label{HSproperties}
\begin{gathered}
H_S(-\zeta)=S(+\zeta)H_S(\zeta),\quad H_S(\zeta +2\pi i)=H_S(\zeta), \\ \res_{\zeta=-i\pi}H_S(\zeta)=(e^{-i\pi/2} +S(+i\pi)e^{+i\pi/2})\res_{\zeta= -i\pi}\frac{1}{e^{\zeta/2} + e^{-\zeta/2}}=2.
\end{gathered}
\end{equation}
Then we consider
\begin{equation}
T_{S,2k+1}(\zetav):=\frac{1}{2^k k!}\sum_{\sigma \in \perms{2k+1}}S^\sigma(\zetav)H_S(\zeta_{\sigma(2j-1)}-\zeta_{\sigma(2j)}).
\end{equation}
This function, according to Eq.~\eqref{HSproperties}, is $2\pi i$-periodic in each variable and $S$-symmetric (by construction). Similarly to the case $S=-1$, see Sec.~\ref{sec:elemproperties}, we can rewrite $T_{S,2k+1}$ in the following way:
\begin{equation}
T_{S,2k+1}(\zetav)= \frac{1}{k!}\sum_{P\in \pcal_{2k+1}^{\text{ordered}}}S^P(\zetav)\prod_{(\ell,r)\in P}H_S(\zeta_\ell -\zeta_r).
\end{equation}
Here $\pcal_{2k+1}^{\text{ordered}}$ denotes the set of all \emph{ordered} pairings of $2k+1$ indices and $S^P$ is the factor $S^\sigma$ for the permutation $\sigma$ corresponding to $P$ by Eq.~\eqref{eq:signPdef}.
We have the following lemma:
\begin{lemma}
$T_{S,2k+1}$ has the residua
\begin{equation}
\res_{\zeta_n -\zeta_m=i\pi}T_{S,2k+1}(\zetav)=-2\Big(\prod_{q=m}^{n}S(\zeta_q - \zeta_m)\Big) T_{S,2k-1}(\hat \zetav).
\end{equation}
\end{lemma}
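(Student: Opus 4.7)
The plan is to compute the residue directly from the symmetric-group formula
\[
T_{S,2k+1}(\zetav) = \frac{1}{2^k k!} \sum_{\sigma \in \perms{2k+1}} S^\sigma(\zetav) \prod_{j=1}^k H_S\!\big(\zeta_{\sigma(2j-1)} - \zeta_{\sigma(2j)}\big),
\]
observing that a permutation $\sigma$ contributes to the residue at $\zeta_n - \zeta_m = i\pi$ only if, at some pair position $p$, one has $\{\sigma(2p-1),\sigma(2p)\}=\{m,n\}$: indeed the poles of $S$ lie elsewhere while $H_S$ has simple poles only at $\pm i\pi$ modulo $2\pi i$. Since the full sum is invariant under relabelling the $k$ pairs, I may reduce to $p=1$ at the cost of a factor $k$, and it suffices to handle the two sub-cases $\sigma(1)=m,\sigma(2)=n$ and $\sigma(1)=n,\sigma(2)=m$, with the remaining $2k-1$ indices permuted by $\sigma'\in\perms{2k-1}$ acting on $\{1,\ldots,2k+1\}\setminus\{m,n\}$.

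The central step is to evaluate $S^\sigma(\zetav)$ on the pole hyperplane. Assuming $m<n$ and $\sigma(1)=m,\sigma(2)=n$, the inversions split into those internal to $\sigma'$, contributing $S^{\sigma'}(\hat\zetav)$, and the crossings of positions $1,2$ with $\sigma(j)$ for $j\geq 3$, contributing $\prod_{q<m}S(\zeta_m-\zeta_q)\cdot\prod_{q<n,\,q\neq m}S(\zeta_n-\zeta_q)$. Applying the crossing relations $S(\theta+i\pi)=S(\theta)^{-1}$ from Definition~\ref{def:Smatrix} together with $S(-\theta)=S(\theta)^{-1}$, I obtain on the pole $S(\zeta_n-\zeta_q)=S(\zeta_q-\zeta_m)$, so that the $q<m$ contributions collapse pairwise as $S(\zeta_m-\zeta_q)S(\zeta_q-\zeta_m)=1$ and only $\prod_{m<q<n}S(\zeta_q-\zeta_m)$ survives. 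In the alternative ordering $\sigma(1)=n,\sigma(2)=m$ the same computation goes through with the single additional factor $S(\zeta_n-\zeta_m)=S(i\pi)=-1$ arising from the extra $(1,2)$-inversion.

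Combining this with the residues $\res\, H_S(\zeta_m-\zeta_n)=-2$ and $\res\, H_S(\zeta_n-\zeta_m)=2$ (derived from $\res_{\zeta=-i\pi}H_S(\zeta)=2$ in \eqref{HSproperties} together with $2\pi i$-periodicity), both orderings yield the same contribution
\[
-2\,S^{\sigma'}(\hat\zetav)\prod_{m<q<n}S(\zeta_q-\zeta_m)\prod_{j=2}^k H_S\!\big(\zeta_{\sigma(2j-1)}-\zeta_{\sigma(2j)}\big).
\]
Summing over the $k$ pair positions, the $2$ internal orderings, and recognising $\sum_{\sigma'\in\perms{2k-1}}S^{\sigma'}(\hat\zetav)\prod H_S = 2^{k-1}(k-1)!\,T_{S,2k-1}(\hat\zetav)$ via the analogous expansion of $T_{S,2k-1}$, the combinatorial prefactors collapse against $1/(2^k k!)$ to
\[
\res_{\zeta_n-\zeta_m=i\pi} T_{S,2k+1}(\zetav)=-2\prod_{m<q<n}S(\zeta_q-\zeta_m)\,T_{S,2k-1}(\hat\zetav);
\]
the boundary factors $S(\zeta_m-\zeta_m)=S(0)=-1$ and $S(\zeta_n-\zeta_m)\big\vert_{\mathrm{pole}}=-1$ multiply to unity, so this matches the claimed expression $-2\prod_{q=m}^n S(\zeta_q-\zeta_m)\,T_{S,2k-1}(\hat\zetav)$.

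The main technical obstacle is the crossing-factor bookkeeping in the second paragraph: verifying that the inversions involving positions $1$ and $2$ separate cleanly into a part that cancels pairwise for $q<m$ and a part that survives for $m<q<n$, and that no stray $S$-factors remain absorbed into $S^{\sigma'}$. This can be organised cleanly by using the composition law $S^{\sigma\circ\rho}(\zetav)=S^\sigma(\zetav)S^\rho(\zetav^\sigma)$ from Lemma~\ref{lemma:scpermute} to factor $\sigma$ as the product of a transposition bringing $\{m,n\}$ to positions $1,2$ with the reduced $\sigma'$, and then tracking inversions with \eqref{eq:Sperm} explicitly; an independent check in the simplest nontrivial case $k=1$, $m=1$, $n=3$ already produces the predicted prefactor $S(\zeta_2-\zeta_1)$, lending confidence in the general argument.
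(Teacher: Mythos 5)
Your proposal is correct and reaches the stated formula, but it takes a genuinely different route from the paper. The paper first computes $\res_{\zeta_2-\zeta_1=i\pi}T_{S,2k+1}$ as a base case, where the boundary prefactor is trivially $1$, by decomposing $\sigma_P=\sigma_{P'}\circ\tau$ in the ordered-pairing formulation and checking that $S^\tau(\zetav^{\sigma_{P'}})=1$ on the hypersurface; it then obtains the general $(m,n)$ formula in a second step by $S$-symmetry, which produces the product $\prod_{q=m}^n S(\zeta_q-\zeta_m)$ automatically. You instead attack the general hyperplane $\zeta_n-\zeta_m=i\pi$ directly, tracking all inversions of $\sigma$ and using crossing to collapse the $q<m$ factors pairwise, which is more labour but avoids the separate $S$-symmetry step. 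Both are valid; the paper's decomposition is cleaner because the base case separates the combinatorics from the $S$-factor transport, while yours has the virtue of making the survival of precisely $\prod_{m<q<n}S(\zeta_q-\zeta_m)$ visible by hand.

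One step deserves a warning. The phrase ``since the full sum is invariant under relabelling the $k$ pairs, I may reduce to $p=1$ at the cost of a factor $k$'' is not by itself a justification: the $H_S$ product is indeed blind to the pair label, but $S^\sigma$ is not, so the terms with the contracted pair at position $p$ are not identical to those with it at position $1$. What makes the reduction legal is that for the appropriate ``pair-shift'' permutation $\tau_p$ one has $S^{\tau_p}(\zetav^\sigma)=1$ \emph{on the pole hypersurface}, because the surviving $S$-factors come in pairs $S(\zeta_m-\zeta_q)S(\zeta_n-\zeta_q)$ which cancel by the crossing relation. This is precisely the observation the paper isolates as ``$S^\tau(\zetav^{\sigma_{P'}})=1$ on the hypersurface''; in your write-up it is implicit rather than stated. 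Since you already invoke the composition law of Lemma~\ref{lemma:scpermute} for the bookkeeping in the subsequent paragraph, you have the right tool in hand — but you should record this cancellation explicitly when claiming the factor of $k$, otherwise the step is a gap.
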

\begin{proof}
To prove this, we first compute the residue of $T_{S,2k+1}$ at $\zeta_2 -\zeta_1=i\pi$:
\begin{equation}
\res_{\zeta_2 -\zeta_1=i\pi}T_{S,2k+1}(\zetav)= \frac{1}{k!} \sum_{\substack{P \in \pcal_{2k+1}^{\text{ordered}}\\ (1,2)\in P}}S^{P}(\zetav)\res_{\zeta_2- \zeta_1 = i\pi}H_S(\zeta_1 -\zeta_2)\prod_{(\ell,r)\in P'}H_S(\zeta_\ell -\zeta_r),
\end{equation}
where $P'$ denotes the pairing $P$ with the pair $(1,2)$ left out.

We consider the permutation
\begin{equation}
\sigma_P =  \begin{pmatrix} 1 & 2 & & & &  \ldots & & & & 2k +1  \\
         1 & 2 & \ell_1 & r_1 & \ell_2 & r_2 & \ldots & \widehat{1 \;\; 2} & \ldots & \hat m \\
         \ell_1 & r_1 & \ell_2 & r_2 & & \ldots & 1 \;\; 2 & & \ldots & \hat m
         \end{pmatrix}.
\end{equation}
 We call the permutation from the first to the second line $\sigma_{P'}$, and the permutation from the second to the third line $\tau$. We have $\sigma_P = \sigma_{P'}\circ \tau$ and, correspondingly, $S^{\sigma_P}(\zetav)= S^{\sigma_{P'}}(\zetav)S^{\tau}(\zetav^{\sigma_{P'}})=S^{\sigma_{P'}}(\hat \zetav)S^{\tau}(\zetav^{\sigma_{P'}})$. Note that $S^{\tau}(\zetav^{\sigma_{P'}})=1$ on the hypersurface $\zeta_2 -\zeta_1=i\pi$. We have $S^P(\zetav)=S^{\sigma_P}(\zetav)$ and $S^{P'}(\hat \zetav) =S^{\sigma_{P'}}(\hat \zetav)$. Further, we note that the sum over $P$ contains every $P'$ exactly $k$ times. Therefore after renumbering the components of $\zetav$, we can change the sum to $\smash{\sum_{\substack{P \in \pcal_{2k+1}^{\text{ordered}}\\ (1,2)\in P}}=k \sum_{P' \in \pcal_{2k-1}^{\text{ordered}}}}$. Hence, we arrive at
 \begin{equation}
 \res_{\zeta_2 -\zeta_1=i\pi}T_{S,2k+1}(\zetav)
= -\frac{2}{(k-1)!}\sum_{P' \in \pcal_{2k-1}^{\text{ordered}}}S^{P'}(\hat \zetav)\prod_{(\ell,r)\in P'}H_S(\hat{\zeta}_\ell -\hat{\zeta}_r)=-2 T_{S,2k-1}(\hat \zetav).
\end{equation} 

For the residue at $\zeta_n -\zeta_m =i\pi$ ($m<n$) we find, using $S$-symmetry,
\begin{multline}
\res_{\zeta_n -\zeta_m=i\pi}T_{S,2k+1}(\zetav)=\res_{\zeta_n -\zeta_m=i\pi} \prod_{i=1}^{m-1}S(\zeta_m - \zeta_i)\prod_{\substack{q=1\\q \neq m}}^{n-1}S(\zeta_n -\zeta_q)T_{S,2k+1}(\zeta_m, \zeta_n,\hat \zetav)\\
=\Big(\prod_{i=1}^{m-1}S(\zeta_m - \zeta_i)\prod_{ \substack{q=1 \\ q \neq m}}^{n-1}S(\zeta_n -\zeta_q)\Big)\big\vert_{\zeta_n -\zeta_m=i\pi}\cdot \res_{\zeta_n -\zeta_m=i\pi} T_{S,2k+1}(\zeta_m, \zeta_n,\hat \zetav)\\
=-2\Big(\prod_{q=m}^{n}S(\zeta_q - \zeta_m)\Big) T_{S,2k-1}(\hat \zetav).
\end{multline}
\end{proof}
$T_{S,2k+1}$ is however not $S$-periodic. We propose to fix this problem with an extra factor $M_{S,2k+1}$ which fulfils the following properties:
\begin{align}\label{MScondition1}
M_{S,2k+1}(\zetav^\sigma) &= M_{S,2k+1}(\zetav),\\
\label{MScondition2}
M_{S,2k+1}(\zetav +2\pi i \ev^{(j)}) &= \Big( \prod_{i\neq j}S(\zeta_i -\zeta_j) \Big)M_{S,2k+1}(\zetav),\\
\label{MScondition3}
M_{S,2k+1}(\zetav)\big\vert_{\zeta_n -\zeta_m =i\pi}&= M_{S,2k-1}(\hat\zetav)\cdot \frac{1}{2}\Big( 1-\prod_{j=1}^{2k+1}S(\zeta_m -\zeta_j) \Big)
\end{align}
for all $\sigma \in \perms{2k+1}$ and all $\zetav$.

We will make a remark on the existence of such functions below. Given $M_{S,2k+1}(\zetav)$ with the above properties, we can set with a suitable localized test function $g$,
\begin{equation}
F_{2k+1}(\zetav)=\frac{1}{(2\pi i)^k}M_{S,2k+1}(\zetav)T_{S,2k+1}(\zetav)\tilde{g}(\mu E(\zetav)),
\end{equation}
and this will fulfil all conditions \ref{it:fmero}, \ref{it:fsymm}, \ref{it:fperiod}, \ref{it:frecursion}.
In particular, we find
\begin{multline}
\res_{\zeta_n -\zeta_m =i\pi}F_{2k+1}(\zetav)=\frac{1}{(2\pi i)^k}M_{S,2k-1}(\hat\zetav)\frac{1}{2}\Big(1-\prod_{p=1}^{2k+1}S(\zeta_m -\zeta_p)\Big)\tilde{g}(\mu E(\hat\zetav))\times\\
\times (-2)\Big(\prod_{q=m}^{n}S(\zeta_q - \zeta_m)\Big)T_{S,2k-1}(\hat \zetav)\\
= -\frac{1}{2\pi i}\Big(\prod_{q=m}^{n}S(\zeta_q - \zeta_m)\Big)\Big(1-\prod_{p=1}^{2k+1}S(\zeta_m -\zeta_p)\Big)F_{2k-1}(\hat\zetav).
\end{multline}
Regarding the functions $M_{S,2k+1}$, it would of course be important to construct them explicitly, but we have not found an explicit solution yet. Nevertheless, we can show that such functions exist and therefore that the conditions \eqref{MScondition1}--\eqref{MScondition3} are compatible. Namely, using the results of Lechner \cite{Lechner:2008} and our characterization theorem, Thm.~\ref{thm:localequiv}, we know for a large class of functions $S$ that there exist functions $F_{2k+1}^{\text{Lechner}}$ which fulfil all conditions (F). Given these, we set
\begin{equation}
M_{S,2k+1}(\zetav):= (2\pi i)^k\frac{F_{2k+1}^{\text{Lechner}}(\zetav)}{T_{S,2k+1}(\zetav)}.
\end{equation}
Since the poles of the numerator and of the denominator cancel, and since they are both $S$-symmetric, these $M_{S,2k+1}$ will have the properties \eqref{MScondition1}--\eqref{MScondition3}. Of course, this does not solve completely the construction problem, but it shows that our approach is consistent.

However the main challenge in constructing interacting operators for general $S$ is in verifying the various bounds conditions.

The bounds \ref{it:fboundsimag} should be easy to verify with similar methods as in Sec.~\ref{sec:oddexample}, since they essentially depend on the growth of $\tilde g$ except for slower growing terms.

More difficult is verifying the bounds \ref{it:fboundsreal}, i.e. the question whether
\begin{equation}
\gnorm{F_{2k+1}(\thetav, \etav +i\piv)}{m\times n} < \infty,
\end{equation}
and even more whether the summability conditions of Prop.~\ref{proposition:summable2} are fulfilled. To that end, the estimates of Sec.~\ref{sec:operatorbounddom} need to be generalized and improved. In particular, these are more difficult to show than the hypothesis of Prop.~\ref{proposition:summable1} which we used for the case $S=-1$; the extra condition \eqref{eq:weyldomaincond} needs to be taken into account. In other words, one needs to track the dependence of the constant $c'$ on $k$ in Prop.~\ref{pro:kprototype} well enough to prove summability, which requires a big improvement.

\chapter{Conclusions and outlook}\label{sec:conclusion}

We have established existence and uniqueness of the series expansion \eqref{eq:expansionz} for any quadratic form $A$ in two-dimensional factorizing scattering models. We have given an explicit expression for the expansion coefficients $\cme{m,n}{A}$ in terms of matrix elements of $A$, and analysed their properties (independent of locality) with respect to spacetime symmetry transformation of $A$; of particular interest are the spacetime reflections, which also play an important role in the study of local observables in bounded regions, see Sec~\ref{sec:fshifted}. We discussed how to generalize the expression \eqref{eq:coefffree} for the expansion coefficients in terms of a string of nested commutators, valid in the free field theory, to the factorizing scattering models described by \cite{GrosseLechner:2007}, by defining a ``deformed commutator'' with the notion of warped convolution \cite{BuchholzSummersLechner:2011}.

We investigated the necessary and sufficient conditions on the coefficients $\cme{m,n}{A}$ that make a quadratic form $A$ of a certain ``regularity class'' \emph{$\omega$-local} in a bounded spacetime region (see definitions in Sections~\ref{sec:qforms} and \ref{sec:weaklocality}). These are in particular analyticity properties of the coefficients $\cme{m,n}{A}$, and bounds for their analytic continuation.

Extra conditions on the summability of certain $\omega$-norms of $\cme{m,n}{A}$ (see Sec.~\ref{sec:zgen} for definitions) will imply the extension of the quadratic form to a closed, possibly unbounded, operator.

Further, we showed that a family of functions $F_k$ which satisfies the conditions Def.~\ref{def:conditionF} for the characterization of the $\omega$-local quadratic forms, and the condition of Prop.~\ref{proposition:summable1} for the extension of the quadratic form to a closed operator, inserted in \eqref{ArakiF}, yields an operator \emph{affiliated} with the local algebra of bounded operators, see Prop.~\ref{proposition:locality}.

Finally, we used these conditions to construct concrete examples of local observables in the case $S=-1$ in Chapter~\ref{sec:localexamples}.

This construction applies to two dimensional scattering models with particle spectrum described by one kind of particle, scalar, massive and without charge. However, one can generalize it to models with a richer particle spectrum, see for example \cite{LechnerSchuetzenhofer:2012}. This would give a more formal complication to the general setting, for example the scattering function would be a matrix-valued function, rather than scalar-valued; but the expansion \eqref{eq:expansionz} and the characterization of locality of $A$ in terms of properties of the expansion coefficients would remain essentially the same.

We have shown that the expansion \eqref{eq:expansionz} is related to the deformation methods applied in quantum field theory, and in particular to the notion of warped convolution, see for example \cite{GrosseLechner:2007,Lechner:2011}, \cite{BuchholzSummers:2008,BuchholzSummersLechner:2011}. These methods can be applied to any theories with arbitrary spacetime dimensions with the purpose of constructing interacting models of Buchholz-Summers type \cite{BuchholzSummers:2008,BuchholzSummersLechner:2011}. This would suggest the possibility to generalize the expansion \eqref{eq:expansionz} and our analysis to arbitrary spacetime dimensions using techniques of Appendix~\ref{sec:warped}. On a formal level, we would get an expansion in a basis which depends on a deformation parameter $Q$. However, in higher dimensions there is a much larger choice for $Q$ with each $Q$ corresponding to a wedge-region in Minkowski space. One could think to follow the same characterization programme as for $1+1$ dimensional models in theories with higher spacetime dimensions. In higher dimensions double cones are intersection of more than two, in fact infinitely many, wedges, and each of these wedge localizations would give an analyticity condition on the coefficients $\cme{m,n}{A}$ with details to be determined. However, in line with the expectations of the authors
\cite{BuchholzSummers:2007,BuchholzSummersLechner:2011}, one will possibly find that these conditions on the holomorphic functions are so strong that they are fulfilled only by constant functions, and therefore the set of local observables contains only multiples of the identity operator. This may lead to a \emph{no-go} theorem in the class of models described by \cite{BuchholzSummersLechner:2011}.

The expansion \eqref{eq:expansionz} is not only useful for the characterization of local operators in two dimensional factorizing scattering models, but also to analyse the pointlike field structure of these theories, using techniques as in \cite{Bos:short_distance_structure}. Here one would consider the expansion \eqref{eq:expansionz} and write the coefficient functions $\cme{m,n}{A}$ in a series expansion which is adapted to the short distance limit:
\begin{equation}
\cme{m,n}{A}(\thetav,\etav)=\sum_{k=0}^{\infty}c_{kmn}^{[A]}g_{mnk}(\thetav,\etav).\label{coeffpointlike}
\end{equation}
In the free field case, this is a Taylor expansion in momentum variables and the $g_{mnk}$ are polynomials in momentum components, or, in other words, hyperbolic polynomials in rapidity space.

In the factorizing scattering models, $g_{mnk}$ need to be chosen so that they fulfil conditions similar to Def.~\ref{def:conditionF} for radius $r=0$, details regarding the bounds need to be determined: $3+1$ dimensional free field theory can serve as a guidance, since the $g_{mnk}$ are explicitly known there \cite{Bos:short_distance_structure,BostelmannDAntoniMorsella:2010}. In the case $S=-1$, an example of one basis element can be found from Sec.~\ref{sec:oddexample} by formally setting $\tilde g = 1$:
\begin{equation}
h_{2\ell +1}(\zetav)=\frac{1}{(4 \pi i)^{\ell} \ell !}\sum_{\sigma \in \perms{2\ell +1}} \sign \sigma \prod_{j=1}^{\ell} \tanh \frac{ \zeta_{\sigma(2j-1)} - \zeta_{\sigma(2j)}}{2}, \quad \ell \in \nbb_0,
\end{equation}
and setting
\begin{equation}
g_{mn,1}(\thetav,\etav)=\begin{cases}
  h_{m+n}(\thetav,\etav+i\piv) \quad & \text{if $m+n$ is odd},
  \\ 0 & \text{if $m+n$ is even}.
  \end{cases}
\end{equation}
Inserting \eqref{coeffpointlike} in the expansion \eqref{eq:expansionz}, one finds
\begin{equation}
A=\sum_{m,n,k}\frac{1}{m!n!}c_{kmn}^{[A]}\int d\thetav d\etav\, g_{mnk}(\thetav,\etav){\zd}^{m}(\thetav)z^{n}(\etav).
\end{equation}
After reorganizing the sum and the terms in the expression above, one should arrive at an expansion of the form
\begin{equation}
A= \sum_{\ell,k}{c'}^{[A]}_{k,\ell}\phi_{k,\ell}(0).
\end{equation}
Here $\phi_{k,\ell}$ are pointlike localized objects at the origin as a consequence of the analyticity conditions fulfilled by $g_{mnk}$. Note also that $\phi_{k,\ell}$ are independent of $A$.

In this way, one would be able to determine all the interacting pointlike fields of the theory, and would have shown that every operator $A$ can be expanded in their terms. In this sense, all the local observables would be known up to approximation.

Further, note that we can apply the expansion \eqref{eq:expansionz} only to theories where the scattering function $S$ has no poles on the physical strip; in particular, the results of Lechner \cite{Lechner:2008} are valid only in this situation. However, the operator expansion as such (Thm.~\ref{theorem:arakiexpansion}) does not require an analytic continuation of $S$, and therefore can in principle be extended beyond theories where the scattering function is restricted by this condition. For these theories, on the other hand, the Hilbert space as defined in
Sec.~\ref{sec:hilbertspace} is not suitable to allow local operators. So, the Hilbert space needs to be extended in order to include extra states, so called ``bound states'', and Thm.~\ref{theorem:arakiexpansion} needs to be generalized to this extended Hilbert space.

Certainly, an important task in our programme would be to exhibit a concrete example of local operator for a general scattering function $S$. We have presented in Chapter~\ref{sec:localexamples} an approach for finding functions $F_k$ which might yield an example of this type, without having verified all the conditions (F) established in Def.~\ref{def:conditionF} and Prop.~\ref{proposition:summable1}. These $F_k$ are derived as a natural generalization of the examples for $S=-1$ discussed in Chapter~\ref{sec:localexamples}. A further significant step would be to complete the proof of the conditions (F) and to show closability in the general case.

Finally, we would like to emphasize one important message of this thesis: namely that the examples in Chapter~\ref{sec:localexamples} suggests that the expansion Eq.~\eqref{eq:expansionz} in terms of wedge-local objects is more efficient than in terms of local asymptotic free fields (the usual form factor program); in our examples bounds on the coefficients can be exploited to ensure convergence of the expansion series and to establish ($\omega$-) locality, avoiding uncontrolled infinite sums as in usual FFP.

\appendix

\chapter{Warped convolution}\label{sec:warped}

In the free field theory, corresponding to the case $S=1$, and where the Zamolodchikov operators $\zd$ and $z$ are the usual Bose annihilation and creation operators $\ad$ and $a$, we can express explicitly the coefficients $\cme{m,n}{A}$ of the Araki expansion in terms of a string of nested commutators, namely as:
\begin{equation}\label{eq:fmnfree}
 \cme{m,n}{A}(\thetav,\etav) = \bighscalar{\Omega}{[\ldots [a(\theta_1),[ \ldots [a(\theta_m),A],\ad(\eta_n)]\ldots ,\ad(\eta_1)] \Omega  }.
\end{equation}
We can verify this formula by direct computation in the case $A=a^{\dagger m'} a^{n'}(f)$, by using repeatedly the relations of the CCR algebra. Then, we have that this formula holds for all quadratic forms $A$ by expressing the quadratic forms with the Araki expansion and by using linearity.

We can extend this kind of formula \eqref{eq:fmnfree} to other examples. For example, in the case of the Ising model, corresponding to the case $S=-1$, and where the Zamolodchikov creation and annihilation operators fulfil the CAR algebra, we can define a graded commutator $[\cdotarg, \cdotarg]_g$; note that the graded commutator between even operators is equal to the commutator and between odd operators is equal to the anticommutator (where even and odd operators are defined with respect to the adjoint action of $(-1)^N$). Then, using this  graded commutator, we can write the Araki coefficients in analogous way as in \eqref{eq:fmnfree}, using the following formula:
\begin{equation}\label{eq:fmnising}
 \cme{m,n}{A}(\thetav,\etav) = \bighscalar{\Omega}{[\ldots [z(\theta_1),[ \ldots [z(\theta_m),A]_g,\zd(\eta_n)]_g\ldots ,\zd(\eta_1)]_g \Omega  },
\end{equation}
As before, we can prove this formula by computing explicitly the commutators in the case $A=z^{\dagger m'} z^{n'}(f)$, using repeatedly the relations of the CAR algebra. Then, we can extend this formula to all quadratic forms $A$ by using the Araki expansion and by using linearity.

We would like to generalize this kind of expressions for the Araki coefficients to more general models. Here we will try to make this generalization to the family of models obtained by Buchholz, Summers and Lechner in \cite{BuchholzSummers:2008,BuchholzSummersLechner:2011}, using the \emph{warped convolution} construction.

In this construction, one starts from a given quantum field theory and deforms the algebras of observables, and in this way constructs a new theory.
This deformation uses as a deformation parameter a skew symmetric matrix $Q$; this deformation is equivalent to a Rieffel deformation \cite{Rieffel:1993} with respect to the action of the translation group (see \cite[Lemma 2.1(i),Eq.~(2.2)]{BuchholzSummersLechner:2011}); moreover, we can alternatively interpret the deformed theory in terms of a quantum field theory on noncommutative space-time \cite{GrosseLechner:2007}. Buchholz, Summers and Lechner in \cite{BuchholzSummersLechner:2011} wanted to apply this deformation to a general and possibly interacting quantum field theory, in particular in 2+1 and in more space-time dimensions. However, in our case, we start from a 1+1 dimensional free field theory. Then we know from~\cite{GrosseLechner:2007} that the deformed theory that one obtains is equivalent to an integrable model with a certain simple type of scattering function $S$. We will see explicitly below this equivalence.

We want to define for this particular class of models, a ``deformed commutator'' $[\cdotarg, \cdotarg]_Q$, so that we can write an analogue of the formula \eqref{eq:fmnfree} also in the case of this class of models.

First, we introduce some notation and preliminaries. In this section, $\hcal$ and related spaces are associated with the free field $S=1$. We also consider only the case where $\omega=0$ and hence we drop the superscript $\omega$ from all objects that we will consider.

Now, following the conventions in \cite{BuchholzSummersLechner:2011}, we introduce some spaces of ``smooth'' operators and quadratic forms. We start with the operators of $\boundedops$ and we consider $x \mapsto A(x)$ using the adjoint action of translations. Then, we introduce the seminorms $A \mapsto \gnorm{ \partial^\kappa A}{}$, where $\partial^\kappa$ with a multi-index $\kappa$ are partial derivatives with respect to the action of space-time translations. We call $\ccal^\infty$ the subalgebra of $\boundedops$ consisting of ``norm-smooth'' operators, namely operators such that $\gnorm{ \partial^\kappa A}{} < \infty$ for all multi-indices $\kappa$. We equip $\ccal^\infty$ with the usual Fr\'echet topology, namely the topology given by the seminorms $A \mapsto \gnorm{ \partial^\kappa A}{}$.

Correspondingly, we also consider the space $\qf^\infty \subset \qf$ of quadratic forms $A$ that fulfil $\fpnp_k A \fpnp_k \in \ccal^\infty$ for all $k$.

We also consider the subspace $\fcal^\infty$ of $\qf^\infty$ defined as follows: For $A \in \fcal^\infty$ and any $k \in \nbb$, there exists $k' \in \nbb$ so that $\fpnp_{k'} A \fpnp_k=A \fpnp_k$ and  $\fpnp_{k} A \fpnp_{k'}=\fpnp_{k} A$. An equivalent way to formulate this definition is to say: For $A \in \fcal^\infty$ and any $k \in \nbb$, there exists $k' \in \nbb$ so that $A \fpnp_k \in \ccal^\infty$, $A\st \fpnp_k \in \ccal^\infty$, $A \fpnp_k\hcal \subset \fpnp_{k'}\hcal$, and $A\st \fpnp_k\hcal \subset\fpnp_{k'} \hcal$.
Then we say that $\qf^\infty$ is a bimodule over $\fcal^\infty$. We note that $\fcal^\infty \subset \qf^\infty$, $\ccal^\infty\subset\qf^\infty$, but $\ccal^\infty \not \subset \fcal^\infty$.

Moreover, we consider the so called $\fcal^\infty$-valued distributions on $\rbb^m$: They are linear maps $\dcal(\rbb^m) \to \fcal^\infty$, $f \mapsto A(f)$, such that for any $k$, the number $k'$ above can be chosen independent of $f$, and such that the maps $f \mapsto A(f) \fpnp_k$ and $f \mapsto \fpnp_k  A(f)$ are continuous in the Fr\'echet topology. We also have that products of $\fcal^\infty$-valued distributions in independent variables are again $\fcal^\infty$-valued distributions. We will write as usual these distributions in terms of their formal kernels, $A(f) = \int A(\thetav) f(\thetav) d\thetav$.

Considering the (anti)unitary representation $U$ of the proper Poincar\'e group, we want to show that if an operator $A$ is an element of $\ccal^\infty$, $\qf^\infty$, $\fcal^\infty$, and of $\fcal^\infty$-valued distributions, then also the operator transformed by the adjoint action of $U$, $UAU^\ast$, is an element of these spaces, respectively.

For $\ccal^\infty$: The first order derivative of $UAU^\ast$ is given by $[P_\mu,UAU^\ast]=U[P_\mu,A]U^\ast$, where $P_\mu$, $\mu=0,1$, is the momentum operator. Then $\gnorm{[P_\mu,UAU^\ast]}{}=\gnorm{U[P_\mu,A]U^\ast}{}=\gnorm{\partial A}{}<\infty$. Similarly, the second order derivative of $UAU^\ast$ is given by the multi-commutator $U[P_\mu ,[P_\kappa , A]]U^\ast$; calling $B:=[P_\kappa , A]$, we can use the result before and conclude that $\gnorm{\partial^2 (UAU^\ast)}{}<\infty$. The same apply to higher order derivatives of $UAU^\ast$.

For $\qf^\infty$: $U A U^\ast \in \qf^\infty$ if and only if we can show that $Q_k U A U^\ast Q_k \in \ccal^\infty$. But this follows from the fact that $Q_k U A U^\ast Q_k = U Q_k A Q_k U^\ast$ (uses that $U$ commutes with the particle number operator), and the fact that $Q_k A Q_k \in \ccal^\infty$ (since $A \in \qf^\infty$).

For $\fcal^\infty$: Since $A\in \fcal^\infty$, then $AQ_k \in \ccal^\infty$; this implies $UAU^\ast Q_k \in \ccal^\infty$, since $U$ commutes with the particle number operator. Hence, $UAU^\ast \in \fcal^\infty$.

For $\fcal^\infty$-valued distributions: It was proved before that $UA(f)U^\ast \in \fcal^\infty$. It remains to show that the map $f \mapsto UA(f)U^\ast Q_k = UA(f)Q_kU^\ast$ (uses that $U$ commutes with the particle number operator) is continuous. For this, we consider the map $f \mapsto A(f)Q_k \mapsto UA(f)Q_k U^\ast$. Since $A$ is a $\fcal^\infty$-valued distribution, then $f \mapsto A(f)Q_k$ is continuous; call $B:=A(f)Q_k$, we have $\gnorm{\partial^\mu (U B U^\ast)}{}= \gnorm{U(\partial^\mu B)U^\ast}{}=\gnorm{\partial^\mu B}{}$. This implies that the map $B \mapsto UBU^\ast$ is continuous. Therefore, $f \mapsto UA(f)U^\ast Q_k$ is a continuous map. Similarly, we can show that the map $f \mapsto Q_kUA(f)U^\ast$ is also continuous. Therefore, $UAU^\ast$ is a $\fcal^\infty$-valued distribution.

In particular here we are interested in the action of the translations operators $U(x):=U(x,0)$.

We say that an $\fcal^\infty$-valued distribution $A$ is \emph{homogeneous} if there is a smooth function $\varphi_A:\rbb^m \to \rbb^2$ such that
\begin{equation}\label{homogeneous}
\forall x \in \rbb^2: \quad U(x)A(\thetav)U(x)\st=e^{i\varphi_A(\thetav)\cdot x}A(\thetav).
\end{equation}
We call $\varphi_A$ the \emph{momentum transfer} of $A$. If $A(\thetav),B(\etav)$ are both homogeneous, then also $A(\thetav)B(\etav)$ is homogeneous, and has momentum transfer $\varphi_{AB}(\thetav,\etav)=\varphi_A(\thetav)+\varphi_B(\etav)$.
There are some important examples of homogeneous distributions: $\ad(\theta)$, $a(\eta)$, and $a^{\dagger m}a^{n}(\thetav,\etav)$, which have momentum transfer $p(\theta)$, $-p(\eta)$, and $p(\thetav)-p(\etav)$, respectively; other examples are their deformed versions, that we will consider below.

Now we introduce the warped convolution. We denote with $dE(p)$ the (joint) spectral measure of the momentum operator, and we denote with $Q$ a skew symmetric $2\times 2$ matrix. The warped convolution $\tau_Q$ of an operator $A$ is defined by
\begin{equation}
\tau_{Q}(A) := \int U(Qp) A U(Qp)\st\;dE(p)=\int dE(p) U(Qp) A U(Qp)\st.
\end{equation}
Note that we must take this integral with care, since the integrand has constant norm. However, Buchholz, Lechner and Summers managed in \cite{BuchholzSummersLechner:2011} to define it in the case where $A$ are smooth operators and in the sense of an oscillatory integral, and to give a bijective map $\tau_Q:\ccal^\infty \to \ccal^\infty$.

We need to extend this map to our space of quadratic forms, and in order to obtain this we will use the projectors $\fpnp_k$.

Let $A \in \ccal^\infty$, since the $\fpnp_k$ commute with $U(x)$, we can write,
\begin{equation}\label{eq:qkcommute}
   \tau_Q(A \fpnp_k) = \tau_Q(A) \fpnp_k, \quad \tau_Q(\fpnp_k A ) = \fpnp_k \tau_Q(A).
\end{equation}
Using this, we can extend $\tau_Q$ to quadratic forms $A \in \qcal^\infty$: Let $\psi,\chi\in\fpn$ we define,
\begin{equation}\label{eq:qkcompat}
   \hscalar{\psi}{ \tau_Q(A) \chi} := \hscalar{\psi}{ \tau_Q(\fpnp_k A \fpnp_k) \chi}
\end{equation}
 where $k$ is chosen large enough for $\psi,\chi$. Indeed, for $k$ large, the expression on the right hand side becomes independent of $k$: If $\psi, \chi \in Q_m \hcal$ and if $k\geq m$, then
\begin{equation}\label{kindepm}
\hscalar{\psi}{ \tau_Q(\fpnp_k A \fpnp_k) \chi}=\hscalar{\psi}{ \fpnp_m\tau_Q(\fpnp_k A \fpnp_k)\fpnp_m \chi}=\hscalar{\psi}{ \tau_Q(\fpnp_m A \fpnp_m) \chi},
\end{equation}
where in the second equality we applied \eqref{eq:qkcommute} since the operator $\fpnp_k A \fpnp_k$ is bounded; moreover we used that if $k\geq m$, then $\fpnp_m \fpnp_k = \fpnp_m$.

Now we want to show that the relations \eqref{eq:qkcommute}, which hold for operators on $\ccal^\infty$, hold also for all $A \in \qcal^\infty$: For $A\in \qf^\infty$, $k \in \mathbb{N}$, the right hand side of the first relation in \eqref{eq:qkcommute} gives
\begin{equation}
\hscalar{\psi}{ \tau_Q(A) Q_k \chi} = \hscalar{\psi}{ \tau_Q(Q_\ell A Q_\ell) Q_k \chi}=  \hscalar{\psi}{ \tau_Q(Q_\ell A Q_k Q_\ell) \chi}=\hscalar{\psi}{ \tau_Q(A Q_k ) \chi},
\end{equation}
where in the first equality we made use of \eqref{eq:qkcompat} with $\ell$ large; in the second equality we applied \eqref{eq:qkcommute} since the operator $Q_\ell A Q_\ell$ is bounded. In the third equality we used again \eqref{eq:qkcompat} with $\ell$ large. Analogously for the second relation in \eqref{eq:qkcommute}.

Note that $A Q_k \in \qf^\infty$ because $Q_k \in \fcal^\infty$ and $A\in \qf^\infty$. Indeed, in general one has for $A\in \qf^\infty$ and $B\in \fcal^\infty$, that $AB \in \qf^\infty$, $BA \in \qf^\infty$ (i.e. $\qf^\infty$ is a bimodule over $\fcal^\infty$). The proof of this statement works as follows: $AB \in \qf^\infty$ if we can show that $Q_k AB Q_k \in \ccal^\infty$. Since $B \in \fcal^\infty$, then $Q_k AB Q_k = Q_k A Q_{k'}BQ_k$, where $Q_k A Q_{k'}\in \ccal^\infty$ and $BQ_k \in \ccal^\infty$. Now, it was already shown in \cite{BuchholzSummersLechner:2011} that for $C,D \in \ccal^\infty$, then $CD \in \ccal^\infty$. Analogously for the product $BA$.

We present the most important properties of the map $\tau_Q$ in the following proposition, which is mostly due to H.Bostelmann:

\begin{proposition}\label{proposition:tauq}
  For any skew symmetric matrices $Q,Q'$, we have:

\begin{enumerate}
  \renewcommand{\theenumi}{(\roman{enumi})}
  \renewcommand{\labelenumi}{\theenumi}
  \item\label{it:warpedcontinuous}
    $\tau_Q: \ccal^\infty \to \ccal^\infty$ is continuous.
  \item\label{it:warpedcompose}
    $\tau_Q \tau_{Q'} = \tau_{Q+Q'}$, $\tau_0 = \id$, $\tau_Q^{-1} = \tau_{-Q}$.
  \item\label{it:warpedtranslate}
    $\tau_Q(U(x) A) = U(x) \tau_Q(A)$, $\tau_Q(A U(x)) = \tau_Q(A)U(x)$ for any $x \in \rbb^2$ and $A \in \qf^\infty$.
  \item\label{it:warpediso}
    $\tau_Q: \ccal^\infty \to \ccal^\infty$, $\tau_Q:\fcal^\infty\to\fcal^\infty$, $\tau_Q:\qcal^\infty \to \qcal^\infty$ are $\ast$-preserving vector space isomorphisms.
  \item\label{it:warpeddistribution}
    If $A$ is an $\fcal^\infty$-valued distribution, then $\tau_Q(A) : f \mapsto \tau_Q(A(f))$ is an $\fcal^\infty$-valued distribution as well.
    If $A$ is homogeneous, then so is $\tau_Q(A)$, with the same momentum transfer as $A$.
\end{enumerate}
\end{proposition}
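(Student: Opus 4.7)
The strategy is to lift the known properties of $\tau_Q$ on $\ccal^\infty$, established by Buchholz, Lechner and Summers, to the larger spaces $\qcal^\infty$, $\fcal^\infty$, and the $\fcal^\infty$-valued distributions, using the projector-based definition \eqref{eq:qkcompat} as the bridge. I would first dispose of \ref{it:warpedcontinuous}, which is essentially the content of the oscillatory-integral construction in \cite{BuchholzSummersLechner:2011}: the seminorms $A \mapsto \gnorm{\partial^\kappa \tau_Q(A)}{}$ can be estimated in terms of the seminorms of $A$ by commuting derivatives past $U(Qp)$ and $U(Qp)\st$ (which act trivially in that representation because $P_\mu$ generates the translations appearing there), so that $\partial^\kappa \tau_Q(A) = \tau_Q(\partial^\kappa A)$.

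For \ref{it:warpedtranslate} I would observe that $U(x)$ commutes with both $U(Qp)$ and with $dE(p)$, so on $\ccal^\infty$ the formulas $\tau_Q(U(x)A) = U(x)\tau_Q(A)$ and $\tau_Q(AU(x)) = \tau_Q(A)U(x)$ are immediate from the integral representation. To pass to $A \in \qcal^\infty$, use \eqref{eq:qkcompat}: for $\psi,\chi \in \fpn$ of particle number at most $m$ and for $k \geq m$,
\begin{equation*}
  \hscalar{\psi}{\tau_Q(U(x)A)\chi} = \hscalar{\psi}{\tau_Q(\fpnp_k U(x) A \fpnp_k)\chi} = \hscalar{\psi}{U(x)\tau_Q(\fpnp_k A \fpnp_k)\chi},
\end{equation*}
where the second equality uses the $\ccal^\infty$-case together with the fact that $U(x)$ commutes with $\fpnp_k$. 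For \ref{it:warpedcompose}, the identity $\tau_Q\tau_{Q'}=\tau_{Q+Q'}$ is the Rieffel-type composition law already known on $\ccal^\infty$; I would extend it to $\qcal^\infty$ by applying both sides to $\fpnp_k A \fpnp_k$, which is bounded and smooth, and then removing the projectors as in \eqref{kindepm}. The identities $\tau_0=\id$ and $\tau_Q^{-1}=\tau_{-Q}$ then follow as consequences.

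Property \ref{it:warpediso} combines the previous items: $\tau_Q$ is linear by construction, $\ast$-preservation on $\ccal^\infty$ is a standard check (the spectral measure is real and $U(Qp)\st = U(-Qp)$), and both properties descend to $\qcal^\infty$ via \eqref{eq:qkcompat}. To verify that $\tau_Q$ maps $\fcal^\infty$ into itself, I would use \eqref{eq:qkcommute}: if $A\in\fcal^\infty$ and $A\fpnp_k = \fpnp_{k'}A\fpnp_k$, then
\begin{equation*}
  \tau_Q(A)\fpnp_k = \tau_Q(A\fpnp_k) = \tau_Q(\fpnp_{k'}A\fpnp_k) = \fpnp_{k'}\tau_Q(A)\fpnp_k,
\end{equation*}
and analogously on the left, so $\tau_Q(A)$ satisfies the defining $\fcal^\infty$-condition with the same $k'$. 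Bijectivity on each of the three spaces follows from \ref{it:warpedcompose}. For \ref{it:warpeddistribution}, continuity of $f\mapsto \tau_Q(A(f))\fpnp_k = \tau_Q(A(f)\fpnp_k)$ in the Fr\'echet topology reduces, by \ref{it:warpedcontinuous}, to continuity of $f\mapsto A(f)\fpnp_k$, which holds by hypothesis; the same argument works on the left. Preservation of homogeneity with the \emph{same} momentum transfer is essentially a computation: for homogeneous $A(\thetav)$ of momentum transfer $\varphi_A(\thetav)$, one checks from the definition that $U(x)\tau_Q(A(\thetav))U(x)\st = \tau_Q(U(x)A(\thetav)U(x)\st) = e^{i\varphi_A(\thetav)\cdot x}\tau_Q(A(\thetav))$ using \ref{it:warpedtranslate}.

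The main technical obstacle I anticipate is making sure that the projector-based definition \eqref{eq:qkcompat} is genuinely well-defined and independent of the auxiliary $k$ when $A$ lies only in $\qcal^\infty$ (so that $\fpnp_k A \fpnp_k$, while bounded, has norm growing in $k$); this is what \eqref{kindepm} is designed to address, and every step above must be justified by first truncating to bounded smooth operators and then letting the truncation disappear against finite-particle-number test vectors. Once that bookkeeping is set up carefully, each individual property reduces either to the $\ccal^\infty$-result or to the fact that $U(x)$, $\fpnp_k$, and $dE(p)$ all mutually commute.
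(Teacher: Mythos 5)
Your proposal follows essentially the same route as the paper: establish each property on $\ccal^\infty$ by appealing to \cite{BuchholzSummersLechner:2011}, then lift to $\qcal^\infty$, $\fcal^\infty$ and the $\fcal^\infty$-valued distributions by sandwiching with the commuting projectors $\fpnp_k$ as in \eqref{eq:qkcompat} and \eqref{eq:qkcommute}. One small caveat on \ref{it:warpedcontinuous}: the intertwining $\partial^\kappa\tau_Q = \tau_Q\,\partial^\kappa$ (really a consequence of \ref{it:warpedtranslate}) does not by itself yield Fr\'echet continuity; the essential extra input is the Rieffel-type bound controlling $\gnorm{\tau_Q(B)}{}$ by finitely many seminorms of $B$, which the paper extracts by writing $\tau_Q=\pi_Q\circ\imath$ and invoking \cite[Lemma~7.2]{Rieffel:1993} --- your citation of BSL covers this, but the stated mechanism (commuting derivatives alone) would not.
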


\begin{proof}

For part~\ref{it:warpedcontinuous}: this part can be proved similarly to \cite[Prop.~2.7(ii)]{BuchholzSummersLechner:2011}: in their notation, they considered the inclusion $\imath$ of $\ccal^\infty$, equipped with the Fr\'echet topology induced by $\gnorm{\cdotarg}{}$, into itself equipped with the Fr\'echet topology induced by $\gnorm{\cdotarg}{Q}$; it was shown in \cite[Lemma~7.2]{Rieffel:1993} that this map is continuous. In turn, they showed that the map $\pi_Q$ is norm-preserving between $\gnorm{\cdotarg}{Q}$ and $\gnorm{\cdotarg}{}$ and hence it also intertwines the associated Fr\'echet topologies. In our notation, we have that $\tau_Q = \pi_Q \circ \imath$; so from the properties of the maps $\pi_Q$ and $\imath$, it follows that the map $\tau_Q$ is continuous.

For part~\ref{it:warpedcompose}: the relation $\tau_Q \tau_{Q'} = \tau_{Q+Q'}$ was shown for $A \in \ccal^\infty$ in \cite[Prop.~2.11]{BuchholzSummersLechner:2011}. To extend it to $\qcal^\infty$, we need to show for $A \in \qcal^\infty$, that
\begin{equation}
\langle \psi, \tau_Q \tau_{Q'}(A)\chi \rangle= \langle \psi, \tau_{Q + Q'}(A)\chi \rangle.
\end{equation}
The left hand side gives:
\begin{eqnarray}
\langle \psi, \tau_Q(\tau_{Q'}(A))\chi \rangle &=& \langle \psi, \tau_Q(Q_k \tau_{Q'}(A)Q_k)\chi \rangle\nonumber\\
&=&\langle \psi, \tau_Q (\tau_{Q'}( Q_k A Q_k)) \chi \rangle \nonumber\\
&=& \langle \psi, \tau_{Q+Q'}(Q_k A Q_k)\chi \rangle \nonumber\\
&=& \langle \psi, \tau_{Q+Q'}(A) \chi \rangle,
\end{eqnarray}
where in the first equality we made use of \eqref{eq:qkcompat}, where in the second equality we used \eqref{eq:qkcommute}, where in the third equality we applied \cite[Prop.~2.11]{BuchholzSummersLechner:2011} since $Q_k A Q_k \in \ccal^\infty$. In the fourth equality we made use of \eqref{eq:qkcompat} again.

The equality $\tau_0 = \id$ can be proved on $\ccal^\infty$ using \cite[Eq.~(2.4)]{BuchholzSummersLechner:2011} and setting $Q=0$. We can extend it to $\qf^\infty$ by computing:
\begin{equation}
\langle \psi, \tau_0(A)\chi \rangle = \langle \psi, \tau_0(Q_k A Q_k)\chi \rangle =\langle \psi, Q_k A Q_k\chi \rangle =\langle \psi, A \chi \rangle,
\end{equation}
where in the first equality we used \eqref{eq:qkcompat}. In the second equality we used the relation $\tau_0 =\id$ on $\ccal^\infty$, since $Q_k A Q_k \in\ccal^\infty$. In the third equality we used \eqref{eq:qkcompat} again, assuming that $k$ is large.

The relation $\tau_Q^{-1} = \tau_{-Q}$ is a direct consequence of $\tau_Q \tau_{Q'} = \tau_{Q+Q'}$ and $\tau_0 =\id$: From $\tau_Q \tau_{Q'} = \tau_{Q+Q'}$ we have $\tau_Q \tau_{-Q}=\tau_{Q-Q}= \tau_0$; inserting $\tau_0 =\id$, we find $\tau_Q \tau_{-Q}= \id$.

Part~\ref{it:warpedtranslate}: these relations can be obtained for the case $A \in \ccal^{\infty}$ by explicit computation, e.g., from \cite[Eq.~(2.4)]{BuchholzSummersLechner:2011}. The result for $A \in \qf^\infty$ can be obtained as follows:
\begin{eqnarray}
\langle \psi, \tau_Q (U(x)A)\chi \rangle &=& \langle \psi, \tau_Q (Q_k U(x) A Q_k) \chi \rangle \nonumber\\
&=& \langle \psi, \tau_Q (U(x)Q_k A Q_k) \chi \rangle\nonumber\\
&=& \langle \psi, U(x) \tau_Q(Q_k A Q_k)\chi \rangle \nonumber\\
&=& \langle \psi, U(x) \tau_Q(A) \chi \rangle,
\end{eqnarray}
where in the first equality we used \eqref{eq:qkcompat}, where in the second equality we used that $U(x)$ commutes with $Q_k$. In the third equality we used the corresponding result for $\ccal^\infty$ obtained above, since $Q_k A Q_k \in \ccal^\infty$. In the fourth equality we used \eqref{eq:qkcompat} again. The second relation in \ref{it:warpedtranslate} follows analogously.

For \ref{it:warpediso}: a map is a vector space isomorphisms if it is linear and bijective. Moreover, it is $\ast$-preserving if $\tau_Q(A^\ast)=\tau_Q(A)^\ast$.

For the case $\ccal^\infty$: the inclusion $\tau_Q(\ccal^\infty)\subset \ccal^\infty$ was already shown in \ref{it:warpedcontinuous}. The map $\tau_Q$ is linear, bijective and $\ast$-preserving as a consequence of \cite[Prop.~2.7(ii)]{BuchholzSummersLechner:2011}, \cite[Lemma~2.2(ii)]{BuchholzSummersLechner:2011} and \cite[Remark after Def.~2.3]{BuchholzSummersLechner:2011}.

For the case $\qf^\infty$, we first need to show that for $A \in \qf^\infty$, we have $Q_k \tau_Q(A)Q_k \in \ccal^\infty$: If $A\in \qf^\infty$, then $Q_k \tau_Q(A)Q_k = \tau_Q( Q_k A Q_k)$ by \eqref{eq:qkcommute}, and $Q_kAQ_k \in \ccal^\infty$. Hence, by part~\ref{it:warpedcontinuous}, $\tau_Q(Q_k A Q_k)\in \ccal^\infty$. That  implies $Q_k \tau_Q(A)Q_k \in \ccal^\infty$, therefore $\tau_Q(A)\in \qf^\infty$.

As for linearity: In \eqref{eq:qkcompat}, we know from the result above for $\ccal^\infty$ that $\tau_Q( Q_k A Q_k)$ is linear in $A$, since $Q_k A Q_k \in \ccal^\infty$. Therefore $\tau_Q$ is linear on $\qf^\infty$.

The map $\tau_Q$ is bijective on $\qf^\infty$ as a consequence of relation $\tau_{Q}^{-1}=\tau_{-Q}$ in part~\ref{it:warpedcompose}.

Finally, $\tau_Q$ is $\ast$-preserving on $\qf^\infty$ for the following reason: In \eqref{eq:qkcompat}, we know from the result above for $\ccal^\infty$ that $\tau_Q( Q_k A Q_k)$ is $\ast$-preserving in $A$, since $Q_k A Q_k \in \ccal^\infty$. Therefore, $\tau_Q$ is $\ast$-preserving on $\qf^\infty$.

Now, it remains to show that $\tau_Q$ is a linear, bijective, $\ast$-preserving map from $\fcal^\infty$ to $\fcal^\infty$. For the inclusion $\tau_Q(\fcal^\infty)\subset \fcal^\infty$, we compute
\begin{equation}\label{inclfcal}
\tau_Q(A)Q_k = \tau_Q(AQ_k)=\tau_Q(Q_{k'}AQ_k)= Q_{k'}\tau_Q(A)Q_k,
\end{equation}
where in the first equality we used \eqref{eq:qkcommute}, since $A\in \fcal^\infty$ is in particular an element of $\qf^\infty$. In the second equality we used that $A \in \fcal^\infty$, and therefore that by definition $Q_{k'}AQ_k=AQ_k$ for some $k'$. In the third equality we made use of \eqref{eq:qkcommute} again.
Similarly, we show that $\tau_Q(A^\ast)Q_k = Q_{k'}\tau_Q(A^\ast)Q_k$. This implies $\tau_Q(A)\in \fcal^\infty$.

The linearity of $\tau_Q$ on $\fcal^\infty$ is a consequence of the linearity of the map on $\qf^\infty$ shown above.

The map $\tau_Q$ is also bijective by part~\ref{it:warpedcompose}, where we proved the relation $\tau_Q^{-1} = \tau_{-Q}$: we only need to show that $\tau_Q^{-1}:\fcal^\infty \rightarrow \fcal^\infty$; this can be done using $\tau_Q^{-1}=\tau_{-Q}$ and \eqref{inclfcal} with $-Q$ in place of $Q$.

$\tau_Q$ is also $\ast$-preserving on $\fcal^\infty$ because it is a $\ast$-preserving map on $\qf^\infty$, as shown above.

For \ref{it:warpeddistribution}: First of all, $\tau_Q(A(f)) \in \fcal^\infty$ by \ref{it:warpediso}. This implies that $\tau_Q(A(f))Q_k \in \ccal^\infty$. We need to show that $f\mapsto \tau_Q(A(f))Q_k$ is a continuous map in the $\ccal^\infty$-topology: Since $A(f)\in \fcal^\infty$ is in particular an element of $\qf^\infty$, then by an application of \eqref{eq:qkcommute}, we have $\tau_Q(A(f))Q_k = \tau_Q(A(f)Q_k)$. Now we consider the map $f\mapsto A(f)Q_k \mapsto \tau_Q(A(f)Q_k)$. Since $A$ is an $\fcal^\infty$-valued distribution, the map $f\mapsto A(f)Q_k$ is continuous by definition. Also, the map $A(f)Q_k \mapsto \tau_Q(A(f)Q_k)$, from $\ccal^\infty$ to $\ccal^\infty$, is continuous due to \ref{it:warpedcontinuous}. Hence, we have that $f \mapsto \tau_Q(A(f)Q_k)$ is continuous. By \eqref{eq:qkcommute}, this implies that the map $f\mapsto  \tau_Q(A(f)) \fpnp_k$ is continuous as well. With a similar argument we show that the map $f \mapsto \fpnp_k \tau_Q(A(f))$ is also continuous.

Since $A$ is an $\fcal^\infty$-valued distribution, we have $Q_{k'}A(f)Q_k=A(f)Q_k$ and $Q_{k}A(f)Q_{k'}=Q_k A(f)$, where $k'$ depends on $k$ but not on $f$; then we want to show that $Q_{k'}\tau_Q(A(f))Q_k=\tau_Q(A(f))Q_k$, where $k'$ is large enough and where the choice of $k'$ does not depend on $f$; similarly for $Q_k\tau_Q(A(f))$. For this, we compute:
\begin{equation}
Q_{k'}\tau_Q(A(f))Q_k = \tau_Q(Q_{k'}A(f)Q_k)=\tau_Q(A(f)Q_k)=\tau_Q(A(f))Q_k,
\end{equation}
where the first and the last equalities are due to an application of \eqref{eq:qkcommute}. In the second equality we used that for $k'$ large enough we can apply \eqref{eq:qkcompat}; here note also that the choice of $k'$ does not depend on $f$, since $A$ is an $\fcal^\infty$-valued distribution.

Hence, we can conclude that $f \mapsto  \tau_Q(A(f))$ is a well-defined $\fcal^\infty$-valued distribution.

Finally, we prove the homogeneity of $\tau_Q(A)$ as follows:
\begin{equation}
U(x)\tau_Q(A(f))U(x)\st = \tau_Q( U(x)A(f)U(x)\st)=\tau_Q(A(e^{i\varphi_A(\cdot)x}f)),
\end{equation}
where in the first equality we applied \ref{it:warpedtranslate}, and where in the second equality we used the homogeneity of $A$.
\end{proof}

 We note that $\tau_Q$ is a vector space isomorphism, i.e. it is a linear ($\tau_Q(A + B)=\tau_Q(A)+\tau_Q(B)$), bijective and $\ast$-preserving map between vector spaces, but it does not preserve the operator product: $\tau_Q(A \cdot B) \neq  \tau_Q(A)\cdot \tau_Q(B)$. On the contrary, it deforms the operator product in the sense given by Lemma~\ref{lemma:homogdeformprod}.

The following lemma describes explicitly the action of $\tau_Q$ on homogeneous distributions:
\begin{lemma}\label{lemma:homogdeformprod}
If $A,B$ are two homogeneous $\fcal^\infty$-valued distribution on $\rbb^m$, then, in the sense of formal kernels,
\begin{equation}\label{eq:homogdeformprod}
\tau_{Q}(A(\thetav)) \tau_{Q}(B(\thetav)) = e^{i\varphi_A(\thetav)Q\varphi_B(\etav)}\tau_{Q}(A(\thetav)B(\etav)).
\end{equation}
\end{lemma}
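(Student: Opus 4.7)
The plan is to reduce the formal‐kernel identity \eqref{eq:homogdeformprod} to a direct computation using the definition $\tau_Q(A)=\int U(Qp)AU(Qp)^\ast\,dE(p)$ combined with the homogeneity property \eqref{homogeneous}. First I would argue that the product $A(\thetav)B(\etav)$ is again a well-defined $\fcal^\infty$-valued distribution in the independent variables $(\thetav,\etav)$, which was noted in the general remarks on $\fcal^\infty$-valued distributions; hence by Prop.~\ref{proposition:tauq}\ref{it:warpeddistribution} both sides of \eqref{eq:homogdeformprod} make sense as $\fcal^\infty$-valued distributions on $\rbb^{2m}$, and it suffices to compare their values on test functions of product form $f(\thetav)g(\etav)$.

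The key step is the following calculation performed at the level of kernels (and justified rigorously via the oscillatory integral of Rieffel/BLS when applied to $Q_{k}\,\cdot\,Q_{k}$ with $k$ sufficiently large so that the operator is in $\ccal^\infty$). Starting from
\begin{equation*}
\tau_Q(A(\thetav))=\int U(Qp)\,A(\thetav)\,U(Qp)^\ast\,dE(p),
\end{equation*}
homogeneity gives $U(Qp)A(\thetav)U(Qp)^\ast=e^{i\varphi_A(\thetav)\cdot Qp}A(\thetav)$, so that
\begin{equation*}
\tau_Q(A(\thetav))=A(\thetav)\int e^{i\varphi_A(\thetav)\cdot Qp}\,dE(p)=A(\thetav)\,U(-Q\varphi_A(\thetav)),
\end{equation*}
where the last equality uses the spectral theorem $\int e^{ix\cdot p}dE(p)=U(x)$ together with $Q^{T}=-Q$. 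Multiplying this by the corresponding expression for $\tau_Q(B(\etav))$ and moving $U(-Q\varphi_A(\thetav))$ past $B(\etav)$ via homogeneity of $B$ yields
\begin{equation*}
\tau_Q(A(\thetav))\tau_Q(B(\etav))=e^{-i\varphi_B(\etav)\cdot Q\varphi_A(\thetav)}A(\thetav)B(\etav)\,U\!\bigl(-Q(\varphi_A(\thetav)+\varphi_B(\etav))\bigr).
\end{equation*}
On the other hand the product $A(\thetav)B(\etav)$ is homogeneous with momentum transfer $\varphi_A(\thetav)+\varphi_B(\etav)$, so the same computation applied to it gives
\begin{equation*}
\tau_Q(A(\thetav)B(\etav))=A(\thetav)B(\etav)\,U\!\bigl(-Q(\varphi_A(\thetav)+\varphi_B(\etav))\bigr).
\end{equation*}
Comparing the two and using skew‐symmetry of $Q$ to rewrite $-\varphi_B(\etav)\cdot Q\varphi_A(\thetav)=\varphi_A(\thetav)Q\varphi_B(\etav)$ produces exactly \eqref{eq:homogdeformprod}.

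The main obstacle I anticipate is the rigorous justification of the formal step in which $A(\thetav)$ is pulled outside the spectral integral against $dE(p)$: in the oscillatory-integral definition of $\tau_Q$ used in \cite{BuchholzSummersLechner:2011} one works with $\int dp\,dx\,e^{-ipx}U(Qp)AU(x)$, not with the naive product $A\cdot\int e^{i\varphi\cdot Qp}dE(p)$, and the interchange is meaningful only after smearing with test functions and cutting down with projections $Q_k$ so that everything lives in $\ccal^\infty$. I would therefore carry out the argument first for smeared homogeneous operators $A(f_1)Q_k$, $B(f_2)Q_k$ obtained from product test functions, using Prop.~\ref{proposition:tauq}\ref{it:warpediso} and \ref{it:warpedtranslate} to justify each manipulation on $\ccal^\infty$, and then extend to the $\fcal^\infty$-valued distributional identity by the compatibility relation \eqref{eq:qkcompat} and the fact that the number $k'$ controlling the support in particle number is uniform in the test function. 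A clean way to avoid the oscillatory integral altogether is to observe that homogeneity reduces $\tau_Q$ on the operator $A(\thetav)$ (smeared against any $f$) to right multiplication by the unitary $U(-Q\varphi_A(\thetav))$, which is a bona fide bounded operator, so the identity becomes an elementary commutation statement that only uses \eqref{homogeneous} and the group law of $U$.
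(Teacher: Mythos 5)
Your proof is correct, and for the second half of the argument it takes a genuinely different and cleaner route than the paper. The paper's proof first establishes the vector-valued relation $\tau_Q(C(\xiv))\rvector{}{\etav}=e^{i\varphi_C(\xiv)Qp(\etav)}C(\xiv)\rvector{}{\etav}$ (its Eq.~\eqref{eq:QCvector}) via a delta-sequence computation with the BLS oscillatory integral, notes that $\hscalar{\lvector{}{\thetav'}}{C(\xiv)\rvector{}{\etav'}}$ is supported on the hypersurface $p(\thetav')-p(\etav')=\varphi_C(\xiv)$, and then evaluates both sides of \eqref{eq:homogdeformprod} as matrix elements between $\lvector{}{\thetav'}$ and $\rvector{}{\etav'}$, using that support constraint to convert $\varphi_A Qp(\thetav')+\varphi_B Qp(\etav')$ into $\varphi_AQ\varphi_B+p(\thetav')Qp(\etav')$. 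You instead repackage \eqref{eq:QCvector} as the operator factorization $\tau_Q(A(\thetav))=A(\thetav)\,U(-Q\varphi_A(\thetav))$ (check: $U(-Q\varphi_A)\rvector{}{\etav}=e^{-ip(\etav)\cdot Q\varphi_A}\rvector{}{\etav}=e^{i\varphi_AQp(\etav)}\rvector{}{\etav}$, so this is exactly \eqref{eq:QCvector} tested on the dense set $\{\rvector{}{\etav}\}$), and then derive the product formula by moving $U(-Q\varphi_A(\thetav))$ past $B(\etav)$ with homogeneity and the group law. This bypasses the support-constraint bookkeeping entirely and makes the algebraic structure transparent; it also explains conceptually why the twist factor is $e^{i\varphi_A Q\varphi_B}$ rather than something depending on the external momenta. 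The one place you cannot dispense with the paper's work is exactly the point you flag yourself: the claim at the end that one can ``avoid the oscillatory integral altogether'' by observing the factorization is circular, since establishing $\tau_Q(A(\thetav))=A(\thetav)\,U(-Q\varphi_A(\thetav))$ against the BLS definition of $\tau_Q$ on $\qf^\infty$ is precisely what the paper's delta-sequence step accomplishes. So your argument cleanly replaces the paper's second half, but the first half (proving \eqref{eq:QCvector} or the equivalent factorization) is still needed in essentially the paper's form.
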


\begin{proof}
We start with some remarks about a general homogeneous $\fcal^\infty$-valued distribution with kernel $C(\xiv)$.

First, we will show using the basic properties of the warped convolution that the following equation \cite[Eq.~(2.4)]{BuchholzSummersLechner:2011} holds in the sense of distributions:
 \begin{equation}\label{eq:QCvector}
    \tau_Q(C(\xiv)) \rvector{}{\etav} = e^{i \varphi_C(\xiv) Q p(\etav) } C(\xiv) \rvector{}{\etav}.
 \end{equation}
To prove this equation, we consider test functions $f\in\dcal(\rbb^m)$, $g\in\dcal(\rbb^n)$. We choose $h_1,h_2\in\scal(\rbb^2)$ such that $h_1=1$ on a neighbourhood of 0, $h_2(0)=1$, and such that $\tilde h_2$ has compact support. Moreover, we consider $\rvector{}{g}$ to be smooth with respect to translations. Then, under these assumptions, we can apply \cite[Eq.~(2.4)]{BuchholzSummersLechner:2011} and we find,
\begin{equation}
  \tau_Q(C(f)) \rvector{}{g} = \lim_{\epsilon \to 0} (2\pi)^{-2} \iint dx\,dy\, h_1(\epsilon x) h_2(\epsilon y) e^{-ix \cdot y} U(Qx) C(f) U(Qx)^{-1} U(y) \rvector{}{g}.
\end{equation}
Using homogeneity \eqref{homogeneous}, we have
\begin{equation}\label{eq:crlimit}
  \tau_Q(C(f)) \rvector{}{g} = \lim_{\epsilon \to 0} (2\pi)^{-2} \iint dx\,dy\, h_1(\epsilon x) h_2(\epsilon y) e^{-ix \cdot y} C(e^{i \varphi_C(\cdotarg) Qx}f) \rvector{}{e^{i p(\cdotarg) y}g}.
\end{equation}
We call
\begin{equation}\label{defFepsilon}
\begin{aligned}
 F_\epsilon (\thetav,\etav) &= (2\pi)^{-2} \iint dx\,dy\, h_1(\epsilon x) h_2(\epsilon y) e^{i \varphi_C(\thetav)Qx + i (p(\etav)-x) \cdot y}f(\thetav)g(\etav)
\\
 &= \int dx\,  h_1(\epsilon x) \frac{1}{2\pi\epsilon} \tilde h_2\big(\epsilon^{-1}(p(\etav)-x)\big) \,
   e^{i\varphi_C(\thetav)Qx} f(\thetav)g(\etav).
\end{aligned}
\end{equation}
Note that $F_\epsilon\in\dcal(\rbb^{m+n})$ (this follows by computing explicitly the derivatives of $F_\epsilon$ in \eqref{defFepsilon}, taking into account that $f,g$ are smooth and with compact support and that $\varphi_C$ is also smooth) and that it plays the role of test function for the vector valued distributions $C\rvector{}{\cdotarg}$; hence we can write \eqref{eq:crlimit} as
\begin{equation}
 \tau_Q(C(f)) \rvector{}{g} = \lim_{\epsilon \to 0}C\rvector{}{F_\epsilon}.
\end{equation}
In \eqref{defFepsilon}, note that since $\tilde h_2$ has compact support, it restricts the integral to a compact set; moreover, for sufficiently small $\epsilon$, we can replace $h_1(\epsilon x)$ with $1$. We also note that for $\epsilon \to 0$, $\frac{1}{2\pi\epsilon} \tilde h_2(\epsilon^{-1} \cdotarg)$ is a delta sequence. By all these considerations, we find for $\epsilon \to 0$,
\begin{equation}
 F_\epsilon (\thetav,\etav) \to e^{ip(\thetav) Q p(\etav)} f(\thetav)g(\etav)\quad \text{in } \dcal(\rbb^{m+n}).
\end{equation}
Inserting this into \eqref{eq:crlimit}, we find
\begin{equation}
 \tau_Q(C(f)) \rvector{}{g}=\int d\thetav d \etav\,e^{ip(\thetav) Q p(\etav)} f(\thetav)g(\etav) C(\thetav)\rvector{}{\etav},
\end{equation}
which coincides with equation \eqref{eq:QCvector}.

Second, we notice that the support of the distribution $\hscalar{\lvector{}{\thetav}}{ C(\xiv) \rvector{}{\etav}}$ is on the hypersurface $p(\thetav)-p(\etav) = \varphi_C(\xiv)$. To show this, we compute, for $x \in \rbb^2$,
\begin{multline}\label{eq:trans1}
  \hscalar{\lvector{}{\thetav}}{ C(\xiv) \rvector{}{\etav}}
 = \hscalar{U(x) \lvector{}{\thetav}}{ U(x)C(\xiv) U(x)^\ast\,U(x) \rvector{}{\etav}}\\
 = e^{i(-p(\thetav)+p(\etav)+\varphi_C(\xiv))x}\hscalar{\lvector{}{\thetav}}{ C(\xiv) \rvector{}{\etav}},
\end{multline}
where in the second equality we used the homogeneity \eqref{homogeneous} of $C$ and the covariance properties of $\lvector{}{\cdotarg}, \rvector{}{\cdotarg}$ (namely, $U(x) \rvector{}{\etav}=\exp(ip(\etav)x)\rvector{}{\etav}$, and analogously for $\lvector{}{\thetav}$).
But note that equation \eqref{eq:trans1} can hold for all $x$ only if the support of the distribution is contained in the surface $p(\thetav)-p(\etav) = \varphi_C(\xiv)$.

Now we compute both sides of equation \eqref{eq:homogdeformprod} for the distributions $A$ and $B$, between smooth vectors $\lvector{}{\cdotarg},\rvector{}{\cdotarg}$. We start with the left hand side:
\begin{equation}\label{eq:qab1}
\begin{aligned}
 \hscalar{\lvector{}{\thetav'}}{ \tau_Q(A(\thetav)) \tau_Q(B(\etav)) \rvector{}{\etav'}}
&= e^{i \varphi_A(\thetav) Q p(\thetav')} e^{i \varphi_B(\etav) Q p(\etav')}
 \hscalar{\lvector{}{\thetav'}}{ A(\thetav)B(\etav) \rvector{}{\etav'}}
\\
&= e^{i \varphi_A(\thetav) Q \varphi_B(\etav)} e^{i p(\thetav') Q p(\etav')}
 \hscalar{\lvector{}{\thetav'}}{ A(\thetav)B(\etav) \rvector{}{\etav'}},
\end{aligned}
\end{equation}
where in the first equality we applied \eqref{eq:QCvector} twice, and where in the second equality we used that the support of the distribution is restricted to $p(\thetav')-p(\etav')=\varphi_A(\thetav)+\varphi_B(\etav)$, due to the remark above.

As for the right hand side, we obtain
\begin{equation}\label{eq:qab2}
\begin{aligned}
 \hscalar{\lvector{}{\thetav'}}{ \tau_Q(A(\thetav) B(\etav)) \rvector{}{\etav'}}
&= e^{i (\varphi_A(\thetav) +\varphi_B(\etav)) Q p(\etav')}
 \hscalar{\lvector{}{\thetav'}}{ A(\thetav)B(\etav) \rvector{}{\etav'}}
\\
&= e^{i p(\thetav') Q p(\etav')}
 \hscalar{\lvector{}{\thetav'}}{ A(\thetav)B(\etav) \rvector{}{\etav'}},
\end{aligned}
\end{equation}
where in the first equality we applied \eqref{eq:homogdeformprod} with $C(\thetav,\etav)=A(\thetav)B(\etav)$, and where in the second equality we made use of the restriction $p(\thetav')-p(\etav')=\varphi_A(\thetav)+\varphi_B(\etav)$ on the support of the distribution.

Eqs.~\eqref{eq:qab1} and \eqref{eq:qab2} imply the result \eqref{eq:homogdeformprod}.
\end{proof}

Now we can consider the warped convolution of the Bose creation and annihilation operators of the free field theory and we can identify the resulting deformed theory with an integrable model.
To do that, we set $\zd(\theta)=\tau_Q(\ad(\theta))$, $z(\eta)=\tau_Q(a(\eta))$. By applying Lemma~\ref{lemma:homogdeformprod} twice, we find that these $z,\zd$ fulfil the following relations
\begin{equation}\label{eq:zzrfromq}
\begin{aligned}
\zd(\theta)\zd(\eta) &= e^{2ip(\theta)Qp(\eta)}\zd(\eta)\zd(\theta)\\
z(\theta)z(\eta) &= e^{2ip(\theta)Qp(\eta)}z(\eta)z(\theta)\\
z(\theta)\zd(\eta) &= e^{-2ip(\theta)Qp(\eta)}\zd(\eta)z(\theta) + \delta(\theta-\eta)\cdot 1_{\mathcal{H}}.
\end{aligned}
\end{equation}
We can show that there is only a one-parameter family of $2\times 2$ matrices which are skew symmetric with respect to the scalar product in Minkowski space: A matrix is skew-symmetric with respect to the scalar product in Minkowski space if $\pmb{x} \cdot Q \pmb{y}=-(Q\pmb{x})\cdot \pmb{y}$, with $\pmb{x}=(x_0, x_1)$ and $\pmb{y}=(y_0,y_1)$. Solving this equation for $Q=\left( \begin{array}{cc}
a & b  \\
c & d  \end{array} \right)$, we find $a=0$, $d=0$, $b=c$. Hence, we can write this family of matrices explicitly as:
\begin{equation}
 Q = -\frac{a}{2\mu^2} \begin{pmatrix} 0 & 1 \\ 1 & 0 \end{pmatrix}
\end{equation}
where $a$ is a real dimensionless constant.

Using this explicit form of the matrix $Q$ and for $a \geq 0$, we find that the equations \eqref{eq:zzrfromq} are just the Zamolodchikov relations where the scattering function $S$ is given by
\begin{equation}\label{eq:QSform}
   S(\theta) = e^{i a \sinh \theta}.
\end{equation}
Then we can identify unitarily the Hilbert space $\hcal$ of the free theory with the $S$-symmetric Fock space over $\hcal_1$, introduced in Sec.~\ref{sec:hilbertspace}.

As next step we define the $Q$-commutator as follows.

\begin{definition}
For $A,B \in \ccal^\infty$, the $Q$-commutator is
\begin{equation}\label{Qcom}
[A,B]_{Q}:=AB -\tau_{2Q}\Big(\tau_{-2Q}(B)\tau_{-2Q}(A)\Big).
\end{equation}
We use the same definition if $A,B \in \qcal^\infty$ and at least \emph{one} of them is in $\fcal^\infty$.
\end{definition}

For homogeneous distributions $A(\thetav),B(\etav)$, we have the following explicit expression for the $Q$-commutator:
\begin{equation}\label{eq:Qcommhomog}
[A(\thetav),B(\etav)]_{Q}=A(\thetav)B(\etav)-e^{2i\varphi_{A}(\thetav)Q\varphi_{B}(\etav)}B(\etav)A(\thetav).
\end{equation}
This can be computed from \eqref{Qcom} using Lemma~\ref{lemma:homogdeformprod}:
\begin{eqnarray}
[A(\thetav),B(\etav)]_{Q}&=&A(\thetav)B(\etav)-\tau_{2Q}\Big(\tau_{-2Q}(B(\etav))\tau_{-2Q}(A(\thetav))\Big)\nonumber\\
&=& A(\thetav)B(\etav)-e^{2i\varphi_{A}(\thetav)Q\varphi_{B}(\etav)}\tau_{2Q}\Big(\tau_{-2Q}\Big(B(\etav)A(\thetav)\Big)\Big)\nonumber\\
&=&A(\thetav)B(\etav)-e^{2i\varphi_{A}(\thetav)Q\varphi_{B}(\etav)}B(\etav)A(\thetav).
\end{eqnarray}
In particular, we notice that the expression of the $Q$-commutator is again homogeneous.

We note that the $Q$-commutator fulfils the following ``deformed'' versions of the standard properties of a commutator. We formulate these properties only for homogeneous distributions. Indeed, we can show that similar relations then holds for general elements of $\qf^\infty$ or $\ccal^\infty$, by decomposing them into homogeneous distributions using a spectral decomposition in the sense of Arveson \cite{Arveson:1974} with respect to the action of the translation group, or using the Araki expansion.

\begin{proposition}
For homogeneous distributions with kernels $A(\thetav), B(\etav), C(\xiv)$, the $Q$-commutator satisfies

\begin{enumerate}
\renewcommand{\theenumi}{(\roman{enumi})}
\renewcommand{\labelenumi}{\theenumi}
\item\label{it:qcommanti}
 anticommutativity:
\begin{equation}
[A(\thetav),B(\etav)]_{Q}=-e^{2i\varphi_{A}(\thetav)Q\varphi_{B}(\etav)}[B(\etav),A(\thetav)]_{Q};
\end{equation}

\item \label{it:qcommleibniz}
Leibniz rule:
\begin{equation}\label{leibnizbiscuit}
[A(\thetav),B(\etav)C(\xiv)]_{Q}=[A(\thetav),B(\etav)]_{Q}C(\xiv)+e^{2i\varphi_{A}(\thetav)Q\varphi_{B}(\etav)}B(\etav)[A(\thetav),C(\xiv)]_{Q};
\end{equation}

\item\label{it:qcommjacobi}
 Jacobi identity:
\begin{equation}
e^{-2i\varphi_{A}(\thetav)Q\varphi_{C}(\xiv)}[A(\thetav),[B(\etav),C(\xiv)]_{Q}]_{Q}+ \mathrm{ cyclic\, permutations } =0.\label{Qjacobi}
\end{equation}
\end{enumerate}
\end{proposition}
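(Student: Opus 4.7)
The plan is to work entirely from the explicit formula \eqref{eq:Qcommhomog} for the $Q$-commutator on homogeneous distributions, together with the skew-symmetry of $Q$ (in the Minkowski sense), which yields the identity $\varphi_A(\thetav)\,Q\,\varphi_B(\etav)=-\varphi_B(\etav)\,Q\,\varphi_A(\thetav)$, and hence $e^{2i\varphi_A Q\varphi_B}\cdot e^{2i\varphi_B Q\varphi_A}=1$. The key structural input beyond this is the observation already noted after Eq.~\eqref{eq:Qcommhomog}: products of homogeneous distributions are again homogeneous with additive momentum transfer. So $B(\etav)C(\xiv)$ and $[B(\etav),C(\xiv)]_Q$ are both homogeneous with momentum transfer $\varphi_B(\etav)+\varphi_C(\xiv)$, which lets us apply \eqref{eq:Qcommhomog} in the nested expressions of \ref{it:qcommleibniz} and \ref{it:qcommjacobi}.

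For \ref{it:qcommanti}, I would substitute both sides using \eqref{eq:Qcommhomog}. Writing $[A,B]_Q = AB - e^{2i\varphi_A Q\varphi_B}BA$ and $-e^{2i\varphi_A Q\varphi_B}[B,A]_Q = -e^{2i\varphi_A Q\varphi_B}BA + e^{2i\varphi_A Q\varphi_B}e^{2i\varphi_B Q\varphi_A}AB$, the skew-symmetry of $Q$ collapses the final exponential product to $1$, and the two expressions agree. For \ref{it:qcommleibniz}, I would apply \eqref{eq:Qcommhomog} on the left-hand side with $B(\etav)C(\xiv)$ as the second factor, using additivity of momentum transfer to write the phase as $e^{2i\varphi_A Q\varphi_B}e^{2i\varphi_A Q\varphi_C}$; on the right-hand side I would expand each $Q$-commutator separately. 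Two of the four resulting terms, namely $\pm e^{2i\varphi_A Q\varphi_B}BAC$, cancel against each other, and the two remaining terms match the left-hand side.

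For \ref{it:qcommjacobi}, which is the bookkeeping heart of the proposition, I would write out all three doubly-nested $Q$-commutators by two applications of \eqref{eq:Qcommhomog} (the outer one uses the additive momentum transfer $\varphi_B+\varphi_C$ for $[B,C]_Q$). Each produces four monomials in $A,B,C$, giving twelve terms in total; these split into the six orderings $ABC,\,ACB,\,BAC,\,BCA,\,CAB,\,CBA$. For each ordering I would collect the two phase contributions, use skew-symmetry of $Q$ to rewrite $e^{-2i\varphi_X Q\varphi_Y}=e^{2i\varphi_Y Q\varphi_X}$, and verify that the exponentials match so that the two terms cancel. This is purely algebraic, and I expect every ordering to cancel for the same reason — after the cyclic prefactors $e^{-2i\varphi_A Q\varphi_C}$ etc.\ are absorbed, the phase multiplying each monomial in one cyclic term is the negative of the phase multiplying it in another.

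The only mild obstacle I foresee is in \ref{it:qcommjacobi}: keeping careful track of the six orderings and of which cyclic term contributes to each pair. However, since the anticommutativity \ref{it:qcommanti} and the skew-symmetry of $Q$ together ensure that every $e^{2i\varphi_X Q\varphi_Y}$ has a natural ``conjugate partner'' $e^{-2i\varphi_X Q\varphi_Y}$ appearing somewhere in the cyclic sum, the cancellation is forced and no new identity beyond skew-symmetry is needed. Once \ref{it:qcommjacobi} is verified for homogeneous kernels, the remark before the proposition (spectral decomposition with respect to translations, or equivalently the Araki expansion) extends all three identities to general $A,B\in\qf^\infty$ with at least one in $\fcal^\infty$, which is the domain on which the $Q$-commutator is defined.
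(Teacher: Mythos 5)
Your proposal is correct and follows essentially the same route as the paper: expand every $Q$-commutator via the explicit formula \eqref{eq:Qcommhomog} (using that $B(\etav)C(\xiv)$ is homogeneous with momentum transfer $\varphi_B+\varphi_C$), and invoke the Minkowski skew-symmetry of $Q$, i.e.\ $\varphi_A Q\varphi_B=-\varphi_B Q\varphi_A$, to collapse the resulting exponential factors. The paper uses this skew-symmetry identity implicitly rather than naming it as you do, but the computations for \ref{it:qcommanti}, \ref{it:qcommleibniz}, and \ref{it:qcommjacobi} are the same twelve-term bookkeeping you describe.
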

\begin{proof}
In~\ref{it:qcommanti}, applying Eq.~\eqref{eq:Qcommhomog} the r.h.s.~gives:
\begin{multline}
-e^{2i\varphi_{A}(\thetav)Q\varphi_{B}(\etav)} [B(\etav),A(\thetav)]_{Q}
=-e^{2i\varphi_{A}(\thetav)Q\varphi_{B}(\etav)}\Big(B(\etav)A(\thetav)-e^{2i\varphi_{B}(\etav)Q\varphi_{A}(\thetav)}A(\thetav)B(\etav)\Big)\\
=-e^{2i\varphi_{A}(\thetav)Q\varphi_{B}(\etav)}B(\etav)A(\thetav)+A(\thetav)B(\etav)
=[A(\thetav),B(\etav)]_{Q}.
\end{multline}

For~\ref{it:qcommleibniz}, we apply \eqref{eq:Qcommhomog} with respect to $B(\etav)C(\xiv)=:D(\etav,\xiv)$ and $\varphi_{B}(\etav)+\varphi_{C}(\xiv)=:\varphi_{D}(\etav,\xiv)$. The l.h.s.~gives:
\begin{multline}
[A(\thetav),B(\etav)C(\xiv)]_{Q}
=A(\thetav)D(\etav,\xiv)-e^{2i\varphi_{A}(\thetav)Q\varphi_{D}(\etav,\xiv)}D(\etav,\xiv)A(\thetav)\\
=A(\thetav)B(\etav)C(\xiv)-e^{2i\varphi_{A}(\thetav)Q( \varphi_{B}(\etav)+\varphi_{C}(\xiv))}B(\etav)C(\xiv)A(\thetav).
\label{leftside}
\end{multline}
Applying \eqref{eq:Qcommhomog}, the right hand side of \eqref{leibnizbiscuit} gives:
\begin{equation}
[A(\thetav),B(\etav)]_{Q}C(\xiv)=A(\thetav)B(\etav)C(\xiv)-e^{2i\varphi_{A}(\thetav)Q\varphi_{B}(\etav)}B(\etav)A(\thetav)C(\xiv)
\label{rightside1}
\end{equation}
and
\begin{equation}\label{rightside2}
\begin{aligned}
e^{2i\varphi_{A}(\thetav)Q\varphi_{B}(\etav)} & B(\etav)[A(\thetav),C(\xiv)]_{Q}\\
&=e^{2i\varphi_{A}(\thetav)Q\varphi_{B}(\etav)}B(\etav)\Big( A(\thetav)C(\xiv)-e^{2i\varphi_{A}(\thetav)Q\varphi_{C}(\xiv)}C(\xiv)A(\thetav)\Big)\\
&=e^{2i\varphi_{A}(\thetav)Q\varphi_{B}(\etav)}B(\etav)A(\thetav)C(\xiv)-
e^{2i\varphi_{A}(\thetav)Q(\varphi_{B}(\etav)+\varphi_{C}(\xiv))}B(\etav)C(\xiv)A(\thetav).
\end{aligned}
\end{equation}
Combining \eqref{rightside1} and \eqref{rightside2} we get \eqref{leftside}.

For \ref{it:qcommjacobi}: Applying \eqref{eq:Qcommhomog}, the first summand in \eqref{Qjacobi} gives:
\begin{multline}
e^{-2i\varphi_{A}(\thetav)Q\varphi_{C}(\xiv)}[A(\thetav),[B(\etav),C(\xiv)]_{Q}]_{Q}\\
=e^{-2i\varphi_{A}(\thetav)Q\varphi_{C}(\xiv)}[A(\thetav),B(\etav)C(\xiv)]_{Q}
-e^{-2i\varphi_{A}(\thetav)Q\varphi_{C}(\xiv)+2i\varphi_{B}(\etav)Q\varphi_{C}(\xiv)}[A(\thetav),C(\xiv)B(\etav)]_{Q}.
\end{multline}
Applying again Eq.~\eqref{eq:Qcommhomog} with $B(\etav)C(\xiv)=:D(\etav,\xiv)$ and $\varphi_{B}(\etav)+\varphi_{C}(\xiv)=:\varphi_{D}(\etav,\xiv)$ (analogously, $C(\xiv)B(\etav)=:E(\xiv,\etav)$, $\varphi_{C}(\xiv)+\varphi_{B}(\etav)=:\varphi_{E}(\xiv,\etav)$ ), we find from the formula above:
\begin{equation}\label{jacobiterm}
\begin{aligned}
&e^{-2i\varphi_{A}(\thetav)Q\varphi_{C}(\xiv)} [A(\thetav),[B(\etav),C(\xiv)]_{Q}]_{Q}\\
&\qquad=e^{-2i\varphi_{A}(\thetav)Q\varphi_{C}(\xiv)}A(\thetav)D(\etav,\xiv)
-e^{-2i\varphi_{A}(\thetav)Q\varphi_{C}(\xiv)+2i\varphi_{A}(\thetav)Q\varphi_{D}(\etav,\xiv)}D(\etav,\xiv)A(\thetav) \\
&\qquad\quad -e^{-2i\varphi_{A}(\thetav)Q\varphi_{C}(\xiv)+2i\varphi_{B}(\etav)Q\varphi_{C}(\xiv)}A(\thetav)E(\xiv,\etav)
\\
&\qquad\quad +e^{-2i\varphi_{A}(\thetav)Q\varphi_{C}(\xiv)+2i\varphi_{B}(\etav)Q\varphi_{C}(\xiv)
+2i\varphi_{A}(\thetav)Q\varphi_{E}(\xiv,\etav)}E(\xiv,\etav)A(\thetav)\\
&\qquad=e^{-2i\varphi_{A}(\thetav)Q\varphi_{C}(\xiv)}A(\thetav)B(\etav)C(\xiv)
-e^{2i\varphi_{A}(\thetav)Q\varphi_{B}(\etav)}B(\etav)C(\xiv)A(\thetav)\\
&\qquad\quad -e^{2i(-\varphi_{A}(\thetav)+\varphi_{B}(\etav))Q\varphi_{C}(\xiv)}A(\thetav)C(\xiv)B(\etav)
+e^{2i\varphi_{B}(\etav)Q(\varphi_{C}(\xiv)-\varphi_{A}(\thetav))}C(\xiv)B(\etav)A(\thetav).
\end{aligned}
\end{equation}
Taking the sum of the cyclic permuted terms of \eqref{jacobiterm}, we obtain zero.
\end{proof}

 Using Eq.~\eqref{eq:Qcommhomog}, we can rewrite the Zamolodchikov relations \eqref{eq:zzrfromq} in terms of $Q$-commutators in the following way:
\begin{equation}
 [z^{\dagger}(\theta),z^{\dagger}(\theta')]_{Q}=0,
\quad
  [z(\eta),z(\eta')]_{Q}=0,
\quad
  [z(\eta),z^{\dagger}(\theta)]_{Q}=\delta(\theta-\eta).
\end{equation}
Namely, we find that the Zamolodchikov operators $z,\zd$ satisfy relations of the type of the CCR relations with respect to the $Q$-commutator; note the analogy of this with the graded commutator in the case of the CAR relations.

Moreover, using Eq.~\eqref{eq:Qcommhomog}, we also obtain:
\begin{equation}
[z(\xi),z^{\dagger m}(\thetav)z^{n}(\etav)]_{Q}=z(\xi)z^{\dagger m}(\thetav)z^{n}(\etav)
  -e^{-2ip(\xi)Q(p(\thetav)-p(\etav))}z^{\dagger m}(\thetav)z^{n}(\etav)z(\xi),
\end{equation}
which implies by repeated application of the relations of the Zamolodchikov relations \eqref{eq:zzrfromq},
\begin{multline}
[z(\xi),z^{\dagger m}(\thetav)z^{n}(\etav)]_{Q}=\sum_{j=1}^{m}\prod_{l=1}^{j-1}e^{2ip(\theta_{l})Qp(\xi)}\delta(\theta_{j}-\xi)z^{\dagger}(\theta_{1})\ldots \widehat{z^{\dagger}(\theta_{j})}\ldots z^{\dagger}(\theta_{m})z^{n}(\etav)\\
=m\operatorname{Sym}_{S^{-1},\thetav}\Big( \delta(\xi-\theta_{1})z^{\dagger m-1}(\theta_{2},\ldots,\theta_{m})z^{n}(\etav)\Big).\label{QcommAz}
\end{multline}
Similarly, we have
\begin{multline}
[z^{\dagger m}(\thetav)z^{n}(\etav),z^{\dagger}(\xi)]_{Q}\\
=z^{\dagger m}(\thetav)z^{n}(\etav)z^{\dagger}(\xi)-e^{-2ip(\xi)Q\Big( \sum_{j=1}^{m}p(\theta_{j})-\sum_{l=1}^{n}p(\eta_{l}) \Big)}z^{\dagger}(\xi)z^{\dagger m}(\thetav)z^{n}(\etav),
\end{multline}
which implies by repeated application of the relations of the Zamolodchikov relations \eqref{eq:zzrfromq},
\begin{multline}
[z^{\dagger m}(\thetav)z^{n}(\etav),z^{\dagger}(\xi)]_{Q}=\sum_{j=1}^{n}\prod_{l=j+1}^{n}e^{2ip(\xi)Qp(\eta_{l})}\delta(\eta_{j}-\xi)z^{\dagger m}(\thetav)z(\eta_{1})\ldots \widehat{z(\eta_{l})}\ldots z(\eta_{n})\\
=n \operatorname{Sym}_{S^{-1},\etav}\Big( \delta(\xi-\eta_{n})z^{\dagger m}(\thetav)z^{n-1}(\eta_{1},\ldots ,\eta_{n-1})\Big).\label{QcommAzdag}
\end{multline}
Using this, we can now prove the following form of the Araki coefficients in the case where the scattering function is of the form \eqref{eq:QSform}:

\begin{theorem}
Let $S$ be of the form \eqref{eq:QSform}. The coefficients $\cme{m,n}{A}$, where $A\in \qf^\infty$, can be expressed as
\begin{equation}\label{Snested}
\cme{m,n}{A}(\thetav,\etav)=\bighscalar{ \Omega}{ [ z(\theta_{m})\ldots[z(\theta_{1})\ldots [ A,z^{\dagger}(\eta_{n})]_{Q}\ldots z^{\dagger}(\eta_{1})]_{Q}\ldots]_{Q}\Omega }.
\end{equation}
\end{theorem}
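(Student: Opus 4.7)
The strategy is to reduce the identity to the basis operators $A = z^{\dagger m'}(\thetav')z^{n'}(\etav')$. Both sides of \eqref{Snested} are linear in $A$ and continuous in the appropriate sense (the right-hand side because the $Q$-commutator with the homogeneous distribution $\zd(\eta)$ respects the linear structure of $\qf^\infty$, and the left-hand side by Prop.~\ref{proposition:fmnbound}). Thus, by Theorem~\ref{theorem:arakiexpansion} and the uniqueness part of Prop.~\ref{proposition:expansionunique}, it is enough to verify the formula for the basis elements $A=z^{\dagger m'}(\thetav')z^{n'}(\etav')$ as distributions in the primed variables. For such $A$, Prop.~\ref{proposition:fmnbasis} gives the left-hand side explicitly, so the task reduces to computing the vacuum expectation value of the nested $Q$-commutators on the right and matching it to $m!n!\,\delta_{m,m'}\delta_{n,n'}\operatorname{Sym}_{S,\thetav}\delta^m(\thetav-\thetav')\operatorname{Sym}_{S,\etav}\delta^n(\etav-\etav')$.

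The computation is done in two nested stages. First I would evaluate iteratively the inner string $[\ldots [A,\zd(\eta_n)]_Q,\ldots,\zd(\eta_1)]_Q$ using the already established formula \eqref{QcommAzdag} together with the $Q$-Leibniz rule \ref{it:qcommleibniz}. Since the delta functions produced at each stage are scalars with zero momentum transfer, they pass trivially through subsequent $Q$-commutators, so a straightforward induction on the number of nested commutators yields, for $n\le n'$,
\begin{equation*}
[\ldots[A,\zd(\eta_n)]_Q,\ldots,\zd(\eta_1)]_Q
= \frac{n'!}{(n'-n)!}\,\operatorname{Sym}_{S^{-1},\etav'}\!\Big[\prod_{j=1}^n\delta(\eta_{j}-\eta'_{n'-n+j})\,\zd{}^{m'}(\thetav')\,z^{n'-n}(\eta'_1,\ldots,\eta'_{n'-n})\Big].
\end{equation*}
For $n>n'$ the process hits a commutator of the form $[\zd{}^{m'}(\thetav'),\zd(\eta)]_Q$, which vanishes by \eqref{QcommAzdag} taken at $n=0$. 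Second, I would apply the outer string $[z(\theta_m),\ldots[z(\theta_1),\cdot]_Q\ldots]_Q$ using \eqref{QcommAz} in exactly the same manner, which for $m\le m'$ produces the analogous expression with $m'!/(m'-m)!$ and $S^{-1}$-symmetrized deltas $\prod_i\delta(\theta_i-\theta'_i)$, and with residual operators $\zd{}^{m'-m}(\thetav'_{\text{rem}})z^{n'-n}(\etav'_{\text{rem}})$.

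Taking the vacuum expectation value kills all residual operators except when no creators or annihilators remain, forcing $m=m'$ and $n=n'$. In that case the residue is the number $1$ and what remains is
\begin{equation*}
m'!\,n'!\ \operatorname{Sym}_{S^{-1},\thetav'}\!\prod_{i=1}^{m'}\delta(\theta_i-\theta'_i)\cdot \operatorname{Sym}_{S^{-1},\etav'}\!\prod_{j=1}^{n'}\delta(\eta_j-\eta'_j),
\end{equation*}
which by the identity \eqref{symstheta} is equal to $m!n!\,\operatorname{Sym}_{S,\thetav}\delta^m(\thetav-\thetav')\operatorname{Sym}_{S,\etav}\delta^n(\etav-\etav')$, matching Prop.~\ref{proposition:fmnbasis}. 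For the mismatched cases $m\neq m'$ or $n\neq n'$, one of the two induction chains terminates with a vanishing $Q$-commutator, or the vacuum expectation value picks up a residual creator/annihilator that annihilates $\Omega$, in both cases giving zero on both sides.

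The main obstacle is the bookkeeping of the $S$-factor phases. In applying the $Q$-Leibniz rule to push the successive $\zd(\eta_j)$'s past the partially contracted operator, one accumulates exponentials of the form $e^{2ip(\eta_j)Q\varphi(\cdotarg)}$ coming from the homogeneity of the intermediate factors; one must verify that these phases combine consistently with the nested $S^{-1}$-symmetrization (as opposed to producing spurious mixed terms). The key point that makes this work is the identity $e^{2ip(\theta)Qp(\eta)}=S(\theta-\eta)$ following from \eqref{eq:QSform} and the skew-symmetric $Q$ of Sec.~\ref{sec:warped}, so that the $Q$-phases are exactly the $S$-factors defining the $S^{\sigma}$ in the symmetrization \eqref{Symf}. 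Confirming that the phases assemble into the correct $S^{-1}$-symmetrizer at each inductive step is the technically delicate but essentially combinatorial core of the argument.
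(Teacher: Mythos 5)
Your proof follows essentially the same route as the paper's: reduce to the basis operators $A=z^{\dagger m'}(\thetav')z^{n'}(\etav')$ by linearity and the Araki expansion, evaluate the nested $Q$-commutators iteratively via \eqref{QcommAzdag} and \eqref{QcommAz}, take the vacuum expectation value to force $m=m'$, $n=n'$, and convert the $S^{-1}$-symmetrization in primed variables to the $S$-symmetrization in unprimed variables via \eqref{symstheta} to match Prop.~\ref{proposition:fmnbasis}. The one small point the paper spells out that you only gesture at is the justification of the linearity argument: it notes that the individual expansion terms $z^{\dagger m}z^n(\cme{m,n}{A})$ are themselves elements of $\qf^\infty$ (using the Poincar\'e covariance of the Araki coefficients to control the norms $\gnorm{\partial^\kappa z^{\dagger m}z^n(\cme{m,n}{A})}{}$), which is what licenses term-by-term application of the identity.
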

\begin{proof}
First we notice that that if $A$ is in $\qf^\infty$, also its expansion terms $z^{\dagger m} z^{n}(\cme{m,n}{A})$ are elements of $\qf^\infty$. Indeed, using Prop.~\ref{proposition:fmnpoincare}, we find that the derivatives $\partial^\kappa$, with a multi-index $\kappa$, of $z^{\dagger m} z^{n}(\cmelong{m,n}{A})$ fulfil the following equality:
\begin{equation}
   \partial^\kappa z^{\dagger m} z^{n}(\cmelong{m,n}{A}) = z^{\dagger m} z^{n}(\cmelong{m,n}{\partial^\kappa A}).
\end{equation}
Then, by Prop.~\ref{pro:zzdcrossnorm} and \ref{proposition:fmnbound}, we have that the right hand side of the equation above have finite norms when applied to fixed particle number vectors.

Therefore, by Thm.~\ref{theorem:arakiexpansion}, we can express a general $A$ using the Araki expansion; hence,  it suffices to prove the equation \eqref{Snested} in the case where $A=z^{\dagger m'} z^{n'}(f)$, since for more general $A$, due to the Araki expansion, it follows by linearity.
Now we have that for this particular $A$, or more precisely for its kernel $A(\thetav',\etav')=z^{\dagger m'}(\thetav') z^{n'}(\etav')$, the nested $Q$-commutator in \eqref{Snested} gives by repeated application of Eqs.~\eqref{QcommAz} and \eqref{QcommAzdag}:
\begin{multline}\label{applcomm}
 [z(\theta_{m})\ldots [z(\theta_{1})\ldots [z^{\dagger m'}(\thetav')z^{n'}(\etav'),z^{\dagger}(\eta_{n})]_{Q}\ldots z^{\dagger}(\eta_{1})]_{Q}\ldots ]_{Q}=
m!n!\operatorname{Sym}_{S^{-1},\etav'}\operatorname{Sym}_{S^{-1},\thetav'}\\
\Big( \prod_{j=1}^{m}\delta(\theta_j - \theta'_{j})\prod_{k=0}^{n-1}\delta(\eta_{n-k}-\eta'_{n'-k})z^{\dagger m'-m}(\theta'_{m+1},\ldots, \theta'_{m'})z^{n'-n}(\eta'_{1},\ldots,\eta'_{n'-n})\Big)
\end{multline}
if $m'\geq m$, $n'\geq n$, and the right hand side vanishes otherwise. Now if $m'>m$ or $n'>n$, the vacuum expectation value of the right hand side of \eqref{applcomm} vanishes. Therefore, we find:
\begin{equation}
 \begin{aligned}
\langle \Omega, [z(\theta_{m})\ldots [z(\theta_{1})\ldots & [z^{\dagger m'}(\thetav')z^{n'}(\etav'),z^{\dagger}(\eta_{n})]_{Q}\ldots z^{\dagger}(\eta_{1})]_{Q}\ldots ]_{Q}\Omega \rangle\\
&=m!n!\delta_{m,m'}\delta_{n,n'}\operatorname{Sym}_{S^{-1},\etav'}\operatorname{Sym}_{S^{-1},\thetav'}\Big( \delta^m(\thetav-\thetav')\delta^n(\etav-\etav')\Big)\\
&= m!n!\delta_{m,m'}\delta_{n,n'}\operatorname{Sym}_{S,\thetav}\delta^m(\thetav-\thetav')\operatorname{Sym}_{S,\etav}\delta^n(\etav-\etav').
 \end{aligned}
\end{equation}
We have used \eqref{symstheta} here. This matches the left hand side of \eqref{Snested} because of Prop.~\ref{proposition:fmnbasis}.
\end{proof}

Hence, we have shown in the case where the scattering function is of the form \eqref{eq:QSform} that the Araki coefficients can be expressed in terms of a string of nested deformed commutators. Now it would be interesting to generalize similar expressions for the Araki coefficients in the case of general $S$. We know that on a formal level this is possible. Indeed, one could use the more general deformation procedure given in \cite{Lechner:2011} to construct a suitable ``S-commutator''; or another more direct way would be to impose as a definition the relations $[z(\eta),\zd(\theta)]_S = \delta(\theta-\eta)$, etc., between homogeneous distributions, and to
use the Araki decomposition to define the deformed commutator for more general operators; we would then obtain the following formula for ``S-commutator'':
\begin{multline}\label{Scomm}
[A,B]_S = AB - \\
\sum_{\substack{m\geq 0, n\geq 0 \\ m'\geq 0, n'\geq 0}}\int \frac{d^{m+n}\thetav}{m!n!} \frac{d^{m'+n'}\thetav'}{m'!n'!} \prod_{i=1}^{m+n}\prod_{j=1}^{m'+n'}S^{(m,m')}(\theta_i-\theta'_{j})
\cme{m',n'}{B}(\thetav')\cme{m,n}{A}(\thetav){\zd}^{m'}z^{n'}(\thetav'){\zd}^{m}z^{n}(\thetav)
\end{multline}
with
\begin{equation}
S^{(m,m')}(\theta_i-\theta'_{j}):=
\begin{cases} S(\theta_i - \theta'_{j}), \quad &\text{ if }\; i\leq m
\wedge j \leq m', \text{ or } i>m \wedge j> m',\\
S(\theta'_{j} - \theta_i), \quad &\text{ if }\; i\leq m \wedge j > m', \text{ or } i> m \wedge j\leq m.
\end{cases}
\end{equation}
But if this definition might make sense on a formal level, its properties in the point of view of functional analysis (for example, whether the $S$-commutator of two bounded operators would be bounded) are still unclear for the moment.

In particular, notice that the right hand side of \eqref{Scomm} is well-defined only if the sum over $m,n$ or the sum over $m',n'$ is finite, namely in the case where either $A$ or $B$ has finite sum in the Araki expansion.

\chapter[Residues and boundary distributions]{Residues and boundary distributions in several variables} \label{sec:bvlemma}

In the proof of our Theorem in Chapter~\ref{sec:localitythm}, see for example Chapter~\ref{sec:fptof}, we study meromorphic functions in several variables and we use their residues. Here, all the poles of these meromorphic functions are of first order and they sit on hyperplanes, $\zv \cdot \av = c$ with $\av \in\rbb^k$, $c \in \cbb$. Our notation for the residues has the following convention: If $F(\zv) = G(\zv)/(\zv\cdot\av - c)$, where $G$ is analytic in a neighbourhood of the hyperplane, then
\begin{equation}
   \res_{\zv \cdot \av = c} F = G \big\vert_{\zv \cdot \av = c}.
\end{equation}
One has to be careful with this notation, because of the following fact: For $\alpha \in \rbb \backslash\{0\}$, we have
\begin{equation} \label{rescaleonepole}
   \res_{\zv \cdot (\alpha\av) = \alpha c} F = \alpha \res_{\zv \cdot \av = c} F,
\end{equation}
even if $\zv \cdot \av = c$ and $\zv \cdot (\alpha \av) = \alpha c$ describe the same geometric set. We accept this because this notation is simpler, indeed the alternative would be to work with oriented manifolds, and with differential forms rather than functions, and the notation would become a bit more involved.

We consider that the residue of a meromorphic function on $\cbb^k$ is again a meromorphic function on a lower-dimensional complex manifold, and we identify this lower-dimensional complex manifold with $\cbb^{k-1}$.

In this section of the appendix, we study the boundary values of meromorphic functions, which are distributions, and we try to generalize to the case of several variables the following relation which is valid in one variable: If $F$ is a function of one complex variable, analytic in a strip around the real axis except for a possible first-order pole at $z=0$, then we have the following relation between the boundary distributions,
\begin{equation}
   F(x-i0) = F(x+i0) + 2 \pi i \delta(x) \res_{z=0} F.
\end{equation}
We present a multi-dimensional generalization of this formula in the following lemma, which is mostly due to H. Bostelmann:

\begin{lemma}\label{lemma:onepole}
  Let $\ucal \subset \rbb^k$ be a neighbourhood of zero, $\ccal \subset \rbb^k$ an open convex cone, and $\av \in \rbb^k$.
  Let $F$ be meromorphic on $\tube(\ucal)$ and $(\zv \cdot \av)F(\zv)$ analytic on $\tube(\ccal \cap \ucal)$.
  Let $\bv^+,\bv^-,\bv^\bot \in \ccal$ so that $\pm \av \cdot \bv^\pm  > 0$, $\av\cdot\bv^\bot  = 0$. Then it holds that
  \begin{equation}\label{eq:onepole}
     F(\xv + i 0 \bv^-)  = F(\xv + i 0 \bv^+) + 2 \pi i \delta(\xv \cdot \av) \res_{\zv \cdot \av = 0} F(\xv + i 0 \bv^\bot).
  \end{equation}
\end{lemma}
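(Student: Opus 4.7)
My plan is to reduce the statement to the classical one-variable Sokhotski--Plemelj formula
$(x+i0)^{-1}-(x-i0)^{-1}=-2\pi i\,\delta(x)$,
applied in the $\av$-direction, with the remaining $k-1$ variables treated as a parameter living in a tube on which $F$ (after removing the pole) is analytic.

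First I would use the hypothesis that $(\zv\cdot\av)F(\zv)$ is analytic on $\tube(\ccal\cap\ucal)$ to write $F(\zv)=G(\zv)/(\zv\cdot\av)$, where $G$ is analytic on $\tube(\ccal\cap\ucal)$ and the restriction $G\!\restriction_{\zv\cdot\av=0}$ is, by definition, $\res_{\zv\cdot\av=0}F$. Next I would introduce coordinates adapted to $\av$: pick a decomposition $\rbb^k=\rbb\av\oplus H$ where $H=\{\yv:\yv\cdot\av=0\}$, and write $\zv=s\av/|\av|^2+\wv$ with $s=\zv\cdot\av$ and $\wv\in H_\cbb$. In these coordinates $F$ becomes $G(s,\wv)/s$ with $G$ analytic in $s$ near $s=0$, uniformly in $\wv$ ranging over a suitable complex neighbourhood of $H\cap\ucal$ inside $H+i(\ccal\cap\ucal)$.

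The core step is to test the identity against an arbitrary $\varphi\in\dcal(\ucal)$. The two boundary distributions $F(\xv+i0\bv^\pm)$ are defined by $\langle F(\cdot+i0\bv^\pm),\varphi\rangle=\lim_{\epsilon\searrow 0}\int F(\xv+i\epsilon\bv^\pm)\varphi(\xv)\,d\xv$, and a Fubini argument in the $(s,\wv)$-coordinates turns this into a one-dimensional contour question: for each fixed real $\wv$ in $H$, integrate $G(s+i\epsilon\av\cdot\bv^\pm,\wv+i\epsilon\bv^\pm_H)/(s+i\epsilon\av\cdot\bv^\pm)$ in $s$ against a smooth function of $s$ and $\wv$. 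Because $\pm\av\cdot\bv^\pm>0$, the pole $1/s$ picks up opposite imaginary shifts in the two cases, and the one-variable Plemelj formula gives the pointwise-in-$\wv$ identity
\begin{equation*}
\lim_{\epsilon\searrow 0}\Bigl[\frac{G(s+i\epsilon\av\cdot\bv^-,\wv+i\epsilon\bv^-_H)}{s+i\epsilon\av\cdot\bv^-}-\frac{G(s+i\epsilon\av\cdot\bv^+,\wv+i\epsilon\bv^+_H)}{s+i\epsilon\av\cdot\bv^+}\Bigr]=2\pi i\,\delta(s)\,G(0,\wv),
\end{equation*}
where the transverse shifts $\bv^\pm_H\in H$ are irrelevant in the limit $\epsilon\to 0$ for the jump, since $G$ is continuous across them.

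Finally I would reinterpret the right-hand side: $G(0,\wv)$ as a function of $\wv\in H$ is the residue, and after $\epsilon\to 0$ one should read it as the boundary distribution of $\res_{\zv\cdot\av=0}F$ approached from inside $H+i\ccal$; this is precisely the term $\res_{\zv\cdot\av=0}F(\xv+i0\bv^\bot)$ in the statement, because $\bv^\bot\in\ccal\cap H$ is an admissible transverse shift. Multiplying by $\delta(\xv\cdot\av)$ and integrating against $\varphi$ yields the claimed identity in $\dcal(\ucal)'$. The main technical obstacle is to justify the exchange of limit and integration so that the fiberwise Plemelj identity lifts cleanly to a distributional identity in all $k$ variables; this requires uniform control of $G$ on a complex neighbourhood of $\{0\}\times(H\cap\ucal)$ in $\tube(\ccal\cap\ucal)$, which follows from analyticity of $G$ together with a standard dominated-convergence argument once the tube directions $\bv^\pm$ are parametrized by $\epsilon$.
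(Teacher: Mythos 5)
Your overall strategy --- factor out the pole, pass to coordinates adapted to $\av$, apply the one-variable Sokhotski--Plemelj identity fiberwise, then dominated convergence --- is the same as the paper's, and it does work \emph{if} $G:=(\zv\cdot\av)F$ extends continuously (together with its gradient) up to the real boundary $\im\zv=0$. But that is an extra assumption, not a consequence of the hypotheses, and your step that papers over it is exactly where the argument breaks.

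Specifically, you write that the needed ``uniform control of $G$ on a complex neighbourhood of $\{0\}\times(H\cap\ucal)$ in $\tube(\ccal\cap\ucal)$ \ldots\ follows from analyticity of $G$ together with a standard dominated-convergence argument.'' This is false: $\ccal$ is an open cone, so $0\notin\ccal$, and the real points $\{0\}\times(H\cap\ucal)$ lie on the \emph{boundary} of $\tube(\ccal\cap\ucal)$, not inside it. Analyticity of $G$ on the open tube gives no boundedness near that boundary. Since $F$ is only meromorphic on $\tube(\ucal)$, $G$ can (and in the later application does) have other poles in $\tube(\ucal)$ that accumulate on $\im\zv=0$; the best you know is that $G$ and $\nabla G$ are bounded by an inverse power of $\|\im\zv\|$ there. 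With such a divergence, the domination you need for the limit--integral exchange has no integrable majorant, and the pointwise-in-$\wv$ Plemelj identity cannot simply be lifted. The paper's fix is to first prove the formula under the extra continuity hypothesis (essentially your argument, written with the rescaling $y=x_1/\epsilon$ and Jordan's lemma), and then remove that hypothesis by taking $m$-th antiderivatives of $G$ in the transverse direction $\bv^\bot$ --- which kills the inverse-power divergence --- integrating by parts to move the derivatives onto the test function, applying the continuous case, and undoing the integration by parts after adding a small imaginary shift $i\epsilon\bv^\bot$. Your proposal is missing this regularization step entirely, so as written it only establishes the lemma in the special case where $G,\nabla G$ are continuous up to $\im\zv=0$.
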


\begin{proof}
Note that $\ucal$ and $\ccal$ are both regions in $i\rbb^k\subset \cbb^k$.

We note that equation \eqref{eq:onepole} is valid without requiring a particular form for the neighbourhood $\ucal$ of zero, since this formula needs to hold in the limit $\epsilon \rightarrow 0$. Since $\ucal$  is a neighbourhood of zero, it contains an open ball around the origin. So, without loss of generality, we can assume that $\ucal$ is a ball around the origin.

After the rescaling of the neighbourhood  $\ucal$, we also need to rescale the vectors $\bv^+,\bv^-,\bv^\bot$ with some real positive factor. We need to choose this factor small, so that they are still contained in the ball. Then we note that the statement of the lemma above does not change, in particular the scalar products of those vectors with $\av$ and equation \eqref{eq:onepole} are still the same: Since \eqref{eq:onepole} holds in the limit $\epsilon \rightarrow 0$, we can rescale $\epsilon$ and absorb the rescaling factor (let's say $\lambda$) in the argument of $F$, $F(\xv+i\epsilon \lambda \bv^-)$, that would appear after the rescaling of $\ucal$.

We can also change (``rotate'') the system of coordinates and rescale the vector $\av$, so that $\av=\ev^{(1)}$. This also implies that we rotate the vectors $\bv^+,\bv^-,\bv^\bot$. However equation \eqref{eq:onepole} transforms covariantly under this change of coordinates; moreover it does not change after the rescaling of $\av$ (let's say by a positive factor $\alpha$): Indeed, as remarked in \eqref{rescaleonepole}, we have that the residue rescales by $\alpha$, but the delta function rescales by the inverse $\alpha^{-1}$.

Note that the equation \eqref{eq:onepole} holds in the sense of distributions. We prove this equation when smeared with test functions $g \in \dcal(\kcal)$, where $\kcal$ is a fixed convex compact set. We define $G(\zv):=(\zv\cdot\av)F(\zv)$; by hypothesis, we have that this function is analytic on $\tube(\ccal \cap \ucal)$, and we have $G(\zv)=\res_{\zv\cdot\av=0}F(\zv)$ if $\zv\cdot\av=0$.

First we prove the equation \eqref{eq:onepole} using the additional hypothesis that $G$ and its gradient, $\nabla G$, can be extended to a continuous function on the closure $\kcal + i (\bar\ccal \cap \bar\ucal)$.

We compute,
\begin{equation}\label{eq:fdiff}
 \int \Big( F(\xv+i0 \bv^-)-F(\xv+i0 \bv^+) \Big) g(\xv)d\xv
 = \lim_{\epsilon\searrow 0}
 \int \Big( \frac{G(\xv+i \epsilon \bv^-)}{x_1 + i \epsilon b_1^-} - \frac{G(\xv + i \epsilon \bv^+)}{x_1 + i \epsilon b_1^+} \Big) g(\xv)d\xv,
\end{equation}
where we used that $G(\zv):=(\zv\cdot\av)F(\zv)$, $\zv= \xv +i\epsilon \bv^\pm$, $\av=\ev^{(1)}$.

Now we use the notation $\xv=(x_1,\hat\xv)$ and the substitution $y = x_1/\epsilon$, and we can rewrite the right hand side of \eqref{eq:fdiff} as
\begin{multline}\label{eq:gdiff}
 \eqref{eq:fdiff}
 = \lim_{\epsilon\searrow 0}
 \int \epsilon dy\,d\hat\xv\,  \Big(\frac{G(\epsilon y +i\epsilon b_1^-,\hat\xv +i\epsilon \hat \bv^-)}{\epsilon y+i\epsilon b_1^-}-\frac{G(\epsilon y +i\epsilon b_1^+,\hat\xv +i\epsilon \hat \bv^+)}{\epsilon y+i\epsilon b_1^+}\Big)
g(\epsilon y,\hat\xv)
 \\
 =\lim_{\epsilon\searrow 0}
 \int dy\,d\hat\xv\,  \Big(\frac{G(\zv^-_{\epsilon})}{y+i b_1^-}-\frac{G(\zv^+_{\epsilon})}{ y+i b_1^+}\Big)
g(\epsilon y,\hat\xv)
 \\
 =\lim_{\epsilon\searrow 0}
 \int dy\,d\hat\xv\,  \Big(\frac{(y+ib_1^+)G(\zv^-_{\epsilon})-(y+ib^-_1)G(\zv^+_{\epsilon})}{(y+i b_1^-)(y+i b_1^+)}\Big)
g(\epsilon y,\hat\xv)
 \\
 =\lim_{\epsilon\searrow 0}
 \int dy\,d\hat\xv\,  \frac{i b_1^+ G(\zv_\epsilon^-) - i b_1^- G(\zv_\epsilon^+)
+ y \big(G(\zv_\epsilon^-) - G(\zv_\epsilon^+) \big)}{(y+ib_1^-)(y+ib_1^+)}
g(\epsilon y,\hat\xv),
\end{multline}
where $\zv_\epsilon^\pm := (\epsilon y + i \epsilon b_1^\pm , \hat\xv + i \epsilon \hat{\bv} {\vphantom{\bv}}^\pm)$.

We consider the numerator in \eqref{eq:gdiff}, and we compute:
\begin{multline}\label{absvalintegrand}
\Big\lvert \frac{i b_1^+ G(\zv_\epsilon^-) - i b_1^- G(\zv_\epsilon^+)
+ y \big(G(\zv_\epsilon^-) - G(\zv_\epsilon^+) \big)}{(y+ib_1^-)(y+ib_1^+)} \Big\rvert \\
\leq \frac{b_1^+ |G(\zv_\epsilon^-)| +  b_1^- |G(\zv_\epsilon^+)|
+ |y| \cdot \lvert \big(G(\zv_\epsilon^-) - G(\zv_\epsilon^+) \big)\rvert}{|y+ib_1^-|\cdot |y+ib_1^+|}.
\end{multline}
Since $g$ has compact support, we have that $|\xv|<c$ with some $c>0$, and therefore $|y|<c/\epsilon$, for $\epsilon\leq 1$. Since $G$ is analytic on $\tube(\ccal \cap \ucal)$ and, by the additional assumption before \eqref{eq:fdiff}, is also continuous on the closed domain $\kcal + i (\bar\ccal \cap\bar\ucal )$, we can apply the mean value theorem, and we can find a point $\zv$ in $\kcal + i (\bar\ccal \cap\bar\ucal ) $, such that
\begin{equation}
|G(\zv_\epsilon^-) - G(\zv_\epsilon^+) | = \epsilon \gnorm{\bv^+-\bv^-}{}\gnorm{\nabla G(\zv)}{},
\end{equation}
where we used that $\zv^-_\epsilon - \zv^+_\epsilon = i\epsilon(\bv^- - \bv^+)$.

By taking the supremum of $\gnorm{\nabla G(\zv)}{}$ over all $\zv$ on the domain $\kcal + i (\bar\ccal \cap\bar\ucal )$, we find
\begin{equation}
   |G(\zv_\epsilon^-) - G(\zv_\epsilon^+) |  \leq \epsilon \gnorm{\bv^+-\bv^-}{} \;\sup \big\{ \gnorm{\nabla G(\zv)}{} : \zv \in \kcal + i (\bar\ccal \cap\bar\ucal ) \big\}.
\end{equation}
Since by the additional assumption, $G, \nabla G$ are continuous functions on the compact domain $\kcal + i (\bar\ccal \cap\bar\ucal )$, then they are bounded, which means that the supremum in the above equation is finite and $|G(\zv_\epsilon^\pm)|$ is bounded as well;

Hence, we obtain a majorant in \eqref{absvalintegrand}, which is also integrable. Then we can apply the dominated convergence theorem, and we can bring the limit $\epsilon\searrow 0$ inside the integral sign, we find
\begin{multline}\label{fdiffcomput}
 \eqref{eq:fdiff}
 =  \int dy d\hat\xv\, \Big(\frac{ G(0,\hat\xv)}{y+ib_1^-}-\frac{ G(0,\hat\xv)}{y+ib_1^+}\Big)g(0,\hat\xv)\\
=\int dy\,  \Big(\frac{1}{y+ib_1^-}-\frac{1}{y+ib_1^+}\Big)\int d\hat\xv\,G(0,\hat\xv) g(0,\hat\xv)
\\
 =\int dy \, \frac{i b_1^+ - i b_1^-}{(y+ib_1^-)(y+ib_1^+)}
\int d\hat\xv\, G(0,\hat\xv) g(0,\hat\xv).
\end{multline}
We can solve the integral in $y$ applying the Jordan's lemma. Using $\pm b_1^\pm > 0$, the pole $y=-ib_1^-$ is in the upper half complex plane of $y$ and the pole $y=-ib_1^+$ is in the lower half complex plane of $y$. Hence, we have
\begin{equation}
\int dy \, \frac{i b_1^+ - i b_1^-}{(y+ib_1^-)(y+ib_1^+)} = 2i \pi \res_{y=-ib_1^-}\Big( \frac{i b_1^+ - i b_1^-}{(y+ib_1^-)(y+ib_1^+)} \Big)=2\pi i.
\end{equation}
Inserting in \eqref{fdiffcomput}, we find
\begin{equation}\label{fdifffinale}
\eqref{eq:fdiff} =\int dy \, \frac{i b_1^+ - i b_1^-}{(y+ib_1^-)(y+ib_1^+)}
\int d\hat\xv\, G(0,\hat\xv) g(0,\hat\xv)\\
=2\pi i\int d\hat\xv\, G(0,\hat\xv) g(0,\hat\xv).
\end{equation}
Therefore,
\begin{multline}
 \int \Big( F(\xv+i0 \bv^-)-F(\xv+i0 \bv^+) \Big) g(\xv)d\xv = 2\pi i\int d\hat\xv\, G(0,\hat\xv) g(0,\hat\xv)\\
=2\pi i\int d\hat\xv \int dx_1\,\delta(x_1) G(\xv) g(\xv),
\end{multline}
which we can rewrite in the sense of distributions as
\begin{equation}
F(\xv+i0 \bv^-)-F(\xv+i0 \bv^+)=2\pi i \delta(x_1)G(\xv)=2\pi i \delta(\xv \cdot \av) G(\xv)=2\pi i \delta(\xv \cdot \av)\res_{\zv\cdot \av=0}F(\xv),
\end{equation}
where in the second equality we used that $\av=\ev^{(1)}$, and in the third equality that $G(\zv)=\res_{\zv\cdot\av=0}F(\zv)$. We note that, due to our continuity assumption and equation $G(\zv)=\res_{\zv\cdot\av=0}F(\zv)$, the residue is a continuous function for $\zv\cdot\av=0$, so we can omit the factor $+ i 0 \bv^\bot$ in the argument of $F$.

This proves equation \eqref{eq:onepole} with our additional continuity hypothesis.

Now, we consider the general case, without the additional continuity hypothesis. Also in this case, we can replace $\ucal$ with a smaller ball $\ucal'$, such that $\bar{\ucal'}\subset \ucal$: Indeed, the assertion of Lemma~\ref{lemma:onepole} does not require a specific neighbourhood of zero, since equation \eqref{eq:onepole} needs to hold in the limit $\epsilon\rightarrow 0$.

Since $\ccal$ is open and $\bv^+,\bv^-,\bv^\bot\in\ccal$, then $\ccal$ must also contain a small neighbourhood of each vector $\bv^+,\bv^-,\bv^\bot$. Therefore we can choose a smaller open convex cone $\ccal'$, such that $\bv^+,\bv^-,\bv^\bot\in\ccal'$, and $\bar\ccal' \subset \ccal$. Note that replacing $\ccal$ with a smaller cone $\ccal'$ does not change the assertion of Lemma~\ref{lemma:onepole}: In particular the scalar products of those vectors with $\av$ and equation \eqref{eq:onepole} read still the same.

By hypothesis we have that $G$ and $\nabla G$ are continuous on $\kcal + i (\ccal \cap \ucal)$; since $\bar{\ccal'}\subset \ccal$ and $\bar{\ucal'}\subset \ucal$, this implies that they are continuous on $\kcal + i (\bar{\ccal'} \cap \bar{\ucal'})$, except possibly at $\im \zv = 0$, because there is no neighbourhood of zero which is contained in $\ccal$. Now, to keep the notation simpler, we omit the prime indices, and we write that the functions $G$ and $\nabla G$ are continuous on $\kcal + i (\bar\ccal \cap \bar\ucal)$ except possibly at $\im \zv = 0$ (this change of notation should not confuse the reader since we can choose the new domains $\ccal'$ and $\ucal'$ without loss of generality).

By hypothesis $F$ is meromorphic in $\tube(\ucal)$, so $G, \nabla G$ are meromorphic in $\tube(\ucal)$ as well; this implies that they are locally given as a quotient of two analytic functions, where the denominator has zeros of finite order; therefore, $G, \nabla G$ possibly diverge at $\im \zv = 0$ like an inverse power of $\im \zv $. That is, we can find $c>0$, $\ell>0$ such that
\begin{equation}\label{eq:gpoly}
   |G(\zv)| + \gnorm{\nabla G(\zv)}{} \leq c \gnorm{\im \zv}{}^{-\ell} \quad \text{for all }
  \zv \in \kcal + i (\bar\ccal \cap \bar\ucal), \; \im \zv \neq 0.
\end{equation}
Now we denote with $\partial_\bot = \bv^\bot\cdot\nabla$ the partial derivative in the direction of $\bv^\bot$, and with $G^{(-m)}$ the $m$th-order antiderivative of $G$ with respect to that direction. We construct this antiderivative by repeated integration along lines connecting different points in the domain of $G$; the convexity of $\ccal$ guarantees that these lines are in the domain of $G$, and so that we can construct such antiderivatives.

Due to \eqref{eq:gpoly}, we know that by integrating repeatedly with sufficiently large $m$, we can remove the divergences of $G,\nabla G$, and obtain that both $G^{(-m)}$ and $\nabla G^{(-m)}$ are continuous on $\kcal + i (\bar\ccal \cap \bar\ucal)$, including the points where $\im \zv = 0$. (see for example~\cite[Thm.~IX.16]{ReedSimon:1975-2} for details on this technique.)

We compute
\begin{equation}
\int F(\xv + i \epsilon \bv^\pm) g(\xv) d\xv = \int \frac{G(\xv + i \epsilon \bv^\pm)}{x_1 + i \epsilon b^\pm_1 }  g (\xv) \, d\xv,
\end{equation}
where we used that $G(\zv):=(\zv\cdot\av)F(\zv)$, $\zv= \xv +i\epsilon \bv^\pm$, $\av=\ev^{(1)}$.

Now we use integration by parts. Consider that, since $\bv^\bot \cdot\av=0$, $x_1$ and $\partial_\bot$ refer to mutually orthogonal directions, and therefore the derivative $\partial_\bot$ does not apply to the denominator $x_1 + i \epsilon b^\pm_1$. Note also that the boundary term is zero because of the support properties of $g$. Hence, we have
\begin{equation}
   \int F(\xv + i \epsilon \bv^\pm) g(\xv) d\xv =
   (-1)^m \int \frac{G^{(-m)}(\xv + i \epsilon \bv^\pm)}{x_1 + i \epsilon b^\pm_1 }  \partial_\bot^m g (\xv) \, d\xv.
\end{equation}
Now we can apply all the previous analysis to $G^{(-m)}$, $ \partial_\bot^m g$ in place of $G,g$, until equation \eqref{fdifffinale}. This gives:
\begin{equation}\label{derivantiderivGg}
 \int \Big( F(\xv+i0 \bv^-)-F(\xv+i0 \bv^+) \Big) g(\xv)d\xv = (-1)^m 2 \pi i \int G^{(-m)}(0,\hat\xv) \partial_\bot^m g(0,\hat\xv)\, d\hat\xv.
\end{equation}
We would like now to get rid of the derivatives and antiderivatives of $G,g$ using integration by parts once more; but we have an obstruction: Since $G$ diverges at $\im \zv = 0$ , the function $G^{(-m)}$ is not $m$-times differentiable at $\im \zv = 0$.

To solve this problem, we note that since $G^{(-m)}$ is continuous on $\kcal + i (\bar\ccal \cap \bar\ucal)$, we can write $G^{(-m)}(0,\hat\xv) = \lim_{\epsilon \to 0} G^{(-m)}(0,\hat\xv + i \epsilon \hat\bv\vphantom{\bv}^\bot)$ (where on the left hand side $G^{(-m)}$ is evaluated on the boundary of the cone, in particular at $\im \zv = 0$, and on the right hand side it is evaluated in the interior of the cone). Inserting this into \eqref{derivantiderivGg}, and bringing the limit $\epsilon \to 0$ outside the integral sign (the dominated convergence theorem enters here), we find
\begin{multline}
 \int \Big( F(\xv+i0 \bv^-)-F(\xv+i0 \bv^+) \Big) g(\xv)d\xv\\
  = (-1)^m 2 \pi i \lim_{\epsilon \to 0}\int G^{(-m)}(0,\hat\xv + i \epsilon \hat\bv\vphantom{\bv}^\bot) \partial_\bot^m g(0,\hat\xv)\, d\hat\xv.
\end{multline}
We notice that $G^{(-m)}(0,\hat\xv + i \epsilon \hat\bv\vphantom{\bv}^\bot)$ is $m$-times differentiable in $(0,\hat\xv)\in \kcal$ (since we added a non-zero imaginary part to the argument of $G^{(-m)}$). Now, we can integrate by parts, and find
\begin{equation}
\int \Big( F(\xv+i0 \bv^-)-F(\xv+i0 \bv^+) \Big) g(\xv)d\xv = 2 \pi i \lim_{\epsilon \to 0}\int G(0,\hat\xv + i \epsilon \hat\bv\vphantom{\bv}^\bot) g(0,\hat\xv)\, d\hat\xv.
\end{equation}
Using that $b_1^\bot=\bv^\bot \cdot\av=0$, we find
\begin{equation}
\int \Big( F(\xv+i0 \bv^-)-F(\xv+i0 \bv^+) \Big) g(\xv)d\xv =2 \pi i \lim_{\epsilon \to 0}\int \delta(x_1) G(\xv + i \epsilon \bv\vphantom{\bv}^\bot) g(\xv)\, d\xv.
\end{equation}
Therefore,
\begin{equation}
\int \Big( F(\xv+i0 \bv^-)-F(\xv+i0 \bv^+) \Big) g(\xv)d\xv =2 \pi i \int\delta(\xv \cdot \av) \res_{\zv \cdot \av=0}F(\xv+i 0 \bv\vphantom{\bv}^\bot )g(\xv)\, d\xv.
\end{equation}
This gives the result \eqref{eq:onepole}.
\end{proof}
Using the lemma above, we try to write a similar formula in the case of a function which has first-order poles at several distinct hyperplanes.

\begin{proposition}\label{proposition:multivarres}
  Let $\ucal \subset \rbb^k$ be a neighbourhood of zero, $\ccal \subset \rbb^k$ an open convex cone, and $\av_1,\ldots,\av_p \in \rbb^k$ pairwise different.
  Let $F$ be meromorphic on $\tube(\ucal)$ and $(\zv \cdot \av_1)\cdots (\zv \cdot \av_p)F(\zv)$ analytic on $\tube(\ccal \cap \ucal)$.
  For any $M\subset \{1,\ldots,p\}$, let $\bv_M\in\ccal$ such that $\av_j\cdot\bv_M=0$ if $j\in M$, $\av_j\cdot\bv_M>0$ if $j \not\in M$.
  Let $\cv \in \ccal$ such that $\av_j\cdot\cv <0$ for all $j$. Then it holds that
\begin{equation}\label{sevdimres}
F(\xv+i0\, \cv)=\sum_{M\subset\left\{ 1,\ldots,p\right\}}(2i\pi)^{|M|}\Big( \prod_{m\in M}\delta(\xv\cdot \av_{m})\Big) \res_{\zv\cdot \av_{m_{1}}=0}\ldots \res_{\zv\cdot \av_{m_{|M|}}=0}F(\xv+i0\,\bv_{M})
\end{equation}
with the notation $M=\{m_1,\ldots,m_{|M|}\}$.
\end{proposition}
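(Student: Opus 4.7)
The plan is to prove the formula by induction on $p$, using Lemma~\ref{lemma:onepole} to peel off one pole hyperplane at each step. The base case $p=0$ will be immediate: $F$ is analytic on $\tube(\ccal\cap\ucal)$, so $F(\xv+i0\cv)=F(\xv+i0\bv_\emptyset)$ and the only term in the sum will be $M=\emptyset$.

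For the inductive step, I would single out the hyperplane $\zv\cdot\av_p=0$ and select auxiliary directions $\tilde\cv,\bv'\in\ccal$ with $\av_p\cdot\tilde\cv>0$, $\av_p\cdot\bv'=0$, and $\av_j\cdot\tilde\cv<0$, $\av_j\cdot\bv'<0$ for all $j<p$. Passing to the sub-cone $\tilde\ccal=\ccal\cap\bigcap_{j<p}\{\bv:\av_j\cdot\bv<0\}$, on which the other pole hyperplanes are separated from $\tube(\tilde\ccal\cap\ucal)$, $F$ is effectively singular only at $\av_p\cdot\zv=0$, so Lemma~\ref{lemma:onepole} applies with $\av=\av_p$, $\bv^-=\cv$, $\bv^+=\tilde\cv$, $\bv^\bot=\bv'$ to yield
\begin{equation*}
   F(\xv+i0\cv) = F(\xv+i0\tilde\cv) + 2\pi i\,\delta(\xv\cdot\av_p)\,\res_{\zv\cdot\av_p=0}F(\xv+i0\bv').
\end{equation*}

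I would then apply the inductive hypothesis to each summand. For $F(\xv+i0\tilde\cv)$, I would restrict to the sub-cone $\ccal'=\ccal\cap\{\bv:\av_p\cdot\bv>0\}$, on which the $\av_p$-pole is inaccessible, so that $F$ becomes meromorphic with only the $p-1$ pole hyperplanes $\av_1,\ldots,\av_{p-1}$; the induction then yields the sum over $M\subset\{1,\ldots,p-1\}$, with each $\bv_M\in\ccal'$ matching the prescription in the statement (since $p\notin M$). For $\res_{\zv\cdot\av_p=0}F(\xv+i0\bv')$, the residue is meromorphic on the codimension-one subspace $\Sigma_p=\{\av_p\cdot\zv=0\}$ with $p-1$ pole hyperplanes (the traces $\Sigma_p\cap\{\av_j\cdot\zv=0\}$), and the induction on $\Sigma_p$---with starting direction $\bv'$ and target $\bv_{\{p\}}$---produces a sum over $M'\subset\{1,\ldots,p-1\}$ in which each term carries the outermost residue at $\av_p\cdot\zv=0$ and is evaluated from $\bv_{M'\cup\{p\}}$. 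Combining, the $M\not\ni p$ contributions will come from the first inductive application and the $M\ni p$ contributions from the second, together reproducing the right-hand side of \eqref{sevdimres}.

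The hardest part will be ensuring that the auxiliary direction $\tilde\cv$ and the projection $\bv'$ can actually be chosen within $\ccal$ with the required sign pattern---the ``intermediate chamber'' problem, which is not automatic from the hypothesis that only the $\bv_M$ are prescribed in $\ccal$. I would handle this either by enlarging $\ccal$ to a wider cone on which the analyticity of $F$ still holds, or by choosing at each inductive step a hyperplane to peel off for which the needed intermediate chamber is nonempty; in concrete applications such as Sec.~\ref{sec:fshifted}, the $\av_j$ have enough structure that such choices are straightforward. A secondary issue will be interpreting iterated residues at transversal intersections when the $\av_j$ are not in general position: in degenerate cases (e.g.\ two hyperplanes parallel) the corresponding term's residue vanishes and consistency is restored.
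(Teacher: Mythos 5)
Your overall strategy---induction on $p$, using Lemma~\ref{lemma:onepole} to peel off one pole hyperplane per step, then applying the inductive hypothesis separately to the shifted term (on a sub-cone avoiding $\av_p$) and to the residue (on the hyperplane $\av_p\cdot\zv=0$)---is exactly the route the paper takes. Your base case $p=0$ versus the paper's $p=1$ is an inessential difference.

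The genuine gap is the ``intermediate chamber'' issue you flag but do not resolve. Fixing the hyperplane to peel off as $\av_p$ (a fixed label) does \emph{not} guarantee that $\tilde\cv,\bv'\in\ccal$ with the required sign pattern exist: it is entirely possible that $\ccal\cap\{\av_p\cdot\bv>0\}\cap\bigcap_{j<p}\{\av_j\cdot\bv<0\}$ is empty. Of your two proposed fixes, enlarging $\ccal$ is not available---the hypothesis only says where $\prod_j(\zv\cdot\av_j)F$ is analytic, and you cannot extend that for free---and your second suggestion (``choosing at each inductive step a hyperplane for which the intermediate chamber is nonempty'') is the right idea but is left unsubstantiated. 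The paper supplies the missing mechanism: since $\ccal$ is convex, the segment from $\cv$ to $\bv_\emptyset$ lies in $\ccal$; since $\av_j\cdot\cv<0$ and $\av_j\cdot\bv_\emptyset>0$ for every $j$, each form $\av_j\cdot(\cdotarg)$ changes sign along the segment; after a small deformation (still inside the open set $\ccal$) to separate any simultaneous crossings, one peels off the $\av_j$ whose hyperplane is crossed \emph{first}. A point $\cv'$ on the segment just past that first crossing then satisfies $\av_1\cdot\cv'>0$, $\av_j\cdot\cv'<0$ for $j\geq 2$, and lies in $\ccal$, so the cone $\ccal^-=\ccal\cap\bigcap_{j\geq 2}\{\av_j\cdot\zv<0\}$ is automatically nonempty and the needed auxiliary directions exist. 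Without this (or an equivalent) argument, the inductive step in your proposal does not go through in general, even if it happens to go through in the concrete application of Sec.~\ref{sec:fshifted}.
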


\begin{proof}

We prove this proposition by using induction on $p$. For $p=1$, \eqref{sevdimres} reduces to
\begin{equation}
F(\xv+i0\, \cv)= F(\xv+i0\,\bv_{\emptyset})+(2i\pi)\delta(\xv\cdot \av_{1})\res_{\zv\cdot \av_{1}=0}F(\xv+i0\,\bv_{\{1\}}).
\end{equation}
This follows directly from Lemma~\ref{lemma:onepole} with $\bv^+ = \bv_{\emptyset}$, $\bv^-=\cv$, $\bv^\bot=\bv_{\{1\}}$.

Now we assume that \eqref{sevdimres} holds for $p-1$ in place of $p$.

Since $\ccal$ is convex, we have by definition of convex domain that the straight line from $\cv$ to $\bv_\emptyset$ is contained in $\ccal$. Since the functions $\av_j\cdot\zv$ fulfils $\av_j\cdot\cv <0$ on the end point $\cv$ of this line and $\av_j\cdot\bv_\emptyset$ on the other end point $\bv_\emptyset$ of the line, and they are continuous between these two points, then they have at least one zero ($\av_j\cdot\zv=0$) along the line; we choose the $\av_j$ which gives the first zero and we rename it $\av_1$. If we have several zeros which are attained at the same time, then we can perform a small deformation of the path within the open set $\ccal$ so that the zeros are isolated along the path. We have that the function $(\zv\cdot\av_1) F(\zv) $ is analytic within the tube over the cone $\ccal^- := \{ \xv \in \ccal : \xv\cdot\av_j < 0 \text{ for } j \geq 2 \}$. (It is analytic because $(\zv\cdot\av_j)^{-1}$, $j\geq 2$, is analytic on that domain.) We also note that $\ccal^-$ is convex and open since it is the intersection of two convex open sets: $\ccal^- = \ccal \cap \{ \xv \cdot \av_2<0\}\cap \ldots \cap \{ \xv \cdot \av_p <0\}$.

After we have possibly renumbered the vectors $\av_j$ (see above), we can choose $\cv'\in\ccal^-$ such that $\av_1\cdot \cv' >0$, but $\av_j\cdot \cv' <0$ for $j \geq 2$.

We apply Lemma~\ref{lemma:onepole} with $\bv^+ = \cv'$, $\bv^-=\cv$, and we find
\begin{equation}\label{eq:ccp}
 F(\xv + i  0 \cv) = F(x + i 0 \cv') + 2 \pi i \delta(\xv\cdot\av_1) \res_{\zv \cdot \av_1=0} F(x + i 0 \cv'')
\end{equation}
where we choose $\cv''\in\ccal^-$ such that $\av_1\cdot \cv''=0$. We can apply to the term $F(x + i 0 \cv')$ the induction hypothesis (namely equation \eqref{sevdimres} with $p-1$ in place of $p$) with respect to the cone $\ccal^+:=\{ \xv \in \ccal : \xv\cdot\av_1 > 0 \}$, where $(\zv \cdot \av_2)\cdots (\zv \cdot \av_p)F(\zv)$ is analytic. This gives
\begin{equation}\label{eq:fcpinduction}
F(\xv+i0 \cv')=\sum_{M\subset\left\{ 2,\ldots,p\right\}}(2i\pi)^{|M|}\Big( \prod_{m\in M}\delta(\xv\cdot \av_{m})\Big) \res_{\zv\cdot \av_{m_{1}}=0}\ldots \res_{\zv\cdot \av_{m_{|M|}}=0}F(\xv+i0\,\bv_{M}).
\end{equation}
Moreover, we have that the residue of $F$ in \eqref{eq:ccp} is a meromorphic function on the hyperplane $\zv\cdot\av_1=0$, which we can identify with $\cbb^{k-1}$; we also have that the function is analytic when we multiply it with $(\zv\cdot\av_2)\cdots(\zv\cdot\av_p)$. So, we apply the induction hypothesis (namely equation \eqref{sevdimres} with $p-1$ in place of $p$) with respect to the cone $\ccal^0:=\{ \xv \in \ccal : \xv\cdot\av_1 = 0 \}$, and we have
\begin{multline}\label{eq:fresinduction}
\res_{\zv\cdot\av_1=0} F(\xv+i0 \cv'')\\
=\sum_{M\subset\left\{ 2,\ldots,p\right\}}(2i\pi)^{|M|}\Big( \prod_{m\in M}\delta(\xv\cdot \av_{m})\Big) \res_{\zv\cdot \av_{m_{1}}=0}\ldots \res_{\zv\cdot \av_{m_{|M|}}=0}\res_{\zv\cdot \av_1=0}F(\xv+i0\,\bv_{M\cup\{1\}}).
\end{multline}
Inserting \eqref{eq:fcpinduction} and \eqref{eq:fresinduction} into \eqref{eq:ccp}, we find
\begin{multline}
F(\xv+i0\cv)=
\sum_{M\subset \{ 2,\ldots,p\}}(2\pi i)^{|M|}\Big(\prod_{m\in M}\delta(\zv\cdot \av_{m})\Big)\res_{\zv\cdot \av_{m_{1}}=0}\ldots \res_{\zv\cdot \av_{m_{|M|}}=0}F(\xv+i0\,\bv_{M})\\
+2\pi i\delta(\xv\cdot \av_{1})\sum_{M\subset \{ 2,\ldots,p\}}(2\pi i)^{|M|}\Big(\prod_{m\in M}\delta(\xv\cdot \av_{m})\Big) \times \\ \times \res_{\zv\cdot \av_{m_{1}}=0}\ldots \res_{\zv\cdot \av_{m_{|M|}}=0}\res_{\zv\cdot \av_{1}=0}F(\xv+i0\,\bv_{M \cup \{ 1\}}),\label{seclin}
\end{multline}
that we can rewrite as
\begin{multline}
F(\xv+i0\,\cv)=
\sum_{M\subset \{ 2,\ldots,p\}}(2\pi i)^{|M|}\Big(\prod_{m\in M}\delta(\xv\cdot \av_{m})\Big)\res_{\zv\cdot \av_{m_{1}}=0}\ldots \res_{\zv\cdot \av_{m_{|M|}}=0}F(\xv+i0\,\bv_{M})\\
+\sum_{M\subset \{ 2,\ldots,p\}}(2\pi i)^{|M|+1}\Big(\prod_{m\in M\cup \{ 1\}}\delta(\xv\cdot \av_{m})\Big)\res_{\zv\cdot \av_{1}=0}\res_{\zv\cdot \av_{m_{1}}=0}\ldots \res_{\zv\cdot \av_{m_{|M|}}=0}F(\xv+i0\,\bv_{M\cup\{ 1\}}).\label{1MCM}
\end{multline}
Fix a subset $M'$ of the set $\{1,\ldots,p\}$. Either $1$ is in $M'$ or $1$ is not in $M'$. If $1$ is in $M'$, then we consider the second line of \eqref{1MCM}. Set $M':=M\cup \{ 1\}$, and relabel the summation index $M$ in the second line of \eqref{1MCM} as $M\cup \{ 1\}$, then the second line of \eqref{1MCM} reads
\begin{multline}
\text{ Second line of }\eqref{1MCM}\\
=\sum_{\substack{ {M'\subset\left\{ 1,\ldots,p\right\}} \\ {M' \text{ contains } 1}}}(2i\pi)^{|M'|}\Big(\prod_{m\in M'}\delta(\xv\cdot \av_{m})\Big)\res_{\zv\cdot \av_{m_{1}}=0}\ldots \res_{\zv\cdot \av_{m_{|M'|}}=0}F(\xv+i0\,\bv_{M'}).
\end{multline}
If $1$ is not in $M'$, then we consider the first line of \eqref{1MCM}, we have
\begin{equation}
\text{ First line of }\eqref{1MCM}=\sum_{\substack{M'\subset\left\{ 1,\ldots,p\right\}\\ M' \text{ not contain } 1}} \Big(\prod_{m\in M'}\delta(\xv\cdot \av_{m})\Big)\res_{\zv\cdot \av_{m_{1}}=0}\ldots \res_{\zv\cdot \av_{m_{|M|}}=0}F(\xv+i0\,\bv_{M}).
\end{equation}
Summing these two terms, we find
\begin{equation}
F(\xv+i0\,\cv)=\sum_{M''\subset\left\{ 1,\ldots,p\right\}}\Big(\prod_{m\in M''}\delta(\xv\cdot \av_{m})\Big)\res_{\zv\cdot \av_{m_{1}}=0}\ldots \res_{\zv\cdot \av_{m_{|M''|}}=0}F(\xv+i0\,\bv_{M''}),
\end{equation}
which gives the proposed result \eqref{sevdimres}.
\end{proof}

\chapter{CR functions on graphs} \label{sec:graphs}

Most of the material in this section, until Lemma~\ref{lemma:maxmodcross} included, is due to H. Bostelmann.

We call a \emph{graph} $\gcal$ in $\rbb^k$, a collection of points in $\rbb^k$, that we call the \emph{nodes}, together with a set of straight lines which connect some of these nodes, that we call the \emph{edges}.

In our case, the nodes will always be points on the lattice $\pi \zbb^k$, and the edges will always be lines between nodes which are next neighbours and parallel to the axis; this means that the edges are parametrized by $\lambdav(s) = \boldsymbol{\nu} + s \ev^{(j)}$, where $\ev^{(j)}$ is a standard basis vector of $\rbb^k$, where $0 < s < \pi$, and where $\boldsymbol{\nu}$ and $\boldsymbol{\nu} + \pi \ev^{(j)}$ are nodes of $\gcal$.
We call the \emph{tube over $\gcal$} the set of all $\zetav = \thetav + i \lambdav$ with $\thetav \in \rbb^k$ and $\lambdav$ on an edge of $\gcal$. We denote this tube by $\tube(\gcal)$.

We call a \emph{CR function $F$ on $\tube(\gcal)$} a smooth function on $\tube(\gcal)$ which is analytic along the edges; namely, if we consider an edge $\lambdav(s)$ parametrized as above, we have that $F$ is analytic in $\zeta_j$ along that edge, and it is smooth in all (real) variables.

Moreover we require that the boundary values of $F$ and also of all its derivatives exist at the nodes, $s \searrow 0$ and $s \nearrow \pi$,  and that where several edges meet in a common node, the different limits of $F$ along the different edges agree. This means that we can speak of \emph{the} boundary value at a node, without that we indicate the direction of the limit. But in the case that several graphs play a role, we will denote by $F(\nuv)\vert_\gcal$ the boundary value at node $\nuv$ which is obtained within $\gcal$.

A \emph{CR distribution $F$ on $\tube(\gcal)$}, correspondingly, is an analytic function along the edges and a $\dcal(\rbb^{k-1})'$ distribution in the remaining real variables.\footnote{%
See~\cite[Ch.~I, Appendix 2, \S{}3]{GelfandShilov:1964vol1} regarding a discussion of distributions which depends analytically on a parameter.} As above, we also require that all boundary values at nodes exist in the sense of distributions, and that they agree where several edges meet in a common node.

Now we will obtain some general properties of CR functions on $\tube(\gcal)$; these properties are mainly extensions of standard results to our framework.

First, we note that we can make CR distributions ``regular'' by convoluting them with test functions. Indeed, let $F$ be a CR distribution on $\gcal$, and let $g = (g_1,\ldots,g_k) \in \dcal(\rbb)^k$, we define
\begin{equation}\label{eq:fgconvolute}
     (F \ast g)(\zetav) := \int F(\zetav - \xiv) g_1(\xi_1) \ldots g_k(\xi_k) \, d^k\xiv.
\end{equation}
We can show that $F \ast g$ is a CR \emph{function} on $\tube(\gcal)$. For this, we need to show that $F \ast g$ is a smooth function in all (real) variables $\re \zetav$, and therefore that the following derivative exists for any $j=1,\ldots,k$:
\begin{equation}
\frac{\partial (F \ast g)(\zetav)}{\partial \re\zeta_j} =\frac{\partial}{\partial \re\zeta_j}\int F(\zetav - \xiv) g_1(\xi_1) \ldots g_k(\xi_k) \, d^k\xiv.
\end{equation}
By renaming of the variables, we can shift $\re\zetav$ into the argument of $g$ ($\zetav= \thetav +i\lambdav$).
\begin{equation}\label{renameFstarg}
\frac{\partial (F \ast g)(\zetav)}{\partial \theta_j} =\frac{\partial}{\partial \theta_j}\int F(\xiv' +i\lambdav) g_1(\theta_1 - \xi'_{1}) \ldots g_k(\theta_k-\xi'_{k}) \, d^k\xiv'.
\end{equation}
Since a distribution is a continuous functional in the test functions $g_j$ in the $\dcal(\rbb)$ topology, we can move the derivative to the test function $g$, and note that the functions $g$ are differentiable.
\begin{equation}
\frac{\partial (F \ast g)(\zetav)}{\partial \theta_j} =-\int F(\xiv' +i\lambdav) g_1(\theta_1 - \xi'_{1}) \ldots \frac{\partial}{\partial \theta_j}g_j(\theta_j-\xi'_{j}) \ldots g_k(\theta_k-\xi'_{k}) \, d^k\xiv'.
\end{equation}
To prove that $F \ast g$ is a CR function on $\tube(\gcal)$, we also need to show that $F \ast g$ is an analytic function on the edges of the graph $\gcal$, but this follows directly from the fact that $F$ is analytic along those edges.

Moreover, we have that CR distributions fulfils a version of the tube theorem (see \cite{Bochner:1938}, \cite{BochnerMartin:1938}), namely they can be extended to the convex hull of the graph. In order to formulate this theorem, we introduce further notation: we denote with $\bar \gcal \subset \rbb^k$ the closure of the edges of $\gcal$, namely it is the edges together with the nodes, as a subset of $\rbb^k$.  We call $\cch \gcal$ the closed convex hull of $\bar\gcal$, we denote with $\ich \gcal$ the interior of the convex hull, and following a notation used in \cite{Kazlow:1979}, we define the \emph{almost convex hull} $\ach \gcal := (\ich\gcal) \cup \bar\gcal$.

\begin{lemma} \label{lem:graphtube}
  Let $\gcal$ be a connected graph and $F$ a CR distribution on $\tube(\gcal)$. Then, $F$ extends to an analytic function on $\ich{\gcal}$ with distributional boundary values on $\ach \gcal$.
\end{lemma}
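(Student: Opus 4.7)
The plan is to reduce the statement to the case of CR functions via convolution regularization, then combine two classical tools from several complex variables: an edge-of-the-wedge theorem to extend $F$ across the nodes of $\gcal$, followed by Bochner's tube theorem to fill in the convex hull of the resulting open set. Connectedness of $\gcal$ is used to ensure the base of the resulting tube is connected, which is needed to apply Bochner.

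First I would regularize. Given a CR distribution $F$ on $\tube(\gcal)$, form $F \ast g$ as in \eqref{eq:fgconvolute} for $g=(g_1,\ldots,g_k)\in\dcal(\rbb)^k$; as noted in the text preceding the lemma, this is a CR \emph{function}. It suffices to prove the extension statement for CR functions: choosing a delta sequence $g^{(n)}$, the regularizations $F \ast g^{(n)}$ produce consistent analytic extensions on $\tube(\ich\gcal)$, and one recovers the distributional extension of $F$ by passing to the limit $g^{(n)}\to \delta$, using uniqueness of analytic continuation to ensure that the extensions so constructed are compatible with the convolution integral. Thus from now on I treat $F$ as a CR function.

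Next, at each node $\nuv$ of $\gcal$, I would apply an edge-of-the-wedge theorem. Each edge meeting $\nuv$ is a coordinate segment $\nuv \pm s \ev^{(j)}$ along which $F$ is the boundary value of a function analytic in the single variable $\zeta_j$; the hypothesis that all these boundary values coincide at $\nuv$ provides the matching condition needed for the classical edge-of-the-wedge theorem (e.g.\ in the form of \cite{Eps:edge_of_wedge}) to yield analytic continuation of $F$ to a full open complex neighbourhood $\ncal_\nuv$ of $\rbb^k + i\{\nuv\}$ in $\cbb^k$. Carrying out this extension at every node, together with the tube over the edges themselves, produces an open set of the form $\tube(U)$ on which $F$ is analytic, where $U \subset \rbb^k$ is an open neighbourhood of $\bar\gcal$; since $\gcal$ is connected, $U$ may be taken connected.

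Now I would invoke Bochner's tube theorem \cite{Bochner:1938,BochnerMartin:1938}: with $U$ a connected open subset of $\rbb^k$, every function analytic on $\tube(U)$ extends uniquely to an analytic function on $\tube(\cch U)$, and hence in particular on its interior $\tube(\ich \cch U)$. Because $U \supset \bar\gcal$, we have $\cch U \supset \cch \gcal$, so $\ich \gcal \subset \ich \cch U$, and $F$ extends analytically to $\tube(\ich \gcal)$ as required. The distributional boundary values on $\bar\gcal \subset \ach \gcal = (\ich\gcal) \cup \bar\gcal$ are inherited directly from the CR-distribution structure of $F$, which gives a boundary-value formulation on the entire closure of the graph.

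The main obstacle will be the precise distributional edge-of-the-wedge step together with the regularization limit: one must make sure that the neighbourhoods $\ncal_\nuv$ and the bounds on the extensions of $F \ast g^{(n)}$ are uniform enough in the test function $g$ to guarantee that the boundary-value extensions of the regularizations converge, as $g^{(n)}\to\delta$, to a genuine distribution on $\tube(\ich\gcal)$. Once the regularized case is handled and the local-to-global step is controlled via Bochner, the rest of the argument is essentially bookkeeping about the graph geometry and the identification $\cch U \supset \cch\gcal$.
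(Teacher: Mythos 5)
Your regularization step and the final invocation of Bochner agree with the paper, but the crucial intermediate step fails: the classical edge-of-the-wedge theorem does not apply at the nodes of $\gcal$, so you cannot fatten $\bar\gcal$ to an open set $U\subset\rbb^k$.

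The issue is twofold. First, the domains are wrong. The tube $\tube(\gcal)$ sits over a one-dimensional graph in $\rbb^k$, so along each edge $F\ast g$ is holomorphic in exactly \emph{one} complex variable $\zeta_j$ and merely smooth in the other $k-1$ real variables; the edge tube is a set of real codimension $k-1$ in $\cbb^k$, not an open wedge $\rbb^k+iC$ over an open cone $C$. The edge-of-the-wedge theorem of \cite{Eps:edge_of_wedge} requires two genuinely open tube domains (or wedges) with matching boundary values on their common flat edge; a pair of thin edge-tubes does not satisfy its hypotheses. Second, even ignoring that, the edges emanating from a typical node of the graphs used here do not point in opposite directions. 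For instance, at the node $\lambdav^{(2,1)}=(0,\pi)$ of $\gcal_0^2$ the two incident edges go in the $-\ev^{(2)}$ and $+\ev^{(1)}$ directions, which are orthogonal; these directions do not surround the node, so no version of the classical edge-of-the-wedge theorem can produce an open complex neighbourhood of $\rbb^k+i\{\nuv\}$. This is why the geometry of $\tube(\gcal)$ falls outside the scope of the classical Bochner/edge-of-the-wedge toolkit: a CR function on a tube over a skeletal, one-dimensional set is a strictly weaker datum than a pair of holomorphic functions on opposite open wedges.

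The missing ingredient that the paper uses is precisely \cite[Theorem~6.1]{Kazlow:1979}, a Hartogs-type extension theorem for CR functions on tubes over connected, locally starlike (and in particular one-dimensional) sets: it asserts a bijection between CR functions on $\tube(\bar\gcal)$ and CR functions on $\tube(\ach\gcal)$, with analyticity in the interior $\tube(\ich\gcal)$. The paper's route is therefore: show $F\ast g$ is a Whitney-smooth CR function on $\tube(\bar\gcal)$ (using Morera to continue across collinear pairs of edges and an explicit product construction to build a Whitney extension at a general node), apply Kazlow's theorem, and then recover the distributional statement by tracking the continuity of $g\mapsto (F\ast g)(i\lambdav)$ and an appeal to Schwartz's characterisation of distributional solutions of the Cauchy--Riemann equations. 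Replacing Kazlow's theorem with edge-of-the-wedge plus Bochner as you propose would only be correct if you could first establish a ``thin tube'' extension result of comparable strength, which is essentially the content of Kazlow's theorem itself.
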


\begin{proof}
To prove this lemma, we use results that you can find in \cite{Kazlow:1979}. In his framework, $\bar \gcal$ is a connected, locally closed, locally starlike set. We have that the convolution $F \ast g$ is a CR function on $\tube(\gcal)$, for any $g \in \dcal(\rbb)^k$ (we have shown this after Eq.~\eqref{eq:fgconvolute}).

Even more, since $F \ast g$ is an analytic function along the edges on $\tube(\gcal)$, and it is continuous on the nodes, then by Morera's theorem, we have that at nodes where two edges along the same axis meet, $F \ast g$ analytically continues in the respective variable across the node.

Hence, we have that at any node the function $F \ast g$ is a smooth function; but due to the several directions where the edges can meet in a common node, the derivatives possibly exist only as single-sided derivatives. This $F \ast g$ is defined on lines in at most $k$ independent directions, and therefore we have at most $k$ independent first order partial derivatives. This makes possible to explicitly construct a smooth extension of this function on a neighbourhood of the node. A possible way is the following: Suppose that $f_1(x_1)$ and $f_2(x_2)$ are two smooth functions on the two real axis of $\rbb^2$, and $0$ is the node ; assume, without loss of generality, that $f_1(0)=f_2(0)=1$. Then we can construct a smooth extension on the neighbourhood of zero by setting $f(x_1,x_2):= f_1(x_1)\cdot f_2(x_2)$.
Since we can construct a smooth extension of $F \ast g$ on an open set of the graph, containing the node, then $F \ast g$ is a smooth function in the sense of Whitney (this can be seen from the definition of smooth function in the sense of Whitney in the remark after Def.~1.3 in \cite{Kazlow:1979}).

Now since $F \ast g$ is smooth in the sense of Whitney on $\tube(\bar \gcal)$ and it is a CR function on each open line segment in $\bar\gcal$, then by \cite[Def.~2.12]{Kazlow:1979}, $F \ast g$ is a CR' function on $\tube(\bar \gcal)$.

Using that $F \ast g$ is a CR' function on $\tube(\bar \gcal)$ (which is the same as CR function, see \cite[Proposition~2.13]{Kazlow:1979}), then we can apply \cite[Theorem~6.1]{Kazlow:1979}, which shows that there is a bijection between CR functions on $\tube(\ach \bar\gcal)$ and CR functions on $\tube(\bar \gcal)$; this gives an extension $G$ of $F \ast g$ to $\tube(\ach \gcal)$, and again by \cite[Theorem~6.1]{Kazlow:1979}, we have that this extension is analytic in $\tube(\ich\gcal)$.

It remains only to show that we can write $G = F \ast g$, where $F$ is some function analytic in $\tube(\ich\gcal)$. To show this we will follow an argument that can be found in more details and in a similar situation for example in \cite[p.~530]{Eps:edge_of_wedge}, and that here we will sketch only briefly.

We know that $G$ depends on $g$; due to the remark in \cite[Sec.~12]{Kazlow:1979} (see page 170 bottom), we have also that at each fixed $\lambdav\in\ich(\gcal)$, the map $g \mapsto G(i\lambdav)$ is continuous in $g$ in the $\dcal$-topology.

If now we smear $G$ in $\lambdav$ within $\ich(\gcal)$ with another test function, we obtain a distribution in $2k$ variables which, due to the analyticity of $G$, fulfils the Cauchy-Riemann equations in the sense of distributions (cf.~\cite[p.~530]{Eps:edge_of_wedge}).

Now, we can apply \cite[p.72]{Schwartz:1959b}, which shows that a distribution which fulfils the Cauchy-Riemann equations in a weak sense, it also fulfils the Cauchy-Riemann equations in the strong sense, namely it is an analytic function smeared with test functions. This gives the desired function $F$.
\end{proof}

If we assume that $F$ is a CR function on $\tube(\gcal)$, then the maximum modulus principle holds also for CR functions on $\tube(\gcal)$ due to \cite[Sec.~11, Corollary]{Kazlow:1979}:
\begin{equation}\label{eq:maxmodulusedge}
   \sup_{\zetav \in \tube (\ach \gcal) } |F(\zetav)|
 =   \sup_{\zetav \in \tube(\bar \gcal) } |F(\zetav)|.
\end{equation}
Now suppose also that the function $F$ fulfils on each edge bounds of this type:
\begin{equation}\label{eq:plbounds}
   \log |F(\thetav + i \nuv + i \lambda \ev^{(j)}) |  = o(\cosh \theta_j)
\end{equation}
for large $|\theta_j|$, uniformly in $\lambda$, and with $\theta_m$ ($m \neq j$) fixed.

Then we can apply the Phragm\'en-Lindel\"of argument given in \cite[Theorem~3]{HardyRogosinski:1946}, which states that the function attains the maximum modulus on the boundary of the edge, and therefore on one of the nodes which are connected by the edge. Hence, we can rewrite \eqref{eq:maxmodulusedge} as
\begin{equation}\label{eq:maxmodulusnode}
   \sup_{\zetav \in \tube (\ach \gcal) } |F(\zetav)|
 =   \sup_{\zetav \in \tube(\nodes(\gcal)) } |F(\zetav)|.
\end{equation}
Now we want to obtain a similar maximum modulus principle as above for the norm $\gnorm{\cdotarg}{\times}$, which was defined in Eq.~\eqref{eq:fullcrossnorm}. This is done in the following lemma.

\begin{lemma}\label{lemma:maxmodcross}
If $F$ is a CR distribution on $\tube(\gcal)$, then
\begin{equation}\label{eq:maxmodcrossedge}
    \sup_{\lambdav \in \ach \gcal } \gnorm{ F(\cdot + i \lambdav ) }{\times}
 =  \sup_{\lambdav \in \bar\gcal }\gnorm{ F(\cdot + i \lambdav ) }{\times}.
\end{equation}
If further $F \ast g$ fulfils bounds of the form \eqref{eq:plbounds} for each fixed $g \in \dcal(\rbb)^k$, then
\begin{equation}\label{eq:maxmodcrossnode}
    \sup_{\lambdav \in \ach \gcal } \gnorm{ F(\cdot + i \lambdav ) }{\times}
 =  \sup_{\lambdav \in \nodes(\gcal) }\gnorm{ F(\cdot + i \lambdav ) }{\times}.
\end{equation}
\end{lemma}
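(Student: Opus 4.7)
The plan is to reduce the two claimed identities to the scalar maximum modulus principles \eqref{eq:maxmodulusedge} and \eqref{eq:maxmodulusnode} by a duality argument, exploiting the convolution $F \ast g$ as a ``testing device'' for the cross-norm. First, I would rewrite the cross-norm in terms of the convolution defined in \eqref{eq:fgconvolute}. For $g=(g_1,\ldots,g_k)\in\dcal(\rbb)^k$ with $\gnorm{g_j}{2}\leq 1$, setting $\tilde g_j(\xi):=g_j(-\xi)$, one has $\gnorm{\tilde g_j}{2}=\gnorm{g_j}{2}$ and $\int F(\thetav+i\lambdav)\prod_j g_j(\theta_j)\,d\thetav=(F\ast\tilde g)(i\lambdav)$, hence
\begin{equation}
\gnorm{F(\cdotarg+i\lambdav)}{\times}=\sup\bigl\{\,\lvert (F\ast g)(i\lambdav)\rvert : g_j\in\dcal(\rbb),\,\gnorm{g_j}{2}\leq 1\,\bigr\}.
\end{equation}

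Next, I would establish a translation-invariance identity: for any $\thetav_0\in\rbb^k$, the shifted test functions $g_{\thetav_0,j}(\xi):=g_j(\xi-\theta_{0,j})$ still have $L^2$-norm at most $1$, and $(F\ast g)(\thetav_0+i\lambdav)=(F\ast g_{\thetav_0})(i\lambdav)$. Taking sup over $g$ therefore yields $\sup_{\thetav_0\in\rbb^k}\sup_g\lvert(F\ast g)(\thetav_0+i\lambdav)\rvert=\sup_g\lvert(F\ast g)(i\lambdav)\rvert=\gnorm{F(\cdotarg+i\lambdav)}{\times}$. For each fixed $g$ in the unit ball, $F\ast g$ is a CR function on $\tube(\gcal)$, as shown after \eqref{eq:fgconvolute}, so the scalar maximum modulus principle \eqref{eq:maxmodulusedge} applies to it. Chaining these observations gives
\begin{equation}
\begin{aligned}
\sup_{\lambdav\in\ach\gcal}\gnorm{F(\cdotarg+i\lambdav)}{\times}
&=\sup_g\sup_{\lambdav\in\ach\gcal}\lvert(F\ast g)(i\lambdav)\rvert
\leq \sup_g\sup_{\zetav\in\tube(\ach\gcal)}\lvert(F\ast g)(\zetav)\rvert\\
&=\sup_g\sup_{\zetav\in\tube(\bar\gcal)}\lvert(F\ast g)(\zetav)\rvert
=\sup_{\lambdav\in\bar\gcal}\sup_g\sup_{\thetav_0}\lvert(F\ast g)(\thetav_0+i\lambdav)\rvert\\
&=\sup_{\lambdav\in\bar\gcal}\gnorm{F(\cdotarg+i\lambdav)}{\times},
\end{aligned}
\end{equation}
using \eqref{eq:maxmodulusedge} in the second line and the translation invariance in the third. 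The opposite inequality is trivial from $\bar\gcal\subset\ach\gcal$, which proves \eqref{eq:maxmodcrossedge}. For \eqref{eq:maxmodcrossnode}, the identical chain applies with \eqref{eq:maxmodulusnode} in place of \eqref{eq:maxmodulusedge}; the additional hypothesis that $F\ast g$ satisfies Phragm\'en--Lindel\"of bounds of the form \eqref{eq:plbounds} is precisely what is required for that stronger principle to apply to each $F\ast g$.

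The main delicate point is keeping track of the finiteness of the various suprema in order to legitimately invoke the scalar maximum modulus principles: if the right-hand side of the claimed identity is infinite, the equation is trivial, so one may assume it finite, whereupon $(F\ast g)(\zetav)$ is bounded on $\tube(\bar\gcal)$ (respectively on $\tube(\nodes(\gcal))$) uniformly in $g$ over the $L^2$-unit ball, and \eqref{eq:maxmodulusedge} (respectively \eqref{eq:maxmodulusnode}) can be applied to each $F\ast g$ separately. Apart from this bookkeeping and the free interchange of suprema over $g$, $\lambdav$ and the real translation $\thetav_0$, the argument is a direct transcription of the scalar maximum modulus statements already recorded.
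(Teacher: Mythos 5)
Your proof is correct and takes essentially the same route as the paper's: both rewrite the cross-norm as a supremum of $|F\ast g|$ over test functions in the $L^2$-unit ball, use translation invariance in the real directions to equate $\gnorm{F(\cdotarg+i\lambdav)}{\times}$ with $\sup_{g,\thetav}|(F\ast g)(\thetav+i\lambdav)|$, and then invoke the scalar maximum modulus principles \eqref{eq:maxmodulusedge} and \eqref{eq:maxmodulusnode} for each fixed $F\ast g$ before taking the supremum over $g$. The only cosmetic difference is that you organize the argument as a chain of inequalities while the paper manipulates equalities directly; your remark on finiteness is a harmless, if unneeded, precaution.
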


\begin{proof}
We consider $g=(g_1, \ldots, g_k)$ with $\gnorm{g_j}{2}\leq 1$, we consider $F \ast g(\zetav)$ as defined in \eqref{eq:fgconvolute}. We have shown in the proof of Lemma~\ref{lem:graphtube} that this function is analytic on $\tube(\ich \gcal )$ and it is a CR function on $\tube(\gcal)$. We can also show that
\begin{equation}\label{eq:fgsup}
   \gnorm{F(\cdot + i \lambdav)}{\times} = \sup_g |F \ast g( i \lambdav)|
   = \sup_{g,\thetav} |F \ast g( \thetav + i \lambdav)|.
\end{equation}
The first equality can be proved by using the definitions \eqref{eq:fgconvolute} and \eqref{eq:fullcrossnorm}, and by performing a renaming of the variables ($\thetav :=-\xiv$):
\begin{multline}
\sup_g |F \ast g( i \lambdav)|=\sup_g \Big| \int F(i\lambdav - \xiv) g_1(\xi_1) \ldots g_k(\xi_k) \, d^k\xiv \Big|\\
=\sup_g \Big| \int F(\thetav+i\lambdav) g_1(-\theta_1) \ldots g_k(-\theta_k) \, d^k\thetav \Big|.
\end{multline}
Calling $g'_{j}(\theta):=g_{j}(-\theta)$, we find
\begin{equation}
\sup_g |F \ast g( i \lambdav)|=\sup_{g'} \Big| \int F(\thetav+i\lambdav) {g'}_1(\theta_1) \ldots {g'}_k(\theta_k) \, d^k\thetav \Big|=\gnorm{F(\cdot + i \lambdav)}{\times}.
\end{equation}
Similarly, we can prove the second equality in \eqref{eq:fgsup} using a renaming of variables ($\thetav':=\thetav - \xiv$):
\begin{multline}
\sup_{g,\thetav} |F \ast g( \thetav + i \lambdav)|=\sup_{g,\thetav} \Big| \int F(\thetav +i\lambdav - \xiv) g_1(\xi_1) \ldots g_k(\xi_k) \, d^k\xiv \Big|\\
=\sup_{g,\thetav} \Big| \int F(\thetav' +i\lambdav) g_1(\theta_1-\theta'_{1}) \ldots g_k(\theta_k-\theta'_{k}) \, d^k\thetav' \Big|.
\end{multline}
Calling $g'_{j}(\theta'_{j}):=g_{j}(\theta_j-\theta'_{j})$, we have
\begin{multline}
\sup_{g,\thetav} |F \ast g( \thetav + i \lambdav)|=\sup_{g',\thetav} \Big| \int F(\thetav' +i\lambdav) g'_1(\theta'_{1}) \ldots g_k(\theta'_{k}) \, d^k\thetav' \Big|\\
=\sup_{g'} \Big| \int F(\thetav' +i\lambdav) g'_1(\theta'_{1}) \ldots g_k(\theta'_{k}) \, d^k\thetav' \Big|=\gnorm{F(\cdot + i \lambdav)}{\times},
\end{multline}
where in the first equality we have used that the supremum over $g$ equals the supremum over $g'$, and in the second equality we used that, after taking the supremum over all $g'$, the integral expression does not depend on $\thetav$ any more, so we can drop the supremum over $\thetav$.

Applying the maximum modulus principle \eqref{eq:maxmodulusedge} to $F \ast g$ and using the relation \eqref{eq:fgsup}, we can find \eqref{eq:maxmodcrossedge}: Indeed, we have
\begin{equation}
 \sup_{\zetav \in \tube (\ach \gcal) } |(F\ast g)(\zetav)|
=   \sup_{\zetav \in \tube(\bar \gcal) } |(F\ast g)(\zetav)|,
\end{equation}
that we can rewrite as:
\begin{equation}
 \sup_{\thetav}\sup_{ \lambdav \in \ach \gcal } |(F\ast g)(\zetav)|
=   \sup_{\thetav} \sup_{\lambdav \in \bar \gcal } |(F\ast g)(\zetav)|.
\end{equation}
Taking the supremum over $g$ of both sides of the equation, we find
\begin{equation}
\sup_{g}\, \sup_{\thetav}\sup_{ \lambdav \in \ach \gcal } |(F\ast g)(\zetav)|
=  \sup_{g}\, \sup_{\thetav} \sup_{\lambdav \in \bar \gcal } |(F\ast g)(\zetav)|.
\end{equation}
Using \eqref{eq:fgsup}, we find from the equation above our result \eqref{eq:maxmodcrossedge}.

Moreover, if $F \ast g$ fulfils bounds of the form \eqref{eq:plbounds}, then we can apply \eqref{eq:maxmodulusnode} to $F \ast g$; then using the relation \eqref{eq:fgsup}, we find \eqref{eq:maxmodcrossnode}. The details of this argument are as follows: By \eqref{eq:maxmodulusnode}, we have
\begin{equation}
\sup_{\zetav \in \tube (\ach \gcal) } |(F\ast g)(\zetav)|
 =   \sup_{\zetav \in \tube(\nodes(\gcal)) } |(F\ast g)(\zetav)|,
\end{equation}
that we can rewrite as
\begin{equation}
 \sup_{\thetav}\sup_{ \lambdav \in \ach \gcal } |(F\ast g)(\zetav)|
=   \sup_{\thetav} \sup_{\lambdav \in \nodes(\gcal) } |(F\ast g)(\zetav)|.
\end{equation}
Taking the supremum over $g$ of both sides of the equation, we have
\begin{equation}
\sup_{g} \,\sup_{\thetav}\sup_{ \lambdav \in \ach \gcal } |(F\ast g)(\zetav)|
=  \sup_{g} \,\sup_{\thetav} \sup_{\lambdav \in \nodes(\gcal)} |(F\ast g)(\zetav)|.
\end{equation}
Using \eqref{eq:fgsup}, we find from the equation above our result \eqref{eq:maxmodcrossnode}.
\end{proof}

Now we prove a result regarding pointwise bounds on analytic functions, which are estimated using the supremum of the norm $\gnorm{\cdotarg}{\times}$ of these functions. These bounds do not hold only for CR functions on graphs, but they are valid for any analytic functions defined on tube domains, namely are useful in conjunction with Lemma~\ref{lemma:maxmodcross}. The following proposition can be proved by using the mean value property; one can find a similar application of this technique for the computation of certain uniform bounds for example in~\cite[Prop.~4.4]{Lechner:2008}.

\begin{proposition}\label{proposition:pointwise}
Let $\ical\subset \rbb^k$ be open and $F$ analytic on $\tube(\ical)$. Then, for all $\zetav\in \tube(\ical)$,
\begin{equation}\label{pointwiseboundF}
|F(\zetav)|\leq \frac{(4/\pi)^{k} \; k^{k/4}}{\operatorname{dist}(\im\zetav,\partial \ical)^{k/2}}\sup_{\lambdav \in \ical} \gnorm{F(\cdotarg +i\lambdav)}{\times}.
\end{equation}
\end{proposition}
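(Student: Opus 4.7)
The plan is to reduce the pointwise bound to the multivariate mean value property on a well-chosen polydisc and then use the definition of $\gnorm{\cdotarg}{\times}$ with characteristic functions of intervals as test functions.

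First I would fix $\zetav\in\tube(\ical)$ with $\lambdav:=\im\zetav$, set $d:=\operatorname{dist}(\lambdav,\partial\ical)$, and choose the polyradius $t:=d/\sqrt{k}$. The key geometric observation is that the closed polydisc
\[ D(\zetav;t):=\prod_{j=1}^{k}\{\zeta'_j\in\cbb:|\zeta'_j-\zeta_j|\le t\} \]
lies inside $\tube(\ical)$: any $\lambdav'$ with $|\lambda'_j-\lambda_j|\le t$ for all $j$ satisfies $|\lambdav'-\lambdav|\le t\sqrt{k}=d$, and so lies in $\ical$ (one takes $t$ slightly smaller and passes to the limit to avoid the boundary). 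Iterating the one-variable mean value property for analytic functions across the $k$ complex variables then yields
\[ F(\zetav)=\frac{1}{(\pi t^2)^k}\int_{D(\zetav;t)}F(\zetav')\,d^{2k}\zetav'. \]

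Next I would split the $2k$-fold integration into imaginary and real parts. By Fubini, $\lambdav'$ ranges over the cube $\prod_j[\lambda_j-t,\lambda_j+t]$, and for fixed $\lambdav'$ the real part $\thetav'$ ranges over the product of intervals $\prod_j[\theta_j-s_j,\theta_j+s_j]$ with $s_j=s_j(\lambda'_j):=\sqrt{t^2-(\lambda'_j-\lambda_j)^2}$. The inner $\thetav'$-integral can be written as a pairing of $F(\cdotarg+i\lambdav')$ against the factorized function $\prod_{j}\chi_{[\theta_j-s_j,\theta_j+s_j]}(\theta'_j)$, whose $L^2$-norms are $\sqrt{2s_j}$. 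Approximating these indicator functions by $\dcal(\rbb)$ test functions (the bound is preserved in the $L^2$-limit since $F(\cdotarg+i\lambdav')$ is locally integrable) and invoking the definition \eqref{eq:fullcrossnorm} of $\gnorm{\cdotarg}{\times}$ gives
\[ \Big|\int F(\thetav'+i\lambdav')d\thetav'\Big|\le\gnorm{F(\cdotarg+i\lambdav')}{\times}\prod_{j=1}^{k}\sqrt{2s_j(\lambda'_j)}. \]

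Finally I would combine these two estimates and reduce the outer $\lambdav'$-integral to a product of one-dimensional integrals,
\[ \prod_{j=1}^{k}\int_{-t}^{t}\sqrt{2\sqrt{t^2-u^2}}\,du\le\prod_{j=1}^{k}2\sqrt{2}\,t^{3/2}=(2\sqrt{2})^{k}t^{3k/2}, \]
where I use the crude bound $\sqrt{t^2-u^2}\le t$. This produces
\[ |F(\zetav)|\le\frac{(2\sqrt{2})^{k}t^{3k/2}}{(\pi t^2)^k}\sup_{\lambdav'\in\ical}\gnorm{F(\cdotarg+i\lambdav')}{\times}=\frac{(2\sqrt{2}/\pi)^{k}}{t^{k/2}}\sup_{\lambdav'\in\ical}\gnorm{F(\cdotarg+i\lambdav')}{\times}, \]
and substituting $t^{-k/2}=k^{k/4}/\operatorname{dist}(\lambdav,\partial\ical)^{k/2}$ together with $2\sqrt{2}/\pi<4/\pi$ yields the claimed inequality \eqref{pointwiseboundF}. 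The only delicate step is verifying the factorization of the polydisc integral and the choice $t=d/\sqrt{k}$, which is what balances the Euclidean distance to $\partial\ical$ against the $\ell^\infty$-condition imposed by the polydisc; everything else is a routine application of the mean value property plus Cauchy–Schwarz repackaged through $\gnorm{\cdotarg}{\times}$.
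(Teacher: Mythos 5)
Your proof is correct and follows essentially the same route as the paper's: both rely on the mean value property over a polydisc centred at $\zetav$, split the polydisc integral into imaginary and real coordinates so the inner real integral becomes a pairing of $F(\cdotarg+i\lambdav')$ against a product of interval indicator functions, and then invoke the definition of $\gnorm{\cdotarg}{\times}$ extended to $L^2$ factors by density. The only difference is cosmetic: the paper chooses the polyradius $t=\tfrac{1}{2}k^{-1/2}\operatorname{dist}(\im\zetav,\partial\ical)$ to keep the closed polydisc safely inside $\tube(\ical)$, whereas you take $t=k^{-1/2}\operatorname{dist}(\im\zetav,\partial\ical)$ together with a limiting argument, which in fact yields the slightly sharper constant $(2\sqrt{2}/\pi)^k$ dominating the stated $(4/\pi)^k$.
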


\begin{figure}
\begin{center}
\input{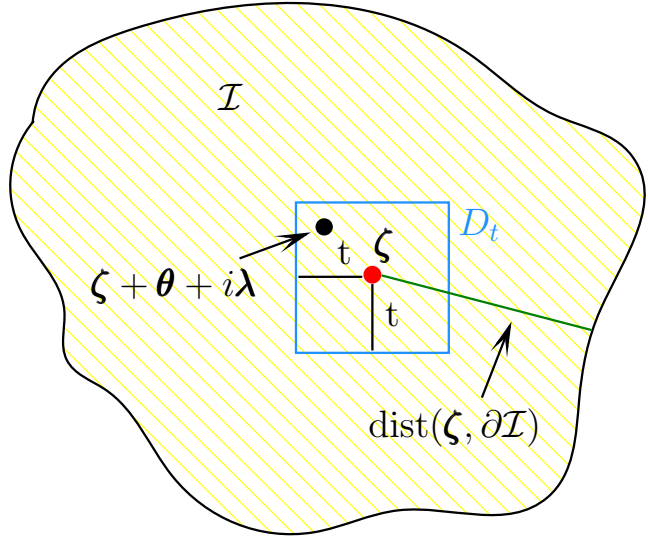}
\caption{The computation of the pointwise bound \eqref{pointwiseboundF}} \label{fig:mean}
\end{center}
\end{figure}

\begin{proof}
We fix $\zetav\in\tube(\ical)$, we consider $D_{t}\subset \mathbb{C}$ to be the disc around the origin with radius $t:=\frac{1}{2}k^{-1/2}\operatorname{dist}(\im\zetav,\partial \mathcal{I})$, cf.~Fig.~\ref{fig:mean}. Then, for this value of the radius, we have that the polydisc $(D_{t}\times\cdots \times D_{t})+\zetav$ is contained in $\tube(\ical)$. We can apply the mean value property for analytic functions and we get
\begin{equation}\label{eq:meanval}
\begin{aligned}
F(\zetav) &=
(\pi t^{2})^{-k}\int_{D_{t}}d\theta_{1}d\lambda_{1}\ldots \int_{D_{t}}d\theta_{k}d\lambda_{k}\;F(\zetav+\thetav+i\lambdav)\\
&=(\pi t^{2})^{-k}\int_{[-t,t]^{\times k}}d^k\lambdav \int_{-(t^2 - \lambda^2_j)^{1/2}}^{(t^2-\lambda_j^2)^{1/2}}d^k\thetav\; F(\zetav + \thetav +i\lambdav)\\
&=(\pi t^2)^{-k}\int_{[-t,t]^{\times k}}d^k\lambdav \int_{-\infty}^{+\infty}d^k\thetav \; F(\zetav +\thetav +i\lambdav)\chi_{\lambdav}(\thetav)
\\
&= (\pi t^{2})^{-k}\int_{[-t,t]^{\times k}}d^{k}\lambdav\; \big( F \ast \chi_{\lambdav} \big) (\zetav+i\lambdav),
\end{aligned}
\end{equation}
where $\chi_{\lambdav}(\thetav) = \prod_{j=1}^k \chi_j(\theta_j)$, and where we denote with $\chi_j$ the characteristic function of the interval $[-(t^2\!-\!\lambda_j^2)^{1/2}, +(t^2\!-\!\lambda_j^2)^{1/2}]$.
Since we have by construction that $\zetav+i\lambdav \in \tube(\ical)$, we can estimate
\begin{multline}
| (F \ast \chi_{\lambdav})(\zetav+i\lambdav) |
=\Big\lvert \int F(\zetav +i\lambdav- \xiv) \chi_{1}(\xi_1) \ldots \chi_k(\xi_k) \, d^k\xiv \Big\rvert
\\
=\Big\lvert \int F(\thetav- \xiv +i\lambdav+i\lambdav'') \chi_{1}(\xi_1) \ldots \chi_k(\xi_k) \, d^k\xiv \Big\rvert
\\
=\Big\lvert \int F(\thetav' +i\lambdav') \chi_{1}(\theta_1-\theta'_{1}) \ldots \chi_k(\theta_k-\theta'_{k}) \, d^k\thetav' \Big\rvert,
\end{multline}
where in the first equality we used \eqref{eq:fgconvolute}, where in the second equality we denoted $\zetav= \thetav +i\lambdav''$ and where in the third equality we called $\lambdav':=\lambdav + \lambdav''$, $\thetav' :=\thetav - \xiv$.

Now we apply the definition of the norm $\gnorm{\cdotarg}{\times}$ given by Eq.~\eqref{eq:fullcrossnorm}, we find
\begin{multline}
\Big\lvert \int F(\thetav' +i\lambdav') \chi_{1}(\theta_1-\theta'_{1}) \ldots \chi_k(\theta_k-\theta'_{k}) \, d^k\thetav' \Big\rvert\\
\leq  \gnorm{F(\cdotarg +i\lambdav')}{\times} \cdot \prod_{j=1}^{k} \gnorm{\chi_j(\theta_j-\cdotarg)}{2}\\
\leq \sup_{\lambdav' \in \ical} \gnorm{F(\cdotarg +i\lambdav')}{\times} \cdot \prod_{j=1}^{k} \gnorm{\chi_j}{2},
\end{multline}
where in the first inequality the estimate holds for some fixed $\lambdav'$ depending on $\lambdav$. (Note that the relation above can be continued to $L^2$ functions $g_j$ by continuity.)

Taking into account $\gnorm{\chi_j}{2} \leq \sqrt{2t}$, we find from \eqref{eq:meanval} and from the equation above,
\begin{equation}
|F(\zetav)| \leq (\pi t^{2})^{-k}  (2t)^{k} (2t)^{k/2}
\sup_{\lambdav' \in \ical} \gnorm{F(\cdotarg +i\lambdav')}{\times}.
\end{equation}
Inserting the definition of $t$ in this equation, we find \eqref{pointwiseboundF}.
\end{proof}

\clearpage
\markboth{}{}
\section*{Danksagung}
\addcontentsline{toc}{chapter}{Danksagung}

Ich danke Prof. Rehren f\"ur die Einf\"uhrung in das faszinierende Thema dieser Doktorarbeit und f\"ur seine Hilfe bei ihrer Entstehung. Ich danke H. Bostelmann daf\"ur, dass er das Programm konkret gemacht hat, und f\"ur die gemeinsame Arbeit: Er war f\"ur mich einer der besten Lehrer, den man in diesem Gebiet finden kann.

Ich danke meinen zwei besten Freunden Lishia Teh und Yoh Tanimoto, die f\"ur mich in G\"ottingen wie eine Familie waren. Meiner Familie danke ich f\"ur ihre Unterst\"utzung und im Besonderen meiner Mutter, die mich oft in G\"ottingen besucht hat.

Des Weiteren m\"ochte ich dankend die zahlreichen Diskussionen mit Kollegen hervorheben, von denen ich sehr profitiert habe (Entschuldigung, falls ich hier jemanden vergessen haben sollte.): Yoh Tanimoto, Wojcieh Dybalski, Marcel Bischoff, Ko Sanders, Christian Koehler, Jacques Bros, Gandalf Lechner und Detlev Buchholz. Ich danke Max Gulde, der mir mit Enthusiasmus und Motivation deutsch beigebracht hat. Weiterhin danke ich all denjenigen Freunden und Kollegen, die meinen Aufenthalt in G\"ottingen versch\"onert haben: Antonia Kukhtina, Giovanni Marelli, Christoph Solveen, Lena Wallenhorst, Gennaro Tedesco, Riccardo Catena, Federico Dradi, Navdeep Sidhu und vielen anderen. Ich danke Valter Moretti, der mir vorschlug, nach G\"ottingen zu gehen.

Ich bedanke mich f\"ur die Gastfreundschaft w\"ahrend meiner Besuche an der Universit\"at York und am Erwin Schr\"odinger Institut in Wien, wo diese Arbeit zu Teilen entstanden ist.

Diese Arbeit wurde unterst\"utzt durch die Deutsche Forschungsgemeinschaft (DFG) im Rahmen des Zukunftskonzepts der Georg-August-Universit\"at G\"ottingen , durch das Courant-Forschungs\-zentrum ``Strukturen h\"oherer Ordnung in der Mathematik'', das GRK 1493 ``Mathematische Strukturen in der modernen Quantenphysik'' und das Institut f\"ur Theoretische Physik der Universit\"at G\"ottingen.


\addcontentsline{toc}{chapter}{\bibname}

\bibliographystyle{thesis}
\bibliography{../integrable}

\end{document}